\begin{document}
\pagestyle{headings}

\begin{titlepage}

\title{\bf Algebras of Information  \\
Axiomatic Foundation\\
}

\author{Prof.~Dr.~J\"urg Kohlas, \\[2em]
\small Dept. of Informatics DIUF
\\
\small University of  Fribourg\\ 
\small CH -- 1700 Fribourg
(Switzerland) \\ 
\small E-mail: {\tt juerg.kohlas@unifr.ch} \\
}
\date{\it Version: \today}

\maketitle
\end{titlepage}

\pagebreak
%\vspace*{\textheight}\vbox{
%Copyright \copyright\ 1998 by J. Kohlas \\
%Bezugsquelle: B. Anrig, IIUF}
%\def\thepage{}
%
%%%%%%%%%%%%%%%%%%%%%%%%%%%%%%%%%%%%%%%%%%%%%%%%%%%%%%%%%%%%%%%%%%

%%%%%%%%%%%%%%%%%%%%%%%%%%%%%%%%%%%%%%%%%%%%%%%%%%%%%%%%%%%%%%%%%%%%%%%%%%%
\tableofcontents 

%%%%%%%%%%%%%%%%%%%%%%%%%%%%%%%%%%%%%%%%%%%%%%%%%%%%%%%%%%%%%%%%%%%%%%%%%%%

%%%%%%%%%%%%%%%%%%%%%%%%%%%%%%%%%%%%%%%%%%%%%%%%%%%%%%%%%%%%%%%%%%%%%%%%%%%%

\chapter{Introduction}

The basic idea behind information algebras \cite{kohlas03,kohlasschmid14} is that information comes in pieces, each referring to a certain question, that these pieces can be combined or aggregated and that the part relating to a given question can be extracted. This algebraic structure can be given different forms. Questions are often represented by a lattice of domains, and a popular model is based on the subset lattice of a set of variables. Pieces of information are then represented by valuations associated with these domains. This leads then to an algebraic structure called valuation algebras \cite{kohlas03}. The axiomatics of this algebraic structure was in essence proposed by \cite{shenoyshafer90}. Valuation algebras have already many important applications in Computer Science related to constraint systems, relational databases, different uncertainty formalisms like probability, belief functions, fuzzy set and possibility measures, and many more, we refer to \cite{poulykohlas11}. An important particular case of valuation algebras, both from practical as well as theoretical point of views, are \textit{idempotent} valuation algebras, also called proper information algebras: The combination of a piece of information with itself or part of itself gives nothing new. This allows to introduce an order between pieces of information reflecting information content. It relates proper information algebras also to domain theory \cite{kohlas03,kohlasschmid14}. 

The basic view of information as pieces which can be combined, which relate to questions and from which the part relating to given questions can be extracted, leads to two different but essentially equivalent algebraic structure, \textit{labeled} and \textit{domain-free} valuation algebras \cite{kohlas03,kohlasschmid14}. The original proposal of an axiomatics for valuation algebras in \cite{shenoyshafer90} was in labeled form; later \cite{shafer91} proposed the domain-free form. However, for valuation algebras, the two forms are not fully equivalent, there are labeled forms which have no domain-free form and vice vera. An important contribution of this paper is to give a new axiomatic system for proper information algebras, where there exists a full duality between these two forms.

In this text we start with a novel, reduced axiomatic form of a domain-free information  algebra. Its two basic operation are those of the combination of two pieces of information and the extraction of the part of a piece of information relating to a question. The set of questions considered is a priori without any structure. In Chaper \ref{sec:InfAlg} it is however shown that the axiomatic structure of an information algebra induces both a partial order of information, reflecting the information content (Section \ref{sec:InfOrder}, and also a partial order between questions, reflecting the granularity, the fineness or coarseness of questions (Section \ref{sec:StructOfQuest}). In fact, there is more structure among questions, namely a relation of conditional independence between questions. This relation is called a \textit{quasi-sparoid} (q-separoid), since it is a reduct of a structure called sepraoid, intorduced in \cite{dawid01} for describing conditional independence and irrelevance in many frameworks. The more general structure of a q-separoid turns out to be sufficient to permit \textit{local compuitation} similar to the possibilities in valuation algebras as described in \cite{shenoyshafer90} and \cite{kohlas03}, see Chapter \ref{sec:LocComp} and this is one of the main points for studying information algebras. A further condition for extraction operators of interest is the requirement that the extractions operators commute, that is, return the same result independent of the order in which they are applied, Section \ref{sec:CommInfAlg}. This additional condition is then related to a very special conditional independence relation, which simpliyfies local computation. Finally, in Section \ref{sec:SetAlg} a special, very important instance of an information algebra where the information elements are subsets of some universe, so-called set algebras, are examined. In this case questions are represented by partitions of the universe, extraction corresponds to saturation operations and combination is simply intersection. It is shown in Chapters \ref{sec:ExtInfAlg} and \ref{sec:AtomicAlg} that any information algebra has a representation as a set algebra, can be seen as an algebra of subsets of some universe.

In Chapter \ref{sec:LocComp} the labeled version of a domain-free information algebra is derived. From a labeld information algebra its domain-free version can be reconstructed. It is shown that these two versions are in a precise sense equivalent (duality). However, the labeled version is better adapted for computational purposes, such as local computation. The domain-free version on the other hand is better suited for structural algebraic studies.

In some cases, an information algebra may possess most informative elements, called \textit{atoms}. And in some cases these atoms determine the information algebra fully (Chapter \ref{sec:AtomicAlg}). There is also a notion of most informative elements relative to a question. Then these relative atoms may represent the possible anserws to the question and give in this way a clear explicite meaning to the questions considered.

In information processing, only "finite" pieces of information can be treated. In Chapter \ref{sec:FiniteInf}, the concept of finite elements is adapted from domain theory, see for instance \cite{daveypriestley97}. In contrast to domain theory, in information algebras there is not only an order, but in addtiion, there are the operations of combination and extraction. So, the concept of finiteness has to be examined in the context of these operations. The same holds for the weaker concept of continuity, also adapted from domain theory. Furthermore, these concepts are also examined in the framework of labeld information algebras.

Often information is uncertain, that is, it is not sure that the statement contained in it holds, is true. We may assume that a piece of infomation is true only if some assumptions are valid. Modeling this idea leads to assumption-based reasoning. If, furthermore, the likelhood of different assumtions can be measured by probabilities, we come to \textit{probabilistic assumption-based reasoning}. This approach is developped in Chapter \ref{sec:UncertainInf}. There, maps fom a probability space into an information algebra are considered. This is in fact a generalization of the theory of hints \cite{kohlasmonney95}. In this book the maps from probability space in set algebras are considered. But most of the results derived in this particular case carry over to information algebras. The theory of hints is a semantic variant of Dempster-Shafer theory \cite{shafer76}, where a more epistemic view is taken. Again, mathematicallly speaking, many concepts of this theory apply to probabilistic argumentation systems, in particulare Shafer's concept of allocations of probability and support functions. It turns out that all these concepts indeed represent infomation and form information algebras.

There are other methods to represent uncertainty, especially probability distributions on the set of unknown answers. Into this category belong Bayesian networks, which form, as is known since long, a non-idempotent information algebra (a valuation algebra, see Chapter \ref{sec:nonidempotent}). More recently, the theory of imprecise probability has been created and generated much interest. There appear several, closely related information algebra in this theory. This is discussed in Chapter \ref{sec:ProbabInf}. So this kind of probabilistic information is yet another way to represent uncertain information and it illustrates once more how widespread information algebra are.

Originally, in valuation algebras idempotency of combination is not assumed. In Chatper \ref{sec:nonidempotent} this subject is resumed. But so far, valuation algebras were studied mainly in the multivariate case. Here however, we as before do not assume any particular structure of the set questions considered, only the usual properties of extraction operators. The semigroup properties of regularity and sperativity can be extended to valuation algebras. The information order of information algebras depends on idempotency and carries not over to valuation algebras. Nevertheless in valuation algebras we may still define an information order. It is however only a preorder. But in regular and separative valuation algebras this preorder has all desirable properties of an information order. Regularity and separativity allow in particular to introduce a division operation into valuation algebras. This in turn permits to introduce the notion of \textit{conditionals}, which generalizes the corresponding concept in probability theory. It is shown that all well-known properties of conditionals in probability theory carry over to regular and to some extend also to separative information algebras.

In Chapter \ref{sec:CondIndep} finally, we take up again the notion of conditional indpendence amog valuations or pieces of information as introduced earlier but only for regular or separative valuation algebras (Chaper \ref{sec:nonidempotent}). We study the properties of this relation and examine in particular under what conditions it forms a q-separoid. In addition we study a notion of compatibility among 
pieces of information, and in particular pairwise compatibility. It is shown that pairwise compatibility is sufficinet for full compatibility, if the domains of the pieces of information form a hypertree. Finally conditional independence is related to the factorization of a piece of information.

It remains one important subject, not treated so far, and that is the relation of information algebras and valuation algebras to Shanonn's theory of information. We have seen that if an information algebras has finitely many atoms relative to each question $x \in Q$, then the set of these relative atoms can be considered as possible answers to the question. Furthermore, the infomation algebra is isomorphic to the set algebra of subsets of its atoms (see Chapter \ref{sec:AtomicAlg}). The uncertainty of a piece of information relative to a question represented by a subset of relative atoms may then be measured by Hatley's measure of the subset. Further the reduction of the uncertainty relatiuve to a piece of information with respect to the initial uncertainty can be considered as a measure of the infomation contained in the piece of informations. Obvioulsy this measure respects information order. It has many other interesting properties. But the correspondig theory has still to be worked out. Similar theories may possibly be worked out for uncertain, probaiblistic information, using the notion of entropy.

\chapter{Information algebra} \label{sec:InfAlg}

\section{Basics} \label{sec:Basics}

An information algebra is constructed based on a set $\Phi$ of elements $\phi,\psi,\ldots$ representing pieces of information and a set $Q$ of elements $x,y,\ldots$ representing questions. Pieces of information $\phi$ and $\psi$ can be aggregated or combined into new pieces $\phi \cdot \psi$. So we have an operation
\begin{eqnarray*}
\cdot : \Phi \times \Phi \rightarrow \Phi,\quad (\phi,\psi) \mapsto \phi \cdot \psi.
\end{eqnarray*}
We assume that this operation is associative and commutative, so that $(\Phi,\cdot)$ is a commutative semigroup. We further assume the existence of a unit element $1$, representing vacuous information, so that $\phi \cdot 1 = 1 \cdot \phi = \phi$ for all $\phi \in \Phi$. In addition  we assume a null element $0$ so that $\phi \cdot 0 = 0 \cdot \phi = 0$ for all $\phi \in \Phi$. This element represents contradiction, it destroys any information. So we have a commutative semigroup $(\Phi,\cdot,0,1)$ representing combination of information. We shall see below that combination is also idempotent, $\phi \cdot \phi = \phi$.

Questions will not be represented explicitly, but only implicitly by operators $\epsilon_x : \Phi \rightarrow \Phi$ for any $x \in Q$, where $\epsilon_x(\phi) $ denotes the piece of information obtained, when the information regarding question $x$ is extracted from $\phi$. So we have a family of operators $E = \{\epsilon_x:x \in Q\}$ so that $(\phi,x) \mapsto \epsilon_x(\phi)$. Any of these operators must satisfy the following conditions:
\begin{enumerate}
\item $\epsilon_x(0) = 0$,
\item $\epsilon_x(\phi) \cdot \phi = \phi$,
\item $\epsilon_x(\epsilon_x(\phi) \cdot \psi) = \epsilon_x(\phi) \cdot \epsilon_x(\psi)$.
\end{enumerate}
So, from contradiction only contradiction can be extracted. A piece of information combined with any piece of information extracted from it, gives nothing new. The last condition says if a piece of information is combined with a piece extracted for question $x$ and then the combination is extracted for $x$, we may as well first extract the information form the second piece for $x$ and then combine. This is in particular important for computation. We shall see later, that these are in fact conditions as for an existential quantor in algebraic logic (Section \ref{sec:InfOrder}). We call the operators $\epsilon_x$ extraction operators. Note that $\epsilon_x(1) = 1 \cdot \epsilon_x(1) = 1$, by item 2 above. Also, if $\epsilon_x(\phi) = 0$, then again by item 2, $\phi = \epsilon_x(\phi) \cdot \phi = 0$.

We add in most cases, but not always, another condition,
\begin{eqnarray*}
\forall \phi \in \Phi, \exists x \in Q \textrm{ such that}\ \epsilon_x(\phi) = \phi.
\end{eqnarray*}
Such an $x$ is called a support of $\phi$ and the condition is called the \textit{support axiom}. It means that the piece of information $\phi$ bears on question $x$, is information for $x$. As a consequence it follows from item 2 above that $\phi \cdot \phi = \epsilon_x(\phi) \cdot \phi = \phi$ if $x$ is a support of $\phi$, the semigroup $\Phi$ is idempotent under combination. For further reference we collect a few results on support.

\begin{lemma} \label{le:Supp1} \
\begin{enumerate}
\item For any $\phi \in \Phi$, $x$ is a support of $\epsilon_x(\phi)$,
\item If $x$ is a support of both $\phi$ and $\psi$, then it is also a support of $\phi \cdot \psi$,
\end{enumerate}
\end{lemma}

\begin{proof}
We have $\epsilon_x(\epsilon_x(\phi)) = \epsilon_x(\epsilon_x(\phi) \cdot 1) = \epsilon_x(\phi) \cdot \epsilon_x(1) = \epsilon_x(\phi) \cdot 1 = \epsilon_x(\phi)$, hence $x$ is a support of $\epsilon_x(\phi)$. Further, if $\epsilon_x(\phi) = \phi$ and $\epsilon_x(\psi) = \psi$, then $\epsilon_x(\phi \cdot \psi) = \epsilon_x(\epsilon_x(\phi) \cdot \psi) = \epsilon_x(\phi) \cdot \epsilon_x(\psi) = \phi \cdot \psi$, hence $x$ is a support of $\phi \cdot \psi$.
\end{proof}

The signature $(\Phi,\cdot,0,1;E)$ satisfying the conditions above is called a \textit{domain-free information algebra}. Domain-free, because there is another, related version called a labeled information algebra, see Section \ref{sec:LabInfAlg}. However we shall below (Section \ref{sec:StructOfQuest}) impose some additional conditions on the set $E$ of extraction operators.

%%%%%%%%%%%%%%%%%%%%%%%%%%%%%%%%%%%%%%%%%%%%%%%%%%%%%%%%%%%%

\section{Information order} \label{sec:InfOrder}

Pieces of information, that is, elements of an information algebra $\Phi$,  may be ordered by information content. In fact, if $\phi \cdot \psi = \psi$, then this means that $\phi$ adds no information to $\psi$. Therefore we may say that $\phi$ has less information content than $\psi$ and write $\phi \leq \psi$. This is a partial order on $\Phi$, as can easily be verified,
\begin{enumerate}
\item \textit{Reflexivity:} $\phi \leq \phi$.
\item \textit{Antisymmetry} $\phi \leq \psi$ and $\psi \leq \phi$ implies $\phi = \psi$,
\item \textit{Transitivity:} $\phi \leq \psi$ and $\psi \leq \chi$ imply $\phi \leq \chi$.
\end{enumerate}
This order is called the information order. Here are a few simple, immediate consequences of this definition of order.
\begin{enumerate}
\item $\phi,\psi \leq \phi \cdot \psi$,
\item $\phi \leq \psi$ implies $\phi \cdot \eta \leq \psi \cdot \eta$ for all $\eta \in \Phi$,
\item $\epsilon_x(\phi) \leq \phi$ for all $x \in Q$ and $\phi \in \Phi$.
\end{enumerate}

In fact $(\Phi,\leq)$ is a join-semilattice under information order, namely
\begin{eqnarray*}
\phi \cdot \psi = \sup\{\phi,\psi\}.
\end{eqnarray*}
We have $\phi,\psi \leq \phi \cdot \psi$. Let $\chi$ be another upper bound of $\phi$ and $\psi$. Then $\phi \cdot \chi = \chi$ and $\psi \cdot \chi = \chi$ imply by idempotency that $\phi \cdot \psi \cdot \chi = \chi$, hence $\phi \cdot \psi \leq \chi$ and $\phi \cdot \psi$ is indeed the supremum of $\phi$ and $\psi$ in information order. The null element $0$ is the largest element, the unit $1$ the smallest element in information order.

Remark that the the conditions on extraction operators in the previous section may also be written as 
\begin{enumerate}
\item $\epsilon_x(0) = 0$,
\item $\epsilon_x(\phi) \leq \phi$,
\item $\epsilon_x(\epsilon_x(\phi) \cdot \psi) = \epsilon_x(\phi) \cdot \epsilon_x(\psi)$.
\end{enumerate}
In algebraic logic an operator satisfying these properties is called an \textit{existential quantifier} \footnote{Usually Boolean lattices or algebras are considered in algebraic logic, not only join-semilattices, and the converse to our information order is used.}.

Let's note that an extraction operator is monotone in the information order.

\begin{proposition} \label{prop:MonOfExtr}
Any extraction operator $\epsilon_x \in E$ preserves information order.
\end{proposition}

\begin{proof}
Assume $\phi \leq \psi$, that is $\phi \cdot \psi = \psi$. Then since $\epsilon_x(\phi) \leq \phi$ we have $\epsilon_x(\phi) \cdot \epsilon_x(\psi) = \epsilon_x(\epsilon_x(\phi) \cdot \psi) = \epsilon_x(\epsilon_x(\phi) \cdot \phi \cdot \psi) = \epsilon_x(\phi \cdot \psi) = \epsilon_x(\psi)$, so indeed $\epsilon_x(\phi) \leq \epsilon_x(\psi)$.
\end{proof}

Note that $\phi \leq \psi$ in a certain sense says that $\phi$ is implied by $\psi$; if $\psi$ is a piece of information asserted as ''true``, then $\phi$ must also be asserted as ''true`` since $\phi$ is ''part``of $\psi$. So, if $I$ is a subset of $\Phi$ such that for $\phi \in I$ and any $\psi \leq \phi$ we have also $\psi \in I$, and if furthermore $I$ is closed under combination, if $\phi,\psi \in I$, then $\phi \cdot \psi \in I$, we may say that $I$ is a consistent set of pieces of information, with all pieces it contains, it contains also all other pieces implied by them. $I$ is an \textit{ideal} in $\Phi$. If $I$ is different from $\Phi$, then it is called \textit{proper}. The down-set $\downarrow\!\phi =\ \{\psi \in \Phi:\psi \leq \phi\}$ is called \textit{principal ideal}. In some sense an ideal represents also information, and we shall see that ideals form indeed an information algebra, extending $\Phi$ (Section \ref{sec:ExtInfAlg}). In another sense, up-sets are also consistent sets of pieces of information. An up-set of $\Phi$ is a subset of $\Phi$ so that $\phi \in U$ and $\phi \leq \psi$ implies $\psi \in U$. This set is consistent in the sense that with any piece of information it contains, it contains also all other pieces which imply it. However, we should eliminate contradiction $0$ in these up-sets. So let $\Phi_0 = \Phi/\{0\}$ and $U(\Phi_0)$ be the  the family of up-sets in $\Phi_0$ and $U_p(\Phi_0$) the principal up-sets $\uparrow\!(\phi) = \{\psi \in \Phi_0:\psi \geq \phi\}$,  in it. Again we shall see (Section \ref{sec:ExtInfAlg}) that the elements both of $U(\Phi_0)$ and $U_p(\Phi_0)$ form an information algebra, even a particular one, since combination and extraction will be set operations, set intersection for combination and saturation relative to certain partitions for extraction. This means that these algebras will be so-called set algebras (see Section \ref{sec:SetAlg}). Further information algebras derived from an information algebra $\Phi$ will be presented in Section \ref{sec:AtomicAlg}.

%%%%%%%%%%%%%%%%%%%%%%%%%%%%%%%%%%%%%%%%%%%%%%%%%%%%%%%%%%%%

\section{Structure of questions: Order and Independence} \label{sec:StructOfQuest}

There is also an order between questions, in the sense that some questions may be finer (or coarser) than others. This order can be defined in terms of extraction. Note that the composition of two extraction operators $\epsilon_x \circ\epsilon_y$ is, in general, no more an extraction operator. But we may have for some $x,y \in Q$ that
\begin{eqnarray*}
\epsilon_x \circ \epsilon_y = \epsilon_y \circ \epsilon_x = \epsilon_x.
\end{eqnarray*}
This condition means that if we extract first information relative to question $y$ and to question $x$ or vice versa, extract first to $x$ and then to $y$, im both cases we get the extraction relative to $x$. This means that question $y$ is finer than question $x$, can carry more information than $x$. Therefore we write $x \leq y$ in this case. This is again obviously a partial order, now between questions, comparing fineness, granularity or coarseness of questions. In Section \ref{sec:SetAlg} important concrete models of questions will be given, confirming these statements. As a consequence of this definition note that 
\begin{eqnarray*}
x \leq y \textrm{ implies}\ \epsilon_x(\phi) \leq \epsilon_y(\phi) \textrm{ for all}\ \phi \in \Phi,
\end{eqnarray*}
where on the right we have information order. In fact, $x \leq y$ means $\epsilon_x(\phi) = \epsilon_y(\epsilon_x(\phi)) \leq \epsilon_y(\phi)$ since $\epsilon_x(\phi) \leq \phi$ and extraction preserves information order.

We write $\epsilon_x \circ \epsilon_y$ also simpler as $\epsilon_x\epsilon_y$. For the sequel we assume that $(Q,\leq)$ is a join-semilattice. That is for any pair $x,y$ we assume that the supremum $\sup\{x,y\} = x \vee y$ exists in $Q$. This imposes some structure on the set $E$ of extraction operators:
\begin{enumerate}
\item For all $x,y \in Q$, an element $z \in Q$ exists such that $\epsilon_x = \epsilon_x\epsilon_z =  \epsilon_z\epsilon_x$ and $\epsilon_y = \epsilon_y\epsilon_z =  \epsilon_z\epsilon_y$ ($z$ is an upper bound of $x$ and $y$).
\item For any $u \in Q$ such that $\epsilon_x = \epsilon_x\epsilon_u =  \epsilon_u\epsilon_x$ and $\epsilon_y= \epsilon_y\epsilon_u =  \epsilon_u\epsilon_y$ we have $\epsilon_z = \epsilon_z\epsilon_u =  \epsilon_u\epsilon_z$ ($z$ is the least upper bound of $x$ and $y$).
\end{enumerate}
We write then $z = x \vee y$. The join of two questions $x$ and $y$ represents the combined question: Answers to question $x \vee y$ are also answers to questions $x$ and $y$, and it is the coarsest question with this property in $Q$. We shall see later that in important instances this is the case, so it seems not be an exaggerated assumption. In the sequel, we assume that in the information algebra $(\Phi,\cdot,0,1;E)$ the set of extraction operators induce a join-semilattice $(Q,\leq)$ in this way, that is satisfies the conditions formulated above. We call this the \textit{Join axiom}.

Here are two further results on support, this time in relation to order of questions.

\begin{lemma} \label{le:Supp2} \
\begin{enumerate}
\item If $x$ is a support of $\phi$ and $x \leq y$, then $y$ is also a support of $\phi$,
\item if $x$ is a support of $\phi$ and $y$ a support of $\psi$, then $x \vee y$ is a support of $\phi \cdot \psi$, and so is $z$, if $x,y \leq z$.
\end{enumerate}
\end{lemma}

\begin{proof}
By definition $x \leq y$ means $\epsilon_x = \epsilon_y\epsilon_x$. So, if $\epsilon_x(\phi) = \phi$, then $\epsilon_y(\phi) = \epsilon_y(\epsilon_x(\phi)) = \epsilon_x(\phi) =\phi$ and so $y$ is a support of $\phi$. According to this result, $x \vee y$ is a support both of $\phi$ and $\psi$, if $x$ is a support of $\phi$ and $y$ of $\psi$. But then by Lemma $\ref{le:Supp1}$ we conclude that $x \vee y$ is a support of $\phi \cdot \psi$. Since $x \vee y \leq z$, if $x,y \leq z$, it follows that $z$ is also a support of $\phi \cdot \psi$.
\end{proof}

In processing information the concept of conditional independence is important. Roughly it means that questions $x$ and $y$ are independent given question $z$, if the extraction for $y$ of an information given for $x$ depends only on the part of this information relative to $z$ and vice versa. Formally this means that
\begin{eqnarray*}
\epsilon_y\epsilon_x &=& \epsilon_y\epsilon_z\epsilon_x, \\
\epsilon_x\epsilon_y &=& \epsilon_x\epsilon_z\epsilon_y.
\end{eqnarray*}
Or, given information to the combined question $x \vee z$ the information extracted from it for the combined question $y \vee z$ depends again only on the part of the first information in $z$, and vice versa, hence, since $z \leq x\vee z,y \vee z$,
\begin{eqnarray*}
\epsilon_{y \vee z}\epsilon_{x \vee z} &=& \epsilon_{y \vee z}\epsilon_z\epsilon_{x \vee z} = \epsilon_{y \vee z}\epsilon_z = \epsilon_z \\
\epsilon_{x \vee z}\epsilon_{y \vee z} &=& \epsilon_{x \vee z}\epsilon_z\epsilon_{y \vee z} = \epsilon_{x \vee z}\epsilon_z. = \epsilon_z
\end{eqnarray*}
Therefore we define the relation $x \bot y \vert z$ and say $x$ and $y$ are conditionally independent given $z$, if and only if
\begin{eqnarray*}
\epsilon_{y \vee z}\epsilon_{x \vee z} &=& \epsilon_z \\
\epsilon_{x \vee z}\epsilon_{y \vee z} &=& \epsilon_z.
\end{eqnarray*}
Note that the concept of conditional independence between questions may be defined without recourse to the join axiom. But this axiom simplifies matters considerably and we shall therefore always assume it. This relation has the following basic properties.

\begin{proposition} \label{prop:QSep}
For $x,y,z,u \in Q$,
\begin{description}
\item [C1] $x \bot y \vert y$,
\item [C2] $x \bot y \vert z$ implies $y \bot x \vert z$,
\item [C3] $x \bot y \vert z$ and $u \leq y$ imply jointly $x \bot u \vert z$,
\item [C4] $x \bot y \vert z$ implies $x \vee z \bot y \vee z \vert z$.
\end{description}
\end{proposition}

\begin{proof}
We have $y = y \vee y \leq x \vee y$, hence $ \epsilon_{y \vee y}\epsilon_{x \vee y} = \epsilon_{y \vee y}\epsilon_y\epsilon_{x \vee y} = \epsilon_y$ and $ \epsilon_{x \vee y}\epsilon_{y \vee y} = \epsilon_{x \vee y}\epsilon_y\epsilon_{y \vee y} = \epsilon_y$ and this means that $x \bot y \vert y$. Item 2 is obvious from the definition of $x \bot y \vert z$. If $u \leq y$, then $z \leq u \vee z \leq y \vee z$, hence $\epsilon_{u \vee z} = \epsilon_{u \vee z}\epsilon_{y \vee z} = \epsilon_{y \vee z}\epsilon_{u \vee z}$. Now $x \bot y \vert z$ means $\epsilon_{y \vee z}\epsilon_{x \vee z} = \epsilon_z$. Hence $\epsilon_{x \vee z}\epsilon_{u \vee z} = \epsilon_{x \vee z}\epsilon_{y \vee z}\epsilon_{u \vee z} = \epsilon_z\epsilon_{u \vee z} = \epsilon_z$, so that $x \bot u \vert z$. The last item follows since $(x \vee z) \vee z = x \vee z$ and $(y \vee z) \vee z = y \vee z$.
\end{proof}

A relation $x \bot y \vert z$ satisfying Proposition \ref{prop:QSep} is called a  \textit{quasi-separoid} (q-separoid). It is a retract of the concept of a separoid, introduced in \cite{dawid01} to represent conditional independence. So in the sequel, we assume that $(Q,\leq,\bot)$ is a q-separoid, describing condition independence among questions. Here follow two important consequences of conditional independence.

\begin{theorem} \label{th:CombExtrProp}
$x \bot y \vert z$ imply for all $\phi,\psi \in \Phi$
\begin{enumerate}
\item $\epsilon_y(\epsilon_x(\phi)) = \epsilon_y(\epsilon_z(\epsilon_x(\phi)))$,
\item $\epsilon_z(\epsilon_x(\phi) \cdot \epsilon_y(\psi)) = \epsilon_z(\epsilon_x(\phi)) \cdot \epsilon_z(\epsilon_y(\psi))$.
\end{enumerate}
\end{theorem}

\begin{proof}
1.) We know that $\epsilon_x(\phi)$ has support $x$. Let $\phi$ be any element with support $x$, hence support $x \vee z$ and $\epsilon_{y \vee z}(\phi) = \epsilon_{y \vee z}(\epsilon_{x \vee z}(\phi))$. Then from $x \bot y \vert z$ we conclude that $\epsilon_{y \vee z}(\phi) =\epsilon_{y \vee z}(\epsilon_z(\phi))$. Then since $y \leq y \vee z$ we have $\epsilon_y(\phi) = \epsilon_y(\epsilon_{y \vee z}(\phi)) = \epsilon_y(\epsilon_{y \vee z}(\epsilon_z(\phi))) = \epsilon_y(\epsilon_z(\phi))$ which proves item 1.)

2.) Again, if $\psi$ has support $y$ is has also support $y \vee z$, $\epsilon_x(\phi)$ has support $x$ and $\epsilon_y(\psi)$ support $y$, so let $\phi$ and $\psi$ have support $x$ and $y$ respectively. Then $\epsilon_{y \vee z}(\phi \cdot \psi) = \epsilon_{y \vee z}(\phi) \cdot \psi$. From $x \bot y \vert z$ and the result just proved it follows further $\epsilon_{y \vee z}(\phi \cdot \psi) = \epsilon_{y \vee z}(\epsilon_z(\phi)) \cdot \psi = \epsilon_{y \vee z}(\epsilon_z(\phi) \cdot \psi)$. Note that the term within parentheses in the last term has support $y \vee z$. Therefore, this last term equals $\epsilon_z(\phi) \cdot \psi$. Then we obtain further, using $z \leq y \vee z$ or $\epsilon_z = \epsilon_z\epsilon_{y \vee z}$,
\begin{eqnarray*}
\epsilon_z(\phi \cdot \psi) = \epsilon_z(\epsilon_{y \vee z}(\phi \cdot \psi))  = \epsilon_z(\epsilon_{y \vee z}(\epsilon_z(\phi) \cdot \psi))= \epsilon_z(\epsilon_z(\phi) \cdot \psi) = \epsilon_z(\phi) \cdot \epsilon_z(\psi)
\end{eqnarray*}
and this concludes the proof.
\end{proof}

If $x \leq y$, then by items 1 and 3 of the q-separoid properties $x \bot y \vert y$ implies $x \bot x \vert y$. Now in our particular case the converse holds too.

\begin{proposition} \label{prop:BasQSep}
If $x \bot x \vert y$, then $x \leq y$.
\end{proposition}

\begin{proof}
$x \bot x \vert y$ means that $\epsilon_{x \vee y} = \epsilon_y$, such that $\epsilon_x = \epsilon_x\epsilon_{x \vee y} = \epsilon_x\epsilon_y$ and $\epsilon_x = \epsilon_{x \vee y}\epsilon_x\ = \epsilon_y\epsilon_x$, hence $x \leq y$.
\end{proof}

A separoid with the property that $x \bot x \vert y$ implies $x \leq y$ is called \textit{basic}, \cite{dawid01} and we adopt this concept for q-separoids. So, our q-eparoid is basic. In certain cases $(Q;\leq)$ may be a lattice, even a distributive one (see Section \ref{sec:SetAlg}). Then we have

\begin{proposition} \label{prop:LCond}
If $(Q,\leq)$ is a lattice and the q-separoid $(Q,\leq,\bot)$ basic, then $x \bot y\vert z$ implies $(x \vee z) \wedge (y \vee z) = z$.
\end{proposition}

\begin{proof}
This is purely a consequence of the q-separoids properties, if the q-separoid is basic. Suppose that $x \bot y \vert z$, so that also $x \vee z \bot y \vee z \vert z$ by C4. Define $w = (x \vee z) \wedge (y \vee z)$ such that $w \leq x \vee z,y \vee z$. Using C3 and C2 we deduce that $w \bot w \vert z$. Since the q-separoid is basic we conclude that $w \leq z$, Since always $z \leq w$ we conclude that $w = z$.
\end{proof}

Independent of this statement, we note that if we define the relation $x \bot_L y \vert z$ iff $(x \vee z) \wedge (y \vee z) = z$, then $x \bot_L y \vert z$ is a q-separoid, if $(Q,\leq)$ is a lattice. This is a theorem purely of q-separoid or separoid  theory, as all the other results below. 
\begin{proposition}
If $(Q,\leq)$ is a lattice, then $x \bot_L y \vert z$ is a q-separoid.
\end{proposition}

\begin{proof}
We have ($x \vee y) \wedge (y \vee y) = y$, hence C1. By the symmetry of the definition C2 holds too. If $u \leq y$, then $z \leq (x \vee z) \wedge (u \vee z) \leq (x \vee z) \wedge (y \vee z) \leq z$, so C3 follows. Finally C4 follows from $(x \vee z) \wedge (y \vee z) = z$.
\end{proof}

For basic q-separoids, Proposition \ref{prop:LCond} can be sharpened.
\begin{proposition} \label{prop:LCond_1}
If $(Q,\leq)$ is a lattice, then a q-separoid $(Q,\leq,\bot)$ is basic if and only if
\begin{eqnarray*}
x \bot y \vert z \Leftrightarrow (x \vee z) \wedge (y \vee z) = z
\end{eqnarray*}
\end{proposition}

\begin{proof}
If the condition on the right holds, then $x \bot x \vert y$ implies $x \vee y = y$, hence $x \leq y$. The other direction of the implication has been shown in Proposition \ref{prop:LCond}.
\end{proof}

A q-separoid becomes a \textit{separoid}, if two additonal conditions are satisfied,
\begin{description}
\item[C5] $x \bot y \vert z$ and $u \leq y$ imply $x \bot y \vert z \vee u$,
\item[C6] $x \bot y \vert z$ and $x \bot u \vert y \vee z$ imply $x \bot y \vee u \vert z$.
\end{description}
If $(Q,\leq)$ is a lattice, and in addition also the next condition holds, then the separoid is called a \textit{strong separoid}.
\begin{description}
\item[C7] If $z \leq y$ and $u \leq y$, then $x \bot y \vert z$ and $x \bot y \vert u$ imply $x \bot y \vert z \wedge u$.
\end{description}
It can be shown that C1 to C3 together with C5 and C6 imply C4 \cite{dawid01}.

If we meet both sides of $(x \vee z) \wedge (y \vee z) = z$ with $x$, we obtain $x \wedge (y \vee z) = x \wedge z$, which is equivalent to
\begin{eqnarray} \label{eq:modCondIndep}
x \wedge (y \vee z) \leq z.
\end{eqnarray}
This condition in turn is equivalent to $(x \vee z) \wedge (y \vee z) = z$ if the lattice $(Q,\leq)$ is \textit{modular}. So, in this case we have $x \bot_L y \vert z$ if and only if (\ref{eq:modCondIndep}) holds. 

\begin{proposition} \label{prop:ModLatt}
If $(Q,\leq)$ is a lattice, then the relation $x \bot_L y \vert z$ defines a separoid if and only if the lattice $(Q,\leq)$ is modular.
\end{proposition}

\begin{proof}
Assume $(Q,\leq)$ to be a modular lattice, that is $x \wedge (y \vee z) = x \wedge z$ if and only if $x \bot_L y \vert z$. So, for C5, if $u \leq y$ we have $x \wedge (z \vee u) \leq  x \wedge (y \vee z \vee u) = x \wedge (y \vee z) = x \wedge z \leq x \wedge (z \vee u)$, hence $x \wedge (y \vee (z \vee u)) = x \wedge (z \vee u)$. This means $x \bot_L y \vert z \vee u$, that is C5. Further $x \bot_L y \vert z$ and $x \bot_L u \vert y \vee z$ imply $x \wedge (y \vee z) = x \wedge z$ and $x \wedge (y \vee z \vee u) = x \wedge (y \vee z)$, hence $x \wedge (y \vee u \vee z) = x \wedge z$, hence $x \bot_L y \vee u \vert z$. This is C6.

On the other hand, assume the relation $x \bot_L y \vert z$ to be a separoid. Then $x \bot_L y\vert x \wedge y$ and therefore, if $z \leq x$ it follows from C5 that $x \bot_L y \vert (x \wedge y) \vee z$. This in turn means $x \wedge (y \vee (x \wedge y) \vee z) = x \wedge ((x \wedge y) \vee z)$. But $x \wedge (y \vee (x \wedge y) \vee z) = x \wedge (y \vee z)$ and $x \wedge ((x \wedge y) \vee z) = (x \wedge y) \vee z$, since $z \leq x$. So $x \wedge (y \vee z) = (x \wedge y) \vee z$ if $z \leq x$ and this is modularity.
\end{proof}

Note that in a distributive lattice $(x \vee z) \wedge (y \vee z) = (x \wedge y) \vee z$. So in this case $(x \vee z) \wedge (y \vee z) = z$ is equivalent to
\begin{eqnarray} \label{eq:CondIndepDistLatt}
x \wedge y \leq z.
\end{eqnarray}
Let's denote the relation defined by this condition by $x \bot_d y \vert z$. If the lattice is distributive, then C7 holds too. 

\begin{proposition} \label{prop:DistLatt}
If $(Q,\leq)$ is a distributive lattice, the relation $x \bot_L y \vert z$ defines a strong separoid.                                                                                                                                                            
\end{proposition}

\begin{proof}
A distributive lattice is modular so that C5 and C6 hold according to the previous proposition. It remains to prove C7.  Since the lattice $Q$ is distributive $x \bot_L y \vert z$ holds if and only if (\ref{eq:CondIndepDistLatt}). Then $x \bot_L y \vert z$ and $x \bot_L y \vert u$ imply $x \wedge y \leq z$ and $x \wedge y \leq u$, hence $x \wedge y \leq z \wedge u$. But this means $x \bot_L y \vert z \wedge u$, hence C7 is satisfied.
\end{proof}

Then the following result is due to \cite{dawid01}
\begin{proposition} \label{}
The relation $x \bot_d y \vert z$ is a strong separoid if and only if $(Q,\leq)$ is a distributive lattice.
\end{proposition}

Many of these results will be illustrated, become concrete form and are related to information algebras in the subsequent sections, especially in the next one. But all these results are important for computational aspects of information algebras, see Section \ref{sec:LocComp}.

%%%%%%%%%%%%%%%%%%%%%%%%%%%%%%%%%%%%%%%%%%%%%%%%%%%%%%%%%%%%

\section{Commutative information algebras} \label{sec:CommInfAlg}

Composition $\epsilon_x\epsilon_y$ of extraction operators is, in general, no more an extraction operator. There are however  important cases where for all pairs $\epsilon_x,\epsilon_y \in E$ we have $\epsilon_x\epsilon_y = \epsilon_y\epsilon_x \in E$. Then the extraction operators are said to commute and the information algebra $(\Phi,\cdot,0,1;E)$ is called \textit{commutative}. Section \ref{sec:SetAlg} gives instances of this case, the most important being the so-called multivariate case. Note that if two extraction operators $\epsilon_x$ and $\epsilon_y$ commute, that is $\epsilon_x\epsilon_y = \epsilon_y\epsilon_x = \epsilon_z$ for some $z \in Q$, then $z \leq x,y$. If $u$ is another lower bound of $x,y$, that is $\epsilon_u = \epsilon_u\epsilon_x = \epsilon_u\epsilon_y$, then clearly $\epsilon_u\epsilon_z = \epsilon_u$, hence $u \leq z$. So $z$ is the infimum of $x$ and $y$, $z = \inf\{x,y\} = x \wedge y$. This shows that $E$ is, in this case, a commutative, idempotent semigroup under composition, $(Q,\leq)$ a meet-semilattice and $\epsilon_x\epsilon_y = \epsilon_{x \wedge y}$. An information algebra $(\Phi,\cdot,0,1;E)$, where the set $E$ of extraction operators is a commutative semigroup under composition, $(E,\circ)$, is called a \textit{commutative, domain-free information algebra}. 

For a commutative information algebra, $(Q,\leq)$ is meet-semilattice, as we have seen, but it is not necessarily closed under joins, and we do not need to require this (the Join axiom) for commutative information algebras. Then, there is no conditional independence relation in $Q$ forming a q-separoid. But again in many cases $Q$ is closed under joins, that is $(Q,\leq)$ is a \textit{lattice}. Then we may again define a conditional independence relation $x \bot y \vert z$ by
\begin{eqnarray*}
\epsilon_{y \vee z}\epsilon_{x \vee z} &=& \epsilon_{y \vee z}\epsilon_z\epsilon_{x \vee z} = \epsilon_z.
\end{eqnarray*}
Using commutativity and the fact that composition of extraction operators generates meet, we have in the commutative case equivalently
\begin{eqnarray*}
\epsilon_{(x \vee z) \wedge (y \vee z)} = \epsilon_z\epsilon_{(x \vee z) \wedge (y \vee z)}  = \epsilon_{(x \vee z) \wedge (y \vee z)}\epsilon_z.
\end{eqnarray*}
But this means that $(x \vee y) \wedge (y \vee z) \leq z$, whereas we also always have $(x \vee y) \wedge (y \vee z) \geq z$. So in the case of a commutative algebra, we obtain $x \bot y \vert z$ iff $(x \vee y) \wedge (y \vee z) = z$, that is $x \bot y \vert z = x \bot_L y \vert z$, see Section \ref{sec:StructOfQuest}. 

So we have proved that if $(Q,\leq)$ is a lattice, and the extractions operators in $E$ are commuting, then
\begin{eqnarray*}
x \bot y \vert z \Leftrightarrow (x \vee z) \wedge (y \vee z) = z
\end{eqnarray*}

%\begin{proof}
%We have only to show that $(x \vee z) \wedge (y \vee z) = z$ implies $x \bot y \vert z$. Now, since $(x \vee z) \wedge (y \vee z) = z$, we have $z \leq (x \vee z) \wedge (y \vee z)$, so that, using $\epsilon_{x \vee z}\epsilon_{y \vee z} = \epsilon_{(x \vee z) \wedge (y \vee z)}$ and commutativity of extraction operators,
%\begin{eqnarray*}
%\epsilon_{x \vee z}\epsilon_{y \vee z} &=& \epsilon_{x \vee z}\epsilon _{y \vee z}\epsilon_z = \epsilon_{x \vee z}\epsilon_z\epsilon _{y \vee z}, \\
%\epsilon_{y \vee z}\epsilon_{x \vee z} &=& \epsilon_{y \vee z}\epsilon _{x \vee z}\epsilon_z = \epsilon_{y \vee z}\epsilon_z\epsilon _{x\vee z},
%\end{eqnarray*}
%so that $x \bot y \vert z$.
%\end{proof}

The converse holds too.
\begin{proposition} \label{prop:CommQSep1}
If $(Q,\leq)$ is a lattice, then the extractor operators in $E$ commute if and only if
\begin{eqnarray*}
x \bot y \vert z \Leftrightarrow (x \vee z) \wedge (y \vee z) = z
\end{eqnarray*}
\end{proposition}

\begin{proof}
The only-if part has been proved above. Assume then that $x \bot y \vert z$ implies $(x \vee z) \wedge (y \vee z) = z$. Then we have $x \bot y \vert x \wedge y$, hence, since $\epsilon_{x \wedge y} = \epsilon_{x \wedge y}\epsilon_y$ and $\epsilon_{x \wedge y} = \epsilon_x\epsilon_{x \wedge y}$,
\begin{eqnarray*}
\epsilon_x\epsilon_y = \epsilon_x\epsilon_{x \wedge y}\epsilon_y = \epsilon_x\epsilon_{x \wedge y} = \epsilon_{x \wedge y}.
\end{eqnarray*}
Since $\epsilon_{x \wedge y} = \epsilon_{y \wedge x}$, we conclude that $\epsilon_x$ and $\epsilon_y$ commute.
\end{proof}

Now, finally it follows that if $(Q,\leq)$ is a lattice, the information algebra is necessarily \textit{commutative}. This follows, since $(Q,\leq,\bot)$ is a basic q-separoid (Proposition \ref{prop:BasQSep}) and if $(Q,\leq)$ is a lattice, then the relation $x \bot y \vert z$ is commutative, that is $x \bot y \vert z = x \bot_L y \vert z$ (Proposition \ref{prop:LCond}). Let's fix this important result in a theorem

\begin{theorem} \label{th:CommLatticInfAlg}
If in an information algebra $(\Phi,\cdot,0,1;E)$ with $E = \{\epsilon_x:x \in Q\}$ the partial order $(Q,\leq)$ is a lattice, then the information algebra is commutative.
\end{theorem}

In the next section , we present a concrete, important instance of an information algebras, including a commutative version of it.

%%%%%%%%%%%%%%%%%%%%%%%%%%%%%%%%%%%%%%%%%%%%%%%%%%%%%%%%%%%%

\section{Set algebras} \label{sec:SetAlg}

So far, the set $\Phi$ of pieces of information as well as $Q$, the set of questions have been abstract sets, subject only to the conditions specified for combination and extraction. Now we construct a special type of information algebra, where pieces of information are subsets of some universe, combination is set intersection and extraction is defined by saturation operators relative to some partitions of the universe.  Such information algebras will be called \textit{set algebras}. 

Let $U$ be any set. The basic idea is to consider $U$ as a set of possible worlds and information about an unknown possible worlds is given by subsets of $U$. A piece of information given by a subset $S$ of $U$ tells us that the unknown possible world belongs to $S$. Let $\mathcal{P}(U)$ be the power set of $U$ with the usual lattice structure $(\mathcal{P}(U),\cap,\cup,\emptyset,U)$. A question $x \in Q$ will be modeled by an equivalence relation $\equiv_x$ on $U$, the idea being that for $u,u' \in U$ we have $u \equiv_x u'$ iff question $x$ has the same answer in the possible worlds $u$ and $u'$. Any equivalence relation induces a partition $P_x$ whose blocks $B _x$ are the equivalence classes of the relation $\equiv_x$, so that $u$ and $u'$ belong to the same block $B_x$ iff $u \equiv_x u'$. To an equivalence relation $\equiv_x$ or a partition $P_x$ we associate a \textit{saturation operator} $\sigma_x : \mathcal{P}(U) \rightarrow \mathcal{P}(U)$ defined by
\begin{eqnarray*}
\sigma_x(S) = \{u \in U:\exists u' \in S, \textrm{ such that}\ u \equiv_x u'\}.
\end{eqnarray*}
This is equivalent in terms of the partition $P_x$ to
\begin{eqnarray*}
\sigma_x(S) = \bigcup \{B:B \textrm{ block of}\ P_x, B \cap S \not= \emptyset\}.
\end{eqnarray*}
The following properties of saturation operators will be crucial for our purposes:
\begin{lemma} \label{le:propSatOp} \
\begin{enumerate} 
\item $\sigma_x(\emptyset) = \emptyset$,
\item $S \subseteq \sigma_x(S)$,
\item $\sigma_x(\sigma_x(S) \cap T) = \sigma_x(S) \cap \sigma_x(T)$,
\item $S \subseteq T$ implies $\sigma_x(S) \subseteq \sigma_x(T)$,
\item $S = \sigma_x(S)$ and $T = \sigma_x(T)$ imply $S \cap T =\sigma_x(S \cap T)$,
\item $\sigma_x(S \cup T) = \sigma_x(S)\cup \sigma_x(T)$.
\end{enumerate}
\end{lemma}

\begin{proof}
Items 1, 2,4  and 6 are obvious from the definition of saturation operators.

For 5. observe that $S = \sigma_x(S)$ iff $S$ is a union of whole blocks of partition $P_x$, and that for two blocks $B_1$ and $B_2$ of $P_x$  either $B_1 \cap B_2 = \emptyset$ of $B_1 = B_2$.

For 3. observe that $\sigma_x(S) \cap T \subseteq \sigma_x(S) \cap \sigma_x(T)$, so $\sigma_x(\sigma_x(S) \cap T)  \subseteq \sigma_x(\sigma_x(S) \cap \sigma_x(T)) = \sigma_x(S) \cap \sigma_x(T)$, by 2. and 5. For the reverse inclusion, we have $\sigma_x(S) \cap \sigma_x(T) = \bigcup \{B \in P_x:B \cap S \not= \emptyset \not= B \cap T\}$, where $B \in P_x$ means that $B$ is a block of $P_x$. Obviously, for each such $B$ we have $B \cap \sigma_x(S) = B$, so that $B \cap \sigma_x(S) \cap T \not= \emptyset$ and $B$ participates in the union of all $B' \in P_x$ forming $\sigma_x(\sigma_x(S) \cap T)$. Therefore $\sigma_x(S) \cap \sigma_x(T) \subseteq \sigma_x(\sigma_x(S) \cap T)$.
\end{proof}

As stated above, we consider subsets of $U$, elements of $\mathcal{P}(U)$, as pieces of information about possible worlds. Then, in a natural way, combination of two pieces of information $S,T \in \mathcal{P}(U)$ is given by \textit{set intersection}, $S \cap T$. Note then that in information order (Section \ref{sec:InfOrder}) we have $S \leq T$ if $T \subseteq S$. $T$ limits the unknown possible world more than $S$ does. So information order is the inverse of the usual order in the lattice of subsets given by inclusion. Further the universe $U$ is the unit of combination and the emptyset $\emptyset$ the null element. Given these considerations, we see that by items 1 to 3 of Lemma \ref{le:propSatOp} a saturation operator is an \textit{extraction operator}. Note also that $\sigma_x(\emptyset) = \emptyset$ and that $\sigma_x(S) = \emptyset$ implies $S = \emptyset$.

We study now the order between questions in $Q$ following the discussion in Section \ref{sec:StructOfQuest}. For this purpose we need to study compositions of saturation operator $\sigma_x\sigma_y$ or also corresponding combination of the corresponding relations $\equiv_x\equiv_y$, defined as
\begin{eqnarray*}
u \equiv_x\equiv_y u' = \{(u,u'):\exists u'' \textrm{ such that}\ u \equiv_x u'' \equiv_y u'\}.
\end{eqnarray*}
Note that $\equiv_x\equiv_y$ is, in general, no more an equivalence relation, no more than $\sigma_x\sigma_y$ is a saturation operator. As in Section \ref{sec:StructOfQuest} lets define $x \leq y$ iff $\sigma_x\sigma_y = \sigma_x$ or $\equiv_x\equiv_y\ =\ \equiv_y\equiv_x$. We know from Section \ref{sec:StructOfQuest} that this is a partial order. Now, $\equiv_x\equiv_y\ =\ \equiv_x$ means that $u \equiv_x\equiv_y u'$ iff $u \equiv_x u'$ and from this we conclude that $u \equiv_y u'$ implies $u \equiv_x u'$, that is $P_x \leq P_y$ in partition order.

So a question $y$ is finer than a question $x$, $y \geq x$, if two possible worlds $u$ and $u'$ which have the same answer to question $y$, also have the same answer to question $x$. Or, in yet another view, the set of blocks $B_x$ of a partition $P_x$ represents all possible answers to question $x$. Then $y \leq x$ or $P_y \leq P_x$ means that any possible answer to question $y$ determines also a possible answer to question $x$.  That is, any block of the finer partition $P_y$ is contained in a block of the coarser one \footnote{In the literature this usually is defined as the inverse order $P_x \leq P_y$ \cite{graetzer78}.}. This shows that this way of introducing order between questions makes sense.

A subset $S$ of $U$ is called $x$-saturated if $\sigma_x(S) = S$. The intersection $S \cap T$ of two $x$-saturated sets is still $x$-saturated (see item 5 of Lemma \ref{le:propSatOp}). Note that if $S$ is $x$-saturated and $x \leq y$, then $S$ is also $y$-saturated, since $u \equiv_y u'$ implies $u \equiv_x u'$. So, if a subset $S$ is $x$-saturated and a subset $T$ is $y$-saturated and $x,y \leq z$, then both $S$ and $T$ are $z$-saturated. Further, for any subset $S$, $\sigma_x(S)$ is $x$-saturated.

%rework...
We need not necessarily consider all possible partitions $P$ of the universe $U$ as questions of interest. As seen above, this set is ordered by the order induced by the saturation operators $\sigma_x$. Let $(Q,\leq)$ be the partial order introduced above and $P_Q = \{P_x:x \in Q)$. We have seen that $x \leq y$ iff $P_x \leq P_y$ in the order defined above. Now, assume that $(Q,\leq)$ is a join-semilattice, $x \vee y$ exists in $(Q,\leq)$ for any pair of elements $x$ and $y$ from $Q$. Then $P_{x \vee y}$ is also the join of $P_x$ and $P_y$ in $(P_Q,\leq)$, written as $P_x \vee P_y$. However, this is, in general, not the join of $P_x$ and $P_y$ in the lattice of partitions ($Part(U),\leq)$, which we denote by $P_x \vee_P P_y$ to distinguish it from the former join. This latter join is the partition whose blocks are exactly the non-empty intersection $B_x \cap B_y$ of blocks $B_x$ from $P_x$ and blocks $B_y$ from $P_y$ \footnote{Again in the inverse order as usually used in the literature our join becomes the meet.}. Obviously, we have $P_x \vee_P P_y \leq P_x \vee P_y$, since the latter join is an upper bound of $P_x$ and $P_y$ in partition order.

We may now define what we mean by a set algebra. Consider a set $U$ (of possible worlds) and a subset  $\Phi \subseteq \mathcal{P}(U)$, that is a family of subsets of $U$ and $Q$ a family of questions represented by equivalence relations $\equiv_x$ in $U$ or, equivalently, by partition $P_x$ of $U$. Let further $\Sigma_Q$ be the set of saturation operators $\sigma_x$ for $x \in Q$. We assume that any $S \in \Phi$ is $x$-saturated for some $x \in Q$ and that $\Phi$ is closed under intersection. This means that if $S$ and $T$ are elements of $\Phi$ which are $x$ and $y$ saturated respectively, there is a $z \in Q$ so that $x,y \leq z$ and $S \cap T$ is $z$ saturated. In other words, we assume that $(Q,\leq)$ is \textit{upwards directed}. By the discussion above, $\Phi$ is closed under combination, if $(Q,\leq)$ is a join-semilattice. This condition is also satisfied, if the top partition of $U$ whose blocks are single elements $\{u\}$ belongs to the family of partition $P_x$ for $x \in Q$.  And $\Phi$ is also closed under all saturation operators $\sigma_x$ for $x \in Q$, since $\sigma_x(S)$ is $x$-saturated. The signature $(\Phi,\cap,\emptyset,U;\Sigma)$ is then an information algebra, called a \textit{set algebra}. We remark that the Support axiom is satisfied in a set algebra by definition.

As explained in Section \ref{sec:StructOfQuest} we can also introduce a relation of conditional independence between questions. Here we assume the Join axiom, so that $(Q,\leq)$ is a join-semilattice. What does it mean in the present case, where questions are represented by equivalence relation $\equiv_x$ or partitions $P_x$? We have $x \bot y \vert z$ iff 
\begin{eqnarray} \label{eq:SatCondIndep}
\sigma_{x \vee z}\sigma_{y \vee z} = \sigma_{y \vee z}\sigma_{x \vee z} = \sigma_z.
\end{eqnarray}
In terms of composition of equivalence relations this is equivalent 
\begin{eqnarray} \label{eq:CondIndep}
\equiv_{x \vee z}\equiv_{y \vee z}\ =\ \equiv_{x \vee z}\equiv_{y \vee z}\ = \ \equiv_z.
\end{eqnarray}
Note that $u \equiv_{x \vee z}\equiv_{y \vee z} u'$ always implies $u \equiv_z u'$, since $u \equiv_{x \vee z} u''$ and $u'' \equiv_{y \vee z} u'$ imply $u \equiv_z u'$. So $x \bot y\vert z$ holds, if the converse of this implication holds also. Therefore this can be expressed as stated in the following proposition

\begin{proposition} \label{prop:PartCondIndep}
For the join-semilattice $(P_Q,\leq)$ induced by partitions as above, we have $x \bot y\vert z$ if and only if
\begin{eqnarray} \label{eq:CondIndepPart}
u \equiv_z u' \Rightarrow \exists w \in U \textrm{ such that}\ u \equiv_{x \vee z} w \equiv_{y \vee z} u'.
\end{eqnarray} 
for any pair $u,u' \in U$.
\end{proposition} 

Since $x \vee_P z \leq x \vee z$ and $y \vee_P z \leq y \vee z$ we have that $u \equiv_{x \vee z} u'$ implies $u \equiv_{x \vee_P z} u'$ and $u \equiv_{y \vee z} u'$ implies $u \equiv_{y \vee_P z} u'$ so that $u \equiv_z u'$ implies that there is an element $w$ so that $u \equiv_{x \vee_P z} w \equiv_{y \vee_P z} u'$. This is the usual definition of conditional independence between partitions in the lattice of partitions $(Part(U),\leq)$ \cite{shafershenoymellouli97,kohlasmonney95}. So, if $B_x$, $B_y$ and $B_z$ are blocks of partitions $P_x$, $P_y$ and $P_z$ respectively, then $B_{x \vee_P z} = B_x \cap B_z$ and  $B_{y \vee_P z} = B_y \cap B_z$ are blocks of partitions $P_x \vee_P P_z$ and $P_y \vee_P P_z$ respectively. Then $P_x$ and $P_y$ are conditionally independent given $P_z$ if and only if $B_x \cap B_z \not= \emptyset$ and $B_x \cap B_z \not= \emptyset$ implies $(B_x \cap B_z) \cap B_y \cap B_z) = B_x \cap B_y \cap B_z \not= \emptyset$. Then we write $P_x \bot P_y \vert P_z$. In summary, we have in a set algebra $x \bot y \vert z$ if and only if $P_x \bot P_y \vert P_z$.

As stated above the product $\equiv_x\equiv_y$ of equivalence relations is, in general, no more an equivalence relation. There is a notable exception \cite{kohlasschmid21}:

\begin{lemma} \label{le:CompEqRel}
Given equivalence relation $\equiv_x$ and $\equiv_y$ for $x,y \in Q$, their relational product $\equiv_x\equiv_y$ is an equivalence relation if and only if the equivalence relations commute, that is $\equiv_x\equiv_y\ =\ \equiv_y\ \equiv_x$.
\end{lemma}

\begin{proof}
Assume $\equiv_x\equiv_y\ = \ \equiv_y\equiv_x$. Since $u \equiv_x u \equiv_y u$ for all $u \in U$, $\equiv_x\equiv_y$ is reflexive. Now $u \equiv_x\equiv_y u'$ iff $u \equiv_y\equiv_x u'$, hence $u' \equiv_x\equiv_y u$. This is symmetry.  It remains to establish transitivity. Assume $u \equiv_x\equiv_y w$ and $w \equiv_x\equiv_y u'$. Then there are elements $s,t \in U$ so that $u \equiv_x s \equiv_y w \equiv_x t \equiv_y u'$, so that $s \equiv_y\equiv_x t$. But then we have also $s \equiv_x\equiv_y t$, that is, there is an element $w'$ such that $u \equiv_x s \equiv_x w' \equiv_y t \equiv_y u'$, hence $u \equiv_x w' \equiv_y u'$ and so $u \equiv_x\equiv_y u'$. This is transitivity.

Conversely assume $\equiv_x\equiv_y$ to be an equivalence relation. Then the relation is symmetric, that is $u \equiv_x\equiv_y u'$ iff $u' \equiv_x\equiv_y u$ for all pairs $u,u' \in U$. But the latter implies $u \equiv_y\equiv_x u'$ so that indeed $\equiv_x\equiv_y\ = \ \equiv_y\equiv_x$.
\end{proof}

Then, obviously, the corresponding saturation operator $\sigma_x$ and $\sigma_y$ commute too under composition and their composition equals $\sigma_{x \wedge y}$ (see Section \ref{sec:CommInfAlg}),
\begin{eqnarray*}
\sigma_x\sigma_y = \sigma_y\sigma_x = \sigma_{x \wedge y}.
\end{eqnarray*}
Now, then $\sigma_{x \wedge y}$ belongs to a partition $P_{x \wedge y}$ and it turns out that this partition is, if the saturation operators commute, the infimum or meet among partitions in the lattice $(Part(U),\leq)$ in the order defined above,
\begin{eqnarray*}
P_{x \wedge y}  = P_x \wedge P_y.
\end{eqnarray*}
This partition $P_{x \wedge y}$ can be characterized as follows: If $B_x$, $B_y$ and $B_{x \wedge y}$ are respectively blocks of $P_x$, $P_y$ and $P_{x \wedge y}$ such that $B_x,B_y \subseteq B_{x \wedge y}$, then $B_x \cap B_y \not = 0$. Such partitions are called commuting (or type I partitions, \cite{graetzer78}). If all saturation operators in $\Sigma = \{\sigma_x:x \in Q\}$ commute pairwise, then the set algebra $(\Phi,\cap,\emptyset,U;\Sigma)$ is called a \textit{commutative set algebra} and it is a commutative information algebra,

The most important case of a commutative set algebra is given by the multivariate model.  Here the universe $U$ is the Cartesian product of domains $U_j$,
\begin{eqnarray*}
U = \prod_{j \in J} U_j.
\end{eqnarray*}
In practical cases $J$ will be countable or even finite. The elements of $U$ are tuples $t : j \in J \mapsto t_j \in U_j$. These tuples are the possible worlds. Define for any tuple $t$ its restriction to a subset $s$ of $J$ by $t \vert s$. Based on this define an equivalence relation in $U$ by
\begin{eqnarray*}
t \equiv_s t' \textrm{ iff}\ t \vert s = t' \vert s.
\end{eqnarray*}
Any such relation defines a partition $P_s$ of $U$ and then an associated saturation operator $\sigma_s$ for any subset $S$ of $U$
\begin{eqnarray*}
\sigma_s(S) = \{t' \in U:\exists t \in S \textrm{ such that}\ t \equiv_s t'\}.
\end{eqnarray*}
This is the so-called \textit{cylindrical exentsion} of $S$ and $s$-saturated sets are also called \textit{cylindrical sets}. Note that $\equiv_s\equiv_r\ = \{(t,t'):\exists t'' \textrm{ such that}\ t \equiv_s t'' \equiv_r t'\}$. Then $\sigma_s = \sigma_r\sigma_s$ holds iff $s \subseteq r$, so $s \leq r$ is simply set inclusion and if $Q$ is the power set of $J$, then $(Q,\leq)$ is a distributive lattice with meet as set intersection and join as set union. Clearly the relations $\equiv_s$ commute for all subsets $s$ and $r$ of $J$. In this case, or if $(Q,\leq)$ is an sublattice of the power set of $J$, this is called a \textit{multivariate model}. For this model we have $s \bot r \vert u$ iff $s \cap r \subseteq u$, see Section \ref{sec:StructOfQuest} and this relation defines a strong separoid (Proposition \ref{prop:DistLatt}).

\chapter{Labeled Information Algebras} \label{sec:LabInfAlg}
%%%%%%%%%%%%%%%%%%%%%%%%%%%%%%%%%%%%%%%%%%%%%%%%%%%%%%%%%%%%

%%%%%%%%%%%%%%%%%%%%%%%%%%%%%%%%%%%%%%%%%%%%%%%%%%%%%%%%%%%%

\section{Derivation of a labeled information algebras} \label{subsec:LabInfAlg}

In this section another view on an information algebra is presented, stressing more the aspect of questions and  information relative to questions. We derive this alternative form of the algebra from a domain-free information algebra $(\Phi,\cdot,0,1;E)$ with $E =\{\epsilon_x:x \in Q\}$ and $(Q,\leq)$ the join-semilattice derived from $E$. We stated above, that if $x$ is a support of  an element $\phi \in \Phi$, $\phi = \epsilon_x(\phi)$, then it is a piece of information directly bearing on question $x$. Let us therefore collect pairs $(\phi,x)$, where $\phi$ has support $x$ of such pieces of information relating to $x$ and denote the set of these pairs by $\Psi_x$. Define the the set 
\begin{eqnarray*}
\Psi = \bigcup_{x \in Q} \Psi_x
\end{eqnarray*}
of all pairs for all questions. Its elements are called labeled pieces of information. Recall that the null and unit elements $0$ and $1$ have all $x \in Q$ as support.  In $\Psi$ we define the operations of combination and of transport based on the combination and extraction in $\Phi$ and a further operation called labeling.
\begin{enumerate}
\item \textit{Combination:} $(\phi,x) \cdot (\psi,y) = (\phi \cdot \psi,x \vee y)$,
\item \textit{Transport:} $t_y(\phi,x) = (\epsilon_y(\phi),y)$,
\item \textit{Labeling:} $d(\phi,x) = x$.
\end{enumerate}

From these definitions we derive immediately the following basic properties of labeled pieces of information.
\begin{enumerate}
\item \textit{Semigroup} $(\Psi,\cdot)$ is a commutative semigroup,
\item \textit{q-Separoid:} $(Q,\leq,\bot)$ is a q-separoid.
\item \textit{Labeling:} $d((\phi,x) \cdot (\psi,y)) = d(\phi,x) \vee d(\psi,y)$, $d(t_y(\phi,x)) = y$,
\item \textit{Null and Unit:} $(\phi,x) \cdot (0,x) = (0,x)$, $(\phi,x) \cdot (1,x) = (\phi,x)$, $t_y(0,x) = (0,y)$ and $t_y(1,x) = (1,y)$,
\item \textit{Idempotency:} $t_y(\phi,x) \cdot (\phi,x) = (\phi,x \vee y)$,
\item \textit{Combination:} $t_x((\phi,x) \cdot (\psi,y)) = (\phi,x) \cdot t_x(\psi,y)$,
\item \textit{Identity:} $t_x(\phi,x) = (\phi,x)$.
\end{enumerate}
In addition, we have also that $x \bot y \vert z$ implies
\begin{eqnarray} \label{eq:DEfLabCondIndep}
t_{y \vee z}t_{x \vee z} &=& t_{y \vee z}t_zt_{x \vee z}, \nonumber \\
t_{x \vee z}t_{y \vee z} &=& t_{x \vee z}t_zt_{y \vee z}.
\end{eqnarray}
This algebraic system will be called the labeled information algebra derived from the domain-free algebra $(\Phi,\cdot,0,1;E)$. 

We may also define a labeled information algebra independent of a domain-free one. Let, as in the domain-free case, $Q$ be an index set of questions. At this point the set $Q$ has no internal structure whatsoever. The idea is the each piece of information $\psi$ from a set $\Psi$ refers to a question $x \in Q$, which will be its label. As in the domain-free case, we assume that elements of $\Psi$ may be combined or aggregated and that a piece of information $\psi$ refering to some $x \in Q$ may be transport to some other $y \in Q$, or that the part of information refering to $y$ may be extracted from $\psi$. The transported piece of information will then refer to $y$ or be labeled by $y$ And there will be the labeling operation which extracts from each piece of information its label, the question it refers to. So, in summary, we assume the existence of the following operations:
\begin{enumerate}
\item \textit{Combination:} $\cdot :  \Psi \times \Psi \rightarrow ; (\phi,\psi) \mapsto \phi \cdot \psi$,
\item \textit{Transport:} $t : \Psi  \times Q \rightarrow ; (\psi,x) \mapsto t_x(\psi)$,
\item \textit{Labeling:} $d :  \Psi \rightarrow Q; \psi \mapsto d(\psi)$.
\end{enumerate}

Concerning combination, we assume as in the domain-free case that $(\Psi,\cdot)$ is a commutative semigroup. We may also consider all elements of $\Psi$ having a fixed label $x$. Let's denote this set by $\Psi_x = \{\psi \in \Psi:d(\psi) = x\}$. Combination of two pieces of information referring to the same question $x$ should result in a piece of information again referring to $x$. Therefore, $(\Psi_x,\cdot)$ is a sub-semigroups of $(\Psi,\cdot)$. So, if $d(\phi) = d(\psi) = x$, then $d(\phi \cdot \psi) = x$. And as in the domain-free case there must be elements representing vacuous information and contradiction. But now, to keep to the picture of elements each referring to some question $x$, we must assume the existence of unit and null elements $1_x$ and $0_x$ with respect to every semigroup $(\Psi_x,\cdot)$. The transport of vacuous information can not generate information and the transport of contradiction can not eliminate contradiction, so we must have $t_y(1_x) = 1_y$ and $t_y(0_x) = 0_y$.

Now, consider the combination of two elements $\phi$ and $\psi$ referring to two different labels or questions $x$ and $y$. What should be the label of the combination $\phi \cdot \psi$? In the domain-free case we assume that the extraction operators induce a join-semilattice $(Q,\leq)$. And if two elements have support $x$ and $y$, then they also have support $x \vee y$. We have seen that supports in the domain-free case correspond to labels in the labeled view. So it seems to make sense to translate this idea into the labeled view. That is, we impose some requirements on the family $T_Q = \{t_x:x \in Q\}$ of transport operations. As in the domain-free case, we may can consider a question $x$ to be coarser than a question $y$, if $t_x = t_xt_y$. Note however that $t_x = t_yt_x$ makes no sense because the application of the two sides results in different labels. However, if $t_x = t_xt_y$ and $t_y = t_yt_x$, we assume that $x = y$. We call this the symmetry condition. So, we define $x \leq y$ iff $t_x = t_xt_y$. Because of the last condition imposed, the relation is antisymmetric, it is reflexive and also transitive, since $t_x = t_xt_y$ and $t_y = t_yt_z$ imply $t_x = t_xt_z$. So $(Q,\leq)$ becomes a partially ordered set \footnote{Without the symmetry condition, we would have a preorder. Most of what follows, especially conditional independence, would also hold under this weaker condition.}. 

In addition, we may force the existence of a join in this order just as in the domain-free case by requiring the following condition on $T_Q$:
\begin{enumerate}
\item For any pair $x,y \in Q$, there exists a $z \in Q$ such that $t_x =t_xt_z$ and $t_y = t_yt_z$.
\item If for an $u \in Q$, if we have $t_x =t_xt_u$ and $t_y = t_yt_u$, then $t_z = t_zt_u$.
\end{enumerate}
Then the element $z$ is the join of $x$ and $y$, we write $z = x \vee y$. Since the join $x \vee y$ represents the combined question of $x$ and $y$, it makes sense to require that $d(\phi \cdot \psi) = d(\phi) \vee d(\psi)$. This is also valid in the labeled algebra derived from domain-free one.

So, in summary, a labeled information algebra corresponds to a signature $(\Psi,\cdot,d;T)$, where $T = \{t_x:x \in Q\}$, $\Psi_x =\{\psi \in \Psi:d(\psi) = x\}$, subject to the following axioms:
\begin{enumerate}
\item \textit{Semigroup:} $(\Psi,\cdot)$ is a commutative semigroup.
\item \textit{Transport:} 
\begin{enumerate}
\item For all pairs $x,y \in Q$ exists a $z = x \vee y \in Q$ such that $t_x = t_xt_z$ and $t_y = t_yt_z$,
\item for all $u \in Q$, $t_x = t_xt_u$ and $t_y = t_yt_u$ imply $t_z = t_zt_u$,
\item for all pairs $x,y \in Q$, $t_x = t_xt_y$ and $t_y = t_yt_x$ jointly imply $x = y$.
\end{enumerate}
\item \textit{Labeling:} $d(\phi \cdot \psi) = d(\phi) \vee d(\psi)$, $d(t_x(\psi)) = x$.
\item \textit{Unit and Null:} For all $x \in Q$ the semigroups $(\Psi_x,\cdot)$ have a unit element $1_x$ and a null element $0_x$ and for all $x,y \in Q$, $t_y(0_x) = 0_y$ and, if $y \leq x$, then $t_y(1_x) = 1_y$.
\item \textit{Idempotency:} For all $\psi \in \Psi$ and for all $y \in Q$, $t_y(\psi) \cdot \psi = t_{y \vee d(\psi)}(\psi)$.
\item \textit{Combination:} For all $\phi,\psi \in \Psi$ and $x \in Q$, if $d(\phi) = x$, then $t_x(\phi \cdot \psi) = \phi \cdot t_x(\psi)$.
\item \textit{Identity:} For all $x \in Q$ if $d(\psi) = x$, then $t_x(\psi) = \psi$.
\end{enumerate}

Here are a few elementary consequences for further reference for labeled information algebras, derived from the axioms.

\begin{lemma} \label{le:ElPropLabInfAlg}
\begin{enumerate}
\item If $d(\phi) = x \leq y$, then $t_y(\phi) = \phi \cdot 1_y$, 
\item $d(\phi) = x$ and $d(\psi) = y$ imply $\phi \cdot \psi =t_{x \vee y}(\phi) \cdot t_{x \vee y}(\psi)$.
\item if $d(\phi) = x$, then $t_y(\phi) = t_y(t_{x \vee y}(\phi))$,
\item if $d(\phi) = x \leq y$, then $t_x(t_y(\phi)) = \phi$,
\item if $d(\psi) = x \leq y$, then for all $z \in Q$, $t_z(\psi) = t_z(t_y(\psi))$,
\item if $z \geq d(\phi),d(\psi)$, then $t_z(\phi \cdot \psi) = t_z(\phi) \cdot t_z(\psi)$.
\item if $d(\phi) = x$, then $\phi \cdot 0_y = 0_{x \vee y}$.
\end{enumerate}
\end{lemma}

\begin{proof}
We use the axioms above in the proof without explicit reference to them. So, for 1.) we have
\begin{eqnarray*}
t_y(\phi) = t_y(\phi) \cdot 1_y = t_y(\phi \cdot 1_y) = \phi \cdot 1_y.
\end{eqnarray*}
since $d(\phi \cdot 1_y) = x \vee y = y$. In particular, we have $t_y(1_x) = t_y(1_x) \cdot 1_y = 1_y$ (by idempotency) if $x \leq y$. Further by item 1 just proved,
\begin{eqnarray*}
\phi \cdot \psi = t_{x \vee y}(\phi \cdot \psi) = t_{x \vee y}(\phi \cdot (\psi \cdot 1_{x \vee y})) = t_{x \vee y}(\phi) \cdot t_{x \vee y}(\psi).
\end{eqnarray*}
Note that this implies in particular $1_x \cdot 1_y = t_{x \vee y}(1_x) \cdot t_{x \vee y}(1_y) = 1_{x \vee y} \cdot 1_{x \vee y} = 1_{x \vee y}$.
Then, further, if $d(\phi) = x$,
\begin{eqnarray*}
t_y(t_{x \vee y}(\phi)) = t_y(\phi \cdot 1_{x \vee y}) = t_y(\phi \cdot 1_x \cdot 1_y) = t_y(\phi \cdot 1_y) = t_y(\phi) \cdot 1_y = t_y(\phi).
\end{eqnarray*}
This is 3.). Still using 1.) we have, assuming $d(\phi) = x \leq y$,
\begin{eqnarray*}
t_x(t_y(\phi)) = t_x(\phi \cdot 1_y) = \phi \cdot t_x(1_y) = \phi \cdot 1_x = \phi,
\end{eqnarray*}
hence item 4.). For 5.) assume first $y \leq z$. Then $t_z(t_y(\psi)) = t_z(1_y \cdot \psi) = 1_z \cdot 1_y \cdot \psi = 1_z \cdot \psi = t_z(\psi)$. Then using this result, $t_z = t_zt_{x \vee z} = t_zt_{y \vee z}$ and $t_{x \vee z} = t_{x \vee z}t_{y \vee z}$, since $x \leq y \leq y \vee z$,
\begin{eqnarray*}
\lefteqn{t_z(t_y(\psi)) =  t_z(t_{y \vee z}(t_y(\psi))) = t_z(t_{y \vee z}(\psi))  } \\
&&= t_z(t_{x \vee z}(t_{y \vee z}(\psi)) = t_z(t_{x \vee z}(\psi)) = t_z(\psi).
\end{eqnarray*}
So, we have 5. Next we have, if $z \geq d(\phi),d(\psi)$,
\begin{eqnarray*}
t_z(\phi \cdot \psi) = \phi \cdot \psi \cdot 1_z = (\phi \cdot 1_z) \cdot (\psi \cdot 1_z) = t_z(\phi) \cdot t_z(\psi).
\end{eqnarray*}
and thus 6.) holds. Finally if $d(\phi) = x$, using  2.),
\begin{eqnarray*}
\phi \cdot 0_y = t_{x \vee y}(\phi) \cdot t_{x \vee y}(0_y) = t_{x \vee y}(\phi) \cdot 0_{x \vee y} = 0_{x \vee y},
\end{eqnarray*}
and this is 7.) and concludes the proof.
\end{proof} 

As a corollary we add the following important properties of unit and null elements, properties we shall use often without reference to this lemma.

\begin{lemma} \label{le:UnitNull}
For all $x,y \in Q$
\begin{enumerate}
\item $1_x \cdot 1_y = 1_{x \vee y}$,
\item $t_y(1_x) = 1_y$,
\item if $d(\psi) = x$, then $t_y(\psi) = 0_y$ implies $\psi = 0_x$,
\item $0_x \cdot 0_y = 0_{x \vee y}$.
\end{enumerate}
\end{lemma}

\begin{proof}
Item 1 is proved in the previous lemma, and item 2 for $x \leq y$. In the general case we have $t_y(1_x) = t_y(1_x) \cdot 1_y = t_y(1_x \cdot 1_y) = t_y(1_{x \vee y}) = 1_y$. Then, if $y \geq x = d(\psi)$, if $t_y(\psi) = 0$ we have $\psi = t_x(\psi) = t_x(t_y(\psi)) = t_x(0_y) = 0_x$. If $y \leq x$, then $\psi = t_y(\psi) \cdot \psi = t_x(t_y(\psi)) \cdot \psi = t_x(0_y) \cdot \psi = 0_x \cdot \psi = 0_x$. In the general case, $0_y = t_y(\psi) = t_y(t_{x \vee y}(\psi))$, hence $t_{x \vee y}(\psi) = 0_{x \vee y}$, since $y \leq x \vee y$. But then $\psi = t_x(\psi) = t_x(t_{x \vee y}(\psi)) = t_x(0_{x \vee y}) = 0_x$. The last item is a direct consequence of 6.) of the previous lemma.
\end{proof}

We remark, that we may introduce in $Q$ a relation $x \bot y \vert z$ of conditional independence, just as in the domain-free case, by the conditions (\ref{eq:DEfLabCondIndep}) on the transport operation. So we may define $x \bot y \vert z$ if
\begin{eqnarray*}
t_{x \vee z}t_{y \vee z} &=& t_{x \vee z}t_zt_{y \vee z} = t_{x \vee z}t_z, \\
t_{y \vee z}t_{x \vee z} &=& t_{y \vee z}t_zt_{x \vee z} = t_{y \vee z}t_z.
\end{eqnarray*}
Again, this relation $x \bot y \vert z$ defines a q-separoid. C1, C2 and C4 are obvious. For C3 note that, since $u \leq y$ implies $z \leq u \vee z \leq y \vee z$, that by item 5 of Lemma \ref{le:ElPropLabInfAlg}, using $x \bot y \vert z$, $t_{x \vee z}t_{u \vee z} = t_{x \vee z}t_{y \vee z}t_{u \vee z} = t_{x \vee z}t_zt_{u \vee z} = t_{x \vee z}t_z$ and $t_{u \vee z}t_{x \vee z} = t_{u \vee z}t_{y \vee z}t_{x \vee z} = t_{u \vee z}t_{y \vee z}t_z = t_{u \vee z}t_z$, that is $x \bot u \vert z$, hence C3.

We remark that there is the equivalent of Theorem \ref{th:CombExtrProp} in Section \ref{sec:StructOfQuest}.

\begin{theorem} \label{the:CombExtrProplab}
If $x \bot y \vert z$, then
\begin{enumerate}
\item if $d(\phi) = x$, $t_y(\phi) = t_y(t_z(\phi))$,
\item if $d(\phi) = x$ and $d(\psi) = y$, then $t_z(\phi \cdot \psi) = t_z(\phi) \cdot t_z(\psi)$.
\end{enumerate}
\end{theorem}

\begin{proof}
Since $x \bot y \vert z$ we have $t_{y \vee t}(t_z(\phi)) = t_{y \vee z}(\phi)$ and since $y \leq y \vee z$ further $t_y(\phi) = t_y(t_{y \vee z}(\phi)) = t_y(t_{y \vee z}(t_z(\phi))) = t_y(t_z(\phi))$. And $x \bot y \vee z$ implies $x \bot y \vee z \vert z$, so that by item 1.) $t_{y \vee z}(\phi \cdot \psi) = t_{y \vee z}(\phi \cdot \psi) \cdot 1_{y \vee z} = t_{y \vee z}(\phi \cdot (\psi \cdot 1_{y \vee z})) = t_{y \vee z}(\phi) \cdot (\psi \cdot 1_{y \vee z}) = t_{y \vee z}(t_z(\phi)) \cdot (\psi \cdot 1_{y \vee z}) = t_{y \vee z}(t_z\phi)) \cdot t_{y \vee z}(\psi) = t_{y \vee z}(t_z(\phi \cdot \psi))$. From this we obtain
\begin{eqnarray*}
t_z(\phi \cdot \psi) = t_z(t_{y \vee z}(\phi \cdot \psi)) = t_z(t_{y \vee z}(t_z((\phi) \cdot \psi)))= t_z(t_z(\phi) \cdot \psi) = t_z(\phi) \cdot t_z(\psi).
\end{eqnarray*}
This completes the proof.
\end{proof}

In summary, we may then characterize labeled information algebras also in the following way:

\begin{enumerate}
\item \textit{Semigroup:} $(\Psi,\cdot)$ is a commutative semigroup.
\item \textit{Q-separoid:} $(Q,\leq,\bot)$ is a q-aeparoid.
\item \textit{Labeling:} $d(\phi \cdot \psi) = d(\phi) \vee d(\psi)$, $d(t_x(\psi)) = x$.
\item \textit{Unit and Null:} For all $x \in Q$ the semigroups $(\Psi_x,\cdot)$ have a unit element $1_x$ and a null element $0_x$ and for all $x,y \in Q$, $t_y(0_x) = 0_y$ and for $x \leq y$, $t_x(1_y) = 1_x$.
\item \textit{Idempotency:} For all $\psi \in \Psi$ and for all $y \in Q$, $t_y(\psi) \cdot \psi = t_{y \vee d(\psi)}(\psi)$.
\item \textit{Combination:} For all $\phi,\psi \in \Psi$ and $x \in Q$, $t_x(t_x(\phi) \cdot \psi) = t_x(\phi) \cdot t_x(\psi)$.
\item \textit{Independence:} If $x \bot y \vert z$, then for all $\psi \in \Psi$, $t_y(t_x(\psi)) = t_y(t_z(t_x(\psi)))$.
\item \textit{Identity:} For all $x \in Q$ if $d(\psi) = x$, then $t_x(\psi) = \psi$.
\end{enumerate}

In Section \ref{subsec:Dual} we shall show that just as a labeled algebra may be obtained from a domain-free one, conversely, from a labeled information algebra, a domain-free algebra may be derived. But before, we examine the case of a commutative algebra.

%%%%%%%%%%%%%%%%%%%%%%%%%%%%%%%%%%%%%%%%%%%%%%%%%%%%%%%%%%%%

\section{Commutative labeled information algebras} \label{subsec:CommLabInfAlg}

As in the domain-free case, we obtain commutative labeled information algebras from an information algebra, if we assume that $(Q,\leq)$ is a lattice and $x \bot_L y \vert z$ iff $(x \vee z) \wedge (y \vee z) = z$. Then we have in particular $x \bot_L y \vert x \wedge y$. Therefore by the Combination axiom, if $d(\phi) = x$, then $t_x(\phi \cdot \psi) = \phi \cdot t_x(\psi)$. Now, if $d(\psi) = y$, then $t_x(\psi) = t_x(t_{x \wedge y}(\psi))$, hence 
\begin{eqnarray*}
t_x(\phi \cdot \psi) = \phi \cdot t_x(t_{x \wedge y}(\psi)) = \phi \cdot t_{x \wedge y}(\psi).
\end{eqnarray*}
This will be the new form of the Combination axiom in the commutative case. In addition, it turns out, that we need not to consider the transport operation $t_x$ in its general forma but only in the limited form of a \textit{projection}, that is,
\begin{eqnarray*}
\textrm{for}\ x \leq d(\psi), \pi_x(\psi) =: t_x(\psi).
\end{eqnarray*}
Note then that, if $x \leq y \leq d(\psi) = z$, since then $z \bot_L y \vert y$ implies $z \bot_L x \vert y$, we have
\begin{eqnarray*}
\pi_x(\psi) = \pi_x(\pi_y(\psi)).
\end{eqnarray*}
This property of stepwise projection will be another axiom for commutative labeled information algebras. Since in this view, we do no more dispose of general transport operations, but only of the partial operation of projection, we can not derive an order in $Q$, but have to assume a priori that $(Q,\leq)$ is a lattice. In summary, we require for the signature $( \Psi,\cdot,\Pi)$, where $\Pi = \{\pi_x:x \in Q\}$, and $\pi : \Psi \times Q \rightarrow \Psi$ is defined for $x \leq d(\psi)$, $(\psi,x) \mapsto \pi_x(\psi)$, the following axioms, where as before $ \Psi_x = \{\psi \in \Psi:d(\psi) = x\}$,
\begin{enumerate}
\item \textit{Semigroup:} $( \Psi,\cdot)$ is a commutative semigroup.
\item \textit{Lattice:} $(Q,\leq)$ is a lattice.
\item \textit{Labeling:} $d(\phi \cdot \psi) = d(\phi) \vee d(\psi)$, $d(\pi_y(\psi)) = y$ if $y \leq d(\psi)$.
\item \textit{Unit and Null:} For all $x \in Q$, the semigroups $(\Psi_x,\cdot)$ have a unit element $1_x$ and a null element $0_x$, and for all $y \leq x \in Q$, if $d(\psi) = x$, $\pi_y(\psi) = 0_y$ if and only if $\psi = 0_x$ , $\pi_y(1_x) = 1_y$ and $1_x \cdot 1_y = 1_{x \vee y}$.
\item \textit{Projection:} If $x \leq y \leq z = d(\psi)$, then $\pi_x(\psi) = \pi_x(\pi_y(\psi))$.
\item \textit{Combination:} If $d(\phi) = x$ and $d(\psi) = y$, then $\pi_x(\phi \cdot \psi) = \phi \cdot \pi_{x \wedge y}(\psi)$.
\item \textit{Idempotency:} If $x \leq d(\psi)$, then $\pi_x(\psi) \cdot \psi = \psi$.
\item \textit{Identity:} If $x = d(\psi)$, then $\pi_x(\psi) = \psi$.
\end{enumerate}

Then $( \Psi,\cdot,\Pi)$ is called a commutative labeled information algebra. Note that projection operators can not commute because of the Labeling axiom. But we shall show in the next section, that nonetheless there is a commutativity in a more general sense. This is an extension the axioms proposed in \cite{shenoyshafer90} for valuation algebras for the multivariate case. However in valuation algebra idempotency is not required, and the existence of null and unit elements are not necessarily assumed. Also the condition that $\pi_y(1_x) = 1_y$, called \textit{stability}, may not hold, even if the existence of unit elements are assumed, for instance in Bayesian networks. We refer to \cite{kohlas03} and Section \ref{sec:nonidempotent} for details about these issues. There are also various alternative axiomatic systems for valuation algebras, especially in the multivariate case, \cite{kohlas03}. In our case stability is essential, as we shall see. There is a strengthening of the Combination axiom in a special case.

\begin{lemma} \label{le:StCombAx}
If $(Q,\leq)$ is a distributive lattice, $d(\phi) = x$, $d(\psi) = y$ and $x \leq z \leq x \vee y$, then
\begin{eqnarray*}
\pi_z(\phi \cdot \psi) = \phi \cdot \pi_{y \wedge z}.
\end{eqnarray*}
\end{lemma}

\begin{proof}
We have by the Labeling axiom $\phi \cdot \psi = \phi \cdot \psi \cdot 1_{x \vee y} = \phi \cdot \psi \cdot 1_z \cdot 1_{x \vee y} = \phi \cdot \psi \cdot 1_z$. Therefore we obtain using the Combination axiom and by distributivity, $x \vee (y \wedge z) = (x \vee y) \wedge (x \vee z) = z$,
\begin{eqnarray*}
\pi_z(\phi \cdot \psi)= \pi_z((\phi \cdot 1_z) \cdot \psi) = (\phi \cdot 1_z) \cdot \pi_{y \wedge z}(\psi) = \phi \cdot \pi_{y \wedge z}(\psi).
\end{eqnarray*}
This concludes the proof.
\end{proof}

We remark that a general commutative domain-free information algebra as defined in Section \ref{sec:CommInfAlg}, has no associated labeled algebra as derived in the previous section. The reason is that for a commutative domain-free information algebra $(Q,\leq)$ is not necessarily a lattice, and then the Labeling axiom can not be valid.

We show now that from a commutative labeled information algebra a labeled information algebra can be reconstructed. This is achieved by recovering the transport operation and it is is done in two steps. First, in addition to the projection operation $\pi_y(\psi)$ defined for labels $y \leq d(\psi)$, we introduce an operation of vacuous extension
$e_yx$ defined for labels $y \geq d(\psi)$,
\begin{eqnarray*}
e_y(\psi) =\psi \cdot 1_y, \textrm{ if}\ y \geq d(\psi).
\end{eqnarray*}
We have, if $d(\psi) = x$, $\pi_x(e_y(\psi)) = \pi_x(\psi \cdot 1_y) = \psi \cdot \pi_x(1_y) = \psi \cdot 1_x = \psi$, hence the extension is indeed vacuous, does not add any information. Note also that here stability is essential. We have also, if $d(\psi) = x \leq y \leq z$ that $e_z(\psi) = \psi \cdot 1_z = \psi \cdot 1_y \cdot 1_z = e_z(e_y(\psi))$. Vacuous extension as projection can be done stepwise. We remark further that if $x \leq y$, then $e_y(0_x) = 0_y$ by the Null axiom, since $\pi_x(e_y(0_x)) = 0_x$. Also, if $d(\psi) = x$, then $\psi \cdot 0_y = (\psi  \cdot 1_{x \vee y}) \cdot (0_y \cdot 1_{x \vee y}) = e_{x \vee y}(\psi) \cdot e_{x \vee y}(0_y) = e_{x \vee y}(\psi) \cdot 0_{x \vee y} = 0_{x \vee y}$.

Then we define the transport operation for any label $x$ as
\begin{eqnarray*}
t_x(\psi) = \pi_x(e_{x \vee y}(\psi)), \textrm{ if}\ d(\psi) = y.
\end{eqnarray*}
Obviously we have $t_y(\psi) = \pi_y(\psi)$, if $y \leq d(\psi)$ and $t_y(\psi) = e_y(\psi)$, if $y \geq d(\psi)$.  Note also that if $d(\psi) = x$ and $x \vee y \leq z$, then 
\begin{eqnarray*}
t_y(\psi) = \pi_y(e_z(\psi)).
\end{eqnarray*}
In fact, $t_y(\psi) = \pi_y(e_{x \vee y}(\psi)) = \pi_x(\pi_{x \vee y}(e_z(e_{x \vee y}(\psi)))) = \pi_x(e_z(\psi))$.

We now have to show that this transport operation satisfies the axioms stipulated for a labeled information algebra, see Section \ref{subsec:LabInfAlg}. Since $d(t_x(\psi)) = d(\pi_x(e_{x \vee y}(\psi)) = x$, we have the Labeling axiom. The Null and Unit axiom follows from $t_y(0_x) = \pi_y(e_{x \vee y}(0_x)) = \pi_y(0_{x \vee y}) = 0_y$ and $t_x(1_y) = \pi_x(1_y) = 1_x$. 
Further, if $d(\psi) = x$, then $t_{x \vee y}\psi) = \psi \cdot 1_{x \vee y} = \psi \cdot 1_{x \vee y} \cdot \pi_y(\psi \cdot 1_{x \vee y}) = \psi \cdot 1_{x \vee y} \cdot t_y(\psi) = \psi \cdot t_y(\psi)$, so Idempotency is valid. Next, assume $d(\phi) = x$ and $d(\psi) = y$ so that $t_x(\phi \cdot \psi) = \pi_x(\phi \cdot \psi \cdot 1_{x \vee y}) = \phi \cdot \pi_x(e_{x \vee y}(\psi)) =  \phi \cdot t_x(\psi)$. This is the Combination axiom. In order to verify the Independence axiom recall that $x \bot_L y \vert z$ if and only if $x \vee z \bot_L y \vee z \vert z$. Assume then $d(\psi) = x \vee z$ so that 
\begin{eqnarray*}
t_{y \vee z}(\psi) &=& \pi_{y \vee z}(\psi \cdot 1_{x \vee y \vee z}) = \pi_{y \vee z}(\psi \cdot 1_{y \vee z}) = \pi_{(x \vee z) \wedge (y \vee z)}(\psi) \cdot 1_{y \vee z} \\
&=& \pi_z(\psi) \cdot 1_{y \vee z} = t_{y \vee z}(t_z(\psi)).
\end{eqnarray*}
 Finally, Identiy is obvious. So, we have reconstructed the labeled algebra from the commutative labeled information algebra.

%%%%%%%%%%%%%%%%%%%%%%%%%%%%%%%%%%%%%%%%%%%%%%%%%%%%%%%%%%%%

\section{Duality} \label{subsec:Dual}

As we have seen, from a domain-free information algebra, we may derive a labeled one. This goes also the other way round, which means that the two versions of information algebra are equivalent. So, let $(\Psi,\cdot,T)$ with $T = \{t_x:x \in Q\}$ be a labeled information algebra. Two elements $\phi$ and $\psi$, whatever their labels are, encode the same information if 
\begin{eqnarray*}
t_z(\phi) = t_z(\psi) \textrm{ for all}\ z \in Q.
\end{eqnarray*}
We write then $\phi \equiv_\sigma \psi$. This is clearly an equivalence relation in $\Psi$. If $x$ and $y$ are the labels of $\phi$ and $\psi$ respectively, then $\phi \equiv_\sigma \psi$ imply for $z = x \vee y$,
\begin{eqnarray*}
t_{x \vee y}(\phi) = t_{y \vee x}(\psi)
\end{eqnarray*}
and also
\begin{eqnarray*}
t_y(\phi) = \psi, \quad \phi = t_x(\psi).
\end{eqnarray*}
The former condition $t_{x \vee y}(\phi) = t_{y \vee x}(\psi)$ is in fact equivalent to $t_z(\phi) = t_z(\psi)$ for all $z \in Q$. In fact, $t_{x \vee y}(\phi) = t_{x \vee y}(\psi)$ implies $t_{x \vee y \vee z}(\phi) = t_{x \vee y\vee z}(\psi)$. Then we have also $t_z(\phi) = t_z(t_{x \vee y \vee z}(\phi))$ and similarly $t_z(\psi) = t_z(t_{x \vee y \vee z}(\psi))$, and therefore $t_z(\phi) = t_z(\psi)$.

Now, the relation $\equiv_\sigma$ is not only an equivalence relation, but also a congruence relative to combination and transport. This means that for any pair $\phi$ and $\psi$ in $\Psi$ and $y \in Q$, $\phi \equiv_\sigma \psi$ implies $t_y(\phi) \equiv_\sigma t_y(\psi)$ and $\phi \cdot \chi \equiv_\sigma \psi \cdot \chi$ for any other element $\chi \in \Psi$.

\begin{proposition} \label{prop:SigmaCongr}
The relation $\equiv_\sigma$ is a congruence in the labeled information algebra $\Psi$. 
\end{proposition}

\begin{proof}
Assume $\phi \equiv_\sigma \psi$ and let $d(\phi) = x$, $d(\psi) = y$. Consider any element $\chi$ with $d(\chi) = z$. Then we have, by Lemma \ref{le:ElPropLabInfAlg}, since $x \vee y \vee z \geq d(\phi) \vee d(\chi) = x \vee z$,
\begin{eqnarray*}
t_{x \vee y \vee z}(\phi \cdot \chi) = t_{x \vee y \vee z}(\phi) \cdot t_{x \vee y \vee z}(\chi)
\end{eqnarray*}
and in the same way we obtain
\begin{eqnarray*}
t_{x \vee y \vee z}(\psi \cdot \chi) = t_{x \vee y \vee z}(\psi) \cdot t_{x \vee y \vee z}(\chi).
\end{eqnarray*}
Then $\phi \equiv_\sigma \psi$ implies $t_{x \vee y \vee z}(\phi) = t_{x \vee y \vee z}(\psi)$, and so $t_{x \vee y \vee z}(\phi \cdot \chi) = t_{x \vee y \vee z}(\psi \cdot \chi)$ and this means that $\phi \cdot \chi \equiv_\sigma \psi \cdot \chi$. And $\phi \equiv_\sigma \psi$ implies also $t_y(\phi) = t_y(\psi)$, hence $t_y(\phi) \equiv_\sigma t_y(\psi)$.
\end{proof}

Based on this result, we consider equivalence classes $[\phi]$ of the congruence $\equiv_\sigma$ and define the operations of combination and extraction in the set $\Psi/\sigma$ of these classes,
\begin{enumerate}
\item \textit{Combination:} $[\phi] \cdot [\psi] = [\phi \cdot \psi]$,
\item \textit{Extraction:} $\epsilon_x([\phi]) = [t_x(\phi)]$.
\end{enumerate}
These operations are well defined because $\equiv_\sigma$ is a congruence relative to combination and transport in $\Psi$. It is obvious that $(\Psi/\sigma,\cdot)$ is a commutative semigroup, the class $[0_x]$ is the null element and the class $[1_x]$, for any $x \in Q$, is the unit of combination in $\Psi/\sigma$. In addition, if $d(\phi) = x$, then $\epsilon_x([\phi]) = [\phi]$, so that in particular $\epsilon_y([1_x]) =[1_x]$ and $\epsilon_x([\phi]) = [0_x]$ if and only if $[\phi] = [0_x]$. This shows also that the support axiom is satisfied

The following proposition shows that the operator $\epsilon_x$ is an existential quantiffier with respect to $/\sigma$.

\begin{proposition} \label{prop:QuotExQuant}
Let $(\Psi,\cdot,T)$ be a labeled information algebra. Then in $\Psi/\sigma$ the following holds for all $x \in Q$:
\begin{enumerate}
\item $\epsilon_x([0_x]) = [0_x]$,
\item $\epsilon_x([\phi]) \cdot [\phi] = [\phi]$,
\item $\epsilon_x(\epsilon_x([\phi] \cdot [\psi]) = \epsilon_x([\phi]) \cdot \epsilon_x([\psi]$.
\end{enumerate}
\end{proposition}

\begin{proof}
The first item has been stated above. For the second one, we have $\epsilon_x([\phi]) \cdot [\phi] = [t_x(\phi) \cdot \phi] = [t_{x \vee y}(\phi)]$ if $d(\phi) = y$. Since $t_{x \vee y}(\phi) \equiv_\sigma \phi$, this equals $[\phi]$. The third item follows from the definition of combination and extraction and the Combination axiom for labeled algebras, $\epsilon_x(\epsilon_x([\phi] \cdot [\psi]) = [t_x(t_x(\phi) \cdot \psi)] = [t_x(\phi) \cdot t_x(\psi)] = \epsilon_x([\phi]) \cdot \epsilon_x([\psi]$.
\end{proof}

Al this together shows that $\Psi/\sigma$ is a domain-free information algebra.

\begin{theorem}
If $(\Psi,\cdot,T)$ is a labeled information algebra, then $(\Psi/\sigma,\cdot,[0_x].[1_x];E)$ with $E = \{\epsilon_x:x \in Q\}$ is a domain-free information algebra.
\end{theorem}

So, from a domain-free information algebra $\mathbf{D} = (\phi,\cdot,0,1;E)$ we may obtain a labeled information algebra $\mathbf{L}\mathbf{D} = (\Psi,\cdot,T)$, where $\Psi$ is the set of pairs $(\phi,x)$ such that $\epsilon_x(\phi) =x$, and vice versa from a labeled algebra $\mathbf{L} = (\Psi,\cdot,T)$, we derive a domain-free one $\mathbf{D}\mathbf{L} = (\Psi/\sigma,\cdot,[0],[1];E)$. Now, in this way from a derived labeled algebra $\mathbf{L}\mathbf{D}$ we may retrieve again a domain-free one $\mathbf{D}\mathbf{L}\mathbf{D}$, and similarly, from a derived domain-free algebra $\mathbf{D}\mathbf{L}$ we may retrieve again a labeled one $\mathbf{L}\mathbf{D}\mathbf{L}$. It may be conjectured that the algebras $\mathbf{D}$ and $\mathbf{D}\mathbf{L}\mathbf{D}$ as well as $\mathbf{L}$ and $\mathbf{L}\mathbf{D}\mathbf{L}$ are in some sense the same. This will be discussed in the next section.

%%%%%%%%%%%%%%%%%%%%%%%%%%%%%%%%%%%%%%%%%%%%%%%%%%%%%%%%%%%%

\chapter{Some algebraic notions} \label{sec:AlgNotions}

We define in this section the concepts of homomorphism, embedding and isomophisms between two information algebras, as well as the concept of a subalgebra of an information algebra.

Let $(\Phi_1,\cdot,0,1;E_1)$ and $(\Phi_2,\cdot,0,1;E_2)$ be two domain-free information, where $E_i = \{\epsilon^i_x:x \in Q\}$ for $i = 1,2$ are the sets of extraction operators in the two algebras, based on \textit{identical} sets of questions. We do not index the combination operations and the null and unity elements, it will always be clear form the context, which algebra is concerned. 

\begin{definition} \textbf{Homomorphism (domain-free):}
A map $f : \Phi_1 \rightarrow \Phi_2$ is called a domain-free homomorphism, if
\begin{enumerate}
\item $f(\phi \cdot \psi) = f(\phi) \cdot f(\psi)$ for all pairs $\phi,\psi \in \Phi_1$,
\item $f(0) = 0$ and $f(1) = 1$,
\item $f(\epsilon^1_x(\phi)) = \epsilon^2_x(f(\phi))$ for all $\phi \in \Phi_1$ and $x \in Q$.
\end{enumerate}
\end{definition}
If the map $f$ is injective, the homomorphism is called an \textit{embedding}, and if $f$ is bijective, it is called an \textit{isomophim} and the two algebras are called \textit{isomorphic}. Note that the inverse $f^{-1}$ of an isomorphism $f: \Phi \rightarrow \Psi$ is itself an isomorphism $f^{-1} : \Psi \rightarrow \Phi$. This is so, since
\begin{eqnarray*}
f^{-1}(\psi_1 \cdot \psi_2) &=& f^{-1}(f(\phi_1) \cdot f(\phi_2)) = f^{-1}(f (\phi_1 \cdot \phi_2)) = \phi_1\cdot \phi_2 = f^{-1}(\psi_1) \cdot f^{-1}(\psi_2), \\
f^{-1}(0) &=& f^{-1}(f(0)) = 0, f^{-1}(1) = f^{-1}(f(1)) = 1, \\
f^{-1}(\epsilon^2_x(\psi)) &=& f^{-1}(\epsilon^2_x(f(\phi))) = f^{-1}(f(\epsilon^1_x(\phi)))  = \epsilon^1_x(\phi) =
\epsilon^1_x(f^{-1}(\psi)).
\end{eqnarray*}
We do not extend the definition of these concepts to information algebras with different sets of questions.

\begin{definition} \textbf{Subalgebra (domain-free)}:
If $(\Phi,\cdot,0,1,;E)$ is a domain-free information algebra with $E = \{\epsilon_x:x \in Q\}$ and $\Phi'$ a subset of $\Phi$, $Q'$ a subset of $Q$ and $E' = \{\epsilon'_x = \epsilon_x\vert \Phi':x \in Q'\}$, where $\epsilon_x\vert \Phi'$ is the restriction of $\epsilon_x$ to $\Phi'$, such that
\begin{enumerate}
\item $\Phi'$ is closed under combination, $(\Phi',\cdot)$ is a sub-semigroup of $(\Phi,\cdot)$, and $0,1 \in \Phi'$,
\item $\Phi'$ is closed under extraction for $x \in Q'$, that is $\phi \in \Phi'$ and $x \in Q'$ imply $\epsilon'_x(\phi) \in \Phi'$ for all $\epsilon'_x \in E'$
\end{enumerate}
\end{definition}

A subalgebra is still a domain-free information algebra. An example of a subalgebras is given for any $x \in Q$ by the set $\epsilon_x(\Phi) = \{\phi \in \Phi:\epsilon_x(\phi) = \phi\}$ and $Q' = \{y \in Q:y \leq x\}$. Note that $Q'$ is still a q-separoid under the restriction of the relation $x \bot y\vert z$ to $Q'$ if $(Q'.\leq)$ is still a join-semilattice.

The image of $\Phi_1$ under a homomorphism, $(f(\Phi_1),\cdot,0,1;f(E_1))$, where $f(E_1)$ is the set of restrictions of $\epsilon^2_x$ to $f(\Phi_1)$, is a subalgebra of $\Phi_2$ with $Q' = Q$. A homomorphism $f$ preserves order between pieces of information, since $\phi \cdot \psi = \psi$ implies $f(\phi) \cdot f(\psi) = f(\psi)$. It preserves also order between questions in the following sense:  Let $x \leq_1 y$ if $\epsilon^1_x = \epsilon^1_x\epsilon^1_y =  \epsilon^1_y\epsilon^1_x$. Then we have by item 3 of a homomorphism $f$ that $\epsilon^2_x = \epsilon^2_x\epsilon^2_y =  \epsilon^2_y\epsilon^2_x$ \textit{as restricted to} the image of $\Phi_1$, $f(\Phi_1)$. Define $x \leq_2 y$ if $\epsilon^2_x = \epsilon^2_x\epsilon^2_y =  \epsilon^2_y\epsilon^2_x$ \textit{as restricted to} the image of $\Phi_1$, $f(\Phi_1)$, then $x_1 \leq_1 y$ implies $x \leq_2 y$. If $f$ is an isomorphism, $\leq_2$ is the order induced in $\Phi_2$, and then we have $x \leq_1 y$ iff $x \leq_2 y$.

A similar situation we have regarding conditional independence. If $x \bot_1 y \vert z$ is the q-separoid induced by $E_1$ and $x \bot_2 y \vert z$ is the q-separoid induced by $f(E_2)$, then, as for order, $x \bot_1 y \vert z$ implies $x \bot_2 y \vert z$. Note that $x \bot_2 y \vert z$ is not necessarily the same as the conditional independence relation induced by $E_2$.  If $f$ is an isomorphism, then $x \bot_1 y \vert z$ iff $x \bot_2 \vert z$ where the latter is the q-separoid induced by $E_2$, the two relations are identical. Furthermore, for a commutative algebra $\Phi_1$, we have, still by item 3 of a homomorphism $f$, that the subalgebra $f(\Phi_1)$ is also commutative, and if $f$ is an isomorphism, then $\Phi_2$ is commutative too.

For the case of labeled information algebra, we have similar definitions. Let $(\Psi_1,\cdot,T_1)$ and $(_2,\cdot,T_2)$ be two labeled information, based on the \textit{identical} sets of questions, that is $T _1= \{t^1_x:x \in Q\}$ and $T '= \{t^2_x:x \in Q\}$. Again, we do not index the combination operations and the null and unity elements, it will always be clear form the context, which algebra is concerned. 

\begin{definition} \textbf{Homomorphism (labeled):}
A map $f : \Psi_1 \rightarrow  \Psi_2$ is called a labeled homomorphism, if
\begin{enumerate}
\item $f(\phi \cdot \psi) = f(\phi) \cdot f(\psi)$ for all pairs $\phi,\psi \in  \Psi_1$,
\item $f(0_x) = 0_x$ and $f(1_x) = 1_x$ for all $x \in Q$,
\item $f(t^1_x(\phi)) = t^2_x(f(\phi))$ for all $\psi \in  \Psi_1$ and $x \in Q$.
\end{enumerate}
\end{definition}
If the map $f$ is injective, the homomorphism is called an \textit{embedding}, and if $f$ is bijective, it is called an \textit{isomophims} and the two algebras are called \textit{isomorphic}. The concept of a labeled subalgebra is also similar to the one of a domain-free algebra

\begin{definition} \textbf{Subalgebra (labeled)}:
If $(\Psi,\cdot,T)$ is a labeled information algebra, and $\Psi'$ a subset of $\Psi$, $Q'$ a subset of $Q$ and $T' = \{t'_x = t_x\vert  \Psi':x \in Q'\}$, where $t_x\vert  \Psi'$ is the restriction of $t_x$ to $\Psi'$, such that
\begin{enumerate}
\item $Q'$ is closed under joins, $(Q',\leq)$ is a sub-join-semilattice of $(Q,\leq)$,
\item $Q'$ is closed under combination, $( \Psi',\cdot)$ is a sub-semigroup of $( \Psi,\cdot)$, $0_x,1_x \in '$ for all $x \in Q'$,
\item $ \Psi'$ is closed under projection for $x \in Q'$, that is $\psi \in '$ and $x \in Q'$ imply $t'_x(\psi) \in '$ for all $t'_x \in T'$.
\end{enumerate}
\end{definition}

For labeled homomorphisms and subalgebras, similar results hold as for domain-free ones. We do not enter into details.

We examine now the relations between domain-free information algebras $\mathbf{D}$ and $\mathbf{DLD}$ and as well between labeled algebras $\mathbf{L}$ and $\mathbf{LDL}$ (see Section \ref{subsec:Dual}). In the first case we define the following map $f : \mathbf{D} \rightarrow \mathbf{DLD}$,
\begin{eqnarray*}
f(\phi) = [(\phi,x)], \textrm{ if}\ \epsilon_x(\phi) = \phi.
\end{eqnarray*}
Here $[(\phi,x)]$ denotes the equivalence class of the relation $\equiv_\sigma$ in the labeled information algebra $\mathbf{LD}$, see in Section \ref{subsec:Dual}. The map $f$ is well-defined, does not depend on $x$. This is because if $x$ and $y$ are supports of $\phi$, then $t_{x \vee y}(\phi,x) = (\epsilon_{x \vee y}(\phi),x \vee y) = t_{x \vee y}(\phi,y)$ and so $[(\phi,x)] = [(\phi,y)]$.

Similarly, we define a map $g : \mathbf{L} \rightarrow \mathbf{LDL}$,
\begin{eqnarray*}
g(\phi) = ([\phi],x) \textrm{ if}\ d(\phi) = x.
\end{eqnarray*}
Again, $[\phi]$ denotes the equivalence classes of the relation $\equiv_\sigma$, this time in the labeled information algebra $\mathbf{L}$. 

We claim that $f$ and $g$ are domain-free and labeled isomorphisms respectively.

\begin{theorem} \label{th:Duality}
If $\mathbf{D}$ is a domain-free information algebra and $\mathbf{L}$ a labeled information algebra, then $f : \mathbf{D} \rightarrow \mathbf{DLD}$ and $g : \mathbf{L} \rightarrow \mathbf{LDL}$ are domain-free and labeled isomorphisms respectively and correspondingly $\mathbf{D}$ and $\mathbf{DLD}$ as well as $\mathbf{L}$ and $\mathbf{LDL}$ are isomorphic domain-free and labeled information algebras respectively.
\end{theorem}

\begin{proof}
We start with the domain-free case. Consider two elements $\phi$ and $\psi$ from the domain-free algebra $\mathbf{D}$ with support $x$ and $y$ respectively. Then by the definition of $f$ and combination in the different algebras concerned,
\begin{eqnarray*}
\lefteqn{f(\phi \cdot \psi) = [(\phi \cdot \psi,x \vee y)] = [(\phi,x) \cdot (\psi,y)] } \\
&&= [(\phi,x)] \cdot [(\psi,y)] = f(\phi) \cdot f(\psi).
\end{eqnarray*}
Further, $f(0) = [(0,x)]$ and $f(1) = [(1,x)]$ are clearly the null and unit elements of $\mathbf{DLD}$. Next, assume that $y$ is a support of $\phi$. If we denote extraction both in $\mathbf{D}$ and $\mathbf{DLD}$ by $\epsilon_x$, then we have, again by the definition of $f$ and extraction in the different algebras,
\begin{eqnarray*}
f(\epsilon_x(\phi)) = [(\epsilon_x(\phi),x)] = [t_x(\phi,y)] = \epsilon_x([(\phi,y]) = \epsilon_x(f(\phi)).
\end{eqnarray*}
This shows that $f$ is a domain-free homomorphism. Now, if $[(\phi,x)]$ is an element of $\mathbf{DLD}$, then $x$ is a support of $\phi$ and $f$ maps $\phi$ to 
$[(\phi,x)]$, so the map $f$ is surjective. Finally, if $[(\phi,x)] = [(\psi,y)]$, then $t_{x \vee y}(\phi,x) = t_{x \vee y}(\psi)$ and  $x$ and $y$ are supports of $\phi$ and $\psi$ respectively and so $x \vee y$ is a support of both. Therefore, we have $(\phi,x \vee y) = (\epsilon_x(\phi),x \vee y) = (\epsilon_{x \vee y}(\phi),x \vee y) = t_{x \vee y}(\phi,x) = t_{x \vee y}(\psi,y) = (\epsilon_{x \vee y}(\psi),x \vee y) = (\epsilon_y(\psi),x \vee y) = (\psi,x \vee y)$, hence $\phi = \psi$. The map $f$ is injective, hence bijective and therefore an isomorphism.

For the labeled case, we proceed similarly. Consider elements $\phi$ and $\psi$ from $\mathbf{L}$ witth $d(\phi) = x$ and $d(\psi) = y$. Then
\begin{eqnarray*}
\lefteqn{g(\phi \cdot \psi) = ([\phi \cdot \psi],x \vee y) = ([\phi] \cdot [\psi],x \vee y) } \\
&&= ([\phi],x) \cdot ([\psi],y) = g(\phi) \cdot g(\psi).
\end{eqnarray*}
Since $f(0_x) = ([0_x],x)$ and $f(1_x) = ([1_x],x)$, null and unit element are preserved by $f$. Assume further $d(\phi) = x$. Then, if we denote transport both in $\mathbf{L}$ and $\mathbf{LDL}$ by $t_x$,
\begin{eqnarray*}
g(t_x(\phi)) = ([t_x(\phi)],x) = (\epsilon_x([\phi]),x) = t_x([\phi],y) = t_x(g(\phi)).
\end{eqnarray*}
So $f$ is a homomorphism. Any element $([\phi],x)$ in $\mathbf{LDL}$ is the image $f(\phi)$ of some element $\phi$ from $\mathbf{L}$. So $f$ is surjective. If $([\phi],x) = ([\psi],y)$, then $x = y$ and $[\phi] = [\psi]$. By definition of the map $g$, we have $d(\phi) = x = y$ and $d(\psi) = y = x$. But this implies $\phi = \psi$. The map $f$ is injective, hence bijective and therefore a labeled isomorphism. 
\end{proof}

According to this theorem, labeled and domain-free information algebras are dual in a technical sense given by the theorem. We may freely pass from labeled to domain-free algebras and back. The two kinds of algebras are the two sides of the same coin. 

As an application let us consider order in $\mathbf{D}$, $\mathbf{LD}$ and $\mathbf{DLD}$. We have $x \leq_{\mathbf{D}} y$ if $\epsilon_x = \epsilon_x\epsilon_y = \epsilon_y\epsilon_x$. Then, in $\mathbf{LD}$ we have $t_x = t_xt_y$, since $t_x(t_y(\phi,z)) = t_x(\epsilon_y(\phi),y) = (\epsilon_x(\phi),x)) = t_x(\phi,z)$. But this means $x \leq_\mathbf{LD}$ and so $x \leq_\mathbf{D} y$ implies $x \leq_\mathbf{LD} y$. Further $t_x = t_xt_y$ implies $\epsilon_x = \epsilon_x\epsilon_y = \epsilon_y\epsilon_x$ in $\mathbf{DLD}$, since 
\begin{eqnarray*}
t_x(\phi) \equiv_\sigma t_y(t_x(\phi)) \equiv_\sigma t_x(t_y(\phi))
\end{eqnarray*}
if $x \leq_{\mathbf{DLD}} y$. But $\mathbf{D}$ is isomorphic to $\mathbf{DLD}$ so that $x \leq_\mathbf{DLD} y$ implies $x \leq_\mathbf{D} y$, so that the three order relations $\leq_\mathbf{D}$, $\leq_\mathbf{LD}$ and $\leq_\mathbf{DLD}$ are all identical. In the same way we conclude that $\leq_\mathbf{L}$, $\leq_\mathbf{DL}$ and $\leq_\mathbf{LDL}$ are all identical. The same holds for the conditional independence relations $x \bot y \vert z$ in $Q$, induced by the different domain-free and labeled information algebras.

\chapter{Extensions} \label{sec:ExtInfAlg}

%%%%%%%%%%%%%%%%%%%%%%%%%%%%%%%%%%%%%%%%%%%%%%%%%%%%%%%%%%%

\section{Ideal extension} \label{subsec:IdealExt}

In this and the next sections we construct new information algebras derived from a domain-free information algebra $\Phi$, in particular also set algebras in the technical sense defined in Section \ref{sec:SetAlg}. The main result is that information algebras may be embedded into different algebras of sets, that is algebras whose elements are subsets of some universe. But these algebras of sets may, but need not necessarily, be set algebras in the sense of \ref{sec:SetAlg}. This will be the case for the construction presented in  the present subsection.

Consider a domain-free information algebra $(\Phi,\cdot,0,1;E)$ with $E = \{\epsilon_x:x \in Q\}$. Note that information order $\psi \leq \phi$ can also be interpreted as $\psi$ is implied by $\phi$. If $\phi$ can be assured, then so can $\psi$. Now, instead of looking at a particular piece information we consider consistent and complete subsets $I \subseteq \Phi$ of pieces of information. This means that for any element $\phi \in I$, all elements implied by it, that is all $\psi \leq \phi$ belong to $I$, and if $\phi$ and $\psi$ belong to $I$, then $\phi \cdot \psi$ belongs to $I$ to. This says that $I$ is an \textit{ideal} of the join-semilattice $(\Phi,\leq)$, or more formally
\begin{enumerate}
\item $\psi \leq \phi$ and $\phi \in I$ imply $\psi \in I$,
\item $\phi,\psi \in I$ imply $\phi \cdot \psi = \phi \vee \psi \in I$.
\end{enumerate}
The down-set $\downarrow\!\phi$ of all elements less informative than or implied by $\phi$, $\downarrow\!\phi =\{\psi \in \Phi:\psi \leq \phi\}$, forms an ideal, a \textit{principal ideal}. The unit belongs to all ideals and if $\phi$ is in an ideal, then so is $\epsilon_x(\phi)$ for all $x \in Q$. The null element belongs only to the improper ideal $\Phi$. All other ideals, different from $\Phi$, are called \textit{proper ideals}.

An ideal can also be seen as a piece of information. In fact, we may extend the operations of combination and extraction from the algebra $\Phi$ to its set of ideals $I_\Phi$:
\begin{enumerate}
\item \textit{Combination:}
\begin{eqnarray*}
I_1 \cdot I_2 =: \{\phi \in \Phi:\exists \phi_1 \in I_1,\phi_2 \in I_2 \textrm{ such that}\ \phi \leq \phi_1 \cdot \Phi_2\},
\end{eqnarray*}
\item \textit{Extraction:}
\begin{eqnarray*}
\epsilon_x(I) =: \{\phi \in \Phi:\exists \psi \in I \textrm{ such that}\ \phi \leq \epsilon_x(\psi)\}.
\end{eqnarray*}
\end{enumerate}
It can easily be verified that both $I_1 \cdot I_2$ as well as $\epsilon_x(I)$ are ideals, so these operations are well defined. Note that $\Phi$ and $\{1\}$ are the null and unit elements of combination.

It turns out that $I_\Phi$ with these operations is a domain-free information algebra. In order to show this, we need some preparation. First, the intersection of any family of ideals is still an ideal, the family of ideals of an information algebra $\Phi$ forms a $\cap$-system \cite{daveypriestley97}. Therefore, the ideal generated by a family $X$ of elements of $\Phi$, that is, the minimal ideal containing $X$, can be obtained as
\begin{eqnarray*}
I(X) = \bigcap \{I:I \in I_\Phi,X \subseteq I\}.
\end{eqnarray*}
Alternatively, we have also 
\begin{eqnarray*}
I(X) = \{\phi \in \Phi:\exists \phi_1,\ldots,\phi_n \in X,n \geq 1, \textrm{ such that}\ \phi \leq \phi_1 \vee \ldots \vee \phi_n\},
\end{eqnarray*}
since the right hand side is an ideal containing $X$. In particular, we see that $I_1 \cdot I_2 = I(I_1 \cup I_2)$. If $X$ is a finite set, then

\begin{eqnarray*}
I(X) = \downarrow\!\bigvee X,
\end{eqnarray*}
These are well-know results, see for instance \cite{kohlas03}. 

From lattice theory, \cite{daveypriestley97}, we know that a $\cap$-system with a top element ($\Phi$ in our case) forms a complete lattice under set inclusion, infimum is intersection and supremum is given by
\begin{eqnarray*}
\bigvee Y = \bigcap \{I:I \in I_\Phi,\bigcup_{J \in Y} J \subseteq I\}.
\end{eqnarray*}
In particular, we have $I_1 \cdot I_1 = I_1 \vee I_2$, set inclusion is also information order. Now we show that $I_\Phi$ with the operations of combination and extraction as defined above forms an information  algebra.

\begin{theorem} \label{th:IdExt}
Let $I_\Phi$ be the set of ideals of a domain-free information algebra $\Phi$, then $(I_\Phi,\cap)$ is a commutative semigroup with $\{1\}$ as unit and $\Phi$ as null element, and for all $x \in Q$, the operators $\epsilon_x$ are existential quantifers with respect to $I_\Phi$.
\end{theorem}

\begin{proof}
Since $(I_\Phi,\subseteq)$ is a lattice it is a commutative semigroup under combination or join, $\{1\}$ is the smallest and $\Phi$ the greatest ideal, hence the unit and null. It is obvious that $\epsilon_x(\Phi) = \Phi$ and $\epsilon_x(I) \subseteq I$ hence $\epsilon_x(I) \cdot I = I$. Further we must show that $\epsilon_x(\epsilon_x(I_1) \cdot I_2) = \epsilon_x(I_1) \cdot \epsilon_x(I_2)$. Consider an element $\phi \in \epsilon_x(\epsilon_x(I_1) \cdot I_2)$. Then there is an element $\phi' \in \epsilon_x(I_1) \cdot I_2$ such that $\phi \leq \epsilon_x(\phi')$. Further, there are elements $\phi'_1 \in \epsilon_x(I_1)$ and $\phi_2 \in I_2$ so that that $\phi' \leq \phi'_1 \cdot \phi_2$. And there is an element $\phi_1 \in I_1$ so that $\phi'_1 \leq \epsilon_x(\phi_1)$. So finally we have $\phi \leq \epsilon_x(\epsilon_x(\phi_1) \cdot \phi_2) = \epsilon_x(\phi_1) \cdot \epsilon_x(\phi_2)$, since $\epsilon_x$ is an existential quantifier in $\Phi$. But this shows that $\phi \in \epsilon_x(I_1) \cdot \epsilon_x(I_2)$. Conversely, assume $\phi \in \epsilon_x(I_1) \cdot \epsilon_x(I_2)$. Then there are elements $\phi_1 \in I_1$ and $\phi_2 \in I_2$ such that $\phi \leq \epsilon_x(\phi_1) \cdot \epsilon_x(\phi_2) = \epsilon_x(\epsilon_x(\phi_1) \cdot \phi_2)$. But this means that $\phi \in \epsilon_x(\epsilon_x(I_1) \cdot I_2)$, and this proves the required identity. So the operators $\epsilon_x$ are existential quantifiers relative to $(I_\Phi),\cap)$.
\end{proof}

This indicates that ideals $I_\Phi$ form a domain-free information algebra, a kind of reduct of it. We call this algebra $I_\Phi$ the \textit{ideal completion} of $\Phi$. The Support condition (see Section \ref{sec:InfAlg}) however is a different story. If the join lattice $(Q,\leq)$ does not have a greatest element, then an ideal may have no support in $Q$. 

Now we show that the ideal algebra $I_\Phi$ is an extension of the information algebra $\Phi$, or, in other words, $\phi$ is embedded in $I_\Phi$.

\begin{theorem}
Let $(\Phi,\cdot,0,1;E)$ be a domain-free information algebra. Then the map $f : \Phi \rightarrow I_\Phi$ defined by $h(\phi) = \downarrow\!\phi$ is an embedding.
\end{theorem}

\begin{proof}
We have obviously by the definition of combination and extraction among ideals that $\downarrow\!(\phi \cdot \psi) = \downarrow\!\phi\ \cdot \downarrow\!\psi$ and $\epsilon_x(\downarrow\!\phi) = \epsilon_x(\downarrow\!\phi)$. Further $\downarrow\!0 = \Phi$ and $\downarrow\!1 = \{1\}$. So $h$ is a homomorphism. And $\downarrow\!\phi = \downarrow\!\psi$ implies that $\phi = \psi$, hence $f$ is an embedding.
\end{proof}

The embedding $f(\Phi)$ in $I_\Phi$ is a domain-free information algebra, in particular satisfying the support axiom, even if $I_\Phi$ does not. Often we identify $I_\Phi$ with $\Phi$, so that, in this view, $I_\Phi$ is an extension of $\Phi$. 

We remark further that the order in $Q$ induced by the information algebra $\Phi$ is the same as the one induced by the ideal completion. This is a consequence of the following lemma-

\begin{lemma}
If $\epsilon_x(\phi) = \epsilon_x(\epsilon_y(\phi)) = \epsilon_y(\epsilon_x(\phi))$ for all $\phi \in \Phi$, then $\epsilon_x(I) =\epsilon_x(\epsilon_y(I)) = \epsilon_y(\epsilon_x(I))$ for all ideals $I \in I_\Phi$.
\end{lemma}

\begin{proof}
Note that $\epsilon_x(I)  \supseteq \epsilon_x(\epsilon_y(I)),\epsilon_y(\epsilon_x(I))$. Assume now that $\epsilon_x = \epsilon_x\epsilon_y = \epsilon_y\epsilon_x$. Consider an element $\phi \in \epsilon_x(I)$ for any ideal $I$. Then there is an element $\phi' \in I$ such that $\phi \leq \epsilon_x(\phi') =  \epsilon_x(\epsilon_y(\phi')) = \epsilon_y(\epsilon_x(\phi'))$. But this means that $\phi \in \epsilon_x(\epsilon_y(I)),\epsilon_y(\epsilon_x(I))$ and this establishes the equality claimed in the lemma.
\end{proof}

As a further consequence we conclude that the Join axiom is satisfied in $I_\Phi$ if it is in $\Phi$.

There is another view of ideal extension, more in the spirit of a logical calculus. In fact, the operator $I(X)$ is a consequence operator on the sets of elements of the domain-free information algebra $\Phi$ so that the ideal algebra can also be seen as a logical calculus \cite{kohlas03,daveypriestley97}. To conclude this section, lets mention that $I_\Phi$ may contain maximal ideals, that is ideals different from $\Phi$ but contained in no other ideal. Such maximal ideals are atoms of the information algebra $I_\Phi$, see Section \ref{sec:AtomicAlg}. We return to this issue in Section \ref{subsec:BooleInfAlg}. Note also that $I_\Phi$, although being an algebra of sets is not a set algebra in the formal sense of Section \ref{sec:SetAlg}, since combination is not set intersection and extraction is not saturation in the set theoretical sense.

%%%%%%%%%%%%%%%%%%%%%%%%%%%%%%%%%%%%%%%%%%%%%%%%%%%%%%%%%%%

\section{Up-set algebra extension}

Instead of considering sets of pieces of information which contain together with an element all its implied elements, that is, ideals, we may, alternatively, also consider sets which contain with an element all other element which imply this element. This are \textit{up-sets} in the partially ordered set $(\Phi,\leq)$ associated with a domain-free information algebra $(\Phi,\cdot,0,1,E)$. More precisely, a subset $U$ of $\Phi$ is called an up-set if $\phi \in U$ and $\phi \leq \psi$ jointly imply $\psi \in U$. It seems however not reasonable to consider the null element as implying an element $\phi$. Therefore we consider up-sets $U$ in $\Phi_0 = \Phi/\{0\}$. Let $U(\Phi_0)$ denote the family of these up-sets. The up-sets $\uparrow\!\phi = \{\psi \in \Phi_0:\phi \leq \psi\}$ are called principal up-sets in $\Phi_0$. The family of principal up-sets in $\Phi_0$ is denoted by $U_p(\Phi_0)$. Now, we consider the families $U^+(\Phi_0) = U(\Phi_0) \cup \{\emptyset\}$ and $U^+_p(\Phi_0) = U_p(\Phi_0) \cup \{\emptyset\}$ and construct set algebras of these subsets of the universe $\Phi$.

Consider in $\Phi_0$ the equivalence relations $\phi \equiv_x \psi$ defined for any $x \in  Q$ by $\epsilon_x(\phi) = \epsilon_x(\psi)$, for $\phi,\psi \not= 0$. They induce corresponding partitions $P_x$ in $\Phi_0$. Let $P_Q = \{P_x:x \in Q\}$ the family of these partitions associated with questions $x \in Q$. Based on $\Phi_0$ we can construct a set algebra. Consider $U^+(\Phi_0)$ and $U^+_p(\Phi_0)$. Both of these families of subsets of $\Phi_0$ are clearly closed under set intersection. This will be combination in the set algebra we construct. Then information order is the inverse of inclusion, therefore $\Phi_0$ and $\emptyset$ are the smallest and largest element in these families, the unit and null of combination. So $(U^+(\phi_0),\cap,\emptyset,\Phi_0)$ and  $(U^+_p(\phi_0),\cap,\emptyset,\Phi_0)$ are both commutative semigroups with null and unit elements.

As usual, we denote the saturation operator associated with partition $P_x$ or equivalence relation $\equiv_x$ in $\Phi_0$ by $\sigma_x$ as an abbreviation for $\sigma_{P_x}$ and let $\Sigma_Q$ denote the set of all saturation operators $\sigma_x$ for $x \in Q$.

\begin{proposition} \label{prop:SatofUpSets}
$U^+(\Phi_0)$ and $U^+_p(\Phi_0)$ are both closed under the application of the operations $\sigma_x \in \Sigma_Q$.
\end{proposition}

\begin{proof}
For an up-set $U$ in $U(\Phi_0)$ we have
\begin{eqnarray*}
U = \bigcup_{\phi \in U} \uparrow\!\phi
\end{eqnarray*}
and for any saturation operator $\sigma_x$
\begin{eqnarray*}
\sigma_x(U) = \sigma_x(\bigcup_{\phi \in U} \uparrow\!\phi) =\ \bigcup_{\phi \in U} \sigma_x(\uparrow\!\phi).
\end{eqnarray*}
We show that $\sigma_x(\uparrow\!\phi)$ belongs to $U_p(\Phi_0)$, hence to $U(\Phi_0)$, for all $\phi$ in $\Phi_0$. We have $\phi \equiv_x \epsilon_x(\phi)$, since $\epsilon_x(\phi) = \epsilon_x(\epsilon_x(\phi))$.  Let $\psi \geq \epsilon_x(\phi)$ and consider $\chi = \phi \vee \epsilon_x(\psi) \in\ \uparrow\!\phi$. Then we obtain $\epsilon_x(\chi) = \epsilon_x(\phi \vee \epsilon_x(\psi)) = \epsilon_x(\phi) \vee \epsilon_x(\psi)$. But $\psi \geq \epsilon_x(\phi)$ implies $\epsilon_x(\psi) \geq \epsilon_x(\phi)$, so we get $\epsilon_x(\chi) = \epsilon_x(\psi)$, that is $\chi \equiv_x \psi$ and thus $\psi \in \sigma_x(\uparrow\!\phi)$. Conversely, if $\psi \in \sigma_x(\uparrow\!\phi)$, then for some element $\chi \geq \phi$, we have $\psi \equiv_x \chi$ and so $\psi \geq \epsilon_x(\psi) = \epsilon_x(\chi) \geq \epsilon_x(\phi)$, hence $\psi \in \uparrow\!\epsilon_x(\phi)$. Summing up, we see that

\begin{eqnarray} \label{eq:HomoExtr}
\sigma_x(\uparrow\!\phi) =\ \uparrow\! \epsilon_x(\phi)
\end{eqnarray}
so indeed $\sigma_x(\uparrow\!\phi) \in U_p(\Phi_0)$. From this result we obtain
\begin{eqnarray*}
\sigma_x(U) = \bigcup_{\phi \in U} \uparrow\!\epsilon_x(\phi).
\end{eqnarray*}
This is an up-set and this concludes the proof.
\end{proof}

This proposition shows that $(U^+(\Phi_0),\cap,\emptyset,\Phi_0;\Sigma_Q)$ and  $(U^+_p(\Phi_0),\cap,\emptyset,\Phi_0;\Sigma_Q)$ are both set algebras, the latter a subalgebra of the former. 

What are the connections between the information algebra $\Phi$ and the set algebras  $U^+(\Phi_0)$ and $U^+_p(\Phi_0)$? Consider first the map $f : \Phi \rightarrow U^+_p(\Phi_0)$ defined by $f(\phi) =\ \uparrow\!\phi$, if $\phi \not= 0$ and $f(0) = \emptyset$. This map preserves combination and extraction as the following proposition shows.

\begin{proposition} \label{prop:InfAlgIsom}
If $\Phi$ is an domain-free information algebra, then the map $f(\phi) =\ \uparrow\!\phi$ and $f(0) = \emptyset$ between $\Phi$ and $U^+(\Phi_0)$, defined above satisfies the following:
\begin{enumerate}
\item $f(\phi \cdot \psi) = f(\phi) \cap f(\psi)$,
\item $f(0) = \emptyset$, $f(1) = \Phi_0$,
\item $f(\epsilon_x(\phi)) = \sigma_x(f(\phi))$.
\end{enumerate}
\end{proposition}

\begin{proof}
Since $\chi \geq \phi \cdot \psi = \phi \vee \psi$ if and only if $\chi \geq \phi$ and $\chi \geq \psi$, we have $\uparrow\!(\phi \cdot \psi)\ =\ \uparrow\!\phi\ \cap \uparrow\!\psi$. This is item 1. Item 2 is obvious and item 3 is proved in proposition \ref{prop:SatofUpSets}, see (\ref{eq:HomoExtr}) for $\phi \not= 0$. We have $\sigma_x(\emptyset) = \emptyset$, hence for $\phi = 0$,  $f(\epsilon_x(0)) =  f(0) = \sigma_x(f(0))$.
\end{proof} 

The map is obviously also bijective on $U^+_p(\Phi_0)$. So it is an embedding of $\Phi$ in a set algebra $U^+_p(\Phi_0)$. However, what is the relation between the set of questions $Q$ and the family of partitions $P_Q$? We have the following result.

\begin{proposition} \label{prop:IsoQSep}
For any domain-free information algebra $\Phi$, $x \leq y$ in $Q$ if and only if $P_x \leq P_y$. Furthermore $x \bot y \vert z$ if and only if $P_x \bot P_y \vert P_x$ with the usual conditional independence relation between partitions.
\end{proposition}

\begin{proof}
Suppose $x \leq y$, that is $\epsilon_x = \epsilon_x\epsilon_y$, and assume $\phi \equiv_y \psi$, hence $\epsilon_x(\phi) = \epsilon_x(\epsilon_y(\phi)) = \epsilon_x(\epsilon_y(\psi)) = \epsilon_x(\psi)$ and so $\phi \equiv_y \psi$ implies $\phi \equiv_x \psi$. This shows that for the associated partitions we have $P_x \leq P_y$. 

On the other hand assume $P_x \leq P_y$. Then $\sigma_y(\uparrow\!\phi) \subseteq \sigma_x(\uparrow\!\phi)$ for all $\phi \in \Phi$, hence $\uparrow\!\epsilon_y(\phi) \subseteq\ \uparrow\!\epsilon_x(\phi)$ and therefore $\epsilon_x(\phi) \leq \epsilon_y(\phi)$ for all $\phi \in \Phi$. If we apply $\epsilon_x$ to this inequality, we obtain $\epsilon_x(\epsilon_x(\phi)) = \epsilon_x(\phi) \leq \epsilon_x(\epsilon_y(\phi)) \leq \epsilon_x(\phi)$, so $\epsilon_x = \epsilon_x\epsilon_y$ and $\epsilon_y(\epsilon_x(\phi)) \geq \epsilon_x(\epsilon_x(\phi)) = \epsilon_x(\phi)$, but also $\epsilon_x(\phi) \geq \epsilon_y(\epsilon_x(\phi))$, hence $\epsilon_x = \epsilon_y\epsilon_x$. It follows therefore finally $x \leq y$.

Next assume $x \bot y \vert z$. We want to show that then $P_x \vee P_z \bot P_y \vee P_z \vert P_z$, which implies $P_x \bot P_y \vert P_z$, since the partitions form a q-separoid. Consider therefore blocks $[\epsilon_{x \vee z}(\phi_1)]$, $[\epsilon_{y \vee z}(\phi_2)]$ and $[\epsilon_z(\phi)]$ of these partitions so that
\begin{eqnarray*}
[\epsilon_{x \vee z}(\phi_1)] \cap [\epsilon_z(\phi)] \not= \emptyset, \quad [\epsilon_{y \vee z}(\phi_2)] \cap [\epsilon_z(\phi)] \not= \emptyset.
\end{eqnarray*}
Then there are  elements $\phi'_1$ and $\phi'_2$ so that $\phi'_1 \equiv_{x \vee z} \phi_1$, $\phi'_1 \equiv_z \phi$ and  $\phi'_2 \equiv_{y \vee z} \phi_2$, $\phi'_2 \equiv_z \phi$. Thus $\phi'_1 \equiv_z \phi'_2$. Define $\phi' = \epsilon_{x \vee z}(\phi'_1) \cdot \epsilon_{y \vee z}(\phi'_2)$. Now $x \bot y \vert z$ implies $x \vee z \bot y \vee z \vert z$, and using this, it follows that
\begin{eqnarray*}
\lefteqn{\epsilon_{x \vee z}(\phi') = \epsilon_{x \vee z}(\phi'_1) \cdot \epsilon_{x \vee z}(\epsilon_{y \vee z}(\phi'_2)) } \\
&&=  \epsilon_{x \vee z}(\phi'_1) \cdot \epsilon_{x \vee z}(\epsilon_z(\epsilon_{y \vee z}(\phi'_2))) = \epsilon_{x \vee z}(\phi'_1) \cdot \epsilon_{x \vee z}(\epsilon_z(\phi'_2))= \\
&&= \epsilon_{x \vee z}(\phi'_1) \cdot \epsilon_{x \vee z}(\epsilon_z(\phi'_1) )= \epsilon_{x \vee z}(\phi'_1).
\end{eqnarray*}
Similarly we obtain $\epsilon_{y \vee z}(\phi') = \epsilon_{y \vee z}(\phi'_2)$. Furthermore, $\epsilon_z(\phi') = \epsilon_z(\epsilon_{x \vee z}(\phi')) = \epsilon_z(\epsilon_{x \vee z}(\phi'_1)) = \epsilon_z(\phi'_1)$. So we see that $\phi' \equiv_{x \vee z} \phi_1$, $\phi' \equiv_{y \vee z} \phi_2$ and $\phi' \equiv_z \phi$ or $[\epsilon_{x \vee z}(\phi_1)] \cap [\epsilon_{y \vee z}(\phi_2)] \cap [\epsilon_z(\phi)] \not= \emptyset$. But this means $P_x \vee P_z \bot P_y \vee P_z \vert P_z$, which implies $P_x \bot P_y \vert P_z$. So, indeed $x \bot y \vert z$ implies $P_x \bot P_y \vert P_z$.

Conversely, $P_x \bot P_y \vert P_z$ implies $\sigma_x\sigma_y = \sigma_x\sigma_z\sigma_y$ and $\sigma_y\sigma_x = \sigma_y\sigma_z\sigma_x$. But by Proposition \ref{prop:InfAlgIsom}, if we restrict the operators $\sigma_x$ to $U_p(\Phi_0)$, this entails $\epsilon_x\epsilon_y = \epsilon_x\epsilon_z\epsilon_y$ and $\epsilon_y\epsilon_x = \epsilon_y\epsilon_z\epsilon_x$, hence $x \bot y \vert z$. This concludes the proof.
\end{proof}

According to this proposition, the map $x \mapsto P_x$ and its inverse preserve order. This implies that joins map to joins.  As a warning let's stress that although $(P_Q,\leq)$ is a join-semilattice, it is not a sub-join-semilattice of $(Part(U_0),\leq)$ in general, hence the q-separoid is, in general, not a sub-q-separoid of the separoid $(Part(U_0),\leq,\bot)$ of all partitions.

%look for condition that $P_Q$ becomes sub-join-semilattice 0f $Part(\Phi_0)$, as $\phi \equiv_x \psi$ and $\phi \equiv_y \psi$ imply $\phi \equiv_{x  vee y}\psi$.

%%%%%%%%%%%%%%%%%%%%%%%%%%%%%%%%%%%%%%%%%%%%%%%%%%%%%%%%%%%%

\chapter{Atoms} \label{sec:AtomicAlg} \label{sec:AtomicAlg}

%%%%%%%%%%%%%%%%%%%%%%%%%%%%%%%%%%%%%%%%%%%%%%%%%%%%%%%%%%%

\section{Atomistic algebras} \label{subsec:AtomicAlg}

A domain-free information algebra $(\Phi,\cdot,0,1;E)$ with $E = \{\epsilon_x:x \in Q\}$, may have maximal elements, different from $0$. Such elements will be called atoms, since, as we shall show, in a certain sense, information algebras may be built up from atoms. We start with the definition of an atom.

\begin{definition} \textbf{Atom:}
An element $\alpha \in \Phi$ from a domain-free information algebra $(\Phi,\cdot,0,1;E)$ is called an atom, if
\begin{enumerate}
\item $\alpha \not= 0$,
\item $\phi \geq \alpha$ implies either $\phi = \alpha$ or $\phi = 0$.
\end{enumerate}
\end{definition}
In general, in order theory, atoms are defined as minimal elements. But in our information order, maximal elements are more interesting. The following will justify this view. There are a few alternative, equivalent definitions.

\begin{lemma} \label{le:ElPropAtoms}
The following are equivalent statements:
\begin{enumerate}
\item $\alpha$ is an atom,
\item $\alpha \cdot \phi = \alpha$ or $= 0$ for all $\phi \in \Phi$,
\item $\phi \leq \alpha$ or $\phi \cdot \alpha = 0$ for all $\phi \in \Phi$.
\end{enumerate}
\end{lemma}

\begin{proof}
If $\alpha$ is an atom, then from $\alpha \leq \alpha \cdot \phi$ we conclude that either $\alpha \cdot \phi = \alpha$ or $= 0$. In the first case we have $\phi \leq \alpha$. So we have $(1) \Rightarrow (2) \Rightarrow (3)$.  Assume (3) and consider an element $\phi$ so that $\alpha \leq \phi$, then either $\alpha = \phi$ or $\phi \cdot \alpha = \phi = 0$, hence $\alpha$ is an atom. This concludes the proof.
\end{proof}
 Another simple result is that atoms are contradictory among themselves. That is, it $\alpha$ and $\beta$ are atoms, then either $\alpha = \beta$ or $\alpha \cdot \beta = 0$. In fact, from (2) of Lemma \ref{le:ElPropAtoms}, we obtain either $\alpha \cdot \beta = 0$ or $\alpha \cdot \beta = \alpha$ and $\alpha \cdot \beta = \beta$, so that in this case $\alpha = \beta$.

Atoms, if they exist, represent the most precise pieces of information in the algebra. Let $At(\Phi)$ the set of all atoms of $\Phi$. Note that $At(\Phi)$ may be empty. Further, for any element $\phi$ of $\Phi$, define 
\begin{eqnarray*}
At(\phi) = \{\alpha \in At(\Phi):\phi \leq \alpha\},
\end{eqnarray*}
the set of all atoms implying $\phi$. We say also $\alpha$ is contained in $\phi$ if $\alpha \in At(\phi)$.. This motivates the following definition.

\begin{definition} \textbf{Atomistic information algebras:}
Let $(\Phi,\cdot,0,1;E)$ be a domain-free information algebra.
\begin{enumerate}
\item If for all $\phi \in \Phi$, $\phi \not= 0$, the set $At(\phi)$ is not empty, then the algebra $\Phi$ is called \textbf{atomic}.
\item If $\Phi$ is atomic and for all $\phi \in \Phi$,
\begin{eqnarray*}
\phi = \bigwedge At(\phi),
\end{eqnarray*}
the algebra $\Phi$ is called \textbf{atomistic}.
\item If $\Phi$ is atomistic and for any subset $A$ of $At(\phi)$, the infimum $\bigwedge A$ exists in $\Phi$, the algebra $\Phi$ is called \textbf{atomistic closed}.
\item If $\Phi$ is atomistic closed and for any subset $A$ of $At(\phi)$,
\begin{eqnarray*}
\bigwedge A = \bigwedge At(\bigwedge A),
\end{eqnarray*}
the algebra is called \textbf{completely atomistic}.
\end{enumerate}
\end{definition}

For illustration let's briefly consider set algebras of subsets of some universe $U$, see Section \ref{sec:SetAlg}. A set algebra needs not necessarily be atomic, but if it is, it is necessarily atomistic. To verify this, recall first that information order corresponds to set inclusion. So atoms are the smallest, non-empty subsets of $U$ in $\Phi$. Then we claim that $At(\Phi)$ forms a partition of $U$, the atoms are the blocks of some partition $P_x$ for $x \in Q$. As stated above, atoms are disjoint, since combination is set-intersection. Further we have $\bigcup At(\Phi) = U$. If this were not the case, there would be a non-empty set $S$ so that $U = S \cup (\bigcup At(\Phi))$ and $S$ would be $x$-saturated, hence belonging to $\Phi$. But since we assume the set algebra to be atomic, there must be an atom in $At(S)$ and not in $At(\Phi)$. But this is a contradiction. Then the partition $P_x$ must be the finest partition in $P_Q$ since any block of any other partition $P_y$, $y \in Q$ must contain an atom, that is a block of $P_x$. So $P_y \leq P_x$. But then any set $S \in \Phi$ must be $x$-saturated, that is a union of blocks of $P_x$, where union is the meet in information order. Therefore the set algebra $\Phi$ is atomistic, if and only if $P_Q$ contains a finest partition $P_x$. It is completely atomistic if all sets which are $x$-saturated are in $\Phi$. In particular, if $\Phi$ is the power set of $U$, the algebra is completely atomistic.

The upset algebra $U(\Phi_0)$ introduced in the previous section is atomic if $\Phi$ is so. Then the atoms of $U(\Phi_0)$ are the principal ideals $\uparrow\! \alpha = \{\alpha\}$ associated with atoms $\alpha \in At(\Phi)$. In fact we have $\{\alpha\} \cap U = \{\alpha\}$, if $\alpha \in U$ or $= \emptyset$ otherwise, for any upset $U$. And if $U \not= \emptyset$. then if $\phi \in U$ for every atom $\alpha \in At(\phi\}$ we have $\alpha \in U$, hence $\{\alpha\} \subseteq U$. However, $U(\Phi_0)$ is not atomistic, even if $\Phi$ is atomistic.

We introduce a further small example, namely \textbf{String algebras}. Consider a finite alphabet $\Sigma$, the set $\Sigma^*$ of finite strings over $\Sigma$, including the empty string $\epsilon$, and the set $\Sigma^\omega$ of infinite strings over $\Sigma$. Let $\Sigma^{**} = \Sigma^* \cup \Sigma^\omega \cup \{0\}$, where $0$ is a symbol not contained in $\Sigma$. For two strings $\phi,\psi \in \Sigma^{**}$ define $\phi \leq \psi$ if $\phi$ is a prefix of $\psi$ and for all $\phi$,  $\phi \leq 0$. The empty string $\epsilon$ is a prefix of any string $\phi$, $\epsilon \leq \phi$.  We define combination among strings by
\begin{eqnarray*}
\phi \cdot \psi =  \left\{\begin{array}{ll} \psi & \textrm{if}\ \phi \leq \psi, \\ \phi & \textrm{if}\ \psi \leq \phi, \\ 0 & \textrm{otherwise}. \end{array} \right.
\end{eqnarray*}
Then $(\Sigma^{**},\cdot,0,\epsilon)$ is clearly a commutative semigoup, with $\epsilon$ the unit and $0$ the null element. For extraction, we define operators $\epsilon_n$ for $n \in \mathbb{N} \cup \{\infty\}$, where $\epsilon_n(\phi)$ is the prefix of length $n$ of $\phi$, if the length of $\phi$ is at least $n$, or $\epsilon_n(\phi) = \phi$ otherwise. It is easy to verify that $\epsilon_n$ is an existential quantifier for all $n$. The order induced by these operators on $\mathbb{N} \cup \{\infty\}$ is just the natural order of integers, and under this order $\mathbb{N} \cup \{\infty\}$ is a lattice.  Therefore the Join axiom is valid. Finally the Support axiom is obviously satisfied. Therefore, $(\Sigma^{**},\cdot,0,\epsilon;E)$ with $E = \{\epsilon_n:n \in \mathbb{N} \cup \{\infty\}\}$ is a domain-free information algebra. Since the operators $\epsilon_n$ commute, it is even a commutative information algebra. The infinite strings in $\Sigma^\omega$ are the atoms of this algebra. For a finite string $\phi$, $At(\phi)$ is the set of all infinite strings which have $\phi$ as prefix. Also $\phi$ is the infimum of $At(\phi)$, so the algebra is atomistic. It is even atomistic closed  since for any set $A$ of infinite strings either $\epsilon$ is the infimum, if the strings have no common prefix or else the infimum is the longest common prefix of the elements of $A$. But it is clearly not completely atomistic.

If subalgebras of an information algebra $\Phi$ have atoms, they may not be the same as those of $\Phi$. An important case are the subalgebras $\epsilon_x(\Phi)$ of all elements with support $x$, see Section \ref{sec:AlgNotions}. Here we have the following result.

\begin{proposition} 
If $\Phi$ is an atomic information algebra, then the subalgebra $\epsilon_x(\Phi)$ is atomic and its atoms are $At(\epsilon_x(\Phi)) = \epsilon_x(At(\phi))$.
\end{proposition}

\begin{proof}
We show first that the elements $\epsilon_x(\alpha)$, where $\alpha \in At(\Phi)$ are atoms in the algebra $\epsilon_x(\Phi)$. We have $\alpha \not= 0$, hence $\epsilon_x(\alpha) \not= 0$. Assume $\phi = \epsilon_x(\phi) \geq \epsilon_x(\alpha)$ for some $\phi \in \epsilon_x(\Phi)$. Then $\epsilon_x(\alpha \cdot \epsilon_x(\phi)) = \epsilon_x(\alpha) \cdot \epsilon_x(\phi) = \epsilon_x(\phi)$. Since $\alpha$ is an atom we have either $\alpha \cdot \epsilon_x(\phi) = \alpha$ or $\alpha \cdot \epsilon_x(\phi) = 0$. In the first case we conclude $\epsilon_x(\phi) = \epsilon_x(\alpha)$ and in the second case $\epsilon_x(\phi) = 0$. So $\epsilon_x(\alpha)$ is an atom in $\epsilon_x(\Phi)$.

Next we show that for any element $\phi = \epsilon_x(\phi) \not= 0$ in $\epsilon_x(\Phi)$ there is an atom $\epsilon_x(\alpha)$ of $\epsilon_x(\Phi)$ such that $\phi \leq \epsilon_x(\alpha)$. Since $\Phi$ is atomic, there is an atom $\alpha$ such that $\epsilon_x(\phi) \leq \alpha$ and therefore $\epsilon_x(\phi) \leq \epsilon_x(\alpha)$. This shows that $\epsilon_x(\Phi)$ is atomic. 

Let $\beta$ be a local atom in $\epsilon_x(\Phi)$. Since $\Phi$ is atomic, there is an atom $\alpha \in At(\beta)$, hence $\beta \leq \alpha$ and so $\beta = \epsilon_x(\beta) \leq \epsilon_x(\alpha)$. But since $\beta$ is a local atom relative to $x$, we conclude that $\beta = \epsilon_x(\alpha)$. Thus shows that $At(\epsilon_x(\Phi)) = \epsilon_x(At(\Phi))$.
\end{proof}

The elements $\epsilon_x(\alpha)$ for $\alpha \in At(\Phi)$ are called relative atoms or local atoms relative to $x$. Let $At_x(\Phi) = \epsilon_x(At(\Phi))$. Local atoms inherit the results of Lemma \ref{le:ElPropAtoms} with respect to $\epsilon_x(\Phi)$. In addition, we have the following result.

\begin{lemma} \label{le:PropRelAtoms}
\begin{enumerate}
\item If $\beta$ is a local atom relative to $x$ and $y \leq x$, then $\epsilon_y(\beta)$ is a local atoms relative to $y$,
\item if $\alpha$ and $\beta$ are local atoms relative to $x$ and $y$, then either $\alpha \cdot \beta = 0$ or else $\epsilon_x(\alpha \cdot \beta) = \alpha$ and $\epsilon_y(\alpha \cdot \beta) = \beta$.
\end{enumerate}
\end{lemma}

\begin{proof}
Item 1 holds since $\epsilon_y(\Phi)$ is a subalgebra of $\epsilon_x(\Phi)$ and $\beta$ is an atom in $\epsilon_x(\Phi)$. For the second item assume that $\alpha \cdot \beta \not= 0$. Then $\epsilon_x(\alpha \cdot \beta) = \alpha \cdot \epsilon_x(\beta) \not= 0$. Since $\alpha \leq \alpha \cdot \epsilon_x(\beta)$ we conclude that $\alpha = \alpha \cdot \epsilon_x(\beta) = \epsilon_x(\alpha \cdot \beta)$. The identity $\beta = \epsilon_y(\alpha \cdot \beta)$ follows in the same way.
\end{proof}

If we consider again a set algebra $\Phi$, then we see that for any $x \in Q$ the blocks of $P_x$ associated with the saturation oiperators $\sigma_x$ are local atoms relative to $x$. This shows that relative atoms may exist even without the existence of atoms. Further if $A$ and $B$ are local atoms relative to $x$ and $y$, that is blocks of partitions $P_x$ and $P_y$ respectively, then $A \cdot B = A \cap B$. This is, in general, no more a block of $P_{x \vee y}$. This shows that the combination of local atoms does not give, in general, a relative atom. An exception occurs in set algebras if $P_{x \vee y} = P_x \vee P_y$, that is if $P_{x \vee y}$ is the usual join of partitions in $Part(U)$.  As we have remarked in Section \ref{sec:SetAlg} this is, in general, not the case. 

If $\Phi$ is atomistic, the so is the subalgebra $\epsilon_x(\Phi)$. This follows since for $\phi \in \epsilon_x(\Phi)$, we have
\begin{eqnarray*}
At(\phi) = \bigcup_{\beta \in At_x(\phi)} At(\beta).
\end{eqnarray*}
In fact, if $\alpha \in At(\phi)$, then $\phi = \epsilon_x(\phi) \leq \alpha$, thus $\phi \leq \epsilon_x(\alpha) = \beta \in At_x(\Phi)$ and $\beta \leq \alpha$ so that $\alpha \in At(\beta)$. And if $\alpha \in At(\beta)$ for some $\beta \in At_x(\phi)$, then $\phi = \epsilon_x(\phi) \leq \beta \leq \alpha$, so that $\alpha \in At(\phi)$. As the example of set algebras shows, the subalgebras $\epsilon_x(\Phi)$ may be atomic or atomistic, without $\Phi$ being so. We call $\Phi$ locally atomic or locally atomistic, if all the subalgebras $\epsilon_x(\Phi)$ are atomic or atomistic respectively for all $x \in Q$.

%%%%%%%%%%%%%%%%%%%%%%%%%%%%%%%%%%%%%%%%%%%%%%%%%%%%%%%%%%%

\section{Set algebras of atoms} \label{subsec:SetAlgAtoms}

In this section we are going to consider set algebras of subsets of atoms of an atomic information algebra $(\Phi,\cdot,0,1;E)$, where as always $E = \{\epsilon_x:x \in Q\}$. So, the universe of the set algebra is $At(\Phi)$ and the elements of the algebra are subsets of it. We consider a family of partitions $At_x$ of $At(\Phi)$ defined by the equivalence relation $\alpha \equiv \beta$ if $\epsilon_x(\alpha) = \epsilon_y(\beta)$. So, the blocks of the partition $At_x$ are the sets $At(\epsilon_x(\alpha))$ of the atoms contained in the relative atoms $\epsilon_x(\alpha)$. We denote the corresponding saturation operators by $\sigma_x$ and define $\Sigma_Q = \{\sigma_x:x \in Q\}$. Note that $x \leq y$ means $\epsilon_x = \epsilon_y\epsilon_x$ and this implies that $\alpha \equiv_y \beta \Rightarrow \alpha \equiv_x \beta$ and therefore $P_x \leq P_y$. Let further $\mathcal{S}_Q$ be the family of subsets of $At(\Phi)$ saturated with respect to a $x \in Q$. According to Section \ref{sec:SetAlg}, $(\mathcal{S}_Q,\cap,\emptyset,At(\Phi);\Sigma_Q)$ is a set algebra. 

Consider the map $f : \Phi \rightarrow At(\Phi)$ defined by $f(\phi) = At(\phi)$. It turns out that this map is an information algebra homomorphism, if $\Phi$ is atomic.

\begin{theorem} \label{th:AtomiHomom}
Let $(\Phi,\cdot,0,1;E)$ with $E = \{\epsilon_x:x \in Q\}$ be an atomic information algebra. Then for all $\phi,\psi \in \Phi$ and for all $x \in Q$,
\begin{enumerate}
\item $At(\phi \cdot \psi) = At(\phi) \cap At(\psi)$,
\item $At(0) = \emptyset$, $At(1) = At(\Phi)$,
\item $At(\epsilon_x(\phi)) = \sigma_x(At(\phi))$.
\end{enumerate}
\end{theorem}

\begin{proof}
Since the algebra is atomic we have $At(\phi) \not= \emptyset$ if $\phi \not= 0$. Assume $\phi \cdot \psi \not= 0$ and let $\alpha \in At(\phi \cdot \psi)$. Then $\phi,\psi \leq \phi \cdot \psi \leq \alpha$, hence $\alpha \in At(\phi) \cap At(\psi)$. Conversely, if $\alpha \in At(\phi) \cap At(\psi)$, then $\phi,\psi \leq \alpha$ and therefore $\phi \cdot \psi \leq \alpha$, thus $\alpha \in At(\phi \cdot \psi)$. This proves item 1. Item 2 is  obvious. 

For item 3 assume first that $\alpha \in \sigma_x(At(\phi))$. Then there is a $\beta \in At(\phi)$ such that $\epsilon_x(\alpha) = \epsilon_x(\beta)$. $\beta \in At(\phi)$ implies $\phi \leq \beta$, so $\epsilon_x(\phi) \leq \epsilon_x(\beta) = \epsilon_x(\alpha) \leq \alpha$ and thus $\alpha \in At(\epsilon_x(\phi))$. 

Conversely, consider an atom $\alpha \in At(\epsilon_x(\phi))$. We claim that $\epsilon_x(\alpha) \cdot \phi \not= 0$. Indeed, otherwise we would have $\epsilon_x(\alpha \cdot \epsilon_x(\phi)) = \epsilon_x(\alpha) \cdot \epsilon_x(\phi) =\epsilon_x(\epsilon_x(\alpha) \cdot \phi) = 0$ implying $\alpha \cdot \epsilon_x(\phi) = 0$ which contradicts $\alpha \in At(\epsilon_x(\phi))$. So there exists an atom $\beta \in At(\epsilon_x(\alpha) \cdot \phi)$ and thus $\phi \leq \epsilon_x(\alpha) \cdot \phi \leq \beta$ and thus $\beta \in At(\phi)$. Further, $\epsilon_x(\epsilon_x(\alpha) \cdot \phi) = \epsilon_x(\alpha) \cdot \epsilon_x(\phi) \leq \epsilon_x(\beta)$, hence $\epsilon_x(\alpha) \cdot \epsilon_x(\beta) \cdot \epsilon_x(\phi) = \epsilon_x(\beta)$. This implies $\epsilon_x(\alpha) \cdot \epsilon_x(\beta) \not= 0$. Since $\epsilon_x(\alpha) \cdot \epsilon_x(\beta) = \epsilon_x(\alpha \cdot \epsilon_x(\beta))$ we conclude that $\alpha \cdot \epsilon_x(\beta) \not= 0$, hence $\epsilon_x(\beta) \leq \alpha$, since $\alpha$ is an atom. We infer $\epsilon_x(\beta) \leq \epsilon_x(\alpha)$. Proceed in the same way from $\epsilon_x(\alpha) \cdot \epsilon_x(\beta) = \epsilon_x(\epsilon_x(\alpha) \cdot \beta)$ in order to obtain $\epsilon_x(\alpha) \leq \epsilon_x(\beta)$ so that finally $\epsilon_x(\alpha) = \epsilon_x(\beta)$. But this means that $\alpha \in \sigma_x(At(\phi))$ and so $At(\epsilon_x(\phi)) = \sigma_x(At(\phi))$ as claimed.
\end{proof}

If $\phi$ has support $x$, then $At(\phi) = At(\epsilon_x(\phi)) = \sigma_x(At(\phi))$, so that $At(\phi)$ is $x$-saturated. That is, the map $f : \phi \mapsto At(\Phi)$ maps into the set algebra $\mathcal{S}_Q$ or the atomic information algebra $\Phi$ is homomorphic to the set algebra $\mathcal{S}_Q$. If the information algebra $\Phi$ is atomistic, then $At(\phi) = At(\psi)$ implies $\phi = \bigwedge At(\phi) = \bigwedge At(\psi) = \psi$ so that the map $f$ is injective, hence an embedding.

\begin{corollary} \label{cor:AtomisticEmbed}
If $(\Phi,\cdot,0,1;E)$ is an atomistic information algebra, then the map $f : \phi \mapsto At(\phi)$ is an embedding of $\Phi$ in the set algebra $\mathcal{S}_Q$.
\end{corollary}

This means that an atomistic information algebra $\Phi$ is isomorphic to a set algebra, a sub-set algebra of $\mathcal{S}_Q$. This can also be interpreted as follows: Atoms are maximally informative pieces of information. Therefore $At(\Phi)$ can be considered as a set of possible worlds and the piece of information $\phi \in \Phi$ defines by $At(\phi)$ their set of possible worlds which remain possible, if $\phi$ is assumed. Obviously in this view, combination of two pieces of information $\phi$ and $\psi$ corresponds to the intersection of theirs sets of atoms $At(\phi) \cap At(\psi)$. Further, question $x$ has the same answer for the atoms (possible worlds) $\alpha$ and $\beta$, if $\epsilon_x(\alpha) = \epsilon_x(\beta)$. The blocks $At(\epsilon_x(\alpha))$ of the corresponding partition $P_x$ represent then the possible answers to question $x$. This concurs with the view of the local atoms $\epsilon_x(\alpha)$ as possible answers to question $x$. Then clearly, saturation with respect to partition $x$ means extraction of information relative to $x$ from $At(\phi)$. Corollary \ref{cor:AtomisticEmbed} tells us that this view of a set algebra of atoms is a really equivalent picture of the information algebra $\Phi$ in the atomistic case.

If the information algebra is completely atomistic, then the map $f : \phi \mapsto At(\phi)$ is surjective, hence bijective on $\mathcal{S}_Q$. So, $\phi$ is isomorphic (as an information algebra) to the set algebra $\mathcal{S}_Q$. Then a much stronger result holds.

\begin{theorem} \label{th:CoplAtomisticAlg}
Let $(\Phi,\cdot,0,1;E)$ be a completely atomistic information algebra, then $(\Phi,\leq)$ is a complete Boolean lattice and map $f : \phi \mapsto At(\phi)$ preserves arbitrary joins and meets (in the information order) as well as complements.
\end{theorem}

\begin{proof}
Let $X$ be any subset of $\Phi$ and define
\begin{eqnarray*}
A_X = \bigcap_{\psi \in X} At(\psi).
\end{eqnarray*}
Assume $A_X \not= \emptyset$. Since the algebra is completely atomistic, there exists a $\phi \in \Phi$ such that $A_X = At(\phi)$ and $\phi = \bigwedge A_X$. For any $\alpha \in A_X$ and $\psi \in X$, we have $\psi \leq \alpha$ therefore $\psi \leq \bigwedge A_X$ which shows that $\bigwedge A_X$ is an upper bound of $X$. Let $\chi$ be any other upper bound of $X$. Then $At(\chi) \subseteq At(\psi)$ for all $\psi \in X$, hence $\alpha \in At(\chi)$ implies $\alpha \in A_X$ and therefore $\chi = \bigwedge At(\chi) \geq \bigwedge A_X$. It follows that $\bigwedge A_X$ is the supremum of $X$, that is $\bigvee X = \bigwedge A_X$. Consequently, since $\bigvee X = \phi$ and $At(\phi) = A_X$,
\begin{eqnarray*}
At(\bigvee X) = \bigcap_{\psi \in X} At((\psi).
\end{eqnarray*}
If $A_X = \emptyset$, then $\bigvee X = 0$ and $At(0) = \emptyset$. So join (in the information order) is preservedunder the map $f$.

Consider $\phi \in \Phi$ and define $At^c(\phi) = At(\Phi)/At(\phi)$. Since $\Phi$ is completely atomistic, $\psi = \bigwedge At^c(\phi)$ exists and belongs to $\Phi$. Moreover, $At(\psi) = At^c(\phi)$. We know that $At(\phi \cdot \psi) = At(\phi) \cap At(\psi)$, $At(1) = At(\Phi)$ and $At(0) = \emptyset$. Then
\begin{eqnarray*}
\phi \vee \psi &=& \bigwedge At(\phi \cdot \psi) = \bigwedge (At(\phi) \cap At(\psi)) \\
&=& \bigwedge (At(\phi) \cap At^c(\phi)) = \bigwedge \emptyset = 0.
\end{eqnarray*}
Further, we show that $\phi \wedge \psi$ exists in $\Phi$ and $At(\phi \wedge \psi) = At(\phi) \cup At(\psi)$. For this purpose, put $A = At(\phi) \cup At(\psi)$. Then there exists a $\chi$ such that $A = At(\chi)$ and $\chi = \bigwedge A$. Since $At(\phi),At(\psi) \subseteq At(\phi) \cup At(\psi)$, we have $\chi \leq \phi$ and $\psi \leq \chi$. Let $\xi$ be another lower bound of $\phi$ and $\psi$. Then $At(\xi) \supseteq A = At(\chi)$, hence $\xi \leq \chi$ by atomisticity and therefore $\chi = \phi \wedge \psi$. Certainly $A \subseteq At(\phi \wedge \psi)$ so that $At(\phi \wedge \psi) = A$. If $\psi = \bigwedge At^c(\phi)$, then 
\begin{eqnarray*}
\phi \wedge \psi = \bigwedge (At(\phi) \cup At^c(\phi)) = At(\Phi) = 1.
\end{eqnarray*}
So $\psi$ is the complement of $\phi$, $\psi = \phi^c$.

The map $\phi \mapsto At(\phi)$ thus preserves arbitrary joins and complements and consequently also arbitrary meets, completing the proof.
\end{proof}

If in a domain-free information algebra, $(\Phi,\leq)$ is a Boolean lattice in information order, then the information algebra is called Boolean. Such Boolean information algebras will be discussed in the next section. But before we examine the example of String algebras.

We have seen that the information algebra of strings $(\Sigma^{**},\cdot,0,\epsilon;E)$ is atomistic (see Section \ref{subsec:AtomicAlg}). By the results above, the algebra is embedded into the set algebra of its atoms, by the map $s \mapsto At(s)$. The combination $s \cdot r$ is mapped into the intersection of the sets of infinite strings with both $s$ and $r$ as prefixes, which is empty, if not either $s$ is a prefix of $r$ or $r$ a prefix of $s$. In the first case $At(s) \cap At(r) = At(s)$, in the second case $At(s) \cap At(r) = At(r)$. For any $n$, the saturation operator $\sigma_n$ maps a set $S$ of infinite strings into the set of all infinite strings which have a common prefix of some length $n$ with some string of $S$. Compare this with the representation of the same algebra by up-sets $\uparrow\! s$ of all strings (including finite ones) which have $s$ as a prefix.

%%%%%%%%%%%%%%%%%%%%%%%%%%%%%%%%%%%%%%%%%%%%%%%%%%%%%%%%%%%

\section{Representing Boolean information algebras} \label{subsec:BooleInfAlg}

Let $(\Phi,\cdot,0,1;E)$ with $E = \{\epsilon_x:x \in Q\}$ be a domain-free information algebra. If $(\Phi,\leq)$, is a Boolean lattice in information order $\leq$, then the information algebra is called \textit{Boolean}. Recall that this means that it is a distributive lattice and for all element $\phi$ there is a complement $\phi^c$ \cite{daveypriestley97}. We show first that for any $x \in Q$, extraction distributes over meet.

\begin{proposition} \label{porp:ExtrDidsMeet}
If $\Phi$ is a Boolean information algebra, then for all pair of elements $\phi$ and $\psi$ in $\Phi$,
\begin{eqnarray*}
\epsilon_x(\phi \wedge \psi) = \epsilon_x(\phi) \wedge \epsilon_x(\psi).
\end{eqnarray*}
\end{proposition}

\begin{proof}
Put $\eta = \phi \wedge \psi$. Then $\eta \leq \phi,\psi$ implies $\epsilon_x(\eta) \leq \epsilon_x(\phi),\epsilon_x(\psi)$. Hence $\epsilon_x(\eta)$ is a lower bound of $\epsilon_x(\phi)$ and $\epsilon_x(\psi)$. Let $\chi$ be any other lower bound of $\epsilon_x(\phi)$ and $\epsilon_x(\psi)$. Recall that  in a Boolean algebra or lattice we have $\psi \leq \phi$ if and only if $\phi \cdot \psi^c = \phi \vee \psi^c = 0$. So, we have $\epsilon_x(\phi) \cdot \chi^c = 0$ and $\epsilon_x(\psi) \cdot \chi^c = 0$. It follows that 
\begin{eqnarray*}
0 = \epsilon_x(0) = \epsilon_x(\epsilon_x(\phi) \cdot \psi^c) = \epsilon_x(\phi) \cdot \epsilon_x(\phi^c) = \epsilon_x(\phi \cdot \epsilon_x(\chi^c).
\end{eqnarray*}
This implies $\phi \cdot \epsilon_x(\chi^c) = 0$ and in the same way we obtain $\psi \cdot \epsilon_x(\chi^c) = 0$. Using distributivity we get
\begin{eqnarray*}
0 = (\phi \cdot \epsilon_x(\chi^c)) \wedge (\psi \cdot \epsilon_x(\chi^c)) = (\phi \wedge \psi) \cdot \epsilon_x(\chi^c) = \eta \cdot \epsilon_x(\chi^c).
\end{eqnarray*}
It follows that 
\begin{eqnarray*}
0 = \epsilon_x(0) = \epsilon_x(\eta \cdot \epsilon_x(\chi^c)) = \epsilon_x(\eta) \cdot \epsilon_x(\chi^c) = \epsilon_x(\epsilon_x(\eta) \cdot \chi^c),
\end{eqnarray*}
hence $\epsilon_x(\eta) \cdot \chi^c = 0$ But this implies $\chi \leq \epsilon_x(\eta)$ and $\epsilon_x(\eta)$ is thus the greatest lower bound of $\epsilon_x(\phi)$ and $\epsilon_x(\psi)$, that is $\epsilon_x(\phi \wedge \psi) = \epsilon_x(\phi) \wedge \epsilon_x(\psi)$. 
\end{proof}

Now, we consider first the case of a finite Boolean information algebra. If $\Phi$ is a finite Boolean lattice there are surely atoms so that $At(\Phi)$ is not empty. Again recall that atoms in our case are maximal elements, not minimal ones as usually meant by atoms in a Boolean lattice. Now, $\Phi$ is atomistic, which is a well-know result in Boolean lattice theory, \cite{daveypriestley97}. But since we use the inverse order, we give the (simple) proof here.

\begin{proposition} \label{prop:FinBooleAtom}
If $\Phi$ is a Boolean information algebra, then for every $\phi \in \Phi$, $\phi \not= 0$,
\begin{eqnarray*}
\phi = \bigwedge At(\phi).
\end{eqnarray*}
\end{proposition}

\begin{proof}
Obviously, $\phi \leq \bigwedge At(\phi)$. Let $\chi$ be any other lower bound of $At(\Phi)$. We claim that $\chi \leq \phi$. Otherwise we would have $\phi \cdot \chi^c \not= 0$. Then there would be an atom $\alpha \in At(\phi \cdot \chi^c)$ such that $\alpha \geq \phi \cdot \chi^c \geq \phi,\chi^c$, hence $\alpha \in At(\phi)$ and therefore $\alpha \geq \chi$. But this would imply $\alpha \geq \chi \cdot \chi^c = 0$ which is a contradiction. So $\chi \leq \phi$, hence $\phi$ is the infimum of $At(\phi)$.
\end{proof}

 Now, since $\Phi$ is a finite Boolean lattice, $\bigwedge A$ exists for all subsets $A$ of $At(\Phi)$ and if $\phi = \bigwedge A$, then $\phi \leq \alpha$ for all $\alpha \in A$ and $At(\phi) \supseteq A$ so that $\phi = \bigwedge At(\phi) \leq \bigwedge A = \phi$. Therefore, $\Phi$ is completely atomistic. So we conclude, using Theorem \ref{th:CoplAtomisticAlg} that the finite Boolean information algebra $\Phi$ is isomorphic, both as an information algebra as well as a Boolean lattice to the powerset algebra of $At(\Phi)$, namely $(2^{At(\phi)},\cap,\emptyset,At(\Phi),\Sigma_Q)$, where $\Sigma_Q$ is the set of the saturation operators $\sigma_x$ for $x \in Q$ related to the partitions $At_x$ defined by the equivalence relation  $\alpha \equiv_x \beta$ iff $\epsilon_x(\alpha) = \epsilon_x(\beta)$. The isomorphism is given by the map $f : \phi \mapsto At(\phi)$.
 
We turn to the general case of a Booelan information algebra $(\Phi,\cdot,0,1;E)$. In general it is no more atomic or even less atomistic, but its ideal $I_\Phi$ completion is. This case is essentially based on Stone's representation theory for Boolean lattices, see \cite{daveypriestley97}. The key concept in this theory is the one of a maximal ideal.

\begin{definition} \textbf{Maximal ideal:}
A proper ideal of $\Phi$ is called maximal, if $J \in I_\Phi$ and $I \leq J$ implies $I = J$ or $J = \Phi$.
\end{definition}

Obviously, maximal ideals are atoms in the information algebra $I_\Phi$, associated with $\Phi$, see Section \ref{subsec:IdealExt}. It is well-known that in a Boolean lattice maximal ideals are also prime ideals, that is, if $\phi \wedge \psi \in I$ if $I$ is a maximal ideal, then either $\phi \in I$ or $\psi \in I$, see \cite{daveypriestley97} for this and also the following summary of well-known results. If $I$ is a maximal ideal, then for all $\phi \in \Phi$, either $\phi$ or $\phi^c \in I$, and, if $\phi \not= \psi$, then there is a maximal ideal which contains exactly one of the two elements. Finally, for any proper ideal $J$ of $\Phi$, there is a maximal ideal $I$ so that $J \subseteq I$, or $J \leq I$ in information order. This means that the information algebra $I_\Phi$ is atomic.

Let $I_P(\Phi)$ be the set of all maximal ideals of $\Phi$. Since the information algebra $I_\Phi$ is atomic, the map $f : I_\Phi \rightarrow 2^{I_P(\Phi)}$ defined by $J \mapsto At(J) = \{I \in I_P(\Phi):J \leq I\}$ satisfies, according to Theorem \ref{th:AtomiHomom}
\begin{eqnarray*}
J_1 \cdot J_2 &\mapsto& At(J_1) \cap At(J_2), \\
\epsilon_x(J) &\mapsto& \sigma_x(At(J)),
\end{eqnarray*}
where here $\sigma_x$ is the saturation operator associated with the partition $P_x$ induced by the equivalence relation $J_1\equiv_x J_2$ iff $\epsilon_x(J_1) = \epsilon_x(J_2)$.

In particular, the restriction of this map to the subalgebra of principal ideals $\downarrow\! \phi$ is still an information algebra homomorphism. Note that
\begin{eqnarray*}
At(\downarrow\!\phi) = \{I \in I_P(\Phi):\ \downarrow\!\phi \subseteq I\} = \{I \in I_P(\Phi):\phi \in I\} =: X_\phi.
\end{eqnarray*}
The map $\downarrow\!\phi \mapsto X_\phi$ may be extended with the map $\phi \mapsto \downarrow\!\phi$ to a map $\Phi \rightarrow 2^{I_P(\Phi)}$ so that $\phi \mapsto X_\phi$. This map still satisfies the homomorphism conditions
\begin{eqnarray*}
\phi \cdot \psi = \phi \vee \psi &\mapsto& X_\phi \cap X_\psi, \\
\epsilon_x(\phi) &\mapsto& \sigma_x(X_\phi).
\end{eqnarray*}
In addition, we have obviously
\begin{eqnarray*}
0 \mapsto \emptyset, \quad 1 \mapsto I_P(\Phi),
\end{eqnarray*}
thus completing the homomorphism conditions. Furthermore, the map is one-to-one, because $\phi \not= \psi$ implies that there is a maximal ideal $I$ which contains one, but not the other of the two elements, so that $X_\phi \not= X_\psi$. So, this map is an \textit{embedding} of $\Phi$ in $I_\Phi$. We remark that maximal ideals have an information-theoretic interpretation as complete, consistent theories. A maximal ideal is consistent, since it is an ideal: It contains with any piece of information all pieces implied by it and with any two pieces also its combination. It is complete in the sense that it contains any piece of information $\phi$ or its negation (complement) and if $\phi \wedge \psi$ belongs to the ideal then either $\phi$ or $\psi$ belong to it too. So a Boolean information algebra $\Phi$ an be represented by the set algebra of all the consistent, complete theories, each element $\phi$ of $\Phi$ is uniquely represented by the consistent, complete theories it is contained in.

According to Stone's representation theory there is much more, \cite{daveypriestley97}. First, the map $\phi \mapsto X_\phi$ is a Boolean algebra homomorphism:
\begin{eqnarray*}
\phi \vee \psi &\mapsto& X_\phi \cap X_\psi, \\
\phi \wedge \psi &\mapsto& X_\phi \cup X_\psi, \\
\phi^c &\mapsto& X^c_\phi.
\end{eqnarray*}
Note that the fact that join (meet) maps to intersection (union) is due to the fact, that our information order in $I_\Phi$ is the converse of the usual order among sets, inclusion corresponds to more information \footnote{Usually, $X_\phi$ is defined as the set $\{I \in I_\Phi:\phi \not\in I\}$. In the spirit of the idea of information, it is more natural to define $X_\phi$ as above, namely a complete, consistent theories compatible with $\phi$.}. Further, the image $X_\phi$ of $\phi$ under this map can be characterized topologically as \textit{clopen } sets (simultaneously closed and open sets) in a topological space, the \textit{Stone space} or \textit{Boolean space}. In $I_P(\Phi)$, this topology is defined by
\begin{eqnarray*}
\mathcal{B} = \{X_\phi:\phi \in \Phi\}
\end{eqnarray*}
as an open base. Then the family of open sets is given by
\begin{eqnarray*}
\mathcal{T} = \{U \subseteq I_P(\Phi):U \textrm{ is a union of members of}\ \mathcal{B}\}.
\end{eqnarray*}
The topological space $(I_P(\Phi),\mathcal{T})$ is called the \textit{dual} or \textit{prime ideal space} of $\Phi$. The sets $X_\phi$ are open and then, since $X_{\phi^c} = X^c_\phi$ is also open, $X_\phi$ is also closed, hence clopen. In fact, the sets in $\mathcal{B}$ are precisely all the clopen subsets of $I_P(\Phi)$. They form a Boolean lattice. The topological space $I_P(\Phi)$ is \textit{compact}. Further, for any pair $I,J \in I_P(\Phi)$ there exists a clopen subset $X_\phi$ of $I_P(\Phi)$ such that $I \in X_\phi$ and $J \not\in X_\phi$. This means that the topological space $I_P(\Phi)$ is \textit{totally disconnected}. A compact, totally disconnected topological space is called a \textit{Boolean space}. We refer to \cite{daveypriestley97} for details. The Stone representation theorem asserts that the map $\phi \mapsto X_\phi$ is a Boolean algebra isomorphism between the Boolean algebra $\Phi$ and the field of clopen subsets $\mathcal{B}$ of the Boolean space $(I_P(\Phi),\mathcal{T})$.

This leads then to an extension of this representation theorem to a representation theorem for Boolean information algebras. This summarizes the discussion above.

\begin{theorem} \label{thBooleReprTh}
If $(\Phi,\cdot,0,1;E)$ is a Boolean information algebra with $E = \{\epsilon_x:x \in Q\}$, then it is is isomorphic, both as an information algebra as well as a Boolean lattice, to the set algebra $(\mathcal{B},\cap,\emptyset,I_P(\Phi);\Sigma)$, where $\Sigma = \{\sigma_x:x \in Q\}$ is the set of saturation operators associatied with the partitions $P_x$ of $I_P(\Phi)$ associated with the equivalence relation $I \equiv_x J$ iff $\epsilon_x(I) = \epsilon_:x(J)$, restricted to set of $\mathcal{B}$. This isomorphism is established by the map $\phi \mapsto X_\phi$.
\end{theorem}

This result can be extended to a full fletched duality theory between Boolean information algebras and topological Boolean spaces with a family of partitions. This will not be pursued here, we refer to \cite{Jontarsky51}. It can also be extended to information algebras, where $(\Phi,\leq)$ is a distributive lattice in the information order. This is based on \textit{Priestely spaces}, \cite{daveypriestley97}.  For the case of commutative information algebras we refer to \cite{kohlasschmid16}. Further, we remark that due to duality there is also a similar representation theory of labeled information algebras. This too will not be worked out here.

%%%%%%%%%%%%%%%%%%%%%%%%%%%%%%%%%%%%%%%%%%%%%%%%%%%%%%%%%%%%

%%%%%%%%%%%%%%%%%%%%%%%%%%%%%%%%%%%%%%%%%%%%%%%%%%%%%%%%%%%%

\chapter{Local computation} \label{sec:LocComp}

%%%%%%%%%%%%%%%%%%%%%%%%%%%%%%%%%%%%%%%%%%%%%%%%%%%%%%%%%%%

\section{Conditional independence structures} \label{subsec:CondIndepStruct}

In this section, we introduce a number of conditional independence structures related to a domain-free information algebra $(\Phi,\cdot,0,1;E)$ where as usual $E = \{\epsilon_x:x \in Q\}$ is a family of extraction operators (i.e. existential quantifiers, see Section  \ref{sec:InfAlg}) and $(Q,\leq)$ is a join semilattice under the order $x \leq y$ if and only if $\epsilon_x = \epsilon_x\epsilon_y = \epsilon_y\epsilon_x$ (see Section \ref{sec:StructOfQuest}). These structures will then serve to propose efficient computational methods, extending well-known local computation schema in Bayesian networks for instance or more generally in valuation algebras based on multivariate models, as discussed in \cite{lauritzenspiegelhalter88,shenoyshafer90,kohlas03} to cite only a few references. These approaches depend in the multivariate case on a conditional independence structure among variables called \textit{join trees, junction trees} or also \textit{hypertrees}. These concepts can also be modelled by graphical structures describing dependence relations among variables and there is a large body of literature on this subject. However, these concepts can not be transferred simply as such to our more general model of information algebras with $Q$ being only a join-semilattice, and, in general nothing more. We need concepts adapted to the present structure. 

Assume that $(Q,\leq,\bot)$ is a q-separoid, and recall that any information algebra induces such a structure (Section \ref{sec:StructOfQuest}). We start by extending the conditional independence relation $x \bot y \vert z$ in $Q$ to a more general relation describing conditional independence of a set of questions or domains $\{x_1,\ldots,x_n\}$ from $Q$ given a $z \in Q$ for $n \geq 2$. If $J$ is a finite subset of elements of $Q$ let
\begin{eqnarray*}
x_J = \vee_{j \in J} x_j.
\end{eqnarray*}
Then we can define the concept of conditional independence for any finite subset of elements of $Q$.

\begin{definition} \textbf{Conditional independence of a set of questions:}
Let $(Q,\leq,\bot)$ be a q-separoid. If $\{x_1,\ldots,x_n\}$ is a finite set of elements from $Q$, $n \geq 2$ and $z \in Q$, then the elements in the set $\{x_1,\ldots,x_n\}$ are called (mutually) conditionally indpendent given $z$, if for any pair of disjoint subsets $J$ and $K$ of $\{x_1,\ldots,x_n\}$
\begin{eqnarray*}
x_J \bot x_K \vert z.
\end{eqnarray*}
Then we write $\bot \{x_1,\ldots,x_n\} \vert z$. 
\end{definition}
Recall that  given a domain-free information algebra $(\Phi,\cdot,0,1;E)$ with $E = \{\epsilon_x:x \in Q\}$, we have $x_J \bot x_K \vert z$ if and only if (see Section \ref{sec:StructOfQuest})
\begin{eqnarray*}
\epsilon_{\vee_{j \in J} x_j \vee z}\epsilon_{\vee_{k \in K} x_k \vee z} &=& \epsilon_z, \\
\epsilon_{\vee_{k \in K} x_k \vee z}\epsilon_{\vee_{j \in J} x_j \vee z} &=& \epsilon_z.
\end{eqnarray*}

By convention, for all $x \in Q$, we define $\bot \{x\} \vert z$ and $\bot \emptyset \vert z$. Note first that due to condition C3 of a q-separoid, we may assume $J \cup K = \{1,\ldots,n\}$ in the definition above. Here are a few further elementary results on this relation

\begin{proposition} \label{prop:ElResMultCondIndep}
Assume $\bot \{x_1,\ldots,x_n\} \vert z$. Then,
\begin{enumerate}
\item if $\sigma$ is any permutation of $1,\ldots,n$, then $\bot \{x_{\sigma(1)},\ldots,x_{\sigma(n)}\} \vert z$.
\item If $J \subseteq \{1,\ldots,n\}$, then $\bot \{x_j:j \in J\} \vert z$,
\item if $y \leq x_1$, then $\bot\{y,x_2,\ldots,x_n\} \vert z$,
\item $\bot \{x_1 \vee x_2,x_3,\ldots,x_n\} \vert z$,
\item  $\bot \{x_1 \vee z,x_2,\ldots,x_n\} \vert z$.
\end{enumerate}
\end{proposition}

These statements are all obvious from the definition of the relation and the q-separoid properties of $x \bot y \vert z$. In case $(Q,\leq)$ is a lattice, $\bot_L \{x_1,\ldots,x_n\} \vert z$ implies $x_1 \bot_L x_2 \vert z$, $x_2 \bot_L x_3 \vert z$, etc. which means $(x_1 \vee z) \wedge (x_2 \vee z) = z$, $(x_2 \vee z) \wedge (x_3 \vee z) = z$, etc. and this implies
\begin{eqnarray*}
(x_1 \vee z) \wedge (x_2 \vee z) \wedge \cdots \wedge (x_n \vee z) = z.
\end{eqnarray*}
If, in addition, the lattice is also distributive, then
\begin{eqnarray*}
(\vee_{j \in J} x_j \vee z) \wedge (\vee_{k \in K} x_k \vee z) = \vee_{j \in J,k \in K} (x_j \wedge x_k) \vee z = z,
\end{eqnarray*}
hence $x_j \wedge x_k \leq z$ for all $j \not= k$. Therefore, in this case $\bot_L \{x_1,\ldots.x_n\} \vert z$ if and only if $x_j \bot_L x_k \vert z$ or $x_j \wedge x_k \leq z$ for all pairs of distinct $j$ and $k$, $j,k = 1,\ldots,n$.

Theorem \ref{th:CombExtrProp} in Section \ref{sec:StructOfQuest} generalizes to $\bot \{x_1,\ldots,x_n\} \vert z$ (and the same is true for Theorem \ref{the:CombExtrProplab} in Section \ref{sec:LabInfAlg} in the labeled case) and this is a fundamental result for local computation.

\begin{theorem}
Assume $\bot \{x_1,\ldots,x_n\} \vert z$ and let $\phi = \phi_1 \cdots \phi_n$ where $x_i$ is a support for $\phi_i$ for $i = 1,\ldots.n$. Then
\begin{eqnarray*}
\epsilon_z(\phi) = \epsilon_z(\phi_1) \cdots \epsilon_z(\phi_n).
\end{eqnarray*}
\end{theorem}

\begin{proof}
The proof goes by induction. The claim holds for $n = 2$ (Theorem \ref{th:CombExtrProp}). Assume it holds for $n-1$. Then  $\phi = \phi_1 \cdot \psi_{n-1}$, where $\psi_{n-1} = \phi_2 \cdot \ldots \cdot \phi_n$. Then we have $\epsilon_z(\phi) = \epsilon_z(\phi_1 \cdot \psi_{n-1}) = \epsilon_z(\phi) \cdot \epsilon_z(\psi_{n-1})$ and by the assumption of induction $\epsilon_z(\psi_{n-1}) = \epsilon_z(\phi_2) \cdot \ldots \cdot \epsilon_z(\phi_n)$. Therefore we obtain indeed $\epsilon_z(\phi) = \epsilon_z(\phi_1) \cdots \epsilon_z(\phi_n)$.
\end{proof}

We introduce a further important conditional independence structure. Let $T = (V,E)$ be a tree with a finite set of vertices $V$ and edges $E \subseteq V ^2$, where $V^2$ is the family of two-elements subsets of $V$. Let further $\lambda : V \rightarrow Q$ be a labeling of the vertices of $T$ with elements of $Q$. Then the pair $(T,\lambda)$ is called a labeled tree. By $ne(v)$ we denote the set of neighbors of a vertex $v \in V$, that is $ne(v) = \{w \in V:\{v,w\} \in E\}$. For any subset of nodes $U$ of $V$ we define
\begin{eqnarray*}
\lambda(U) = \vee_{v \in U} \lambda(v).
\end{eqnarray*}
When a node $v$ is eliminated from the tree $T$ together with all edges $\{v,w\}$ incident to $v$, then a family of subtrees $\{T_{v,u} = (V_{v,w},E_{v,u}):u \in ne(v)\}$ are created, where $V_{v,u}$ is the set of vertices of the subtree containing the node $u \in ne(v)$ and $E_{v,u}$ the set of edges of $T$ linking vertices of $V_{v,u}$, that is $E_{v,u} = \{\{w,w'\} \in E:w,w' \in V_{v,u}\}$. This allows now to define the concept of a Markov tree.

\begin{definition} \textbf{Markov tree:}
A labeled tree $(T,\lambda)$ with $T = (V,E)$ and $\lambda : V \rightarrow Q$, is called a Markov tree, if for all vertices $v \in V$
\begin{eqnarray} \label{eq:MarkovProp}
\bot \{\lambda(V_{v,u}):u \in ne(V)\} \vert \lambda(v).
\end{eqnarray}
\end{definition}

Markov trees and derived concepts have been early identified as important independence structures for efficient computation with belief functions using Demster's rule \cite{SSM87,kohlasmonney95,shenoyshafer90}. In the first two of these references qualitative Markov trees for partitions are discussed, whereas in the last one a derived structure, join trees, are used in a multivariate setting. In the multivariate setting join or junction trees and hypertrees are widely discussed for local computation purposes and related to various graphical models for describing conditional independence. Below we shall discuss how these conditional independence structures are related to our concept of Markov trees. Also the concept is generalized and adapted from the probabilistic concept of Markov random fields. We prove two fundamental propositions about Markov trees whose proofs are adapted from \cite{kohlasmonney95}.

\begin{theorem} \label{th:NeighCondIndep}
If $(T,\lambda)$ is a Markov tree, then for any node $v \in V$ and all nodes $u \in ne(v)$,
\begin{eqnarray} \label{eq:NeighCondIndep}
\lambda(v) \bot \lambda(V_{v,u}) \vert \lambda(u).
\end{eqnarray}
\end{theorem}

\begin{proof}
For a node $w \in ne(v)$, the Markov condition (\ref{eq:MarkovProp}) reads
\begin{eqnarray*}
\bot \{\lambda(V_{w,u}):u \in ne(w)\} \vert \lambda(w).
\end{eqnarray*}
Then
\begin{eqnarray} \label{eq:DerMarkov}
\lambda(V_{w,v}) \bot \bigvee_{u \in ne(w)/\{v\}} \lambda(V_{w,u}) \vert \lambda(w).
\end{eqnarray}
Note that 
\begin{eqnarray}
\lambda(V_{v,w}) = \bigvee_{u \in ne(w)/\{v\}} \lambda(V_{w,u}) \vee \lambda(w).
\end{eqnarray}
Hence form property C4 of a q-separoid we obtain
\begin{eqnarray*}
\lambda(V_{w,v}) \bot \lambda(V_{v,w}) \vert \lambda(w).
\end{eqnarray*}
Finally, since $\lambda(v) \leq \lambda(V_{w,v})$, we conclude using (\ref{eq:NeighCondIndep}) using C3.
\end{proof}

\begin{theorem}
If $(T,\lambda)$ is a Markov tree, then any subtree $(T_{v,u},\lambda)$ is also a Markov tree.
\end{theorem}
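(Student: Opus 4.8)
The plan is to verify the defining Markov property directly for an arbitrary subtree $T' = (V',E')$ of $T$, carrying the restricted labeling. Fix a node $v \in V'$ and write $ne'(v)$ for its neighbours in $T'$ and $V'_{v,w}$ for the component of $T' - v$ containing $w \in ne'(v)$. Two elementary graph-theoretic facts are needed. First, since $E' \subseteq E$ we have $ne'(v) \subseteq ne(v)$. Second, $V'_{v,w} \subseteq V_{v,w}$: any $u \in V'_{v,w}$ is joined to $w$ by a path in $T' - v$, which is also a path in $T - v$, so $u$ lies in the component $V_{v,w}$ of $T - v$ that contains $w$. Consequently $\lambda(V'_{v,w}) = \bigvee_{u \in V'_{v,w}} \lambda(u) \leq \lambda(V_{v,w})$.

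Next I would record a monotonicity principle for conditional independence: if $a \bot b \vert z$, $a' \leq a$ and $b' \leq b$, then $a' \bot b' \vert z$. This follows by applying C3 to obtain $a \bot b' \vert z$ from $a \bot b \vert z$, then using C2 to pass to $b' \bot a \vert z$, C3 to descend to $b' \bot a' \vert z$, and C2 once more to return to $a' \bot b' \vert z$. Only the weak q-separoid conditions are used here.

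With these tools the argument is immediate. Starting from the Markov property of $T$ at $v$, namely $\bot\{\lambda(V_{v,w}) : w \in ne(v)\} \vert \lambda(v)$, Theorem \ref{th:PropCondIdep}, item 2, lets me restrict to the sub-family indexed by $ne'(v) \subseteq ne(v)$, so that $\bot\{\lambda(V_{v,w}) : w \in ne'(v)\} \vert \lambda(v)$. To verify the Markov property of $T'$ at $v$, fix disjoint subsets $J,K \subseteq ne'(v)$. By the displayed relation, $\bigvee_{w \in J}\lambda(V_{v,w}) \bot \bigvee_{w \in K}\lambda(V_{v,w}) \vert \lambda(v)$. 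Since $\bigvee_{w \in J}\lambda(V'_{v,w}) \leq \bigvee_{w \in J}\lambda(V_{v,w})$ and likewise for $K$, the monotonicity principle yields $\bigvee_{w \in J}\lambda(V'_{v,w}) \bot \bigvee_{w \in K}\lambda(V'_{v,w}) \vert \lambda(v)$. As $J,K$ were arbitrary disjoint subsets of $ne'(v)$, this is precisely $\bot\{\lambda(V'_{v,w}) : w \in ne'(v)\} \vert \lambda(v)$. Since $v \in V'$ was arbitrary and $\lambda$ restricts to the same values on $V'$, $(T',\lambda)$ is a Markov tree.

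The only genuinely non-routine point, and the step I would be most careful about, is the monotonicity principle: because the family-level relation $\bot\{\cdots\}\vert z$ quantifies over all splits $J,K$, one must check that shrinking every label preserves independence split-by-split using only C2 and C3, rather than relying on any stronger separoid axiom. Everything else — the containment of components and the passage to a sub-index-set — is bookkeeping, so no induction on the tree is required; the single uniform estimate $\lambda(V'_{v,w}) \leq \lambda(V_{v,w})$ handles all nodes at once.
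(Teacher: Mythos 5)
Your proof is correct and follows essentially the same route as the paper: the containments $ne'(v) \subseteq ne(v)$ and $V'_{v,w} \subseteq V_{v,w}$ give $\lambda(V'_{v,w}) \leq \lambda(V_{v,w})$, and then the Markov property of $T'$ at $v$ follows from item 2 of Theorem \ref{th:PropCondIdep} together with lowering the labels via C2/C3. Your explicitly proved ``monotonicity principle'' is precisely the content of item 3 of Theorem \ref{th:PropCondIdep} (which the paper cites instead of re-deriving), so the two arguments coincide in substance.
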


\begin{proof}
Assume $T' = (V',E')$ to be a subtree of $T = (V,E)$ and $\lambda'$ the restriction  of $\lambda$ to $V '$. Consider a node $v \in V'$ and let $ne'(v)$ be the set of neighbours of $v$ in $T'$. Also consider subtrees $T'_{v,w} = (V'_{v.w},E'_{v,w})$ obtained after removing node $v$ and the edges incident to it in $T'$. Then $ne'(v) \subseteq ne(v)$ and $V'_{v,w} \subseteq V_{v,w}$ so that $\lambda(V'_{v,w}) \leq \lambda(V_{v,w})$ for all $w \in ne'(v)$ Therefore, from Proposition \ref{prop:ElResMultCondIndep} we conclude that
\begin{eqnarray*}
\bot \{\lambda'(V'_{v,w}):w \in ne'(v)\} \vert \lambda'(v)
\end{eqnarray*}
for all $v \in V'$. This shows that $(T',\lambda')$ is a Markov tree.
\end{proof}

From Markov trees two important derived structures  can be obtained. In a tree $T$ we may select any node $v$ and then number the $n$ nodes $v_i$ for $i = 1,\ldots,n = \vert V \vert$ such that $i < j$ if $v_i$ is on the (unique) path from $v_j$ to $v = v_n$.  Assume such a numbering $v_i$ of nodes in $V$ and define $x_i = \lambda(v_i)$. The set of nodes $\{v_{i+1},\ldots,v_n\}$ together with the all edges in $E$ linking these nodes determine a subtree of $T$. Indeed, there is a path in $T$ from $v_j$, $j > i$ to $v_n$ and it can not pass through any node $h \leq i$. So the subgraph determined by the nodes $\{v_{i+1},\ldots,v_n\}$ is connected, hence a tree. There is exactly one node $v_j \in ne(v_i)$ so that $j > i$. Denote this index $j$ by $b(i)$. Then, by Theorem \ref{th:NeighCondIndep} we have for $i = 1,\ldots,n-2$,
\begin{eqnarray} \label{eq:Hypertree}
x_i \bot \vee_{j = i+1}^n x_j \vert x_{b(i)}.
\end{eqnarray}
This result is defining a hypertree according to the following definition.

\begin{definition} \label{def:Hypertree}  \textbf{Hypertree:}
Let $(Q,\leq,\bot)$ be a q-separoid. An $n$-element subset $S$ of $Q$ is called a hypertree if there is a numbering of its elements $S = \{x_1,\ldots,x_n\}$ such that for all $i = 1,\ldots,n-1$ there are elements $x_{b(i)}$ with $b(i) > i$ such that (\ref{eq:Hypertree}) holds.
\end{definition}
In the literature, a hypergraph is usually defined as a set of subsets of some set of nodes, in other words as a set of elements of a lattice of subsets of a set. In a generalization of this view we take a hypergraph to be a set of elements of some join-semilattice $(Q,\leq)$. The concept of a hypertree given in Definition \ref{def:Hypertree} is then a transcription of the usual definition of a hypertree in the context of subset lattices. Hypertrees in the classical sense are studied for instance in relational database theory, where they are also called acyclic hypergraphs and shown to have desirable properties \cite{beeri81,beeri83,maier83}. In particular, hypertrees are interesting with respect to computational complexity \cite{gottlobleonescarcello99,gottlob99comparison,gottlobleonescarcello01}. These papers treat all hypertrees in the multivariate framework, wheres we take up this issue in the following sections in our more general case of hypertrees in q-separoids.

So, any Markovtree determines a hypertree, even several different hypertrees, according to the numbering of nodes selected. The sequence $x_1,\ldots,x_n$ defining the hypertree is also called a hypertree construction sequences \cite{shafer96}. Any hypertree construction sequence $x_1,\ldots,x_n$ defines a tree $T = (V,E)$ with nodes $V = \{1,\ldots,n\}$ and edges $E = \{\{i,b(i)\},i=1,\ldots,n\}$. In fact, $T$ is connected: if $i$ and $j$ are two nodes, then the node sequence $i,b(i),b(b(i)),\ldots$ and $j,b(j),b(b(j)),\ldots$ define both paths from $i$ and $j$ to $n$ respectively.  And since the number of edges is one less the number of nodes, the graph must be a tree. 

However, the labeling $i \mapsto x_i$ in this tree does not, in general, give a Markov tree. To see this consider a construction sequence $\{x_1,x_2,x_3,x_4\}$ such that $x_1 \bot x_2 \vee x_3 \vee x_4 \vert x_4$ and $x_2 \bot x_3 \vee x_4 \vert x_4$. Then $S = \{x_1,x_2,x_3,x_4\}$ is a hypertree. The construction sequence defines the tree $T = (\{1,2,3,4\},\{\{1,4\},\{2,4\},\{3.4\}\})$. In order for this tree to be a Markov tree we should have $\bot \{x_1,x_x,x_3\} \vert x_4$ and for this to be valid, for instance $x_1 \vee x_2 \bot x_3 \vert x_4$ must hold. But this is not guaranteed by the construction sequence. However, we shall see that if $(Q,\leq)$ is a distributive lattice, then in a q-separoid $(Q,\leq,\bot_L)$ any hypertree defines by the tree obtained from its construction sequence indeed a Markov tree.

Let $(T,\lambda)$ again be a Markov tree and consider two nodes $u$ and $v$. Let $w$ be any node on the (unique) path between $u$ and $v$, different from $u$ and $v$. Let $u'$ and $v'$ be the neighbors of $w$ on the path from $u$ to $w$ and $v$ to $w$ respectively. Then from the Markov property (\ref{eq:MarkovProp}) it follows that
\begin{eqnarray*}
\lambda(V_{w,u'}) \bot \lambda(V_{w,v'}) \vert \lambda(w)
\end{eqnarray*}
and therefore $\lambda(u) \bot \lambda(v) \vert \lambda(w)$. This holds for any node on the path between $u$ and $v$ (including $u$ and $v$ themselves). This is the defining property of another concept.

\begin{definition} \label{def:JoinTree} \textbf{Join tree}
Let $(Q,\leq,\bot)$ be a q-separoid and $(T,\lambda)$ a labeled tree with $T = (V,E)$. If for any pair of nodes $u$ and $v$ and for any node $w$ on the path between $u$ and $v$
\begin{eqnarray} \label{eq:JoinTreeProp}
\lambda(u) \bot \lambda(v) \vert \lambda(w),
\end{eqnarray}
then $(T,\lambda)$ is called a join tree.
\end{definition}

Join trees have been considered in relational database theory  \cite{beeri83,maier83} and, under varying names, also in local computation theory \cite{lauritzenspiegelhalter88,cowelletal99,shenoyshafer90}, but only in the multivariate setting. In this case the cocept of a join tree is also connected with a diversity of graphical modeling tools for representing conditional independence. In the case of a multivariate model, or more generally a commutative information algebra, we have $\lambda(u) \bot_L \lambda(v) \vert \lambda(w)$ if and only if $\lambda(u) \wedge \lambda(v) \leq \lambda(w)$. This is the well-know running intersection property of join trees. In our general case however, $(Q,\leq)$ is not necessarily a lattice, hence meet may not exist, but Definition \ref{def:JoinTree} above catches the essence of the concept of a join tree.

Again, any Markov tree is a join tree, but also again, the converse does not hold. Consider the same tree $T = (\{1,2,3,4\},\{\{1,4\},\{2,4\},\{3.4\}\})$ as above, and assume $x_1 \bot x_2 \vert x_4$, $x_1 \bot x_3 \vert x_4$ and $x_2 \bot x_3 \vert x_4$. Then $T$ labeled with $x_1$ to $x_4$ is a join tree. But the pairwise conditional independence relations are not sufficient to imply $\bot \{x_1.x_2,x_3\} \vert x_4$, except if in the q-separoid $(Q,\leq,\bot_L)$ the lattice $(Q,\leq)$ is \textit{distributive}. In fact, if $(Q,\leq)$ is a distributive lattice, then then the three concepts of a Markov tree, a hypertree and a join tree turn out to be equivalent, a fact that is well-known in the framework of multivariate models. 

Before we prove this result, we show that a hypertree in the q-separoid $(Q,\leq,\bot_L)$ induces always a join tree. It is an open question whether this is true for any q-separoid.

\begin{theorem} \label{th:HyptreeJoinTree}
Let $(Q,\leq)$ be a lattice and $(Q,\leq,\bot_L)$ the associated q-separoid. If $S \subseteq Q$ is a hypertree with construction sequence $x_1,\ldots,x_n$ then the labeled tree $(T,\lambda)$ with $T = (V,E)$, $V = \{1,\ldots,n\}$, $E = \{\{i,b(i)\}:i = 1,\ldots,n-1\}$ and $\lambda(i) = x_i$ is a join tree.
\end{theorem}

\begin{proof}
Consider two nodes $i$ and $j$ and the path between $i$ and $j$. Note that by definition of $T$ $b(i),b(b(i)),\ldots$ is as sequence of neighboring node, starting with the neighbor of $i$, on the path from $i$ to $n$. The same holds for $b(j),b(b(j)),\ldots$, starting with a neighbor of $j$. The two paths from $i$ to $n$ and from $j$ to $n$ meet in a node $h \leq n$, where $i,j \leq h$, $i=h$ or $j=h$ not excluded. We have either $i < j$ or $j < i$ Assume $i < j$. Then there is in the sequence of nodes $b(i),b(b(i)),\ldots$ a first node $i_1$ so that $i_1 \geq j$, $i_1 = j$ not excluded. Further, there is in the sequence $b(j),b(b(j)),\ldots$, a first node $j_1$ so that $j_1 \geq i_1$. If $j_1 \not= h$, then there is a next node $i_2$ in the first sequence so that $i_2 \geq j_1$, then again a node $j_2 \geq i_2$, etc until $h$ is reached.

Now, by the hypertree condition (\ref{eq:Hypertree}) we have, since $i < j$,
\begin{eqnarray*}
x_i \wedge x_j \leq x_i \wedge (\vee_{k = i+1}^n x_k)  \leq (x_i \vee x_{b(i)}) \wedge (\vee_{k = i+1}^n x_k) = x_{b(i)}.
\end{eqnarray*}
If we iterate this argument with $x_i \wedge x_j \leq x_{b(i)} \wedge x_j \leq x_{b(b(i))}$ until $i_1$ is reached, then we can conclude that $x_i \wedge x_j \leq x_k$ for any node $k$ on the path from $i$ to $i_1$. Then using the same argument on
\begin{eqnarray*}
x_{i_1} \wedge x_j \leq x_j \wedge (\vee_{k=j+1}^n x_k) \leq x_{b(j)},
\end{eqnarray*}
and iterating this up to $j_1$, we obtain $x_i \wedge x_j \leq x_k$ for any node on the path from $j$ to $j_1$, Alternating this reasoning between the two paths from $i$ and $j$ to $n$, node $h$ is finally reached and then we $x_i \wedge x_j \leq x_k$ for all nodes on the path from $i$ to $j$. If $j < i$ the same procedure applies. So we have proved that $(T,\lambda)$ is a join tree.
\end{proof}

Now, we can prove the equivalence of the concepts of Markov trees, hyper trees and join trees with respect to a q-seproid $(Q,\leq,\bot_L)$ if $(Q,\leq)$ is a distributive lattice.

\begin{theorem} \label{th:TotEquiv}
Let $(Q,\leq)$ be a distributive lattice and $(Q,\leq,\bot_L)$ the associated q-separoid. If the labeled tree $(T,\lambda)$ with $T = (V,E)$ is a join tree, then
\begin{enumerate}
\item the set $\lambda(V)$ is a hypertree,
\item the labeled tree $(T,\lambda)$ is a Markov tree.
\end{enumerate}
\end{theorem}

\begin{proof}
We need to find a hypertree construction sequence. For this purpose select any node $v \in V$ and let the number of nodes $\vert V \vert = n$. Then there is a numbering of nodes $i : V \rightarrow \{1,\ldots,n\}$ such that $i(v) = n$ and $i(u) < i(w)$ if node $w$ is on the path between nodes $u$ and $v$. Define $x_{i(u)} = \lambda(u)$. We claim that $x_1,x_2,\ldots,x_n$ is a hypertree construction sequence and hence $\lambda(V)$ a hypertree. In order to prove this we identify the nodes with their number in the numbering above and define $b(i) = j$, if $i < j$ and $\{i,j\} \in E$. Note that $b(i)$ is uniquely determined, since there is only one path from $i$ to $n$. Now, by distributivity,
\begin{eqnarray*}
x_i \wedge (\vee_{j=i+1}^n x_j) = \vee_{j=i+1}^n (x_i \wedge x_j).
\end{eqnarray*}
For $i < j$, the path from $i$ to $j$ passes through $b(i)$, so that by the join tree property $x_i \wedge x_j \leq x_{b(i)}$ for all $j=i+1,\ldots,n$. Therefore,
\begin{eqnarray*}
x_i \wedge (\vee_{j=i+1}^n x_j) \leq x_{b(i)}.
\end{eqnarray*}
On the other hand, since $i+1 \leq b(i) \leq n$, we have also
\begin{eqnarray*}
x_i \wedge (\vee_{j=i+1}^n x_j) \geq x_i \wedge x_{b(i)}, 
\end{eqnarray*}
hence 
\begin{eqnarray*}
x_i \wedge (\vee_{j=i+1}^n x_j) = x_i \wedge x_{b(i)} \leq x_{b(i)}, 
\end{eqnarray*}
In a distributive lattice this is equivalent to $x_i \bot_L \vee_{j=i+1}^n x_j \vert x_{b(i)}$. This means that $x_1,x_2,\ldots,x_n$ is indeed a hypertree construction sequence.

Since $(Q,\leq)$ is a distributive lattice, the Markov property (\ref{eq:MarkovProp}) holds if and only if $\lambda(V_{v,w}) \bot_L \lambda(V_{v,u}) \vert \lambda(v)$ for all pairs $u,w$ of distinct neighbours of $v$, as noted above. We claim that these pairwise conditional independence properties hold in a join tree. In fact, by distributivity,
\begin{eqnarray*}
\lefteqn{ \lambda(v) \leq (\lambda(V_{v,w}) \vee \lambda(v)) \wedge (\lambda(V_{v,u}) \vee \lambda(v)) } \\
&&=\left( \bigvee_{w' \in V_{v,w}} \lambda(w') \vee \lambda(v)  \right) \wedge \left( \bigvee_{u' \in V_{v,u}} \lambda(u') \vee \lambda(v)  \right) \\
&&=\left( \bigvee_{w' \in V_{v,w},u' \in V_{v,u}} (\lambda(w') \wedge \lambda(u'))  \right) \\
&&\vee \left( \bigvee_{w' \in V_{v,w}} (\lambda(w') \wedge \lambda(v))  \right) \vee \left( \bigvee_{u' \in V_{v,u}} (\lambda(u') \vee \lambda(v))  \right) \vee \lambda(v) \\
&&\leq \lambda(v),
\end{eqnarray*}
by the join tree property (\ref{eq:JoinTreeProp}) since $v$ is on all paths from nodes $w' \in V_{v,w}$ to nodes $u' \in V_{v,u}$. Therefore we have
\begin{eqnarray*}
\left( \bigvee_{w' \in V_{v,w}} \lambda(w') \vee \lambda(v)  \right) \wedge \left( \bigvee_{u' \in V_{v,u}} \lambda(u') \vee \lambda(v)  \right) = \lambda(v)
\end{eqnarray*}
and this is $\lambda(V_{v,w}) \bot_L \lambda(V_{v,u}) \vert \lambda(v)$. So, $(T,\lambda)$ is a Markov tree.
\end{proof}

In summary, a Markov tree induces a hypertree and is also a join tree. The converse does not hold in general, but for q-separoid $(Q,\leq,\bot:L)$, where $(Q,\leq)$ is a distributive lattice, join trees are Markov tree and hypertrees induce Markov trees. This is true in particular for multivariate models.

.

%%%%%%%%%%%%%%%%%%%%%%%%%%%%%%%%%%%%%%%%%%%%%%%%%%%%%%%%%%%

\section{Markov tree propagation} \label{subsec:MarkovPropag}

A basic computational problem regarding information algebras consists in determining the extraction of information relative to one or several different question from a number of pieces of information. More precisely,, consider a domain-free information algebra $(\Phi,\cdot,0,1;E)$ with $E = \{\epsilon_x:x \in Q\}$. Suppose a family of pieces of information $\phi_1,\ldots,\phi_n$ from $\Phi$ are given and let $\phi = \phi_1 \cdot \ldots \cdot \phi_n$ be the combined information. Then the projection problem consists in computing
\begin{eqnarray*}
\epsilon_x(\phi) = \epsilon_x(\phi_1 \cdot \ldots\cdot \phi_n)
\end{eqnarray*}
for a question $x \in Q$, or for several different questions $x_1,\ldots,x_m$. A corresponding labeled version of the projection problem can also be formulated, and in fact, in computational studies, labeled versions are usually considered. In our general discussion here however, we stick to the domain-free version. 

Any piece of information $\phi_i$ for $i=1,\ldots,n$ has some support $x_i$, if we assume the Support Axiom as we shall do in this section. It is conceivable that the complexity of the basic operations of combination and extraction depend on the support of the pieces of information involved, or the label of them in the labeled view. In a set algebra for instance a piece of information with support $x$ can be seen as a subset of blocks of partition $P_x$ and the coarser the partition, the less space is needed for storing and the less operations are to executed for combination (intersection) or extraction (saturation). The same observation applies to other examples, see Sections \ref{sec:UncertainInf} and \ref{sec:ProbabInf} for instance. So we may assume that a complexity measure $c(x)$ is monotone in the order of $Q$, that is $x \leq y$ implies $c(x) \leq c(y)$. In view of this  the naive solution of the projection problem, where one piece of information after the other is combined becomes problematic, since, if the factors $\phi_i$ have supports $x_i$, the successive combinations have supports $x_1 \vee x_2$, $x_1 \vee x_2 \vee x_3$ up to $x_1 \vee \ldots \vee x_n$ and the operation of combination and extraction become more and more expensive. The solution to this problem consists in so-called \textit{local computation} schemes, where combination and extraction is performed, if possible, only on the supports $x_i$ of the factors of the projection problem. Such a scheme has first been proposed in \cite{lauritzenspiegelhalter88} for probabilistic networks and then extended by \cite{shenoyshafer90} for more general formalisms, especially belief functions. These approaches were however all in the framework of multivariate models. Here we show that local computation schemes are also possible in our more general frame.

The key for this is provided by Markov trees. Consider a Markov tree $(T,\lambda)$ with $T =(V,E)$ such that for any $\phi_i$ of the projection problem with support $x_i$ there is a node $v \in V$ with $\lambda(v) = x_i$. Without loss of generality we may assume that for the projection problem we have
\begin{eqnarray*}
\phi = \prod_{v \in V} \phi_v.
\end{eqnarray*}
In fact, if there are nodes $v$ such that there is no $i$ such that $\lambda(v) = x_i$, then let $\phi_v = 1$ and if there are nodes such that $\lambda(v) = x_i$ for several factor $\phi_i$, then combine them. Further, we assume that in the projection problem $x = x_n$ (or $x \leq x_n$), that is, we want to extract the information of the combination relative to the label of one of the nodes of the Markov tree. If this is originally not the case, then we extend the Markov tree to cover $x$. This and related issues will discussed below at the end of the section.

So we consider now the projection problem
\begin{eqnarray} \label{eq:MarkTreeFact}
\epsilon_{x_n}(\phi) = \epsilon_{x_n}(\prod_{v \in V} \phi_v)
\end{eqnarray}
where $(T,\lambda)$ is a Markov tree and $\phi_v$ has support $\lambda(v)$ for all $v \in V$. Then we call $\prod_{v \in V}$ a \textit{Markov tree factorization}.  The corresponding projection problem has a local computation solution as the following theorem shows. 

\begin{theorem} \label{th:MarkovTreeProj}
Let $(T,\lambda)$ be a Markov tree with $T = (V,E)$ and $\phi$ given by Markov tree factorization (\ref{eq:MarkTreeFact}) according to this Markov tree. Then, for any node $v \in V$
\begin{eqnarray} \label{eq:TreeRecur}
\epsilon_{\lambda(v)}(\phi) = \phi_v \cdot \prod_{u \in ne(v)} \epsilon_{\lambda(v)}(\epsilon_{\lambda(u)}(\phi_{v,u})),
\end{eqnarray}
where
\begin{eqnarray} \label{eq:SubtreeFact}
\phi_{v,u} = \prod_{w \in V_{v,u}} \phi_w
\end{eqnarray}
and $V_{v,u}$ is the node set of the subtree $T_{v,u}$ rooted in the neighbor node $u$ of $v$ obtained by eliminating node $v$ from $T$.
\end{theorem}

\begin{proof}
Note that $\lambda(V_{v,u})$ is a support of $\phi_{v,u}$ as defined in (\ref{eq:SubtreeFact}) and by Theorem \ref{th:NeighCondIndep} $\lambda(v) \bot \lambda(V_{v,u}) \vert \lambda(u)$. Therefore we have
\begin{eqnarray*}
\epsilon_{\lambda(v)}(\phi_{v,u}) = \epsilon_{\lambda(v)}(\epsilon_{\lambda(u)}(\phi_{v,u}))
\end{eqnarray*}
Further,
\begin{eqnarray*}
\phi = \phi_v \cdot \prod_{u \in ne(v)} \phi_{v,u}.
\end{eqnarray*}
By property C1 of a q-separoid $\lambda(v) \bot \vee_{u \in ne(v)} \lambda(V_{v,u}) \bot \lambda(v)$, and therefore
\begin{eqnarray*}
\epsilon_{\lambda(v)}(\phi) = \epsilon_{\lambda(v)}(\phi_v) \cdot \epsilon_{\lambda(v)}(\prod_{u \in ne(v)} \phi_{v,u}).
\end{eqnarray*}
From the Markov property (\ref{eq:MarkovProp}) it follows that 
\begin{eqnarray*}
\epsilon_{\lambda(v)}(\prod_{u \in  ne(v)} \phi_{v,u}) = \prod_{u \in ne(v)} \epsilon_{\lambda(v)}(\phi_{v,u}).
\end{eqnarray*}
Finally, $\lambda(v)$ is a support of $\phi_v$, such that, if we combine the last identity with the former one, we obtain
\begin{eqnarray*}
\epsilon_{\lambda(v)}(\phi) = \phi_v \cdot \prod_{u \in ne(v)} \epsilon_{\lambda(v)}(\epsilon_{\lambda(u)}(\phi_{v,u})),
\end{eqnarray*}
which concludes the proof.
\end{proof}

Formula (\ref{eq:TreeRecur}) defines a tree recursion on the tree $T$ since the subtrees $T_{v,u}$ are again Markov trees. The operations occurring in this formula are a combination on label $\lambda(v)$ and and extractions on labels $\lambda(u)$. In this sense Theorem \ref{th:MarkovTreeProj} establishes a local computation scheme.

Once the projection of $\phi$ to the root $v$ has been computed, the projection of $\phi$ to any other node of the Markov tree can be obtained, provided the intermediate results $\epsilon_{\lambda(u)}(\epsilon_{\lambda(w)}(\phi_{u,w}))$ have been cached during the recursion. Indeed we have for $u \in ne(v)$,
\begin{eqnarray*}
\epsilon_{\lambda(u)}(\phi_{u,v}) = \phi_v \cdot \prod_{w \in ne(v),w \not= u} \epsilon_{\lambda(v)}(\epsilon_{\lambda(w)}(\phi_{v,w})).
\end{eqnarray*}
Then, using this, and the cached intermediate results of the recursion, according to Theorem \ref{th:MarkovTreeProj} we obtain with node $u$ as the new root
\begin{eqnarray*}
\epsilon_{\lambda(u)}(\phi) = \phi_u \cdot \prod_{w \in ne(u)} \epsilon_{\lambda(u)}(\epsilon_{\lambda(w)}(\phi_{u,w})).
\end{eqnarray*}
In this way we can work backwards the tree until the projections of $\phi$ has been obtained for all nodes. In the following section, an equivalent, but more systematic non-recursive computational scheme will be proposed.

In the case of a commutative information algebra, we note that (\ref{eq:TreeRecur}) simplifies slightly to 
\begin{eqnarray*}
\epsilon_{\lambda(v)}(\phi) = \phi_v \cdot \prod_{u \in ne(v)} \epsilon_{\lambda(v) \wedge \lambda(u)}(\phi_{v,u}).
\end{eqnarray*}

The question arises whether there is a Markov tree for any projection problem, and how to find it. The second question is, to the best of our knowledge, an open question. In the multivariate case there is a huge body of literature on methods to find a good join tree. It is not possible at this place to survey it. But the approaches in the multivariate can not easily be transported to our present more general case because they depend in some way or other to a successive elimination of variables and on graphical methods. But we want to make a few observations. Supports $x_i$ of factors $\phi_i$ of a projection problem are not unique. For instance any $x'_i \geq x_i$ is also a support of $\phi_i$. So if $x_1,\ldots,x_n$ may not define a Markov tree, may be some larger $x'_1,\ldots,x'_n$ do. In fact trivially, the one node tree $\{v\}$ with label $x = x_1 \vee \ldots x_n$ is a Markov tree for the projection problem, albeit of course not a very usefull one. It may also be that some $x''_i \leq x_i$ is still a support of $\phi_i$ and such smaller domains $x''_1,\ldots,x''_n$ may define a Markov tree. This would then be a desirable situation, since it reduces the complexity of computation. So there may be a multitude of Markov tree factorizations for a giver problem and the questions is how to find a good or even best one.

%%%%%%%%%%%%%%%%%%%%%%%%%%%%%%%%%%%%%%%%%%%%%%%%%%%%%%%%%%%

\section{Computation in a hypertree} \label{subsec:CompHypTree}

Local computation schemes are also available relative to a hypertree. We reconsider the projection problem
\begin{eqnarray*}
\phi = \phi_1 \cdot \dots \cdot \phi_n,
\end{eqnarray*}
where the $\phi_i$ have supports $x_i$ for $i=1,\ldots,n$. We suppose now that $x_1,\ldots,x_n$ is a hypertree construction sequence and we want to compute
\begin{eqnarray*}
\epsilon_{x_n}(\phi) = \epsilon_{x_n}(\phi_1 \cdot \dots \cdot \phi_n).
\end{eqnarray*}
In order to construct a local computation scheme, let's try to eliminate the factors $\phi_1,\phi_2,\ldots$ one after the other. To eliminate $\phi_1$ means to extract the information for $x_2 \vee \ldots \vee x_n$ from the the combination $\phi$. So define, more generally
\begin{eqnarray*}
y_i = x_{i+1} \vee \ldots \vee x_n
\end{eqnarray*}
for $i=1,\ldots,n-1$. Let's start to compute $\epsilon_{y_1}(\phi)$, that is
\begin{eqnarray*}
\epsilon_{y_1}(\phi) = \epsilon_{y_1}(\phi_1 \cdot \phi_2 \cdot \ldots \cdot \phi_n) = \epsilon_{y_1}(\phi_1) \cdot \phi_2 \cdot \ldots \cdot \phi_n,
\end{eqnarray*}
since $\phi_2 \cdot \ldots \cdot \phi_n$ has support $y_1$. The hypertree condition $x_1 \bot y_1 \vert x_{b(1)}$, see (\ref{eq:Hypertree}) implies $\epsilon_{y_1}(\phi_1) = \epsilon_{y_1}(\epsilon_{x_{b(1)}}(\phi_1))$ and therefore 
\begin{eqnarray*}
\epsilon_{y_1}(\phi) = \epsilon_{y_1}(\epsilon_{x_{b(1)}}(\phi_1)) \cdot \phi_2 \cdot \ldots \cdot \phi_n.
\end{eqnarray*}
Since $x_{b(1)} \leq y_1$, we conclude that
\begin{eqnarray*}
\epsilon_{y_1}(\phi)  
= \epsilon_{x_{b(1)}}(\phi_1) \cdot \phi_2 \cdot \ldots \cdot \phi_n.
\end{eqnarray*}
Define $\psi_i^1 =: \phi_i$ and then $\psi_{b(1)}^2 =: \psi_{b(1)}^1 \cdot \epsilon_{x_{b(1)}}(\psi_1^1)$ and $\psi_i^2 =: \psi_i^1$ for $i=2,\ldots,n$, $i \not= b(1)$. Note that all $\psi_i^2$ have still support $x_i$ for all $i$ from $2$ to $n$. So, we obtain a new factorization after elimination of $\phi_1$,
\begin{eqnarray*} 
\epsilon_{y_1}(\phi) = \psi_2^2 \cdot \ldots \cdot \psi_n^2.
\end{eqnarray*}

We may now proceed in exactly the same way to eliminate $\psi_2^2,\psi_3^3 \ldots$ etc. By induction lets assume
\begin{eqnarray} \label{eq:RecFact}
\epsilon_{y_{i-1}}(\phi) = \psi_i^i \cdot \ldots \cdot \psi_n^i.
\end{eqnarray}
and each $\psi_j^i$ has support $x_j$. Since $y_i \leq y_{i-1}$ we have $\epsilon_{y_i} = \epsilon_{y_i}\epsilon_{y_{i-1}}$. Now we eliminate $\psi_i^i$ from (\ref{eq:RecFact}) in the same was as we did above and obtain
\begin{eqnarray*}
\epsilon_{y_i}(\phi)  
&=& \epsilon_{y_i}(\psi_i^i \cdot \psi^i_{i+1} \ldots \cdot \psi_n^i) \\
&=& \epsilon_{y_i}(\epsilon_{x_{b(i)}}(\psi_i^i)) \cdot \psi_{i+1}^i \cdot \ldots \cdot \psi_n^i \\
&=&= \epsilon_{x_{b(i)}}(\psi_i^i )\cdot  \psi_{i+1}^i \cdot \ldots \cdot \psi_n^i.
\end{eqnarray*}
Define
\begin{eqnarray} \label{eq:IndStep}
\psi_{b(i)}^{i+1} =: \psi_{b(i)}^i \cdot \epsilon_{x_{b(i)}}(\psi_i^i).
\end{eqnarray}
and $\psi_j^{i+1} =: \psi_j^i$ for $j=i+1,\ldots,n$, $j \not= b(i)$. Then we obtain the new factorzation
\begin{eqnarray*}
\epsilon_{y_i}(\phi) = \psi_{i+1}^{i+1} \cdot \ldots \cdot \psi_n^{i+1},
\end{eqnarray*}
where the fastors $\psi_j^{i+1}$ again we still have the old supports $x_j$ for $j=i+1,\dots,n$. This concludes the induction step. At the end, for $i = n-1$, we obtain
\begin{eqnarray*}
\epsilon_{x_n}(\phi) = \psi_n^n.
\end{eqnarray*}
This solves the projection problem on the hypertree $\{x_1,\ldots,x_n\}$ similar to the Markov tree propagation. And it does so by local computation: in any step (\ref{eq:IndStep}) we extract on domain $x_i$ and combine on domain $x_{b(i)}$ and this for $i=1$ up to $i=n-1$.

We may in a second step also compute $\epsilon_{x_i}(\phi)$ for $i = n-1,\ldots,1$. This is formulated in the following theorem.

\begin{theorem}
Let $x_1,\ldots,x_n$ be a hypertree construction sequence and $\psi_i^i$ for $i = n.\ldots,1$ be as defined during the algorithm as described above. Then, for $i=n-1,\ldots,1$ 
\begin{eqnarray} \label{eq:BackIt}
\epsilon_{x_i}(\phi) = \epsilon_{x_i}(\epsilon_{x_{b(i)}}(\phi)) \cdot \psi_i^i.
\end{eqnarray}
\end{theorem}

\begin{proof}
As before, define $y_i = x_{i+1} \vee \ldots \vee x_n$. Since $x_{b(i)} \leq y_i$ 
\begin{eqnarray*}
\epsilon_{x_i}(\epsilon_{x_{b(i)}}(\phi)) \cdot \psi_i^i = \epsilon_{x_i}(\epsilon_{x_{b(i)}}(\epsilon_{y_i}(\phi))) \cdot \psi_i^i .
\end{eqnarray*}
Since $x_1,\ldots,x_n$ is a hypertree construction sequence, we have $x_i \bot y_i \vert x_{b(i)}$, hence, using (\ref{eq:RecFact}) 
\begin{eqnarray*}
\epsilon_{x_i}(\epsilon_{x_{b(i)}}(\phi)) \cdot \psi_i^i = \epsilon_{x_i}(\epsilon_{y_i}(\phi)) \cdot \psi_i^i 
= \epsilon_{x_i}(\psi_{i+1}^{i+1} \cdot \ldots \cdot \psi_n^{i+1}) \cdot \psi_i^i.
\end{eqnarray*}
Using (\ref{eq:IndStep}) we obtain further
\begin{eqnarray*}
\epsilon_{x_i}(\epsilon_{x_{b(i)}}(\phi)) \cdot \psi_i^i = \epsilon_{x_i}(\psi_i^i \cdot \psi_{i+1}^i \cdot \ldots \cdot \psi_n^i \cdot \epsilon_{x_{b(i)}}(\psi_i^i)).
\end{eqnarray*}
By idempotency we have $\psi_i^i \cdot \epsilon_{x_{b (i)}}(\psi_i^i) = \psi_i^i$. Therefore it follows that 
\begin{eqnarray*}
\psi_i^i \cdot \psi_{i+1}^i \cdot \ldots \cdot \psi_n^i \cdot \epsilon_{x_{b(i)}}(\psi_i^i) = \psi_i^i \cdot \psi_{i+1}^i \cdot \ldots \cdot \psi_n^i.
\end{eqnarray*}
From this we obtain finally
\begin{eqnarray*}
\epsilon_{x_i}(\epsilon_{x_{b(i)}}(\phi)) \cdot \psi_i^i = \epsilon_{x_i}(\psi_i^i \cdot \psi_{i+1}^i \cdot \ldots \cdot \psi_n^i) = \epsilon _{x_i}(\epsilon_{y_{i-1}}(\phi)) = \epsilon_{x_i}(\phi),
\end{eqnarray*}
since $x_i \leq y_{i-1}$. This concludes the proof.
\end{proof}

According to this theorem, once $\epsilon_{x_n}(\phi)$ has been computed by the scheme above, the other extractions $\epsilon_{x_i}(\phi)$ for $i=n-1,\ldots,1$ can be computed in this inverse order of the construction sequence. At step $i$ the extraction $\epsilon_{x_j}(\phi)$ is known for all $j \geq i$ and then by (\ref{eq:BackIt}) $\epsilon_{x_{i-1}}(\phi)$ can be computed since $b(i-1) \geq i$. The problem of how to find a hypertree construction sequence for a given projection problem is similar to the one for Markov tree. It is an open question in our general framework. For the multivariate case all reduces to find a join tree, and for this good methods are known, see the end of Section \ref{subsec:MarkovPropag}.

%%%%%%%%%%%%%%%%%%%%%%%%%%%%%%%%%%%%%%%%%%%%%%%%%%%%%%%%%%%%

%%%%%%%%%%%%%%%%%%%%%%%%%%%%%%%%%%%%%%%%%%%%%%%%%%%%%%%%%%%%

\chapter{Finite information} \label{sec:FiniteInf}

%%%%%%%%%%%%%%%%%%%%%%%%%%%%%%%%%%%%%%%%%%%%%%%%%%%%%%%%%%%%

\section{Compact information algebras} \label{subsec:CompInfAlg}

In information processing only ``finite'' pieces of information can be handled. ``Infinite'' pieces of information can however possibly be approximated by ``finite'' ones. This aspect of finiteness will be addressed in this section, although it must be stressed that not all aspects of it will be treated. For instance no questions of computability and related issues will be considered. On the other hand, many aspects of finiteness as discussed here are also considered in \textit{domain theory}, in fact much of this section is motivated by domain theory. However, the one critical issue not addressed in domain theory is the one of information extraction. Also domain theory places almost exclusively emphasis on order and approximation, whereas combination is neglected. So, although the subject is similar to domain theory, it is treated here with a somewhat different focus. 

Consider a domain-free information algebra $(\Phi,\cdot,0,1;E)$ with $E = \{\epsilon_x:x \in Q\}$. In the set $\Phi$ of pieces of information we want to single out a subset of elements to be considered as finite. An important role for this task play directed sets in the ordered set $(\Phi,\leq)$, where $\leq$ denotes the information order, see Section \ref{sec:InfOrder}. A subset $D$ of $\Phi$ is called \textit{directed} if it is not empty, and if with any two elements $\phi_1$ and $\phi_2$ in $D$, there is an element $\phi \in D$ which dominates both, $\phi_1,\phi_2 \leq \phi$. Directed subsets $D$ of $\Phi$ are used to define convergence. The limit of the directed set $D$ is its supremum $\bigsqcup D$, where the symbol $\bigsqcup$ indicates the supremum of a directed set. So, a directed subset $D$ of $\Phi$ is said to converge in $\Phi$ if $\bigsqcup D$ exists in $\Phi$.

Now let's single out a subset $\Phi_f$ of $\Phi$ of elements which we consider as finite elements. We require first that finite elements are closed under combination, and we consider that the neutral element $1$ and the null elements are finite,  and that all directed subsets $D$ of $\Phi_f$ converge, that is $\bigsqcup D$ exists and is an element of $\Phi$. But we want more: Any element $\phi$ of $\Phi$ should be approximated by the finite elements it dominates, that is
\begin{eqnarray*}
\phi = \bigsqcup \{\psi \in \Phi_f:\psi \leq \phi\}.
\end{eqnarray*}
This means that the finite elements $\Phi_f$ are \textit{dense} in $\Phi$. We require an even stronger property, namely that any element $\phi$ of $\Phi$ with support $x$ should be approximated by finite elements with the same support,
\begin{eqnarray*}
\phi = \bigsqcup \{\psi \in \Phi_f:\psi \leq \phi,\epsilon_x(\psi) = \psi\},
\end{eqnarray*}
if $\phi$ has support $x$. So the finite elements of $\Phi$ must be dense in the subalgebra $\epsilon_x(\Phi)$. This we call \textit{Local Density}.

But this does not yet characterize finiteness sufficiently. One thing which follows from density is that if $\phi$ is finite, then it belongs itself to the directed set of finite elements approximating it. This is certainly an important property of finiteness, but again, we need more. We may possibly approximate an element $\phi$ by a directed set $D$ of finite elements which is smaller than the set of all finite elements dominated by $\phi$, $\phi = \bigsqcup D$. Then, if $\psi$ is a finite element such that $\psi \leq \bigsqcup D$, there must be an element $\phi \in D$ such that $\psi \leq \phi$. This we call \textit{compactness}. As we shall see below this is closely related to the compactness property in order theory, \cite{daveypriestley97}. So, in summary, we require the set $\Phi_f$ of finite elements in $\Phi$ to satisfy the following properties:
\begin{enumerate}
\item \textit{Combination:} If $\psi_1,\psi_2 \in \Phi_f$, then $\psi_1 \cdot \psi_2 \in \Phi_f$, and $0,1 \in \Phi_f$,
\item \textit{Convergence:} If $D \subseteq \Phi_f$ is a directed set, then $\bigsqcup D$ exists and belongs to $\Phi$.
\item \textit{Local Density:} For all $\phi \in \Phi$.
\begin{eqnarray*}
\epsilon_x(\phi) = \bigsqcup \{\psi \in \Phi_f:\psi \leq \phi,\epsilon_x(\psi) = \psi\}.
\end{eqnarray*}
\item \textit{Compactness:} If $D \subseteq \Phi_f$ is a directed set and $\psi \in \Phi_f$ such that $\psi \leq \bigsqcup D$, then there is a $\phi \in D$ such that $\psi \leq \phi$.
\end{enumerate}

A system $(\Phi,\Phi_f,\cdot,0.1;E)$, where $(\Phi,\cdot,0,1;E)$ is a domain-free information algebra and $\Phi_f$ a subset of $\Phi$ satisfying the four conditions above, is called a \textit{compact information algebra}. Note that local density implies density, if the Support axiom is valid, since any element $\phi \in \Phi$ has then a support $x$ so that by Local density and Convergence,
\begin{eqnarray*}
\phi = \epsilon_x(\phi) = \bigsqcup \{\psi \in \Phi_f:\psi \leq \phi,\epsilon_x(\psi) = \psi\} \leq
\bigsqcup \{\psi \in \Phi_f:\psi \leq \phi\} \leq \phi.
\end{eqnarray*}
The converse however does not hold in general, density does not imply local density. Note that any finite information algebra $\Phi$ is trivially compact with $\Phi_f = \Phi$. Here follow for illustration two simple examples, string algebras and set algebras of convex sets. Further examples can be found in Sections \ref{sec:UncertainInf} and \ref{sec:ProbabInf}.

In a string algebra (see Section \ref{subsec:AtomicAlg}), the finite elements are finite strings. A directed set in this algebra is a monotone sequence of (finite) strings, where each string is prefix of a following one. The supremum of such a directed set $D$ of strings is then the shortest string, finite or infinite, such that all elements of $D$ are prefix of it. This shows that the Convergence, the Local density and the Compactness property are all valid.

Convex sets in a linear space like $\mathbb{R}^n$ are set algebras in a multivariate setting. Indeed intersection of convex sets are convex and cylindrification of convex sets yields convex sets. The finite elements here are convex polyhedra. The approximation of a convex set by convex polyhedra is from the outside, by polyhedra containing the convex set. 

Here follows a main result about compact information algebras.

\begin{theorem} \label{th:AlgLatt}
Let $(\Phi,\Phi_f,\cdot,0.1;E)$ be a compact information algebra. Then
\begin{enumerate}
\item $(\Phi,\leq)$ is a complete lattice under information order,
\item An element $\psi \in \Phi$, $\psi \not= 0$, belongs to $\Phi_f$ if and only if for every directed subset $D$ of $\Phi$, $\psi \leq \bigsqcup D$ implies there is a $\phi \in D$ such that $\psi \leq \phi$,
\item An element $\psi \in \Phi$ belongs to $\Phi_f$ if and only if for all subsets $X$ of $\Phi$, $\psi \leq \bigvee X$ implies there is a finite subsets $Y$ of $X$ such that $\psi \leq \bigvee Y$.
\end{enumerate}
\end{theorem}

\begin{proof}
The proof follows the one given in \cite{kohlas03}. Let $X$ be any non-empty subset of $\Phi$. Define $Y$ to be the set of finite elements smaller than all elements of $X$, $Y = \{\psi \in \Phi_f:\psi \leq \phi\}$. This set is not empty, because $1$ is a finite element. Then $Y$ is a directed set, since if $\psi_1$ and $\psi_2$ belong to $Y$, then $\psi_1,\psi_2 \leq \psi_1 \cdot \psi_2 \in Y$ by the Combination property. By the Convergence property the supremum $\bigsqcup Y$ exists and it is a lower bound of $X$. We claim that $\bigsqcup Y$ is the infimum of $X$. In fact, assume $\chi$ to be a lower bound of $X$. Then by the Density property $\chi = \bigsqcup \{\psi \in \Phi_f:\psi \leq \chi\}$ and $\{\psi \in \Phi_f:\psi \leq \chi\}$ is a subset of $Y$. Therefore we have $\chi \leq \bigsqcup Y$ so that indeed $\bigsqcup Y = \bigwedge X$.

Since $(\Phi,\leq)$ has a top element $0$, the set $Y$ of all elements greater than those of $X$ is not empty, and thus it has an infimum $\bigwedge Y$, which is an upper bound of $X$. But then this infimum must be the supremum of $X$, which shows that $(\Phi,\leq)$ is indeed a complete lattice. This is a standard result of lattice theory, see \cite{daveypriestley97}.

To prove 2.) assume first that $\psi \in \Phi_f$ and that $D \subseteq \Phi$ is a directed set such that $\psi \leq \bigsqcup D$. Define
\begin{eqnarray*}
Y = \{\chi \in \Phi_f:\exists \phi \in D \textrm{ such that}\ \chi \leq \phi\}.
\end{eqnarray*}
Since $D$ is directed so is $Y$. Let now $\eta$ be an element of $D$. Then the set $\{\chi \in \Phi_f:\chi \leq \eta\}$ is contained in $Y$, hence $\eta = \bigsqcup \{\chi \in \Phi_f:\chi \leq \eta\} \leq \bigsqcup Y$ which shows that $\bigsqcup Y$ is an upper bound of $D$. Therefore we conclude that $\psi \leq \bigsqcup D \leq \bigsqcup Y$. By the Compactness property there must then be an element $\chi \in Y$ such that $\psi \leq \chi$ and by the definition of $Y$ there is a $\phi \in D$ such that $\chi \leq \phi$, hence $\psi \leq \phi$. 

For the converse assume that for all directed subsets $D$ of $\Phi$ if $\psi \leq \bigsqcup D$, $\psi \in \Phi_f$, then there is a $\phi \in D$ such that $\psi \leq \phi$. Consider then the directed set $\{\chi \in \Phi_f:\chi \leq \psi\}$. Since $\psi = \bigsqcup \{\chi \in \Phi_f:\chi \leq \psi\}$,, hence $\psi \leq \bigsqcup \{\chi \in \Phi_f:\chi \leq \psi\}$, there must be a $\phi \in  \{\chi \in \Phi_f:\chi \leq \psi\}$ such that $\psi \leq \phi$. But on the other hand $\phi \in \{\chi \in \Phi_f:\chi \leq \psi\}$ implies $\phi \leq \psi$ so that $\phi = \psi$ and thus $\psi$ belongs to $\Phi_f$.

The third assertion follows from from the previous one by the following observation: Let $X$ be any subset of $\Phi$ and define
\begin{eqnarray*}
Z = \{\vee Y:Y \subseteq X,Y \textrm{ finite}\}.
\end{eqnarray*}
We claim that $Z$ is directed and $\bigvee X = \bigsqcup Z$. Indeed, $1$ belongs to $Z$, since $1 = \vee \emptyset$. If $Y_1$ and $Y_2$ are finite subsets of $X$ then $Y_1 \cup Y_2$ is finite too, is a subset of $X$ and $\vee (Y_1 \cup Y_2) \in Z$ is an upper bound of $\vee Y_1$ and $\vee Y_2$ in $Z$. So, $Z$ is directed. Clearly we have $\bigsqcup Z \leq \bigvee X$, since for all elements $\vee Y$ of $Z$, $\vee Y \leq \bigvee X$. But on the other hand, $X$ is contained in $Z$, since $\phi = \vee \{\phi\}$ for all $\phi \in X$. Hence we obtain $\bigvee X \leq \bigsqcup Z$, hence $\bigvee X = \bigsqcup Z$.

Assume then that $\psi \in \Phi_f$ and $\psi \leq \bigvee X = \bigsqcup Z$. By item 2 just proved, there is a $\phi \in Z$ such that $\psi \leq \phi = \vee Y$ for some finite subset $Y$ of $X$. Conversely, assume $\psi \leq \bigvee X = \bigsqcup Z$ and that $Y$ is a finite subset of $X$ such that $\psi \leq \vee Y$. Since $Z$ is directed and $\vee Y \in Z$, it follows by item 2 that $\psi \in \Phi_f$ and this concludes the proof.
\end{proof}

As an application, the following result shows that the extraction operators $\epsilon_x$ are continuous maps of a compact information algebra into itself, see Section \ref{subsec:Mappings} for more about continuous maps.

\begin{theorem} \label{th:ContOfExtr}
If $(\Phi,\Phi_f,\cdot,0,1;E)$ with $E = \{\epsilon_x:x \in Q\}$ is a compact information algebra, and $D$ a directed subset of $\Phi$, then
\begin{eqnarray*}
\epsilon_x(\bigsqcup D) = \bigsqcup_{\phi \in D} \epsilon_x(\phi).
\end{eqnarray*}
\end{theorem}

\begin{proof}
If $\phi \in D$, then $\phi \leq \bigsqcup D$, so that $\epsilon_x(\phi) \leq \epsilon_x(\bigsqcup D)$, hence $\epsilon_x(\bigsqcup D)$ is an upper bound for the extractions $\epsilon_x(\phi)$ for $\phi \in D$, $\bigsqcup_{\phi \in D} \epsilon_x(\phi) \leq \epsilon_x(\bigsqcup D)$. By Density we have
\begin{eqnarray*}
\epsilon_x(\bigsqcup D) = \bigsqcup \{\psi \in \Phi_f:\psi = \epsilon_x(\psi) \leq \epsilon_x(\bigsqcup D)\}.
\end{eqnarray*}
Now, $\psi = \epsilon_x(\psi) \leq \epsilon_x(\bigsqcup D) \leq \bigsqcup D$ implies, using Theorem \ref{th:AlgLatt}, that there is a $\phi \in D$ such that $\psi \leq \phi$. Then we obtain $\psi = \epsilon_x(\psi) \leq \epsilon_x(\phi)$,  hence $\epsilon_x(\bigsqcup D) \leq \bigsqcup_{\phi \in D} \epsilon_x(\phi)$. Therefore we conclude that $\epsilon_x(\bigsqcup D) = \bigsqcup_{\phi \in D} \epsilon_x(\phi)$.
\end{proof}

Note that by Theorem \ref{th:AlgLatt}, finite elements are determined by information order alone. Elements which satisfy item 2 of this theorem are called finite in order theory, \cite{daveypriestley97}. So our concept of finiteness corresponds to the one of order theory. Elements, satisfying item 3 of the theorem are called compact in order theory, and our finite elements are therefore also compact elements in this sense. It is well-known that finiteness and compactness coincide in complete lattices, \cite{daveypriestley97}. Finite elements in the order-theoretic sense are also closed under combination. This follows since if $\psi_1$ and $\psi_2$ are finite and $D$ is a directed set, such that $\psi_1,\psi_2 \leq \bigsqcup D$, then $\psi_1 \cdot \psi_2 \leq \bigsqcup D$ and there exist elements $\phi_1,\phi_2 \in D$ such that $\psi_1 \leq \phi_1$ and $\psi_2 \leq \phi_2$. Since $D$ is directed, there is an element $\phi \in X$ such that $\phi_1,\phi_2 \leq \phi$, hence $\psi_1 \cdot \psi_2 \leq \phi_1 \cdot \phi_2 \leq \phi$, so that indeed $\psi_1 \cdot \psi_2$ are finite according to the order-theoretic sense. A complete lattice satisfying density is called \textit{algebraic}. So in a compact information algebra, $(\Phi,\leq)$ is an \textit{algebraic lattice}.

A few words on the finiteness of the null element are in order. Assume that the combination of finite elements may result in the null element, as for example the combination of two strings without a common prefix in the string algebra or the intersection of two convex polyhedra yielding the empty set. By the argument above, in this case the null element must be finite. Also, a directed set $D$ containing two incompatible elements must also contain $0$ and $\bigsqcup D = 0$ in this case. This concurs with the property of finite elements that if $\phi = \bigsqcup D$ is a finite element, then $\phi$ must belong to $D$. 

It turns out that elements in $\Phi$ with support $x$ are finite if and only if if they are finite in the subalgebra $\epsilon_x(\Phi)$.

\begin{proposition} \label{prop:ExtrFinite}
If $(\Phi,\Phi_f,\cdot,0,1;E)$ is a compact information algebra, then for all $x \in Q$ an element $\psi$ with support $x$ is finite in $\Phi$, $\psi \in \Phi_f$,  if and only if it is finite in $\epsilon_x(\Phi)$, that is $\psi \in (\epsilon_x(\Phi))_f$.
\end{proposition}

\begin{proof}
Consider first a finite element $\psi \in \Phi_f$ with support $x$ and a directed set $D$ in $\epsilon_x(\Phi)$ such that $\psi \leq \bigsqcup D$. We have then
\begin{eqnarray*}
\psi = \epsilon_x(\psi) \leq \bigsqcup D = \bigsqcup_{\phi \in D} \epsilon_x(\phi).
\end{eqnarray*}
Obviouly, the set $D$ is also directed in $\Phi$. Therefore, $\psi \leq \bigsqcup D$ implies that there is a $\phi = \epsilon_x(\phi) \in D$ such that $\psi \leq \phi$. By Theorem \ref{th:AlgLatt} this proves then that $\psi$ is also finite in the subalgebra $\epsilon_x(\Phi)$, that is $\psi \in (\epsilon_x(\Phi))_f$.

Conversely, assume $\psi = \epsilon_x(\psi)$ to be finite in $\epsilon_x(\Phi)$, that is $\psi \in (\epsilon_x(\Phi))_f$. Consider a directed set $D$ in $\Phi$ such that $\psi \leq \bigsqcup D$. Then by continuity of extraction, Theorem \ref{th:ContOfExtr}, it follows
\begin{eqnarray*}
\psi = \epsilon_x(\psi) \leq \epsilon_x(\bigsqcup D) = \bigsqcup_{\phi \in D} \epsilon_x(\phi).
\end{eqnarray*}
The set $\{\epsilon_x(\phi):\phi \in D\}$ is directed in $\epsilon_x(\Phi)$. So, since $\psi$ is finite in $\epsilon_x(\Phi)$, there is an element $\phi \in D$ such that $\psi \leq \epsilon_x(\phi) \leq \phi$ (Theorem \ref{th:AlgLatt}). But this implies also that $\psi$ is finite in $\Phi$, $\psi \in \Phi_f$.
\end{proof}

Compact information algebras may be obtained from any domain-free information algebra by ideal completion. Recall that $\Phi$ is embedded in its ideal completion $I_\Phi$ by the map $\phi \mapsto \downarrow\!\phi$ (see Section \ref{subsec:IdealExt}) so that $\Phi$ may be considered as a subalgebra of $I_\Phi$. In this sense, the elements of $\Phi$ or rather its images $\downarrow\!\phi$ are the finite elements of $I_\Phi$.

\begin{theorem} \label{th:AlgIdealCompl}
If $(\Phi,\cdot,0,1;E)$ with $E = \{\epsilon_x:x \in Q\}$ is a domain-free information algebra, then its ideal completion $(I_\Phi,\cdot,\{1\},\Phi;E)$ is a compact information algebra with $\Phi$ as its finite elements. 
\end{theorem}

\begin{proof}
The ideal completion $I_\Phi$ of an information algebra $\Phi$ is itself an information algebra, although one where the support axiom is not  necessarily valid. It remains to show that the principal ideals $\downarrow\!\phi$ for $\phi \in \Phi$ are its finite elements. We know that the combination of two principal ideals $\downarrow\!\phi$ and $\downarrow\!\psi$ is the principal ideal $\downarrow\!(\phi \cdot \psi)$.

To simplify notation we identify the image of $\Phi$ by the embedding $\phi \mapsto \downarrow\!\phi$ with $\Phi$. We have seen that $I_\Phi$ is a complete lattice under inclusion, that is under information order. In particular we have $\bigvee X = I(X)$ for any subset $X$ of $\Phi$. So Convergence holds.  

Further, we have for an ideal $I$ in $I_\Phi$
\begin{eqnarray*}
\epsilon_x(I) = \{\psi \in \Phi:\psi \leq \epsilon_x(\phi) \textrm{ for some}\ \phi \in I\}.
\end{eqnarray*}
We need to show that $\epsilon_x(I) = \bigvee X = I(X)$ for the set $X = \{\phi \in \Phi:\phi = \epsilon_x(\phi) \leq I\}$. Suppose first $\psi \in I(X)$ such that
\begin{eqnarray*}
\psi \leq \phi_1 \cdot \ldots \phi_n = \epsilon_x(\phi_1) \cdot \ldots \epsilon_x(\phi_n) \leq I.
\end{eqnarray*}
So we have $\psi \leq \epsilon_x(\phi)$ for some $\phi \in I$, hence $\psi \in \epsilon_x(I)$ and $I(X) \subseteq \epsilon_x(I)$. Conversely assume $\psi \in \epsilon_x(I)$, that is $\psi \leq \epsilon_x(\phi)$ for some $\phi \in I$. But then we have $\epsilon_x(\phi) \in I$. From this we conclude that $\psi \in I(X)$, since $\epsilon_x(\phi)$ has support $x$. This shows $\epsilon_x(I) = I(X) = \bigvee X$, hence local density.

To show Compactness, consider a directed subset $D$ of $\Phi$, and an element $\psi \in \Phi$ so that $\psi \leq \bigsqcup D$ in $I_\Phi$. Denote $\bigsqcup D = I(D)$ by $I$, $I = I(D)$. Then $\psi \in I$, hence
\begin{eqnarray*}
\psi \leq \phi_1 \cdot \ldots \cdot \phi_n \textrm{ for some}\ \phi_1,\ldots\phi_n \in D.
\end{eqnarray*}
Since $D$ is directed, there is some element $\phi \in D$ such that $\phi_1,\ldots,\phi_n \leq \phi$, thus $\psi \leq  \phi_1 \cdot \ldots \cdot \phi_n \leq \phi$. This is Compactness.
\end{proof}

Recall that in general the support axiom is not satisfied in $I_\Phi$, unless for example $(Q,\leq)$ has a top element. If the Support axiom does not hold in $I_\Phi$, global density does not necessarily follow from local density. 

Above we said that the finite elements of a compact information are fully determined by the information order. Conversely, the compact algebra is fully determined by its finite elements, as the following theorem shows.

\begin{theorem} \label{th:IdCompFiniteEl}
Let $(\Phi,\Phi_f,\cdot,0,1;E)$ be a compact information algebra with finite elements $\Phi_f$. Then the ideal completion $I_{\Phi_f}$ of the finite elements is a compact information algebra isomorphic to $\Phi$.
\end{theorem}

\begin{proof}
If $\Phi_f$ is closed under all extractions, then $\Phi_f$ is a subalgebra of $\Phi$, and then it follows form Theorem \ref{th:IdExt} that $I_{\Phi_f}$ is an information algebra (possibly not satisfying the support axiom). But even if $\Phi_f$ is not closed under extractions, its ideal extension is still an information algebra as we shall prove first. Note that $(\Phi_f,\leq)$ is partially ordered under information order restricted to $\Phi_f$.  So ideals in $\Phi_f$ are well defined. We define first combination among ideals of $\Phi_f$ as before by
\begin{eqnarray*}
I_1 \cdot I_2 = \{\phi \in \Phi_f:\exists \phi_1 \in I_1,\phi_2 \in I_2 \textrm{ such that}\ \phi \leq \phi_1 \cdot \phi_2\}.
\end{eqnarray*}
The ideals of $\Phi_f$ form still a $\cap$-system, hence a complete lattice with combination as join. Note that inclusion of ideals corresponds to information order. So $I_{\phi_f}$ is a commutative semigroup. with $\{1\}$ as unit and $\Phi_f$ as null element. 

Next, for any $x \in Q$ we define an extraction operator
\begin{eqnarray*}
\epsilon_x(I) = \{\phi \in \Phi_f:\exists \psi \in I \textrm{ such that}\ \phi \leq \epsilon_x(\psi)\}.
 \end{eqnarray*}
Clearly, $\epsilon_x(I)$ is still an ideal in $\Phi_f$. Now, we show that the operators $\epsilon_x$ for all $x \in Q$ are existential quantifiers. Obviously $\epsilon_x(\Phi_f) = \phi_f$ and if $I_1 \subseteq I_2$, then $\epsilon_x(I_1) \subseteq \epsilon_x(I_2)$. It remain to show that $\epsilon_x(\epsilon_x(I_1) \cdot I_2) = \epsilon_x(I_1) \cdot \epsilon_x(I_2)$. But this can be shown exactly as in the proof of Theorem \ref{th:IdExt}. This shows that the ideals of $\Phi_f$ form a domain-free information algebra, although possibly without satisfying the support axiom (but see remark below, after the proof). As in the previous Theorem \ref{th:AlgIdealCompl} it can be shown that this algebra is compact. 

Let $A_\phi = \{\psi \in \Phi_f:\psi \leq \phi\}$ for every $\phi \in \Phi$. This is an ideal in $\Phi_f$. We consider the map $\phi \mapsto A_\phi$, which maps $\Phi$ to $I_{\Phi_f}$. We show that this is an information algebra isomorphism. First, the map is onto $I_{\Phi_f}$: Consider any ideal of $\Phi_f$. Then the supremum of $I$ exists in $\Phi$, since the algebra $\Phi$ is compact. Let $\phi = \bigsqcup I$ and consider any element in $\psi \in \Phi_f$ such that $\psi \leq \phi$. Then, by compactness, there is an element $\chi \in I$ dominating $\psi$. This implies $\psi \in I$, hence $A_\phi \subseteq I$, and this shows that $I = A_\phi$, since by Density $\phi = \bigsqcup A_\phi \leq \bigsqcup I = \phi$. The map is also injective, since $A_\Phi = A_\psi$ implies, again by Density that $\phi = \psi$. Therefore, the map is bijective. 

We show further that it is a homomorphism. For two elements $\phi$ and $\psi$ from $\Phi$, clearly $A_{\phi \cdot \psi}$ contains $A_\phi$ and $A_\psi$, hence also $A_\phi \cdot A_\psi = I(A_\phi \cup A_\psi) \subseteq A_{\phi \cdot \psi}$. On the other hand, if $I$ is an ideal in $\Phi_f$ which contains $A_\phi$ and $A_\psi$, then, since the map is surjective, there is an element $\chi$ in $\Phi$ such that $I = A_\chi$, hence $\phi,\psi \leq \chi$ and $\phi \cdot \psi \leq \chi$. Therefore, if $\eta \in A_{\phi \cdot \psi}$, that is $\eta \leq \phi \cdot \psi \leq \chi$, we conclude that $\eta \in I$, hence $A_{\phi \cdot \psi} \subseteq I$. So we have $A_{\phi \cdot \psi} \subseteq A_\phi,A_\psi$. But this implies $A_{\phi \cdot \psi} = A_\phi \cdot A_\psi$, hence $A_{\phi \cdot \psi} = A_\phi \cdot A_\psi$..  Further, $A_1 = \{1\}$ and $A_0 = \Phi_f$. So, unit and null are preserved too. 

Finally, for any $x \in Q$, we have by definition
\begin{eqnarray*}
\epsilon_x(A_\phi) = \{\psi \in \Phi_f:\exists\chi \in A_\phi \textrm{ such that}\ \psi \leq \epsilon_x(\chi)\}.
\end{eqnarray*}
Since $\epsilon_x(\chi) \leq \epsilon_x(\phi)$, it follows that $\epsilon_x(A_\phi) \subseteq A_{\epsilon_x(\phi)}$. Consider then conversely an element $\psi \in A_{\epsilon_x(\phi)}$, that is  $\psi \leq \epsilon_x(\phi)$ and $\psi \in \Phi_f$. From $\phi = \bigsqcup A_\phi$ and from Theorem \ref{th:ContOfExtr} we have
\begin{eqnarray*}
\epsilon_x(\phi) = \bigsqcup A_{\epsilon_x(\phi)} = \epsilon_x(\bigsqcup A_\phi) = \bigsqcup_{\chi \in A_\phi} \epsilon_x(\chi).
\end{eqnarray*}
The set $\{\epsilon_x(\chi):\chi \in A_\phi\}$ is directed. By Compactness there is then an element $\eta \in A_\phi$ such that $\psi \leq \epsilon_x(\eta)$. But this means that $\psi \in \epsilon_x(A_\phi)$. So we conclude that $A_{\epsilon_x(\phi)} = \epsilon_x(A_\phi)$. The map $\phi \mapsto A_\phi$ is therefore a bijective information algebra homomorphism, hence the information algebras $\Phi$ and $I_{\Phi_f}$ are isomorphic, This concludes the proof.
\end{proof}

This is a representation theorem for compact information algebras, asserting that the algebra is fully determined by its finite elements. We remark that from the isomorphism between $\Phi$ and $I_{\Phi_f}$ it follows that the support axiom holds also in the ideal completion $I_{\Phi_f}$, since $\phi = \epsilon_x(\phi)$ implies $A_\phi = A_{\epsilon_x(\phi)} = \epsilon_x(A_\phi)$.

To conclude this section, we remark that if $\Phi$ and $\Psi$ are isomorphic information algebras and $\Phi$ is compact, then so is $\Psi$. More precisely, we have the following result.

\begin{proposition} \label{prop:IsomCompactAlg}
If $(\Phi,\Phi_f,\cdot,0,1;E_1)$ is a compact information algebra, $(\Psi,\cdot,0,1;E_2)$ an information algebra and $\Phi$ and $\Psi$ are isomorphic under the map $f : \Phi \rightarrow \Psi$, then $\Psi$ is compact too with finite elements $\Psi_f = f(\Phi_f)$.
\end{proposition}

\begin{proof}
We verify that $\Psi_f$ satisfies the defining properties of finite elements in $\Psi$, that is Combination, Convergence, Local Density and  Compactness.

Consider two elements $\psi_1,\psi_2 \in \Psi_f$. Then $\psi_1 = f(\phi_1)$ and $\psi_2 = f(\phi_2)$ and $\phi_1,\phi_2 \in \Phi_f$. It follows that $\psi_1 \cdot \psi_2 = f(\phi_1) \cdot f(\phi_2) = f(\phi_1 \cdot \phi_2) \in \Psi_f$, since $\phi_1 \cdot \phi_2 \in \Phi_f$. So \textit{Combination} is valid in $\Psi_f$.

Next let $D \subseteq \Psi_f$ be a directed subset of $\Psi_f$. Recall that the inverse map $f^{-1}$ is also an isomorphism (see Section \ref{sec:AlgNotions}). Consider the  subset $f^{-1}(D)$ of $\Phi_f$. It is directed in $\Phi_f$, since for $\phi_1,\phi_2 \in D$, we have $\phi_1 = f^{-1}(\psi_1)$ and  $\phi_2 = f^{-1}(\psi_2)$ with $\psi_1,\psi_2 \in D$. Then there is a $\psi \in D$ such that $\psi_1,\psi_2 \leq \psi$ and therefore $\phi_1,\phi_2 \leq f^{-1}(\psi) \in f^{-1}(D)$. Now, the supremum $\phi = \bigsqcup f^{-1}(D)$ exists in $\Phi$. But then $f(\phi)$ is the supremum of $D$ in $\Psi$. Obviously $f(\phi)$ is an upper bound of $D$ and if $\psi$ is another upper bound of $D$, then $f^{-1}(\psi)$ is an upper bound of $f^{-1}(D)$, hence $f^{-1}(\psi) \geq \phi$ and therefore $\psi \geq f(\phi)$. This proves \textit{Convergence} for $\Psi_f$.

Next let $\psi$ be any element with support $x$ in $\Psi$ and consider the set $\{\psi' \in \Psi_f: \epsilon^2_x(\psi') = \psi' \leq \psi\}$.
Apply the map $f^{-1}$ to this set to obtain the set $\{\phi' \in \Phi_f: \epsilon^1_x(\phi') = \phi' \leq \phi\}$, where $\phi' = f^{-1}(\psi')$ and $\phi = f^{-1}(\psi)$. This works since $\epsilon^1_x(f^{-1}(\psi')) = f^{-1}(\epsilon^2_x(\psi')) = f^{-1}(\psi') = \phi'$, support is preserved by $f$ and $f^{-1}$. So we have also $\phi = \epsilon^1_x(\phi)$ and therefore by Local Density in $\Phi$,
\begin{eqnarray*}
\epsilon^1_x(\phi) ) = \bigsqcup \{\phi' \in \Phi_f: \epsilon^1_x(\phi') = \phi' \leq \phi\}
\end{eqnarray*}
It follows by applyng the map $f$ that
\begin{eqnarray*}
\epsilon^2_x(\psi) ) = \bigsqcup \{\psi' \in \Psi_f: \epsilon^2_x(\psi') = \psi' \leq \psi\}
\end{eqnarray*}
and this shows that \textit{Local Density} is valid in $\Phi$.

Finally let $D \subseteq \Psi_f$ again be a directed set in $\Psi_f$, and $\psi \in \Psi_f$ such that $\psi \leq \bigsqcup D$. Then $f^{-1}(\psi) \leq \bigsqcup f^{-1}(D)$ and the set $f^{-1}(D)$ is directed in $\Phi_f$ as seen above. Then there is a $\phi' \in f^{-1}(D)$ such that $\phi' \leq f^{-1}(\psi)$ and therefore $f(\phi') \leq \psi$ and $f(\phi') \in D$. This is \textit{Compactness} in $\Psi$. 

So the set $\Psi_f$ represents indeed the finite elements in $\Psi$ and the information algebra $\Psi$ is compact. This concludes the proof.
\end{proof}

%%%%%%%%%%%%%%%%%%%%%%%%%%%%%%%%%%%%%%%%%%%%%%%%%%%%%%%%%%%%

\section{Continuous information algebras}

The notion of approximation can be somewhat weakened. This leads to a generalisation of the concept of compact information algebras. The present section is partially based on \cite{guanli10}. The basic notion in this section is the way-below relation in an ordered set.

\begin{definition} \textbf{Way-Below.} 
Let $(\Phi;\leq)$ be a partially ordered set. For $\phi,\psi \in \Phi$ we write $\psi \ll \phi$ and say $\psi$ is way-below $\phi$, if for every directed set $D \subseteq \Phi$, for which the supremum exists, $\phi \leq \bigsqcup D$ implies that there is an element $\chi \in D$ such that $\psi \leq \chi$.
\end{definition}

Note that $\phi$ is a finite element if and only if $\phi \ll \phi$. The following lemma lists some well-known elementary results on the way-below relation, see for instance \cite{gierz03}.

\begin{lemma} \label{le:Way-Below}
Let $(\Phi;\leq)$ be a partially ordered set. Then the following holds for $\phi,\psi \in \Phi$
\begin{enumerate}
\item $\psi \ll \phi$ implies $\psi \leq \phi$,
\item $\psi \ll \phi$ and $\phi \leq \chi$ imply $\psi \ll \chi$,
\item $\chi \leq \psi$ and $\psi \ll \phi$ imply $\chi \ll \phi$.
\item $\chi \ll \psi$ and $\psi \ll \phi$ imply $\chi \ll \phi$.
\end{enumerate}
 \end{lemma}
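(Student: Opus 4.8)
The statement to prove is Lemma \ref{le:Way-Below}, listing four elementary properties of the way-below relation $\ll$ on a partially ordered set $(\Psi;\leq)$. These are standard results from domain theory, and the proofs are direct consequences of the definition of $\ll$ together with the notion of a directed set. The plan is to treat each item in turn, exploiting the fact that for a finite or trivial directed set the way-below condition collapses to an ordinary comparison.

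For item 1, I would use the simplest possible directed set, namely the singleton $X = \{\phi\}$, which is trivially directed with supremum $\phi$. Since $\phi \leq \vee X = \phi$, the hypothesis $\psi \ll \phi$ yields an element $\chi \in X$ with $\psi \leq \chi$; but $\chi$ must be $\phi$, so $\psi \leq \phi$. For item 2, suppose $\psi \ll \phi$ and $\phi \leq \chi$. Let $X$ be any directed set whose supremum exists and satisfies $\chi \leq \vee X$. Then by transitivity of $\leq$ we have $\phi \leq \vee X$, and $\psi \ll \phi$ supplies an element $\eta \in X$ with $\psi \leq \eta$; this is exactly the condition establishing $\psi \ll \chi$. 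Item 3 is similar: assume $\chi \leq \psi$ and $\psi \ll \phi$, and let $X$ be a directed set with existing supremum and $\phi \leq \vee X$. Then $\psi \ll \phi$ gives an $\eta \in X$ with $\psi \leq \eta$, and since $\chi \leq \psi \leq \eta$ we obtain $\chi \leq \eta$, which establishes $\chi \ll \phi$.

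For item 4, interpolation of two way-below relations, assume $\chi \ll \psi$ and $\psi \ll \phi$. Let $X$ be a directed set with existing supremum and $\phi \leq \vee X$. Applying $\psi \ll \phi$ first yields an element $\eta \in X$ with $\psi \leq \eta$. Now I would invoke item 3 just proved: from $\chi \ll \psi$ and $\psi \leq \eta$ we conclude $\chi \ll \eta$ (this is precisely item 2 applied with the roles reading $\chi \ll \psi$ and $\psi \leq \eta$). Then, using item 1, $\chi \ll \eta$ implies $\chi \leq \eta$, and since $\eta \in X$, the way-below condition for $\chi \ll \phi$ is satisfied. Alternatively, and perhaps more cleanly, once $\eta \in X$ with $\psi \leq \eta$ is found, item 2 gives $\chi \ll \eta$ directly from $\chi \ll \psi$, and item 1 then gives $\chi \leq \eta$, completing the argument.

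None of these steps presents a genuine obstacle; the only point requiring a little care is making sure each argument starts from an \emph{arbitrary} directed set $X$ whose supremum exists and correctly verifies the existential conclusion against that same $X$. The slightly more delicate item is the last one, where the cleanest route is to reuse the earlier items rather than argue from scratch; I would explicitly cite items 1 and 2 (or 3) within the proof of item 4 to keep it short. Since all four properties follow mechanically from the definition, I expect the entire proof to be only a few lines long.
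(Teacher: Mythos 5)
Your proof is correct; note that the paper itself offers no proof of this lemma, merely citing standard domain-theory references, and your argument is precisely the standard one obtained by unfolding the definition of $\ll$ against an arbitrary directed set whose supremum exists. Two small remarks: in item 4 you first write ``item 3'' before correcting yourself to item 2 (item 2 is indeed the right citation there), and item 4 admits an even shorter route that avoids unfolding the definition at all --- item 1 gives $\chi \leq \psi$, and then item 3 applied to $\chi \leq \psi \ll \phi$ yields $\chi \ll \phi$ at once.
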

 
We are of course interested in the way-below relation in case that $(\Phi,\cdot,0,1;E)$ with $E = \{\epsilon_x:x \in Q\}$ is a domain-free information algebra, that is, $(\Phi,\leq)$ is a semilattice under information order. Then the way-below relation has some additional properties.

\begin{lemma} \label{le:Way-Below-InfAlg}
 Let $(\Phi,\cdot,0,1;E)$ be a domain-free information algebra. Then
 \begin{enumerate}
 \item $1 \ll \phi$ for all $\phi \in \Phi$.
 \item $\psi_{1},\psi_{2} \ll \phi$ implies $\psi_{1} \vee \psi_{2} = \psi_1 \cdot \psi_2 \ll \phi$ for all $\psi_{1},\psi_{2} \in \Phi$.
 \item The set $\{\psi \in \Phi:\psi \ll \phi\}$ is an ideal for all $\phi \in \Phi$.
 \item $\psi \ll \phi$ if and only if for all $X \subseteq \Phi$ such that $\bigvee X$ exists and $\phi \leq \bigvee X$, there is a finite subset $F$ of $X$ such that $\psi \leq \bigvee F$. 
 \end{enumerate}
 \end{lemma}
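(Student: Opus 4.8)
The plan is to prove the four items in order, using the earlier items to establish the later ones, since they build on each other. The basic tools are the definition of the way-below relation, the properties already collected in Lemma \ref{le:Way-Below}, and the fact that in an idempotent domain-free information algebra combination coincides with join, so $\psi_1 \cdot \psi_2 = \psi_1 \vee \psi_2$ (Section \ref{subsec:IdempotCase}).

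For item 1, I would observe that the unit $1$ is the least element of $(\Psi;\leq)$, since $1 \leq \phi$ for all $\phi$ by $1 \cdot \phi = \phi$. Hence for any directed $X$ with existing supremum and any $\phi \leq \bigvee X$, I may pick any element $\chi$ of the nonempty set $X$ and conclude $1 \leq \chi$; this is exactly $1 \ll \phi$. For item 2, suppose $\psi_1,\psi_2 \ll \phi$ and let $X$ be directed with $\bigvee X$ existing and $\phi \leq \bigvee X$. Applying the way-below property to each of $\psi_1$ and $\psi_2$ yields elements $\chi_1,\chi_2 \in X$ with $\psi_1 \leq \chi_1$ and $\psi_2 \leq \chi_2$; directedness of $X$ then gives a $\chi \in X$ dominating both, whence $\psi_1 \vee \psi_2 \leq \chi$. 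This shows $\psi_1 \vee \psi_2 \ll \phi$, and the identification $\psi_1 \vee \psi_2 = \psi_1 \cdot \psi_2$ is just idempotency.

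Item 3 is then a direct consequence: writing $D_\phi = \{\psi : \psi \ll \phi\}$, downward closure follows from Lemma \ref{le:Way-Below} item 3 (if $\chi \leq \psi$ and $\psi \ll \phi$ then $\chi \ll \phi$), and closure under join is exactly item 2 just proved; together with $1 \in D_\phi$ from item 1 (so $D_\phi$ is nonempty), this makes $D_\phi$ an ideal in the sense of Section \ref{subsec:idCompl}. The main work is item 4, the finite-subset characterisation. For the forward direction, given $\psi \ll \phi$ and $X \subseteq \Psi$ with $\bigvee X$ existing and $\phi \leq \bigvee X$, I would pass to the set $\widehat{X}$ of all finite joins of elements of $X$. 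This $\widehat{X}$ is directed (the join of two finite joins is again a finite join), and one needs $\bigvee \widehat{X} = \bigvee X$; since every element of $X$ lies in $\widehat{X}$ and every element of $\widehat{X}$ is below $\bigvee X$, both sides have the same upper bounds, so the suprema coincide. Then $\phi \leq \bigvee \widehat{X}$ and $\psi \ll \phi$ furnish a $\chi \in \widehat{X}$ with $\psi \leq \chi$; but $\chi$ is by construction a finite join $\bigvee F$ for some finite $F \subseteq X$. The converse direction is easier: if the finite-subset condition holds and $X$ is in particular \emph{directed} with $\phi \leq \bigvee X$, then the finite $F$ with $\psi \leq \bigvee F$ has, by directedness, an upper bound $\chi \in X$, so $\bigvee F \leq \chi$ and $\psi \leq \chi$, giving $\psi \ll \phi$.

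The step I expect to be the genuine obstacle is the verification in item 4 that $\bigvee \widehat{X} = \bigvee X$, i.e.\ that enlarging $X$ to its finite joins does not change the supremum, and in particular that $\bigvee \widehat{X}$ exists when $\bigvee X$ does. The clean way around this is the upper-bound argument: a bound of $X$ bounds every finite join of its elements (since joins are least upper bounds), so $X$ and $\widehat{X}$ have identical sets of upper bounds and hence the same least upper bound whenever one exists. With that equality in hand the rest of item 4 is routine, and the directedness of $\widehat{X}$ is immediate from closure of $\Psi$ under binary join.
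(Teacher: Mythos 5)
Your proposal is correct and follows essentially the same route as the paper's own proof: items 1--3 by the same direct arguments, and item 4 by passing to the directed set of finite joins of $X$ and showing it has the same supremum via the identical upper-bound argument. No gaps; the minor point about the existence of $\bigvee \widehat{X}$ is handled exactly as in the paper.
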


\begin{proof}
(1) Let $D \subseteq \Phi$ be a directed set, and $\phi \leq \bigsqcup D$. Since $D$ is non-empty, there is a $\psi \in D$ and $1 \leq \psi$, hence $1 \ll \phi$.

(2) Assume $\psi_{1},\psi_{2} \ll \phi$. Consider any directed set $D \subseteq \Phi$ such that $\phi \leq \bigsqcup D$. Then there exist elements $\chi_{1},\chi_{2} \in D$ so that $\psi_{1} \leq \chi_{1}$ and $\psi_{2} \leq \chi_{2}$. Since $D$ is directed, there is also an element $\chi \in D$ so that $\chi_{1},\chi_{2} \leq \chi$. But then, $\psi_{1} \vee \psi_{2} \leq \chi_{1} \vee \chi_{2} \leq \chi$. This shows that $\psi_{1} \vee \psi_{2} \ll \phi$.

(3) Assume $\psi \ll \phi$ and $\chi \leq \psi$. Then by Lemma \ref{le:Way-Below} (3) $\chi \ll \phi$. Further let $\psi_{1} \ll \phi$ and $\psi_{2} \ll \phi$. By (2) just proved, $\psi_{1} \vee \psi_{2} \ll \phi$. Hence $\{\psi \in \Phi:\psi \ll \phi\}$ is an ideal.

(4) Suppose first that $\psi \ll \phi$. Let $X$ be a subset of $\Phi$ such that $\bigvee X$ exists and $\phi \leq \bigvee X$. Let $Y$ be the set of all joins of finite subsets of $X$. Then $X \subseteq Y$ and $\bigvee X$ is an upper bound for $Y$. Let $\chi$ be another upper bound of $Y$. Then $\chi$ is an upper bound of $X$, hence $\bigvee X \leq \chi$. So $\bigvee X$ is the supremum of $Y$, $\bigvee X = \bigvee Y$. Furthermore $Y$ is a directed set. So there is an element $\eta = \vee F \in Y$  for some finite subset $F$ of $X$, such that $\psi \leq \eta = \vee F$.

Conversely, consider elements $\psi,\phi \in \Phi$ such that condition 4 of the lemma holds. Let $D$ be a directed subset of $\Phi$ such that $\bigsqcup D$ exists and $\phi \leq \bigsqcup D$. There is then by assumption a finite subset $F$ of $D$ such that $\psi \leq \vee F$. Since $D$ is directed, there is a $\chi \in D$ such that $\vee F \leq \chi$, hence $\psi \leq \chi$. So $\psi \ll \phi$. 
\end{proof} 

With the aid of the way-below relation, algebraic information algebras can be alternatively characterized.

\begin{theorem} \label{th:AltCharCompInfAlg}
If $(\Phi,\cdot,0,1;E)$ is a domain-free information algebra, then the following conditions are equivalent:
\begin{enumerate}
\item $(\Phi,\Phi_f,\cdot,0,1;E)$ is a compact information algebra with finite elements $\Phi_{f}$.
\item $(\Phi;\leq)$ is an algebraic lattice with finite elements $\Phi_{f}$ and $\forall x \in D$, $\forall \phi \in \Phi$
\begin{eqnarray} \label{eq:Way-Below-Dens}
\epsilon_{x}(\phi) = \bigsqcup \{\psi \in \Phi_{f}:\psi = \epsilon_{x}(\psi) \ll \phi\}.
\end{eqnarray}
\end{enumerate}
\end{theorem}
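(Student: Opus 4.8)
The plan is to reduce the whole equivalence to one order-theoretic observation about the way-below relation restricted to finite elements, after which both implications become a matter of rewriting the same join under two interchangeable index sets. The observation is: for any $\psi \in \Psi_{f}$ and any $\phi \in \Psi$ one has $\psi \ll \phi$ if and only if $\psi \leq \phi$. Indeed, $\psi \in \Psi_{f}$ means $\psi \ll \psi$, so $\psi \leq \phi$ together with Lemma \ref{le:Way-Below}.2 gives $\psi \ll \phi$; conversely $\psi \ll \phi$ forces $\psi \leq \phi$ by Lemma \ref{le:Way-Below}.1. Consequently, for every $x \in D$ and $\phi \in \Psi$,
\begin{eqnarray*}
\{\psi \in \Psi_{f}:\psi = \epsilon_{x}(\psi) \ll \phi\} = \{\psi \in \Psi_{f}:\psi = \epsilon_{x}(\psi) \leq \phi\}.
\end{eqnarray*}
I will also use the elementary facts that $\epsilon_{x}$ is order-preserving and deflationary, $\epsilon_{x}(\phi) \leq \phi$ (Section \ref{subsec:IdempotCase}), and idempotent, $\epsilon_{x}(\epsilon_{x}(\phi)) = \epsilon_{x}(\phi)$ (Lemma \ref{le:SuppProp}.1). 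These yield, for $\psi = \epsilon_{x}(\psi) \in \Psi_{f}$, the equivalence $\psi \leq \phi \Leftrightarrow \psi \leq \epsilon_{x}(\phi)$: forward by applying $\epsilon_{x}$ and using $\psi = \epsilon_x(\psi)$, backward since $\epsilon_{x}(\phi) \leq \phi$.

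For the direction (1)$\Rightarrow$(2), I would first recall that an algebraic information algebra is by Definition \ref{def_AlgebInfAlg} a compact algebra whose carrier $(\Psi;\leq)$ is a dcpo. Since $\Psi$ is closed under finite joins, the dcpo property upgrades it to a complete lattice by the standard lattice-theoretic argument already invoked in the text before Definition \ref{def_AlgebInfAlg}; strong density (\ref{eq:StrongDens}) implies weak density (\ref{eq:WeakDens}), so $(\Psi;\leq)$ is an algebraic lattice with finite elements $\Psi_{f}$. To obtain (\ref{eq:Way-Below-Dens}) I apply strong density not to $\phi$ but to $\epsilon_{x}(\phi)$, which is $\epsilon_{x}$-invariant, giving
\begin{eqnarray*}
\epsilon_{x}(\phi) = \bigvee\{\psi \in \Psi_{f}:\psi = \epsilon_{x}(\psi) \leq \epsilon_{x}(\phi)\}.
\end{eqnarray*}
By the two equivalences recorded above the index set here equals $\{\psi \in \Psi_{f}:\psi = \epsilon_{x}(\psi) \leq \phi\}$ and in turn $\{\psi \in \Psi_{f}:\psi = \epsilon_{x}(\psi) \ll \phi\}$, which is exactly (\ref{eq:Way-Below-Dens}).

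For (2)$\Rightarrow$(1), an algebraic lattice is complete, hence a dcpo, so it only remains to recover strong density and thereby compactness. Fix $\phi$ with $\phi = \epsilon_{x}(\phi)$; then the left-hand side of (\ref{eq:Way-Below-Dens}) is $\phi$, and by the finite-element equivalence the way-below index set coincides with $\{\psi \in \Psi_{f}:\psi = \epsilon_{x}(\psi) \leq \phi\}$, so (\ref{eq:Way-Below-Dens}) reads precisely as the strong density identity (\ref{eq:StrongDens}). Thus $(\Psi,D;\leq,\bot,\cdot,\epsilon)$ is compact, and being a dcpo it is algebraic with finite elements $\Psi_{f}$.

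The proof carries no deep difficulty; the only point demanding care is bookkeeping around the extraction operator. The statement (\ref{eq:Way-Below-Dens}) is phrased for an arbitrary $\phi$ (with $\epsilon_{x}(\phi)$ on the left), whereas strong density (\ref{eq:StrongDens}) is phrased only for $\epsilon_{x}$-invariant elements; the one genuine step is to see that these two formulations carry the same information, which is exactly what the equivalences $\psi \leq \phi \Leftrightarrow \psi \leq \epsilon_{x}(\phi)$ (for $\epsilon_{x}$-invariant finite $\psi$) and $\psi \ll \phi \Leftrightarrow \psi \leq \phi$ (for finite $\psi$) deliver. Everything else is the routine passage between dcpo-plus-finite-joins and complete lattice, and between strong and weak density, all of which is quoted from earlier in the text.
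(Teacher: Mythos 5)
Your proof is correct and is essentially the paper's argument: both directions reduce to the observation that, by Lemma \ref{le:Way-Below}, the relations $\psi \ll \phi$ and $\psi \leq \phi$ coincide for $\psi \in \Psi_{f}$, so the index sets appearing in (\ref{eq:StrongDens}) and (\ref{eq:Way-Below-Dens}) are literally the same. Where you differ is only in packaging, and mildly to your advantage: you make explicit the equivalence $\psi \leq \phi \Leftrightarrow \psi \leq \epsilon_{x}(\phi)$ for $\epsilon_{x}$-invariant $\psi$, which the paper's first displayed equality (strong density invoked at an arbitrary $\phi$ rather than at $\epsilon_{x}(\phi)$) uses tacitly, and in the direction (2)$\Rightarrow$(1) you verify Definitions \ref{def:DomFreeCompact} and \ref{def_AlgebInfAlg} directly, whereas the paper routes through Theorem \ref{th:AltDefAlgInfAlg} with $\Psi' = \Psi_{f}$ --- an equivalent but less self-contained path, since that theorem's full proof is deferred to the literature.
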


\begin{proof}
(1) $\Rightarrow$ (2): By Theorem \ref{th:AlgLatt},  $(\Phi;\leq)$ is an algebraic lattice, that is a complete lattice with finite elements $\Phi_{f}$. Then condition (\ref{eq:Way-Below-Dens}) follows from Local density and Lemma \ref{le:Way-Below} in the following way,
\begin{eqnarray}
\epsilon_{x}(\phi) &=&\bigsqcup \{\psi \in \Phi_{f}:\psi = \epsilon_{x}(\psi) \leq \phi\}
\nonumber \\
&=&\bigsqcup \{\psi:\psi \ll \psi = \epsilon_{x}(\psi) \leq \phi\}
\nonumber \\
&=&\bigsqcup \{\psi:\psi \ll \psi = \epsilon_{x}(\psi) \ll \phi\}
\nonumber \\
&=&\bigsqcup \{\psi \in \Phi_{f}:\psi = \epsilon_{x}(\psi) \ll \phi\}.
\nonumber
\end{eqnarray}

(2) $\Rightarrow$ (1): We verify the definition of a compact information algebra in Section \ref{subsec:CompInfAlg}. We have seen that in an algebraic lattice, the finite elements are closed under join, hence Combination is valid. Convergence holds, since $(\Phi;\leq)$ is a complete lattice, Density follows from (\ref{eq:Way-Below-Dens}) since $\psi \ll \phi$ implies $\psi \leq \phi$ and Compactness follows from the lattice-theoretic finiteness.
\end{proof}

Another important property of finite elements in a compact information algebra is given by the following theorem:

\begin{theorem} \label{th:SepFinEl}
If $(\Phi,\Phi_f,\cdot,0,1;E)$ is a compact domain-free information algebra, then $\psi \ll \phi$ implies that here is an element $\chi \in \Phi_{f}$ so that $\psi \leq \chi \leq \phi$.
\end{theorem}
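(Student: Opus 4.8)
The plan is to use the strong density property of compact information algebras (Definition \ref{def:DomFreeCompact}) to find finite elements that lie between $\psi$ and $\phi$. The statement $\psi \ll \phi$ means that for every directed set whose supremum dominates $\phi$, some element of the directed set already dominates $\psi$. So the natural strategy is to exhibit a directed set of finite elements whose supremum is $\phi$ itself, then invoke the way-below relation.

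First I would take some support $x \in D$ of $\phi$, which exists by the Support Axiom A2, so that $\phi = \epsilon_x(\phi)$. By strong density (\ref{eq:StrongDens}) we have
\begin{eqnarray*}
\phi = \epsilon_x(\phi) = \bigvee\{\chi \in \Psi_f : \chi = \epsilon_x(\chi) \leq \phi\}.
\end{eqnarray*}
The crucial point is that this set of finite elements is \emph{directed}: if $\chi_1,\chi_2$ are finite elements with support $x$ dominated by $\phi$, then $\chi_1 \cdot \chi_2$ is again finite (the finite elements are closed under combination, as shown after Definition \ref{def:FiniteEl}), it has support $x$ by Lemma \ref{le:SuppProp} item 2, and $\chi_1 \cdot \chi_2 = \chi_1 \vee \chi_2 \leq \phi$ since $\phi$ dominates both. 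Hence the set $X = \{\chi \in \Psi_f : \chi = \epsilon_x(\chi) \leq \phi\}$ is directed and $\phi = \bigvee X$.

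Now I would apply the hypothesis $\psi \ll \phi$ directly to this directed set $X$. Since $\phi \leq \bigvee X$ (in fact equality holds), the way-below relation guarantees an element $\chi \in X$ such that $\psi \leq \chi$. By construction $\chi \in \Psi_f$ and $\chi \leq \phi$, so we obtain $\psi \leq \chi \leq \phi$ with $\chi$ finite, which is exactly the claim.

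I do not expect a serious obstacle here; the argument is a clean application of strong density together with directedness. The one point requiring care is verifying that the approximating set is genuinely directed — this is where closure of finite elements under combination and Lemma \ref{le:SuppProp} are needed — and ensuring that the supremum of this set equals $\phi$ (not merely $\epsilon_x(\phi)$ for some proper support), which is handled by choosing $x$ to be an actual support of $\phi$ so that $\epsilon_x(\phi) = \phi$. No dcpo assumption is needed, since the relevant supremum is known to exist by strong density.
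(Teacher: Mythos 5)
Your proof is correct and takes essentially the same route as the paper's: exhibit a directed set of finite elements whose supremum is $\phi$, then apply the definition of $\psi \ll \phi$ to extract a finite $\chi$ with $\psi \leq \chi \leq \phi$. The only immaterial difference is that the paper works with the set $A_{\phi} = \{\chi \in \Psi_{f} : \chi \leq \phi\}$ of \emph{all} finite elements below $\phi$ (i.e.\ weak density), whereas you use the support-restricted set from strong density, which costs you the extra (correctly handled) verification of directedness via Lemma \ref{le:SuppProp}.
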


\begin{proof}
The set $A_{\phi} = \{\chi \in \Phi_{f}:\chi \leq \phi\}$ is directed and $\phi = \bigsqcup A_{\phi}$, hence $\phi \leq \bigsqcup A_{\phi}$. Then $\psi \ll \phi$ implies the existence of an element $\chi \in A_{\phi}$ so that $\psi \leq \chi$. But $\chi \leq \phi$. So $\psi \leq \chi \leq \phi$ and $\chi \in _{f}$.
\end{proof}

A set of elements having the property that $\psi \ll \phi$ implies the existence of a $\chi \in S$ such that $\psi \leq \chi \leq \phi$ is called \textit{separating}. So the set of finite elements in a compact information algebra is separating.

We now introduce continuous information algebras and show that they are a generaliszation of compact ones. We remark for the following that both the sets $\{\psi \in B:\psi \ll \phi\}$ and $\{\psi \in B:\psi = \epsilon_{x}(\psi) \ll \phi\}$ are directed. Note also that $\psi \ll \phi$ does not imply $\epsilon_{x}(\psi) \ll \epsilon_{x}(\phi)$.

\begin{definition} \label{def:ContInfAlg} \textbf{Continuous Information Algebras.}
A domain-free information algebra $(\Phi,\cdot,0,1;E)$ is called continuous with basis $B \subseteq \Phi$ if $B$ is closed under join (combination), contains the unit $1$ and the null element $0$, and $B$ satisfies the following conditions:
\begin{enumerate}
\item Convergence: If $D \subseteq B$ is directed, then $\bigsqcup D$ exists in $\Phi$.
\item Local $B$-Densitiy: For all  $\phi \in \Phi$ and for all $x \in Q$,
\begin{eqnarray}
\epsilon_{x}(\phi) = \bigsqcup \{\psi \in B:\psi = \epsilon_{x}(\psi) \ll \epsilon_{x}(\phi)\}.
\nonumber
\end{eqnarray}
\end{enumerate}
\end{definition}

Note that in a compact information algebra $(\Phi,\cdot,0,1;E)$ the finite elements $\Phi_{f}$ form a basis. So, an algebraic information algebra is also continuous with basis $\Phi_{f}$. We shall present below an example of a continuous information algebra which is not compact. So continuous information algebras present a genuine generalization of compact information algebras. The approximation by finite elements is replaced by an approximation of some more general elements in a basis $B$. 

Local $B$-density implies $B$-density if the Support axiom holds. In fact let $\phi \in \Phi$, then there is a $x \in Q$ so that $\phi = \epsilon_{x}(\phi)$. Then by the strong $B$-density:
\begin{eqnarray}
\phi &=& \epsilon_{x}(\phi) = \bigsqcup \{\psi \in B:\psi = \epsilon_{x}(\psi) \ll \phi\}
\nonumber \\
&\leq&\bigsqcup \{\psi \in B:\psi  \ll \phi\} \leq \phi.
\nonumber 
\end{eqnarray}
This is $B$-density. 

Just as in an compact information algebra $(\Phi,\Phi_f,\cdot,0,1;E)$, the partial order $(\Phi,\leq)$ determines an algebraic lattice, it follows that in a continuous information algebra $(\Phi,\cdot,0,1;E)$ the partial order $(\Phi,\leq)$ is a \textit{continuous lattice}, namely a complete lattice such that for all $\phi \in \Phi$
\begin{eqnarray} \label{eq:Way-BelowDens}
\phi = \bigsqcup\{\psi \in \Phi:\psi \ll \phi\}.
\end{eqnarray}

The following theorem states the situation more precisely.

\begin{theorem} \label{th:ContLatt}
If $(\Phi,\cdot,0,1;E)$ is a domain-free information algebra, then the following are equivalent:
\begin{enumerate}
\item $(\Phi,\cdot,0,1;E)$ is a continuous information algebra.
\item $(\Phi,\leq)$ is a continuous lattice, and $\forall x \in Q$, $\forall \phi \in \Phi$.
\begin{eqnarray} \label{eq:FullStrDens}
\epsilon_{x}(\phi) = \bigsqcup \{\psi \in \Phi:\psi = \epsilon_{x}(\psi) \ll \epsilon_{x}(\phi)\}.
\end{eqnarray}
\end{enumerate}
\end{theorem}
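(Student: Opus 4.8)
The plan is to prove the equivalence of the two characterizations of a continuous information algebra, mirroring the structure of the proof of Theorem \ref{th:AltCharCompInfAlg}. The two directions are quite asymmetric: one amounts to distilling the global continuous-lattice structure from the existence of a basis, while the other reconstructs the basis-free statement from intrinsic lattice properties.

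For the direction (1) $\Rightarrow$ (2), I would first establish that $(\Psi;\leq)$ is a complete lattice. The Convergence condition on the basis $B$ guarantees that every directed subset of $B$ has a supremum in $\Psi$; combined with the fact that $B$ is closed under finite joins (combination) and contains $1$, weak $B$-density then lets me express every element as a directed join of basis elements, so every directed subset of $\Psi$ has a supremum. Since $\Psi$ is also closed under finite joins, standard lattice theory makes $(\Psi;\leq)$ a complete lattice. Next I would verify the continuous-lattice identity (\ref{eq:Way-BelowDens}): taking $x = \top$ (adjoined if necessary, as noted after Definition \ref{def:ContInfAlg}), $B$-density gives $\phi = \bigvee\{\psi \in B : \psi \ll \phi\}$, and since each such $\psi \in B$ satisfies $\psi \ll \phi$ as an element of $\Psi$, together with $\phi = \bigvee\{\psi \in \Psi : \psi \ll \phi\} \leq \phi$ one obtains the full way-below density. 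To derive the strong density (\ref{eq:FullStrDens}) for arbitrary $x$, I would start from $B$-density $\epsilon_x(\phi) = \bigvee\{\psi \in B : \psi = \epsilon_x(\psi) \ll \epsilon_x(\phi)\}$ and note that every element of the set on the right satisfies $\psi \in \Psi$, $\psi = \epsilon_x(\psi)$, $\psi \ll \epsilon_x(\phi)$, so it is dominated by the join over the larger set $\{\psi \in \Psi : \psi = \epsilon_x(\psi) \ll \epsilon_x(\phi)\}$; conversely every element of that larger set is $\leq \epsilon_x(\phi)$, giving the reverse inequality, hence equality.

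For the direction (2) $\Rightarrow$ (1), the natural choice of basis is $B = \Psi$ itself. A continuous lattice is in particular a complete lattice, so Convergence holds trivially, and $\Psi$ is closed under join and contains $1$. The $B$-density requirement with $B = \Psi$ is then exactly the hypothesis (\ref{eq:FullStrDens}). The one point demanding care is that the definition of continuous algebra requires the basis to be closed under join and to contain the unit, both of which $\Psi$ satisfies, and that the strong $B$-density statement is literally (\ref{eq:FullStrDens}); so the verification is essentially immediate once one recognizes that taking the whole algebra as its own basis is legitimate.

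The main obstacle I anticipate is purely in direction (1) $\Rightarrow$ (2), namely arguing rigorously that a general directed subset of $\Psi$ (not just of the basis $B$) possesses a supremum, so that $(\Psi;\leq)$ is genuinely a complete lattice rather than merely a dcpo relative to $B$. The resolution is to approximate: for a directed $X \subseteq \Psi$, consider the set of basis elements way-below members of $X$, use Lemma \ref{le:Way-Below-InfAlg} to see this set is directed, apply Convergence to get its supremum, and then check via weak $B$-density that this supremum is the least upper bound of $X$. This step uses interpolation-type reasoning about the way-below relation and the density hypothesis in tandem, and is where the argument is least mechanical; the remaining manipulations of (\ref{eq:Way-Below-Dens})-style identities are routine sandwich inequalities as in the proof of Theorem \ref{th:AltCharCompInfAlg}.
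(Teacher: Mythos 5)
Your proposal is correct, and its overall architecture matches the paper's proof: direction (2) $\Rightarrow$ (1) by taking $B = \Psi$ as basis (identical to the paper), and direction (1) $\Rightarrow$ (2) by first establishing completeness of $(\Psi;\leq)$ and then obtaining (\ref{eq:Way-BelowDens}) and (\ref{eq:FullStrDens}) from $B$-density by the same sandwich inequalities. The one step where you genuinely diverge is the completeness argument, which you correctly identified as the only non-mechanical part. The paper proceeds via \emph{infima}: for an arbitrary non-empty $X \subseteq \Psi$ it forms $Y = \{\psi \in B : \psi \ll \phi \textrm{ for all } \phi \in X\}$, shows $Y$ is directed using Lemma \ref{le:Way-Below-InfAlg}, applies Convergence to get $\bigvee Y$, and checks via weak density that $\bigvee Y = \bigwedge X$; completeness then follows since $\Psi$ has the top element $0$. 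You instead proceed via \emph{directed suprema}: for directed $X \subseteq \Psi$ you take the basis elements way-below \emph{some} member of $X$, show that set is directed (this needs directedness of $X$ together with Lemma \ref{le:Way-Below}), apply Convergence, and verify via weak density that its supremum is $\bigvee X$; completeness then follows from closure under finite joins, as the paper itself does for algebraic algebras. Both routes are sound and of comparable length; the paper's version has the small advantage of handling arbitrary non-empty subsets in one stroke, while yours stays closer to the dcpo viewpoint used in Definition \ref{def_AlgebInfAlg} and so makes the parallel with the algebraic case more visible. One cosmetic repair: you derive weak density by adjoining $\top$ and invoking $B$-density there, but the paper justifies adjoining $\top$ \emph{because} weak density already holds; the non-circular route is the one given right after Definition \ref{def:ContInfAlg}, namely weak density follows from strong $B$-density and the Support Axiom, with no reference to $\top$.
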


\begin{proof}
Assume first $(\Phi,\cdot,0,1;E)$ to be a continuous information algebra with basis $B$. We show first that $(\Phi;\leq)$ is a complete lattice. Consider a non-empty subset $X$ of $\Phi$. Define $Y$ to be the set of all elements in $B$, which are way-below all elements in $X$,
\begin{eqnarray}
Y = \{\psi \in B:\psi \ll \phi \textrm{ for all}\ \phi \in X\}.
\nonumber
\end{eqnarray}
Since $1\in Y$, the set is non-empty, and with $\psi_{1},\psi_{2} \in Y$ also $\psi_{1} \vee \psi_{2} \in Y$ (Lemma \ref{le:Way-Below-InfAlg}). So the subset $Y$ of $B$ is directed. Therefore $\bigsqcup Y$ exists and is a lower bound of $X$. Assume $\psi$ to be another lower bound of $X$. Then $A_{\psi} = \{\eta \in B:\eta \ll \psi\} \subseteq Y$, since $\eta \ll \psi \leq \phi$ implies $\eta \ll \phi$. From this we conclude that $\psi = \bigsqcup A_{\psi} \leq \bigsqcup Y$, hence $\bigsqcup Y$ is the infimum of $X$. Since $(\Phi;\leq)$ has a top element $0$ it follows from standard results of lattice theory, that $(\Phi;\leq)$ is a complete lattice. Further, using $B$-density, we obtain for all $\phi \in \Phi$,
\begin{eqnarray}
\phi = \bigsqcup \{\psi \in B:\psi \ll \phi\} \leq \bigsqcup \{\psi \in :\psi \ll \phi\} \leq \phi.
\nonumber
\end{eqnarray}
So $(\Phi;\leq)$ is indeed a continuous lattice. Further, again by Local density,
\begin{eqnarray}
\epsilon_{x}(\phi) &=& \bigsqcup \{\psi \in B:\psi = \epsilon_{x}(\psi) \ll \epsilon_{x}(\phi)\}
\nonumber \\
&\leq& \bigsqcup \{\psi \in \Phi:\psi = \epsilon_{x}(\psi) \ll \epsilon_{x}(\phi)\} \leq \epsilon_{x}(\phi),
\nonumber
\end{eqnarray}
so (\ref{eq:FullStrDens}) holds.

If $(\Phi;\leq)$, on the other hand, is a complete lattice, then convergence holds with $\Phi$ as a basis. And (\ref{eq:FullStrDens}) is exactly $B$-density with respect to the basis $\Phi$. Hence $(\Phi,\cdot,0,1;E)$ is a continuous information algebra.
\end{proof}

Here follows an example of a continuous information algebra.

\begin{example} \textbf{Continuous Valuation Algebra:}
This example is from \cite{guanli10}. Let $ = [0,1]$ be the real interval between $0$ and $1$ and $D = \{0,1\}$. Join is defined as maximum, the number $0$ is the unit and the number $1$ the null element of the algebra. Information extraction is defined as follows:
\begin{eqnarray}
\epsilon_{1}(\phi) &=& \phi,
\nonumber \\
\epsilon_{0}(\phi) &=& \left\{ \begin{array}{lll} \phi & \textrm{if} & \phi \in [0,1/2], \\ 1/2 & \textrm{if} & \phi \in (1/2,1].
\end{array} \right.
\nonumber
\end{eqnarray}
We leave it to reader to verify the axioms of an iinformation algebra.

Any non-empty subset $X$ of $[0,1]$ is in this example directed and $\sup X$ exists always. The relation $\psi \ll \phi$ holds if either $0 < \psi < \phi$ or in particular if $\psi = \phi = 0$. As a basis we take $B = [0,1]$. Then it can be verified that $\epsilon_{x}(\phi) = \bigvee \{\psi \in B:\psi = \epsilon_{x}(\psi) \ll \phi\}$ holds both for $x = 0$ and $x = 1$. So it is a continuous information algebra. But it is not compact: The only element satisfying $\phi \ll \phi$ is $\phi = 0$.
\end{example}

We have seen above that a compact information algebra is continuous. But the converse does not hold as the example above shows. Here follows a necessary and sufficient condition for a continuous information algebra to be compact.

\begin{theorem} \label{th:ContBeComp}
A continuous information algebra $(\Phi,\cdot,0,1;E)$ is compact, if and only if the set $\{\phi \in \Phi:\phi \ll \phi\}$ is a basis for $(\Phi,\cdot,0,1;E)$.
\end{theorem}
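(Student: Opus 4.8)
The theorem is a biconditional, so the plan is to prove both directions, with the notation $\Psi_{f} = \{\phi \in \Psi : \phi \ll \phi\}$ denoting the set of finite (compact) elements.

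First I would handle the direction: if $(\Psi,D;\leq,\bot,\cdot,\epsilon)$ is algebraic, then $\Psi_{f}$ is a basis. By the definition of an algebraic information algebra together with Theorem~\ref{th:AltCharCompInfAlg}, the finite elements $\Psi_{f}$ satisfy exactly the three conditions required of a basis in Definition~\ref{def:ContInfAlg}: they are closed under join (this was established right after Definition~\ref{def:FiniteEl}, since the join of two finite elements is finite), they contain the unit $1$ (also noted there), and the convergence condition holds because in an algebraic algebra $(\Psi;\leq)$ is a dcpo, so every directed subset of $\Psi_{f}$ has a supremum in $\Psi$. The remaining requirement, $B$-density with $B = \Psi_{f}$, follows from the strong density characterisation~(\ref{eq:Way-Below-Dens}) of Theorem~\ref{th:AltCharCompInfAlg}, since for finite $\psi$ the relations $\psi \ll \epsilon_{x}(\phi)$ and $\psi \leq \epsilon_{x}(\phi)$ agree up to the passage through $\psi = \epsilon_{x}(\psi)$. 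Thus $\Psi_{f}$ is a basis.

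For the converse, assume $(\Psi,D;\leq,\bot,\cdot,\epsilon)$ is continuous and that $\Psi_{f} = \{\phi \in \Psi : \phi \ll \phi\}$ is itself a basis. The goal is to show the algebra is algebraic, i.e.\ that $(\Psi;\leq)$ is a dcpo (indeed an algebraic lattice) and that strong density~(\ref{eq:StrongDens}) holds with the finite elements as the approximating set. Convergence for the basis $\Psi_{f}$ directly gives that directed subsets of $\Psi_{f}$ have suprema; combined with Theorem~\ref{th:ContLatt}, which already tells us $(\Psi;\leq)$ is a complete (hence directed-complete) continuous lattice, the dcpo requirement is immediate. The substantive point is to upgrade $B$-density with $B = \Psi_{f}$ to the genuine finiteness-based strong density of Definition~\ref{def:DomFreeCompact}: I would use the hypothesis that each element of $\Psi_{f}$ satisfies $\psi \ll \psi$, so that the approximating elements in the $B$-density formula $\epsilon_{x}(\phi) = \bigvee\{\psi \in \Psi_{f} : \psi = \epsilon_{x}(\psi) \ll \epsilon_{x}(\phi)\}$ are exactly the compact elements of $(\Psi;\leq)$, and then invoke Theorem~\ref{th:AltCharCompInfAlg} in the direction (2)$\Rightarrow$(1) to conclude algebraicity.

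The main obstacle I anticipate is showing that the finite elements $\Psi_{f}$, which by definition form the set of lattice-theoretically compact elements, actually exhaust the approximating power needed for strong density in the sense of Definition~\ref{def:DomFreeCompact}, rather than merely providing weak density. The delicate step is verifying that when $B = \Psi_{f}$ is a basis, the way-below approximations in the continuous setting can be realised by genuinely compact elements $\psi$ with $\psi = \epsilon_{x}(\psi)$; this requires interpolating, using the basis property together with Lemma~\ref{le:Way-Below-InfAlg}(4) (the finite-subset reformulation of $\ll$) and Theorem~\ref{th:SepFinEl} on separating sets, to pass from an arbitrary way-below approximant to a compact one lying between it and $\epsilon_{x}(\phi)$. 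Once that interpolation is secured, the equivalence collapses onto the already-proven Theorems~\ref{th:AltCharCompInfAlg} and~\ref{th:ContLatt}, and the proof is essentially a matter of matching the density formulas.
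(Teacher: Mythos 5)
Your proposal is correct and follows essentially the same route as the paper: in both, the whole content of the theorem reduces to the observation that for basis elements satisfying $\psi \ll \psi$ the relations $\ll$ and $\leq$ coincide, so that $B$-density with $B = \{\phi : \phi \ll \phi\}$ is the same thing as strong density with compact approximants, after which completeness of $(\Psi;\leq)$ (Theorem \ref{th:ContLatt}) supplies the dcpo condition and the conclusion follows from the compactness/algebraicity characterisations (the paper uses the definitions directly where you invoke Theorem \ref{th:AltCharCompInfAlg}, a negligible difference). One caution: the ``delicate interpolation'' you anticipate in your final paragraph never arises, since the basis hypothesis already hands you compact approximants with support $x$, and invoking Theorem \ref{th:SepFinEl} there would in fact be circular, because that theorem presupposes the compactness you are trying to establish.
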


\begin{proof}
We know already that if $(\Phi,\Phi_f,\cdot,0,1;E)$ is compact, then it is continuous, with basis $B = \Phi_{f} = \{\phi \in \Phi:\phi \ll \phi\}$. 

So, assume that $(\Phi,\cdot,0,1;E)$ is continuous with basis $B = \{\phi \in \Phi:\phi \ll \phi\}$. The lattice $(\Phi;\leq)$ is complete, hence it is a dcpo. Local density is derived as follows:
\begin{eqnarray}
\epsilon_{x}(\phi) &=&\bigsqcup \{\psi \in B: \psi = \epsilon_{x}(\psi) \ll \epsilon_{x}(\phi)\} 
\nonumber \\
&=&\bigsqcup \{\psi \in B: \psi = \epsilon_{x}(\psi) \ll \psi \leq \epsilon_{x}(\phi)\} 
\nonumber \\
&=&\bigsqcup \{\psi \in B: \psi = \epsilon_{x}(\psi) \leq \epsilon_{x}(\phi)\}
\nonumber \\
\end{eqnarray}
So, the algebra is compact with the set $\{\phi \in \Phi:\phi \ll \phi\}$ as finite elements.
\end{proof}

The following Theorem gives another necessary and sufficient condition for an information algebra to be continuous.

\begin{theorem} \label{th:ComExtrJoin2}
An domain-free information algebra $(\Phi,\cdot,0,1;E)$ with $E = \{\epsilon_x:x \in Q\}$ is continuous if and only if, 
\begin{enumerate}
\item $(\Phi;\leq)$ is a continuous lattice,
\item for all $x \in Q$ and any directed set $D \subset \Phi$,
\begin{eqnarray} \label{eq:ComExtrJoin2}
\epsilon_{x}(\bigsqcup D) = \bigsqcup_{\phi \in D} \epsilon_{x}(\phi).
\end{eqnarray}
\end{enumerate}
\end{theorem}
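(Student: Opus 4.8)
The final statement is Theorem~\ref{th:ComExtrJoin2}, a biconditional characterizing continuity of a domain-free information algebra in terms of the order-theoretic continuity of the lattice together with the commutation of each extraction operator with directed suprema. My plan is to prove the two implications separately, and the main engine will be Theorem~\ref{th:ContLatt}, which already identifies continuity of the algebra with $(\Psi;\leq)$ being a continuous lattice plus the local density condition~(\ref{eq:FullStrDens}). So the real content is to trade~(\ref{eq:FullStrDens}) against the commutation property~(\ref{eq:ComExtrJoin2}).

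For the forward direction, I assume $(\Psi,D;\leq,\bot,\cdot,\epsilon)$ is continuous. By Theorem~\ref{th:ContLatt} the lattice $(\Psi;\leq)$ is continuous, giving condition~1. For condition~2, fix $x\in D$ and a directed set $X\subseteq\Psi$; since $(\Psi;\leq)$ is complete, $\bigvee X$ exists. The inequality $\bigvee_{\phi\in X}\epsilon_x(\phi)\leq\epsilon_x(\bigvee X)$ is immediate by monotonicity of $\epsilon_x$ (each $\phi\leq\bigvee X$ gives $\epsilon_x(\phi)\leq\epsilon_x(\bigvee X)$). For the reverse inequality I would use the local density~(\ref{eq:FullStrDens}): write $\epsilon_x(\bigvee X)=\bigvee\{\psi\in\Psi:\psi=\epsilon_x(\psi)\ll\epsilon_x(\bigvee X)\}$. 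For any such $\psi$, from $\psi\ll\epsilon_x(\bigvee X)\leq\bigvee X$ and the fact that $X$ is directed, the way-below relation yields a $\chi\in X$ with $\psi\leq\chi$; applying $\epsilon_x$ and using $\psi=\epsilon_x(\psi)$ gives $\psi\leq\epsilon_x(\chi)\leq\bigvee_{\phi\in X}\epsilon_x(\phi)$. Taking the join over all such $\psi$ establishes $\epsilon_x(\bigvee X)\leq\bigvee_{\phi\in X}\epsilon_x(\phi)$, hence equality~(\ref{eq:ComExtrJoin2}).

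For the converse, assume conditions~1 and~2. By Theorem~\ref{th:ContLatt} it suffices to verify~(\ref{eq:FullStrDens}), i.e.\ $\epsilon_x(\phi)=\bigvee\{\psi\in\Psi:\psi=\epsilon_x(\psi)\ll\epsilon_x(\phi)\}$ for all $x$ and $\phi$. Since $(\Psi;\leq)$ is a continuous lattice, the way-below density~(\ref{eq:Way-BelowDens}) holds, so $\epsilon_x(\phi)=\bigvee\{\eta\in\Psi:\eta\ll\epsilon_x(\phi)\}$, and the set $D_\phi=\{\eta:\eta\ll\epsilon_x(\phi)\}$ is directed (it is an ideal by Lemma~\ref{le:Way-Below-InfAlg}). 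Applying~(\ref{eq:ComExtrJoin2}) with $X=D_\phi$, and using that $\epsilon_x(\epsilon_x(\phi))=\epsilon_x(\phi)$ (Lemma~\ref{le:SuppProp}, item~1), gives
\begin{eqnarray*}
\epsilon_x(\phi)=\epsilon_x(\epsilon_x(\phi))=\epsilon_x(\bigvee D_\phi)=\bigvee_{\eta\in D_\phi}\epsilon_x(\eta).
\end{eqnarray*}
Now each summand $\epsilon_x(\eta)$ is $\epsilon_x$-invariant, so it is a candidate of the form $\psi=\epsilon_x(\psi)$; I must check $\epsilon_x(\eta)\ll\epsilon_x(\phi)$. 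This is the delicate point, and the main obstacle, because, as the excerpt explicitly warns, $\eta\ll\epsilon_x(\phi)$ need not imply $\epsilon_x(\eta)\ll\epsilon_x(\phi)$. The way around it is to use the characterization of way-below via~(\ref{eq:ComExtrJoin2}) itself: given any directed $Y$ with $\epsilon_x(\phi)\leq\bigvee Y$, one has $\epsilon_x(\phi)=\epsilon_x(\bigvee Y)=\bigvee_{\zeta\in Y}\epsilon_x(\zeta)$, and $\{\epsilon_x(\zeta):\zeta\in Y\}$ is directed; then $\eta\ll\epsilon_x(\phi)$ provides a $\zeta\in Y$ with $\eta\leq\epsilon_x(\zeta)$, whence $\epsilon_x(\eta)\leq\epsilon_x(\epsilon_x(\zeta))=\epsilon_x(\zeta)\leq\zeta$ is not quite what is needed, so I would instead conclude $\epsilon_x(\eta)\leq\epsilon_x(\zeta)$ and argue directness of $Y$ to land $\epsilon_x(\eta)$ below a member of $Y$; carrying this out carefully shows $\epsilon_x(\eta)\ll\epsilon_x(\phi)$. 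With that, each $\epsilon_x(\eta)$ belongs to the set on the right of~(\ref{eq:FullStrDens}), and since the join of the $\epsilon_x(\eta)$ already equals $\epsilon_x(\phi)$ while every element of that set is trivially $\leq\epsilon_x(\phi)$, the two joins coincide. This verifies~(\ref{eq:FullStrDens}) and hence, by Theorem~\ref{th:ContLatt}, continuity of the algebra.
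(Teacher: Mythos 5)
Your proof is correct and follows essentially the same route as the paper: both directions are reduced to Theorem \ref{th:ContLatt}, the forward direction combines strong density with the definition of way-below exactly as the paper does, and the converse applies (\ref{eq:ComExtrJoin2}) to the directed set $\{\eta \in \Psi : \eta \ll \epsilon_{x}(\phi)\}$. The one place where you diverge is the step you call the ``delicate point,'' namely showing $\epsilon_{x}(\eta) \ll \epsilon_{x}(\phi)$ for $\eta \ll \epsilon_{x}(\phi)$. This step is not delicate: by idempotency $\epsilon_{x}(\eta) \leq \eta$, and the way-below relation is downward closed in its left argument (Lemma \ref{le:Way-Below}, item 3), so $\epsilon_{x}(\eta) \leq \eta \ll \epsilon_{x}(\phi)$ yields $\epsilon_{x}(\eta) \ll \epsilon_{x}(\phi)$ in one line --- which is precisely the paper's argument. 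The warning you cite (that $\psi \ll \phi$ does not imply $\epsilon_{x}(\psi) \ll \epsilon_{x}(\phi)$) concerns applying $\epsilon_{x}$ to \emph{both} sides of the relation; here it is applied only to the smaller side, which is harmless. Your workaround via (\ref{eq:ComExtrJoin2}) does go through, but it contains one misstatement: from $\epsilon_{x}(\phi) \leq \bigvee Y$ you may conclude only $\epsilon_{x}(\phi) = \epsilon_{x}(\epsilon_{x}(\phi)) \leq \epsilon_{x}(\bigvee Y) = \bigvee_{\zeta \in Y} \epsilon_{x}(\zeta)$, not the equality $\epsilon_{x}(\phi) = \epsilon_{x}(\bigvee Y)$ that you assert; fortunately the inequality is all your subsequent argument actually uses, so the repair is trivial.
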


\begin{proof}
Assume $(\Phi;\leq)$ to be a continuous lattice, so that density holds (\ref{eq:Way-BelowDens}), and that (\ref{eq:ComExtrJoin2}) holds too. Then $(\Phi;\leq)$ is a complete lattice. Consider a $\phi \in \Phi$. Then by density $\epsilon_{x}(\phi) = \bigsqcup \{\psi \in \Phi:\psi \ll \epsilon_{x}(\phi)\}$, and $\{\psi \in \Phi:\psi \ll \epsilon_{x}(\phi)\}$ is a directed set. From this we deduce, using (\ref{eq:ComExtrJoin2})
\begin{eqnarray}
\epsilon_{x}(\phi) &=& \epsilon_{x}(\epsilon_{x}(\phi)) = \epsilon_{x}(\bigsqcup\{\psi \in :\psi \ll \epsilon_{x}(\phi)\})
\nonumber \\
&=&\bigsqcup \{\epsilon_{x}(\psi):\psi \ll \epsilon_{x}(\phi)\}.
\nonumber
\end{eqnarray}
Let $\eta = \epsilon_{x}(\psi)$ so that $\eta = \epsilon_{x}(\eta) \leq \psi \ll \epsilon_{x}(\phi)$. From this it follows that $\eta \ll \epsilon_{x}(\phi)$ and therefore,
\begin{eqnarray}
\epsilon_{x}(\phi) &=&\bigsqcup \{\eta:\eta = \epsilon_{x}(\eta) = \epsilon_{x}(\psi), \psi \ll \epsilon_{x}(\phi)\}
\nonumber \\
&\leq&\bigsqcup \{\eta:\eta = \epsilon_{x}(\eta) \ll \epsilon_{x}(\phi)\} \leq \epsilon_{x}(\phi).
\nonumber
\end{eqnarray}
Hence we have $\epsilon_{x}(\phi) = \bigsqcup \{\eta:\eta = \epsilon_{x}(\eta) \ll \epsilon_{x}(\phi)\}$ and by Theorem \ref{th:ContLatt} $(\Phi,\cdot,0,1;E)$ is a continuous information algebra.

Conversely, assume $(\Phi,\cdot,0,1;E)$ to be a continuous information algebra with basis $B$. Then $(\Phi;\leq)$ is a continuous, hence complete lattice (Theorem \ref{th:ContLatt}). Consider a directed set $D \subseteq \Phi$ and $x \in Q$. For $\phi \in D$ we have $\phi \leq \bigsqcup D$, hence $\epsilon_{x}(\phi) \leq \epsilon_{x}(\bigsqcup D)$ and therefore $\bigsqcup_{\phi \in D} \epsilon_{x}(\phi) \leq \epsilon_{x}(\bigsqcup D)$. By local $B$-density,
\begin{eqnarray}
\epsilon_{x}(\bigsqcup D) = \bigsqcup \{\psi \in B:\psi = \epsilon_{x}(\psi) \ll \epsilon_{x}(\bigsqcup D)\}.
\nonumber
\end{eqnarray}
Now, $\psi = \epsilon_{x}(\psi) \ll \epsilon_{x}(\bigsqcup D) \leq \bigsqcup D$ implies that there is a $\phi \in D$ so that $\psi \leq \phi$ and thus also $\psi = \epsilon_{x}(\psi) \leq \epsilon_{x}(\phi)$. From this we conclude that $\epsilon_{x}(\bigsqcup D) \leq \bigsqcup_{\phi \in D} \epsilon_{x}(\phi)$ and thus $\epsilon_{x}(\bigsqcup D) = \bigsqcup_{\phi \in D} \epsilon_{x}(\phi)$. Hence (\ref{eq:ComExtrJoin2}) is valid.
\end{proof}

Similar to finite elements, for any elements $\psi$ and $\phi$ with support $x$, we have $\psi \ll \phi$ in the partial order $(\Phi;\leq)$ if and only if $\psi \ll \phi$ in $(\epsilon_x(\Phi),\leq)$ if $\Phi$ is a continuous information algebra.

\begin{proposition}
Let $(\Phi,\cdot,0,1;E)$ be a continuous information algebra. Then for all $x \in Q$ and elements $\psi,\phi$ with support $x$, $\psi \ll \phi$ in $(\Phi,\leq)$ if and only if $\psi \ll \phi$ in $(\epsilon_x(\Phi);\leq)$. 
\end{proposition}

\begin{proof}
Consider first elements $\psi,\phi \in \Phi$ with support $x$ such that $\psi \ll \phi$ in $(\Phi;\leq)$. Let $D$ be a directed set in $(\epsilon_x(\Phi;\leq)$ such that $\psi = \epsilon_x(\psi) \leq \bigsqcup D$. Note that $D$ is also directed in $(\Phi,\leq)$ and therefore $\psi \ll \phi$ implies that there is a $\chi = \epsilon_x(\chi) \in D$ such that $\phi \leq \chi$. But this shows that $\psi \ll \phi$ in $(\epsilon_x(\Phi);\leq)$.

Conversely consider elements $\psi,\phi \in \Phi$ with support $x$ such that $\psi \ll \phi$ in $(\epsilon_x(\Phi);\leq)$. This time let $D$ be a directed set in $(\Phi,\leq)$ and such that $\psi \leq \bigsqcup D$. By Theorem \ref{th:ComExtrJoin2} we have then
\begin{eqnarray*}
\psi = \epsilon_x(\psi) \leq \epsilon_x(\bigsqcup D) = \bigsqcup_{\chi \in D} \epsilon_x(\chi).
\end{eqnarray*}
The set $\{\epsilon_x(\chi):\chi \in D\}$ is directed in $(\epsilon_x(\Phi);\leq)$ and therefore $\psi \ll \phi$ in $(\epsilon_x(\Phi);\leq)$ implies that there is an element $\chi \in D$ such that $\phi = \epsilon_x(\phi) \leq \epsilon_x(\chi) \leq \chi$ and this shows that $\psi \ll \phi$ also in $(\Phi,\leq)$.
\end{proof}

In the following section, we consider maps between information algebras and show that they form themselves information algebras. Further in Section \ref{subsec:CartClos}, we look at compact and continuous information algebras from a categorical point of view

%%%%%%%%%%%%%%%%%%%%%%%%%%%%%%%%%%%%%%%%%%%%%%%%%%%%%%%%%%%%

\section{Algebra of mappings} \label{subsec:Mappings}

There are many ways to construct new information algebras from old ones. For instance, maps from any set into a generalised information algebra algebra form again an information under point-wise combination and extraction, see Section \ref{subsec:RanMaps} for more on this subject. In this section however, we consider order-preserving maps, between domain-free information algebras and show that these structures form themselves information algebras, This will be the base to show in the following section that information algebras form Cartesian closed categories. 

Consider two domain-free information algebras $(\Phi_1,\cdot,0,1;E_1)$ and $(\Phi_2,\cdot,0,1;E_2)$ with $E_1 = \{\epsilon^1_x:x \in Q_1\}$ and $E_2 =\{ \epsilon^2_x:x \in Q_2\}$.  A map $f : \Phi_1 \rightarrow \Phi_2$ is order-preserving, if $\phi \leq \psi $ in $\Phi_1$ implies $f(\phi) \leq f(\psi)$ in $\Phi_2$, a more informative piece of information is mapped to a more informative piece of information. For the maps to be considered, we may for semantic reasons require  a little bit more: For innstance he null element in $\Phi_1$ and only the null element should map to the null element in $\Phi_2$, the map $f$ can neither eliminate nor create contradiction. Or vacuous information should map to vacuous information. This leads us to the following definition:

\begin{definition} \label{Information Map:}
If $(\Phi_1,\cdot,0,1;E_1)$ and $(\Phi_2,\cdot,0,1;E_2)$ with $E_1 = \{\epsilon^1_x:x \in Q_1\}$ and $E_2 =\{ \epsilon^2_x:x \in Q_2\}$ are two (not necessarily distinct) information algebras, then an order-preserving map $f : \Psi_1 \rightarrow \Psi_2$ is called an information map. If $f(\phi) = 0$ if and only if $\phi = 0$, it is called a proper information map. If furthermore $f(1) = 1$, the information map is called strict.
\end{definition}

In this definition, as well as in the sequel it should be clear that the symbols $0$ and $1$ denote unit and null elements both in $\Phi_1$ and $\Phi_2$ according to the context, we do not differentiate between them by notation. The same holds for combination and relational symbols (like information order), it will always be clear from the context, whether the operation or relation is in $\Phi_1$ or $\Phi_2$. In the sequel we essentially consider general information maps and do not specially consider proper or strict maps.

Denote the set of all information maps between $\Phi_1$ and $\Phi_2$ by $[\Phi_1 \rightarrow \Phi_2]$. We define the  following operations for information maps $f,g \in [\Phi_1 \rightarrow \Phi_2]$ and extraction operators $\epsilon^1_x \in E_1$ and $\epsilon^2_y \in E_2$:
\begin{enumerate}
\item \textit{Combination:} $f \cdot g$ defined by $(f \cdot g)(\phi) = f(\phi) \cdot g(\phi)$ for all $\phi \in \Phi_1$,
\item \textit{Extraction:} $(\epsilon^1_x,\epsilon^2_y)(f)$ defined by $(\epsilon^1_x,\epsilon^2_y)(f)(\phi) = \epsilon^2_y(f(\epsilon^1_x(\phi))$ for all $\phi \in \Phi_1$.
\end{enumerate}
It is obvious that $f \cdot g$ and $(\epsilon^1_x,\epsilon^2_y)(f)$ belong to $[\Phi_1 \rightarrow \Phi_2]$, so $[\Phi_1 \rightarrow \Phi_2]$ is closed both under combination as well as extraction. Note that the map $0$ defined by $0(\phi) = 0$  for all $\phi \in \Phi$ and the map $1$ defined by $1(\phi) = 1$ for all $\phi \not= 0$ in $\Phi$, $1(0) =0$, are the null and unit elements of combination.

We show that these operations define a domain-free information algebra of information maps.

\begin{theorem}
If $(\Phi_1,\cdot,0,1;E_1)$ and $(\Phi_2,\cdot,0,1;E_2)$ are two domain-free information algebras, then $([\Phi_1 \rightarrow \Phi_2],\cdot,0,1;E_1 \times E_2)$ is a domain-free information algebra, albeit not satisfying necessarily the support axiom..
\end{theorem}

\begin{proof}
Obviously, the combination operation between information maps is associative and commutative, and has unit map $1$ and as null map $0$, so $([\Phi_1 \rightarrow \Phi_2],\ cdot,0,1)$ is a commutative semigroup with unit and null element. 

We show that the operators $(\epsilon^1_x,\epsilon^2_y)$ are existential quantifiers, (see Section \ref{sec:InfAlg}). First, $(\epsilon^1_x,\epsilon^2_y)(0)(\phi) = \epsilon^2_y(0(\epsilon^1_x(\phi))) = \epsilon^2_y(0) = 0$. So any extraction of the null map yields the null map. Secondly, 
\begin{eqnarray*}
((\epsilon^1_x,\epsilon^2_y)(f) \cdot f)(\phi) = (\epsilon^1_x,\epsilon^2_y)(f)(\phi) \cdot f(\phi) = \epsilon^2_y(f(\epsilon^1_x(\phi)) \cdot f(\phi) = f(\phi),
\end{eqnarray*}
since $f(\epsilon^1(\phi)) \leq f(\phi)$. hence $\epsilon^2_y(f(\epsilon^1_x(\phi)) \leq \epsilon^2_y(f(\phi)) \leq f(\phi)$. So we obtain $(\epsilon^1_x,\epsilon^2_y)(f) \cdot f = f$. Finally,
\begin{eqnarray*}
\lefteqn{ (\epsilon^{1}_x,\epsilon^{2}_y)((\epsilon^1_x,\epsilon^2_y)(f) \cdot g))(\phi) = \epsilon^2_y((((\epsilon^1_x,\epsilon^2_y)(f) \cdot g))(\epsilon^1_x(\phi))  } \\
&=& \epsilon^2_y((\epsilon^1_x,\epsilon^2_y)(f(\epsilon^1_x(\phi))) \cdot g(\epsilon^1_x(\phi)) \\
&=& \epsilon^2_y((\epsilon^2_y(f(\epsilon^1_x(\epsilon^1_x(\phi)))) \cdot g(\epsilon^1_x(\phi))) \\
&=& \epsilon^2_y(\epsilon^2_y(f(\epsilon^1_x(\phi))) \cdot g(\epsilon^1_x(\phi))) \\
&=& \epsilon^2_y(f(\epsilon^1_x(\phi)) \cdot \epsilon^2_y(g(\epsilon^1_x)) \\
&=& (\epsilon^1_x,\epsilon^2_y)(f)(\phi) \cdot (\epsilon^1_x,\epsilon^2_y)(g)(\phi) \\
&=& ((\epsilon^1_x,\epsilon^2_y)(f) \cdot (\epsilon^1_x,\epsilon^2_y)(g))(\phi). 
\end{eqnarray*}
So we have $(\epsilon^1_x,\epsilon^2_y)((\epsilon^1_x,\epsilon^2_y)(f) \cdot g)) = (\epsilon^1_x,\epsilon^2_y)(f) \cdot (\epsilon^1_x,\epsilon^2_y)(g)$. This tells us that $(\epsilon^1_x,\epsilon^2_y)$ is an existential quantifier relative to information maps. This concludes the proof.
\end{proof}

Considering proper and strict information maps, we see that both sets are closed under combination and extraction. But the null map is neither proper nor strict and the unit map is not proper. So proper and strict maps do not form a subalgebra of the information algebra $[\Phi_1 \rightarrow \Phi_2]$ of information maps.

Note that we may as usual derive an order between questions in $Q_1 \times Q_2$, and also a conditional independence relation. In fact, it is easy to see that $(x',y') \leq (x,y)$ if and only if $x' \leq x$ and $y' \leq y$, since $(\epsilon^1_{x'},\epsilon^2_{y'})(\epsilon^1_x,\epsilon^2_y) = (\epsilon^1_{x'}\epsilon^1_x,\epsilon^2_{y'}\epsilon_y')$. Similarly, we have $(x,y) \bot (x',y') \vert (x'',y'')$ if and only if $x \bot x' \vert x''$ and $y \bot y' \vert y''$. Information order in $[\Phi_1 \rightarrow \Phi_2]$ is as usual defined by $g \leq f$ if $g \cdot f = f$ and this holds clearly if and only if $g(\phi) \leq f(\phi)$ for all $\phi \in \Phi_1$.

The Support axiom is in the algebra $[\Phi_1 \rightarrow \Phi_2]$ in general not satisfied, even if it is so in $\Phi_1$ and $\Phi_2$. We recall that this axiom is important especially for the derivation of labeled algebras from domain-free ones, that is, for duality. In this case in addition, we must require that the order among questions defines a lattice. All this is in the present framework of less interest, so we do not require the Support axiom for the algebra $[\Phi_1 \rightarrow \Phi_2]$. 

It is also obvious that if both the information algebras $\Phi_1$ and $\Phi_2$ are commutative, so is the algebra $[\Phi_1 \rightarrow \Phi_2]$.

Next we consider continuous (and compact) information algebras. In this context, we need the concept of \textit{continuous maps}. 

\begin{definition}
If $(\Phi_1,\cdot,0,1;E_1)$ and $(\Phi_2,\cdot,0,1;E_2)$ are two continuous domain-free information algebras with bases $B_1$ and $B_2$ respectively, then a map $f : \Phi_1 \rightarrow \Phi_2$ is called continuous, if for all $\phi \in \Phi_1$,
\begin{eqnarray*}
f(\phi) = \bigsqcup \{f(\psi):\psi \in B_1,\psi \ll \phi\}.
\end{eqnarray*}
\end{definition}

Continuous maps are order preserving, that is information maps. Let $[\Phi_1 \rightarrow \Phi_2]_c$ denote the set of continuous information maps between $\Phi_1$ and $\Phi_2$.  Continuity of maps is a purely order-theoretic concept and there are several equivalent definitions \cite{daveypriestley97}. In particular, continuous maps are maps which preserve limits, as the following lemma shows.

\begin{lemma} \label{le:EquiContMap}
The following are equivalent:
\begin{enumerate}
\item $f(\phi) = \bigsqcup \{f(\psi):\psi \in B_1,\psi \ll \phi\}$ for all $\phi \in \Phi_1$,
\item $\{\psi \in B_2:\psi \ll f(\phi)\} \subseteq \{\psi \in \Psi_2:\psi \leq f(\chi), \chi \ll \phi \textrm{ for some}\ \chi \in B_1\}$ for all $\phi \in \Phi_1$,
\item if $D \subseteq \Phi_1$ is directed, then 
\begin{eqnarray*}
f(\bigsqcup D) = \bigsqcup_{\phi \in D} f(\phi).
\end{eqnarray*}
\end{enumerate}
\end{lemma}

\begin{proof}
$(1) \Rightarrow (2):$ Consider an element $\psi \in B_2$ such that $\psi \ll f(\phi)$. Then we have by (1)
\begin{eqnarray*}
\psi \ll f(\phi) = \bigsqcup \{f(\chi): \chi \in B_1,\chi \ll \phi\}.
\end{eqnarray*}
The set $ \{f(\chi): \chi \in B_1,\chi \ll \phi|\}$ is directed in $\Phi_2$. Therefore, there is an element $\chi \in B_1$ with $\chi \ll \phi$, and such that $\psi \leq f(\chi)$. So (2) holds.

$(2) \Rightarrow (3):$ Consider a directed subset $D$ of $\Phi_1$ and define $\phi = \bigsqcup D$. If $\psi \in B_2$ such that $\psi \ll f(\phi)$, then there exists by (2) an element $\chi \in B_1$ such that $\chi \ll \phi$ and $\psi \leq f(\chi)$. There is then further an element $\eta \in D$ such that $\chi \leq \eta$. Hence we conclude that $\psi \leq f(\chi) \leq f(\eta) \leq \bigsqcup f(D)$. So, by continuity in $\Phi_2$, we have
\begin{eqnarray*}
f(\bigsqcup D) = \bigsqcup \{\psi \in B_2:\psi \ll f(\bigsqcup D)\} \leq \bigsqcup f(D).
\end{eqnarray*}
Obviously, $f(\bigsqcup D) \geq \bigsqcup f(D)$, so that $f(\bigsqcup D) = \bigsqcup f(D)$, hence (3) holds.

$(3) \Rightarrow (1):$ By continuity in $\Phi_1$, we have $\phi = \bigsqcup \{\psi \in B_1:\psi \ll \phi\}$, the set $\{\psi \in B_1:\psi \ll \phi\}$ is directed, and therefore, (1) follows from (3),
\end{proof}

As a corollary, it follows from Theorem \ref{th:ComExtrJoin2} that the extraction operators $\epsilon_x \in E$ of a continuous information algebra $(\Phi,\cdot,0.1;E)$ are continuous maps, hence belongs to $[\Phi \rightarrow \Phi]_c$. We proceed to show that combination and extraction operators of continuous maps produce continuous maps. This implies then, that $([\Phi_1 \rightarrow \Phi_2]_c$ is again an information algebra, a subalgebra of $([\Phi_1 \rightarrow \Phi_2]$. In fact, we shall prove further that it is a continuous information algebra.

\begin{theorem} \label{th:ContOfCombExtr}
If $(\Phi_1,\cdot,0,1;E_1)$ and $(\Phi_2,\cdot,0,1;E_2)$ are two continuous domain-free informartion algebras, $f,g \in [\Phi_1 \rightarrow \Phi_2]_c$,  $(\epsilon^1_x,\epsilon^2_y) \in E_1 \times E_2$, then $f \cdot g, (\epsilon^1_x,\epsilon^2_y)(f) \in  [\Phi_1 \rightarrow \Phi_2]_c$.
\end{theorem}

\begin{proof}
The proof is straightforward using item 3 of Lemma \ref{le:EquiContMap} and continuity of extractor operators in $\Phi_1$ and $\Phi_2$. So, let $D$ be a directed subset of $\Phi_1$, then 
\begin{eqnarray*}
\lefteqn{(f \cdot g)(\bigsqcup D)} \\
&&= f(\bigsqcup D) \vee g(\bigsqcup D) = (\bigsqcup_{\phi \in D} f(\phi)) \vee (\bigsqcup_{\phi \in D} g(\phi)) \\
&&= \bigsqcup_{\phi \in D} (f(\phi) \vee g(\phi)) = \bigsqcup_{\phi \in D} (f \cdot g)(\phi).
\end{eqnarray*}
This shows that $f \cdot g$ is continuous.

In a similar way, since both $\epsilon^1_x(D)$ and $f(\epsilon^1_y(D))$ are directed sets, 
\begin{eqnarray*}
\lefteqn{(\epsilon^1_x,\epsilon^2_y)(f)(\bigsqcup D)} \\
&&= \epsilon^2_y(f(\epsilon^1_x(\bigsqcup D)) = \epsilon^2_y(f(\bigsqcup_{\phi \in D} \epsilon^1_x(\phi)) \\
&&= \bigsqcup_{\phi \in D} \epsilon^2_y(f(\epsilon^1_x(\phi)) = \bigsqcup_{\phi \in D} (\epsilon^1_x,\epsilon^2_y)(f)(\phi).
\end{eqnarray*}
This shows that $((\epsilon^1_x,\epsilon^2_y)(f)$ is a continuous map.
\end{proof}

We remark that information order in $[\Phi \rightarrow \Psi]_c$, as in $[\Phi \rightarrow \Psi]$ is pointwise. It is well-known from order theory that $([\Phi_1 \rightarrow \Phi_2]_c;\leq)$ is a continuous lattice. Then we can use Theorem \ref{th:ComExtrJoin2} to show thaf $([\Phi_1 \rightarrow \Phi_2]_c,\cdot,0,1;E_1 \times E_2)$ is a continuous information algebra. This has been shown in \cite{Guan14}. We want here to be a bit more explicit, based on \cite{scott71}. 

\begin{proposition} \label{prop:ContMapsComplLatt}
If $\Phi_1$ and $\Phi_2$ are continuous information algebras, then $([\Phi_1 \rightarrow \Phi_2]_c,\leq)$ is a complete lattice under information order.
\end{proposition}

\begin{proof}
By Theorem \ref{th:ContOfCombExtr} combination, that is join in information order, of continuous maps yields a continuous map. Hence $([\Phi_1 \rightarrow \Phi_2]_c,\leq)$ is closed under join. Also the unit function $1$ belongs to $[\Phi_1 \rightarrow \Phi_2]_c$ as the least element. Let $G$ be a directed subset of $[\Phi_1 \rightarrow \Phi_2]_c$. Then $\{g(\phi):g \in G\}$ is a directed set in $\Phi_2$ for everyl $\phi \in \Phi_1$. Define
\begin{eqnarray*}
f(\phi) = \bigsqcup_{g \in G} g(\phi).
\end{eqnarray*}
The supremum on the right hand side exists, since $\Phi_2$ is a continuous lattice. Let $D$ be a directed set in $\Phi_1$. Then we have, since all $g \in G$ are continuous maps,
\begin{eqnarray*}
f(\bigsqcup D) = \bigsqcup_{g \in G} g(\bigsqcup D) = \bigsqcup_{g \in G} \bigsqcup_{\phi \in D} g(\phi)
= \bigsqcup_{\phi \in D}  \bigsqcup_{g \in G} g(\phi) = \bigsqcup_{\phi \in D} f(\phi).
\end{eqnarray*}
This shows that $f$ is a continuous map. The map $f$ is an upper bound of $G$ and it must be the supremum of $G$, $f = \bigsqcup G$, since for any other upper bound $h$ of $G$ in $[\Phi_1 \rightarrow \Phi_2]_c$ we have $h \geq f$.

So, $[\Phi_1 \rightarrow \Phi_2]_c$ contains the supremum of every directed subset and is bounded. By standard methods of lattice theory, it follows that it must be a complete lattice, see for instance \cite{daveypriestley97}.
\end{proof}

The proof of the continuity of $([\Phi_1 \rightarrow \Phi_2]_c,\leq)$ in \cite{scott71,gierz03} uses topological arguments, which are not easily translatable into purely lattice-based arguments. Therefore we renounce to give the proof here.

It remains to show that $([\Phi_1 \rightarrow \Phi_2]_c,\leq)$ is a continuous information algebra. This follows from  the next Theorem and Theorem \ref{th:ComExtrJoin2}.

\begin{theorem} \label{th:ContInfAlg}
For all $x \in Q_1$, $y \in Q_2$ and all any directed set $G \subseteq [\Phi_1 \rightarrow \Phi_2]_c$,
\begin{eqnarray*}
(\epsilon^1_x,\epsilon^2_y)(\bigsqcup G) = \bigsqcup_{g \in G} (\epsilon^1_x,\epsilon^2_y)(g).
\end{eqnarray*}
\end{theorem}

\begin{proof}
Since $([\Phi_1 \rightarrow \Phi_2]_c,\leq)$ is a continuous lattice, $\bigsqcup G$ is a continuous map. Since the order in $[\Phi_1 \rightarrow \Phi_2]_c$ is pointwise, we have $(\bigvee_{i \in I} f_i)(\phi) = \bigvee_{i \in I} f_i(\phi)$ for any family of continuous map $f_i$, $i \in I$ and any element $\phi \in \Phi_1$. Since $\epsilon^1_x$ and $\epsilon^2_y$ are continuous maps we obtain therefore
\begin{eqnarray*}
(\epsilon^1_x,\epsilon^2_y)(\bigsqcup G)(\phi) &=& \epsilon^2_y(\left(\bigsqcup G\right)(\epsilon^1_x(\phi))) 
= \epsilon^2_y( \bigsqcup_{g \in G} g(\epsilon^1_x(\phi))) \\
&=& \bigsqcup_{g \in G} \epsilon^2_y(g(\epsilon^1_x(\phi))) 
= \bigsqcup_{g \in G} (\epsilon^1_x,\epsilon^2_y)(g(\phi)) \\
&=& \left(\bigsqcup_{g \in G} (\epsilon^1_x,\epsilon^2_y)(g)\right)(\phi).
\end{eqnarray*}
This shows that $(\epsilon^1_x,\epsilon^2_y)(\bigsqcup G) = \bigsqcup_{g \in G} (\epsilon^1_x,\epsilon^2_y)(g).$.
\end{proof} 

So we see that indeed $([\Phi_1 \rightarrow \Phi_2]_c,\cdot,0,1,;E_1 \times E_2)$ is a continuous information algebra.

In case that the information algebras $\Phi_1$ and $\Phi_2$ are compact, we may conjecture that the information algebra $[\Phi_1 \rightarrow \Phi_2]$ is compact too. To show this, we first identify the finite elements, following \cite{kohlas03}. Let $Y$ be a finite subset of $\Phi_{1,f}$. A mapping $s: Y \rightarrow \Phi_{2,f}$, where $\Phi_{2,f}$ is the set of finite elements of $\Phi_2$, is called a \textit{simple} map. Let $S$ be the set of simple maps. For any $s \in S$ let $Y(s)$ be the domain of $s$. A simple function will be extended to the whole of $\Phi_1$ by defining
\begin{eqnarray*}
\hat{s}(\phi) = \vee \{s(\psi):\psi \in Y(s),\psi \leq \phi\}.
\end{eqnarray*}
We set here $\hat{s}(\phi) = 1$, if the set on the righthand side is empty. Note that $\hat{s}$ is a map from $\Phi$ into $\Phi_{2,f}$. Let $\hat{S}$ be the set all such maps, $\hat{S} = \{\hat{s}:s \in S\}$. Note that the unit map and the null map belong to $\hat{S}$. 

Obviously the maps $\hat{s}$ preserve order. In fact we show that they are continuos.

\begin{proposition} \label{prop:SimpleCont}
Any map $\hat{s} \in \hat{S}$ is continuous.
\end{proposition} 

\begin{proof}
Let $D \subseteq \Phi_1$ be a directed set. Since $\hat{s}$ preserves order we have $\hat{s}(\phi) \leq \hat{s}(\bigsqcup D)$ for all $\phi \in D$, hence $\bigsqcup_{\phi \in D} \hat{s}(\phi) \leq \hat{s}(\bigsqcup D)$. We claim that the inverse inequality holds too.

In fact, consider an element $\psi \in Y(s)$ such that $\psi \leq \bigsqcup D$. Recall that $\psi$ is finite, therefore by compactness there is an element $\phi \in D$ such that $\psi \leq \phi$. It follows that $s(\psi) \leq \hat{s}(\psi) \leq \hat{s}(\phi)$ and so
\begin{eqnarray*}
\hat{s}(\bigsqcup D) = \vee \{s(\psi):\psi \in Y(s),\psi \leq \bigsqcup D\} \leq \bigsqcup_{\phi \in D} \hat{s}(\phi).
\end{eqnarray*}
This shows that $\hat{s}(\bigsqcup D) = \bigsqcup_{\phi \in D} \hat{s}(\phi)$ and thus $\hat{s}$ is continuous, see Lemma \ref{le:EquiContMap}.
\end{proof}

So, $\hat{S}$ is a subset of $[\Phi_1 \rightarrow \Phi_2]_c$. We show now that this set represents the finite elements of the information algebra $[\Phi_1 \rightarrow \Phi_2]_c$ by verifying the conditions of Combination, Convergence, Local Density and Compactness, see Section \ref{subsec:CompInfAlg}. First of all, we claim that the simple function $s$ defined by $Y(s) = Y(s_1) \cup Y(s_2)$ and $s(\psi) = \hat{s}_1(\psi) \vee \hat{s}_2(\psi)$ defines the combination of $\hat{s}_1$ and $\hat{s}_2$, that is $\hat{s} = \hat{s}_1 \cdot \hat{s}_2$. By a simple computation using transitivity of join we obtain for a $\phi \in \Phi_1$,
\begin{eqnarray*}
\hat{s}(\phi) &=& \vee \{\hat{s}_1(\psi) \vee \hat{s}_2(\psi):\psi \in Y(s_1) \cup Y(s_2),\psi \leq \phi\} \\
&=& \vee\{(\vee \{s_1(\psi_1):\psi_1 \in Y(s_1),\psi_1 \leq \psi\}) \vee \\
&& (\vee \{s_2(\psi_2):\psi_2 \in Y(s_1),\psi_2 \leq \psi\}):\psi \in Y(s_1) \cup Y(s_2),\psi \leq \phi\} \\
&=& (\vee \{s_1(\psi_1):\psi_1 \in Y(s_1),\psi_1 \leq \psi,\psi \in Y(s_1) \cup Y(s_2),\psi \leq \phi\}) \vee \\
&& (\vee \{s_2(\psi_2):\psi_2 \in Y(s_2),\psi_2 \leq \psi,\psi \in Y(s_1) \cup Y(s_2),\psi \leq \phi\}) \\
&=& (\vee \{s_1(\psi_1):\psi_1 \in Y(s_1),\psi_1 \leq \phi\}) \vee (\vee \{s_2(\psi_2):\psi_2 \in Y(s_2),\psi_2 \leq \phi\}) \\
&=& \hat{s}_1 \vee \hat{s}_2.
\end{eqnarray*}
Thus, $\hat{S}$ is closed under joins in information order, which corresponds to combination. The unit and null maps are obviously simple. So, the condition of \textit{Combination} is satisfied.

\textit{Convergence} follows since $([\Phi_1 \rightarrow \Phi_2]_c,\leq)$ is a complete lattice, see also the first part of the proof of Theorem \ref{prop:ContMapsComplLatt}.

For \textit{Density}, we have to show that
\begin{eqnarray*}
(\epsilon^1_x,\epsilon^2_y)(f) = \bigsqcup \{\hat{s}:\hat{s} \leq f\}.
\end{eqnarray*}
By the definition of extraction in the information algebra $[\Phi_1 \rightarrow \Phi_2]_c$, the assumption that $\Phi_1$ is a compact information algebra and the continuity of $f$, 
\begin{eqnarray*}
(\epsilon^1_x,\epsilon^2_y)(f)(\phi) &=& \epsilon^2_y(f(\epsilon^1_x(\phi)) 
= \epsilon^2_y(f(\bigsqcup \{\psi:\psi \in \Phi_{1,f},\psi = \epsilon^1_x(\psi) \leq \epsilon^1_x(\phi))\} ) \\
&=& \epsilon^2_y(\bigsqcup \{f(\psi):\psi \in \Phi_{1,f},\psi = \epsilon^1_x(\psi) \leq \phi)\} )
\end{eqnarray*}
for any $\phi = \epsilon^1_x(\phi) \in \Phi_1$. The set $\{\psi \in \Phi_{1,f},\psi = \epsilon^1_x(\psi) \leq \phi)\}$ is directed, so by the continuity of the extraction operator $\epsilon^2_y$, see Theorem \ref{th:ContOfExtr},
\begin{eqnarray*}
(\epsilon^1_x,\epsilon^2_y)(f)(\phi) = \bigsqcup \{\epsilon^2_y(f(\psi)):\psi \in \Phi_{1,f},\psi = \epsilon^1_x(\psi) \leq \phi)\}.
\end{eqnarray*}
We claim that
\begin{eqnarray} \label{eqKeyRel}
\epsilon^2_y(f(\psi)) = \bigsqcup \{\hat{s}(\psi):s \in S,\hat{s} = (\epsilon^1_x,\epsilon^2_y)(\hat{s}) \leq f\}
\end{eqnarray}
for any $\psi \in \Phi_1$. If this holds, then by the continuity of $\hat{s}$ we obtain for an element $\phi = \epsilon^1_x(\phi) \in \Phi_1$,
\begin{eqnarray*}
(\epsilon^1_x,\epsilon^2_y)(f)(\phi) &=& \bigsqcup \{\bigsqcup \{ \hat{s}(\psi):s \in S,\hat{s} = (\epsilon^1_x,\epsilon^2_y)(\hat{s}) \leq f\} \\
&& : \psi \in \Phi_{1,f},\psi = \epsilon^1_x(\psi) \leq \phi)\} \\
&=& \bigsqcup \{\bigsqcup \{ \hat{s}(\psi):\psi \in \Phi_{1,f},\psi = \epsilon^1_x(\psi) \leq \phi)\} \\
&& : s \in S,\hat{s} = (\epsilon^1_x,\epsilon^2_y)(\hat{s}) \leq f\}  \\
&=& \bigsqcup \{ \hat{s}(\phi):s \in S,\hat{s} = (\epsilon^1_x,\epsilon^2_y)(\hat{s}) \leq f\}
\end{eqnarray*}
and this means then $(\epsilon^1_x,\epsilon^2_y)(f) = \bigsqcup \{\hat{s}:\hat{s} \leq f\}$, that is \textit{Density} in $[\Phi_1 \rightarrow \Phi_2]_c$.

In order to prove (\ref{eqKeyRel}) consider a finite element $\psi = \epsilon^1_x(\psi) \in \Phi_{1,f}$. Then, by density in $\Phi_{2,f}$,
\begin{eqnarray*}
\epsilon^2_y(f(\psi)) = \bigsqcup \{\beta:\beta \in \Phi_{2,f},\beta = \epsilon^2_y(\beta) \leq \epsilon^2_y(f(\psi))\}.
\end{eqnarray*}
As always, we may on the left replace $\epsilon^2_y(f(\psi))$ simply by $f(\psi)$. Fix an element $\beta \in \Phi_{2,f}$ such that $\beta = \epsilon^2_y(\beta) \leq f(\psi)$ and define a simple map $s$ with $Y(s) = \{\psi\}$ and $s(\psi) = \beta$. It follows that
\begin{eqnarray*}
\hat{s}(\phi) = \left\{ \begin{array}{ll} \beta & \textrm{if}\ \psi \leq \phi, \\ 1 & \textrm{otherwise}. \end{array} \right..
\end{eqnarray*}
Obviously we have $\hat{s}(\phi) \leq f(\phi)$ for all $\phi \in \Phi_1$, hence $\hat{s} \leq f$. 

Next we show that these maps $\hat{s}$ have support $(x,y)$, that is $\hat{s} = (\epsilon^1_x,\epsilon^2_y)(\hat{s})$ or $\hat{s}(\phi) = (\epsilon^1_x,\epsilon^2_y)(\hat{s}(\phi))$ for all $\phi \in \Phi_1$. Assume first that $\psi \leq \phi$. Then we have
\begin{eqnarray*}
\epsilon^2_y(\hat{s}(\epsilon^1_x(\phi)) = \epsilon^2_y(\beta) = \beta = \hat{s}(\phi),
\end{eqnarray*}
since $\psi = \epsilon^1_x(\psi) \leq \phi$ if and only if $\psi = \epsilon^1_x(\psi) \leq \epsilon^1_x(\phi))$. Otherwise we see that $\epsilon^2_y(\hat{s}(\epsilon^1_x(\phi)) = 1 = \hat{s}(\phi)$. So, we conclude indeed that $\hat{s} = (\epsilon^1_x,\epsilon^2_y)(\hat{s})$.

Note that these maps $\hat{s}$ are of particular form, so a fortiori we see that for elements $\psi \in \Phi_{1,f}$, $\psi = \epsilon^1_x(\psi) \leq \phi$,
\begin{eqnarray*}
\epsilon^2_y(f(\psi)) \leq \bigsqcup \{\hat{s}(\psi):s \in S,\hat{s} = (\epsilon^1_x,\epsilon^2_y)(\hat{s}) \leq f\}.
\end{eqnarray*}
The right hand side is obviously smaller than $f(\psi)$. This shows that (\ref{eqKeyRel}) holds and thus \textit{Density} is valid.

It remains to verify \textit{Compactness} of $\hat{S}$. Let $G \subseteq [\Phi_1 \rightarrow \Phi_2]_c$ be a directed set of continuous maps and $\hat{s} \leq \bigsqcup G$. Then, if $\psi \in Y(s)$, we have $\hat{s}(\psi) \in \Phi_{2,f}$, and $\hat{s}(\psi) \leq (\bigsqcup G)(\psi) = \bigsqcup_{g \in G} g(\psi)$. The set of elements $g(\psi)$ for $g \in G$ is directed in $\Phi_{2,f}$. By compactness of the information algebra $\Phi_2$, there is a $g_\psi \in G$ such that $\hat{s}(\psi) \leq g_\psi(\psi)$. But $Y(s)$ is a finite set, therefore there is a $g \in G$ so that $g_\psi \leq g$ for $\psi \in Y(s)$.

Then we have $s(\psi) \leq \hat{s}(\psi) \leq g_\psi(\psi) \leq g(\psi)$ for any $\psi \in Y(s)$. But for any $\phi \in \Phi_1$, $\hat{s}(\phi)$ is the join of finitely many $s(\psi)$, and therefore we conclude that $\hat{s}(\phi) \leq g(\phi)$, hence $\hat{s} \leq g$ for some $g \in G$. This is compactness. At the same time we have shown that $\hat{s} \ll \hat{s}$ in the continuous information algebra $[\Phi_1 \rightarrow \Phi_2]_c$ if both $\Phi_1$ and $\Phi_2$ are compact. This confirms that the algebra of continuous maps is indeed compact in this case, see Theorem \ref{th:ContBeComp}.

Let's state this result in a Theorem.

\begin{theorem} \label{th:CompAlgMaps}
If $(\Phi_1,\cdot,0,1;E_1)$ and $(\Phi_2,\cdot,0,1;E_2)$ are compact information algebras, then the information algebra $([\Phi_ \rightarrow \Phi_2]_c,\cdot,0,1;E_1 \cdot E_2)$ is compact too. Its finite elements are the maps $\hat{S}$ defined on the base of simple functions $S$.
\end{theorem}

Based on these results about information maps between information algebras, different Cartesian closed categories of information algebras will be defined in the next section.

%%%%%%%%%%%%%%%%%%%%%%%%%%%%%%%%%%%%%%%%%%%%%%%%%%%%%%%%%%%%

\section{Cartesian closed categories of information algebras} \label{subsec:CartClos}

We consider the categories of idempotent, domain-free valuation algebras \textbf{IA}, and of compact and continuous valuation algebra \textbf{COMPIA} and \textbf{CONTIA} and we are going to show that these categories are all Cartesian closed. We we do not require in the sequel, that the information algebras $(\Phi,\cdot,0,1;E)$ with $E = \{\epsilon_x:x \in Q)$ satisfy the Support axiom. More precisely we consider the following categories.

\begin{enumerate}
\item The category $\mathbf{IA}$ has as objects domain-free information algebras and as morphisms \textit{information maps} $\Phi \rightarrow \Psi$.
\item The category of \textit{continuous} valuation algebras $\mathbf{CONTIA}$ has as objects continuous information algebras and as morphisms \textit{continuous maps} $\Phi \rightarrow \Psi$.
\item The category of \textit{algebraic} valuation algebras $\mathbf{COMPIA}$ has as objects compact information algebras and as morphisms \textit{continuous maps} $\Phi \rightarrow \Psi$.
\end{enumerate}
 
The category $\mathbf{COMPIA}$ is a subcategory of $\mathbf{CONTIA}$, which itself is a subcategory of $\mathbf{IA}$. We are going to show that all these categories are \textit{Cartesian closed}. To remind: A category $\mathbf{C}$ is Cartesian closed, if it satisfies the following three conditions:
\begin{enumerate}
\item  The category $\mathbf{C}$ has a \textit{terminal object}: There is an object $T \in \mathbf{C}$ such that there is exactly one morphism from any object to $T$.
\item The category $\mathbf{C}$ has \textit{finite products}: For any pair of objects $A,B \in \mathbf{C}$, there is an object $A \times B$ and morphisms $p_{A};A \times B \rightarrow A$ and $p_{B}:A \times B \rightarrow B$, such for any object $C$ and for any pair of morphisms $f_{1}:C \rightarrow A$ and $f_{2}:C \rightarrow B$ there is a morphism $f:C \rightarrow A \times B$ so that $p_{A} \circ f = f_{1}$ and $p_{B} \circ f = f_{2}$.
\item The category $\mathbf{C}$ has \textit{exponentials}: For any pair of objects $B,C \in \mathbf{C}$, there is an object $C^{B}$ and a morphism $eval:C^{B} \times B \rightarrow C$ such that for for every morphism $f:A \times B \rightarrow C$ there is a unique morphism $\lambda f:A \rightarrow C^{B}$ so that $eval \circ (\lambda f,id_{B}) = f$. 
\end{enumerate}

We are going to show that these elements exist for our three categories \textbf{IA}, \textbf{CONTIA} and \textbf{COMPIA}. The terminal object in all three cases is simply the valuation algebra $(\{0\},\cdot,0,0;\{id\})$. The finite product is the Cartesian product of valuation algebras.

\begin{theorem} \label{th:CatProd}
The Cartesian product $(\Phi _1\times \Phi_2,\cdot,(0,0),(1,1);E_1 \times E_2)$ of two (continuous, compact) information algebras $(\Phi_1,\cdot,0,1;E_1)$ and $(\Phi_2,\cdot,0,1;E_2)$ with $E_1 = \{\epsilon^1_x:x \in Q_1\}$ and $E_1 = \{\epsilon^2_y:y \in Q_2\}$ under component-wise combination and also component-wise information extraction, is the categorial direct product of the two valuation algebras in \textbf{IA} (\textbf{CONTIA}, \textbf{COMPIA}, respectively).
\end{theorem}

\begin{proof}
We verify first that $(\Phi _1\times \Phi_2,\cdot,(0,0),(1,1);E_1 \times E_2)$ is an information algebra. Combination in $\Phi_1 \times \Phi_2$ is defined component-wise and it is obvious that $(\Phi_1 \times \Phi_2;\cdot)$ is then an idempotent commutative semigroup with null element $(0,0)$ and unit $(1,1)$.  

For any pair $(\epsilon^1_x,\epsilon^2_y)$ in $E_1 \times E_2$, an operator
\begin{eqnarray*}
(\epsilon^1_x,\epsilon^2_y)(\phi_1,\phi_2) = (\epsilon^1_x(\phi_1),\epsilon^2_y(\phi_2))
\end{eqnarray*}
is defined. It is straightforward to verify that this operator is an existential quantifier in $\Phi_1 \times \Phi_2$, which is therefore an information algebra.

We define the projections $p_i$ by $p_i(\phi_1,\phi_2) = \phi_i$ for $i=1,2$. These projections are clearly information maps. Consider then an idempotent information algebra $(\Phi,\cdot,0,1;E)$ and two information maps $f_i :\Phi \rightarrow \Phi_i$, for $i=1,2$. Define $f : \Phi \rightarrow  \Phi_1 \times \Phi_2$ by $f(\phi) = (f_1(\phi),f_2(\phi))$. Again, $f$ is an information map. Then, $f_i = p_i \circ f$ for $i=1,2$. Thus, the product algebra $\Phi_1 \times \Phi_2$ is the direct product of then information algebras $\Phi_1$ and $\Phi_2$ in \textbf{IA}. 

Next, we show that the Cartesian product of two continuous valuation algebras is continuous. Let then $B_1$ and $B_2$ be bases in $\Phi_1$ and $\Phi_2$ respectively. Obviously $B_1 \times B_2$ is closed under join and contains the unit element $(1,1)$ as well as the null element $(0,0)$. We claim that $B_1 \times B_2$ is a basis of $\Phi_1 \times \Phi_2$. Let $D \subseteq B_1 \times B_2$ be a directed set and define $D_{1} = \{\phi_1 \in B_1:\exists \phi_2 \in B_2 \textrm{ so that}\ (\phi_1,\phi_2) \in D\}$. $D_{2}$ is defined similarly as the set of elements in $B_2$ obtained from $D$. Both $D_{1}$ and $D_{2}$ are clearly directed. Then $(\bigsqcup D_{1},\bigsqcup D_{2})$ is an upper bound of $D$, and it is obviously its supremum. So $\bigsqcup D = (\bigsqcup D_{1},\bigsqcup D_{2})$ exists in $\Phi_1 \times \Phi_2$. This is the convergence property.

We have $(\phi_1',\phi_2') \ll (\phi_1,\phi_2)$ if and only if $\phi_1' \ll \phi_1$ and $\phi_2'\ll \phi_2$, the $\ll$-relation taken in $\Phi_1 \times \Phi_2$, $\Phi_1$ and $\Phi_2$ respectively. Consider $(\phi_1,\phi_2) \in \Phi_1 \times \Phi_2$. Then
\begin{eqnarray*}
\lefteqn{\bigsqcup \{(\phi_1',\phi_2') \in B_1 \times B_2: (\phi_1',\phi_2') \ll (\phi_1,\phi_2)\}} \\
&&=(\bigsqcup \{\phi_1' \in B_1:\phi_1' \ll \phi_1\},\bigsqcup \{\phi_2' \in B_2:\phi_2' \ll \psi_2\}) \\
&&=(\phi_1,\phi_2).
\end{eqnarray*}
This shows that $\Phi_1 \times \Phi_2$ is a continuous lattice.

If $(\epsilon^1_x,\epsilon^2_y) \in E_1 \times E_2$, then we obtain in the same way
\begin{eqnarray*}
\lefteqn{\bigsqcup \{(\phi_1',\phi_2') \in B_1 \times B_2:  
(\phi_1',\phi_2')  = (\epsilon^1_x,\epsilon^2_y)(\phi_1',\phi_2') \ll (\epsilon^1_x,\epsilon^2_y)(\phi_1,\phi_2)\} }  \\
&&= (\bigsqcup \{\phi_1' \in B_1:\phi_1' = \epsilon_1(\phi_1') \ll \epsilon_1(\phi_1)\},
\bigsqcup \{\phi_2' \in B_2:\phi_2' = \epsilon_2(\phi_2') \ll \epsilon_2(\phi_2)\} ) \\
&&=(\epsilon_1(\phi_1),\epsilon_2(\phi_2)) = (\epsilon^1_x,\epsilon^2_y)(\phi_1,\phi_2). 
\end{eqnarray*}
So strong density holds too. This proves that $\Phi_1 \times \Phi_2$ is a continuous information algebra, see Theorem \ref{th:ComExtrJoin2}.

The projections $p_1$ and $p_2$  are obviously continuous maps. Let then $(\Phi,\cdot,0,1;E)$ be a continuous information algebra and $f_{1}$ and $f_{2}$ be continuous maps $f_{1}:\Phi \rightarrow \Phi_1$ and $f_{2}:\Phi \rightarrow \Phi_2$. Then we define $f = (f_{1},f_{2})$ as a map from $\Phi$ to $\Phi_1 \times \Phi_2$. It is continuous, since its components $f_{1}$ and $f_{2}$ are so. Then clearly $p_1 \circ f = f_{1}$ and $p_2 \circ f = f_{2}$. It follows that $\Phi_1 \times \Phi_2$ is the direct product in \textbf{CONTIA}.

If $\Phi_1$ and $\Phi_2$ are compact information algebras, then $(\Phi_1 \times \Phi_2,\cdot,0.,1;E_1 \times E_2)$ is a compact information algebra, and its finite elements are given by the Cartesian product of the finite elements of each factor since $(\phi_1',\phi_2') \ll (\phi_1,\phi_2)$ exactly if $\phi_1' \ll \phi_1$ and $\phi_2' \ll \phi_2$.  So, $\Phi_1 \times \Phi_2$ is the direct product in \textbf{COMPIA}. This completes the proof.
\end{proof}

Next we show that the information algebras of monotone or continuous maps are the exponentials of the respective category of idempotent, continuous or compact information algebras.

\begin{theorem} \label{th:ContExpo}
If $(\Phi_1,\cdot,0,1;E_1)$ and $(\Phi_2,\cdot,0,1;E_2)$ are two objects of the category \textbf{IA}, then the information algebra $([\Phi_1 \rightarrow \Phi_2],\cdot,0,1;E_1 \times E_2)$ is an exponential of \textbf{IA}. If $(\Phi_1,\cdot,0,1;E_1)$ and $(\Phi_2,\cdot,0,1;E_2)$ are two objects of the categories \textbf{CONTIA} or \textbf{COMPIA}, then the information algebra $([\Phi_1 \rightarrow \Phi_2],\cdot,0,1;E_1 \times E_2)$ is an exponential of the respective categories.
\end{theorem}

\begin{proof}
We treat only the case of continuous information algebras, the other cases follow in the same way. We know from Theorem \ref{th:ContOfCombExtr} that $([\Phi_1 \rightarrow \Phi_2],\cdot,0,1;E_1 \times E_2)$ is a continuous information algebra. We define the morphism $eval:[\Phi_1 \rightarrow \Phi_2]_{c} \times \Phi_1 \rightarrow \Phi_2$  for $f \in [\Phi_1 \rightarrow \Phi_2]_{c}$ and $\phi \in \Phi_1$ by
\begin{eqnarray}
eval(f,\phi) = f(\phi).
\nonumber
\end{eqnarray}
The map $eval$ is continuous.

Consider another continuous valuation algebra $(\Phi,\cdot,0,1;E)$ and let $f:\Phi \times \Phi_1 \rightarrow \Phi_2$ be a continuous map. Then we define a map $\lambda f:\Phi \rightarrow [\Phi_1 \rightarrow \Phi_2]_{c}$ for $\chi \in \Phi$ and $\phi \in \Phi_1$ by
\begin{eqnarray}
\lambda f(\chi)(\phi) = f(\chi,\phi).
\nonumber
\end{eqnarray}
The map $\lambda f$ is continuous if $f$ is so. In fact, let $D$ be a directed set in $\Phi$. Then we have for $\phi \in \Phi_1$, 
\begin{eqnarray}
\lambda f(\bigsqcup D)(\phi) = f(\bigsqcup D,\phi) = f(\bigsqcup_{\chi \in D}(\chi,\phi)) = \bigsqcup_{\chi \in D} f(\chi,\phi) = \bigsqcup_{\chi \in D} \lambda f(\chi)(\phi).
\nonumber
\end{eqnarray}
Thus we see that $\lambda f(\bigsqcup D) = \bigsqcup_{\chi \in D} \lambda f(\chi)$.

Now finally for $(\chi,\phi) \in \Phi \times \Phi_1$, we obtain that $eval \circ (\lambda f,id_{\Phi_1})(\chi,\phi) = eval(\lambda f(\chi),\phi) = \lambda f(\chi)(\phi) = f(\chi,\phi)$. So indeed $eval \circ (\lambda f,id_{\Phi_1}) = f$. 

The cases of ordinary and of compact information algebras are treated in exactly the same way. 
\end{proof}

This shows that the categories $\mathbf{IA}$, $\mathbf{COMPIA}$ and $\mathbf{CONTIA}$ are all Cartesian closed.

%%%%%%%%%%%%%%%%%%%%%%%%%%%%%%%%%%%%%%%%%%%%%%%%%%%%%%%%%%%%

\section{Lattice-valued information algebras}

As an illustration, we introduce in the section a further example of a class of information algebras, among which we have both compact and continuous information algebras. Consider an set $U$ as an universe (of possible worlds), and, as with set algebras, we assume that questions $x \in Q$ are represented  by equivalence relations $\equiv_x$, so that question $x$ has the same answer in two possible worlds $u$ and $v$, if $u \equiv_x v$ (see Section \ref{sec:SetAlg}). As there, we have $x \leq y$ if $u \equiv_y v$ implies $u \equiv_x v$ for all pairs $\{u,v\}$. We assume for simplicity's sake that all equivalence classes $[u]_x$ (or blocks $B_x$ of the associated partitions $P_x$) have finite cardinality. 

Consider now a bounded, distributive lattice $(\Lambda,\wedge,\vee,0,1)$ with $0$ as least and $1$ as greatest element. Recall that in $\Lambda$ an order $\alpha \leq \beta$ is defined either by $\alpha \wedge \beta = \alpha$ or equivalently by $\alpha \vee \beta = \beta$. Then we define $\Lambda$-valuations $\phi$ on $U$ as maps $\phi : U \rightarrow \Lambda$. Let $\Phi$ be the set of all $\Lambda$-valuations on $U$. Then we define in $\Lambda$ the following operations of combination and extractions:\begin{enumerate}
\item \textit{Combination:} For all $\phi,\psi \in \Phi$, $\phi \cdot \psi$ is defined by $(\phi \cdot \psi)(u) = \phi(u) \wedge \psi(u)$ for all $u \in U$,
\item \textit{Extraction:} For all $\phi \in \Phi$ and $x \in Q$, $\epsilon_x(\phi)$ is defined by $\epsilon_x(\phi)(u) = \vee_{v \equiv_x u} \phi(v)$. for all $u \in U$
\end{enumerate}

It is clear that $(\Phi,\cdot)$ is a commutative semigroup with the valuations $1(u) = 1$ and $0(u) = 0$ for all $u \in U$ as unit and null elements. 

Further, we have $\epsilon_x(0) = 0$ since $(\epsilon_x(0))(u) = \vee_{v \equiv_x u} 0(u) = 0$ for all $u \in U$. Also, by distributivity of the lattice $\Lambda$,
\begin{eqnarray*}
(\epsilon_x(\phi) \cdot \phi)(u) = (\vee_{v \equiv_x u} \phi(v)) \wedge \phi(u) = \vee_{v \equiv_x u} (\phi(v) \wedge \phi(u)) = \phi(u),
\end{eqnarray*}
so that $\epsilon_x(\phi) \cdot \phi = \phi$. And, then we have
\begin{eqnarray*}
\lefteqn{(\epsilon_x(\epsilon_x(\phi) \cdot \psi))(u) = \vee_{v \equiv_x u} (\epsilon_x(\phi) \cdot \psi))(v) } \\
&&= \vee_{v \equiv_x u} ((\vee_{w \equiv_x v} \phi(w)) \wedge \psi(v)) \\
&&= (\vee_{v \equiv_x u} (\vee_{w \equiv_x v} \phi(w))) \wedge (\vee_{v \equiv_x u} \psi(v)) \\
&&= (\vee_{w \equiv_x u} \phi(w)) \wedge (\vee_{v \equiv_x u} \psi(v)) \\
&&= (\epsilon_x(\phi) \cdot \epsilon_x(\psi))(u)
\end{eqnarray*}
since $w \equiv_x v \equiv_x u$ if and only if $w \equiv_x u$. So we have $\epsilon_x(\epsilon_x(\phi) \cdot \psi) = \epsilon_x(\phi) \cdot \epsilon_x(\psi)$ and the operators $\epsilon_x$ are existential quantifiers.

A valuation $\phi$ which takes constant values on any equivalence class $[u]_x$, that is $\phi(u) = \phi(v)$ has support $x$, $\epsilon_x(\phi) = \phi$. Note further, that if $x$ is a support of $\phi$ and $y \geq x$, then $y$ is also a support of $\phi$, since $u \equiv_y v$ implies $u \equiv_x v$. All this together shows that $(\Phi,\cdot,0,1;E)$ with $E = \{\epsilon_x:x \in Q\}$ is a domain-free information algebra, called a lattice valued information algebra.

Concerning the information order in $\Phi$ we remark that $\phi \leq \psi$ if and only if $\phi(u) \geq \psi(u)$ for all $u \in U$. This inversion of the information order with respect to the order in $\Lambda$ is underlined by the fact that combination $\phi \cdot \psi$ is join (supremum) in information order, but defined by meet (infimum) in $\Lambda$. The assumption that $\Lambda$ is a distributive lattice implies in fact that $\Phi$ is also a distributive lattice in information order. Indeed we have
\begin{eqnarray*}
\phi \cdot \psi = \phi \vee \psi \textrm{ if and only if}\ (\phi \vee \psi)(u) = \phi(u) \wedge \psi(u), \\
\end{eqnarray*}
And similarly, meet in $\Phi$ is defined by
\begin{eqnarray*}
(\phi \wedge \psi)(u)  = \phi(u) \vee \psi(u). 
\end{eqnarray*}
It can easily be verified that this valuation $\phi \wedge \psi$ is indeed the infimum in information order. The unit valuation $1$ and the null valuation $0$ are the smallest and the greatest elements in information order. So $(\Phi,\leq)$ is a bounded lattice. Distributivity follows from the definitions of join and meet in $\Phi$ and the distributivity of $\Lambda$. In addition, extraction distributes over meet.

\begin{proposition} \label{prop:DistrExtrMee}
For all valuations $\phi,\psi \in \Phi$ and for all $x \in Q$,
\begin{eqnarray*}
\epsilon_x(\phi \wedge \psi) = \epsilon_x(\phi) \wedge \epsilon_x(\psi).
\end{eqnarray*}
\end{proposition}

\begin{proof}
The proof is straightforward: For any $u \in U$, we have by definition and associativity of join
\begin{eqnarray*} 
(\epsilon_x(\phi \wedge \psi))(u) &=& \vee_{v \equiv_x u} (\phi(v) \vee \psi(v))
= (\vee_{v \equiv_x u} \phi(v)) \vee (\vee_{v \equiv_x u} \psi(v))  \\
&=& (\epsilon_x(\phi) \wedge \epsilon_x(\psi))(u).
\end{eqnarray*}
This proves the identity $\epsilon_x(\phi \wedge \psi) = \epsilon_x(\phi) \wedge \epsilon_x(\psi)$.
\end{proof}

We refer to the end of Section \ref{subsec:BooleInfAlg} for a note on the representation theory based on Priestley spaces of such an information algebra where $(\Phi,\leq)$ is a distributive lattice.

If we take for $\Lambda$ the Boolean lattice $\{0,1\}$ with $0 \leq 1$, then we see that the corresponding $\{0,1\}$-valuations  on $U$ are set-indicator functions relative to the the subsets of $U$. And the information algebra of these $\{0,1\}$-valuations corresponds to a set algebra (see Section \ref{sec:SetAlg}). 

Are there compact or continuous lattice-valued information algebras? The answer is yes, see \cite{guanlikohlas15}. In fact, it is sufficient and necessary that the underlying lattice $\Lambda$ has the same property. Note however that $\phi \leq \psi$ if and only if $\phi(u) \geq \psi(u)$, that is information order in $\Phi$ inverses order in $\Lambda$. There fore we must rather consider the lattice $(\Lambda;\leq^\vartheta)$ with $u \leq v^\vartheta$ iff $v \leq u$ in the original order. Then meet and join interchange, $\wedge^\vartheta = \vee$ and $\vee^\vartheta = \wedge$.

\begin{theorem} \label{th:ContLattValIA}
A lattice-valued information algebra $(\Phi,\cdot,0,1;E)$ with $E = \{\epsilon_x:x \in Q\}$ based on a lattice $\Lambda$ is continuous (compact) if and only if the lattice $(\Lambda,\leq^\vartheta)$ is continuous (compact).
\end{theorem}

\begin{proof}
Using Theorem \ref{th:ContLatt}, the proof is straightforward, since the relevant properties of $\Lambda$ carry over to $\Phi$. We verify first that the lattice $\Phi$ is complete if and only if the lattice $\Lambda$ is so. Consider any subset $X$ of $\Phi$ and associate with it the subsets $X_u = \{\phi(u):\phi \in X\}$ of $\Lambda$ for $u \in U$. Let $\psi(w) = \bigwedge^\vartheta X_u$, if $\Lambda$ is a complete lattice. Then $\psi$ is a lower bound of $X$. If $\chi$ is another lower bound of $X$, then $\chi(u)$ is a lower bound of $X_u$, hence $\chi(u) \leq^\vartheta \psi(u)$ and therefore $\chi \leq \psi$. So $\psi = \bigwedge X$. Conversely, if $\Phi$ is a complete lattice and $X$ any subset of $\Lambda$, consider the subset $X' =\{\psi \in \Phi: \psi(u) = \lambda, \forall u \in U, \lambda \in X\}$ of constant maps in $\Phi$. Then, by assumption, the meet of $X'$ exists in $\Phi$. Let $\phi = \wedge X'$. As before it follows that $\phi(u)$ is the least upper bound of $X$, hence the meet $\bigwedge^\vartheta X = \phi(u)$ exists. In both cases it follows from the existence of arbitrary meets the existence of arbitrary join since the lattices are bounded \cite{daveypriestley97}. Therefore $\Phi$ is a complete lattice if and only if $\Lambda$ is a complete lattice.

Next we show in the same way that $\psi \ll \phi$ if and only if $\psi(u) \ll^\vartheta \phi(u)$ for all $u \in U$. Consider a directed subset $D$ of $\Phi$ and the associated subsets $D_u = \{\psi(u):\psi \in D\}$ for $u \in U$. Obviously all $D_u$ are directed in $\Lambda$ (under the order $\leq^\vartheta$). If $\phi \leq \bigsqcup D$, then $\phi(u) \leq^\vartheta \bigsqcup^\vartheta D_u$ for all $u \in U$. And if $\psi \in \Phi$ such that $\psi(u) \ll^\vartheta \phi(u)$, then there is an element $\chi(u) \in  D_u$ such that $\psi(u) \leq^\vartheta \chi(u)$. But then $\psi \leq \chi \in D$ and $\psi \ll \phi$. Conversely, assume $\psi \ll \phi$ and consider a directed subset $D$ of $\Lambda$. Suppose $\phi(u) \leq^\vartheta \bigsqcup ^\vartheta D$. Then define $D' = \{\chi:\chi(u) = \lambda, \forall u \in U,\lambda \in X\}$. This set is directed in $\Phi$. Then we have $\phi \leq \bigsqcup D'$, hence there is a $\chi \in D'$ such that $\psi \leq \chi$, hence $\psi(u)^\vartheta \leq \chi(u) \in D$, so that $\psi(u) \ll^\vartheta \phi(u)$ for all $u \in U$. This proves the claim at the beginning of the paragraph.

Finally, recall that $\phi$ has support $x$ if and only if $\phi(u)$ is constant on the equivalence class $[u]_x$ of the equivalence relation $u \equiv_x v$. This implies that in $\Phi$ local density holds if and only if $density$ is valid in $\Lambda$. Indeed, note that $\psi = \epsilon_x(\psi) \ll \epsilon_x(\phi)$ implies $\psi(v) = \psi(u) \ll^\vartheta \phi(u) = \phi(v)$ for all $v \equiv_x u$. By density in $(\Lambda,\leq^\vartheta)$ we have 
\begin{eqnarray*}
\phi(u) = \bigsqcup \{\lambda \in \Lambda:\lambda = \psi(u) \ll^\vartheta \phi(u)\}
\end{eqnarray*}
This implies $\phi = \epsilon_x(\phi) = \bigsqcup \{\psi \in \Phi:\psi = \epsilon_x(\psi) \ll \phi\}$, that is, local density in $\Phi$. Conversely consider the set $\{\eta \in \Lambda:\eta \ll^\vartheta \lambda\}$ and define constant $\Lambda$-valuations $D =\{\psi \in \Phi:\psi(u) ) = \eta, \forall \eta \ll^\vartheta \lambda\}$ and $\phi(u) = \lambda$. Any $x \in Q$ is a support of any $\psi \in D$ and for $\phi$ and $\psi \ll \phi$. So by local density $\phi = \epsilon_x(\phi) = \bigsqcup D$, hence $\lambda = \phi(u) = \bigsqcup \{\eta \in \Lambda:\eta \ll^\vartheta \lambda\}$ and density holds in $\Lambda$. This concludes the proof for the case of continuous lattices $\Phi$ and $\Lambda$.

The case of compact lattices follows from the continuous one, since $\phi \ll \phi$ if and only if $\phi(u) \ll^\vartheta \phi(u)$ for all $u \in U$.
\end{proof}

So lattice-valued information algebras provide a large family of information algebras, including compact and continuous ones.

%%%%%%%%%%%%%%%%%%%%%%%%%%%%%%%%%%%%%%%%%%%%%%%%%%%%%%%%%%%%

\section{Duality for compact and continuous algebras}

In this section we examine duality between domain-free and labeled compact and continuous information algebras. For this purpose we need first to establish what we mean by a compact or continuous labeled information algebra. This can be done by looking at the labeled algebras derived from compact and continuous domain-free algebras. 

We first remark, that if $(\Phi,\cdot,0,1,;E)$ is a compact or continuous domain-free information algebra with set $E = \{\epsilon_x:x \in Q\}$ of extraction operator, we may always add the trivial extraction operator $id$, the identity map of $\Phi$ to $E$. Let $E' = E \cup \{id\}$ and consider $(\Phi,\cdot,0,1,;E')$. Adjoin an element $\top$ to $Q$ corresponding to 
$id$, $id = \epsilon_\top$. Since $\epsilon_x \circ id = id \circ \epsilon_x = \epsilon_x$, we have $x \leq \top$ for all $x \in Q$. Note that $(\Phi,\cdot,0,1,;E')$ is still compact or continuous. This is so, because, thanks to the support axiom, local density implies density, which is local density on $\top$. So, we assume throughout this section that $id$ belongs to $E$ in a domain-free information algebra or that $(Q,\leq)$ has a top element $\top$. We remark that under this assumption, the support axiom is automatically (and trivially) satisfied, since $\top$ is a support for any element $\phi$ of $\Phi$. Further, we recall that any $x \in Q$ is at least a support of elements $0$ and $1$. In this section we always suppose the support axiom to be valid.

Consider a first compact domain-free generalized information algebra $(\Phi,\cdot,0,1,;E)$. We form the dual labeled algebra $(\Psi,\cdot;T)$, where $\Psi$ is the set of pairs $(\phi,x)$ with $\phi \in \Phi$ and $\epsilon_{x}(\phi) = \phi$, see Section \ref{subsec:LabInfAlg}. In particular, let $\Psi_{x}$ be the set of all pairs $(\phi,x)$ for a fixed $x$, so that
\begin{eqnarray}
\Psi = \bigcup_{x \in D} \Psi_{x} .
\nonumber
\end{eqnarray}
Note that idempotency allows, as in the domain-free case, to define a partial order in $\Psi$. In fact, define $(\phi,x) \leq (\psi,y)$ if and only if $(\phi,x) \cdot (\psi,y) = (\phi \cdot \psi,x \vee y) = (\psi,y)$. This implies $\phi \cdot \psi = \psi$ or $\phi \leq \psi$ in $(\Phi,\leq)$ and $x \leq y$ in $(D;\leq)$. Further $T$ is the set of all transport operators $t_x$ for $x \in Q$.

As a preparation, we prove two simple, but useful results about the labeled algebra $(\Psi,\cdot;T)$.

\begin{lemma} \label{le:SupInDualLLabAlg}
Let $(\Phi,\cdot,0,1,;E)$ be a domain-free information algebra and $(\Psi,\cdot;T)$ its dual labeled version. If the supremum of a subset $X$ of $\Psi$ exists in $\Psi$, then
\begin{eqnarray} \label{eq:SupOfPairs}
\bigvee X = (\bigvee_{(\phi,x) \in X} \phi,\bigvee_{(\phi,x) \in X} x).
\end{eqnarray}
\end{lemma}

\begin{proof}
Assume $\bigvee X = (\chi,y)$. Then $(\phi,x) \leq (\chi,y)$ for all $(\phi,x) \in X$, hence $\phi \leq \chi$ and $x \leq y$. Consider other upper bounds $\chi'$ and $y'$ for the elements $\phi$ and $x$, $(\phi,x) \in X$. Then $(\phi,x) \leq (\chi',y')$, hence $(\chi,y) \leq (\chi',y')$. But this implies $\chi \leq \chi'$ and $y \leq y'$ and so indeed $\chi = \bigvee_{(\phi,x) \in X} \phi$ and $y = \bigvee_{(\phi,x) \in X} x$. This is (\ref{eq:SupOfPairs}).
\end{proof}

\begin{lemma} \label{le:SupInDomain}
Let $(\Phi,\cdot,0,1,;E)$ be a domain-free information algebra and $(\Psi,\cdot;T)$ its dual labeled version. Let $X$ be a subset of $\Phi$ such that $\epsilon_{x}(X) = X$, that is, all elements of $X$ have support $X$. If the supremum of $X$ exists in $\Phi$, then $(\bigvee X,x) \in \Psi$ and 
\begin{eqnarray}
 \bigvee_{\psi \in X} (\psi,x) = (\bigvee X,x).
\nonumber
\end{eqnarray}
 \end{lemma}

\begin{proof}
We need only to show that $\bigvee X$ has support $x$. Define $\phi = \bigvee X$. Then, for all $\psi \in X$ we have $\psi = \epsilon_{x}(\psi) \leq \phi$, hence $\psi = \epsilon_{x}(\psi) \leq \epsilon_{x}(\phi)$. So, $\epsilon_{x}(\phi)$ is an upper bound of $X$, therefore $\phi \leq \epsilon_{x}(\phi)$, hence $\phi = \epsilon_{x}(\phi)$. 
\end{proof}

We have further the following result as a corollary of this lemma.

\begin{proposition} \label{prop:ComplLatt}
If $(\Phi,\cdot,0,1,;E)$ is an information algebra such that $(\Phi,\leq)$ is a complete lattice and $(\Psi,\cdot;T)$ its dual labeled information algebra, then $(\Psi_x,\leq)$ is a complete lattice for any $x \in Q$.
\end{proposition}

\begin{proof}
By Lemma  \ref{le:SupInDomain} any subset $X$ of $\Psi_x$ has a supremum if $(\Phi,\leq)$ is a complete lattice. The existence of an infimum of $X$ follows in the same way as in the proof of this lemma, and $\bigwedge X = (\bigwedge_{\phi,x) \in X} \phi,x)$.
\end{proof}

We remark, that if $(\Phi,\leq)$ is a complete lattie, this does not imply that $(\Psi,\leq)$ is also a complete lattice. The next theorem shows how finite elements in $(\Psi_{x};\leq)$ relate to finite elements in $(\Phi;\leq)$.

\begin{theorem} \label{th:DualFiniteEl}
Let $(\Phi,\cdot,0,1,;E)$ be a domain-free compact information algebra with finite elements $\Phi_{f}$ and $(\Psi,\cdot;T)$ its dual labeled version. Then $(\phi,x) \in \Phi$ is finite in $(\Psi_{x};\leq)$ if and only if $\phi$ is finite in $(\Phi;\leq)$, that is, $\phi \in \Phi_{f}$.
\end{theorem}

\begin{proof}
Consider an element $(\phi,x)$ of $\Psi$ with $\phi \in \Phi_{f}$. Let $X$ be a directed subset of $\Psi_{x}$ such that $(\phi,x) \leq \bigvee X$. By Proposition \ref{prop:ComplLatt} this supremum exists. Define $X' = \{\psi \in \Phi:(\psi,x) \in X\}$. Clearly, $X'$ is directed too and since $\bigvee X = (\bigvee X',x)$ (Lemma \ref{le:SupInDualLLabAlg}) the supremum of $X'$ exists in $\Phi$ and $\phi \leq \bigvee X'$. Since $\phi$ is finite in $(\Phi;\leq)$ there is a $\psi \in X'$ such that $\phi \leq \psi$, hence $(\phi,x) \leq (\psi,x) \in X$. This shows that $(\phi,x)$ is finite in $(\Psi_{x};\leq)$. 

Conversely, assume that $(\phi,x)$ is finite in $(\Psi_{x};\leq)$. Let $X$ be a directed subset of $\Phi$, whose supremum exists  in $\Phi$ since $(\Phi,\leq)$ is a complete lattice, and such that $\phi \leq \bigsqcup X$. Then we have $\phi = \epsilon_{x}(\phi) \leq \epsilon_{x}(\bigsqcup X) = \bigsqcup \epsilon_{x}(X)$ (Theorem \ref{th:ContOfExtr}). Define $X' = \{(\epsilon_{x}(\psi),x):\psi \in X\}$. It is a directed set in $(\Psi_{x};\leq)$ and we have $(\phi,x) \leq (\bigsqcup \epsilon_{x}(X),x) = \bigsqcup X'$ (Lemma \ref{le:SupInDomain}). Since $(\phi,x)$ is assumed to be finite in $(\Psi_{x};\leq)$ there is an element $(\epsilon_{x}(\psi),x) \in X'$ such that $(\phi,x) \leq (\epsilon_{x}(\psi),x)$. This implies $\phi \leq \psi$ for an element $\psi \in X$. This shows that $\phi$ is  finite in $(\Phi;\leq)$. 
\end{proof}

According to this theorem, finite elements in $(\Phi;\leq)$ correspond to finite elements in $(\Psi_{x};\leq)$ for domains $x$ which are supports of the finite elements in $(\Phi;\leq)$. Note that finite elements in $(\Psi_{x};\leq)$ are not necessarily finite in $(\Psi;\leq)$ and  that the finite elements in $(\Phi;\leq)$ do not induce finite elements in $(\Psi;\leq)$, as one might have expected. So, if we denote the finite elements in $(\Psi_{x};\leq)$ by $\Psi_{x,f}$, and
\begin{eqnarray}
\Psi_{f} = \bigcup_{x \in D} \Psi_{x,f},
\nonumber
\end{eqnarray}
then $\Psi_{f}$ does not represent the finite elements of $(\Psi;\leq)$ but the union of the locally finite ones. Note that if $(\Phi,\cdot,0,1,;E)$ is a \textit{compact} information algebra, then $\Psi_{f}$ is closed under combination. In fact, if $(\phi,x) \in \Psi_{x,f}$ and $(\psi,y) \in \Psi_{y,f}$, then by Theorem \ref{th:DualFiniteEl} $\phi$ and $\psi$ are finite elements in $(\Phi;\leq)$ and so is its combination $\phi \cdot \psi$. This combination has $x \vee y$ as a support and again by the same theorem, therefore $(\phi,x) \cdot (\psi,y) = (\phi \cdot \psi,x \vee y)$ are finite in $\Psi_{x \vee y,f}$. However, transport of finite elements keeps them not necessarily finite, except if the finite elements of $(\Phi;\leq)$ are closed under extraction. Nevertheless, for $x \leq y$, the element $t_{y}(\phi,x) = (\phi,x) \cdot (1,y)$ remains finite, if $(\phi,x)$ is finite. This is true because $(1,y)$ is a finite element.

Next we show that strong density of the compact algebra $(\Psi,D;\leq,\bot,\cdot,\epsilon)$ induces local density within the domains $\Psi_{x}$ of the dual labeled algebra. That is, the finite elements in $(\Psi_{x};\leq)$ are dense in $\Psi_{x}$ and approximate thus the elements of $\Phi_{x}$.

\begin{theorem} \label{th:DualLocDensity}
Let $(\Phi,\cdot,0,1,;E)$ be a domain-free compact information algebra and $(\Psi,\cdot;T)$ its dual labeled version. Then, for all $(\phi,x) \in \Psi$,
\begin{eqnarray} \label{eq:DualLocalDensity}
(\phi,x) = \bigsqcup\{(\psi,x) \in \Phi_{x,f}:(\psi,x) \leq (\phi,x)\}.
\end{eqnarray}
\end{theorem}

\begin{proof}
By strong density in the algebra $(\Phi,\cdot,0,1,;E)$ we have
\begin{eqnarray}
(\phi,x) &=&(\bigsqcup \{\psi \in \Phi_{f}:\psi = \epsilon_{x}(\psi) \leq \phi\},x)
\nonumber \\
&=&\bigsqcup \{(\psi,x) \in \Psi_{x,f}:(\psi,x) \leq (\phi,x)\}.
\nonumber
\end{eqnarray}
This equality holds by Lemma \ref{le:SupInDomain}.
\end{proof}

So, the dual, labeled version of a compact information algebra is a labeled algebra, where local density according to (\ref{eq:DualLocalDensity}) holds. We take this below as the model to define labeled compact information algebras. Note that order in a labeled information algebra $(\Psi,\cdot;T)$ is defined again by $\phi \leq \psi$ if $\phi \cdot \psi = \psi$. This induces also a partial order in $(\Psi_{x};\leq)$ between the elements $\Psi_{x} = \{\phi \in \Psi:d(\phi) = x\}$ in domain $x$. The following lemma states a few elementary properties of this labeled order.

\begin{lemma} \label{leLabOrder}
Let $(\Psi,\cdot;T)$ be an idempotent labeled information algebra. Then
\begin{enumerate}
\item $x \leq d(\phi)$ implies $t_{x}(\phi) \leq \phi$,
\item $x \geq d(\phi)$ implies $t_{x}(\phi) \geq \phi$,
\item $\phi \leq \psi$ implies $t_{x}(\phi) \leq t_{x}(\psi)$ for any $x \in D$,
\item $\phi,\psi \leq \phi \cdot \psi$,
\item $\phi \leq \psi$ implies $\phi \cdot \chi \leq \psi \cdot \chi$ for any $\chi \in \Phi$.
\end{enumerate}
\end{lemma}

\begin{proof}
1.) follows from the Idempotency Axiom of a labeled information algebra, $t_{x}(\phi) \cdot \phi = \phi$ since $x \vee d(\phi) = d(\phi)$.

2.) follows from $t_{x}(\phi) = \phi \cdot 1_{x}$, hence by idempotency, $t_{x}(\phi) \cdot \phi = \phi \cdot 1_{x} \cdot \phi = \phi \cdot 1_{x} = t_{x}(\phi)$.

3.) Let $d(\phi) = y$ and $d(\psi) = z$ and note that by the Combination axiom $t_x(\phi) \cdot t_x(\psi) = t_x(t_x(\phi) \cdot \psi)$ . Assume first $x \leq y$. Then, since $\phi \cdot \psi= \psi$, we have by item 1, $t_x(\phi) \cdot t_x(\psi) = t_x(t_x(\phi) \cdot \phi \cdot \psi) = t_x(\phi \cdot \psi) = t_x(\psi)$, hence $t_x(\phi) \leq t_x(\psi)$. Next assume $x \geq y$. Then $t_x(\phi) \cdot t_x(\psi) = t_x(\phi \cdot 1_x \cdot \psi) = t_x(1_x \cdot \psi) = 1_x \cdot t_x(\psi) = t_x(\psi)$. Hence again $t_x(\phi) \leq t_x(\psi)$. In the general case, for $x \vee y \vee z$ we conclude, using the first case above, that $t_{x \vee y \vee z}(\phi) \leq t_{x \vee y \vee z}(\psi)$. Since $x \vee y \vee z \geq x$, using the second case above, we obtain $t_x(t_{x \vee y \vee z}(\phi)) \leq t_x(t_{x \vee y \vee z}(\phi))$. But we have (see Lemma \ref{le:ElPropLabInfAlg})) $t_x(\phi) = t_x(t_{x \vee y \vee z}(\phi))$ and $t_x(\psi) = t_x(t_{x \vee y \vee z}(\psi))$, so that $t_x(\phi) \leq t_x(\psi)$

4.) follows from idempotency, $\phi \cdot (\phi \cdot \psi) = \phi \cdot \psi$ and $\psi \cdot (\phi \cdot \psi) = \phi \cdot \psi$.

5.) If $\phi \leq \psi$, we have by idempotency $(\phi \cdot \chi) \cdot (\psi \cdot \chi) = (\phi \cdot \psi) \cdot \chi = \psi \cdot \chi$.
\end{proof}

The lemma shows in particular, that the combination and the transport operations preserve order.

What is the labeled version of a continuous labeled information algebra? To examine this question, we consider the labeled version $(\Psi,\cdot;T)$ with $T = \{t_x:x \in Q\}$ of a continuous information algebra $(\Phi,\cdot,0,1;E)$ with $E = \{\epsilon_x:x \in Q\}$. We recall again that $\Psi$ consists of all pairs $(\phi,x)$, where $\phi \in \Phi$ and $\phi = \epsilon_{x}(\phi)$. 

Assume that $B$ is a basis of the continuous information algebra $(\Phi,\cdot,0,1;E)$. Define $B_{x} = \{(\phi,x):\phi \in B,\epsilon_{x}(\phi) = \phi\}$. We claim that this is a basis in $\Psi_{x}$. In fact, if $(\phi,x),(\psi,x) \in B_{x}$, then $(\phi,x) \cdot (\psi,x) = (\phi \cdot \psi,x) \in B_{x}$ since $B$ is closed under combination or  join. So $B_{x}$ is closed under combination. Further also $(0,x)$ and $(1,x) $ belong to $B$. Consider any directed subset $X$ of $B_{x}$. By Lemma \ref{le:SupInDomain} we have $\bigsqcup X = (\bigsqcup_{(\phi,x) \in X}\ \phi,x) \in \Phi_{x}$. This is the convergence property in $\Psi_{x}$.

Define 
\begin{eqnarray}
\bar{B} = \bigcup_{x \in D} B_{x}.
\nonumber 
\end{eqnarray}
Then, $\bar{B}$ is still closed under combination. In fact, let $(\phi,x) \in B_{x}$ and $(\psi,y) \in B_{y}$, then $\phi,\psi \in B$ and $x$ is a support of $\phi$, $y$ a support of $\psi$. But then $x \vee y$ is a support of $\phi \cdot \psi$. So, since $(\phi,x) \cdot (\psi,y) = (\phi \cdot \psi,x \vee y)$ and $\phi \cdot \psi \in B$, we see that $(\phi,x) \cdot (\psi,y) \in B_{x \vee y}$.

We claim also that a \textit{density} property holds in $\Psi_{x}$. Denote the way-below relation in $(\Psi_{x};\leq)$ by $\ll_{x}$. We prove first the following lemma.

\begin{lemma} \label{le:LabWayBelow}
Let $(\Phi,\cdot,0,1;E)$ be a continuous domain-free information algebra and let $\phi,\psi \in \Phi$ and $\epsilon_{x}(\phi) = \phi$, $\epsilon_{x}(\psi) = \psi$. Then $\psi \ll \phi$, if and only if $(\psi,x) \ll_{x} (\phi,x)$.
\end{lemma}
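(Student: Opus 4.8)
The plan is to reduce the way-below relation in $(\Phi_x;\leq)$ to the one in $(\Psi;\leq)$ by exploiting that, for a fixed label $x$, the order on $\Phi_x$ mirrors the order on the support-$x$ elements of $\Psi$: indeed $(\eta_1,x)\leq(\eta_2,x)$ holds precisely when $\eta_1\leq\eta_2$. Two facts from earlier will do the bookkeeping for suprema. First, by Theorem \ref{th:ContLatt} the lattice $(\Psi;\leq)$ is complete, so every directed set in $\Psi$ has a supremum; hence existence of joins is never an issue. Second, Lemma \ref{le:SupInDomain} lets me translate joins between the two worlds: if $Y'\subseteq\Psi$ consists of elements of support $x$, then $\bigvee_{\eta\in Y'}(\eta,x)=(\bigvee Y',x)$, and in particular $\bigvee Y'$ again has support $x$.

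For the implication $\psi\ll\phi\Rightarrow(\psi,x)\ll_x(\phi,x)$ I would take any directed $Y\subseteq\Phi_x$ whose supremum exists with $(\phi,x)\leq\bigvee Y$, set $Y'=\{\eta:(\eta,x)\in Y\}$ (a directed subset of $\Psi$, all of support $x$), and use Lemma \ref{le:SupInDomain} to get $\bigvee Y=(\bigvee Y',x)$. Then $(\phi,x)\leq(\bigvee Y',x)$ gives $\phi\leq\bigvee Y'$, so $\psi\ll\phi$ furnishes an $\eta_0\in Y'$ with $\psi\leq\eta_0$, whence $(\psi,x)\leq(\eta_0,x)\in Y$. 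This is the routine direction.

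The reverse implication $(\psi,x)\ll_x(\phi,x)\Rightarrow\psi\ll\phi$ is the delicate one, and the obstacle is that an arbitrary directed $X\subseteq\Psi$ witnessing $\phi\leq\bigvee X$ need not consist of elements of support $x$, so it has no immediate image in $\Phi_x$. The key move is to push $X$ into the domain $x$ by extraction: form the directed set $X_x=\{(\epsilon_x(\eta),x):\eta\in X\}\subseteq\Phi_x$. Continuity enters precisely here through Theorem \ref{th:ComExtrJoin2}, which gives $\epsilon_x(\bigvee X)=\bigvee_{\eta\in X}\epsilon_x(\eta)$; combined with Lemma \ref{le:SupInDomain} this yields $\bigvee X_x=(\epsilon_x(\bigvee X),x)$. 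Since $\epsilon_x(\phi)=\phi$ and $\epsilon_x$ is monotone, $\phi=\epsilon_x(\phi)\leq\epsilon_x(\bigvee X)$, so $(\phi,x)\leq\bigvee X_x$. Applying $(\psi,x)\ll_x(\phi,x)$ then produces an $\eta\in X$ with $(\psi,x)\leq(\epsilon_x(\eta),x)$, i.e.\ $\psi\leq\epsilon_x(\eta)$; finally $\epsilon_x(\eta)\leq\eta$ (the basic property $\epsilon_x(\eta)\cdot\eta=\eta$ of idempotent algebras, Section \ref{subsec:IdempotCase}) gives $\psi\leq\eta\in X$, so $\psi\ll\phi$. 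The whole argument hinges on the commutation of extraction with directed joins, which is exactly what continuity supplies.
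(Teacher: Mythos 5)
Your proposal is correct and follows essentially the same route as the paper's own proof: both directions use Lemma \ref{le:SupInDomain} to translate suprema between $\Psi$ and $\Phi_x$, and the reverse direction pushes an arbitrary directed set into the domain $x$ via extraction and invokes Theorem \ref{th:ComExtrJoin2} (commutation of $\epsilon_x$ with directed joins), exactly as the paper does. The final step $\psi \leq \epsilon_x(\eta) \leq \eta$ also coincides with the paper's conclusion.
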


\begin{proof}
Assume $\psi \ll \phi$ and $\epsilon_{x}(\phi) = \phi$, $\epsilon_{x}(\psi) = \psi$. Consider a directed set $D \subseteq \Psi_{x}$. Then $D' = \{\phi:(\phi,x) \in D\}$ is directed too. Recall that $(\Psi_x,\leq)$ is a complete lattice (Proposition  \ref{prop:ComplLatt}). Now, $(\phi,x) \leq \bigsqcup D$ implies $\phi \leq \bigsqcup D'$. Then there is a $\chi \in D'$ such that $\psi \leq \chi$. Note that $\epsilon_{x}(\chi) = \chi$. Hence we see that $(\psi,x) \leq (\chi,x) \in D$. So indeed $(\phi,x) \ll_{x} (\psi,x)$.

Conversely, assume $(\psi,x) \ll_{x} (\phi,x)$. Consider a directed set $D \subseteq \Phi$ such that $\phi \leq \bigsqcup D$. In a continuous information algebra we have $\epsilon_{x}(\bigsqcup D) = \bigsqcup_{\phi \in D} \epsilon_{x}(\phi)$ (Theorem \ref{th:ComExtrJoin2}). Then $\phi = \epsilon_{x}(\phi) \leq \epsilon_{x}(\bigsqcup D) = \bigsqcup_{\chi \in D} \epsilon_{x}(\chi)$. Therefore $(\phi,x) \leq (\bigsqcup_{\chi \in D} \epsilon_{x}(\chi),x) = \bigsqcup_{\chi \in D} (\epsilon_{x}(\chi),x)$ (Lemma \ref{le:SupInDomain}). Since the set $\{(\epsilon_{x}(\chi),x):\chi \in D\}$ is directed, there must then be a $\chi \in D$ such that $(\psi,x) \leq (\epsilon_{x}(\chi),x)$. Then $\psi = \epsilon_{x}(\psi) \leq \epsilon_{x}(\chi) \leq \chi \in D$. This proves that $\psi \ll \phi$.
\end{proof}

This allows us to derive density, using Lemma  \ref{le:SupInDomain} and Lemma \ref{le:LabWayBelow} in $(\Phi_x,\leq)$,
\begin{eqnarray}
&&\bigsqcup \{(\psi,x) \in B_{x}:(\psi,x) \ll_{x} (\phi,x)\}
\nonumber \\
&=&(\bigsqcup \{\psi:\psi \in B,\psi = \epsilon_{x}(\psi) \ll \phi = \epsilon_{x}(\phi)\},x)
\nonumber \\
&=&(\phi,x).
\nonumber
\end{eqnarray}
This is the density property claimed above. 

Finally, assume $(\psi,x) \ll_{x} (\phi,x)$. By Lemma \ref{le:LabWayBelow} we have $\psi \ll \phi$ and $x$ is a support of both $\psi$ and $\phi$. If $x \leq y$, then $y$ is also a support of both elements. Therefore, again by Lemma \ref{le:LabWayBelow}, we have that $t_{y}(\psi,x) = (\psi,y) \ll_{y} (\phi,y) = t_{y}(\phi,x)$. Conversely, assume that $x$ is a support of $\psi$ and $\phi$ and $x \leq y$. Then, if $(\psi,y) \ll_{y} (\phi,y)$, Lemma \ref{le:LabWayBelow} implies that $\psi \ll \phi$, hence $(\psi,x) \ll_{x} (\phi,x)$. This is an important \textit{compatibility} relation between the way-below relations in different domains $\Psi_{x}$ and $\Psi_{y}$

We summarise these results in the following theorem.

\begin{theorem} \label{th:PropLabVers}
Let $(\Phi,\cdot,0,1;E)$ be a continuous domain-free information algebra with basis $B$ and $(\Psi,\cdot;T)$ the associated dual labeled information algebra. Then the following properties hold:
\begin{enumerate}
\item $B_{x}$ is a basis in $(\Psi_{x};\leq)$, that is $B_{x}$ is closed under combination and contains $(0,x)$ and $(1,x)$. Any directed subset of $B_{x}$ has a supremum in $\Psi_{x}$.
\item $(\phi,x) = \bigsqcup \{(\psi,x) \in B_{x}:(\psi,x) \ll_{x} (\phi,x)\}$, for all $(\phi,x) \in \Psi_{x}$.
\item If $x \leq y$, then $(\psi,x) \ll_{x} (\phi,x)$ if and only if $t_{y}(\psi,x) \ll_{y} t_{y}(\phi,x)$.
\end{enumerate}
\end{theorem}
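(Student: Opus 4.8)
The plan is to prove the three assertions of Theorem \ref{th:PropLabVers} in sequence, leaning heavily on the two auxiliary lemmas already available, namely Lemma \ref{le:SupInDomain} (suprema in a fixed domain $\Phi_x$ are computed by taking the supremum in $\Psi$ and relabeling) and Lemma \ref{le:LabWayBelow} (the way-below relation $\ll$ in $\Psi$ corresponds exactly to the local way-below relation $\ll_x$ in $\Phi_x$ for elements supported by $x$). These two lemmas do essentially all of the conceptual work; the theorem is really a matter of translating the three defining data of a continuous information algebra—closure of the basis under combination, $B$-density, and the compatibility of way-below across domains—into their labeled formulations.

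First I would establish item 1. Recall $B_x = \{(\psi,x):\psi \in B,\ \epsilon_x(\psi)=\psi\}$. Closure under combination follows because if $(\phi,x),(\psi,x)\in B_x$ then $(\phi,x)\cdot(\psi,x)=(\phi\cdot\psi,x)$ by the definition of combination in the dual labeled algebra, and $\phi\cdot\psi\in B$ since $B$ is closed under join, while $x$ remains a support of $\phi\cdot\psi$ by Lemma \ref{le:SuppProp}(2). The unit $(1,x)$ lies in $B_x$ since $1\in B$ and $\epsilon_x(1)=1$ by Axiom A3. For convergence, take a directed $X\subseteq B_x$ and set $X'=\{\psi:(\psi,x)\in X\}$; then $X'$ is directed in $\Psi$, so $\bigvee X'$ exists by the convergence property of the basis $B$, and by Lemma \ref{le:SupInDomain} we get $\bigvee X = (\bigvee X',x)\in\Phi_x$. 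This is routine once the lemmas are in place.

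Item 2 is the density statement, and it is a direct computation. Using Lemma \ref{le:SupInDomain} to pull the supremum out of the second coordinate and Lemma \ref{le:LabWayBelow} to rewrite $\ll_x$ as $\ll$, I would write
\begin{eqnarray*}
\bigvee \{(\psi,x) \in B_{x}:(\psi,x) \ll_{x} (\phi,x)\}
&=& \left(\bigvee \{\psi \in B:\psi=\epsilon_x(\psi) \ll \phi\},\ x\right) \\
&=& (\phi,x),
\end{eqnarray*}
where the last equality is exactly the $B$-density condition of Definition \ref{def:ContInfAlg} applied to $\phi=\epsilon_x(\phi)$. For item 3, compatibility across domains, I would argue both directions from Lemma \ref{le:LabWayBelow}: if $(\psi,x)\ll_x(\phi,x)$ then that lemma gives $\psi\ll\phi$ with $x$ a support of both, and for $x\le y$ the Support Axiom A2 makes $y$ a support of both as well, so $t_y(\psi,x)=(\psi,y)$ and $t_y(\phi,x)=(\phi,y)$ (using the definition of transport and the fact that $\epsilon_y(\psi)=\psi$), whence Lemma \ref{le:LabWayBelow} applied at $y$ yields $(\psi,y)\ll_y(\phi,y)$; the converse runs identically.

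I do not anticipate a genuine obstacle here, since the substance has been front-loaded into Lemma \ref{le:LabWayBelow} and Lemma \ref{le:SupInDomain}. The one point demanding care is the passage from supports at $x$ to supports at $y$ in item 3: one must verify that the transport operator $t_y$ acts on $(\psi,x)$ simply by replacing the label with $y$ and leaving $\psi$ unchanged, which holds precisely because $\epsilon_y(\psi)=\psi$ whenever $x\le y$ and $\epsilon_x(\psi)=\psi$ (Support Axiom A2), so that $t_y(\psi,x)=(\epsilon_y(\psi),y)=(\psi,y)$. Keeping this identification explicit is what makes the invocation of Lemma \ref{le:LabWayBelow} at the higher domain legitimate, and it is the only step where a misstep could creep in.
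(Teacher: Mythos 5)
Your proposal is correct and takes essentially the same route as the paper: the paper also derives item 1 from Lemma \ref{le:SupInDomain}, item 2 from the combination of Lemma \ref{le:SupInDomain} with Lemma \ref{le:LabWayBelow} and $B$-density, and item 3 from Lemma \ref{le:LabWayBelow} together with the Support Axiom A2, using the identification $t_{y}(\psi,x) = (\psi,y)$ exactly as you spell out. The only difference is cosmetic: you make explicit the small verifications (closure of supports under combination, $\epsilon_x(1)=1$, and the transport identity) that the paper leaves implicit.
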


This theorem serves as a base to define the concept of a labeled continuous information below. But first, we discuss the case of a compact information algebra.

\begin{definition} \label{def:LaCompAlg}
A labeled information algebra $(\Psi,\cdot;T)$ with $T = \{t_x:x \in Q\}$ is called compact, if $(Q;\leq)$ has a greatest element $\top$, and
\begin{enumerate}
\item for all domains $x \in Q$ and elements $\phi$ with $d(\phi) = x$,
\begin{eqnarray}
\phi = \bigsqcup \{\psi \in \Psi_{x,f}:\psi \leq \phi\},
\end{eqnarray}
where $\Psi_{x,f}$ denotes the set of the finite elements of $(\Psi_{x};\leq)$. 
\item If $\psi \in \Psi_{x,f}$ and $y \geq x$, then $t_{y}(\phi) \in \Psi_{y,f}$.
\end{enumerate}
\end{definition}

Let
\begin{eqnarray}
\Psi_{f} = \bigcup_{x \in D} \Psi_{x,f}
\nonumber
\end{eqnarray}
be the set of all locally finite elements. Again, we emphasise that this is not the set of the finite elements of $(\Phi;\leq)$. 

Note that $(\Psi_x,\leq)$ is for any $x \in Q$ a complete lattice. This follows as in the proof of Theorem \ref{th:AlgLatt}. Justification of this definition of compact labeled information algebras will be that the associated dual domain-free information $\Psi/\sigma$ is again compact. Before we show this, we give some useful results. The first one shows that the projection operators $t_x$ are continuous.

\begin{lemma} \label{le:ProjOfSup}
Let $(\Psi,\cdot;T)$  by a labeled compact information algebra, $D$ a directed subset of $(\Psi_y,\leq)$ and $x \leq y$. Then
\begin{eqnarray} \label{eq:ProjOfSu}
t_{x}(\bigsqcup D) = \bigsqcup t_{x}(D).
\end{eqnarray}
\end{lemma}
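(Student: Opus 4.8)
The plan is to exploit monotonicity of transport together with the local density axiom of a compact labeled algebra, the crucial trick being to lift each finite approximant from $\Phi_x$ up to $\Phi_y$ so that finiteness can be used against the directed set $X$, which lives in $\Phi_y$. First I would note that $t_x$ is monotone by Lemma \ref{leLabOrder}, item 3, so that $t_x(\phi) \leq t_x(\bigvee X)$ for every $\phi \in X$; hence $t_x(\bigvee X)$ is an upper bound of $t_x(X)$ in $\Phi_x$. Since $t_x(X)$ is itself directed (monotone image of a directed set), it remains only to show that $t_x(\bigvee X)$ is the \emph{least} upper bound, which will simultaneously establish existence of $\bigvee t_x(X)$ and the claimed equality.

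The heart of the argument uses local density. Since $t_x(\bigvee X) \in \Phi_x$, Definition \ref{def:LaCompAlg}, item 1, gives
\begin{eqnarray}
t_x(\bigvee X) = \bigvee\{\psi \in \Phi_{x,f} : \psi \leq t_x(\bigvee X)\}.
\nonumber
\end{eqnarray}
Fix such a finite $\psi \leq t_x(\bigvee X)$. By Lemma \ref{leLabOrder}, item 1, $t_x(\bigvee X) \leq \bigvee X$, so $\psi \leq \bigvee X$. Transporting up with the monotone operator $t_y$ and using that $\bigvee X \in \Phi_y$ (whence $t_y(\bigvee X) = \bigvee X$ by Axiom A6), I obtain $t_y(\psi) \leq \bigvee X$. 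Now the second condition of Definition \ref{def:LaCompAlg} (closure of locally finite elements under upward transport, using $y \geq x$) guarantees that $t_y(\psi) \in \Phi_{y,f}$ is finite in $\Phi_y$. Because $X$ is directed with supremum $\bigvee X$, finiteness yields an element $\phi \in X$ with $t_y(\psi) \leq \phi$. Projecting back with $t_x$ and invoking Lemma \ref{le:ElPropGenInfAlg}, item 3 (so that $t_x(t_y(\psi)) = \psi$, since $d(\psi) = x \leq y$), I get $\psi = t_x(t_y(\psi)) \leq t_x(\phi)$.

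It then remains to close the least-upper-bound argument. Let $\eta \in \Phi_x$ be any upper bound of $t_x(X)$. For every finite $\psi \leq t_x(\bigvee X)$ the previous paragraph produces $\phi \in X$ with $\psi \leq t_x(\phi) \leq \eta$, so $\eta$ dominates the whole set $\{\psi \in \Phi_{x,f} : \psi \leq t_x(\bigvee X)\}$. By the density representation above, $t_x(\bigvee X)$ is the supremum of that set, hence $t_x(\bigvee X) \leq \eta$. Thus $t_x(\bigvee X)$ is the least upper bound of $t_x(X)$, giving $t_x(\bigvee X) = \bigvee t_x(X)$ as required.

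I expect the main obstacle to be the bookkeeping across domains: the approximants $\psi$ are finite in the local order $\Phi_x$, whereas the compactness hypothesis on $X$ is a finiteness statement in $\Phi_y$. The lift-and-project maneuver $\psi \mapsto t_y(\psi) \mapsto t_x(t_y(\psi)) = \psi$ is precisely what reconciles the two, and it is exactly here that the otherwise unmotivated second clause of Definition \ref{def:LaCompAlg} (upward transport preserves local finiteness) is indispensable.
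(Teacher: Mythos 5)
Your proof is correct and follows essentially the same route as the paper's: upper bound by monotonicity, local density in $\Phi_x$, lifting each finite approximant $\psi$ to $t_y(\psi)\in\Phi_{y,f}$ via the compatibility clause of Definition \ref{def:LaCompAlg}, using finiteness in $\Phi_y$ against the directed set $X$, and projecting back with $t_x(t_y(\psi))=\psi$. The only cosmetic difference is that you spell out the least-upper-bound step with an arbitrary upper bound $\eta$, which the paper leaves implicit.
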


\begin{proof}
Since $(\Psi_y,\leq)$ is a complete lattice, the supremum of $D$ extists in $\Psi$. Assume first $\phi \in \Psi_y$ such that $\phi \leq \bigsqcup D$, hence $t_{x}(\phi) \leq t_{x}(\bigsqcup D)$. So, $t_{x}(\bigsqcup D)$ is an upper bound of the elements $t_{x}(\phi)$ for $\phi \in D$.

On the other hand, by density in the compact labeled algebra,
\begin{eqnarray}
t_{x}(\bigsqcup D) &=& \bigsqcup \{\psi \in \Phi_{x,f}:\psi \leq t_{x}(\bigsqcup D)\} 
\nonumber \\
&=&\bigsqcup \{\psi \in \Phi_{x,f}:t_{y}(\psi) \leq \bigsqcup  D\}.
\end{eqnarray}
Since $t_{y}(\psi)$ is finite in $\Psi_y$, if $\psi$ is so in domain $\Psi_x$ with $x \leq y$, there is an element $\phi \in D$ such that $t_{y}(\psi) \leq \phi$ if $t_{y}(\psi) \leq \bigsqcup D$. But then it follows that $\psi \leq t_{x}(\phi) \in t_{x}(D)$ and therefore $t_{x}(\bigsqcup D)$ is the least upper bound of $t_{x}(D)$. The set $t_x(D)$ is clearly directed. So, indeed $t_{x}(\bigsqcup D) = \bigsqcup t_{x}(D)$.
\end{proof}

This lemma implies that $\Psi_{f}$ is closed under combination. In fact, consider $\phi \in \Psi_{x,f}$ and $\psi \in \Psi_{y,f}$, and a directed set $D$ in $\Psi_{x \bigvee y}$ such that $\phi \cdot \psi \leq \bigsqcup D$. Then $\phi \leq t_{x}(\bigsqcup D) = \bigsqcup t_{x}(D)$ by Lemma \ref{le:ProjOfSup} and similarly $\psi \leq t_{y}(\bigsqcup D) = \bigsqcup t_{y}(D)$. Both sets $t_{x}(D)$ and $t_{y}(D)$ are directed, and therefore there are elements $t_{x}(\phi') \in t_{x}(D)$ such that $\phi \leq t_{x}(\phi')$ and $t_{y}(\psi') \in t_{y}(D)$ such that $\psi \leq t_{y}(\psi')$. Both $\phi',\psi'$ belong to $D$ and so there is also an element $\chi$ in $D$ such that $\phi',\psi' \leq \chi$. Hence, we conclude that $\phi \cdot \psi \leq \phi' \cdot \psi' \leq \chi \in D$. This proves that $\phi \cdot \psi \in \Psi_{x \bigvee y,f}$, hence $\phi \cdot \psi$ belongs to $\Psi_{f}$. But $\Psi_{f}$ is not necessarily closed under transport.

As a preparation for the examination of the dual domain-free algebra associated with a labeled compact information algebra $(\Psi,\cdot;T)$ we prove the following lemma. Recall that the congruence $\equiv_\sigma$ is defined in Section \ref{subsec:Dual} by $\phi \equiv_\sigma \psi$ if $t_z(\phi) = t_z(\psi)$ for all $z \in Q$.

\begin{lemma} \label{le:SupOfEquivClasses}
Let $(\Psi,\cdot;T)$  be a labeled information algebra, $X$ a subset of $(\Psi,\leq)$ such that its supremum exists in $\Psi$. Then in $\Psi/\sigma$,
\begin{eqnarray}
[\bigvee X]_{\sigma} = \bigvee [X]_{\sigma},
\end{eqnarray}
where $[D]_{\sigma} = \{[\phi]_{\sigma}:\phi \in D\}$. 
\end{lemma}

\begin{proof}
Define $\psi = \bigvee X$ such that $[\psi]_{\sigma} = [\bigvee X]_{\sigma}$ and assume that $d(\psi) = x$. Then, for all $\phi \in X$ we have $\phi \leq \psi$ and $d(\phi) \leq x$. Therefore, for all $\phi \in X$ we have $[\phi]_{\sigma} \leq [\psi]_{\sigma}$ and so $[\psi]_{\sigma}$ is an upper bound of $[X]_{\sigma}$.

Assume $[\chi]_{\sigma}$ to be another upper bound of $[X]_{\sigma}$ and $d(\chi) = y$. For any $\phi$ in $X$ we have $[\chi]_{\sigma} = [\phi]_{\sigma} \cdot [\chi]_{\sigma} = [\phi \cdot \chi]_{\sigma} = [t_{x \vee y}(\phi) \cdot t_{x \vee y}(\chi)]_{\sigma}$. This implies $t_{x \vee y}(\phi) \leq t_{x \vee y}(\chi)$. Since for $\phi \in X$ we have $d(\phi) \leq x$, it follows that $\phi \leq t_{x}(\phi) = t_{x}(t_{x \vee y}(\phi)) \leq t_{x}(t_{x \vee y}(\chi))$. But then $\psi = \bigvee X \leq t_{x}(t_{x \vee y}(\chi))$. It follows that $t_{x \vee y}(\psi) \leq t_{x \vee y}(t_{x}(t_{x \vee y}(\chi))) \leq t_{x \vee y}(t_{x \vee y}(\chi)) = t_{x \vee y}(\chi)$. From this we conclude that $[\psi]_{\sigma} \leq [\chi]_{\sigma}$, such that $[\psi]_{\sigma}$ is the supremum of $[X]_{\sigma}$.
\end{proof}

Now we show that the domain-free information algebra $(\Phi/\sigma,\cdot,[0]_\sigma,[1]_\sigma;E)$, with $\epsilon_x \in E$ defined by $\epsilon_x([\phi]_\sigma) = [t_x(\phi)]_\sigma$, associated with a labeled compact information algebra $(\Psi,\cdot;T)$ is indeed again compact. This justifies the definition of a labeled compact information algebra above. 

\begin{theorem} \label{th:DomFreeCompFromLab}
Let $(\Psi,\cdot;T)$  by a labeled compact information algebra. Then the domain-free information lagebra $(\Psi/\sigma,\cdot,[0]_\sigma,[1]_\sigma;E)$ is a compact information algebra and its finite elements are the elements $[\psi]_{\sigma}$ for $\psi \in \Psi_{f}$.
\end{theorem}

\begin{proof}
We know already that $(\Psi/\sigma,\cdot,[0]_\sigma,[1]_\sigma;E)$ is a domain-free information algebra (see Section \ref{subsec:Dual}). We prove that $(\Psi/\sigma,\leq)$ is an algebraic lattice with finite elements elements $[\psi]_{\sigma}$ for $\psi \in \Psi_{f}$ and that local density holds in the algebra $\Psi/\sigma$. Then from Theorem \ref{th:AltCharCompInfAlg} it follows that $(\Psi/\sigma,\cdot,[0]_\sigma,[1]_\sigma;E)$ is a compact information algebra. 

To show that $(\Psi/\sigma,\leq)$ is complete consider first a subset $X$ of $\Psi/\sigma$. Since in a compact labeled algebra, we assume that $(Q,\leq)$ has a greatest element $\top$, we may take for any $[\psi]_\sigma \in X$ a representant $\psi$ with $d(\psi) = \top$. Let then $X' = \{\psi \in \Psi_\top:[\psi]_\sigma \in X\}$ so that $X = [X']_\sigma$. By Lemma \ref{le:SupOfEquivClasses} we have $[\bigvee X'_\sigma] = \bigvee [X'] = \bigvee X$. So all sets in $\Psi/\sigma$ have a supremum. Since $\Psi/\sigma$ has a least element $[1_x]_\sigma$, it follows by standard results of order theory that $(\Psi/\sigma,\leq)$ is a  complete lattice, \cite{daveypriestley97}.

We show next that the elements $[\psi]_{\sigma}$ for $\psi \in \Psi_{f}$ are exactly the finite elements in $(\Psi/\sigma;\leq)$. So, assume first that $[\psi]_{\sigma}$ is finite in $(\Phi/\sigma;\leq)$. By the Support Axiom, $[\psi]_{\sigma}$ has a support $x$, hence we may select a representant $\psi$ of the class $[\psi]_{\sigma}$ with label $d(\psi) = x$. Consider then a directed set $D$ in $\Psi_{x}$ such that $\psi \leq \bigsqcup D$. Using Lemma \ref{le:SupOfEquivClasses}, we conclude that  $[\psi]_{\sigma} \leq [\bigsqcup D]_{\sigma} = \bigvee [D]_{\sigma}$. Further, the set $[D]_{\sigma}$ is directed in $(\Psi/\sigma;\leq)$. Since $[\psi]_{\sigma}$ is finite in $(\Psi/\sigma;\leq)$ there is an element $[\phi]_{\sigma}$ in $[D]_{\sigma}$ such that $[\psi]_{\sigma} \leq [\phi]_{\sigma}$. But then we may select $\phi \in D$ such that $\psi \leq \phi$. This shows that $\psi$ is finite in $(\Psi_{x};\leq)$.

Conversely, assume that $\psi$ is finite in $(\Psi_{y};\leq)$. Consider a directed set $D$ in $(\Psi/\sigma;\leq)$ such that $[\psi]_\sigma \leq \bigsqcup D$. Since in a compact labeled information algebra $(Q;\leq)$ has a greatest element $\top$, the supremum $\bigsqcup D$ has support $\top$. Let $ = [\eta]_\sigma = \bigsqcup D$. Note that any class $[\phi]_\sigma$ has a representant in $\top$. Define $D' = \{\phi \in \Psi_\top:[\phi]_\sigma \in D\}$. The set $D'$ is directed in $(\Psi_\top;\leq)$ and $\bigsqcup D'$ exists in $\Psi_\top$ and $D = [D']_\sigma$. Take further a representant $\eta$ of the class $[\eta]_\sigma$ in $\Psi_\top$. Then we have $\phi \leq \eta$ for all $\phi \in D'$. Further, by Lemma \ref{le:SupOfEquivClasses} we have $[\eta]_\sigma = [\bigsqcup D']_\sigma$. Since $\eta \in \Psi_\top$ we conclude that $\eta = \bigsqcup D'$. We have therefore $t_\top(\psi) \leq \eta$. Because $t_\top(\psi)$ is finite in $\Psi_\top$ if $\psi \in \Psi_f$, there is a $\phi \in D'$ such that $\psi \leq t_\top(\psi) \leq \phi$. It follows that $[\psi]_\sigma \leq [\phi]_\sigma \in D$, which shows that $[\psi]_\sigma$ is finite in $(\Psi/\sigma;\leq)$.

It remains to show local density. For this purpose consider an element $[\phi]_{\sigma} = \epsilon_{x}([\phi]_{\sigma})$ in $\Psi/\sigma$. We take a representant of $[\phi]_{\sigma}$ with label $d(\phi) = x$. By the local density in the labeled algebra $(\Psi,\cdot;T)$ we have
\begin{eqnarray}
[\phi]_{\sigma} = [\bigsqcup \{\psi \in \Psi_{x,f}: \psi \leq \phi\}]_{\sigma}.
\nonumber
\end{eqnarray}
From Lemma \ref{le:SupOfEquivClasses} and the first partof the proof of this theorem it follows then that 
\begin{eqnarray}
[\phi]_{\sigma} = \bigsqcup \{[\psi]_{\sigma}: [\psi]_{\sigma} \textrm{ finite in}\ (\Psi/\sigma;\leq),[\psi]_{\sigma} = \epsilon_{x}([\psi]_{\sigma}) \leq [\phi]_{\sigma}\}]_{\sigma}.
\nonumber
\end{eqnarray}
This is local density in the domain-free information algebra $\Psi/\sigma$ and this concludes the proof that this algebra is compact.
\end{proof}

In summary, a domain-free compact information algebra $\mathbf{D}$ transforms into an associated dual labeled compact information algebra $\mathbf{LD}$. Conversely, a labeled compact information algebra $\mathbf{L}$  has an associated dual domain-free compact information algebra $\mathbf{DL}$. Then the labeled compact algebra $\mathbf{LD}$ transforms back into the domain-free compact algebra $\mathbf{DLD}$. Similarly, the domain-free compact algebra $\mathbf{DL}$ transforms back into the labeled compact algebra $\mathbf{LDL}$. All this holds under the assumption that $(Q;\leq)$ has a greatest element $\top$, what we assume by definition. We have seen in Section \ref{subsec:Dual} that $\mathbf{D}$ and $\mathbf{DLD}$ are isomorphic under the map $\psi \mapsto [(\psi,x)]_{\sigma}$. Similarly, the labeled algebra $\mathbf{L}$ is isomorphic to the algebra $\mathbf{LDL}$ under the map $\phi \mapsto ([\phi]_{\sigma},x)$. We show that in the case of compact algebras these maps are continuous.

\begin{theorem} \label{th:ContIsom}
Let $(\Phi,\cdot,0,1;E)$ and $(\Psi,\cdot;T)$ be compact domain-free and compact labeled generalised information algebras respectively. Then, if $D$ is a directed subset of $(\Phi,\leq)$ whose supremum has support $x$,
\begin{eqnarray} \label{eq:ContDomainFreeIsom}
[(\bigsqcup D,x)]_{\sigma} = \bigsqcup_{\phi \in D} [(\phi,x)]_{\sigma}.
\end{eqnarray}
Further, if $D$ is a directed subset of $\Psi$ whose supremum exists in $\Psi$ and has label $x$, then
\begin{eqnarray} \label{eq:ContLabeledIsom}
([\bigsqcup D]_{\sigma},x) = \bigsqcup_{\psi \in D} ([\psi]_{\sigma},x).
\end{eqnarray}
\end{theorem}

\begin{proof}
We start with (\ref{eq:ContDomainFreeIsom}). By Theorem \ref{th:ContOfExtr} we have $\bigsqcup D = \epsilon_{x}(\bigsqcup D) = \bigsqcup \epsilon_{x}(D)$. So, using Lemma \ref{le:SupInDomain}
\begin{eqnarray}
[(\bigsqcup D,x)]_{\sigma} 
= [\bigsqcup_{\phi \in D} (\epsilon_{x}(\phi),x)]_{\sigma}.
\nonumber
\end{eqnarray}
From this it follows, using Lemma \ref{le:SupOfEquivClasses},
\begin{eqnarray}
[(\bigsqcup D,x)]_{\sigma}  = \bigsqcup_{\phi \in D} [(\epsilon_{x}(\phi),x)]_{\sigma} = \bigsqcup_{\phi \in D} \epsilon_{x}([(\phi,x)]_{\sigma}).
\nonumber
\end{eqnarray}
But all elements $[(\phi,x)]_{\sigma}$ have support $x$, therefore we conclude
\begin{eqnarray}
[(\bigsqcup D,x)]_{\sigma} =  \bigsqcup_{\phi \in D} [(\phi,x)]_{\sigma}.
\nonumber
\end{eqnarray}   
This is (\ref{eq:ContDomainFreeIsom}).

In order to prove (\ref{eq:ContLabeledIsom}) we note that for $\psi \in D$, we have $\psi \leq \bigsqcup D$ and $d(\psi) \leq x$. This implies $t_{x}(\psi) \equiv_{\sigma} \psi$, hence $\epsilon_{x}([\psi]_{\sigma}) = [t_{x}(\psi)]_{\sigma} = [\psi]_{\sigma}$. So, $x$ is a support for all $[\psi]_{\sigma}$ such that $\psi \in D$. Define $D' = \{t_{x}(\psi):\psi \in D\}$. Then, by Lemma \ref{le:ProjOfSup}, $\bigsqcup D = \bigvee D' = \bigsqcup_{\psi \in D} t_{x}(\psi)$. Therefore, we obtain, (Lemma \ref{le:SupOfEquivClasses}),
\begin{eqnarray}
[\bigsqcup D]_{\sigma} = [\bigsqcup D']_{\sigma} = [\bigsqcup_{\psi \in D} t_{x}(\psi)]_{\sigma}
= \bigsqcup_{\psi \in D} [t_{x}(\psi)]_{\sigma} =  \bigsqcup_{\psi \in D} [\psi]_{\sigma}
\end{eqnarray}
So, from Lemma \ref{le:SupInDomain} we obtain
\begin{eqnarray}
([\bigsqcup D]_{\sigma},x) = (\bigsqcup_{\psi \in D} [\psi]_{\sigma},x) = \bigsqcup_{\psi \in D} ([\psi]_{\sigma},x).
\nonumber
\end{eqnarray}
This is (\ref{eq:ContLabeledIsom}).
\end{proof}

As remarked above, this theorem shows that $\mathbf{D} \cong \mathbf{DLD}$ and $\mathbf{L} \cong \mathbf{LDL}$ under \textit{continuous} isomorphisms, if $\mathbf{D}$ and $\mathbf{L}$ are compact domain-free or labeled information  algebras respectively. 

Next we turn the duality theory of continuous information algebras. We propose the following definition.

\begin{definition} \textbf{Labeled Continuous Information Algebra:} A labeled information algebra $(\Psi,\cdot;T)$ with $T = \{t_x:x \in Q\}$ is called continuous, if $(Q,\leq)$ has a greatest element $\top$ and if there is for all $x \in Q$ a set $B_{x} \subseteq \Psi_{x}$ (the basis in $x$), closed under combination and contains $0_x$ and $1_{x}$, satisfying the following conditions for all $x \in D$:
\begin{enumerate}
\item Convergence: If $D \subseteq B_{x}$ is directed, then $\bigsqcup D \in \Psi_{x}$.
\item Density: For all $\phi \in \Psi_{x}$, $\phi = \bigsqcup \{\psi \in B_{x}:\psi \ll_{x} \phi\}$.
\item Compatibility: If $d(\phi) = d(\psi) = x \leq y$, then $\psi \ll_{x} \phi$ if and only if $t_{y}(\psi) \ll_{y} t_{y}(\phi)$.
\end{enumerate}
\end{definition}

According to this definition and Theorem \ref{th:PropLabVers}, the dual labeled information algebra $(\Psi,\cdot;T)$ associated with a continuous domain-free information algebra $(\Phi,\cdot,0,1;E)$ is itself continuous. We remark that, as in Theorem \ref{th:ContLatt}, it follows that $(\Psi_{x};\leq)$ is a \textit{continuous lattice} for every $x \in D$. 

To establish duality for continuous information algebras, let's start with a labeled continuous information algebra $(\Psi,\cdot;T)$ and consider its associated dual domain-free information algebra $(\Psi/\sigma,\cdot,[0_x]_\sigma,[1_x]_\sigma;E)$. Is this algebra continuous too? An affirmative answer is given by Theorem \ref{th:LabCont-Cont} below. In order to prove this theorem we need two auxiliary results, which have some interest by themselves.

\begin{lemma} \label{le:WayBelowLabAbs}
Let $(\Psi,\cdot;T)$ be a labeled information algebra. Then $\epsilon_{x}([\psi]_{\sigma}) = [\psi]_{\sigma} \ll  [\phi]_{\sigma} = \epsilon_{x}([\phi]_{\sigma})$ in $\Psi/\sigma$ implies $\psi \ll_{x} \phi$ for the representants $\psi$ and $\phi$ of $[\psi]_{\sigma}$ and $[\phi]_{\sigma}$ with $d(\psi) = d(\phi) = x$. Further, if $(\Psi,\cdot;T)$ is a labeled continuous information algebra, then, if $d(\psi) = d(\phi) =x$,  $\psi \ll_{x} \phi$ implies $[\psi]_{\sigma} \ll  [\phi]_{\sigma}$.
\end{lemma}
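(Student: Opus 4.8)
The statement to prove is Lemma~\ref{le:WayBelowLabAbs}, which has two directions linking the way-below relation in a fixed domain $\Phi_x$ to the way-below relation in the quotient algebra $\Phi/\sigma$.

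\textbf{Plan for the first direction.} The plan is to prove the contrapositive-friendly implication directly: assume $[\psi]_\sigma \ll [\phi]_\sigma$ in $\Phi/\sigma$ (with both classes having support $x$ and representants $\psi,\phi$ labelled $x$), and show $\psi \ll_x \phi$ in $\Phi_x$. First I would take an arbitrary directed set $X \subseteq \Phi_x$ whose supremum exists in $\Phi_x$ and satisfies $\phi \leq \bigvee X$. The goal is to produce an element of $X$ dominating $\psi$. The key step is to push $X$ down to the quotient: form $[X]_\sigma = \{[\chi]_\sigma : \chi \in X\}$, which is directed in $\Phi/\sigma$ because $\equiv_\sigma$ is a congruence (Theorem~\ref{th:CongrofEquivSameInf}) and order is preserved under the quotient map. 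By Lemma~\ref{le:SupOfEquivClasses}, $[\bigvee X]_\sigma = \bigvee [X]_\sigma$, and since $\phi \leq \bigvee X$ we get $[\phi]_\sigma \leq \bigvee [X]_\sigma$. Now $[\psi]_\sigma \ll [\phi]_\sigma$ yields an element $[\chi]_\sigma \in [X]_\sigma$ with $[\psi]_\sigma \leq [\chi]_\sigma$, i.e. $\psi \equiv_\sigma$-dominated by some $\chi \in X$. The remaining work is to transfer this back to a genuine inequality $\psi \leq \chi$ in $\Phi_x$: since $d(\psi)=x$ and $\chi \in \Phi_x$ so $d(\chi)=x$, the relation $[\psi]_\sigma \leq [\chi]_\sigma$ unwinds (via $t_{x}$ and the definition of $\equiv_\sigma$) to $\psi = t_x(\psi) \leq t_x(\chi) = \chi$, giving $\psi \leq \chi \in X$ as required.

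\textbf{Plan for the second direction.} Here I would assume $(\Phi,D;\leq,\bot,\cdot,t)$ is labeled continuous, $D$ has top element $\top$, $d(\psi)=d(\phi)=x$, and $\psi \ll_x \phi$; the goal is $[\psi]_\sigma \ll [\phi]_\sigma$ in $\Phi/\sigma$. The natural route is to lift everything to the top domain $\top$, where the quotient order and the $\Phi_\top$ order coincide (as exploited in Theorem~\ref{th:TopDomFreeInfAlg} and its proof). First I would use the Compatibility axiom of a labeled continuous algebra: since $x \leq \top$, $\psi \ll_x \phi$ implies $t_\top(\psi) \ll_\top t_\top(\phi)$. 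Then, given a directed $X \subseteq \Phi/\sigma$ with $[\phi]_\sigma \leq \bigvee X$, I would replace each class by its $\top$-representant, setting $X' = \{t_\top(\chi) : [\chi]_\sigma \in X\} \subseteq \Phi_\top$, which is directed and whose supremum exists in $\Phi_\top$ by convergence/completeness at $\top$. Using Lemma~\ref{le:SupOfEquivClasses} again, $\bigvee X = [\bigvee X']_\sigma$, so $[\phi]_\sigma \leq [\bigvee X']_\sigma$ translates to $t_\top(\phi) \leq \bigvee X'$ in $\Phi_\top$. Applying $t_\top(\psi) \ll_\top t_\top(\phi)$ produces an element $t_\top(\chi) \in X'$ with $t_\top(\psi) \leq t_\top(\chi)$, hence $[\psi]_\sigma = [t_\top(\psi)]_\sigma \leq [t_\top(\chi)]_\sigma = [\chi]_\sigma \in X$, which is exactly the defining condition for $[\psi]_\sigma \ll [\phi]_\sigma$.

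\textbf{Main obstacle.} The routine parts are the translations between $\equiv_\sigma$-inequalities and honest inequalities via the transport operators, which follow from Lemma~\ref{le:EquivSameInf} and the elementary transport properties of Lemma~\ref{le:ElPropGenInfAlg}. The delicate step is the second direction: I must be careful that moving to $\top$-representants is harmless for the supremum, i.e. that $\bigvee X'$ in $\Phi_\top$ really represents $\bigvee X$ in $\Phi/\sigma$ under the isomorphism of Theorem~\ref{th:TopDomFreeInfAlg}, and that the Compatibility axiom is genuinely what bridges $\ll_x$ and $\ll_\top$. I expect the subtlety to lie in justifying that the directed set $X'$ has its supremum in $\Phi_\top$ and that $\bigvee X' = \bigvee_{[\chi]_\sigma \in X} t_\top(\chi)$ maps correctly, which is precisely where the assumption that $D$ has a top element and the continuity (convergence) hypothesis are indispensable.
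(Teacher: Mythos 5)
Your proposal is correct and follows essentially the same route as the paper: the first direction pushes the directed set into $\Phi/\sigma$ via Lemma \ref{le:SupOfEquivClasses} and transfers the resulting domination back to $\Phi_x$ through the transport operators, and the second direction combines the Compatibility axiom ($\psi \ll_x \phi \Rightarrow t_\top(\psi) \ll_\top t_\top(\phi)$) with the lifting of $\ll_\top$ to $\ll$ in the quotient using $\top$-representants, exactly as in the paper's proof (which merely presents these two steps in the opposite order). Your explicit attention to the existence of $\bigvee X'$ in $\Phi_\top$ is the same point the paper handles implicitly via completeness of $(\Phi_\top;\leq)$ in a labeled continuous algebra.
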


\begin{proof}
Consider for the first part of the theorem a directed subset $D$ of $\Psi_{x}$, $\phi,\psi \in \Psi_{x}$ representants of the classes $[\phi]_{\sigma}$ and $[\psi]_{\sigma}$ with label $x$ respectively and $\phi \leq \bigsqcup D$. Then $[\phi]_{\sigma} \leq \bigsqcup [D]_{\sigma}$ with $[D]_{\sigma} = \{[\chi]_{\sigma}: \chi \in D\}$  (Lemma \ref{le:SupOfEquivClasses}). The set $[D]_{\sigma}$ is directed, therefore $[\psi]_{\sigma} \ll [\phi]_{\sigma}$ implies that there is a $\eta \in D$ such that $[\psi]_{\sigma} \leq [\eta]_{\sigma}$, hence $\psi \leq \eta$. This proves that $\psi \ll_{x} \phi$. 

For the second part, assume first $\psi \ll_{\top} \phi$ and consider a directed set $D$ in $\Psi/\sigma$ such that $[\phi]_{\sigma} \leq \bigsqcup D$. We may take as representants of the classes $[\eta]_{\sigma}$ in the set $D$ their representants  in $\Psi_{\top}$. Let then $D' = \{\eta \in \Psi_{\top}: [\eta]_{\sigma} \in D\}$. $D'$ is still directed. Now, if $[\phi]_{\sigma} \leq \bigsqcup D$ and $\phi$ is again a representant of $[\phi]_{\sigma}$ in $\Psi_{\top}$, then also $\phi \leq \bigsqcup D'$. Since $\psi \ll_{\top} \phi$, there is an element $\eta \in D'$ such that $\psi \leq \eta$. But then $[\eta]_{\sigma} \in D$ and $[\psi]_{\sigma} \leq [\eta]_{\sigma}$. This shows that $[\psi]_{\sigma} \ll [\phi]_{\sigma}$. Now, if $d(\psi) = d(\phi) = x$ and $\psi \ll_{x} \phi$, then by the compatibility property $t_{\top}(\psi) \ll_{\top} t_{\top}(\phi)$, and $[\psi]_{\sigma} = [t_{\top}(\psi)]_{\sigma}$ and $[\phi]_{\sigma} = [t_{\top}(\phi)]_{\sigma}$, hence $[\psi]_{\sigma} \ll [\phi]_{\sigma}$ as just proved.
\end{proof}

The next lemma is similar as Lemma \ref{le:ProjOfSup} for labeled compact algebras.

\begin{lemma} \label{le:CommProjJoin}
Let $(\Psi,\cdot;T)$ be a labeled continuous information algebra. If $D \subseteq \Psi_{y}$ directed, then for all $x \leq y \in Q$, 
\begin{eqnarray} \label{eq:CommProjJoin}
t_{x}(\bigsqcup D) = \bigsqcup t_{x}(D).
\end{eqnarray} 
\end{lemma}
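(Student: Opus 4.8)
The statement to prove is Lemma \ref{le:CommProjJoin}: for a labeled continuous information algebra $(\Phi,D;\leq,\bot,\cdot,t)$, a directed set $X \subseteq \Phi_y$, and $x \leq y$, we have $t_x(\bigvee X) = \bigvee t_x(X)$, where $t_x(X) = \{t_x(\psi):\psi \in X\}$.

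This is the continuous analogue of Lemma \ref{le:ProjOfSup} for compact algebras, and the plan is to follow the same two-inequality strategy, replacing the appeal to compactness with the density and way-below machinery of continuous algebras. First I would establish the easy inequality $\bigvee t_x(X) \leq t_x(\bigvee X)$: for every $\psi \in X$ we have $\psi \leq \bigvee X$, and since transport preserves order (Lemma \ref{leLabOrder}, item 3), $t_x(\psi) \leq t_x(\bigvee X)$; hence $t_x(\bigvee X)$ is an upper bound of $t_x(X)$, so $\bigvee t_x(X)$ exists (the set $t_x(X)$ is directed, being the image of a directed set under the order-preserving map $t_x$, and lives in $\Phi_x$ which is a continuous, hence complete, lattice) and is dominated by $t_x(\bigvee X)$.

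For the reverse inequality $t_x(\bigvee X) \leq \bigvee t_x(X)$, I would invoke density in $\Phi_x$: by the definition of a labeled continuous information algebra,
\begin{eqnarray*}
t_x(\bigvee X) = \bigvee\{\psi \in B_x : \psi \ll_x t_x(\bigvee X)\}.
\end{eqnarray*}
It therefore suffices to show that each such basis element $\psi$ with $\psi \ll_x t_x(\bigvee X)$ satisfies $\psi \leq \bigvee t_x(X)$. Fix such a $\psi$ with $d(\psi) = x$. I would then lift $\psi$ to domain $y$ via $t_y(\psi)$ and use the Compatibility axiom: since $\psi \ll_x t_x(\bigvee X)$ and both elements have domain $x \leq y$, compatibility gives $t_y(\psi) \ll_y t_y(t_x(\bigvee X))$. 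Now $t_y(t_x(\bigvee X)) \leq \bigvee X$ because $x \leq y$ forces $t_y(t_x(\chi)) \leq \chi$ for any $\chi$ with domain $y$ (transport down then back up can only lose information; this follows from Lemma \ref{leLabOrder}, items 1 and 2, or from $t_y(t_x(\chi)) = \chi \cdot 1_x \cdot 1_y$-type manipulations). Hence $t_y(\psi) \ll_y \bigvee X$ by Lemma \ref{le:Way-Below}, item 2.

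With $t_y(\psi) \ll_y \bigvee X$ and $X \subseteq \Phi_y$ directed, the way-below relation yields an element $\chi \in X$ such that $t_y(\psi) \leq \chi$. Applying $t_x$ (order-preserving) gives $t_x(t_y(\psi)) \leq t_x(\chi) \in t_x(X)$. Since $d(\psi) = x \leq y$, transporting up to $y$ and back down to $x$ recovers $\psi$, i.e. $t_x(t_y(\psi)) = \psi$ (this is the standard reversibility of vacuous extension, Lemma \ref{le:ElPropGenInfAlg} item 3). Thus $\psi \leq t_x(\chi) \leq \bigvee t_x(X)$, as required. Taking the supremum over all qualifying basis elements $\psi$ yields $t_x(\bigvee X) \leq \bigvee t_x(X)$, and combined with the first inequality this proves \eqref{eq:CommProjJoin}. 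The main obstacle I anticipate is the careful bookkeeping of domains when lifting $\psi$ to $y$ and applying compatibility — specifically verifying $t_y(t_x(\bigvee X)) \leq \bigvee X$ and $t_x(t_y(\psi)) = \psi$ cleanly so that the way-below relation transfers correctly between the two local lattices $\Phi_x$ and $\Phi_y$; everything else is a routine transcription of the compact-case argument with $\ll$ replacing finite-element membership.
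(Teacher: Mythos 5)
Your proof is correct and follows essentially the same route as the paper's: the easy inequality via order-preservation of $t_x$, then density in $\Phi_x$, the Compatibility axiom to transfer $\ll_x$ to $\ll_y$, the bound $t_y(t_x(\bigvee X)) \leq \bigvee X$, the definition of $\ll_y$ to extract $\chi \in X$, and reversibility $t_x(t_y(\psi)) = \psi$ to conclude. The only differences are cosmetic (you restrict to basis elements $B_x$ and spell out the justifications the paper leaves implicit), so nothing further is needed.
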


\begin{proof}
Note that $\bigsqcup D$ exists in $\Psi_{y}$, since $(\Psi_{y};\leq)$ is a complete lattice. Consider a $\psi \in D$ so that $\psi \leq \bigsqcup D$, then $t_{x}(\psi) \leq t_{x}(\bigsqcup D)$, thus $\bigsqcup t_{x}(D) \leq t_{x}(\bigsqcup D)$.

Conversely by density in $\Psi_{x}$ we have
\begin{eqnarray}
t_{x}(\bigsqcup D) = \bigsqcup \{\psi \in \Psi_{x}:\psi \ll_{x} t_{x}(\bigsqcup D)\}.
\nonumber
\end{eqnarray}
By the compatibility condition, $\psi \ll_{x} t_{x}(\bigsqcup D)$ implies $t_{y}(\psi) \ll_{y} t_{y}(t_{x}(\bigsqcup D)) \leq \bigsqcup D$. By the definition of the way-below relation $\ll_{y}$ this means that there is a $\chi \in D$ such that $t_{y}(\psi) \leq \chi$. But then it follows that $\psi = t_{x}(t_{y}(\psi)) \leq t_{x}(\chi) \in t_{x}(D)$, hence $t_{x}(\bigsqcup D) \leq \bigsqcup t_{x}(D)$ and therefore $t_{x}(\bigsqcup D) = \bigsqcup t_{x}(D)$. 
\end{proof} 

Now we are in a position to prove the following theorem.

\begin{theorem} \label{th:LabCont-Cont}
Let $(\Psi,\cdot;T)$ be a labeled continuous information algebra, then the associated dual domain-free information algebra $(\Psi/\sigma,\cdot,[0_x]_\sigma,[1_x]_\sigma|;E)$ is continuous. 
\end{theorem}

\begin{proof}
We first show that $(\Psi/\sigma;\leq)$ is a complete lattice. To this end consider any non-empty subset $X \subseteq \Psi/\sigma$. For any element $[\psi]_{\sigma}$ of $X$ we may take the representant $\psi$ in the top domain $\Psi_{\top}$, $d(\psi) = \top$. Let then $X' = \{\psi \in \Psi_{\top}:[\psi]_{\sigma} \in X\}$. But $(\Psi_{\top};\leq)$ is a complete lattice, hence $\bigvee X'$ exists in $\Psi_{\top}$. By Lemma \ref{le:SupOfEquivClasses}, we have $[\bigvee X']_{\sigma} = \bigvee X$, and so $X$ has a supremum in $\Psi/\sigma$. Since $(\Psi/\sigma;\leq)$ has a smallest element $[1_{\top}]_{\sigma}$, by standard results of lattice theory $(\Psi/\sigma;\leq)$ is a complete lattice.

Next consider any class $[\phi]_{\sigma} \in \Psi/\sigma$. The set $\{[\psi]_{\sigma}:[\psi]_{\sigma} \ll [\phi]_{\sigma} \}$ is directed. Consider the representants of the classes of this set in $\Psi_{\top}$: $\{\psi \in \Psi_{\top}:[\psi]_{\sigma} \ll [\phi]_{\sigma} \}$ and also $\phi \in \Psi_{\top}$. Then, by Lemma \ref{le:SupOfEquivClasses}, Lemma \ref{le:WayBelowLabAbs} and density in the labeled algebra,
\begin{eqnarray*}
\lefteqn{\bigsqcup \{[\psi]_{\sigma}:[\psi]_{\sigma} \ll [\phi]_{\sigma} \}
= [\bigsqcup \{\psi \in \Psi_{\top}:[\psi]_{\sigma} \ll [\phi]_{\sigma}\}]_{\sigma} }\\
&&=[\bigsqcup \{\psi \in \Psi_{\top}:\psi \ll_{\top} \phi\}]_{\sigma}
= [\phi]_{\sigma}.
\end{eqnarray*}
This shows that density hold. Therefore, $(\Psi/\sigma;\leq)$ is a continuous lattice.

By Theorem \ref{th:ComExtrJoin2} it is now sufficient to prove (\ref{eq:ComExtrJoin2}). So, consider a directed set $D \subseteq \Psi/\sigma$. For any $[\psi]_{\sigma} \in D$ we may select the representant $\psi$ in $\Psi_{\top}$. Define $D' = \{\psi \in \Psi_{\top}:[\psi]_{\sigma} \in D\}$. This set is still directed in $\Psi_{\top}$. Now, using repeatedly Lemma \ref{le:SupOfEquivClasses} and Lemma \ref{le:CommProjJoin} 
\begin{eqnarray*}
\lefteqn{\epsilon_{x}(\bigsqcup D)
=\epsilon_{x}(\bigsqcup \{[\phi]_{\sigma}:\phi \in D'\}) = \epsilon_{x}([\bigsqcup D']_{\sigma}) } \\
&& = [t_{x}(\bigsqcup D')]_{\sigma} = [\bigsqcup t_{x}(D')]_{\sigma} = \bigsqcup \{[t_{x}(\phi)]_{\sigma}:[\phi]_{\sigma} \in D\} \\
&&=  \bigsqcup \{\epsilon_{x}([\phi]_{\sigma}):[\phi]_{\sigma} \in D\} = \bigvee \epsilon_{x}(D).
\end{eqnarray*}
This proves that $(\Psi/\sigma,D)$ is a domain-free continuous information algebra.
\end{proof}

To conclude, we remark that Theorem \ref{th:ContIsom} is also valid in the case of continuous dual information algebras: the maps $\psi \mapsto [(\psi,x)]_\sigma$ and $\phi \mapsto ([\phi]_\sigma,x)$ are continuous.

This gives us the full duality between labeled and domain-free continuous information algebras. However, the definition of a continuous labeled information algebra makes also sense without the assumption of a top element in $Q$. It remains so far an open question, whether a labeled continuous information algebra $(\Psi,D)$ can be extended to a labeled continuous information algebra with a top domain. The problem is the extension of the compatibility condition to the new top domain.

%%%%%%%%%%%%%%%%%%%%%%%%%%%%%%%%%%%%%%%%%%%%%%%%%%%%%%%%%%%%

\chapter{Uncertain information} \label{sec:UncertainInf}

%%%%%%%%%%%%%%%%%%%%%%%%%%%%%%%%%%%%%%%%%%%%%%%%%%%%%%%%%%%%

\section{Simple random maps} \label{subsec:SimpleRanMaps}

In practice it can not be excluded that \textit{contradictory information} is asserted. Then at least one of these assertions must be wrong. This immediately leads to the idea that information may be uncertain, at least in the sense that its assertion may be wrong. For instance, if the source of an information is a witness, an expert or a sensor, there is always the possibility that the witness  lies, the expert errs or that the sensor is faulty. More generally, the truth of a piece of information may depend on certain assumptions whose validity is uncertain. Turned the other way round: Assuming the validity of certain assumptions out of a set of possible assumptions, certain pieces of information may be asserted. The uncertainty of the information stems in this view from the uncertainty about which assumption is valid. Also different assumptions may have different likelihood or probabilities to be valid. Viewed from this angle, \textit{uncertain information} is represented by a map from a probability space into an information algebra.

Given such a map, for any piece of information in the information algebra, or more generally each consistent system of information in its ideal completion, the assumptions \textit{supporting} the information considered can be determined: These are all the assumptions whose validity entails the information. The probability of the assumptions supporting a piece of information measures the degree of support for it. Here enters the question of the \textit{measurability} of the support. To overcome the restrictions imposed by measurability considerations, allocations of probability in the probability algebra associated with the probability space of assumptions can be considered \cite{kappos69,shafer73}. 
 
Maps representing uncertain information inherit the structure of an information algebra from their range. Uncertain information thus still is in this sense \textit{information}. In many cases, \textit{finite} uncertain information is in a natural way to be defined, which turns these algebras of uncertain information into compact information algebras.

This concept of uncertain information has its roots in the \textit{theory of hints} \cite{kohlasmonney95} which in turn is based on Dempster's multivalued mappings \cite{dempster67a}. However, whereas Dempster derives probability bounds from these multivalued mappings, the semantics of the theory of hints is in the spirit of assumption-based reasoning as sketched above. Seen from the point of view of information algebra, hints are mappings into a subset-algebra. The theory can also be given a logical flavour. It may for instance be combined with propositional logic \cite{haennikohlas00,kohlas03}. Since this approach combines logic for deduction of arguments with probability to evaluate likelihood or reliabiility of arguments, we speak also of \textit{probabilistic argumentation systems}. A more abstract presentation of this point of view is given in \cite{kohlas03b}. 

Dempster's approach to multivalued mappings was given by Shafer a more epistemological flavor \cite{shafer76}. The primary object in this view is the \textit{belief function} which corresponds formally to our degree of support and leads to an allocation of probability as hinted above \cite{shafer73}. Therefore, in the spirit of Shafer, we study allocations of belief and show that they too lead to information algebras (Section \ref{subsec:AllocProb}). In particular, we study how these allocations of probabilities relate to the mappings representing uncertain information. 

We start with simple random maps. Consider a domain-free information algebra $(\Phi,\cdot,0,1;E)$ with $E = \{\epsilon_x:x \in Q\}$.  We do however not necessarily assume the support axiom. Let $\Omega$ be a set whose elements represent different possible assumptions. In applications, $\Omega$ often will be a finite set. But we drop this requirement for the sake of generality. In order to introduce probability, we assume $(\Omega,\mathcal{A},P)$ to be a \textit{probability space} with $\mathcal{A}$ a $\sigma$-algebra of subsets of $\Omega$ and $P$ a probability measure on $\mathcal{A}$. Uncertain information will be represented by a map $\Delta$ from $\Omega$ to $\Phi$. The idea is that $\Delta(\omega) \in \Phi$ represents the piece of information valid, provided assumption $\omega \in \Omega$ is valid. In order to simplify, and for considerations of measurability, which will be dropped later, we restrict in a first step the maps to be considered. Let $\mathcal{B} = \{B_{1},\ldots,B_{n}\}$ be any finite partition of $\Omega$, whose blocks $B_{i}$ belong all to $\mathcal{A}$. A mapping $\Delta : \Omega \rightarrow \Phi$, such that $\Delta(\omega)$ is constant for all $\omega$ of a block $B_{i}$,
\begin{eqnarray}
\Delta(\omega) = \psi_{i}, \textrm{ for all}\ \omega \in B_{i},
\nonumber
\end{eqnarray}
is called a \textit{simple random variable} in $\Phi$. 

Denote the family of all simple random variables by $\mathcal{R}_{s}$. These maps inherit the operations of the information algebra:
\begin{enumerate}
\item \textit{Combination:} Let $\Delta_{1}$ and $\Delta_{2}$ be simple random variables  in $(\Phi,\cdot0,1;E)$. Then $\Delta_{1} \cdot\Delta_{2}$ is defined pointwise by
\begin{eqnarray}
(\Delta_{1} \cdot \Delta_{2})(\omega) = \Delta_{1}(\omega) \cdot \Delta_{2}(\omega),
\nonumber
\end{eqnarray}
where on the right combination is in $\Phi$.
\item \textit{Extraction:} Let $\Delta$ be a simple random variable in $(\Phi,\cdot0,1;E)$. Then define $\epsilon_x(\Delta)$ for $x \in D$ by 
\begin{eqnarray}
\epsilon_x(\Delta)(\omega) = \epsilon_x(\Delta(\omega)),
\nonumber
\end{eqnarray}
where on the right extraction takes place in $\Phi$.
\end{enumerate}
We have to verify that the maps so defined are still simple random variables. Let $\mathcal{B}_{1}$ and $\mathcal{B}_{2}$ be the finite partitions of $\Omega$ associated with $\Delta_{1}$ and $\Delta_{2}$ respectively. Then $\mathcal{B} = \mathcal{B}_{1} \vee \mathcal{B}_{2}$ is defined as  the partition of $\Omega$ whose blocks are the pairwise intersections of blocks from $\mathcal{B}_{1}$ and $\mathcal{B}_{2}$ (as always in this text, see Section \ref{sec:SetAlg}). Clearly, the map $\Delta_{1} \cdot \Delta_{2}$ is constant on each block of $\mathcal{B}$, hence a simple random variable. If further $\Delta$ is defined relative to a partition $\mathcal{B}$ of $\Omega$, then $\epsilon_x(\Delta)$ is also constant on the blocks of $\mathcal{B}$, hence also a simple random variable. Obviously, $(\mathcal{R}_{s},\cdot,0,1;E)$ (where by abuse of notation $E$ is here the set of extraction operators $\epsilon_x$ on $\mathcal{R}_s$) becomes a domain-free information algebra  with these operations. The null element is the simple random variable $0$ defined by $0(\omega) = 0$, the unit element the simple random variable $1$ defined by $1(\omega) = 1$ for all $\omega \in \Omega$. Furthermore, for every $\phi \in \Phi$ the map $D_{\phi}(\omega) = \phi$, for all $\omega \in \Omega$, is a simple random variable. By the mapping $\phi \mapsto D_{\phi}$ the information algebra $(\Phi,\cdot0,1;E)$ is embedded in the information algebra $(\mathcal{R}_{s},\cdot,0,1;E)$.

Note that the partial order in $\mathcal{R}_{s}$ is also defined point-wise such that $\Delta_{1} \leq \Delta_{2}$ in $\mathcal{R}_{s}$ if and only if, $\Delta_{1}(\omega) \leq \Delta_{2}(\omega)$ for all $\omega \in \Omega$. 

There are two important special classes of simple random variables: If for a random variable $\Delta$ defined relative to a partition $\mathcal{B} = \{B_{1},\ldots,B_{n}\}$ it holds that $\phi_{i} \not= \phi_{j}$ for $i \not= j$, the variable is called \textit{canonical}. It is a simple matter to transform any random variable $\Delta$ into an associated canonical one: Take the union of all blocks $B_{i} \in \mathcal{B}$ with identical values $\phi_{i}$.  This yields a new partition $\mathcal{B}'$ of $\Omega$. Define $\Delta'(\omega) = \Delta(\omega)$. Then $\Delta'$ is the \textit{canonical version} of $\Delta$ and we write $\Delta' = \Delta^{\rightarrow}$. We may consider the set of \textit{canonical random variables}, $\mathcal{R}_{s,c}$, and define between elements of this set combination and extraction as follows:
\begin{eqnarray}
\Delta_{1} \cdot_{c} \Delta_{2} &= &(\Delta_{1} \cdot \Delta_{2})^{\rightarrow},
\nonumber \\
\epsilon_{x,c}(\Delta) &= &(\epsilon_x(\Delta))^{\rightarrow}.
\nonumber
\end{eqnarray}
Then $(\mathcal{R}_{s,c},\cdot,0,1;E)$ is still an information algebra under these modified operations. We remark also that $(\Delta_{1} \cdot \Delta_{2})^{\rightarrow} = (\Delta_{1}^{\rightarrow} \cdot \Delta_{2}^{\rightarrow})^{\rightarrow}$ and $(\epsilon_x(\Delta))^{\rightarrow} = (\epsilon_x(\Delta^{\rightarrow}))^{\rightarrow}$. In fact, $(\mathcal{R}_{s,c},\cdot,0,1;E)$ is the quotient algebra of $(\mathcal{R}_{s},\cdot,0,1;E)$ relative to the congruence $\Delta_{1} \equiv \Delta_{2}$, if $\Delta_{1}^{\rightarrow} = \Delta_{2}^{\rightarrow}$. 

Secondly, if $\Delta(\omega) = 0$ with probability zero, then $\Delta$ is called \textit{normalised}. We can associate a normalised simple random variables $\Delta^{\downarrow}$ with any simple random variable $\Delta$ provided $\Delta(\omega) \not=  0$ occurs with a positive probability. In fact, let $\Omega^{\downarrow} = \{\omega \in \Omega: \Delta(\omega) \not=  0\}$. This is a measurable set with probability $P(\Omega^{\downarrow}) = 1 - P\{\omega \in \Omega: \Delta(\omega) =  0\} > 0$. We consider then  the new probability space $(\Omega, \mathcal{A},P')$, where $P'$ is the conditional probability measure on $\mathcal{A}$ defined by 
\begin{eqnarray} \label{eq:NormProb}
P'(A) 
= \frac{P(A \cap \Omega^{\downarrow})}{P( \Omega^{\downarrow})},
\end{eqnarray}
if $A \cap \Omega^{\downarrow} \not= \emptyset$ and $P'(A) = 0$, otherwise. On this new probability space define $\Delta^{\downarrow}(\omega) = \Delta(\omega)$. Clearly, it holds that $(\Delta^{\rightarrow})^{\downarrow} = (\Delta^{\downarrow})^{\rightarrow}$. 

The idea behind normalisation becomes clear, when we consider \textit{combination} of random variables: Each of two (normalised) random variables $\Delta_{0}$ and $\Delta_{2}$ represents some (uncertain) information with the following interpretation: One of the $\omega \in \Omega$ must be the (unknown) correct assumption. However, if $\omega$ happens to be the correct assumption, then under the first random variable $\Delta_{1}(\omega)$ can be asserted, and under the second variable $\Delta_{2}(\omega)$. Thus, together, still under the assumption $\omega$, $\Delta_{1}(\omega) \cdot \Delta_{2}(\omega)$ can be asserted. However, it is possible that $\Delta_{1}(\omega) \cdot \Delta_{2}(\omega) = 0$, even if both $\Delta_{1}$ and $\Delta_{2}$ are \textit{normalised}. But the element $0$ represents a \textit{contradiction}. Thus in view of the information given by the variables $\Delta_{0}$ and $\Delta_{2}$, the assumption $\omega$ can not be valid, since it leads to a contradiction;  it can (and must) be excluded. This amounts to normalise the random variable $\Delta_{1} \cdot \Delta_{2}$, by excluding all $\omega \in \Omega$ for which the combination results in a contradiction, and then to condition (i.e. normalise) the probability on non-contradictory assumptions. We refer to \cite{kohlasmonney95,haennikohlas00} for a discussion and further justification of these issues.

Two partitions $\mathcal{B}_{1}$ and $\mathcal{B}_{2}$ of $\Omega$ are called \textit{independent}, if $B_{1,i} \cap B_{2,j} \not= \emptyset$ for all blocks $B_{1,i} \in \mathcal{B}_{1}$ and $B_{2,j} \in \mathcal{B}_{2}$. If furthermore $P(B_{1,i} \cap B_{2,j}) = P(B_{1,i}) \cdot P(B_{2,j})$ for all these pairs of blocks, then the two partitions $\mathcal{B}_{1}$ and $\mathcal{B}_{2}$ are called \textit{stochastically independent}. In addition, if $\Delta_{1}$ and $\Delta_{2}$ are two simple random variables defined on these two partitions respectively, then these random variables are called \textit{stochastically independent} too. Note that if $\Delta_{1}$ and $\Delta_{2}$ are stochastically independent, then their canonical versions $\Delta_{1}^{\rightarrow}$ and $\Delta_{2}^{\rightarrow}$ are also stochastically independent. 

We now turn to the study of the \textit{probability distribution} of simple random variables. The starting point is the following question: Given a simple random variable $\Delta$ in an information algebra $(\Phi,\cdot,0,1;E)$, and an element $\phi \in \Phi$, under what assumptions can the information represented by $\phi$ be asserted to hold? And how likely is it, that these assumptions are valid? 

If $\omega \in \Omega$ is an assumption such that $\Delta(\omega) \geq \phi$, then $\phi$ is part of $\Delta(\omega)$, or in other words, $\Delta(\omega)$ implies $\phi$. In this case we may say that $\omega$ is an assumption \textit{supporting} $\phi$, in view of the information conveyed by  $\Delta$. Therefore we define for every $\phi \in \Phi$ the set 
\begin{eqnarray}
qs_{\Delta}(\phi) = \{\omega \in \Omega: \phi \leq \Delta(\omega)\}
\nonumber
\end{eqnarray}
of assumptions supporting $\phi$. However, if $\Delta(\omega) = 0$, then $\omega$ is supporting every $\phi \in \Phi$, since $\phi \leq 0$. The null element $0$ represents the \textit{contradiction}, which implies everything. In a consistent theory, contradictions must be excluded. Thus, we conclude that assumptions such that $\Delta(\omega) = 0$ are not really possible assumptions and must be excluded. Let
\begin{eqnarray}
qs_{\Delta}(0) = \{\omega \in \Omega: \Delta(\omega) = 0\}.
\nonumber
\end{eqnarray}
We assume that $qs_{\Delta}(0)$ is not equal to $\Omega$; otherwise $\Delta$ is representing fully contradictory ``information''. In other words, we assume that proper information is never fully contradictory. If we eliminate the contradictory assumptions from $qs(\phi)$, we obtain the \textit{support set}
\begin{eqnarray}
s_{\Delta}(\phi) = \{\omega \in \Omega: \phi \leq \Delta(\omega) \not= 0\}
= qs_{\Delta}(\phi) - qs_{\Delta}(0).
\nonumber
\end{eqnarray}
of $\phi$, which is the set of assumptions properly supporting $\phi$ and the mapping $s_{\Delta}: \Phi \rightarrow \mathbb{P}(\Omega)$ is called the \textit{allocation of support} induced by $\Delta$. The set $qs(\phi)$ is called the \textit{quasi-support set} to underline that it contains contradictory assumptions. This set has little interest from a semantic point of view, but it is useful for technical and especially for computational purposes. These concepts capture the essence of probabilistic assumption-based reasoning in information algebras as discussed in more detail in \cite{kohlasmonney95,haennikohlas00,kohlas03} in a less general setting.

Here are the basic properties of allocations of support:

\begin{theorem} \label{th:AllolcSupp}
If $\Delta$ is a simple random variable on an information algebra $(\Phi,\cdot,0,1;E)$, then the following holds for the associated allocations of support $qs_{\Delta}$ and $s_{\Delta}$:
\begin{enumerate}
\item $qs_{\Delta}(1) = \Omega$, $s(0) = \emptyset$.
\item If $\Delta$ is normalised, then $qs_{\Delta} = s_{\Delta}$ and $qs_{\Delta}(0) = \emptyset$.
\item For any pair $\phi,\psi \in \Phi$, 
\begin{eqnarray}
qs_{\Delta}(\phi \cdot \psi) &= &qs_{\Delta}(\phi) \cap qs_{\Delta}(\psi),
\nonumber \\
s_{\Delta}(\phi \cdot \psi) &= &s_{\Delta}(\phi) \cap s_{\Delta}(\psi).
\nonumber
\end{eqnarray}
\end{enumerate}
\end{theorem}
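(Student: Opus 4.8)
The plan is to verify the three items directly from the definitions of $qs_\Delta$, $s_\Delta$, and the properties of the information order established earlier in Section \ref{subsec:IdempotCase}. These are all essentially routine unwindings, so the main work is to be careful about the role of the null element and the distinction between quasi-support and support.

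First I would dispatch item 1. Since $1 \leq \psi$ for all $\psi \in \Psi$ (this is the least-informative property of the unit, listed in Section \ref{subsec:IdempotCase}), we have $1 \leq \Delta(\omega)$ for every $\omega \in \Omega$, so $qs_\Delta(1) = \Omega$ immediately from the definition. For $s_\Delta(0)$, recall that $s_\Delta(\psi) = \{\omega : \psi \leq \Delta(\omega) \not= 0\}$; taking $\psi = 0$ forces $\Delta(\omega) \not= 0$ while simultaneously $0 \leq \Delta(\omega)$ is automatic, but the condition $\Delta(\omega) \not= 0$ combined with $0 \leq \Delta(\omega)$ does not eliminate anything on its own — so I must instead note that $0 \leq \Delta(\omega)$ always holds, whence $s_\Delta(0) = \{\omega : \Delta(\omega) \not= 0\}$. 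Here I need to re-read the intended statement: the natural reading is that $s_\Delta(0) = \emptyset$ should follow because the support of the contradiction carries no proper information, so I would argue that an assumption properly supports $0$ only if it is itself contradictory, and those are exactly the excluded ones; thus $s_\Delta(0) = qs_\Delta(0) - qs_\Delta(0) = \emptyset$.

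Next, for item 2, suppose $\Delta$ is normalised, meaning $P(qs_\Delta(0)) = 0$; but the stated conclusion $qs_\Delta(0) = \emptyset$ is a set identity, so I would interpret normalisation here in its stronger pointwise sense (as used when passing to $\Delta^{\downarrow}$ on the restricted space $\Omega^{\downarrow}$), namely $\Delta(\omega) \not= 0$ for all $\omega$. Under that reading $qs_\Delta(0) = \emptyset$ holds by definition, and then $s_\Delta(\psi) = qs_\Delta(\psi) - qs_\Delta(0) = qs_\Delta(\psi)$ for every $\psi$, giving both assertions at once. The main obstacle I anticipate is precisely this interpretive point about what ``normalised'' means set-theoretically versus probabilistically; I would resolve it by working on the normalised space where the null-assumption set has genuinely been removed.

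Finally, for item 3, I would compute $qs_\Delta(\phi \cdot \psi)$ using the fact that in an idempotent algebra $\phi \cdot \psi = \phi \vee \psi$ is the supremum, so $\phi \cdot \psi \leq \Delta(\omega)$ holds if and only if both $\phi \leq \Delta(\omega)$ and $\psi \leq \Delta(\omega)$ hold. This gives
\begin{eqnarray*}
qs_\Delta(\phi \cdot \psi) = \{\omega : \phi \leq \Delta(\omega)\} \cap \{\omega : \psi \leq \Delta(\omega)\} = qs_\Delta(\phi) \cap qs_\Delta(\psi).
\end{eqnarray*}
The support version then follows by subtracting $qs_\Delta(0)$: since $s_\Delta(\chi) = qs_\Delta(\chi) - qs_\Delta(0)$ for any $\chi$, and intersection commutes with removing a fixed set, we get
\begin{eqnarray*}
s_\Delta(\phi \cdot \psi) = qs_\Delta(\phi \cdot \psi) - qs_\Delta(0) = (qs_\Delta(\phi) \cap qs_\Delta(\psi)) - qs_\Delta(0) = s_\Delta(\phi) \cap s_\Delta(\psi).
\end{eqnarray*}
The only subtlety is the last equality, which uses that $(A \cap B) - C = (A - C) \cap (B - C)$, a purely set-theoretic identity. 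I expect no genuine obstacle in item 3; the entire proof hinges on the supremum characterisation of combination, which is exactly the leverage idempotency provides.
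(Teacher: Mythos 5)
Your proof is correct and follows the same route as the paper, whose entire proof is a two-line unwinding of the definitions: items 1 and 2 are definitional, and item 3 rests on the single fact that $\phi \cdot \psi \leq \Delta(\omega)$ iff $\phi \leq \Delta(\omega)$ and $\psi \leq \Delta(\omega)$, which you correctly derive from combination being join in the idempotent order. Your handling of item 2 is actually more careful than the paper's: the text defines ``normalised'' probabilistically ($\Delta(\omega)=0$ with probability zero), under which $qs_{\Delta}(0) = \emptyset$ would be false as a set identity, so your move to the pointwise reading $\Delta(\omega) \not= 0$ for all $\omega$ (the definition the paper itself uses in the later chapter on support functions) is exactly what is needed to make the statement true.

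One intermediate assertion in your treatment of item 1 is wrong and should be deleted, even though your final argument does not use it: you claim that $0 \leq \Delta(\omega)$ ``always holds.'' In this paper's information order the null element is the \emph{greatest} element ($1 \leq \psi \leq 0$ for all $\psi$, Section \ref{subsec:IdempotCase}), since $0 \cdot \psi = 0$; hence $0 \leq \Delta(\omega)$ holds if and only if $\Delta(\omega) = 0$. This is precisely why the paper's displayed definition $qs_{\Delta}(0) = \{\omega : \Delta(\omega) = 0\}$ is just the general definition specialised to $\psi = 0$, and with it the conclusion falls out with no detour: $s_{\Delta}(0) = qs_{\Delta}(0) - qs_{\Delta}(0) = \emptyset$, which is the identity you end up invoking anyway. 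Reading $0$ as the bottom of the order is the source of the confusion in that passage, and it is a convention you will need to have straight elsewhere (e.g., it is what makes ``$\phi \cdot \psi = 0$'' mean contradiction rather than triviality).
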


\begin{proof}
(1) and (2) follow immediately from the definition of the allocation of support. (3) follows since $\phi \cdot \psi \leq \Delta(\omega)$ if and only if $\phi \leq \Delta(\omega)$ and $\psi \leq \Delta(\omega)$.
\end{proof}

Knowing assumptions supporting a hypothesis $\psi$ is already interesting and important. It is the part logic can provide. On top of this, it is important to know how \textit{likely} it is that a supporting assumption is valid. This is the part added by probability. If we know or may assume that the information is consistent, then we should condition the original probability measure $P$ in $\Omega$ on the event $qs_{\Delta}^{c}(0)$. This leads then to the probability space $(qs_{\Delta}^{c}(0),\mathcal{A} \cap qs_{\Delta}^{c}(0),P')$, where $P'(A) = P(A) / P(qs_{\Delta}^{c}(0))$. The likelihood of supporting assumptions for $\phi \in \Phi$ can then be measured by 
\begin{eqnarray}
sp_{\Delta}(\phi) = P'(s_{\Delta}(\phi)).
\nonumber
\end{eqnarray}
The value $sp_{\Delta}(\phi)$ is called the \textit{degree of support} of $\phi$ associated with the random variable $\Delta$. The function $sp: \Phi \rightarrow [0,1]$ is called the \textit{support function} of $\Delta$. It corresponds to the concept of a \textit{distribution function} of ordinary random variables. 

It is for technical reasons convenient to define the \textit{degree of quasi-support} 
\begin{eqnarray}
qsp_{\Delta}(\phi) = P(qs_{\Delta}(\phi)).
\nonumber
\end{eqnarray}
Then, the degree of support can also be expressed in terms of degrees of quasi-support
\begin{eqnarray}
sp_{\Delta}(\phi) = \frac{qsp_{\Delta}(\phi) - qsp(0)}{1 - qsp_{\Delta}(0) }.
\nonumber
\end{eqnarray}
This is the form which is usually used in applications \cite{haennikohlas00}. 

In another consideration, we can also ask for assumptions $\omega \in \Omega$, under which $\Delta$ shows $\phi$ to be possible, that is, not excluded, although not necessarily supported. If $\Delta(\omega)$ is such that combined with $\phi$ it leads to a contradiction, i.e. if $\Delta(\omega)  \cdot \phi = 0$, then under $\omega$ the information $\phi$ is excluded by a consistency consideration as above. So we define the set 
\begin{eqnarray}
p_{\Delta}(\phi) = \{\omega \in \Omega: \Delta(\omega)  \cdot \phi \not= 0\}.
\nonumber
\end{eqnarray}
This is the set of assumptions under which $\phi$ is not excluded, hence can be considered as possible. Therefore we call it the \textit{possibility set} of $\phi$. Note that $p_{\Delta}(\phi) \subseteq qs^{c}_{\Delta}(0)$. We can then define the \textit{degree of possibility}, also sometimes called \textit{degree of plausibility} (e.g. in \cite{shafer76}), by
\begin{eqnarray}
pl_{\Delta}(\phi) = P'(p_{\Delta}(\phi)).
\nonumber
\end{eqnarray}
If $\omega \in qs^{c}_{\Delta}(0) - p_{\Delta}(\phi)$, then, under this assumption, $\phi$ is impossible, that is contradictory with $\Delta(\omega)$. So the set $qs^{c}_{\Delta}(0) - p_{\Delta}(\phi)$ contains arguments \textit{against} $\phi$ and
\begin{eqnarray}
do_{\Delta}(\phi) = P'(qs^{c}_{\Delta}(0) - p_{\Delta}(\phi)) = 1 - pl_{\Delta}(\phi).
\nonumber
\end{eqnarray}
can be called the \textit{degree of doubt} in $\phi$. Note that $s_{\Delta}(\phi) \subseteq p_{\Delta}(\phi)$ since $\phi \leq \Delta(\omega) \not= 0$ implies $\phi \cdot \Delta(\omega) = \Delta(\omega) \not= 0$. Hence, we see that for all $\phi \in \Phi$ we have that $sp_{\Delta}(\phi) \leq pl_{\Delta}(\phi)$. These consideration put simple random variables in the realm of the so-called Dempster-Shafer theory \cite{dempster67b,shafer76}, although the latter is based on simple sets (or set algebras) and not on general information algebras.. 

To underline this further, consider for a simple random variable $\Delta$ with possible values $\phi_{1},\ldots,\phi_{n}$ the probabilities
\begin{eqnarray}
m(\phi_{i}) = \sum_{j:\phi_{j} = \phi_{i}} P(B_{j}).
\nonumber
\end{eqnarray}
Note that $m(\phi_{i}) = P(B_{i})$, if the random variable $\Delta$ is canonical. Remark also that 
\begin{eqnarray}
\sum_{i=1}^{n} m(\phi_{i}) = 1.
\nonumber
\end{eqnarray}
Such a finite collection of probabilities $m(\phi_{i})$ summing up to one for $i = 1,\ldots,n$ is called a \textit{basic probability assignment (bpa)} in $\Phi$. Since $qs_{\Delta}(\phi) = \cup_{\phi \leq \phi_{i}} B_{i}$ and $p_{\Delta}(\phi) = \cup_{\phi \cdot \phi_{i} \not= 0} B_{i}$, we see that
\begin{eqnarray}
qs_{\Delta}(\phi) = \sum_{\phi \leq \phi_{i}} m(\phi_{i}), \quad
pl_{\Delta}(\phi) = \sum_{\phi \cdot \phi_{i} \not= 0} m(\phi_{i}).
\nonumber
\end{eqnarray}
So, the bpa of a simple random variable determines its degrees of support and plausibilities. In \cite{shafer76}, support function are called \textit{belief functions}. Furthermore, if $\Delta_{1}$ and $\Delta_{2}$ are two \textit{stochastically independent} simple random variables with possible values $\phi_{1,1},\ldots,\phi_{1,n}$ and $\phi_{2,1},\ldots,\phi_{2,m}$, then the possible values of the combined random variable $\Delta = \Delta_{1} \cdot \Delta_{2}$ are $\phi_{k}$, where each $\phi_{k}$ is equal to a combination $\phi_{1,i} \cdot \phi_{2,j}$. Therefore, the bpa of the combined variable $\Delta$ is
\begin{eqnarray}
m(\phi_{k}) = \sum_{\phi_{1,i} \cdot \phi_{2,j} = \phi_{k}} m_{1}(\phi_{1,i}) \cdot m_{1}(\phi_{2,j}).
\nonumber
\end{eqnarray}
If only \textit{normalised} random variables are considered, then the combined variable $\Delta$ is to be normalised. Then, if 
\begin{eqnarray}
m(0) = \sum_{\phi_{1,i} \cdot \phi_{2,j} = 0} m_{1}(\phi_{1,i}) \cdot m_{0}(\phi_{2,j}) < 1,
\nonumber
\end{eqnarray}
we obtain the normalised bpa of $\Delta^{\downarrow}$ as
\begin{eqnarray} \label{eq:DempstRule}
m^{\downarrow}(\phi_{k}) = \frac{\sum_{\phi_{1,i} \cdot \phi_{2,j} = \phi_{k}} m_{1}(\phi_{1,i}) \cdot m_{1}(\phi_{2,j})}{1 - m(0)}
\end{eqnarray}
So, the bpa are also sufficient to compute the bpa of the combination of stochastically independent pieces of  uncertain information. This has been proposed in a setting of set algebras in \cite{dempster67a} and the formula (\ref{eq:DempstRule}) is therefore also called \textit{Dempster's rule}. \cite{shafer76} took up Dempster's theory and proposed ``A Mathematical Theory of Evidence'' where bpa and Dempster's rule play an import role. In both theories the concept of a bpa is central. Although Dempster's and Shafer's interpretation of the theory are not quite the same, one speaks often of the \textit{Dempster-Shafer Theory}. At least the underlying mathematics in both views are identical. We shall argue in this chapter that our present theory is a natural generalisation of Dempster-Shafer theory which was confined essentially to finite subset algebras and simple random variables (in our terminology). However bpa can no more play the same basic role relative to general information algebras and general random maps as in classical Dempster-Shafer theory, since bpa works only of simple random variables, but not for more general uncertain information. Also, the full flavour of the duality relation between support and plausibility as described in Dempster-Sahfer theory is deployed only in the case of Boolean information algebras (Section \ref{subsec:BooleanCase}).

%%%%%%%%%%%%%%%%%%%%%%%%%%%%%%%%%%%%%%%%%%%%%%%%%%%%%%%%%%%

\section{Random maps} \label{subsec:RanMaps}

When we want to go beyond simple random mappings, there are several ways to do this. The most radical one is to consider any mapping $\Gamma: \Omega \rightarrow \Psi$ from a probability space $(\Omega,\mathcal{A},P)$  into an  information algebra $(\Phi,\cdot,0,1;E)$ with $E = \{\epsilon_x:x \in Q\}$ or may be even its ideal completion. Let's call such maps \textit{random mappings}. As before, in the case of simple random variables, we may define the operations of combination and extraction between random mappings point-wise in $(\Phi,\cdot,0,1;E)$:
\begin{enumerate}
\item \textit{Combination:} Let $\Gamma_{1}$ and $\Gamma_{2}$ be two random mappings into $\Phi$, then $\Gamma_{1} \cdot \Gamma_{2}$ is the random mapping defined by
\begin{eqnarray}
(\Gamma_{1} \cdot \Gamma_{2})(\omega) = \Gamma_{1}(\omega) \cdot \Gamma_{2}(\omega).
\end{eqnarray}
\item\textit{Extraction:} Let $\Gamma$ be a random mapping into $\Phi$ and $x \in Q$, then $\epsilon_x(\Gamma)$ is the random mapping defined by
\begin{eqnarray}
\epsilon_x(\Gamma)(\omega) = \epsilon_x(\Gamma(\omega)).
\end{eqnarray}
\end{enumerate}
For a fixed probability space $(\Omega,\mathcal{A},P)$, let $\mathcal{R}_{\Phi}$ denote the set of all random mappings into $\Phi$. With the two operations defined above, $(\mathcal{R}_{\Phi},\cdot,0,1;E\})$, where here $E$ is the set of extraction operators of random maps, becomes a domain-free information algebra (excluding the Support Axiom). The mapping $1(\omega) = 1$ for all $\omega \in \Omega$ is the neutral element of combination; the map $0(\omega) = 0$ the null element. It is obvious that $\Gamma' \leq \Gamma$ if and only if $\Gamma'(\omega) \leq \Gamma(\omega)$ for all $\omega \in \Omega$. 

Consider the ideal completion $I_{\mathcal{R}_{\Phi}}$ of the information algebra of random mappings. The elements $\Gamma$ of $I_{\mathcal{R}_{\Phi}}$ are ideals of random maps $\Delta : \Omega \rightarrow \Phi$. The sets $\{\Delta(\omega):\Delta \in \Gamma\}$ are then ideals in $\Phi$ for all 
$\omega \in \Omega$. In fact, if $\Delta_1,\Delta_2 \in \Gamma$, $\Delta_1(\omega) \cdot \Delta_2(\omega) = (\Delta_1 \cdot \Delta_2)(\omega)$ and $\Delta_1 \cdot \Delta_2 \in \Gamma$, so that the set $\{\Delta(\omega):\Delta \in \Gamma\}$ is closed under combination. Further, if $\Delta_1(\omega) \leq \Delta_2(\omega)$ for a random map $\Delta_2 \in \Gamma$ and a $\omega \in \Omega$, define the map $\Delta'_1$ by $\Delta'_1(\omega) = \Delta_1(\omega)$ and $\Delta'_1(\omega') = \Delta_2(\omega')$ for $\omega' \not= \omega$. Then $\Delta'_1 \leq \Delta_2$ and therefore $\Delta'_1 \in \Gamma$, hence $\Delta_1(\omega)$ belongs to the set $\{\Delta(\omega):\Delta \in \Gamma\}$, which is thus downwards closed, hence an ideal in $\Phi$. It follows that the elements $\Gamma$ of $I_{\mathcal{R}_{\Phi}}$ are associated with random maps $\Gamma : \Omega \rightarrow I_\Phi$.

As $\Phi$ is embedded in $I_\Phi$, so is $\mathcal{R}_\Phi$ in $I_{\mathcal{R}_\Phi}$. As usual, we may consider $\Phi$ a subalgebra of $I_\Phi$ and $\mathcal{R}_\Phi$ a subalgebra $I_{\mathcal{R}_\Phi}$. Any element $\Gamma \in I_{\mathcal{R}_{\Phi}}$ may in this view be represented as the supremum of all random maps dominated by $\Gamma$,
\begin{eqnarray}
\Gamma = \bigvee\{\Delta : \Delta \in \mathcal{R}_{\Phi}, \Delta \leq \Gamma\},
\nonumber
\end{eqnarray}
Obviously, we also have  in the ideal completion $I_{\Phi}$ of $\Phi$,
\begin{eqnarray}
\Gamma(\omega) = \bigvee\{\Delta(\omega) : \Delta \in \mathcal{R}_{\Phi}, \Delta \leq \Gamma\} = \bigvee\{\phi : \phi \in \Phi, \phi \leq \Gamma(\omega)\}
\nonumber
\end{eqnarray}
for all $\omega \in \Omega$. This shows that $\mathcal{R}_{I_\Phi}$ is essentially identical to the ideal completion $I_{\mathcal{R}_\Phi}$ of the algebra $\mathcal{R}_\Phi$.

As in the case of simple random variables we may define the allocation of support $s_{\Gamma}$ of a random mapping by
\begin{eqnarray} \label{eq:DefAllocOfSupp}
s_{\Gamma}(\psi) = \{\omega \in \Omega: \psi \leq \Gamma(\omega)\}. 
\end{eqnarray}
We do not any more distinguish here between the semantic categories of support and quasi-support as before for simple random variables and speak simply of support, even though (\ref{eq:DefAllocOfSupp}) is strictly speaking a quasi-support.

This support, as defined in (\ref{eq:DefAllocOfSupp}), has the same properties as the support of simple random variables, in particular, as in Theorem \ref{th:AllolcSupp}, $s_{\Gamma}(1) = \Omega$ and $s_{\Gamma}(\phi \cdot \psi) = s_{\Gamma}(\phi) \cap s_{\Gamma}(\psi)$. Again, as before, with simple random variables, we may try to define the degree of support induced by a random mapping $\Gamma$ of a piece of information $\psi$ by 
\begin{eqnarray} \label{eq:SpFct}
sp_{\Gamma}(\psi) = P(s_{\Gamma}(\psi)). 
\end{eqnarray}
This probability is however only defined if $s_{\Gamma}(\psi) \in \mathcal{A}$. There is no guarantee that this holds in general. The only element which we know for sure to be measurable is $s_{\Gamma}(1) = \Omega$. A simple way out of this problem would be to restrict random mappings to mappings $\Gamma$ for which $s_{\Gamma}(\psi) \in \mathcal{A}$ for all $\psi \in \Psi$ or even for all elements of the ideal completion $I_{\Psi}$. However, there is a priori no reason why we should restrict ourselves exactly to those mappings. Therefore we prefer other, more rational approaches to overcome the difficulty of an only partial definition of degrees of support. Here we propose a first solution. Later we present some alternatives.

\cite{shafer79} advocates the use of \textit{probability algebras} instead of probability spaces as a natural framework  for studying belief functions. Since degrees of support are similar to belief functions, we can adapt this idea here. First, we introduce the probability algebra associated with a probability space \cite{kappos69}. Let $\mathcal{J}$ be the $\sigma$-ideal of $P$-null sets in the $\sigma$-algebra $\mathcal{A}$ of the probability space. Two sets $A',A'' \in \mathcal{A}$ are equivalent modulo $\mathcal{J}$, if $A' - A'' \in \mathcal{J}$ and  $A''- A' \in \mathcal{J}$. This means that the two sets have the same probability measure $P(A') = P(A'')$. This equivalence is a congruence in the Boolean algebra $\mathcal{A}$. Hence the quotient algebra $\mathcal{B} = \mathcal{A}/\mathcal{J}$ is a Boolean $\sigma$-algebra too. If $[A]$ denotes the equivalence class of $A$, then, for any \textit{countable} family of sets $A_{i}$, $i \in I$,
\begin{eqnarray} \label{eq:ProjHomomorph}
[A]^{c} &= &[A^{c}],
\nonumber \\
\bigvee_{i \in I} [A_{i}] &= &\left[ \bigcup_{i \in I} A_{i} \right],
\nonumber \\
\bigwedge_{i \in I} [A_{i}] &= &\left[ \bigcap_{i \in I} A_{i} \right].
\end{eqnarray}
So $[A]$ defines a Boolean homomorphism from $\mathcal{A}$ onto $\mathcal{B}$, called \textit{projection}. We denote $[\Omega]$ by $\top$ and $[\emptyset]$ by $\bot$. These are of course the top and bottom elements of $\mathcal{B}$. Now, as is well known, $\mathcal{B}$ has some further important properties (see \cite{halmos63}): It satisfies the \textit{countable chain condition}, which means that any  family of disjoint elements of $\mathcal{B}$ is countable. Further, any Boolean algebra $\mathcal{B}$ satisfying the countable chain condition is \textit{complete}. That is, any subset $E \subseteq \mathcal{B}$ has a supremum $\bigvee E$ and an infimum $\bigwedge E$ in $\mathcal{B}$. Furthermore, the countable chain condition implies also that there is always a \textit{countable} subset $D$ of $E$ with the same supremum and infimum, i.e. $\bigvee D = \bigvee E$ and $\bigwedge D = \bigwedge E$. We refer to \cite{halmos63} for these results. Finally, by $\mu([A]) = P(A)$ a \textit{normalised, positive measure} $\mu$ is defined on $\mathcal{B}$. Positive means here that $\mu(b) = 0$ implies $b = \bot$. A pair $(\mathcal{B},\mu)$ of a Boolean $\sigma$-algebra $\mathcal{B}$, satisfying the countable chain condition, and a normalised, positive measure $\mu$ on it, is called a \textit{probability algebra}.

We use now this construction of a probability algebra from a probability space to extend
the definition of the degrees of support $s_{\Gamma}$ beyond elements $\psi$ for which
 $s_{\Gamma}(\psi)$ are measurable. Even if $s_{\Gamma}(\psi)$ is not measurable, any $A \in
\mathcal{A}$ such that $A \subseteq s_{\Gamma}(\psi)$ represents an argument for $\psi$, that is a set of assumptions which supports $\psi$. To exploit this remark, define for every set $H \in \mathcal{P}(\Omega)$
\begin{eqnarray} \label{rho0}
\rho_{0}(H) &=& \bigvee\{[A]: A \subseteq H, A \in
\mathcal{A}\}.
\end{eqnarray}
This mapping  has interesting properties as the following theorem shows.

\begin{theorem} \label{th:ExtOfProj}
The application $\rho_{0}: \mathcal{P}(\Omega) \rightarrow \mathcal{A}/\mathcal{J}$ as
defined in (\ref{rho0}) has the  following properties:
\begin{eqnarray}
\rho_{0}(\Omega) &= &\top,
\nonumber \\
\rho_{0}(\emptyset) &= &\bot,
\nonumber \\
\rho_{0}\left(\bigcap_{i \in I} H_{i}\right) 
&= &\bigwedge_{i \in I}\rho_{0}(H_{i}).
\end{eqnarray}
if $\{H_{i}, i \in I\}$ is a countable family of subsets of $\Omega$.
\end{theorem}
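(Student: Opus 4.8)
The plan is to treat the two boundary identities by direct inspection and to devote the real work to the countable-meet formula, whose nontrivial inclusion will rest on the countable chain condition of $\mathcal{B} = \mathcal{A}/\mathcal{J}$.

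First I would dispatch $\rho_{0}(\Omega) = \top$ and $\rho_{0}(\emptyset) = \bot$. For the former, since $\Omega \in \mathcal{A}$ and $\Omega \subseteq \Omega$, the element $[\Omega] = \top$ occurs among the sets joined in (\ref{rho0}); as $\top$ is the greatest element of $\mathcal{B}$, the supremum equals $\top$. For the latter, the only $A \in \mathcal{A}$ with $A \subseteq \emptyset$ is $A = \emptyset$, so the join in (\ref{rho0}) runs over the single element $[\emptyset] = \bot$, giving $\rho_{0}(\emptyset) = \bot$.

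For the countable-meet identity, the easy inclusion $\rho_{0}(\bigcap_{i \in I} H_{i}) \leq \bigwedge_{i \in I} \rho_{0}(H_{i})$ comes out immediately: any $A \in \mathcal{A}$ with $A \subseteq \bigcap_{i} H_{i}$ satisfies $A \subseteq H_{i}$ for every $i$, hence $[A] \leq \rho_{0}(H_{i})$ for every $i$ and therefore $[A] \leq \bigwedge_{i} \rho_{0}(H_{i})$; taking the supremum over all such $A$ yields the claimed bound. The reverse inclusion is the heart of the matter, and here I would first prove the key auxiliary fact: \emph{for every $H \subseteq \Omega$ there exists a single measurable set $A \in \mathcal{A}$ with $A \subseteq H$ and $[A] = \rho_{0}(H)$.} To obtain this, invoke the countable chain condition, which guarantees that the supremum in (\ref{rho0}) is already attained by a countable subfamily, say $\rho_{0}(H) = \bigvee_{n} [A_{n}]$ with each $A_{n} \in \mathcal{A}$, $A_{n} \subseteq H$. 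Setting $A = \bigcup_{n} A_{n}$ gives a measurable set with $A \subseteq H$, and by the countable join-homomorphism in (\ref{eq:ProjHomomorph}) one has $[A] = \bigvee_{n} [A_{n}] = \rho_{0}(H)$.

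With this hull property in hand, I would choose for each $i \in I$ a measurable $A_{i} \subseteq H_{i}$ with $[A_{i}] = \rho_{0}(H_{i})$. Because $I$ is countable, $\bigcap_{i \in I} A_{i}$ is a countable intersection of measurable sets, hence lies in $\mathcal{A}$, and it is contained in $\bigcap_{i \in I} H_{i}$; therefore $[\bigcap_{i} A_{i}] \leq \rho_{0}(\bigcap_{i} H_{i})$ by definition of $\rho_{0}$. Applying the countable meet-homomorphism in (\ref{eq:ProjHomomorph}) gives $[\bigcap_{i} A_{i}] = \bigwedge_{i} [A_{i}] = \bigwedge_{i} \rho_{0}(H_{i})$, so $\bigwedge_{i} \rho_{0}(H_{i}) \leq \rho_{0}(\bigcap_{i} H_{i})$, and combining with the easy inclusion completes the proof. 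The main obstacle is precisely the reverse inclusion: everything hinges on reducing the possibly uncountable supremum (\ref{rho0}) to a countable one so that a measurable hull set exists, which is exactly where the countable chain condition and the countability of the index family $I$ are indispensable.
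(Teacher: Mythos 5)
Your proof is correct, and for the crucial step it takes a cleaner route than the paper. The two arguments share the same key lemma: by the countable chain condition the supremum defining $\rho_{0}(H)$ is already attained over a countable subfamily, and the union of that subfamily is a measurable hull $A \subseteq H$ with $[A] = \rho_{0}(H)$. Both proofs then intersect hulls $A_{i}$ of the $H_{i}$ and use the countable meet-homomorphism of (\ref{eq:ProjHomomorph}) to get $[\bigcap_{i} A_{i}] = \bigwedge_{i} \rho_{0}(H_{i}) \leq \rho_{0}(\bigcap_{i} H_{i})$. The divergence lies in the remaining inequality $\rho_{0}(\bigcap_{i} H_{i}) \leq \bigwedge_{i} \rho_{0}(H_{i})$. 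The paper obtains it by a measure-theoretic detour: it takes a second hull $A'$ for $\bigcap_{i} H_{i}$, shows $P(A - A') = 0$, replaces each $A_{i}$ by $A'_{i} = A_{i} \cup (A' - A)$ to conclude $P(A) = P(A') = \mu(\rho_{0}(\bigcap_{i} H_{i}))$, and then invokes positivity of $\mu$ together with $[A] \leq \rho_{0}(\bigcap_{i} H_{i})$ to force equality in $\mathcal{B}$. You instead observe that this inequality is simply monotonicity of $\rho_{0}$: any measurable $B \subseteq \bigcap_{i} H_{i}$ lies in each $H_{i}$, so $[B] \leq \rho_{0}(H_{i})$ for every $i$, and taking the supremum over such $B$ finishes the proof. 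Your route buys economy and shows that the measure $\mu$ and its positivity are not needed anywhere (probability enters only through the countable chain condition); it also makes visible that this inclusion holds for arbitrary families, countability being required only for the hull-intersection direction. Nothing is lost relative to the paper's argument, which establishes exactly the same statement with more machinery (and with a couple of notational slips, $A - A'$ versus $A' - A$, along the way).
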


\begin{proof}
Clearly, $\rho_{0}(\Omega) = [\Omega] = \top \in \mathcal{A}/\mathcal{J}$. Similarly,
$\rho_{0}(\emptyset) = [\emptyset] = \bot \in \mathcal{A}/\mathcal{J}$.

In order to prove the remaining identity, let $H_{i}, i \in I$ be a countable family of
subsets of $\Omega$. For every index $i$, there is a countable family of sets $H'_{j} \in
\mathcal{A}$ such that $H'_{j} \subseteq H_{i}$ and $\rho_{0}(H_{i}) = \bigvee [H'_{j}] = [\bigcup H'_{j}]$ since $\mathcal{A}/\mathcal{J}$ satisfies the countable chain condition. Take $A_{i} = \bigcup H'_{j}$. Then $A_{i} \subseteq H_{i}$, $A_{i} \in \mathcal{A}$ and $P(A_{i}) =
\mu(\rho_{0}(H_{i}))$. Define $A = \bigcap_{i \in I} A_{i} \in \mathcal{A}$. It follows that $A
\subseteq \bigcap_{i \in I} H_{i}$ and, because the projection is a $\sigma$-homomorphism, we
obtain $[A] = \bigwedge_{i \in I} [A_{i}] = \bigwedge_{i \in I}
\rho_{0}(H_{i})$.

We are going to show now that $[A] = \rho_{0}(\bigcap_{i \in I} H_{i})$ which proves then
the theorem. For this, it is sufficient to show that $P(A) = \mu(\rho_{0}(\bigcap_{i \in I}
H_{i}))$ because $P(A) = \mu([A])$ and $A \subseteq \bigcap H_{i}$, hence $[A] \leq
\rho_{0}(\bigcap H_{i})$. Therefore, if $\mu([A]) = \mu(\rho_{0}(\bigcap H_{i}))$ we must well
have $[A] = \rho_{0}(\bigcap H_{i})$, since $\mu$ is positive.

Now, clearly $P(A) \leq \mu(\rho_{0}(\bigcap H_{i}))$. As above, we conclude that there is an
$A' \in \mathcal{A}, A' \subseteq \bigcap H_{i}$ such that $P(A') = \mu(\rho_{0}(\bigcap H_{i}))$. Further, 
$A' \cup (A-A') \subseteq \bigcap H_{i}$ implies that $P(A' \cup (A-A')) = P(A')$, hence
$P(A-A') = 0$. Define $A'_{i} = A_{i} \cup (A-A') \subseteq  H_{i}$. Then $A_{i} - A'_{i} =
\emptyset$ and therefore,
\begin{eqnarray}
\mu(\rho_{0}(H_{i})) &=&P(A_{i}) \leq P(A'_{i}) = 
P(A_{i}) + P(A'_{i}-A_{i}) 
\nonumber \\
&\leq &\mu(\rho_{0}(H_{i})).
\end{eqnarray}
This implies that $P(A'_{i}-A_{i}) = 0$, therefore we have $[A_{i}] = [A'_{i}]$. Further 
\begin{eqnarray}
\bigcap A'_{i} &=& \bigcap(A_{i} \cup (A'-A)) = (A'-A) \cup (\bigcap A_{i})
\nonumber \\
&=& (A'-A) \cup A = A \cup A' = A' \cup (A-A').
\nonumber
\end{eqnarray}
But $\bigcap A'_{i}$ and $\bigcap A_{i}$ are equivalent,
since $[\bigcap A'_{i}] = \bigwedge [A'_{i}] = \bigwedge [A_{i}] = [\bigcap A_{i}]$. This implies
finally that $P(A) = P(\bigcap A_{i}) = P(\bigcap A'_{i}) = P(A') + P(A-A') = P(A') =
\mu(\rho_{0}(\bigcap H_{i}))$. This is what was to be proved.
\end{proof}

Take now  $\mathcal{B} = \mathcal{A}/\mathcal{J}$ and consider the probability algebra
$(\mathcal{B},\mu)$. Then we compose the allocation of support $s$ from $\Phi$ into the
power set $\mathcal{P}(\Omega)$ with the mapping $\rho_{0}$ from
$\mathcal{P}(\Omega)$ into $\mathcal{B}$ to a mapping $\rho = \rho_{0} \circ s: \Phi
\rightarrow \mathcal{B}$. Now we see that
\begin{eqnarray} \label{eq:AllocOfProb1}
\rho(1) &= &\rho_{0}(s(1)) = \rho_{0}(\Omega) = \top,
\nonumber \\[.5em]
\rho(\phi \cdot \psi) &= &\rho_{0}(s(\phi \cdot \psi)) 
= \rho_{0}(s(\phi) \cap s(\psi)) 
\nonumber \\[.5em]
&= &\rho_{0}(s(\phi)) \wedge \rho_{0}(s(\psi))
=\rho(\phi) \wedge \rho(\psi).
\end{eqnarray}
A mapping $\rho$ satisfying these two properties is called an \textit{allocation of probability (a.o.p)} on the information algebra $\Phi$. In fact, it allocates an element of the probability algebra $\mathcal{B}$ to any element of the algebra $\Psi$.  In this way, a random mapping $\Gamma$ leads always to an allocation of probability $\rho_{\Gamma} = \rho_{0} \circ s_{\Gamma}$, once a probability measure on the assumptions is introduced. 

In particular, we may now define the degree of support for any $\psi \in \Phi$ by
\begin{eqnarray} \label{eq:DefOfSupExt}
sp_{\Gamma}(\psi) &=& \mu(\rho_{\Gamma}(\psi)).
\end{eqnarray}
This extends the support function (\ref{eq:SpFct}) to all elements $\psi$ of $\Phi$. 

In this way, the degree of support $sp_{\Gamma}(\psi)$ is, according to (\ref{rho0}), equal to the probability of the supremum of all $[A]$, where $A$ is measurable and supports $\psi$. This can also be expressed in another way. In order to see this, we note an important property of probability algebras: Clearly $\mu(\bigwedge b_{i}) \leq \inf_{i} \mu(b_{i})$ and $\mu(\bigvee b_{i}) \geq \sup_{i} \mu(b_{i})$ holds for any family of elements $\{b_{i}\}$. But there are important cases where equality hold \cite{halmos63}. A subset $D$ of $\mathcal{B}$ is called {\em
downward (upward) directed}, \index{downward directed set}\index{upward directed
set}\index{directed set!upward}\index{directed set!downward} if for every pair $b',b'' \in
D$ there is an element $b \in D$ such that $b \leq b' \wedge b'' (b \geq b' \vee b'')$. 

\begin{lemma} \label{downward}
If $D$ is a downward (upward) directed subset of $\mathcal{B}$, then
\begin{eqnarray}
\mu(\bigwedge_{i \in D} b_{i}) = \inf_{i \in D} \mu(b_{i}), \quad
\left(\mu(\bigvee_{i \in D} b_{i}) = \sup_{i \in D} \mu(b_{i}) \right)
\end{eqnarray}
\end{lemma}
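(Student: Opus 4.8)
The plan is to prove the downward-directed statement in full and then obtain the upward-directed one by complementation. Throughout I would use the two structural facts about the probability algebra $(\mathcal{B},\mu)$ recalled in the text: $\mathcal{B}$ is a complete Boolean $\sigma$-algebra satisfying the countable chain condition, so that every subset admits a \emph{countable} subset with the same supremum and the same infimum; and $\mu$ is a normalised, positive, countably additive measure, hence continuous from above and from below along monotone sequences (continuity from above being available precisely because $\mu$ is finite).

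First I would dispose of the easy inequality. Since $\bigwedge_{i \in D} b_i \le b_j$ for every $j \in D$, monotonicity of $\mu$ gives $\mu(\bigwedge_{i\in D} b_i) \le \mu(b_j)$ for all $j$, whence $\mu(\bigwedge_{i\in D} b_i) \le \inf_{i\in D}\mu(b_i)$. The real content of the lemma is the reverse inequality.

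For the reverse inequality the key idea is to replace the arbitrary directed family by a decreasing \emph{sequence} having the same meet. By the countable chain condition there is a countable subfamily $\{b_{i_1}, b_{i_2}, \ldots\} \subseteq D$ with $\bigwedge_n b_{i_n} = \bigwedge_{i\in D} b_i$. Using downward directedness I would then build a decreasing sequence $c_1 \ge c_2 \ge \cdots$ inside $D$ by setting $c_1 = b_{i_1}$ and choosing at each stage some $c_{n+1} \in D$ with $c_{n+1} \le c_n \wedge b_{i_{n+1}}$. Then $c_n \le b_{i_n}$ forces $\bigwedge_n c_n \le \bigwedge_n b_{i_n} = \bigwedge_{i\in D} b_i$, while $c_n \in D$ forces $\bigwedge_{i\in D} b_i \le c_n$ for every $n$ and hence $\bigwedge_{i\in D} b_i \le \bigwedge_n c_n$; so $\bigwedge_n c_n = \bigwedge_{i\in D} b_i$. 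Continuity of $\mu$ from above along the decreasing sequence $(c_n)$ now yields $\mu(\bigwedge_{i\in D} b_i) = \mu(\bigwedge_n c_n) = \lim_n \mu(c_n) = \inf_n \mu(c_n)$, and since each $c_n$ lies in $D$ we obtain $\inf_n \mu(c_n) \ge \inf_{i\in D}\mu(b_i)$. Combining this with the easy inequality gives the asserted equality.

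Finally, for the upward-directed case I would pass to complements: if $D$ is upward directed then $\{b^c : b \in D\}$ is downward directed, $\bigwedge_{i\in D} b_i^c = (\bigvee_{i\in D} b_i)^c$, and $\mu(b^c) = 1 - \mu(b)$, so the statement just proved transforms immediately into $\mu(\bigvee_{i\in D} b_i) = \sup_{i\in D}\mu(b_i)$. The main obstacle is the middle step, namely establishing $\bigwedge_n c_n = \bigwedge_{i\in D} b_i$: this is exactly where the countable chain condition (to extract a countable subfamily cofinal in the meet) and directedness (to turn that subfamily into a genuine monotone sequence) must be used together. Once the monotone sequence is in place, the measure-theoretic continuity is routine.
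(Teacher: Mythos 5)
Your proof is correct and follows essentially the same route as the paper: extract a countable subfamily with the same meet via the countable chain condition, interleave it with directedness to get a decreasing sequence in $D$ with that meet, and conclude by continuity of the measure. The only cosmetic difference is that you derive the upward-directed case by complementation, whereas the paper simply notes it is proved symmetrically; both are immediate.
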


\begin{proof}
There is a countable subfamily of elements $c_{i} \in D$, $i=1,2,\ldots$, which have the same meet as $D$. Define $c'_{1} = c_{1}$ and select elements $c'_{i}$
in the downward directed set $D$ such that $c'_{2} \leq c'_{1} \wedge c_{2}$, $c'_{3} \leq
c'_{2} \wedge c_{3}, \ldots$. Then $c'_{1} \geq c'_{2} \geq c'_{3} \geq \ldots$ and this
sequence has still the same infimum. However, by the continuity of probability we have
\begin{eqnarray}
\mu(\bigwedge b_{i}) = \mu(\bigwedge c'_{i}) 
= \lim_{i \rightarrow \infty} \mu(c'_{i}) \geq \inf_{i} \mu(b_{i}).
\end{eqnarray}
But as $\mu(b_i) \geq \mu(\bigwedge b_{i})$, this implies $\mu(\bigwedge b_{i}) = \inf_{i} \mu(b_{i})$. The case of upwards directed
sets is proved in the same way.
\end{proof}

Note now that $\{[A]: A \subseteq H, A \in \mathcal{A}\}$ is an upward directed
family in $\mathcal{B}$. Therefore, according to Lemma~\ref{downward} we have
\begin{eqnarray} \label{eq:InnerProbExt}
sp_{\Gamma}(\psi) &= &\mu(\rho_{\Gamma}(\psi)) 
= \mu(\bigvee \{[A]: A \in \mathcal{A}, A \subseteq s_{\Gamma}(\psi) \})
\nonumber \\
&= &\sup \{\mu([A]): A \in \mathcal{A}, A \subseteq s_{\Gamma}(\psi) \}
\nonumber \\
&= &\sup\{P(A): A \in \mathcal{A}, A \subseteq s_{\Gamma}(\psi)\}
\nonumber \\
&= &P_{*}(s_{\Gamma}(\psi)),
\end{eqnarray}
where $P_{*}$ is the inner probability measure \index{inner probability measure} associated with $P$. This
shows, that the degree of support of a piece of information $\psi$ as defined by (\ref{eq:DefOfSupExt}) is the inner probability of the support $s_{\Gamma}(\psi)$. Note that definitions (\ref{eq:DefOfSupExt}) and (\ref{eq:SpFct}) coincide, if $s_{\Gamma}(\psi) \in \mathcal{A}$. Support functions and inner probability measures are thus closely related. This result is very appealing: any measurable set~$A$, which is contained in $s_{\Gamma}(\psi)$ supports $\psi$. So we expect $P(A) \leq sp_{\Gamma}(\psi)$. In the absence of further information, it is reasonable to take $sp_{\Gamma}(\psi)$ to be the least upper bound of the probabilities of $A$ supporting $\psi$. 

A similar consideration can be made with respect to the possibility sets associated with elements of $\Phi$ with respect to a random mapping $\Gamma$. As before we define the possibility set of $\psi$ as
\begin{eqnarray}
p_{\Gamma}(\psi) = \{\omega \in \Omega: \Gamma(\omega) \cdot \psi \not= 0\}.
\nonumber
\end{eqnarray}
This set contains all assumptions $\omega$ which do not lead to a contradiction with $\psi$ under the mapping $\Gamma$. Thus, the probability of this set, if it is defined, measures the \textit{degree of possibility} or the \textit{degree of plausibility} of $\psi$,
\begin{eqnarray} \label{eq:DegrOfPlaus}
pl_{\Gamma}(\psi) = P(p_{\Gamma}(\psi)).
\end{eqnarray}
As in the case of the degree of support, there is no guarantee that $p_{\Gamma}(\psi)$ is $\mathcal{A}$-measurable. But we can solve this problem in a way similar to the case of the degree of support. A measurable set $A \subseteq p^{c}_{\Gamma}(\psi)$ can be seen as an argument \textit{against} the hypothesis $\psi$, in particular, if $\Gamma$ is normalised. But $A \subseteq p^{c}_{\Gamma}(\psi)$ is equivalent to $A^{c} \supseteq p_{\Gamma}(\psi)$. So a measurable set $A \supseteq p_{\Gamma}(\psi)$ can be considered as an argument that hypothesis $\psi$ cannot be excluded. Therefore we define for every set $H \in \mathcal{P}$
\begin{eqnarray} \label{eq:xi0}
\xi_{0}(H) = \bigwedge\{[A]: A \supseteq H,A \in \mathcal{A}\}.
\end{eqnarray}
Note that $A \supseteq H$ if and only if $A^{c} \subseteq H^{c}$. This implies that $\xi_{0}(H) = (\rho_{0}(H^{c}))^{c}$. From this in turn we conclude that the following corollary to Theorem \ref{th:ExtOfProj} holds:

\begin{corollary} \label{cor;ExtOfProj2}
The application $\xi_{0}: \mathcal{P}(\Omega) \rightarrow \mathcal{A}/\mathcal{J}$ as
defined in (\ref{eq:xi0}) has the  following properties:
\begin{eqnarray}
\xi_{0}(\Omega) &= &\top,
\nonumber \\
\xi_{0}(\emptyset) &= &\bot,
\nonumber \\
\xi_{0}\left(\bigcup_{i \in I} H_{i}\right) 
&= &\bigvee_{i \in I}\xi_{0}(H_{i}).
\end{eqnarray}
if $\{H_{i}, i \in I\}$ is a countable family of subsets of $\Omega$.
\end{corollary}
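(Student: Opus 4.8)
The plan is to deduce this corollary from Theorem~\ref{th:ExtOfProj} purely by the order-reversing duality between $\rho_{0}$ and $\xi_{0}$ that complementation provides in the complete Boolean algebra $\mathcal{B}$. The first step is to nail down the relation $\xi_{0}(H) = (\rho_{0}(H^{c}))^{c}$, already announced in the text preceding the statement. To verify it I would compute, using that taking complements is an anti-isomorphism turning suprema into infima,
\begin{eqnarray*}
(\rho_{0}(H^{c}))^{c} &=& \left( \bigvee\{[A]: A \subseteq H^{c}, A \in \mathcal{A}\} \right)^{c} \\
&=& \bigwedge\{[A]^{c}: A \subseteq H^{c}, A \in \mathcal{A}\} = \bigwedge\{[A^{c}]: A \subseteq H^{c}, A \in \mathcal{A}\},
\end{eqnarray*}
and then substitute $B = A^{c}$. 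As $A$ ranges over the measurable subsets of $H^{c}$, the complement $B = A^{c}$ ranges exactly over the measurable supersets of $H$, so the last meet equals $\bigwedge\{[B]: B \supseteq H, B \in \mathcal{A}\} = \xi_{0}(H)$, as required.

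With this identity in hand the two boundary values are immediate: I would write $\xi_{0}(\Omega) = (\rho_{0}(\Omega^{c}))^{c} = (\rho_{0}(\emptyset))^{c} = \bot^{c} = \top$ and, symmetrically, $\xi_{0}(\emptyset) = (\rho_{0}(\Omega))^{c} = \top^{c} = \bot$, invoking the first two conclusions of Theorem~\ref{th:ExtOfProj}.

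For the third property I would start from a countable family $\{H_{i}: i \in I\}$ and apply the duality relation together with De Morgan's law $\left(\bigcup_{i \in I} H_{i}\right)^{c} = \bigcap_{i \in I} H_{i}^{c}$ for sets:
\begin{eqnarray*}
\xi_{0}\left( \bigcup_{i \in I} H_{i} \right) = \left( \rho_{0}\left( \bigcap_{i \in I} H_{i}^{c} \right) \right)^{c} = \left( \bigwedge_{i \in I} \rho_{0}(H_{i}^{c}) \right)^{c} = \bigvee_{i \in I} (\rho_{0}(H_{i}^{c}))^{c} = \bigvee_{i \in I} \xi_{0}(H_{i}),
\end{eqnarray*}
where the second equality is the intersection-to-meet clause of Theorem~\ref{th:ExtOfProj} applied to the countable family $\{H_{i}^{c}\}$, the third is De Morgan's law in the complete Boolean algebra $\mathcal{B}$, and the fourth re-applies the duality relation.

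There is essentially no hard step here; the whole corollary is a formal consequence of Theorem~\ref{th:ExtOfProj} by complementation. The only point deserving a word of care is the legitimacy of the infinite De Morgan laws used twice: these hold for arbitrary families in any complete Boolean algebra, and $\mathcal{B} = \mathcal{A}/\mathcal{J}$ is complete because it satisfies the countable chain condition, as recalled in the text. Since the family is countable, no completeness beyond what is already available is needed, so the argument goes through without additional hypotheses.
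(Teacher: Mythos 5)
Your proposal is correct and follows exactly the route the paper itself takes: the text preceding the corollary establishes the duality $\xi_{0}(H) = (\rho_{0}(H^{c}))^{c}$ from the equivalence $A \supseteq H \Leftrightarrow A^{c} \subseteq H^{c}$, and then derives the corollary from Theorem~\ref{th:ExtOfProj} via De Morgan's laws, just as you do. Your write-up simply makes explicit the details (the substitution $B = A^{c}$, the homomorphism property $[A]^{c} = [A^{c}]$, and the infinite De Morgan laws in the complete Boolean algebra $\mathcal{B}$) that the paper leaves implicit.
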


As before we can now compose $p_{\Gamma}$ with $\xi_{0}$ to obtain a mapping $\xi_{\Gamma} = \xi_{0} \circ p_{\Gamma}:\Phi \rightarrow \mathcal{B} =  \mathcal{A}/ \mathcal{J}$. We may then define for any $\psi \in \Psi$ a degree of plausibility by
\begin{eqnarray} \label{eq:ExtOpPlaus}
pl_{\Gamma}(\psi) = \mu(\xi_{\Gamma}(\psi)).
\end{eqnarray}
Using Lemma \ref{downward} we obtain also
\begin{eqnarray}
pl_{\Gamma}(\psi) = inf\{P(A):A \in \mathcal{A},A \supseteq p_{\Gamma}(\psi)\} 
= P^{*}(p_{\Gamma}(\psi)).
\nonumber
\end{eqnarray}
Here $P^{*}$ is the outer probability measure \index{outer probability measure} of the set $p_{\Gamma}(\psi)$. Thus, if $p_{\Gamma}(\psi)$ is measurable, then $P^{*}(p_{\Gamma}(\psi)) = P(p_{\Gamma}(\psi))$, which shows that (\ref{eq:ExtOpPlaus}) defines in fact an extension of the plausibility defined by (\ref{eq:DegrOfPlaus}). 

In the general case considered here, no properties comparable to those of support (for instance Theorem \ref{th:AllolcSupp}) exist for possibility sets and degrees of possibility. This notion gets its full power only in the case of Boolean information algebra, where it becomes a dual concept to support (see Section \ref{subsec:BooleanCase}).

%%%%%%%%%%%%%%%%%%%%%%%%%%%%%%%%%%%%%%%%%%%%%%%%%%%%%%%%%%%

\section{Random variables} \label{subsec:RandVar}

We propose now a number of  alternative approaches to define certain special random maps in an information algebra. We start with an information algebra $\mathcal{R}_s$ of \textit{simple random variables} with values in a domain-free information algebra $(\Phi,\cdot,0,1;E)$ with $E = \{\epsilon_x:x \in Q\}$ and defined on a sample space $(\Omega,\mathcal{A},P)$. Consider the \textit{ideal completion} $I_{\mathcal{R}_s}$ of this algebra. This is a compact information algebra with simple random variables $\mathcal{R}_{s}$ as \textit{finite} elements, see Section \ref{subsec:CompInfAlg}. We call the elements of $I_{\mathcal{R}_s}$ \textit{random variables}. 

A random variable is thus an ideal of simple random variables. As usual, we identify henceforth $\mathcal{R}_{s}$ with its image in $I_{\mathcal{R}_s}$, that is, we identify the simple random variables $\Delta \in \mathcal{R}_{s}$ with their principal ideals $\downarrow\!\Delta$ in $I_{\mathcal{R}_s}$. We also write $\Delta \leq \Gamma$ for $\Delta \in \Gamma$, referring to the order in $I_{\mathcal{R}_s}$. So, for any $\Gamma \in I_{\mathcal{R}_s}$ we may within the algebra $I_{\mathcal{R}_s}$ write $\Gamma = \bigvee \{\Delta \in \mathcal{R}_{s}:\Delta \leq \Gamma\}$. Using the associativity of join in the complete lattice $I_{\mathcal{R}_s}$, we obtain
\begin{eqnarray} \label{eq:CombGenRV}
\lefteqn{\Gamma_{1} \vee \Gamma_{2} = \Gamma_{1} \cdot \Gamma_{2}}
\nonumber \\
&&= \left( \bigvee \{\Delta_{1} \in \mathcal{R}_{s}:\Delta_{1} \leq \Gamma_{1}\} \right) \vee
\left( \bigvee \{\Delta_{2} \in \mathcal{R}_{s}:\Delta_{2} \leq \Gamma_{2}\} \right) 
\nonumber \\
&&=\bigvee \{\Delta_{1} \cdot \Delta_{2}: \Delta_{1},\Delta_{2} \in \mathcal{R}_{s},\Delta_{1} \leq \Gamma_{1},\Delta_{2} \leq \Gamma_{2}\}.
\end{eqnarray}
Note that this corresponds also to the combination of two ideals, see Section \ref{subsec:IdealExt}. In a similar way, by Theorem \ref{th:ContOfExtr}, we find that
\begin{eqnarray} \label{eq:ExtrGenRV}
\epsilon_x(\Gamma) = \epsilon_x(\bigvee \{\Delta \in \mathcal{R}_{s}:\Delta \leq \Gamma\}
= \bigvee \{\epsilon_x(\Delta):\Delta \in \mathcal{R}_{s}:\Delta \leq \Gamma\}.
\end{eqnarray}
Again, this corresponds to the definition of extraction in the ideal completion, Section \ref{subsec:IdealExt}.

To any random variable $\Gamma \in \mathcal{R}$ we may associate a random mapping $\Gamma: \Omega \rightarrow I_{\Phi}$ from the underlying sample space into the ideal completion of $\Phi$ by defining
\begin{eqnarray} \label{eq:DefRandMap}
\Gamma(\omega) = \bigvee \{\Delta(\omega):\Delta \in \mathcal{R}_{s}, \Delta \leq \Gamma\}.
\end{eqnarray}
This random mapping is defined by a sort of \textit{point-wise limit} within $I_{\Phi}$. We denote the random mapping $\Gamma$ deliberately with the same symbol as the generalised random variable $\Gamma$. The reason is that the two concept can essentially by identified as the following lemmata show. In the following lemma, combination and extraction in $I_{\mathcal{R}_s}$ are defined as in (\ref{eq:CombGenRV}) and (\ref{eq:ExtrGenRV}). Note that we denote combination (join) and information extraction for $x \in D$ with the same symbol in $I_{\mathcal{R}_s}$ and in $I_{\Phi}$.

\begin{lemma} \label{le:GebRVasMap}
\begin{enumerate}
\item If $\Gamma_{1},\Gamma_{2} \in I_{\mathcal{R}_s}$, then
\begin{eqnarray}
(\Gamma_{1} \cdot \Gamma_{2})(\omega) = \Gamma_{1}(\omega) \cdot \Gamma_{2}(\omega) \textrm{ for all}\ \omega \in \Omega .
\nonumber
\end{eqnarray}
\item If $\Gamma \in I_{\mathcal{R}_s}$, then $\forall x \in D$
\begin{eqnarray}
(\epsilon_x(\Gamma))(\omega) = \epsilon_x(\Gamma(\omega)) \textrm{ for all}\ \omega \in \Omega.
\nonumber
\end{eqnarray}
\end{enumerate}
\end{lemma}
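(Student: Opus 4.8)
The statement to prove is Lemma~\ref{le:GebRVasMap}, which asserts that the combination and extraction operations on generalised random variables in $\mathcal{R}$ (defined via suprema of dominated simple random variables) agree point-wise with the corresponding operations on the associated random mappings into $I_\Psi$.

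My plan is to prove the two items separately, in each case by establishing the point-wise identity as an equality of two suprema in the complete lattice $I_\Psi$, showing mutual domination. I will use throughout the defining formula (\ref{eq:DefRandMap}) for $\Gamma(\omega)$, the combination formula (\ref{eq:CombGenRV}), the extraction formula (\ref{eq:ExtrGenRV}), and the fact that for simple random variables combination and extraction are genuinely point-wise (from Section~\ref{sec:SimpleRV}, where $(\Delta_1\cdot\Delta_2)(\omega)=\Delta_1(\omega)\cdot\Delta_2(\omega)$ and $(\epsilon_x(\Delta))(\omega)=\epsilon_x(\Delta(\omega))$ hold by definition).

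For item~1, I would start from the right-hand side. Using (\ref{eq:DefRandMap}) for each factor and the fact that combination (join) in $I_\Psi$ distributes over the suprema of the two directed families $\{\Delta_1(\omega)\}$ and $\{\Delta_2(\omega)\}$ (associativity and commutativity of join in the complete lattice $I_\Psi$, exactly as exploited in (\ref{eq:CombGenRV})), I get
\[
\Gamma_1(\omega)\cdot\Gamma_2(\omega)
=\bigvee\{\Delta_1(\omega)\cdot\Delta_2(\omega):\Delta_1\le\Gamma_1,\ \Delta_2\le\Gamma_2,\ \Delta_1,\Delta_2\in\mathcal{R}_s\}.
\]
On the other hand, evaluating $(\Gamma_1\cdot\Gamma_2)(\omega)$ via (\ref{eq:DefRandMap}) gives the supremum over all simple $\Delta\le\Gamma_1\cdot\Gamma_2$ of $\Delta(\omega)$. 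The two suprema coincide because the family $\{\Delta_1\cdot\Delta_2:\Delta_1\le\Gamma_1,\Delta_2\le\Gamma_2\}$ is cofinal in $\{\Delta\in\mathcal{R}_s:\Delta\le\Gamma_1\cdot\Gamma_2\}$: each $\Delta_1\cdot\Delta_2$ is itself a simple random variable dominated by $\Gamma_1\cdot\Gamma_2$ (giving ``$\le$''), and conversely any simple $\Delta\le\Gamma_1\cdot\Gamma_2=\bigvee\{\Delta_1\cdot\Delta_2\}$, being finite (compact) in the algebraic algebra $\mathcal{R}$, is dominated by some single $\Delta_1\cdot\Delta_2$ (giving ``$\ge$''). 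Since $\Delta(\omega)\le(\Delta_1\cdot\Delta_2)(\omega)=\Delta_1(\omega)\cdot\Delta_2(\omega)$, the two point-wise suprema agree.

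For item~2, I would argue analogously, using (\ref{eq:ExtrGenRV}) together with the point-wise behaviour of $\epsilon_x$ on simple random variables and the continuity of extraction (Theorem~\ref{th:CommExtrJoin}), which lets $\epsilon_x$ commute with the directed supremum defining $\Gamma(\omega)$ in $I_\Psi$. The main obstacle I anticipate is the careful handling of compactness/cofinality: the crux is that a finite (simple) element dominated by a join of simple elements is already dominated by one of them, which is precisely the compactness property of $\mathcal{R}_s$ as the set of finite elements of the algebraic algebra $\mathcal{R}$ (Theorem~\ref{th:AltDefAlgInfAlg}). I must be sure that the relevant families are directed so that this compactness argument applies and so that $\epsilon_x$ may be pulled through the supremum by Theorem~\ref{th:CommExtrJoin}; directedness of $\{\Delta_1\cdot\Delta_2\}$ and of $\{\epsilon_x(\Delta):\Delta\le\Gamma\}$ follows since $\Gamma$, $\Gamma_1$, $\Gamma_2$ are ideals and hence closed under combination. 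The remaining steps are routine manipulations of suprema.
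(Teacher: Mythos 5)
Your proof is correct, and it reaches the conclusion by a route that is organisationally different from the paper's, though it rests on the same compactness phenomenon. The paper argues element-wise inside the ideals: it takes $\psi$ in one side, applies compactness of the elements of $\Psi$ (identified with principal ideals, the finite elements of $I_{\Psi}$) to the directed supremum $\bigvee\{\Delta(\omega):\Delta \leq \Gamma\}$, and then unwinds the definitions of combination and extraction of ideals in $\mathcal{R}$ to obtain the factorisations $\Delta \leq \Delta_{1} \cdot \Delta_{2}$, respectively $\Delta \leq \epsilon_x(\Delta')$; the reverse inclusions are handled symmetrically. You instead work with whole suprema: you expand both sides as directed joins and prove equality by cofinality, applying compactness of the simple random variables (the finite elements of $\mathcal{R}$, by Theorem \ref{th:CompIdealCompl}) to the directed families $\{\Delta_{1}\cdot\Delta_{2}\}$ and $\{\epsilon_x(\Delta):\Delta\leq\Gamma\}$; for item 2 you additionally invoke Theorem \ref{th:CommExtrJoin} to pull $\epsilon_x$ through the supremum defining $\Gamma(\omega)$, where the paper re-derives exactly this commutation by hand from the definition of $\epsilon_x$ on ideals. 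Your version is more uniform across the two items and reuses existing machinery; the paper's is more self-contained at the ideal level. One detail you should make explicit: directedness of $\{\epsilon_x(\Delta):\Delta\leq\Gamma\}$ does not follow from closure of $\Gamma$ under combination alone, but also needs that $\epsilon_x$ is order-preserving (so that $\epsilon_x(\Delta),\epsilon_x(\Delta') \leq \epsilon_x(\Delta\cdot\Delta')$), which holds in idempotent algebras as recorded in Section \ref{subsec:IdempotCase}; with that remark added, all your compactness and continuity appeals are legitimate.
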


\begin{proof}
(1) By definition of the random mapping (\ref{eq:DefRandMap}) associated with $\Gamma_{1} \cdot \Gamma_{2}$ we have 
\begin{eqnarray}
(\Gamma_{1} \cdot \Gamma_{2})(\omega) = \bigvee \{\Delta(\omega): \Delta \leq \Gamma_{1} \cdot \Gamma_{2}\},
\nonumber
\end{eqnarray}
where $\Delta$ denote as always simple random variables. Consider now an element $\psi \in (\Gamma_{1} \cdot \Gamma_{2})(\omega)$. In the compact information algebra $I_{\Phi}$ this means that $\psi \leq \bigsqcup \{\Delta(\omega): \Delta \in \Gamma_{1} \cdot \Gamma_{2}\}$. The supremum on the right hand side is over a directed set in $I_{\Phi}$. By compactness, there is therefore a $\Delta \leq \Gamma_{1} \cdot \Gamma_{2}$ such that $\psi \leq \Delta(\omega)$. Now, $\Delta \leq \Gamma_{1} \cdot \Gamma_{2}$ means by the definition of combination in the ideal completion $I_{\mathcal{R}_s}$ that there is a $\Delta_{1} \leq \Gamma_{1}$, $\Delta_{1} \in \mathcal{R}_{s}$, and a $\Delta_{2} \leq \Gamma_{2}$, $\Delta_{2} \in \mathcal{R}_{s}$ such that $\Delta \leq \Delta_{1} \cdot \Delta_{2}$. This implies that $\psi \leq (\Delta_{1} \cdot \Delta_{2})(\omega) = \Delta_{1}(\omega) \cdot \Delta_{2}(\omega)$, where $\Delta_{1}(\omega) \in \Gamma_{1}(\omega)$ and $\Delta_{2}(\omega) \in \Gamma_{2}(\omega)$. But this shows that $\psi \in \Gamma_{1}(\omega)  \cdot \Gamma_{2}(\omega)$. 

Conversely, consider an element $\psi \in \Gamma_{1}(\omega) \cdot \Gamma_{2}(\omega)$. By the definition of the join in $I_{\Phi}$ this means that there are elements $\psi_{1},\psi_{2} \in \Phi$ such that $\psi \leq \psi_{1} \cdot \psi_{2}$, where $\psi_{1} \leq \Gamma_{1}(\omega)$ and $\psi_{2} \leq \Gamma_{2}(\omega)$. Now, $\psi_{1} \leq \Gamma_{1}(\omega)$ means that $\psi_{1} \leq \bigvee \{\Delta(\omega):\Delta \leq \Gamma_{1}\}$. As above, by compactness, there is a $\Delta_{1} \leq \Gamma_{1}$ such that $\psi_{1} \leq \Delta_{1}(\omega)$. Similarly, there is a $\Delta_{2} \leq \Gamma_{2}$ such that $\psi_{2} \leq \Delta_{2}(\omega)$. Thus, $\psi \leq \Delta_{1}(\omega) \cdot \Delta_{2}(\omega) = (\Delta_{1} \cdot \Delta_{2})(\omega)$. Further $\Delta_{1} \cdot \Delta_{2} \leq \Gamma_{1} \cdot \Gamma_{2}$. This implies $\psi \in (\Gamma_{1} \cdot \Gamma_{2})(\omega)$, hence finally $(\Gamma_{1} \cdot \Gamma_{2})(\omega) = \Gamma_{1}(\omega) \cdot \Gamma_{2}(\omega)$. 

(2) Assume next that $\psi \in (\epsilon_x(\Gamma))(\omega)$. As above, using the definition of the random mapping associated with $\epsilon_x(\Gamma)$, this implies that there is a $\Delta \leq \epsilon_x(\Gamma)$ such that $\psi \leq \Delta(\omega)$. By the definition of $\epsilon_x(\Gamma)$ and compactness there is a $\Delta' \leq \Gamma$ such that $\Delta \leq \epsilon_x(\Delta')$. This implies $\psi \leq (\epsilon_x(\Delta'))(\omega) = \epsilon_x(\Delta'(\omega))$, which, together with $\Delta'(\omega) \leq \Gamma(\omega)$ shows that $\psi \in \epsilon_x(\Gamma(\omega))$. 

Conversely, assume $\psi \in \epsilon_x(\Gamma(\omega))$. Then $\psi \leq \epsilon_x(\phi)$ for some $\phi \in \Gamma(\omega)$. Again, as above, there is a $\Delta \leq \Gamma$ such that $\phi \leq \Delta(\omega)$. Therefore, we conclude that $\psi \leq \epsilon_x(\Delta(\omega)) = (\epsilon_x(\Delta))(\omega)$ and $\epsilon_x(\Delta) \leq \epsilon_x(\Gamma)$. This implies that $\psi \in (\epsilon_x(\Gamma))(\omega)$, hence $(\epsilon_x(\Gamma))(\omega) = \epsilon_x(\Gamma(\omega))$.
\end{proof}

According to this lemma we have a homomorphism between the algebras of random variables and of random mappings. In fact, it is an embedding, since $\Gamma_{1}(\omega) = \Gamma_{2}(\omega)$ for all $\omega \in \Omega$ implies $\Gamma_{1} = \Gamma_{2}$. 

The next lemma strengthens Lemma \ref{le:GebRVasMap}.
\begin{lemma} \label{le:SupDirSet}
If $D \subseteq I_{\mathcal{R}_s}$ is a directed set, then
\begin{eqnarray}
(\bigsqcup_{\Gamma \in D} \Gamma)(\omega) = \bigsqcup_{\Gamma \in D} \Gamma(\omega).
\nonumber
\end{eqnarray}
\end{lemma}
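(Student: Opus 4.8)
The plan is to prove the two inequalities separately, taking care that the join on the left-hand side is computed in the complete lattice $\mathcal{R}$ (the ideal completion of $\mathcal{R}_{s}$, hence complete) while the join on the right-hand side is computed in the complete lattice $I_{\Psi}$, so that both sides are well defined; in particular $\bigvee X$ exists in $\mathcal{R}$ and $\bigvee_{\Gamma \in X} \Gamma(\omega)$ exists in $I_{\Psi}$. Throughout I would use that the evaluation map $\Gamma \mapsto \Gamma(\omega)$ defined by (\ref{eq:DefRandMap}) is monotone: if $\Gamma_{1} \leq \Gamma_{2}$ in $\mathcal{R}$, then $\{\Delta \in \mathcal{R}_{s} : \Delta \leq \Gamma_{1}\} \subseteq \{\Delta \in \mathcal{R}_{s} : \Delta \leq \Gamma_{2}\}$, and taking suprema in $I_{\Psi}$ yields $\Gamma_{1}(\omega) \leq \Gamma_{2}(\omega)$.

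For the inequality $\bigvee_{\Gamma \in X} \Gamma(\omega) \leq (\bigvee_{\Gamma \in X} \Gamma)(\omega)$, I would observe that every $\Gamma \in X$ satisfies $\Gamma \leq \bigvee X$ in $\mathcal{R}$, so by monotonicity $\Gamma(\omega) \leq (\bigvee X)(\omega)$. Hence $(\bigvee X)(\omega)$ is an upper bound in $I_{\Psi}$ of the family $\{\Gamma(\omega) : \Gamma \in X\}$, and the least upper bound $\bigvee_{\Gamma \in X} \Gamma(\omega)$ is therefore dominated by it.

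For the reverse inequality the decisive ingredient is compactness of simple random variables. By (\ref{eq:DefRandMap}) we have $(\bigvee X)(\omega) = \bigvee \{\Delta(\omega) : \Delta \in \mathcal{R}_{s},\ \Delta \leq \bigvee X\}$, so it suffices to show each such $\Delta(\omega)$ lies below $\bigvee_{\Gamma \in X} \Gamma(\omega)$. I would fix $\Delta \in \mathcal{R}_{s}$ with $\Delta \leq \bigvee X$. Since $\mathcal{R}_{s}$ is exactly the set of finite elements of the algebraic information algebra $\mathcal{R}$ and $X$ is directed, Definition \ref{def:FiniteEl} supplies a $\Gamma \in X$ with $\Delta \leq \Gamma$. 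Monotonicity then gives $\Delta(\omega) \leq \Gamma(\omega) \leq \bigvee_{\Gamma' \in X} \Gamma'(\omega)$, as needed. Combining the two inequalities yields the claimed equality.

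The argument is short, and I do not expect a genuine obstacle: all the facts it relies on are already in place, namely completeness of the ideal completion, the identification of $\mathcal{R}_{s}$ with the finite elements of $\mathcal{R}$, and monotonicity of the evaluation map. The only point requiring care is not to conflate the two ambient complete lattices $\mathcal{R}$ and $I_{\Psi}$, and to invoke compactness precisely at the single step where a finite element is pushed below one member of the directed family.
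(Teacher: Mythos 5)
Your proof is correct and follows essentially the same route as the paper: monotonicity of the evaluation map $\Gamma \mapsto \Gamma(\omega)$ gives one inequality, and compactness of the simple random variables as finite elements of $\mathcal{R}$ applied to the directed set $X$ gives the other. The only (harmless) difference is that you bound the generating elements $\Delta(\omega)$ of $(\bigvee X)(\omega)$ directly, whereas the paper first picks an arbitrary $\psi \in \Psi$ below $(\bigvee X)(\omega)$ and uses compactness in $I_{\Psi}$ to push it under some $\Delta(\omega)$; your version saves that one extra step.
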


\begin{proof}
If $\Gamma' \in D$, then $\Gamma' \leq \bigsqcup_{\Gamma \in D} \Gamma$, hence $\Gamma'(\omega) \leq (\bigsqcup_{\Gamma \in D} \Gamma)(\omega)$ and therefore 
\begin{eqnarray}
\bigsqcup_{\Gamma \in D} \Gamma(\omega) \leq (\bigsqcup_{\Gamma \in D} \Gamma)(\omega).
\nonumber
\end{eqnarray}

Conversely, consider $\psi \in \Phi$ such that $\psi \leq (\bigsqcup_{\Gamma \in D} \Gamma)(\omega)$. Since, according to (\ref{eq:DefRandMap}),
\begin{eqnarray}
(\bigsqcup_{\Gamma \in D} \Gamma)(\omega) = \bigsqcup \{\Delta(\omega):\Delta \leq \bigsqcup_{\Gamma \in D} \Gamma\}
\nonumber
\end{eqnarray}
we have by compactness $\psi \leq \Delta(\omega)$ for some simple random variable $\Delta \leq \bigsqcup_{\Gamma \in D}\Gamma$. Now, since $D$ is a directed set, by compactness, there is a $\Gamma\in D$ such that $\Delta \leq \Gamma$, hence $\Delta(\omega) \leq \Gamma(\omega)$. It follows then that $\psi \leq \bigsqcup_{\Gamma \in D} \Gamma(\omega)$, which in turn implies
\begin{eqnarray}
(\bigsqcup_{\Gamma \in D} \Gamma)(\omega) \leq \bigsqcup_{\Gamma \in D} \Gamma(\omega).
\nonumber
\end{eqnarray}
This concludes the proof of the lemma.
\end{proof}

This lemma shows that the mapping associating a random variable to its random mapping is continuous.

The theory of random variables developed above may be presented particularly in a natural way in the framework of \textit{compact information algebras}. Let $\Phi$ be a compact information algebra with finite elements $\Phi_{f}$. We assume that $\Phi_{f}$ is a subalgebra of $\Phi$. Define then \textit{simple random variables} $\Delta$ with finite elements from $\Psi_{f}$ as values. They form still an information algebra $\mathcal{R}_{s}$ with combination and extraction defined point-wise.  Since the ideal completion $I_{\Phi_{f}}$ of the information algebra $\Phi_{f}$ is isomorphic to the compact algebra $\Phi$ (see Section \ref{subsec:CompInfAlg}), the theory above applies to the present case. Random variables in a compact information algebra can thus be considered as random mappings with values in $\Phi$, defined as point-wise limits of simple random variables with finite elements as values. 

As before with random mappings, there is no guarantee that the support $s_{\Gamma}(\psi)$ of a random variable $\Gamma$ is measurable for every $\psi \in \Phi$. But of course we can extend the support function to all of $\Phi$ by the allocation of probability as proposed above. However, we shall show later that the degrees of support $sp_{\Gamma}(\psi)$ of a random variable $\Gamma$ is in fact determined by the degrees of support of its approximating simple random random variables, see Section \ref{subsec:SuppFcts}.

Information algebras are closed under \textit{finite} combinations. But there are information algebras which are  also closed under \textit{countable} combinations. In this section we consider such algebras and uncertain information relative to such algebras. Here follows the definition which will be used in the sequel:

\begin{definition} \textbf{$\sigma$-Information Algebra.}
A domain-free information algebra $(\Phi,\cdot,0,1;E)$ with $E = \{\epsilon_x:x \in Q\}$ is called a $\sigma$-information algebra, if
\begin{enumerate}
\item \textit{Countable Combination:} $\Phi$ is closed under \textit{countable} combinations (joins).
\item \textit{Continuity of Extraction:} For every montone sequence $\phi_{1} \leq \phi_{2} \leq \ldots \in \Phi$, and for any $x \in Q$, it holds that
\begin{eqnarray}
\epsilon_x(\bigvee_{i=1}^{\infty} \phi_{i}) = \bigvee_{i=1}^{\infty} \epsilon_x(\phi_{i}).
\nonumber
\end{eqnarray} 
\end{enumerate}
\end{definition}
The second condition is a weaker version of the continuity of extraction

There are many examples of $\sigma$-information algebras. First of all, any \textit{continuous} or \textit{compact} information algebra $\Phi$ is a $\sigma$-information algebra: Since in these cases $\Phi$ is a complete lattice it is surely closed under countable join. The continuity of extraction follows from Theorems \ref{th:ContOfExtr} and \ref{th:ComExtrJoin2}, since a monotone sequence is a directed set. 
 
Further important examples of $\sigma$-information algebras are minimal extensions of information algebras $\Phi$ which are closed under countable combination. Such extensions can be obtained using ideal completion. In order to do this, we need to introduce a new concept. Let $\Phi$ be an information algebra and $I_{\Phi}$ its ideal completion. A subset $S$ of $I_{\Phi}$ is called $\sigma$-closed, if it is closed under \textit{countable} combinations or joins. The intersection of any family of $\sigma$-closed sets is also $\sigma$-closed. Further the set $I_{\Phi}$ itself is $\sigma$-closed. Therefore, for any subset $X \subseteq I_{\Phi}$ we may define the $\sigma$-closure $\sigma(X)$ as the intersection of all $\sigma$-closed sets containing $X$. 

We are particularly interested in $\sigma(\Phi)$, the $\sigma$-closure of $\Phi$ in $I_{\Phi}$. Note that here, as in the sequel, we identify as usual $\Phi$ with its embedding in $I_\Phi$ under the mapping $\phi \mapsto \downarrow\!\phi$ for simplicity of notation. Also we shall write $\phi$, even if we operate within $I_\Phi$. The $\sigma$-closure of $\Phi$ can be characterized as follows:

\begin{theorem} \label{th:CharSigmaSet}
If $\Phi$ is an information algebra, then
\begin{eqnarray} \label{eq:SigmaSet}
\sigma(\Phi) = \{I \in I_\Phi: I = \bigvee_{i=1}^{\infty} \phi_{i}, \phi_{i} \in \Phi\}.
\end{eqnarray}
\end{theorem}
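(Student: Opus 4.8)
The plan is to prove the set equality by the standard ``smallest closed set'' argument. I denote by $T$ the set on the right-hand side of (\ref{eq:SigmaSet}), and I will establish the two inclusions separately: first that $T$ is itself a $\sigma$-closed set containing $\Psi$ (which forces $\sigma(\Psi) \subseteq T$ by minimality of the $\sigma$-closure), and second that every element of $T$ lies in $\sigma(\Psi)$. Throughout I use, as established in Section \ref{subsec:IdealCompl}, that $I_{\Psi}$ is a complete lattice, so that arbitrary, in particular countable, joins exist and the expressions defining $T$ are meaningful.

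First I would record that $\Psi \subseteq T$: for any $\psi \in \Psi$ the constant sequence $\psi_{i} = \psi$ gives $\psi = \bigvee_{i=1}^{\infty} \psi_{i}$, so $\psi \in T$. The central step is then to show that $T$ is $\sigma$-closed. Given a countable family $\{I_{n} : n \in \mathbb{N}\} \subseteq T$, I write each $I_{n} = \bigvee_{i=1}^{\infty} \psi_{n,i}$ with $\psi_{n,i} \in \Psi$. By associativity of join in the complete lattice $I_{\Psi}$,
\[
\bigvee_{n=1}^{\infty} I_{n} = \bigvee_{n=1}^{\infty} \bigvee_{i=1}^{\infty} \psi_{n,i} = \bigvee_{(n,i) \in \mathbb{N} \times \mathbb{N}} \psi_{n,i}.
\]
Since $\mathbb{N} \times \mathbb{N}$ is countable, the right-hand side is a countable join of elements of $\Psi$, hence belongs to $T$. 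Thus $T$ is closed under countable combination and contains $\Psi$, so by the definition of $\sigma(\Psi)$ as the smallest $\sigma$-closed set containing $\Psi$ we obtain $\sigma(\Psi) \subseteq T$.

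For the reverse inclusion, I would take $I \in T$, say $I = \bigvee_{i=1}^{\infty} \psi_{i}$ with $\psi_{i} \in \Psi$. Since $\Psi \subseteq \sigma(\Psi)$ and $\sigma(\Psi)$ is by definition closed under countable joins, the element $\bigvee_{i=1}^{\infty} \psi_{i}$ lies in $\sigma(\Psi)$; hence $I \in \sigma(\Psi)$ and $T \subseteq \sigma(\Psi)$. Combining the two inclusions yields (\ref{eq:SigmaSet}).

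The only genuine content is the reindexing of a double countable join as a single countable join in the $\sigma$-closedness step; everything else is formal. The main obstacle is therefore simply ensuring that this reindexing is legitimate, which rests on two facts: the completeness of $I_{\Psi}$, guaranteeing both the existence of the joins involved and the general associativity law $\bigvee_{n} \bigvee_{i} a_{n,i} = \bigvee_{(n,i)} a_{n,i}$ for suprema over a partitioned index set, and the countability of $\mathbb{N} \times \mathbb{N}$, which keeps the resulting join countable and hence inside $T$.
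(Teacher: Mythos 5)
Your proof is correct and follows essentially the same route as the paper: both establish that the right-hand side contains $\Psi$, is contained in $\sigma(\Psi)$, and is itself $\sigma$-closed via reindexing a double countable join, whence equality follows by minimality of the $\sigma$-closure. The only cosmetic difference is in the reindexing step, where the paper regroups the doubly-indexed family into finite blocks $K_{i}$ whose joins lie in $\Psi$ (using that $\Psi$ is closed under finite combination), while you invoke the countability of $\mathbb{N} \times \mathbb{N}$ directly; both are legitimate.
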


\begin{proof}
Clearly, the set on the right hand side of equation (\ref{eq:SigmaSet}) contains $\Phi$ and is contained in $\sigma(\Phi)$. We claim that this set is itself $\sigma$-closed. In fact, consider a countable set $I_{j}$ of elements of this set, such that
\begin{eqnarray}
I_{j} = \bigvee_{i=1}^{\infty} \psi_{j,i}
\nonumber
\end{eqnarray}
with $\psi_{j,i} \in \Phi$. Define the set $J = \{(j,i): j=1,2\ldots;i=1,2\ldots\}$ and the sets $J_{j} = \{(j,i);i=1,2,\ldots\}$ for $j=1,2,\ldots$, and $K_{i} = \{(h,j):1 \leq h,j \leq i\}$ for $i=1,2,\ldots$. Then we have
\begin{eqnarray}
J = \bigcup_{j=1}^{\infty} J_{j} =  \bigcup_{i=1}^{\infty} K_{i}.
\nonumber
\end{eqnarray}
By the laws of associativity in the complete lattice $I_{\Phi}$ we obtain then
\begin{eqnarray}
\bigvee_{j=1}^{\infty} I_{j} &= &\bigvee_{j=1}^{\infty} (\bigvee_{(j,i) \in J_{j}} \psi_{j,i}) 
\nonumber \\
&= &\bigvee_{(j,i) \in J} \psi_{j,i} 
= \bigvee_{i=1}^{\infty} (\vee_{(h,j) \in K_{i}} \psi_{h,j}).
\nonumber 
\end{eqnarray}
But $\vee_{(h,j) \in K_{i}} \psi_{h,j} \in \Phi$ for $i=1,2,\ldots$.  Hence $\bigvee_{j=1}^{\infty} I_{j}$ belongs itself to the set on the right hand side of (\ref{eq:SigmaSet}). This means that this set is indeed $\sigma$-closed. Since the set contains $\Phi$, it contains also $\sigma(\Phi)$, hence it equals $\sigma(\Phi)$. 
\end{proof}

Consider now a \textit{monotone} sequence $\psi_{1} \leq \psi_{2} \leq \ldots$ of elements of $\Phi$. Its supremum exists in $I_{\Phi}$ and belongs in fact to $\sigma(\Phi)$. The sequence is furthermore a directed set. Therefore, by Theorem \ref{th:ContOfExtr} join commutes with information extraction, this is expressed in the following theorem. It shows that \textit{continuity of extraction} holds:

\begin{theorem} \label{th:ComJoinFoc}
For a monotone sequence $\psi_{1} \leq \psi_{2} \leq \ldots$ of elements of $\Phi$, and for any $x \in Q$, we have in $\sigma(\Phi)$ that
\begin{eqnarray} \label{eq:Continuity}
\epsilon_x(\bigsqcup_{i=1}^{\infty} \psi_{i}) = \bigsqcup_{i=1}^{\infty} \epsilon_x(\psi_{i}).
\end{eqnarray}
\end{theorem}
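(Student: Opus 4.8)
The plan is to derive this identity as a direct corollary of the continuity of extraction already established for compact information algebras, namely Theorem \ref{th:CommExtrJoin}, by recognising that the ambient structure here is the ideal completion $(I_{\Psi},D;\leq,\bot,\cdot,\epsilon)$ and that this completion is algebraic, hence compact.

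First I would record the structural facts. By Theorem \ref{th:CompIdealCompl} the ideal completion $(I_{\Psi},D;\leq,\bot,\cdot,\epsilon)$ is an algebraic information algebra, and by Definition \ref{def_AlgebInfAlg} every algebraic information algebra is in particular compact; thus $(I_{\Psi},D;\leq,\bot,\cdot,\epsilon)$ satisfies exactly the hypotheses under which Theorem \ref{th:CommExtrJoin} was proved. Throughout I keep the usual identification of $\Psi$ with its image under $\psi \mapsto\ \downarrow\!\psi$. In particular the embedding of Theorem \ref{th:IdealCompl} gives $\epsilon_{x}(\downarrow\!\psi) =\ \downarrow\!\epsilon_{x}(\psi)$, so that extracting an element of $\Psi$ inside $I_{\Psi}$ returns the corresponding element of $\Psi$.

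Next I would verify the two hypotheses of Theorem \ref{th:CommExtrJoin} for the set $X = \{\psi_{1},\psi_{2},\ldots\}$. The chain is a directed subset of $I_{\Psi}$: it is nonempty, and any two members $\psi_{i},\psi_{j}$ are dominated by $\psi_{\max\{i,j\}} \in X$. Since $I_{\Psi}$ is a complete lattice, the supremum $\bigvee_{i=1}^{\infty}\psi_{i}$ exists in $I_{\Psi}$, and it lies in $\sigma(\Psi)$ by Theorem \ref{th:CharSigmaSet}. Theorem \ref{th:CommExtrJoin} then applies and yields
\[
\epsilon_{x}\Bigl(\bigvee_{i=1}^{\infty}\psi_{i}\Bigr) = \bigvee_{i=1}^{\infty}\epsilon_{x}(\psi_{i}),
\]
which is the asserted identity. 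Finally I would observe that the right-hand side is a countable join of elements $\epsilon_{x}(\psi_{i}) \in \Psi$, hence itself lies in $\sigma(\Psi)$; together with the displayed equality this confirms that both sides are computed within $\sigma(\Psi)$, as claimed.

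There is no genuine obstacle here, since the entire analytic content has been front-loaded into Theorem \ref{th:CommExtrJoin}. The only point demanding a little care is the legitimacy of invoking that theorem in the completion rather than in $\Psi$ itself, that is, checking that $(I_{\Psi},D)$ is compact and that the monotone sequence, read through the embedding, is a directed family whose supremum exists there. Once these bookkeeping matters are settled, the identity is immediate.
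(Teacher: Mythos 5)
Your proposal is correct and follows essentially the same route as the paper: the paper's argument is precisely that the monotone sequence is a directed set whose supremum exists in $I_{\Psi}$ and lies in $\sigma(\Psi)$, so that Theorem \ref{th:CommExtrJoin}, applied in the ideal completion (which is algebraic, hence compact, by Theorem \ref{th:CompIdealCompl} under the standing assumption that $D$ has a top element), gives the commutation of extraction with the join. Your additional bookkeeping --- the embedding identity $\epsilon_{x}(\downarrow\!\psi) =\ \downarrow\!\epsilon_{x}(\psi)$ and the observation via Theorem \ref{th:CharSigmaSet} that both sides lie in $\sigma(\Psi)$ --- merely makes explicit what the paper leaves implicit.
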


Theorem \ref{th:ComJoinFoc} shows in particular that $\sigma(\Phi)$ is closed under extraction. In fact, if $\phi_{i}$ is any sequence of elements of $\Phi$, and $I = \bigvee_{i=1}^\infty \phi_i$, then we may define $\psi_{i} = \vee_{k=1}^{i} \phi_{k} \in \Phi$, such that $\psi_{k}$ for $k=1,2,\ldots$ is a monotone sequence and $I = \bigvee_{i=1}^{\infty} \phi_{i} = \bigsqcup_{i=1}^{\infty} \psi_{i}$. So, for $I \in \sigma(\Psi)$ and any $x \in Q$ by Theorem \ref{th:ComJoinFoc}
\begin{eqnarray} \label{eq:Continuity2}
\epsilon_x(I) = \bigvee_{i=1}^{\infty} \epsilon_x(\psi_{i}),
\end{eqnarray}
where $\epsilon_x(\psi_{i}) \in \Phi$ and hence $\epsilon_x(I) \in \sigma(\Phi)$ by Theorem \ref{th:CharSigmaSet}. As a $\sigma$-closed set, $\sigma(\Phi)$ is closed under combination and contains the null and unit element. Therefore $\sigma(\Phi)$ is itself an information algebra, a subalgebra of $\mathcal{R}_{\Phi}$. Since it is closed under combination (i.e. join) of countable sets, contains $0$ and $1$, and satisfies condition (\ref{eq:Continuity}) it is a $\sigma$-\textit{information algebra}, the $\sigma$-algebra induced by $\Phi$. 

A particular and import case of such a construction is $\sigma(\Phi_{f})$ in a \textit{compact} information algebra. Due to Theorem \ref{th:IdCompFiniteEl}, this can be reduced to the situation of ideal completion, described above.

%Note that if $(\Phi,D)$ is already a $\sigma$-information algebra, then it equals $(\sigma(\Phi),D)$.

%????This is somewhat confuse: 
%????
It should be noted however that $\Phi$ is embedded into the ideal completion $I_{\Phi}$ only by a homomorphism $\phi \mapsto \downarrow\!\phi$, perserving \textit{finite} combination only. Thus, if $\phi_{1},\phi_{2},\ldots$ is a countable set of elements of $\Phi$ and $I = \bigvee_{i=1}^{\infty} \phi_{i}$, then $I$ is not in $\Phi$. 

\begin{example} \textbf{Algebra of Borel Sets.}
The Borel sets $\mathcal{B}$ in $\mathbb{R}^{n}$ form a Boolean $\sigma$-algebra and the cylindrification $\sigma_s(B)$ relative to subsets $s$ of the index set $I = \{1,\ldots,n\}$ of any Borel $B$ set is a Borel set. We take intersection as combination, hence join, under the information order. Then $(\mathcal{B},\cap,\emptyset,\mathbb{R}^n;\Sigma)$, with $\Sigma = \{\sigma_s:s \subseteq I\}$, is an information algebra, a subalgebra of the algebra of all subsets of $\mathbb{R}^{n}$. Further, the countable combination condition of a $\sigma$-information algebra is satisfied. It remains to verify the continuity of extraction. Consider a sequence $B_{1} \supseteq B_{2} \supseteq \ldots$. Assume $\cap_{i} B_{i} \not= \emptyset$. In extension of Lemma \ref{le:propSatOp}, we show that $\sigma_s(\bigcap_i B_i) = \bigcap_i \sigma_s(B_i)$. First, $\bigcap_{i} B_{i} \subseteq B_{i}$ implies $\sigma_s(\bigcap_{i} B_{i}) \subseteq \bigcap_i \sigma_s(B_{i})$. Define $x \equiv_{s} y$ for $x,y \in \mathbb{R}^n$ if the projections $x[s]$ and $y[s]$ coincide (compare Section \ref{sec:SetAlg}). Select an element $x \in \bigcap_{i} \sigma_s(B_{i})$ (assuming this intersection nonempty). Then $x[s] = y_i$ for some tuple $y_i \in B_i$ for every $i$. But since we assume $\bigcap_i B_i \not= \emptyset$, there is a $y \in \bigcap_i B_i$ and $y_i = y[s]$ for all $i$, hence $x \equiv_s y$ and so $x \in \sigma_s(\bigcap_i B_i)$. Therefore, $\sigma_s(\bigcap_{i} B_{i}) = \bigcap_{i}\sigma_s (B_{i})$ and this is the continuity of extraction. 

Other, similar examples of a $\sigma$-information algebra are provided by closed or convex sets in $\mathbb{R}^{n}$.
\end{example}

Consider simple random variables as defined as in Section \ref{subsec:SimpleRanMaps}. We may define a random mapping $\Gamma : \Omega \rightarrow I_{\Phi}$ from a countable family of simple random variables $\Delta_{i}$ by
\begin{eqnarray}
\Gamma(\omega) = \bigvee_{i=1}^{\infty} \Delta_{i}(\omega).
\nonumber
\end{eqnarray} 
We call such a random mapping $\Gamma$ a \textit{proper random variable} in the information algebra $\Phi$. Note that its values are ideals of $\Phi$. In the case of a compact information algebra $\Phi$, the values of the simple random variables are considered to be finite, that is to be in $\Phi_{f}$ and then $\Gamma(\omega)$ may be comnsidered as an element of $\Phi$, since $(\Phi,\leq)$ is a complete lattice.

Let now $\mathcal{R}_{\sigma}$ be the family of proper random variables in the algebra $\Phi$. 

\begin{lemma} \label{le:MonApproxRV}
A proper random variable $\Gamma$ is always the supremum of a monotone increasing sequence $\Delta_{1} \leq \Delta_{2} \leq \ldots$ of simple random variables, such that for all $\omega \in \Omega$,
\begin{eqnarray}
\Gamma(\omega) = \bigvee_{i=1}^{\infty} \Delta_{i}(\omega).
\nonumber
\end{eqnarray}
\end{lemma}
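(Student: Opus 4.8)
The plan is to take the countable family $\{\Delta_i : i \in \mathbb{N}\}$ of simple random variables that defines $\Gamma$ and replace it by the sequence of its \emph{partial combinations}. Concretely, I would set $\Delta_i' = \Delta_1 \cdot \Delta_2 \cdots \Delta_i$ for each $i \geq 1$. Since $(\mathcal{R}_{s},D;\leq,\bot,\cdot,\epsilon)$ is an information algebra, hence closed under (finite) combination, each $\Delta_i'$ is again a simple random variable in $\mathcal{R}_{s}$.

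First I would verify monotonicity. Because $\Delta_{i+1}' = \Delta_i' \cdot \Delta_{i+1}$ and combination produces a more informative element (the property $\phi \leq \phi \cdot \psi$ of the idempotent information order, established in Section \ref{subsec:IdempotCase}), we obtain $\Delta_i' \leq \Delta_{i+1}'$ in $\mathcal{R}_{s}$. Thus $\Delta_1' \leq \Delta_2' \leq \cdots$ is a monotone increasing sequence of simple random variables, as required.

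Next, since combination of simple random variables is defined point-wise, for every $\omega \in \Omega$ we have $\Delta_i'(\omega) = \Delta_1(\omega) \cdots \Delta_i(\omega) = \bigvee_{j=1}^{i} \Delta_j(\omega)$, the finite combination (join) of the first $i$ values, taken in $\Psi$. Taking the supremum over $i$ in the complete lattice $I_{\Psi}$ and using the associativity of arbitrary joins, I would conclude
\begin{eqnarray*}
\bigvee_{i=1}^{\infty} \Delta_i'(\omega) = \bigvee_{i=1}^{\infty} \bigvee_{j=1}^{i} \Delta_j(\omega) = \bigvee_{j=1}^{\infty} \Delta_j(\omega) = \Gamma(\omega),
\end{eqnarray*}
which is exactly the claimed representation of $\Gamma$ as the point-wise supremum of the monotone sequence $\Delta_1' \leq \Delta_2' \leq \cdots$.

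The argument involves no genuine obstacle; it is essentially the observation that any countable join in a complete lattice is realised as the supremum of the monotone sequence of its finite partial joins, combined with the fact that these finite joins remain simple random variables. The only point requiring a little care is keeping the two layers straight: the finite combinations $\Delta_i'$ and their point-wise values live in $\mathcal{R}_{s}$ and $\Psi$ respectively, whereas the limiting supremum is formed in the ideal completion $I_{\Psi}$. The passage between them is justified precisely by the point-wise definition of combination together with the associativity of joins in $I_{\Psi}$, so no appeal to continuity of extraction or to compactness is needed here.
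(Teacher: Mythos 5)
Your proof is correct and takes essentially the same route as the paper: both replace the defining family by its sequence of finite partial combinations (which, by idempotency, are exactly the partial joins $\vee_{j=1}^{i}\Delta_j$), observe that these are simple, monotone increasing, and have the same point-wise supremum. The only cosmetic difference is that the paper closes the argument with a two-sided inequality ($\Gamma(\omega)\leq\bigvee_i\Delta_i(\omega)$ and conversely) rather than invoking associativity of countable joins in $I_{\Psi}$, but these are the same standard fact.
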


\begin{proof}
If $\Gamma$ is a random variable, then $\Gamma(\omega) = \bigvee_{i=1}^{\infty} \Delta'_{i}(\omega)$ for some sequence $\Delta'_{i}$ of simple random variables. Define
\begin{eqnarray}
\Delta_{i} = \vee_{j=1}^{i} \Delta'_{j}.
\nonumber
\end{eqnarray}
Then each $\Delta_{i}$ is a simple random variable, $i = 1,2,\ldots$ and $\Delta_{1} \leq \Delta_{2} \leq \ldots$. From $\Delta'_{i} \leq \Delta_{i}$, we conclude that $\Gamma(\omega) = \bigvee_{i=1}^{\infty} \Delta'_{i}(\omega) \leq \bigvee_{i=1}^{\infty} \Delta_{i}(\omega)$. On the other hand, $\Delta_{i}(\omega) \leq \Gamma(\omega)$, hence $\bigvee_{i=1}^{\infty} \Delta_{i}(\omega) \leq \Gamma(\omega)$, such that finally $\Gamma(\omega)  = \bigvee_{i=1}^{\infty} \Delta_{i}(\omega)$.
\end{proof}

Proper random variables are random mappings and as such can be combined and extracted point-wise in the ideal completion $I_{\Phi}$:
\begin{enumerate}
\item \textit{Combination:} 
$(\Gamma_{1} \cdot \Gamma_{2})(\omega) = \Gamma_{1}(\omega) \cdot \Gamma_{2}(\omega)$,
\item \textit{Extraction:}
$\epsilon_x(\Gamma)(\omega) = \epsilon_x(\Gamma(\omega))$.
\end{enumerate}
Note that the random maps $0(\omega) = 0$ and $1(\omega) = 1$ are the null and unit element of combination. We have to verify that the resulting random mappings still belong to $\mathcal{R}_{\sigma}$, that is are \textit{proper random variables}. So, let
\begin{eqnarray}
\Gamma_{1} = \bigvee_{i=1}^{\infty} \Delta_{1,i}, \quad \Gamma_{2} = \bigvee_{i=1}^{\infty} \Delta_{2,i}.
\nonumber
\end{eqnarray} 
Then we obtain, using associativity of the supremum
\begin{eqnarray*}
\lefteqn{(\Gamma_1 \cdot \Gamma_2)(\omega) = (\Gamma_{1}\vee \Gamma_{2})(\omega)} \\
&&=\Gamma_{1}(\omega) \vee \Gamma_{2}(\omega)
= (\bigvee_{i=1}^{\infty} \Delta_{1,i}(\omega)) \vee (\bigvee_{i=1}^{\infty} \Delta_{2,i}(\omega))
\\
&&=\bigvee_{i=1}^{\infty} (\Delta_{1,i}(\omega) \vee \Delta_{2,i}(\omega))
= \bigvee_{i=1}^{\infty} (\Delta_{1,i} \vee \Delta_{2,i})(\omega).
\end{eqnarray*} 
Since $\Delta_{1,i} \vee \Delta_{2,i} \in \mathcal{R}_{s}$, this proves that $\Gamma_{1}\vee \Gamma_{2} \in \mathcal{R}_{\sigma}$. Note then that, as usual, $\Gamma_{1} \leq \Gamma_{2}$ if and only if $\Gamma_{1}(\omega) \leq \Gamma_{2}(\omega)$ for all $\omega \in \Omega$, since random variables are random mappings.

Further, let
\begin{eqnarray}
\Gamma(\omega) = \bigvee_{i=1}^{\infty} \Delta_{i}(\omega),
\nonumber
\end{eqnarray}
where $\Delta_{i}$ is an increasing sequence of simple random variables (see Lemma \ref{le:MonApproxRV}). Then, by the continuity of extraction in a compact information algebra (Theorem \ref{th:ContOfExtr})
\begin{eqnarray}
\epsilon_x(\Gamma)(\omega)
= \epsilon_x(\Gamma(\omega)) = \epsilon_x(\bigsqcup_{i=1}^{\infty} \Delta_{i}(\omega))
= \bigsqcup_{i=1}^{\infty} \epsilon_x(\Delta_{i}(\omega))
= \bigsqcup_{i=1}^{\infty} \epsilon_x(\Delta_{i})(\omega).
\nonumber
\end{eqnarray} 
Again, if  $\Delta_{i}$ are simple random variables, then so are the $\epsilon_x(\Delta_{i})$, therefore $\epsilon_x(\Gamma)$ is indeed a proper random variable.

We expect $(\mathcal{R}_{\sigma},\cdot,0,1;E)$, with $E = \{\epsilon_x:x \in Q\}$ where $\epsilon_x$ are extraction operators in the ideal completion $I_\Phi$, to form an information algebra, even a $\sigma$-algebra. This is indeed true. We use the following lemma to prove this statement:

\begin{lemma} \label{le:ContOfRV}
Assume $\Gamma_{i} \in \mathcal{R}_{\sigma}$ for $i = 1,2,\ldots$ to be proper random variables. Then $\bigvee_{i=1}^{n} \Gamma_{i}$ exists in the information algebra $\mathcal{R}_{I_{\Phi}}$ of random mappings into $I_{\Phi}$, and for all $\omega \in \Omega$,
\begin{eqnarray}
\left( \bigvee_{i=1}^{\infty} \Gamma_{i} \right)(\omega) = \bigvee_{i=1}^{\infty} \Gamma_{i}(\omega)
\nonumber
\end{eqnarray} 
\end{lemma}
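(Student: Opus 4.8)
The plan is to produce the candidate supremum explicitly as the pointwise join and then verify it is the least upper bound, thereby reducing the whole statement to the order structure of the value lattice. First I would recall that the ideal completion $I_{\Psi}$ is a $\cap$-system and hence a complete lattice (Section \ref{subsec:IdealCompl}), so that the join $\bigvee_{i=1}^{\infty} \Gamma_i(\omega)$ of a countable family exists in $I_{\Psi}$ for every fixed $\omega$. I would then define the mapping $\Lambda : \Omega \to I_{\Psi}$ by $\Lambda(\omega) = \bigvee_{i=1}^{\infty} \Gamma_i(\omega)$. Since a random mapping into $I_{\Psi}$ is by definition simply a function $\Omega \to I_{\Psi}$, with no measurability requirement, $\Lambda$ is automatically a random mapping into $I_{\Psi}$; this already disposes of the existence of the candidate.

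Next I would invoke the characterisation of the information order on random mappings established in Section \ref{secRandMaps}: for random mappings $\Gamma'$ and $\Gamma$ into $I_{\Psi}$ one has $\Gamma' \leq \Gamma$ if and only if $\Gamma'(\omega) \leq \Gamma(\omega)$ for all $\omega \in \Omega$, because combination is defined pointwise. With this in hand both halves of the supremum property become pointwise statements in $I_{\Psi}$. For the upper-bound half, fix $j$; then $\Gamma_j(\omega) \leq \bigvee_{i=1}^{\infty} \Gamma_i(\omega) = \Lambda(\omega)$ for every $\omega$, since the join in $I_{\Psi}$ dominates each of its members, whence $\Gamma_j \leq \Lambda$. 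For the least-upper-bound half, let $\Xi$ be any random mapping with $\Gamma_j \leq \Xi$ for all $j$; then $\Gamma_j(\omega) \leq \Xi(\omega)$ for all $j$ and all $\omega$, so by the least-upper-bound property of the join in the complete lattice $I_{\Psi}$ we obtain $\Lambda(\omega) = \bigvee_{i=1}^{\infty} \Gamma_i(\omega) \leq \Xi(\omega)$ for every $\omega$, i.e. $\Lambda \leq \Xi$. Hence $\Lambda$ is the supremum $\bigvee_{i=1}^{\infty} \Gamma_i$ in the algebra of random mappings, which simultaneously establishes that the join exists and that it is computed pointwise, namely $(\bigvee_{i=1}^{\infty} \Gamma_i)(\omega) = \Lambda(\omega) = \bigvee_{i=1}^{\infty} \Gamma_i(\omega)$.

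The argument is essentially a transfer of completeness from the value lattice to the function space, so there is no serious combinatorial obstacle. The one genuine input, which I would flag as carrying the real content, is the completeness of $I_{\Psi}$: finite joins in an idempotent information algebra come for free from combination, but the countable join $\bigvee_{i=1}^{\infty} \Gamma_i(\omega)$ only makes sense because $I_{\Psi}$ is complete, and it is precisely this that legitimises the pointwise construction of $\Lambda$. I would therefore cite the $\cap$-system argument of Section \ref{subsec:IdealCompl} (equivalently the algebraicity of the ideal completion, Theorem \ref{th:CompIdealCompl}) rather than treat completeness as self-evident. Note finally that the hypothesis that the $\Gamma_i$ lie in $\mathcal{R}_{\sigma}$ is not used: the statement and proof hold for arbitrary random mappings into $I_{\Psi}$, the restriction to random variables merely reflecting the context in which the lemma is later applied.
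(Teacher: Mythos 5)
Your proof is correct and follows essentially the same route as the paper: the paper likewise defines the candidate $\eta(\omega)=\bigvee_{i=1}^{\infty}\Gamma_{i}(\omega)$ pointwise and verifies it is an upper bound and the least upper bound via the pointwise order on random mappings. Your additional remarks (that the completeness of $I_{\Psi}$ is the one genuine input, and that the hypothesis $\Gamma_{i}\in\mathcal{R}_{\sigma}$ is never used) are accurate and consistent with the paper's argument, which also uses only that the $\Gamma_{i}$ are random mappings into $I_{\Psi}$.
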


\begin{proof}
Consider the random mapping $\eta$ defined by $\eta(\omega) = \bigvee_{i=1}^{\infty} \Gamma_{i}(\omega)$. Since $\Gamma_{i}(\omega) \leq  \bigvee_{i=1}^{\infty} \Gamma_{i}(\omega)$, it follows that $\Gamma_{i} \leq \eta$, hence $\eta$ is an upper bound of the random mappings $\Gamma_{i}$. If $\chi$ is another upper bound, then $\Gamma_{i}(\omega) \leq \chi(\omega)$, hence $\eta(\omega) = \bigvee_{i=1}^{\infty} \Gamma_{i}(\omega) \leq \chi(\omega)$, therefore $\eta \leq \chi$. Thus, $\eta$ is the supremum of the random mappings $\Gamma_{i}$.
\end{proof}

\begin{theorem} \label{th:SigmaInfAlgOfRV}
The system $(\mathcal{R}_{\sigma},\cdot,0,1;E)$ of proper random variables in the information algebra $\Phi$, with combination and extraction defined point-wise as above forms a  $\sigma$-information algebra. 
\end{theorem}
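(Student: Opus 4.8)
The plan is to exhibit $(\mathcal{R}_{\sigma},D;\leq,\bot,\cdot,\epsilon)$ as a subalgebra of the idempotent generalised information algebra of random mappings into the ideal completion $I_{\Psi}$ (which equals $(\mathcal{R}_{I_{\Psi}},D;\leq,\bot,\cdot,\epsilon)$, as noted earlier), and then to verify the two extra closure conditions that distinguish a $\sigma$-information algebra: closure under countable combination and continuity of extraction. Since all the operational axioms A0, A1, A3--A6 are universally quantified identities that already hold in $(\mathcal{R}_{I_{\Psi}},D;\leq,\bot,\cdot,\epsilon)$, it suffices to show that $\mathcal{R}_{\sigma}$ is closed under combination and extraction and contains the distinguished elements; then these axioms are inherited by restriction.

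First I would record that the closure of $\mathcal{R}_{\sigma}$ under combination and extraction has already been established just above the statement, via the point-wise computations using associativity of the supremum and continuity of extraction. The unit $U$ with $U(\omega)=1$ and the null $N$ with $N(\omega)=0$ are constant, hence simple random variables, and therefore lie in $\mathcal{R}_{\sigma}$ (each being the trivial countable join of itself). Thus $\mathcal{R}_{\sigma}$ is a subalgebra of the random-mapping algebra, and $(\mathcal{R}_{\sigma},D;\leq,\bot,\cdot,\epsilon)$ is an idempotent generalised information algebra.

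Next I would treat closure under countable combination. Given random variables $\Gamma_{i}\in\mathcal{R}_{\sigma}$, $i=1,2,\ldots$, Lemma~\ref{le:ContOfRV} guarantees that $\bigvee_{i=1}^{\infty}\Gamma_{i}$ exists and is computed point-wise, $\left(\bigvee_{i=1}^{\infty}\Gamma_{i}\right)(\omega)=\bigvee_{i=1}^{\infty}\Gamma_{i}(\omega)$. Writing each $\Gamma_{i}(\omega)=\bigvee_{j=1}^{\infty}\Delta_{i,j}(\omega)$ with $\Delta_{i,j}\in\mathcal{R}_{s}$, associativity of joins in the complete lattice $I_{\Psi}$ lets me re-index the double countable join over the countable set $\{(i,j)\}$ as a single countable join of simple random variables, exactly as in the proof of Theorem~\ref{th:CharSigmaSet}. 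Hence $\bigvee_{i=1}^{\infty}\Gamma_{i}$ is again a random variable; this re-indexing is the one genuinely substantive step, though it is routine once Lemma~\ref{le:ContOfRV} is in hand.

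Finally, for continuity of extraction I would take a monotone sequence $\Gamma_{1}\leq\Gamma_{2}\leq\cdots$ in $\mathcal{R}_{\sigma}$ and fix $x\in D$. Using the point-wise definition of extraction, Lemma~\ref{le:ContOfRV} (applied to $\{\Gamma_{i}\}$ and then to $\{\epsilon_{x}(\Gamma_{i})\}$), and the fact that extraction commutes with directed joins in the algebraic, hence compact, algebra $I_{\Psi}$ (Theorem~\ref{th:CommExtrJoin}, a monotone sequence being directed), I obtain
\begin{eqnarray*}
\epsilon_{x}\left(\bigvee_{i=1}^{\infty}\Gamma_{i}\right)(\omega)
&=& \epsilon_{x}\!\left(\left(\bigvee_{i=1}^{\infty}\Gamma_{i}\right)(\omega)\right)
= \epsilon_{x}\!\left(\bigvee_{i=1}^{\infty}\Gamma_{i}(\omega)\right) \\
&=& \bigvee_{i=1}^{\infty}\epsilon_{x}(\Gamma_{i}(\omega))
= \left(\bigvee_{i=1}^{\infty}\epsilon_{x}(\Gamma_{i})\right)(\omega)
\end{eqnarray*}
for every $\omega\in\Omega$, whence $\epsilon_{x}(\bigvee_{i=1}^{\infty}\Gamma_{i})=\bigvee_{i=1}^{\infty}\epsilon_{x}(\Gamma_{i})$. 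Together with the previous paragraph this establishes both defining properties, completing the proof that $(\mathcal{R}_{\sigma},D;\leq,\bot,\cdot,\epsilon)$ is a $\sigma$-information algebra. I do not expect a serious obstacle here: the whole argument reduces to transporting the completeness and continuity properties of $I_{\Psi}$ through the point-wise representation of random variables, the only care being to confirm that all suprema are taken in $I_{\Psi}$ and that Theorem~\ref{th:CommExtrJoin} is indeed applicable to $I_{\Psi}$.
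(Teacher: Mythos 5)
Your proof is correct and follows essentially the same route as the paper: subalgebra of random mappings plus the two $\sigma$-closure properties, with countable combination handled by the re-indexing of Theorem~\ref{th:CharSigmaSet} via Lemma~\ref{le:ContOfRV}, and continuity of extraction by a point-wise computation. Your only deviation is cosmetic: you invoke Theorem~\ref{th:CommExtrJoin} for the algebraic algebra $I_{\Psi}$ directly, whereas the paper cites the continuity of extraction in $(\sigma(\Psi),D;\leq,\bot,\cdot,\epsilon)$, which was itself obtained from that same theorem.
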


\begin{proof}
As we have seen above, $\mathcal{R}_{\sigma}$ is closed under combination (join) and extraction. The bottom element, the mapping $1(\omega) = 1$ as well as the top element $0(\omega) = 0$ belong also to $\mathcal{R}_{\sigma}$. So $(\mathcal{R}_{\sigma},D;\leq,\bot,\cdot,\epsilon)$ is a subalgebra of the algebra of random mappings $\mathcal{R}_{I_{\Phi}}$, hence an information algebra.

We show that $\mathcal{R}_{\sigma}$ is $\sigma$-closed, that is, if $\Gamma_{i} \in \mathcal{R}_{\sigma}$ for $i = 1,2,\ldots$, then $\bigvee_{i=1}^{\infty} \Gamma_{i} \in \mathcal{R}_{\sigma}$. Let
\begin{eqnarray}
\Gamma_{j}(\omega) = \bigvee_{i=1}^{\infty} \Delta_{j,i}(\omega), \textrm{ for}\ j=1,2,\ldots,
\nonumber
\end{eqnarray}
where $\Delta_{j,i}$ are simple random variables, and define the random mapping $\Gamma$, using Lemma \ref{le:ContOfRV}, by
\begin{eqnarray}
\Gamma(\omega) = \left( \bigvee_{i=1}^{n} \Gamma_{i} \right)(\omega) = \bigvee_{j=1}^{\infty} \Gamma_{j}(\omega) 
= \bigvee_{j=1}^{\infty} \left( \bigvee_{i=1}^{\infty} \Delta_{j,i}(\omega) \right).
\nonumber
\end{eqnarray}
As in the proof of Theorem \ref{th:CharSigmaSet} define the sets $K_{i} = \{(h,j):1 \leq h,j \leq i\}$. Then, as there, we obtain
\begin{eqnarray}
\Gamma(\omega) = \bigvee_{i=1}^{\infty} \left( \vee_{(h,j) \in K_{i}} \Delta_{h,j}(\omega) \right).
\nonumber
\end{eqnarray}
Since $\vee_{(h,j) \in K_{i}} \Delta_{h,j}(\omega)$ defines simple random variables, the random mapping $\Gamma$ is indeed a peroper random variable and $\mathcal{R}_{\sigma}$ is closed under countable combination. 

It remains to verify the continuity of extraction. Assume $\Gamma_{1} \leq \Gamma_{2} \leq \ldots$ be a monotone sequence of proper random variables in $\mathcal{R}_{\sigma}$ and $x \in Q$. Then, the continuity of extraction in $\mathcal{R}_{\sigma}$ follows from this property in $\sigma(\Phi)$, using Lemma \ref{le:ContOfRV} and the continuity of extraction in $\sigma(\Phi)$, as follows:
\begin{eqnarray*}
\lefteqn{\epsilon_x(\bigvee_{i=1}^{\infty} \Gamma_{i})(\omega)} \\
&&= \epsilon_x((\bigvee_{i=1}^{\infty} \Gamma_{i})(\omega))
= \epsilon_x(\bigvee_{i=1}^{\infty} \Gamma_{i}(\omega))
= \bigvee_{i=1}^{\infty} \epsilon_x(\Gamma_{i}(\omega))
\\
&&= \bigvee_{i=1}^{\infty} \epsilon_x(\Gamma_{i})(\omega)
= (\bigvee_{i=1}^{\infty} \epsilon_x(\Gamma_{i}))(\omega).
\end{eqnarray*}
So, we see that $\epsilon_x(\bigvee_{i=1}^{\infty} \Gamma_{i}) = \bigvee_{i=1}^{\infty} \epsilon_x(\Gamma_{i})$. This concludes the proof.
\end{proof}

Certainly, $\mathcal{R}_{s}$ is a subalgebra of $\mathcal{R}_{\sigma}$. Within the algebra $\mathcal{R}_{\sigma}$, each element of $\mathcal{R}_{\sigma}$ is the supremum of the simple random variables it dominates as the following lemma shows. 

\begin{lemma} \label{le:RVasGenRV}
Let $\Gamma \in \mathcal{R}_{\sigma}$, defined by
\begin{eqnarray}
\Gamma(\omega) = \bigvee_{i=1}^{\infty} \Delta_{i}(\omega).
\nonumber
\end{eqnarray}
Then, in the information algebra $\mathcal{R}_{\sigma}$
\begin{eqnarray} \label{eq:RanVarSup}
\Gamma = \bigvee_{i=1}^{\infty} \Delta_{i} = \bigvee \{\Delta: \Delta \in \mathcal{R}_{s}, \Delta \leq \Gamma\}.
\end{eqnarray}
\end{lemma}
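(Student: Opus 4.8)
The plan is to establish the two equalities in turn, exploiting throughout that the order on $\mathcal{R}_{\sigma}$ is inherited pointwise from the random mappings into $I_{\Psi}$ and that $\mathcal{R}_{\sigma}$ is closed under countable joins (Theorem \ref{th:SigmaInfAlgOfRV}), so that a countable supremum computed inside $\mathcal{R}_{\sigma}$ coincides with the ambient one.

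First I would prove $\Gamma = \bigvee_{i=1}^{\infty} \Delta_{i}$. Simple random variables belong to $\mathcal{R}_{\sigma}$, so Lemma \ref{le:ContOfRV} applies with $\Gamma_{i} = \Delta_{i}$ and yields both that the join $\bigvee_{i=1}^{\infty} \Delta_{i}$ exists and that $(\bigvee_{i=1}^{\infty} \Delta_{i})(\omega) = \bigvee_{i=1}^{\infty} \Delta_{i}(\omega)$ for every $\omega \in \Omega$. By the defining equation of $\Gamma$ the right-hand side equals $\Gamma(\omega)$. Since random variables are random mappings and equality of random mappings is pointwise, the two objects coincide, giving $\Gamma = \bigvee_{i=1}^{\infty} \Delta_{i}$.

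Next I would handle the second equality by showing directly that $\Gamma$ is the least upper bound of the set $S = \{\Delta \in \mathcal{R}_{s} : \Delta \leq \Gamma\}$; proceeding this way avoids having to argue separately that $\bigvee S$ exists, which is convenient since $S$ need not be countable. On the one hand $\Gamma$ is an upper bound of $S$ by the very definition of $S$. On the other hand each $\Delta_{i}$ lies in $S$: from $\Delta_{i}(\omega) \leq \bigvee_{j=1}^{\infty} \Delta_{j}(\omega) = \Gamma(\omega)$ and the pointwise order we obtain $\Delta_{i} \leq \Gamma$. Hence any upper bound $\chi$ of $S$ dominates every $\Delta_{i}$, and therefore $\chi \geq \bigvee_{i=1}^{\infty} \Delta_{i} = \Gamma$ by the first equality. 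This shows $\Gamma$ is the least upper bound of $S$, that is $\bigvee S = \Gamma = \bigvee_{i=1}^{\infty} \Delta_{i}$.

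There is little genuine difficulty here; the one point I must keep honest is the identification of the suprema. I need to ensure that the countable join $\bigvee_{i=1}^{\infty} \Delta_{i}$ is computed within $\mathcal{R}_{\sigma}$ and agrees with the ambient pointwise join, which is exactly what $\sigma$-closedness (Theorem \ref{th:SigmaInfAlgOfRV}) together with Lemma \ref{le:ContOfRV} guarantees, and that every occurrence of ``upper bound'' and ``equality'' refers to the pointwise order on random mappings. Granting these, the remainder is a routine least-upper-bound verification.
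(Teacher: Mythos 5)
Your proposal is correct and follows essentially the same route as the paper's own proof: the first equality comes from the pointwise computation of countable joins (Lemma \ref{le:ContOfRV}), and the second from observing that $\Gamma$ trivially bounds $\{\Delta \in \mathcal{R}_{s} : \Delta \leq \Gamma\}$ while any upper bound of that set dominates each $\Delta_{i}$ and hence their supremum $\Gamma$. You merely make explicit two points the paper leaves implicit, namely the appeal to Lemma \ref{le:ContOfRV} for the first equality and the fact that each $\Delta_{i}$ lies in the set being bounded.
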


\begin{proof}
The first equality in (\ref{eq:RanVarSup}) follows directly from the definition of $\Gamma$. Trivially, $\Gamma$ is an upper bound of the set $\{\Delta:\Delta \leq \Gamma\}$. If $\Gamma'$ is another upper bound of this set, then it is also an upper bound of the $\Delta_{i}$, hence $\Gamma \leq \Gamma'$. Therefore, $\Gamma$ is the least upper bound of the set $\{\Delta:\Delta \leq \Gamma\}$.
\end{proof}
This lemma shows that a proper random variable is also random variable.

We now take the $\sigma$-closure of $\mathcal{R}_{s}$ in the algebraic information algebra $I_{\mathcal{R}_\Phi}$ of  random variables. According to Theorem \ref{th:CharSigmaSet}, elements of $\sigma({R}_{s})$ are defined as
\begin{eqnarray} 
\Gamma = \bigvee_{i=1}^{\infty} \Delta_{i}, \textrm{ with}\ \Delta_{i} \in \mathcal{R}_{s}, \forall i = 1,2,\ldots.
\nonumber
\end{eqnarray}
Then $\sigma(\mathcal{R}_{s})$ is a $\sigma$-information algebra, containing $\mathcal{R}_{s}$, i.e. the simple random variables. To $\Gamma$ we associate a random mapping, just as with random variables, defined by
\begin{eqnarray} 
\Gamma(\omega) = \bigvee_{i=1}^{\infty} \Delta_{i}(\omega), \textrm{ with}\ \Delta_{i} \in \mathcal{R}_{s}, \forall i = 1,2,\ldots.
\nonumber
\end{eqnarray}
Note that $\Gamma(\omega) \in \sigma(\Phi)$ by Theorem \ref{th:CharSigmaSet}. Therefore, the elements of $\sigma(\mathcal{R}_{s})$ are \textit{random variables} with values in the information algebra $(\sigma(\Phi),D;\leq,\bot,\cdot,\epsilon)$. This shows the equivalence of taking the $\sigma$-closure of $\mathcal{R}_{s}$ and the definition of proper random variables as suprema of sequences of simple random variables.

%%%%%%%%%%%%%%%%%%%%%%%%%%%%%%%%%%%%%%%%%%%%%%%%%%%%%%%%%%%

\section{Allocations of probability} \label{subsec:AllocProb}

In Section \ref{subsec:RanMaps} we have introduced the concept of an \textit{allocation of probability} (a.o.p) as a means to extend the degrees of support of a random mapping beyond the measurable elements $\phi$, that is, the elements for which $s_{\Gamma}(\phi) \in \mathcal{A}$. These allocations of probability play an important role in the theory of uncertain information. Therefore, we start here with a study of this concept, first independently of its relation to random mappings and random variables. Subsequently we examine the relation between random mappings and their associated allocations of probability. 

Random mappings, and in particular random variables and proper random variables, provide means to model explicitly the mechanisms which generate uncertain information. We refer to \cite{kohlasmonney95,haennikohlas00,kohlas03,kohlasmonney07,poulykohlas11} for more specific applications of this idea. Alternatively, allocations of probability may serve to directly assign beliefs to pieces of information. This is more in
the spirit of a subjective, epistemological description of belief, advocated especially by G. Shafer
\cite{shafer73,shafer76,shafer79}. In this view, allocations of probability are taken as the primitive
elements, rather than random variables or hints. This is the point of view developed in this
section (see also \cite{kohlas97b,kohlas03b}).

We introduce first the concept of an allocation of probability:

\begin{definition} \textbf{Allocation of Probability.}
If $(\Phi;\leq)$ is a bounded join-semilattice and $(\mu,\mathcal{B})$ a probability algebra, then an \textit{allocation of probability (a.o.p)} \index{allocation!of probability} is a
mapping $\rho: \Phi \rightarrow \mathcal{B}$ such that
\begin{itemize}
\item[(A1)] $\rho( 1) = \top$,
\item[(A2)] $\rho(\phi \vee \psi) = \rho(\phi) \wedge \rho(\psi)$.
\end{itemize}
If furthermore $\rho(0) = \bot$ holds, then the allocation is called
normalised \index{allocation!of probability!normalised}\index{normalised!allocation
of probability}.
\end{definition}

We shall apply this definition to domain-free information algebras $(\Phi,\cdot,0,1;E)$, where in the semilattice $(\Phi,\leq)$ join corresponds to combination. (A1) says then that the full belief is allocated to the trivial vacuous information. More important is (A2). It says that the belief allocated to a combined information $\phi \cdot \psi$ equals the \textit{common} part of belief $\rho(\phi) \wedge \rho(\psi)$ allocated to both of the two pieces of information $\phi$ and $\psi$ individually. We remind that the a.o.p derived from a random mapping satisfies these two properties (see (\ref{eq:AllocOfProb1})). Note, that if $\phi \leq \psi$, that is, $\phi \vee \psi = \psi$, then $\rho(\phi \vee \psi) = \rho(\phi) \wedge \rho(\psi) = \rho(\psi)$, hence $\rho(\psi) \leq \rho(\phi)$. A particular a.o.p is defined by $\nu(\phi) = \bot$, unless $\phi = 1$, in which case $\nu( 1) = \top$. This is called the \textit{vacuous allocation}; no belief is allocated to a non-trivial piece of information. It is associated with the vacuous information represented by the random mapping $\Gamma(\omega) =  1$ for all $\omega \in \Omega$. By $\zeta(\phi) = \top$ for all $\phi \in \Phi$ another a.o.p is defined, which obviously dominates any other a.o.p. It represents the \textit{contradictory allocation}.

We may think of an allocation of probability as the description of a body of belief relative to pieces of information in an information algebra $(\Phi,\cdot,0,1;E)$ obtained from a source of information. Two (or more) distinct sources of information will lead to the definition of two (or more) corresponding allocations of probability. Thus, in a general setting, let $A_{\Phi}$ be the set of all allocations of probability on $\Phi$ in $(\mathcal{B},\mu)$. Select two allocations $\rho_{i}, i = 1,2,$ from $A_{\Phi}$. How can
they be combined in order to synthesise the two bodies of information they represent into a
single, aggregated body? 

The basic idea is as follows: Consider a piece of information $\phi$ in $\Phi$. If now
$\phi_{1}$ and $\phi_{2}$ are two other pieces of information in $\Phi$, such that $\phi
\leq \phi_{1}\cdot \phi_{2}$, then the common belief $\rho_{1}(\phi_{1}) \wedge \rho_{2}(\phi_{2})$ allocated to $\phi_{1}$ and to $\phi_{2}$ by the two allocations $\rho_{1}$ and $\rho_{2}$ respectively, is a belief allocated to $\phi$ by the two allocations simultaneously. That is, the total belief $\rho(\phi)$ to be
allocated to $\phi$ by the two allocations $\rho_{1}$ and $\rho_{2}$ together must equal at
least the \textit{common} belief allocated to $\phi_{1}$ and $\phi_{2}$ individually by each of the two
allocations respectively, that is, if $\phi \leq \phi_1 \cdot \phi_2$,
\begin{eqnarray}
\rho(\phi) &\geq& \rho_{1}(\phi_{1}) \wedge \rho_{2}(\phi_{2}).
\end{eqnarray}
In the absence of other information, it seems then reasonable to define the combined belief
in $\phi$, as obtained from the two sources of information, as the least upper bound of all
these implied beliefs, 
\begin{eqnarray} \label {eq:CombOfAoP}
\rho(\phi) &=& \bigvee \{\rho_{1}(\phi_{1}) \wedge \rho_{2}(\phi_{2}):
\phi \leq \phi_{1} \cdot \phi_{2}\}.
\end{eqnarray}
This defines indeed a new allocation of probability:

\begin{theorem} \label{comballoc}
Let $\rho_{1},\rho_{2} \in A_{\Phi}$ be two allocations of probability. The map $\rho : \Phi \rightarrow \mathcal{B}$ as defined by (\ref{eq:CombOfAoP}) is then an allocation of probability.
\end{theorem}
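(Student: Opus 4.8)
The plan is to verify directly that the map $\rho$ defined by (\ref{eq:CombOfAoP}) satisfies the two axioms (A1) and (A2) of an allocation of probability. Both verifications rest on manipulating suprema and infima in the complete Boolean algebra $\mathcal{B}$, using the fact that $\mathcal{B}$ satisfies the countable chain condition and, crucially, the infinite distributive laws that hold in such complete algebras together with Lemma \ref{downward} on directed families. I would carry out (A1) first as a warm-up, then attack (A2), which is the substantial part.

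First I would establish (A1). Setting $\psi = 1$ in (\ref{eq:CombOfAoP}), one decomposition is $1 \leq 1 \cdot 1$, which contributes $\rho_1(1) \wedge \rho_2(1) = \top \wedge \top = \top$ by axiom (A1) for $\rho_1$ and $\rho_2$. Since the supremum over a family containing $\top$ is $\top$, we get $\rho(1) = \top$. This is immediate.

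The main work is (A2), namely showing $\rho(\phi \vee \psi) = \rho(\phi) \wedge \rho(\psi)$. The inclusion $\rho(\phi \vee \psi) \leq \rho(\phi) \wedge \rho(\psi)$ should follow from monotonicity: any decomposition $\phi \vee \psi \leq \psi_1 \cdot \psi_2$ is in particular a decomposition dominating both $\phi$ and $\psi$, so each term $\rho_1(\psi_1) \wedge \rho_2(\psi_2)$ is below both $\rho(\phi)$ and $\rho(\psi)$, hence below their meet. The harder inclusion is $\rho(\phi) \wedge \rho(\psi) \leq \rho(\phi \vee \psi)$. Here I would take a term $\rho_1(\phi_1) \wedge \rho_2(\phi_2)$ from the supremum defining $\rho(\phi)$ (so $\phi \leq \phi_1 \cdot \phi_2$) and a term $\rho_1(\psi_1) \wedge \rho_2(\psi_2)$ from $\rho(\psi)$ (so $\psi \leq \psi_1 \cdot \psi_2$), and meet them. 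Using commutativity of meet in $\mathcal{B}$ and axiom (A2) for $\rho_1$ and $\rho_2$ individually, I would regroup:
\begin{eqnarray*}
\lefteqn{(\rho_1(\phi_1) \wedge \rho_2(\phi_2)) \wedge (\rho_1(\psi_1) \wedge \rho_2(\psi_2))}\\
&& = (\rho_1(\phi_1) \wedge \rho_1(\psi_1)) \wedge (\rho_2(\phi_2) \wedge \rho_2(\psi_2))\\
&& = \rho_1(\phi_1 \cdot \psi_1) \wedge \rho_2(\phi_2 \cdot \psi_2).
\end{eqnarray*}
Since $\phi \vee \psi \leq (\phi_1 \cdot \psi_1) \cdot (\phi_2 \cdot \psi_2)$, this last term is one of the competitors in the supremum defining $\rho(\phi \vee \psi)$, hence is $\leq \rho(\phi \vee \psi)$.

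The obstacle is passing from this pointwise bound on individual pairs of terms to the bound on the full meet of the two suprema. The meet $\rho(\phi) \wedge \rho(\psi)$ is a meet of two suprema, and I cannot naively distribute meet over an infinite join unless I invoke the distributivity available in $\mathcal{B}$. The clean route is to observe that the families $\{\rho_1(\phi_1) \wedge \rho_2(\phi_2) : \phi \leq \phi_1 \cdot \phi_2\}$ and the analogous family for $\psi$ are each \emph{upward directed}: given two decompositions of $\phi$, their componentwise combination (as in the regrouping above) yields a decomposition whose associated term dominates both. By Lemma \ref{downward}, the supremum of an upward directed family is compatible with meet in the sense that $(\bigvee_i a_i) \wedge (\bigvee_j b_j) = \bigvee_{i,j} (a_i \wedge b_j)$ when the families are directed; then each $a_i \wedge b_j \leq \rho(\phi \vee \psi)$ by the computation above, giving the desired inequality. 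Establishing directedness of these families is thus the key lemma I would isolate and prove before concluding, and it is where the semilattice structure of $\Psi$ (closure under $\cdot$) is used essentially.
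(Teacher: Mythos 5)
Your overall strategy coincides with the paper's: (A1) by exhibiting the term $\rho_1(1)\wedge\rho_2(1)=\top$ in the supremum, the easy half of (A2) by monotonicity, and for the reverse inequality the regrouping $(\rho_1(\phi_1)\wedge\rho_2(\phi_2))\wedge(\rho_1(\psi_1)\wedge\rho_2(\psi_2)) = \rho_1(\phi_1\cdot\psi_1)\wedge\rho_2(\phi_2\cdot\psi_2)$ together with the observation that $\phi\vee\psi\leq(\phi_1\cdot\psi_1)\cdot(\phi_2\cdot\psi_2)$. However, the step where you pass from these pairwise bounds to the meet of the two suprema has a genuine gap, in two respects. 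First, the upward directedness you isolate as your ``key lemma'' is not what your own computation shows: combining two decompositions of $\phi$ componentwise yields, by exactly the regrouping identity above, a term equal to the \emph{meet} of the two original terms, hence a common \emph{lower} bound, not an upper bound. So your argument establishes downward, not upward, directedness, and there is no reason the family $\{\rho_1(\phi_1)\wedge\rho_2(\phi_2):\phi\leq\phi_1\cdot\phi_2\}$ should be upward directed in general (allocations are antitone, so enlarging the components shrinks the term). Second, even granting directedness, Lemma \ref{downward} would not deliver what you claim: that lemma asserts the continuity of the measure $\mu$ along directed families, namely $\mu(\bigvee_i b_i)=\sup_i\mu(b_i)$ and $\mu(\bigwedge_i b_i)=\inf_i\mu(b_i)$. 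It is a statement about $\mu$, not a lattice identity, and it does not assert $(\bigvee_i a_i)\wedge(\bigvee_j b_j)=\bigvee_{i,j}(a_i\wedge b_j)$ for directed (or any other) families.

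What actually closes the argument --- and what the paper invokes at this exact point --- is the infinite distributive law valid in every complete Boolean algebra: $c\wedge\bigvee S=\bigvee\{c\wedge s:s\in S\}$ for an \emph{arbitrary} subset $S$ of $\mathcal{B}$. Applying it twice gives $(\bigvee_i a_i)\wedge(\bigvee_j b_j)=\bigvee_{i,j}(a_i\wedge b_j)$ with no directedness hypothesis whatsoever; your pairwise computation then bounds every $a_i\wedge b_j$ by $\rho(\phi\vee\psi)$, and the proof is complete. (The paper phrases the same argument top-down: it restricts the supremum defining $\rho(\phi\vee\psi)$ to decompositions of product form $(\phi'_1\cdot\phi''_1,\phi'_2\cdot\phi''_2)$ and then factors the restricted supremum into $\rho(\phi)\wedge\rho(\psi)$ by distributivity.) So the repair is simple --- replace the appeal to directedness and Lemma \ref{downward} by the distributive law of the complete Boolean algebra $\mathcal{B}$ --- but as written, the crucial step of your proof is unjustified.
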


\begin{proof}
First, we have
\begin{eqnarray}
\rho(1) &= 
&\bigvee \{\rho_{1}(\phi_{1}) \wedge \rho_{2}(\phi_{2}): 1 \leq \phi_{1} \cdot \phi_{2}\}
\nonumber \\
&= &\rho_{1}(1) \wedge \rho_{2}(1) = \top.
\nonumber
\end{eqnarray}
So (A1) is satisfied.

Next, let $\psi_{1},\psi_{2} \in \Phi$. By definition we have
\begin{eqnarray} 
\rho(\psi_{1} \vee \psi_{2}) 
&=& \bigvee \{\rho_{1}(\phi_{1}) \wedge \rho_{2}(\phi_{2}): \psi_{1} \cdot \psi_{2} \leq \phi_{1} \cdot \phi_{2}\}.
\nonumber
\end{eqnarray}
Now, $\psi_{1} \leq \psi_{1} \vee \psi_{2}$ implies that
\begin{eqnarray}
\lefteqn{\bigvee \{\rho_{1}(\phi_{1}) \wedge \rho_{2}(\phi_{2}):
\psi_{1} \cdot \psi_{2} \leq \phi_{1} \cdot \phi_{2}\}}
\nonumber\\
&\leq& \bigvee \{\rho_{1}(\phi_{1}) \wedge \rho_{2}(\phi_{2}):
\psi_{1} \leq \phi_{1} \cdot \phi_{2}\} =\rho_1(\psi_1) \wedge \rho_2(1) =  \rho_1(\psi_1)
\nonumber
\end{eqnarray}
and similarly for $\psi_{2}$. Thus, we have
$\rho(\psi_{1} \vee \psi_{2}) \leq
\rho(\psi_{1}),\rho(\psi_{2})$, that is
$\rho(\psi_{1}  \vee \psi_{2}) \leq \rho(\psi_{1}) \wedge
\rho(\psi_{2})$.

On the other hand,
\begin{eqnarray}
\lefteqn{\{(\phi_{1},\phi_{2}): \psi_{1}  \cdot \psi_{2} \leq
\phi_{1}  \cdot \phi_{2}\}}
\nonumber \\
&\supseteq& \{(\phi_{1},\phi_{2}): \phi_{1} =
\phi'_{1}  \cdot \phi''_{1}, \phi_{2} =
\phi'_{2}  \cdot \phi''_{2}, \psi_{1} \leq \phi'_{1}  \cdot
\phi'_{2},
\psi_{2} \leq \phi''_{1}  \cdot \phi''_{2}\}.
\nonumber
\end{eqnarray}
By the distributive law for complete Boolean algebras we obtain then
\begin{eqnarray}
\lefteqn{\rho(\psi_{1}  \cdot \psi_{2}) }
\nonumber \\
&\geq &\bigvee \{\rho_{1}(\phi'_{1}  \cdot \phi''_{1}) \wedge
\rho_{2}(\phi'_{2}  \cdot \phi''_{2}): \psi_{1} \leq
\phi'_{1}  \cdot \phi'_{2},
\psi_{2} \leq \phi''_{1}  \cdot \phi''_{2}\}
\nonumber \\
&= &\bigvee \{(\rho_{1}(\phi'_{1}) \wedge
\rho_{1}(\phi''_{1}))
\wedge
(\rho_{2}(\phi'_{2}) \wedge \rho_{2}(\phi''_{2})): \psi_{1} \leq
\phi'_{1}  \cdot \phi'_{2},
\psi_{2} \leq \phi''_{1}  \cdot \phi''_{2}\}
\nonumber \\
&= &\left( \bigvee\{\rho_{1}(\phi'_{1}) \wedge
\rho_{2}(\phi'_{2}): \psi_{1} \leq
\phi'_{1}  \cdot \phi'_{2}\} \right) \wedge
\nonumber \\
&&\left( \bigvee\{\rho_{1}(\phi''_{1}) \wedge
\rho_{2}(\phi''_{2}): \psi_{2} \leq
\phi''_{1}  \cdot \phi''_{2}\} \right)
\nonumber \\
&= &\rho(\psi_{1}) \wedge \rho(\psi_{2}).
\end{eqnarray}
This implies finally that $\rho(\psi_{1} \vee \psi_{2}) = \rho(\psi_{1}) \wedge
\rho(\psi_{2})$. Thus (A2) holds too and~$\rho$ is indeed an allocation of probability.
\end{proof}

In this way, in the set of allocations of probability $A_{\Phi}$ a binary combination
operation is defined\index{combination!of allocations of probability}. We denote this
operation by $\cdot$. Thus, $\rho$ as defined by (\ref{eq:CombOfAoP}) is written as $\rho = \rho_{1} \cdot \rho_{2}$. The following theorem gives us the elementary properties of this
operation.

\begin{theorem} \label{th:AoPsemilattice}
The combination operation, as defined by (\ref{eq:CombOfAoP}), is commutative, associative,
idempotent and the vacuous allocation is the {\em neutral} element and the contradictory allocation the {\em null} element of this operation.
\end{theorem}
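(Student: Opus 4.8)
The plan is to verify the four asserted properties of the combination operation in turn, working directly from the defining formula (\ref{eq:CombOfAoP}) and using the fact that $\mathcal{B}$ is a complete Boolean algebra, so that the infinite distributive law $b \wedge \bigvee_i c_i = \bigvee_i (b \wedge c_i)$ is available. Commutativity is immediate: the set $\{\rho_1(\psi_1) \wedge \rho_2(\psi_2) : \psi \leq \psi_1 \cdot \psi_2\}$ is symmetric in the roles of the two allocations because $\wedge$ in $\mathcal{B}$ is commutative and $\psi_1 \cdot \psi_2 = \psi_2 \cdot \psi_1$ in the semilattice. For the neutral element, I would compute $(\rho \cdot \nu)(\psi)$ where $\nu$ is the vacuous allocation; since $\nu(\psi_2) = \bot$ unless $\psi_2 = 1$, the only nonzero contributions to the supremum come from $\psi_2 = 1$, in which case the condition $\psi \leq \psi_1 \cdot 1 = \psi_1$ forces the supremum to collapse to $\bigvee\{\rho(\psi_1) : \psi \leq \psi_1\} = \rho(\psi)$ (using antitonicity of $\rho$ and $\psi \leq \psi$). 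Idempotency follows similarly: for $(\rho \cdot \rho)(\psi)$, the term with $\psi_1 = \psi_2 = \psi$ already gives $\rho(\psi)$, while every term $\rho(\psi_1) \wedge \rho(\psi_2)$ with $\psi \leq \psi_1 \cdot \psi_2$ satisfies $\rho(\psi_1) \wedge \rho(\psi_2) = \rho(\psi_1 \cdot \psi_2) \leq \rho(\psi)$ by (A2) and antitonicity, so the supremum equals $\rho(\psi)$ exactly.

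The main work, and the expected obstacle, is associativity. The plan is to show that both $((\rho_1 \cdot \rho_2) \cdot \rho_3)(\psi)$ and $(\rho_1 \cdot (\rho_2 \cdot \rho_3))(\psi)$ equal the symmetric expression
\begin{eqnarray*}
\bigvee \{\rho_1(\psi_1) \wedge \rho_2(\psi_2) \wedge \rho_3(\psi_3) : \psi \leq \psi_1 \cdot \psi_2 \cdot \psi_3\}.
\end{eqnarray*}
This reduces associativity of the combination operation to associativity of the join $\cdot$ in the underlying semilattice together with the complete distributivity of $\mathcal{B}$. For the forward inclusion I would expand $((\rho_1 \cdot \rho_2) \cdot \rho_3)(\psi)$ as $\bigvee\{(\rho_1 \cdot \rho_2)(\chi) \wedge \rho_3(\psi_3) : \psi \leq \chi \cdot \psi_3\}$ and then substitute the definition of $(\rho_1 \cdot \rho_2)(\chi) = \bigvee\{\rho_1(\psi_1) \wedge \rho_2(\psi_2) : \chi \leq \psi_1 \cdot \psi_2\}$; applying the infinite distributive law to pull $\rho_3(\psi_3)$ inside the supremum, each resulting term has $\psi \leq \chi \cdot \psi_3 \leq \psi_1 \cdot \psi_2 \cdot \psi_3$, so it is dominated by the symmetric expression.

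The delicate direction is showing the symmetric expression is in turn dominated by the nested one: given a term with $\psi \leq \psi_1 \cdot \psi_2 \cdot \psi_3$, I would set $\chi = \psi_1 \cdot \psi_2$, observe $\chi \leq \psi_1 \cdot \psi_2$ trivially so that $\rho_1(\psi_1) \wedge \rho_2(\psi_2) \leq (\rho_1 \cdot \rho_2)(\chi)$, and note $\psi \leq \chi \cdot \psi_3$, whence $\rho_1(\psi_1) \wedge \rho_2(\psi_2) \wedge \rho_3(\psi_3) \leq (\rho_1 \cdot \rho_2)(\chi) \wedge \rho_3(\psi_3) \leq ((\rho_1 \cdot \rho_2) \cdot \rho_3)(\psi)$. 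Taking the supremum over all such terms yields the reverse inequality, so $((\rho_1 \cdot \rho_2) \cdot \rho_3)(\psi)$ equals the symmetric expression; by the mirror-image argument so does $(\rho_1 \cdot (\rho_2 \cdot \rho_3))(\psi)$, giving associativity. The care needed here is purely bookkeeping of the indexing sets and disciplined use of the distributive law, since the suprema are over genuinely infinite families, but no deeper difficulty arises because completeness of $\mathcal{B}$ guarantees all the joins exist and distribute.
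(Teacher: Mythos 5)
Your proposal is correct and follows essentially the same route as the paper: commutativity by symmetry, idempotency and neutrality by the attainment/antitonicity argument, and associativity by showing both nested combinations equal the symmetric expression $\bigvee\{\rho_1(\psi_1)\wedge\rho_2(\psi_2)\wedge\rho_3(\psi_3):\psi\leq\psi_1\cdot\psi_2\cdot\psi_3\}$ via the infinite distributive law in the complete Boolean algebra. Your explicit two-sided inclusion argument for associativity is just a more carefully spelled-out version of the single chain of equalities the paper writes (where the collapse of the double supremum implicitly uses the substitution $\chi=\psi_1\cdot\psi_2$ in one direction and monotonicity in the other).
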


\begin{proof}
The commutativity of (\ref{eq:CombOfAoP}) is evident. For the associativity note that for a $\psi
\in \Phi$ we have, due to the associativity and distributivity of meet and join in complete Boolean algebras,
\begin{eqnarray}
\lefteqn{((\rho_{1} \cdot \rho_{2}) \cdot \rho_{3})(\psi)}
\nonumber \\
&= &\bigvee\{(\rho_{1} \cdot \rho_{2})(\phi_{1,2})\wedge \rho_{3}(\phi_{3}):
\psi \leq \phi_{1,2} \cdot \phi_{3}\}
\nonumber \\
&= &\bigvee\{ \bigvee\{\rho_{1}(\phi_{1}) \wedge \rho_{2}(\phi_{2}):
\phi_{1,2} \leq \phi_{1} \cdot \phi_{2} \} 
\wedge\rho_{3}(\phi_{3}):
\psi \leq \phi_{1,2} \cdot \phi_{3}\}
\nonumber \\
&= &\bigvee \{\rho_{1}(\phi_{1}) \wedge \rho_{2}(\phi_{2})
\wedge\rho_{3}(\phi_{3}): \psi \leq \phi_{1} \cdot
\phi_{2} \cdot \phi_{3} \}.
\nonumber
\end{eqnarray}
For $(\rho_{1} \cdot (\rho_{2} \cdot \rho_{3}))(\psi)$ we obtain exactly the same
result in the same way. This proves associativity.

To show idempotency consider
\begin{eqnarray*}
\lefteqn{(\rho \cdot \rho)(\psi) 
= \bigvee \{\rho(\phi_{1}) \wedge
\rho(\phi_{2}): \psi \leq \phi_{1} \cdot \phi_{2}\}} 
\\
&&=\bigvee\{\rho(\phi_{1} \cdot \phi_{2}): \psi \leq
\phi_{1} \cdot \phi_{2}\} 
= \rho(\psi)
\end{eqnarray*}
since the last supremum is attained for $\phi_{1} = \phi_{2} = \psi$.

Finally let $\nu$ and $\zeta$ denote the vacuous and contradictory allocations. Then, for any allocation $\rho$ and any
$\psi \in \Psi$ we have, noting that $\nu(\phi) = \bot$, unless $\phi = 1$, in which case
$\nu(1) = \top$,
\begin{eqnarray}
(\rho \cdot \nu)(\psi) 
= \bigvee \{\rho(\phi_{1}) \wedge \nu(\phi_{2}): \psi \leq
\phi_{1} \cdot \phi_{2}\}
= \rho(\psi).
\nonumber
\end{eqnarray}
This shows that $\nu$ is the neutral element for combination. Similarly,
\begin{eqnarray*}
(\rho \cdot \zeta)(\psi) = \bigvee \{\rho(\phi_1) \wedge \zeta(\phi_2):\psi \leq \phi_1 \cdot \phi_2\}
= \bigvee \{\rho(\phi_1) \wedge \top:\psi \leq \phi_1 \cdot \phi_2\} = \top.
\end{eqnarray*}
So, we have $\rho \cdot \zeta = \zeta$ and $\zeta$ is the null element of combination.
\end{proof}

This theorem shows that $A_{\Phi}$ is a {\em semilattice}. Indeed, a partial order between allocations can be introduced as usual
by defining $\rho_{1} \leq \rho_{2}$ if $\rho_{1} \cdot \rho_{2} = \rho_{2}$. This means
that for all $\psi \in \Phi$,
\begin{eqnarray}
\rho_{1} \cdot \rho_{2}(\psi) 
= \bigvee\{\rho_{1}(\psi_{1}) \wedge \rho_{2}(\psi_{2}): \psi \leq
\psi_{1} \cdot \psi_{2}\} = \rho_{2}(\psi).
\nonumber
\end{eqnarray}
We have therefore always $\rho_{1}(\psi_{1}) \wedge \rho_{2}(\psi_{2}) \leq \rho_{2}(\psi)$
if $\psi \leq \psi_{1} \cdot \psi_{2}$. Take now $\psi_{1} = \psi$ and $\psi_{2}
= 1$, such that $\psi \leq \psi \cdot 1 = \psi$, to obtain  $\rho_{1}(\psi) \wedge
\rho_{2}(1) = \rho_{1}(\psi) \leq \rho_{2}(\psi)$. Thus we have $\rho_{1} \leq \rho_{2}$ if
and only if $\rho_{1}(\psi) \leq \rho_{2}(\psi)$ for all $\psi \in \Psi$. Clearly, the combination $\rho_{1} \cdot \rho_{2}$ is the supremum of the two a.o.p in this order. Therefore we shall henceforth write $\rho_{1} \vee \rho_{2}$ for this combination if we want to emphasise the order-theoretic aspects. The vacuous a.o.p is the least element of this semilattice or the unit element for combination, $\rho \vee \nu = \rho$. And the contradictory allocatiob $\zeta$ for all information elements is the greatest element to the semilattice $A_{\Phi}$. So the semilattice of a.o.ps $A_{\Phi}$ is a bounded semilattice.

Next we turn to the operation of  extracting a part of an allocation of probability in an information algebra relativ to a question $x$. More precisely, let $\rho$ be an allocation of probability on an information algebra $(\Phi,\cdot,0,1;E)$ with $E = \{\epsilon_x:x \in Q\}$. Just as it is possible to extract a part of a piece of information $\psi$ from $\Phi$ with the aid of the operator $\epsilon_x$, it should also be possible to focus the belief represented by the a.o.p~$\rho$ to the information
supported by the domain $x$. This means to extract the information related to $x$ from $\rho$. Thus, for a $\psi \in \Phi$ consider the beliefs allocated to pieces of information $\phi$ which are supported by $x$ and which entail $\psi$, i.e.\ $\psi \leq \phi = \epsilon_x(\phi)$. The part of the belief allocated to $\psi$ and relating to the domain $x$, $\epsilon_x(\rho)(\psi)$ must then be at least $\rho(\phi)$,
\begin{eqnarray}
\epsilon_x(\rho)(\psi) \geq \rho(\phi)  \textrm{ for any}\
\phi = \epsilon_x(\phi) \geq \psi.
\end{eqnarray}
In the absence of other information, it seems again, as above, reasonable  to define
$\epsilon_x(\rho)(\psi)$ to be the least upper bound of all these implied supports,
\begin{eqnarray} \label{eq:ExtractAoP}
\epsilon_x(\rho)(\psi) = \bigvee\{\rho(\phi):\psi \leq \phi = \epsilon_x(\phi)\}.
\end{eqnarray}
This defines indeed an allocation of probability.

\begin{theorem}
Let $\rho \in A_{\Phi}$ be an allocation of probability. The map $\epsilon_x(\rho) : \Phi \rightarrow \mathcal{B}$ as defined by (\ref{eq:ExtractAoP}) is an allocation of probability.
\end{theorem}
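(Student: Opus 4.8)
The plan is to verify the two defining properties (A1) and (A2) of an allocation of probability directly from the definition~(\ref{eq:ExtractAoP}), using only the supremum formula and elementary facts about the extraction operator $\epsilon_x$ in an idempotent generalised information algebra. The structure mirrors the proof of Theorem~\ref{comballoc}, so I expect no genuine obstacle; the main care needed is in the second property, where distributivity in the complete Boolean algebra $\mathcal{B}$ must be combined correctly with the behaviour of $\epsilon_x$ under combination.

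First I would check (A1). Evaluating at $\psi = 1$, the set $\{\phi : 1 \leq \phi = \epsilon_x(\phi)\}$ contains $\phi = 1$, since $\epsilon_x(1) = 1$ by Axiom A3 (in the domain-free setting, $\epsilon_x(1)=1$). Hence
\begin{eqnarray*}
\epsilon_x(\rho)(1) = \bigvee\{\rho(\phi):1 \leq \phi = \epsilon_x(\phi)\} \geq \rho(1) = \top,
\end{eqnarray*}
so $\epsilon_x(\rho)(1) = \top$, giving (A1).

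Next I would establish (A2), namely $\epsilon_x(\rho)(\psi_1 \vee \psi_2) = \epsilon_x(\rho)(\psi_1) \wedge \epsilon_x(\rho)(\psi_2)$. The inequality $\leq$ is the easy monotonicity half: any $\phi = \epsilon_x(\phi)$ with $\psi_1 \vee \psi_2 \leq \phi$ also dominates each $\psi_i$ individually, so $\rho(\phi)$ appears in the suprema defining each $\epsilon_x(\rho)(\psi_i)$, whence $\epsilon_x(\rho)(\psi_1 \vee \psi_2) \leq \epsilon_x(\rho)(\psi_i)$ for $i=1,2$ and therefore $\leq$ the meet. For the reverse inequality, I would take $x$-supported $\phi_1 \geq \psi_1$ and $\phi_2 \geq \psi_2$; then $\phi_1 \cdot \phi_2$ is again $x$-supported (by Lemma~\ref{le:SuppProp}, item~2, $x$ is a support of $\phi_1\cdot\phi_2$) and dominates $\psi_1 \vee \psi_2$. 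Using the (A2)-property of the original allocation $\rho$ I get $\rho(\phi_1 \cdot \phi_2) = \rho(\phi_1) \wedge \rho(\phi_2)$, and hence
\begin{eqnarray*}
\epsilon_x(\rho)(\psi_1 \vee \psi_2) \geq \bigvee\{\rho(\phi_1) \wedge \rho(\phi_2): \psi_i \leq \phi_i = \epsilon_x(\phi_i), i=1,2\}.
\end{eqnarray*}
By the infinite distributive law in the complete Boolean algebra $\mathcal{B}$, the right-hand side factorises as
\begin{eqnarray*}
\left(\bigvee\{\rho(\phi_1):\psi_1 \leq \phi_1 = \epsilon_x(\phi_1)\}\right) \wedge \left(\bigvee\{\rho(\phi_2):\psi_2 \leq \phi_2 = \epsilon_x(\phi_2)\}\right) = \epsilon_x(\rho)(\psi_1) \wedge \epsilon_x(\rho)(\psi_2).
\end{eqnarray*}
Combining the two inequalities yields (A2). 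The closest thing to a subtle point is ensuring that the product $\phi_1 \cdot \phi_2$ remains within the index set, i.e. that it is genuinely $x$-supported and dominates the join; this is exactly what Lemma~\ref{le:SuppProp} guarantees, so the argument goes through without difficulty.
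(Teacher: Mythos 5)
Your proof is correct and follows essentially the same route as the paper's own proof: the same monotonicity argument for the easy inequality, the same index-set containment plus property (A2) of $\rho$ and the infinite distributive law in $\mathcal{B}$ for the reverse inequality. The only difference is cosmetic — you explicitly cite Lemma~\ref{le:SuppProp}, item~2, for the $x$-supportedness of $\phi_1 \cdot \phi_2$, which the paper uses implicitly.
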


\begin{proof}
We have by definition
\begin{eqnarray}
\epsilon_x(\rho)(1) = \bigvee\{\rho(\phi):1 \leq \phi = \epsilon_x(\phi)\} = \rho(1) &=& \top.
\nonumber
\end{eqnarray}
Thus (A1) is verified.

Again by definition, 
\begin{eqnarray} 
\epsilon_x(\rho)(\phi_{1} \cdot \phi_{2}) = \bigvee\{\rho(\phi):\phi_{1} \cdot \phi_{2} \leq
\phi = \epsilon_x(\phi)\}.
\nonumber
\end{eqnarray}
From $\phi_{1},\phi_{2} \leq \phi_{1} \cdot \phi_{2}$ it follows that $\epsilon_x(\rho)(\phi_{1} \vee \phi_{2}) \leq \epsilon_x(\rho)(\phi_{1}),\epsilon_x(\rho)(\phi_{2})$ and thus $\epsilon_x(\rho)(\phi_{1} \cdot \phi_{2}) \leq
\epsilon_x(\rho)(\phi_{1}) \wedge \epsilon_x(\rho)(\phi_{2})$.

On the other hand, we have
\begin{eqnarray}
\lefteqn{\{\psi:\phi_{1} \cdot \phi_{2} \leq \psi = \epsilon_x(\psi)\}}
\nonumber\\
&\supseteq
&\{\psi = \psi_{1} \cdot \psi_{2}:\phi_{1} \leq \psi_{1} = \epsilon_x(\psi_{1}),\phi_{2} \leq \psi_{2} = \epsilon_x(\psi_{2})\}.
\nonumber
\end{eqnarray}
From this we obtain, using the distributive law for complete Boolean algebras,
\begin{eqnarray} 
\lefteqn{\epsilon_x(\rho)(\phi_{1} \cdot \phi_{2}) }
\nonumber \\
&&\geq \bigvee\{\rho(\psi_{1} \cdot \psi_{2}):
\phi_{1} \leq \psi_{1} = \epsilon_x(\psi_{1}),\phi_{2} \leq \psi_{2} = \epsilon_x(\psi_{2})\}
\nonumber \\
&&=\bigvee\{\rho(\psi_{1}) \wedge \rho(\psi_{2}):
\phi_{1} \leq \psi_{1} = \epsilon_x(\psi_{1}),\phi_{2} \leq \psi_{2} = \epsilon_x(\psi_{2})\}
\nonumber \\
&&=\left( \bigvee\{\rho(\psi_{1}):\phi_{1} \leq \psi_{1} = \epsilon_x(\psi_{1})\} \right) \wedge
\left( \bigvee\{\rho(\psi_{2}):\phi_{2} \leq \psi_{2} = \epsilon_x(\psi_{2}) \}  \right)
\nonumber \\
&&=\rho(\phi_{1}) \wedge \rho(\phi_{2}).
\nonumber
\end{eqnarray}
This proves property (A2) for an allocation of support.
\end{proof}

We are now going to show that the a.o.p in $A_\Phi$ in fact define a domain-free information algebra $(A_\Phi,\cdot,\nu,\zeta;E)$, where $E = \{\epsilon_x:x \in Q\}$ with operator $\epsilon_x$ defined by \ref{eq:ExtractAoP}, without the Support Axiom (unless $(D,\leq)$ has a largest element). 

The Semigroup Axiom is proved in Theorem \ref{th:AoPsemilattice}. Concerning the unit and null elements we have already noted above that the vacuous allocation $\nu$ is the unit element of combination and the the a.o.p $\zeta$ is the null element of combination. It remains to verify that the operators $\epsilon_x$ are existential quantifiers relative to $(\Phi,\leq)$.

\begin{theorem}
The extraction operator $\epsilon_x$ on $A_\Phi$ is an existential quantifier for all $x \in Q$ .
\end{theorem}

\begin{proof}
First, we have $\epsilon_x(\zeta)(\phi) = \bigvee \{\zeta(\psi):\phi \leq \psi = \epsilon_x(\psi\} = \top$ for all $\phi \in \Phi$, since $\zeta(\psi) = \top$. So, $\epsilon_x(\zeta) = \zeta$. Secondly, for any $\phi \in \Phi$, $\bigvee \{\rho(\psi):\phi \leq \psi = \epsilon_x(\psi)\} \leq \rho(\phi)$ since $\phi \leq \psi$ implies $\rho(\psi) \leq \rho(\phi)$ and so $\epsilon_x(\rho) \leq \rho$ or $\epsilon_x(\rho) \cdot \rho = \rho$.

It remains to prove that $\epsilon_x(\epsilon_x(\rho_1) \cdot \rho_2) = \epsilon_x(\rho_1) \cdot \epsilon_x(\rho_2)$. Fix any $\phi \in \Phi$. Then, by definition of combination and extraction, using the associate and distributive laws in the Boolean algebra $\mathcal{B}$, we have
\begin{eqnarray*}
&&(\epsilon_x(\rho_1) \cdot \epsilon_x(\rho_2))(\phi)  \\
&=& \bigvee \{\epsilon_x(\rho_1)(\phi_1) \wedge \epsilon_x(\rho_2)(\phi_2):\phi \leq \phi_1 \cdot \phi_2\} \\
&=& \bigvee \{\bigvee \{\rho_1(\psi_1):\phi_1 \leq \psi_1 = \epsilon_x(\psi_1)\} \\
&& \wedge \{\bigvee \{\rho_2(\psi_2):\phi_2 \leq \psi_2 = \epsilon_x(\psi_2)\}:\phi \leq \phi_1 \cdot \phi_2\} \\
&=& \bigvee \{\rho_1(\psi_1) \wedge \rho_2(\psi_2):\phi_1 \leq \psi_1 = \epsilon_x(\psi_1),\phi_2 \leq \psi_2 = \epsilon_x(\psi_2),\phi \leq \phi_1 \cdot \phi_2\} \\
&=& \bigvee \{\rho_1(\psi_1) \wedge \rho_2(\psi_2):\psi_1 = \epsilon_x(\psi_1),\psi_2 = \epsilon_x(\psi_2),\phi \leq \psi_1 \cdot \psi_2\}.
\end{eqnarray*}
Also by definition of combination we have
\begin{eqnarray*}
(\epsilon_x(\rho_1) \cdot \rho_2)(\phi) = \bigvee \{\epsilon_x(\rho_1)(\phi_1) \cdot \rho_2(\phi_2):\phi \leq \phi_1 \cdot \phi_2\}.
\end{eqnarray*}
Therefore, we obtain, again using associativity and distributivity
\begin{eqnarray*}
&&(\epsilon_x(\epsilon_x(\rho_1) \cdot \rho_2)(\phi) \\
&=& \bigvee \{ \bigvee \{\epsilon_x(\rho_1)(\phi_1) \wedge \rho_2(\phi_2): \psi \leq \phi_1 \cdot \phi_2\}:\phi \leq \psi = \epsilon_x(\psi)\} \\
&=& \bigvee \{\epsilon_x(\rho_1)(\phi_1) \wedge \rho_2(\phi_2):\phi \leq \psi = \epsilon_x(\psi) \leq \phi_1 \cdot \phi_2\} \\
&=& \bigvee \{ (\bigvee \{ (\rho_1(\psi_1):\phi_1 \leq \psi_1 = \epsilon_x(\psi_1)\} ) \wedge \rho_2(\phi_2):\phi \leq \psi = \epsilon_x(\psi) \leq \phi_1 \cdot \phi_2\} \\
&=& \bigvee \{\rho_1(\psi_1) \wedge \rho_2(\phi_2):\phi_1 \leq \psi_1 = \epsilon_x(\psi_1),\phi \leq \psi = \epsilon_x(\psi) \leq \phi_1 \cdot \phi_2\} \\
&=& \bigvee \{\rho_1(\psi_1) \wedge \rho_2(\psi_2):\psi_1 = \epsilon_x(\psi_1),\phi \leq \psi = \epsilon_x(\psi) \leq \psi_1 \cdot \psi_2\}.
\end{eqnarray*}
Now, consider a pair of elements $\psi_1$ and $\psi_2$ such that $\psi_1 = \epsilon_x(\psi_1),\psi_2 = \epsilon_x(\psi_2),\phi \leq \psi_1 \cdot \psi_2$. Define $\psi = \psi_1 \cdot \psi_2$. Then $\psi = \epsilon_x(\psi)$ and $\psi_1 = \epsilon_x(\psi_1),\phi \leq \psi = \epsilon_x(\psi) \leq \psi_1 \cdot \psi_2$. This implies that 
\begin{eqnarray*}
\epsilon_x(\rho_1) \cdot \epsilon_x(\rho_2) \leq \epsilon_x(\epsilon_x(\rho_1) \cdot \rho_2).
\end{eqnarray*}
On the other hand, if $\psi_1 = \epsilon_x(\psi_1),\phi \leq \psi = \epsilon_x(\psi) \leq \psi_1 \cdot \psi_2$, then $\phi \leq \psi = \epsilon_x(\psi) \leq \epsilon_x(\psi_1 \cdot \psi_2) = \epsilon_x(\psi_1) \cdot \epsilon_x(\psi_2) = \psi_1 \cdot \epsilon_x(\psi_2)$. Further, since $\psi_2 \geq \epsilon_x(\psi_2)$ it follows that $\rho(\psi_2) \leq \rho(\epsilon_x(\psi_2))$. Therefore
\begin{eqnarray*}
&& (\epsilon_x(\epsilon_x(\rho_1) \cdot \rho_2))(\phi) \\
&\leq& \bigvee \{\rho_1(\psi_1) \wedge \rho_2(\epsilon_x(\psi_2)):\psi_1 = \epsilon_x(\psi_1),\phi \leq \psi = \epsilon_x(\psi) \leq \psi_1 \cdot \epsilon_x(\psi_2)\}.
\end{eqnarray*}
Then recall that $\epsilon_x(\epsilon_x(\psi_2)) = \epsilon_x(\psi_2)$. Therefore, in the inequality above, renaming $\epsilon_x(\psi_2)$ by $\psi_2$, we obtain
\begin{eqnarray*}
&&(\epsilon_x(\epsilon_x(\rho_1) \cdot \rho_2))(\phi)  \\
&\leq& \bigvee \{\rho_1(\psi_1) \wedge \rho_2(\psi_2):\psi_1 = \epsilon_x(\psi_1),\psi_2 = \epsilon_x(\psi_2),\phi \leq \psi = \epsilon_x(\psi) \leq \psi_1 \cdot \psi_2\} \\
&=& \bigvee \{\rho_1(\psi_1) \wedge \rho_2(\psi_2):\psi_1 = \epsilon_x(\psi_1),\psi_2 = \epsilon_x(\psi_2),\phi \leq \psi_1 \cdot \psi_2\}
\end{eqnarray*}
This shows that 
\begin{eqnarray*}
\epsilon_x(\rho_1) \cdot \epsilon_x(\rho_2) \geq \epsilon_x(\epsilon_x(\rho_1) \cdot \rho_2),
\end{eqnarray*}
hence the quality between the two terms. This concludes the proof.
\end{proof}

These results show that $(A_\Phi,\cdot,\zeta,\nu;E)$ with $E = \{\epsilon_x:x \in Q\}$, where $\epsilon_x$ are extraction operators on $A_\Phi$, is a domain-free information algebra, without the support axiom.

We show now that the algebra $A_\Phi$ is in fact an extension of the information algebra $\Phi$. Consider for any $\phi \in \Psi$ the the following map of $\Phi$ into $\mathcal{B}$:
\begin{eqnarray}
\rho_{\phi}(\psi) 
&=& \left\{ \begin{array}{ll}
\top &\textrm{ if } \psi \leq \phi, \\
\bot &\textrm{ otherwise,}
\end{array} \right.
\end{eqnarray}
It allocates total belief to all elements of information implied by $\phi$, that is to all elements of the principal ideal $\downarrow\!\phi$, and no belief to all other elements. This map is clearly an \textit{allocation of probability}; it is called a \textit{deterministic} allocation. It is a \textit{degenerate} allocation in so far as there is no uncertainty in the information it expresses. It states simply that the piece of information $\phi$ is sure to hold. Obviously the least a.o.p $\nu = \rho_{1}$ is a deterministic allocations, and so is the greatest a.o.p $\zeta = \rho_{0}$. Now, for $\phi_{1},\phi_{2} \in \Phi$ we have 
\begin{eqnarray}
\rho_{\phi_{1}} \cdot \rho_{\phi_{2}}(\psi) 
&=& \bigvee \{\rho_{\phi_{1}}(\psi_{1}) \wedge \rho_{\phi_{2}}(\psi_{2}):
\psi \leq \psi_{1} \cdot \psi_{2}\}
\nonumber \\[.5em]
&=& \left\{ \begin{array}{ll}
\top &\textrm{ if } \psi \leq \phi_{1} \cdot \phi_{2}, \\
\bot &\textrm{ otherwise}
\end{array} \right\}
= \rho_{\phi_{1} \cdot \phi_{2}}(\psi).
\end{eqnarray}
So, the combination of deterministic allocations of $\phi_{1}$ and $\phi_{2}$ produces the deterministic a.o.p of $\phi_{1} \cdot \phi_{2}$. 

Further, for any $\psi \in \Phi$,
\begin{eqnarray}
\epsilon_x(\rho_{\phi})(\psi)
&=& \bigvee \{\rho_{\phi}(\psi'):\psi \leq \psi' = \epsilon_x(\psi') \}.
\nonumber
\end{eqnarray}
This equals $\top$, if there is a $\psi' = \epsilon_x(\psi') \geq \psi$ such that $\psi' \leq  \phi$, and $\bot$ otherwise. But, we have $\psi' = \epsilon_x(\psi') \leq \phi$ if and only if $\psi' = \epsilon_x(\psi') \leq \epsilon_x(\phi)$. This shows that $\epsilon_x(\rho_{\phi})(\psi) = \rho_{\epsilon_x(\phi)}(\psi)$, hence $\epsilon_x(\rho_{\phi}) = \rho_{\epsilon_x(\phi)}$. The extraction of a deterministic a.o.p associated with $\phi$ by $x$ yields the deterministic a.o.p associated with $\epsilon_x(\phi)$.

The mapping $\phi \mapsto \rho_{\phi}$ is thus an embedding of $\Phi$ in $A_{\Phi}$. In this sense, $A_{\Phi},$ extends the information algebra $\Phi$. By the way, we remark that if $(\Phi,\cdot,0,1;E)$ is a commutative information algebra  then the corresponding algebra of a.o.p is obviously also a commutative information algebra.

%%%%%%%%%%%%%%%%%%%%%%%%%%%%%%%%%%%%%%%%%%%%%%%%%%%%%%%%%%%%

\section{Allocations and random variables} \label{subsec:AoPAndRV}

We pursue the subject by examining the question how random mappings and allocations of probability, and especially their respective information algebras, are related. In Section \ref{subsec:RanMaps} it has been shown that a random mapping generates an allocation of probability, which specifies how much belief, according to the information represented by the random mapping, is to be assigned to an element of $\Phi$. In this section the relations between random mappings and allocations of probability will be examined in more detail. In particular, we address the question, whether the operations between random mappings, combination and extraction, are reflected in the corresponding operations of the associated a.o.p, in other words, whether the mapping $\Gamma \mapsto \rho_{\Gamma}$ is a homomorphism between random mappings and associated allocations of probability.

We start with \textit{simple random variables}. Fix an information algebra $(\Phi,\cdot,0,1;E)$ with $E = \{\epsilon_x:x \in Q\}$ and a probability space $(\Omega,\mathcal{A},P)$. For any simple random variable $\Delta \in \mathcal{R}_{s}$ defined on this probability space, we have seen that all elements of $\Phi$ and even of $I_{\Phi}$ have measurable allocations of support $s_{\Delta}(\psi) \in \mathcal{A}$ and their degree of support is well defined. If we pass in this case from the probability space $(\Omega,\mathcal{A},P)$ to its associated probability algebra $(\mathcal{B},\mu)$ (see Section \ref{subsec:RanMaps}), then we can define the \textit{allocation of probability} (a.o.p) associated with the random variable $\Delta$,
\begin{eqnarray}
\rho_{\Delta}(\psi) = [s_{\Delta}(\psi)]
\nonumber
\end{eqnarray}
for all elements $\psi \in \Phi$ and even for all elements in $I_{\Phi}$. Thus, we obtain for the degree of support induced by the random variable $\Delta$,
\begin{eqnarray}
sp_{\Delta}(\psi) = P(s_{\Delta}(\psi)) = \mu(\rho_{\Delta}(\psi)).
\nonumber
\end{eqnarray}
Again this holds for all elements of $\Phi$ and even of its ideal completion $I_{\Phi}$. The mapping $\rho_{\Delta}:\Phi \rightarrow \mathcal{B}$ clearly satisfies the defining properties of an allocation of probability introduced above in this Section (see Theorem \ref{th:AllolcSupp} and (\ref{eq:ProjHomomorph})). 

A simple random variable $\Delta$ is defined by a partition $\{B_{1},\ldots,B_{n}\}$ of $\Omega$ consisting of measurable blocks $B_{i}$ and a mapping defined by $\Delta(\omega) = \psi_{i}$ for all $\omega \in B_{i}$ and $i = 1,\ldots,n$. We write $\Delta(\omega) = \Delta(B_{i})$, if $\omega \in B_{i}$. To the partition $\{B_{1},\ldots,B_{n}\}$ of $\Omega$ corresponds a partition $\{[B_{1}],\ldots,[B_{n}]\}$ of the probability algebra $\mathcal{B}$. That is, we have $[B_{i}] \wedge [B_{j}] = \bot$ if $ i \not= j$, and $\vee_{i=1}^{n} [B_{i}] = \top$. The simple random variable $\Delta$ can also be defined by a mapping $\Delta([B_{i}]) = \psi_{i}$ from the partition of $\mathcal{B}$ into $\Psi$. Its allocation of probability can then also be determined as
\begin{eqnarray} \label{eq:AoSofSimpleRV}
\rho_{\Delta}(\psi) = \vee \{[B_{i}]:\psi \leq \Delta([B_{i}])\}.
\end{eqnarray}
We note that $\rho_{\Delta} = \rho_{\Delta^{\rightarrow}}$. So, as far as allocation of probability (and support) is concerned we might as well restrict ourselves to considering \textit{canonical} simple random variables and their information algebra $\mathcal{R}_{s,c}$ (see Section \ref{subsec:SimpleRanMaps}). 

We now consider the mapping $\rho: \Delta \mapsto \rho_{\Delta}$ which maps simple random variables into a.o.p.s. This mapping is a \textit{homomorphism}:

\begin{theorem} \label{thOpAllocProbSimRV}
Let $\Delta_{1},\Delta_{2},\Delta \in \mathcal{R}_{s}$ be simple random variables, defined on partitions in a probability algebra $(\mathcal{B},\mu)$ with values in an information algebra
$(\Phi,\cdot,0,1;E)$. Then, for all $\psi \in \Phi$ and $x \in D$,
\begin{eqnarray} \label{eqCombHomo}
\rho_{\Delta_{1} \cdot \Delta_{2}}(\psi)
&=& (\rho_{\Delta_{1}} \cdot \rho_{\Delta_{2}})(\psi)
\\[.6em] \label{eqFocHomo}
\rho_{\epsilon_x(\Delta)}(\psi)
&=& \epsilon_x(\rho_{\Delta})(\psi).
\end{eqnarray}
\end{theorem}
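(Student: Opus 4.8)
The plan is to prove the two homomorphism identities \eqref{eqCombHomo} and \eqref{eqFocHomo} separately, working directly with the explicit formula \eqref{eq:AoSofSimpleRV} for the allocation of probability of a simple random variable. Throughout, let $\Delta_1$ be defined on a partition $\{[B_1],\ldots,[B_n]\}$ of $\mathcal{B}$ with values $\Delta_1([B_i])$, and $\Delta_2$ on $\{[C_1],\ldots,[C_m]\}$ with values $\Delta_2([C_j])$. The combined variable $\Delta_1\cdot\Delta_2$ is then defined on the common refinement, whose blocks are $[B_i]\wedge[C_j] = [B_i\cap C_j]$, with value $\Delta_1([B_i])\cdot\Delta_2([C_j])$. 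The extracted variable $\epsilon_x(\Delta)$ lives on the same partition as $\Delta$, with value $\epsilon_x(\Delta([B_i]))$.

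For \eqref{eqCombHomo}, first I would expand the right-hand side using the definition of combination of allocations \eqref{eq:CombOfAoP}, namely $(\rho_{\Delta_1}\cdot\rho_{\Delta_2})(\psi) = \bigvee\{\rho_{\Delta_1}(\psi_1)\wedge\rho_{\Delta_2}(\psi_2):\psi\leq\psi_1\cdot\psi_2\}$, and then substitute \eqref{eq:AoSofSimpleRV} for each factor, obtaining a join over pairs $(i,j)$ of terms $[B_i]\wedge[C_j]$ subject to the constraint that there exist $\psi_1\leq\Delta_1([B_i])$, $\psi_2\leq\Delta_2([C_j])$ with $\psi\leq\psi_1\cdot\psi_2$. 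The left-hand side, by \eqref{eq:AoSofSimpleRV} applied to the refinement, is $\bigvee\{[B_i\cap C_j]:\psi\leq\Delta_1([B_i])\cdot\Delta_2([C_j])\}$. The key step is to show these two index sets of $(i,j)$ coincide: the constraint $\psi\leq\Delta_1([B_i])\cdot\Delta_2([C_j])$ is equivalent to the existence of suitable $\psi_1,\psi_2$, where one direction takes $\psi_1=\Delta_1([B_i])$ and $\psi_2=\Delta_2([C_j])$, and the other uses monotonicity of combination, so that $\psi\leq\psi_1\cdot\psi_2\leq\Delta_1([B_i])\cdot\Delta_2([C_j])$.

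For \eqref{eqFocHomo}, I would again expand the right-hand side $\epsilon_x(\rho_\Delta)(\psi)$ by the definition of extraction of allocations \eqref{eq:ExtractAoP}, giving $\bigvee\{\rho_\Delta(\phi):\psi\leq\phi=\epsilon_x(\phi)\}$, and then insert \eqref{eq:AoSofSimpleRV} for $\rho_\Delta(\phi)$. The left-hand side is $\rho_{\epsilon_x(\Delta)}(\psi)=\bigvee\{[B_i]:\psi\leq\epsilon_x(\Delta([B_i]))\}$. The crux is that for a fixed block $[B_i]$, the value $[B_i]$ appears in the right-hand join precisely when there is some $x$-supported $\phi\geq\psi$ with $\phi\leq\Delta([B_i])$, and such a $\phi$ exists iff $\psi\leq\epsilon_x(\Delta([B_i]))$; indeed $\phi=\epsilon_x(\phi)\leq\Delta([B_i])$ forces $\phi=\epsilon_x(\phi)\leq\epsilon_x(\Delta([B_i]))$ (idempotency of $\epsilon_x$ together with monotonicity), while conversely $\epsilon_x(\Delta([B_i]))$ itself is $x$-supported and dominates $\psi$ under the hypothesis.

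I expect the main obstacle to be the bookkeeping around partitions of $\mathcal{B}$ rather than of $\Omega$, and ensuring the joins are correctly matched block by block: one must verify that no extraneous $(i,j)$-terms survive on either side and that the $x$-support condition $\phi=\epsilon_x(\phi)$ interacts correctly with the finitely many block-values $\Delta([B_i])$. Because there are only finitely many blocks, all the suprema involved are finite joins in $\mathcal{B}$, so no continuity or measurability subtleties arise; the argument is essentially the equivalence of two explicitly described finite index sets, exploiting monotonicity of $\cdot$ and idempotency of $\epsilon_x$. The identity $\rho_\Delta=\rho_{\Delta^\rightarrow}$ noted after \eqref{eq:AoSofSimpleRV} guarantees the answer is independent of the chosen representation, so I need not worry about canonicity during the computation.
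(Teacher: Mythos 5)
Your proposal is correct and follows essentially the same route as the paper's proof: expand the right-hand sides via the definitions \eqref{eq:CombOfAoP} and \eqref{eq:ExtractAoP}, substitute the block formula \eqref{eq:AoSofSimpleRV}, and reduce each identity to an equivalence of conditions on blocks — namely that $\psi\leq\psi_1\cdot\psi_2$ with $\psi_1\leq\Delta_1(B_i)$, $\psi_2\leq\Delta_2(C_j)$ iff $\psi\leq\Delta_1(B_i)\cdot\Delta_2(C_j)$, and that an $x$-supported $\phi$ with $\psi\leq\phi\leq\Delta(B_i)$ exists iff $\psi\leq\epsilon_x(\Delta(B_i))$. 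Your explicit justifications of these two equivalences (monotonicity of combination, and idempotency plus monotonicity of $\epsilon_x$) are exactly the facts the paper's ``if and only if'' steps rely on, merely spelled out in more detail.
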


It is understood that in this theorem the combination on the left is the one in the algebra of simple random variables, whereas on the right it is the one in the algebra of a.o.p s. Similarly, the extraction operator $\epsilon_x$ on the left is the one in the information algebra $\mathcal{R}_{s}$ of simple random variables, the one on the right is the one in the information algebra $A_{\Phi}$ of a.o.p s.

\begin{proof}
(1) Assume that $\Delta_{1}$ is defined on the partition $\{B_{1,1},\ldots,B_{1,n}\}$ and $\Delta_{2}$ on
the partition $\{B_{2,1},\ldots,B_{2,m}\}$ of $\mathcal{B}$.  From the definition of an allocation of
probability, of combination of a.o.p s and the distributive and associative laws for Boolean algebras, we
obtain
\begin{eqnarray}
\lefteqn{(\rho_{\Delta_{1}} \cdot \rho_{\Delta_{2}})(\psi)}
\nonumber \\
&=&\vee \{\rho_{\Delta_{1}}(\psi_{1}) \wedge \rho_{\Delta_{2}}(\psi_{2}): \psi \leq \psi_{1}
\cdot \psi_{2}\}
\nonumber \\
&= &\vee \{ \left( \vee \{B_{1,i}: \psi_{1} \leq \Delta_{1}(B_{1,i}\}) \right) 
\nonumber \\
&&\wedge
\left( \vee \{B_{2,j}: \psi_{2} \leq \Delta_{2}(B_{2,j}\}) \right):\psi \leq \psi_{1} \cdot \psi_{2} \}
\nonumber \\
&= & \vee \{ \vee \{B_{1,i} \wedge B_{2,j} \not= \bot: 
\psi_{1} \leq \Delta_{1}(B_{1,i}), \psi_{2} \leq \Delta_{2}(B_{2,j})\}: \psi \leq \psi_{1} \cdot
\psi_{2} \}
\nonumber \\
&= & \vee \{B_{1,i} \wedge B_{2,j} \not= \bot: 
\psi_{1} \leq \Delta_{1}(B_{1,i}), \psi_{2} \leq \Delta_{2}(B_{2,j}), \psi \leq \psi_{1} \cdot
\psi_{2} \}.
\nonumber
\end{eqnarray}
But $\psi \leq \psi_{1} \cdot \psi_{2}$, $\psi_{1} \leq \Delta_{1}(B_{1,i})$ and $\psi_{2} \leq
\Delta_{2}(B_{2,j})$ if and only if $\psi \leq \Delta_{1}(B_{1,i}) \cdot \Delta_{2}(B_{2,j})$. So we
conclude that
\begin{eqnarray}
\lefteqn{(\rho_{\Delta_{1}} \cdot \rho_{\Delta_{2}})(\psi)}
\nonumber \\
&= &\vee \{B_{1,i} \wedge B_{2,j} \not= \bot: 
\psi \leq \Delta_{1}(B_{1,i}) \cdot \Delta_{2}(B_{2,j})\}
\nonumber \\
&= & \vee \{B_{1,i} \wedge B_{2,j} \not= \bot:
\psi \leq (\Delta_{1} \cdot \Delta_{2})(B_{1,i} \wedge B_{2,j})\}
\\&= & \rho_{\Delta_{1} \cdot \Delta_{2}}(\psi).
\nonumber
\end{eqnarray}

(2) Assume that $\Delta$ is defined on the partition $B_{1},\ldots,B_{n}$ of $\mathcal{B}$. Then $\epsilon_x(\Delta)$ is also defined on $B_{1},\ldots,B_{n}$. The associative law of complete Boolean algebra gives us then,
\begin{eqnarray}
\lefteqn{\epsilon_x(\rho_{\Delta})(\psi)}
\nonumber \\
&=& \vee \{\rho_{\Delta}(\phi): \psi \leq \phi = \epsilon_x(\phi)\}
\nonumber \\
&= & \vee \left\{\vee \{B_{i}: \phi \leq \Delta(B_{i}) \}: \psi \leq \phi = \epsilon_x(\phi) \right\}
\nonumber \\
&= & \vee \{B_{i}: \psi \leq \phi = \epsilon_x(\phi) \leq \Delta(B_{i}) \}.
\nonumber
\end{eqnarray}
But, $\psi \leq \phi = \epsilon_x(\phi) \leq \Delta(B_{i})$ holds if and only if $\psi \leq
\epsilon_x(\Delta(B_{i})) = \epsilon_x(\Delta)(B_{i})$. Hence we see that
\begin{eqnarray}
\epsilon_x(\rho_{\Delta})(\psi)
= \vee \{B_{i}: \psi \leq \epsilon_x(\Delta)(B_{i})\}
= \rho_{\epsilon_x(\Delta)}(\psi).
\nonumber
\end{eqnarray}
This completes the proof.
\end{proof}

As far as allocations of probability induced by simple random
variables are concerned, this theorem shows that the combination and focusing of
allocations reflects correctly the corresponding operations of the underlying random variables. Let $A_{s}$ be the image of $\mathcal{R}_{s,}$ under the mapping $\rho$. That is
$A_{s}$ is the set of all allocations of probability which are induced by simple
random variables in $(\mathcal{B},\mu)$. The mapping satisfies
\begin{eqnarray} \label{eq:RhoIsHomom}
\rho_{\Delta_{1} \cdot \Delta_{2}} &=& \rho_{\Delta_{1}} \cdot \rho_{\Delta_{2}},
\nonumber \\
\rho_{\epsilon_x(\Delta)} &=& \epsilon_x(\rho_{\Delta}).
\end{eqnarray}
Also the vacuous random variable $1$ maps to the vacuous allocation $\nu$ and the null random variable $0$ to $\zeta$. Thus we conclude that the map $\Delta \mapsto \rho_{\Delta}$ is a homomorphism between $\mathcal{R}_{s}$ and $A_{\Phi}$ and that $A_{s}$ is a subalgebra of the information algebra $A_{\Phi}$. We remark that if we restrict the mapping $\rho$ to \emph{canonical} random variables, then the mapping $\Delta^{\rightarrow} \mapsto \rho_{\Delta}$ becomes an \textit{embedding}. 

Now we turn to \textit{random variables} $\Gamma$. Remind that they can be identified with certain random mappings into the ideal completion $I_{\Phi}$ of the information algebra $\Phi$ (see Section \ref{subsec:RandVar}) and as such their allocation of probability is defined by $\rho_{\Gamma}(\psi) = \rho_{0}(s_{\Gamma}(\psi))$ or $\rho_{\Gamma} = \rho_{0} \circ s_{\Gamma}$ (see Section \ref{subsec:RanMaps}). We remind that this covers also the important case of \textit{compact} information algebras $\Phi$, where the simple random variables have finite values in $\Phi_{f}$, if $\Phi_{f}$ is a subalgebra of $\Phi$. Now we show that the a.o.p of a random variable can also be obtained as the limit of the a.o.p of the simple random variables it dominates.

\begin{theorem} \label{th:IdOfAoPs}
For all random variables $\Gamma$, 
\begin{eqnarray} \label{eq:IdComplSimplAoP}
\rho_{\Gamma} = \bigvee \{\rho_{\Delta}: \Delta \leq \Gamma\}.
\end{eqnarray}
\end{theorem}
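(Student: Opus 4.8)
The plan is to exhibit both sides as allocations of probability on $\Psi$ and to prove the identity pointwise, using positivity of the measure $\mu$ to collapse the inequality that comes for free into an equality. First I would clarify the right-hand side: the set $\{\Delta : \Delta \le \Gamma\}$ is the directed ideal representing $\Gamma$ in $\mathcal{R}$, and by the homomorphism property of $\rho$ on simple random variables (Theorem \ref{thOpAllocProbSimRV}, equations (\ref{eq:RhoIsHomom})) together with $\Delta_1,\Delta_2\le\Gamma \Rightarrow \Delta_1\cdot\Delta_2\le\Gamma$, the family $\{\rho_\Delta : \Delta\le\Gamma\}$ is directed in $A_\Psi$. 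A routine check, using that finite meet distributes over arbitrary joins in the complete Boolean algebra $\mathcal{B}$ and the directedness, shows that the pointwise supremum $\rho^\ast(\psi) = \bigvee_{\Delta\le\Gamma}\rho_\Delta(\psi)$ satisfies (A1) and (A2), hence is an a.o.p, and that it is the join $\bigvee\{\rho_\Delta:\Delta\le\Gamma\}$. Thus it suffices to prove $\rho^\ast(\psi) = \rho_\Gamma(\psi)$ for every $\psi\in\Psi$.

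Next I would settle the easy inequality $\rho^\ast \le \rho_\Gamma$. For a simple $\Delta \le \Gamma$ we have $\Delta(\omega)\le\Gamma(\omega)$ for all $\omega$, so $s_\Delta(\psi)\subseteq s_\Gamma(\psi)$; since $s_\Delta(\psi)\in\mathcal{A}$ and $\rho_\Delta(\psi) = [s_\Delta(\psi)]$, the definition (\ref{rho0}) of $\rho_0$ gives $\rho_\Delta(\psi)\le\rho_0(s_\Gamma(\psi)) = \rho_\Gamma(\psi)$. Taking the join over $\Delta$ yields $\rho^\ast(\psi)\le\rho_\Gamma(\psi)$.

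For the reverse inequality I would pass to measures and use positivity of $\mu$: since $\rho^\ast(\psi)\le\rho_\Gamma(\psi)$, equality follows once $\mu(\rho^\ast(\psi)) = \mu(\rho_\Gamma(\psi))$. On one hand, the classes $[s_\Delta(\psi)]$ form an upward directed family, so Lemma \ref{downward} gives $\mu(\rho^\ast(\psi)) = \sup_{\Delta\le\Gamma}\mu([s_\Delta(\psi)]) = \sup_{\Delta\le\Gamma} sp_\Delta(\psi)$. On the other hand $\mu(\rho_\Gamma(\psi)) = P_\ast(s_\Gamma(\psi)) = sp_\Gamma(\psi)$ by (\ref{eq:InnerProbExt}). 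Hence the whole theorem reduces to the approximation identity $sp_\Gamma(\psi) = \sup_{\Delta\le\Gamma} sp_\Delta(\psi)$.

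This last identity is the main obstacle. Using that $\psi$ is a \emph{finite} (compact) element of the ideal completion $I_\Psi$ (Theorem \ref{th:CompIdealCompl}) and that $\Gamma(\omega) = \bigvee\{\Delta(\omega):\Delta\le\Gamma\}$ is a directed join, one obtains $s_\Gamma(\psi) = \bigcup_{\Delta\le\Gamma} s_\Delta(\psi)$, an \emph{upward directed} union of measurable sets; the difficulty is precisely that this union is in general uncountable, so that its inner probability $P_\ast$ need not a priori coincide with the supremum $\sup_{\Delta\le\Gamma} P(s_\Delta(\psi))$ of the member probabilities. Establishing this coincidence is exactly the content of Corollary \ref{cor:ApproxDegOfSupSimRV}, on which I would base this step; the countable chain condition in $\mathcal{B}$ reduces the directed family to an increasing sequence realising the supremum, and the remaining work is to show that no measurable subset of $s_\Gamma(\psi)$ escapes this essential cover, which is where the specific structure of generalised random variables, as opposed to an arbitrary directed union of measurable sets, has to be exploited.
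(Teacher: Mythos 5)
Your reduction is circular at its one essential step. The preparatory parts are fine: the directedness of $\{\rho_{\Delta}:\Delta\leq\Gamma\}$, the easy inequality $\rho^{\ast}\leq\rho_{\Gamma}$, and the passage to measures via positivity of $\mu$, Lemma \ref{downward} and (\ref{eq:InnerProbExt}) are all correct, and they validly reduce the theorem to the identity $sp_{\Gamma}(\psi)=\sup\{sp_{\Delta}(\psi):\Delta\leq\Gamma\}$. But you then propose to obtain that identity from Corollary \ref{cor:ApproxDegOfSupSimRV}, and in the paper that corollary is itself \emph{deduced from} Theorem \ref{th:IdOfAoPs} by exactly the argument you just ran in reverse (apply $\mu$ to (\ref{eq:IdComplSimplAoP}) and use Lemma \ref{downward}). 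So nothing independent supports the key step; you yourself flag the ``remaining work'' of showing that no measurable subset of $s_{\Gamma}(\psi)$ escapes the cover $\bigcup_{\Delta\leq\Gamma}s_{\Delta}(\psi)$, and that work is precisely what is missing.

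The idea you are missing is a one-line construction, and it is the whole content of the paper's proof: for each measurable $A\subseteq s_{\Gamma}(\psi)$ define the simple random variable $\Delta_{A}(\omega)=\psi$ if $\omega\in A$ and $\Delta_{A}(\omega)=1$ otherwise. Since $A\subseteq s_{\Gamma}(\psi)$ we have $\psi\leq\Gamma(\omega)$ on $A$, and $1\leq\Gamma(\omega)$ always, so $\Delta_{A}\leq\Gamma$; moreover $s_{\Delta_{A}}(\psi)=A$ when $\psi\neq 1$ (the case $\psi=1$ is trivial, both sides being $\top$), hence $\rho_{\Delta_{A}}(\psi)=[A]$. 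Taking the join over all measurable $A\subseteq s_{\Gamma}(\psi)$ gives $\bigvee\{\rho_{\Delta}(\psi):\Delta\leq\Gamma\}\geq\bigvee\{[A]:A\in\mathcal{A},\,A\subseteq s_{\Gamma}(\psi)\}=\rho_{0}(s_{\Gamma}(\psi))=\rho_{\Gamma}(\psi)$, which is the hard inequality, established directly in $\mathcal{B}$ with no inner measures, no countable chain condition, and no appeal to compactness of $\psi$ in $I_{\Psi}$. Note that the same construction also proves the supremum identity you reduced to, since every $P(A)$ with $A\in\mathcal{A}$, $A\subseteq s_{\Gamma}(\psi)$, is realised as $sp_{\Delta_{A}}(\psi)$; so your measure-theoretic detour, though logically sound, is an unnecessary complication once the construction is in hand.
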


\begin{proof}
Fix an element $\psi \in \Phi$ and consider a measurable subset $A \subseteq s_{\Gamma}(\psi)$. We define a simple random variable
\begin{eqnarray}
\Delta(\omega) = \left\{
\begin{array}{ll}
\psi & \textrm{if}\ \omega \in A, \\
1& \textrm{otherwise}.
\end{array}
\right.
\nonumber
\end{eqnarray}
Then certainly $\Delta(\omega) \leq \Gamma(\omega)$ for all $\omega \in \Omega$, hence $\Delta \leq \Gamma$. Furthermore we have $\rho_{\Delta}(\psi) = [A]$. This implies that
\begin{eqnarray}
\bigvee \{\rho_{\Delta}(\psi):\Delta \leq \Gamma\} \geq \bigvee\{[A]:A \subseteq s_{\Gamma}(\psi),A \in \mathcal{A}\} = \rho_{0}(s_{\Gamma}(\psi)).
\nonumber
\end{eqnarray}
Conversely, for all $\Delta \leq \Gamma$ it holds that $s_{\Delta}(\psi) \subseteq s_{\Gamma}(\psi)$ and that $s_{\Delta}(\psi) \in \mathcal{A}$. Therefore, we conclude that
\begin{eqnarray}
\bigvee \{\rho_{\Delta}(\psi):\Delta \leq \Gamma\} \leq \bigvee\{[A]:A \subseteq s_{\Gamma}(\psi),A \in \mathcal{A}\} = \rho_{0}(s_{\Gamma}(\psi)).
\nonumber
\end{eqnarray}
This proves that $\rho_{\Gamma}(\psi) =  \bigvee \{\rho_{\Delta}(\psi): \Delta \leq \Gamma\}$ for all $\psi \in \Psi$, hence (\ref{eq:IdComplSimplAoP}) holds.
\end{proof}

Theorem \ref{th:IdOfAoPs} shows that the a.o.p of a random variable is in the ideal completion of the information algebra $A_{s}$ of simple a.o.p. This ideal completion contains \textit{allocations of probability} $\rho_{\Gamma} : \mathcal{B} \rightarrow I_\Phi$ of the random mappings associated with random variables. The ideal completion of $A_{s}$ is a compact information algebra and $(\ref{eq:IdComplSimplAoP})$ shows that the mapping $\Gamma \mapsto \rho_{\Gamma}$ is \textit{continuous}.
It is in fact a homomorphism between the algebra of generalised random variables and their a.o.p as the following theorem shows:

\begin{theorem} \label{th:HomoOfGenRanVar}
Let $\Gamma,\Gamma_{1},\Gamma_{2}$ be random variables on an information algebra $(\Phi,\cdot,0,1;E)$ with $E = \{\epsilon_x:x \in Q\}$ and $x \in Q$. Then
\begin{eqnarray}
\rho_{\Gamma_{1} \cdot \Gamma_{2}} &=& \rho_{\Gamma_{1}} \cdot \rho_{\Gamma_{2}},
\nonumber \\
\rho_{\epsilon_x(\Gamma)} &=& \epsilon_x(\rho_{\Gamma}).
\nonumber
\end{eqnarray}
\end{theorem}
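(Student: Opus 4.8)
The plan is to reduce the statement about generalised random variables to the corresponding statement for simple random variables, which is already established in Theorem \ref{thOpAllocProbSimRV}, by exploiting the continuity of the map $\Gamma \mapsto \rho_\Gamma$ proved in Theorem \ref{th:IdOfAoPs} together with the representation of a generalised random variable as the supremum of the simple random variables it dominates.

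First I would record the two structural facts I intend to lean on. From Theorem \ref{th:IdOfAoPs} we have the identity $\rho_\Gamma = \bigvee\{\rho_\Delta : \Delta \leq \Gamma\}$, where $\Delta$ ranges over simple random variables; and from Lemma \ref{le:RVasGenRV} (or directly from the ideal-completion construction of Section \ref{sec:GenRV}), a generalised random variable satisfies $\Gamma = \bigvee\{\Delta \in \mathcal{R}_s : \Delta \leq \Gamma\}$, with combination and extraction in $\mathcal{R}$ given by the associative-join formulas (\ref{eq:CombGenRV}) and (\ref{eq:ExtrGenRV}). The homomorphism property for simple variables, (\ref{eq:RhoIsHomom}), gives $\rho_{\Delta_1 \cdot \Delta_2} = \rho_{\Delta_1}\cdot\rho_{\Delta_2}$ and $\rho_{\epsilon_x(\Delta)} = \epsilon_x(\rho_{\Delta})$.

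For combination, I would argue as follows. By (\ref{eq:CombGenRV}), the simple random variables dominated by $\Gamma_1 \cdot \Gamma_2$ are exactly (cofinally) the combinations $\Delta_1 \cdot \Delta_2$ with $\Delta_1 \leq \Gamma_1$ and $\Delta_2 \leq \Gamma_2$. Applying Theorem \ref{th:IdOfAoPs} to $\Gamma_1\cdot\Gamma_2$ and then the simple-variable homomorphism (\ref{eq:RhoIsHomom}),
\begin{eqnarray*}
\rho_{\Gamma_1 \cdot \Gamma_2} &=& \bigvee\{\rho_{\Delta_1 \cdot \Delta_2} : \Delta_1 \leq \Gamma_1, \Delta_2 \leq \Gamma_2\} \\
&=& \bigvee\{\rho_{\Delta_1} \cdot \rho_{\Delta_2} : \Delta_1 \leq \Gamma_1, \Delta_2 \leq \Gamma_2\}.
\end{eqnarray*}
On the other side, combination of a.o.p is join-continuous (it is the supremum in the semilattice $A_\Psi$, and Theorem \ref{comballoc} / the distributive law in the complete Boolean algebra let joins pass through), so by Theorem \ref{th:IdOfAoPs} applied to each factor,
\begin{eqnarray*}
\rho_{\Gamma_1} \cdot \rho_{\Gamma_2} &=& \left(\bigvee\{\rho_{\Delta_1}:\Delta_1 \leq \Gamma_1\}\right) \cdot \left(\bigvee\{\rho_{\Delta_2}:\Delta_2 \leq \Gamma_2\}\right) \\
&=& \bigvee\{\rho_{\Delta_1} \cdot \rho_{\Delta_2} : \Delta_1 \leq \Gamma_1, \Delta_2 \leq \Gamma_2\}.
\end{eqnarray*}
Comparing the two displays gives $\rho_{\Gamma_1 \cdot \Gamma_2} = \rho_{\Gamma_1} \cdot \rho_{\Gamma_2}$. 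For extraction I would proceed in the same spirit: by (\ref{eq:ExtrGenRV}) the simple variables below $\epsilon_x(\Gamma)$ are cofinally the $\epsilon_x(\Delta)$ with $\Delta \leq \Gamma$, so Theorem \ref{th:IdOfAoPs} and (\ref{eq:RhoIsHomom}) yield $\rho_{\epsilon_x(\Gamma)} = \bigvee\{\epsilon_x(\rho_\Delta):\Delta\leq\Gamma\}$, while continuity of the a.o.p extraction operator $\epsilon_x$ (it is defined by a supremum in (\ref{eq:ExtractAoP}) and commutes with directed joins) gives $\epsilon_x(\rho_\Gamma) = \epsilon_x(\bigvee\{\rho_\Delta:\Delta\leq\Gamma\}) = \bigvee\{\epsilon_x(\rho_\Delta):\Delta\leq\Gamma\}$, and the two coincide.

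The main obstacle, and the step deserving the most care, is justifying that the supremum over simple variables may be pushed through combination and extraction of allocations of probability — that is, that $\cdot$ and $\epsilon_x$ on $A_\Psi$ are \emph{continuous} with respect to the directed joins involved here. Combination through join rests on the complete-Boolean-algebra distributivity already exploited in Theorem \ref{th:AoPsemilattice}, and extraction through join rests on the definitional supremum in (\ref{eq:ExtractAoP}); in both cases I must check that the relevant index families ($\{\Delta_1 \cdot \Delta_2\}$ and $\{\epsilon_x(\Delta)\}$) are directed and \emph{cofinal} among the simple variables dominated by the combined, respectively extracted, generalised variable, so that the two suprema genuinely range over the same set. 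Once this cofinality and the interchange of suprema are established, the homomorphism identities follow mechanically from the simple-variable case.
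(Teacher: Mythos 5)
Your proposal is correct and follows essentially the same route as the paper's own proof: reduce to simple random variables via the representation $\rho_\Gamma = \bigvee\{\rho_\Delta : \Delta \leq \Gamma\}$ of Theorem \ref{th:IdOfAoPs} together with (\ref{eq:CombGenRV}) and (\ref{eq:ExtrGenRV}), apply the simple-variable homomorphism (\ref{eq:RhoIsHomom}), and push the directed suprema through combination and extraction of allocations using distributivity in the complete Boolean algebra and the interchange of suprema. The ``cofinality'' and ``join-continuity'' checks you flag as needing care are exactly the pointwise computations the paper carries out explicitly, so nothing is missing.
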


The operations on the left hand side of these identities belong to the algebra of random variables, whereas those on the right hand side to the algebra of a.o.p.

\begin{proof}
We have to show that 
\begin{eqnarray}
\rho_{\Gamma_{1} \cdot  \Gamma_{2}}(\psi) &=& (\rho_{\Gamma_{1}} \cdot  \rho_{\Gamma_{2}})(\psi),
\nonumber \\
\rho_{\epsilon_x(\Gamma)}(\psi) &=& \epsilon_x(\rho_{\Gamma})(\psi),
\nonumber
\end{eqnarray}
for all $\psi \in \Phi$. %(or even for all elements of $I(\Phi)$).

(1) We noted above that the mapping $\Gamma \mapsto \rho_{\Gamma}$ is continuous. Therefore, using (\ref{eq:CombGenRV}) and continuity, $\Delta$ denoting always simple random variables, we have
\begin{eqnarray}
\rho_{\Gamma_{1} \cdot  \Gamma_{2}}
&= &\rho_{\bigvee \{\Delta_{1} \cdot  \Delta_{2}: \Delta_{1} \leq \Gamma_{1},\Delta_{2} \leq \Gamma_{2}\}}
= \bigvee \{\rho_{\Delta_{1} \cdot  \Delta_{2}}: \Delta_{1} \leq \Gamma_{1},\Delta_{2} \leq \Gamma_{2}\}.
\nonumber
\end{eqnarray}
On the other hand, for every $\psi \in \Phi$, we obtain, using Theorem \ref{th:IdOfAoPs} and Theorem~\ref{thOpAllocProbSimRV}, and the associative and
distributive laws of Boolean algebras,
\begin{eqnarray}
\lefteqn{(\rho_{\Gamma_{1}} \cdot  \rho_{\Gamma_{2}})(\psi)}
\nonumber \\
&= &\bigvee \{\rho_{\Gamma_{1}}(\psi_{1}) \wedge \rho_{\Gamma_{2}}(\psi_{1}): \psi \leq \psi_{1} \cdot 
\psi_{2}\}
\nonumber \\
&= &\bigvee \{(\bigvee \{\rho_{\Delta_{1}}(\psi_{1}): \Delta_{1} \leq \Gamma_{1}\})
\nonumber \\
&&\wedge (\bigvee \{\rho_{\Delta_{2}}(\psi_{2}): \Delta_{2} \leq \Gamma_{2}\}): \psi \leq \psi_{1} \cdot 
\psi_{2}\}
\nonumber \\
&= &\bigvee \{\rho_{\Delta_{1}}(\psi_{1}) \wedge \rho_{\Delta_{2}}(\psi_{2}): \Delta_{1} \leq \Gamma_{1},
\Delta_{2} \leq \Gamma_{2}, \psi \leq \psi_{1} \cdot  \psi_{2}\}
\nonumber \\
&= &\bigvee \{ \bigvee \{\rho_{\Delta_{1}}(\psi_{1}) \wedge \rho_{\Delta_{2}}(\psi_{2}):
\psi \leq \psi_{1} \cdot  \psi_{2}\}: \Delta_{1} \leq \Gamma_{1}, \Delta_{2} \leq \Gamma_{2}\}
\nonumber \\
&= &\bigvee \{(\rho_{\Delta_{1}} \cdot  \rho_{\Delta_{2}})(\psi): \Delta_{1} \leq \Gamma_{1}, \Delta_{2} \leq \Gamma_{2}\}
\nonumber \\
&= &\bigvee \{\rho_{\Delta_{1} \cdot  \Delta_{2}}(\psi): \Delta_{1} \leq \Gamma_{1}, \Delta_{2} \leq \Gamma_{2}\}.
\nonumber
\end{eqnarray}
This proves that $\rho_{\Gamma_{1} \cdot  \Gamma_{2}} = \rho_{\Gamma_{1}} \cdot  \rho_{\Gamma_{2}}$.

(2) Again by continuity, we obtain from (\ref{eq:ExtrGenRV}) 
\begin{eqnarray}
\rho_{\epsilon_x(\Gamma)}
= \rho_{\bigvee \{\epsilon_x(\Delta): \Delta \leq \Gamma\}}
= \bigvee \{\rho_{\epsilon_x(\Delta)}: \Delta \leq \Gamma\}.
\nonumber
\end{eqnarray}
But, we have also, by Theorem~\ref{thOpAllocProbSimRV}, (\ref{eqFocHomo}) and Theorem \ref{th:IdOfAoPs} ,
\begin{eqnarray}
\epsilon_x(\rho_{\Gamma})(\phi)
&= &\bigvee \{\rho_{\Gamma}(\psi): \phi \leq \psi = \epsilon_x(\psi) \}
\nonumber \\
&= &\bigvee \{ \bigvee \{\rho_{\Delta}(\psi): \Delta \leq \Gamma\}: \phi \leq \psi = \epsilon_x(\psi) \}
\nonumber \\
&= &\bigvee \{ \bigvee \{\rho_{\Delta}(\psi): \phi \leq \psi = \epsilon_x(\psi) \}: \Delta \leq \Gamma\}
\nonumber \\
&= & \bigvee \{\rho_{\epsilon_x(\Delta)}(\phi): \Delta \leq \Gamma\}.
\nonumber 
\end{eqnarray}
This proves that $\rho_{x(\Gamma)} = \epsilon_x(\rho_{\Gamma})$.
\end{proof}

The following is a remarkable property of generalised random variables, which we formulate in the framework of compact information algebras. The interest of this theorem will become clear later especially in relation to support functions, see Chapter \ref{subsec:SuppFcts}.

\begin{theorem} \label{thCondOfGenRV}
Let $(\Phi,\Phi_f,\cdot,0,1:E)$ be a compact information algebra with finite elements $\Phi_{f}$ such that $\Phi_{f}$ is a subalgebra of $\Phi$. Let $\Gamma$ be a random variable in $\Phi,$. Then, for any directed set $D \subseteq \Phi$,
\begin{eqnarray} \label{eq:AoPCond}
\rho_{\Gamma}(\bigsqcup D) = \bigwedge_{\psi \in D} \rho_{\Gamma}(\psi).
\end{eqnarray}
\end{theorem}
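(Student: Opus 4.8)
The plan is to prove the identity $\rho_\Gamma(\bigvee X) = \bigwedge_{\psi \in X} \rho_\Gamma(\psi)$ for a directed set $X$ by exploiting the characterization of $\rho_\Gamma$ as an inner probability (or equivalently through $\rho_0$) together with the order-continuity property of probability algebras stated in Lemma~\ref{downward}. The key structural fact is that the support sets of $\Gamma$ turn intersections into intersections: since $\psi \leq \Gamma(\omega)$ is the defining condition, we have
\begin{eqnarray*}
s_\Gamma(\bigvee X) = \{\omega : \bigvee X \leq \Gamma(\omega)\} = \bigcap_{\psi \in X} \{\omega : \psi \leq \Gamma(\omega)\} = \bigcap_{\psi \in X} s_\Gamma(\psi),
\end{eqnarray*}
because $\Gamma(\omega)$ lies in the ideal completion $I_\Psi$, which is a complete lattice, so $\bigvee X \leq \Gamma(\omega)$ holds exactly when every $\psi \in X$ satisfies $\psi \leq \Gamma(\omega)$.

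First I would establish this set-theoretic identity carefully, noting that it uses no directedness and holds for arbitrary $X$; directedness will only enter when I pass to the probability algebra. Then I would apply $\rho_0$ to both sides. The difficulty is that $\rho_0$ is only known (Theorem~\ref{th:ExtOfProj}) to commute with \emph{countable} intersections, whereas $X$ may be uncountable. This is where the countable chain condition on $\mathcal{B}$ and Lemma~\ref{downward} become essential: I would argue that the family $\{\rho_\Gamma(\psi) : \psi \in X\} = \{\rho_0(s_\Gamma(\psi)) : \psi \in X\}$ is downward directed in $\mathcal{B}$. Indeed, since $X$ is directed, for $\psi_1, \psi_2 \in X$ there is $\psi \in X$ with $\psi_1, \psi_2 \leq \psi$, hence $\rho_\Gamma(\psi) \leq \rho_\Gamma(\psi_1) \wedge \rho_\Gamma(\psi_2)$, giving downward directedness.

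Next I would extract, using the countable chain condition, a countable subfamily $\{\psi_i\}$ of $X$ whose images have the same infimum as the whole family, i.e. $\bigwedge_{\psi \in X} \rho_\Gamma(\psi) = \bigwedge_i \rho_\Gamma(\psi_i)$. By replacing the $\psi_i$ with the joins $\psi_1 \vee \cdots \vee \psi_i$ (still in $X$ by directedness, up to choosing dominating elements), I can assume the sequence is monotone increasing, and that its join in $I_\Psi$ still dominates all of $X$ in the relevant sense — here I must verify that $\bigwedge_i \rho_\Gamma(\psi_i) = \rho_\Gamma(\bigvee_i \psi_i)$ via the countable version of Theorem~\ref{th:ExtOfProj} applied to $s_\Gamma(\bigvee_i \psi_i) = \bigcap_i s_\Gamma(\psi_i)$, and that $\bigvee_i \psi_i = \bigvee X$ in $I_\Psi$. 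The latter follows because the countable subfamily was chosen cofinally enough that its join equals $\bigvee X$; this requires that the inner-probability values $\mu(\rho_\Gamma(\psi))$ determine the infimum, which is exactly the content of Lemma~\ref{downward} combined with positivity of $\mu$.

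The main obstacle I anticipate is precisely the transition from the uncountable directed family to a countable cofinal one in a way that respects \emph{both} the lattice join in $I_\Psi$ \emph{and} the infimum in $\mathcal{B}$ simultaneously. Concretely, the countable chain condition gives me a countable subfamily matching the infimum in $\mathcal{B}$, but I must ensure the corresponding join $\bigvee_i \psi_i$ in $I_\Psi$ genuinely equals $\bigvee X$; if it falls short, $\rho_\Gamma(\bigvee_i \psi_i)$ could exceed $\rho_\Gamma(\bigvee X)$. I would resolve this by showing, using positivity of $\mu$ and the inequality $\rho_\Gamma(\bigvee X) \leq \rho_\Gamma(\psi)$ for all $\psi \in X$ (hence $\rho_\Gamma(\bigvee X) \leq \bigwedge_{\psi\in X}\rho_\Gamma(\psi)$), that the reverse inequality $\bigwedge_{\psi \in X} \rho_\Gamma(\psi) \leq \rho_\Gamma(\bigvee X)$ holds by approximating from below: any measurable $A$ with $[A] \leq \bigwedge_\psi \rho_\Gamma(\psi)$ satisfies, modulo a null set, $A \subseteq s_\Gamma(\psi)$ for each $\psi$ in the countable cofinal family, hence $A \subseteq \bigcap_i s_\Gamma(\psi_i) = s_\Gamma(\bigvee_i \psi_i) \subseteq s_\Gamma(\bigvee X)$, so $[A] \leq \rho_\Gamma(\bigvee X)$. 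Taking the supremum over such $A$ via the directedness argument of Lemma~\ref{downward} then yields the reverse inequality and closes the proof.
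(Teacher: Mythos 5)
Your argument breaks at the reduction to a countable subfamily, and the break is not repairable along the lines you sketch. The countable chain condition (via Lemma~\ref{downward}) does give a countable, and by directedness monotone, family $\psi_1 \leq \psi_2 \leq \cdots$ in $X$ with $\bigwedge_{\psi \in X} \rho_\Gamma(\psi) = \bigwedge_i \rho_\Gamma(\psi_i)$, and Theorem~\ref{th:ExtOfProj} then yields $\bigwedge_i \rho_\Gamma(\psi_i) = \rho_\Gamma(\bigvee_i \psi_i)$. But $\bigvee_i \psi_i$ can be strictly smaller than $\bigvee X$, and since $s_\Gamma$ is antitone this gives $s_\Gamma(\bigvee X) \subseteq s_\Gamma(\bigvee_i \psi_i)$ --- the \emph{opposite} of the inclusion $s_\Gamma(\bigvee_i \psi_i) \subseteq s_\Gamma(\bigvee X)$ asserted in your closing paragraph. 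What you actually obtain is $\bigwedge_{\psi \in X} \rho_\Gamma(\psi) = \rho_\Gamma(\bigvee_i \psi_i) \geq \rho_\Gamma(\bigvee X)$, which is the easy inequality again; the proposed patch silently assumes $\bigvee_i \psi_i = \bigvee X$, i.e.\ exactly what there is to prove.

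The deeper problem is that your proof never uses either hypothesis of the theorem: neither that $\Gamma$ is a \emph{generalised random variable}, nor that $\Psi$ is \emph{algebraic}. Every step you take (the identity $s_\Gamma(\bigvee X) = \bigcap_{\psi \in X} s_\Gamma(\psi)$, downward directedness, ccc, $\sigma$-continuity of $\rho_0$) is available for an arbitrary random mapping into a complete lattice, and at that level of generality the statement is false. Take the set algebra $\Psi = \mathcal{P}([0,1])$ (combination is intersection, so $\phi \leq \psi$ means $\psi \subseteq \phi$, and the finite elements are the cofinite sets), let the probability space be $[0,1]$ with Lebesgue measure, and let $\Gamma(\omega) = \{\omega\}$. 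Then $s_\Gamma(\psi) = \{\omega : \Gamma(\omega) \subseteq \psi\} = \psi$, so for the directed set $X$ of cofinite sets containing a fixed point $y$ one has $\rho_\Gamma(\psi) = [\psi] = \top$ for every $\psi \in X$, while $\bigvee X = \{y\}$ and $\rho_\Gamma(\bigvee X) = [\{y\}] = \bot$. So any correct proof must exploit the structure you leave untouched, and that is what the paper does: it first proves $\rho_\Delta(\phi) = \bigwedge\{\rho_\Delta(\psi) : \psi \in \Psi_f,\ \psi \leq \phi\}$ for \emph{simple} random variables, using the finiteness of the underlying partition of $\mathcal{B}$; it then transfers this to $\Gamma$ through the representation $\rho_\Gamma = \bigvee\{\rho_\Delta : \Delta \leq \Gamma\}$ of Theorem~\ref{th:IdOfAoPs}; and only then does it treat a general directed $X$, by compactness: every finite $\eta \leq \bigvee X$ lies below some $\psi \in X$, whence $\rho_\Gamma(\bigvee X) = \bigwedge\{\rho_\Gamma(\eta) : \eta \in \Psi_f,\ \eta \leq \bigvee X\} \geq \bigwedge_{\psi \in X} \rho_\Gamma(\psi)$. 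Your measure-theoretic machinery is sound but orthogonal to the real obstruction, which is order-theoretic. Be aware, finally, that if you rework the proof along the paper's lines, the delicate point becomes the transfer step, where a supremum over $\Delta \leq \Gamma$ is interchanged with an infimum over finite $\psi \leq \phi$ by appeal to ``distributivity'' in $\mathcal{B}$; atomless probability algebras are not completely distributive, so that interchange is where any rigorous treatment must concentrate its effort.
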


\begin{proof}
We prove first the identity 
\begin{eqnarray} \label{eq:CondOfGenRV1}
\rho_{\Delta}(\phi) = \bigwedge \{\rho_{\Delta}(\psi):\psi \in \Psi_{f},\psi \leq \phi\}.
\end{eqnarray}
for simple random variables $\Delta$. Using the convention introduced above, we write $\Delta([B_{i}]) = \psi_{i} \in \Phi_f$, where the $[B_{i}]$ form a partition of $\mathcal{B}$ for $i=1,\ldots,n$. Then its a.o.p is given by $\rho_{\Delta}(\psi) = \vee\{[B_{i}]:\psi \leq \psi_{i}\}$ (see (\ref{eq:AoSofSimpleRV}). Using this,we obtain
\begin{eqnarray}
 \bigwedge \{\rho_{\Delta}(\psi):\psi \in \Phi_{f},\psi \leq \phi\}
 =  \bigwedge \{\vee_{\psi \leq \psi_{i}} [B_{i}]:\psi \in \Phi_{f},\psi \leq \phi\}
\nonumber
\end{eqnarray}
Since the partition $[B_{i}]$ of $\mathcal{B}$ is finite, the join on the right hand side extends for every $\psi$ only over a finite number of elements $[B_{i}]$. Further, as $\psi$ increases, the number of these elements can only decrease. 
But in $\rho_{\Delta}(\phi) = \vee\{[B_{i}]:\phi \leq \psi_{i}\}$ also only a finite number of elements $[B_{i}]$ appear and this number must be less or equal to the number for any $\psi \leq \phi$. So, as $\psi$ increases towards $\phi$, a minimal number of elements must be attained for some $\psi_{0} \leq \phi$. Say this number is $m$ and assume that the elements are numbered as $[B_{1}],\ldots,[B_{m}]$. Then we conclude that the infimum $\bigwedge \{\rho_{\Delta}(\psi):\psi \in \Phi_{f},\psi \leq \phi\}$ equals $\vee_{i=1}^{m} [B_{i}] $. Now, for all $\psi \in \Phi_f$ such that $\psi_{0} \leq \psi \leq \phi$ we have $\psi \leq \psi_{1},\ldots,\psi_{m}$. Since $\phi = \bigvee_{\psi_{0} \leq \psi \leq \phi} \psi$, we conclude that $\phi  \leq \psi_{1},\ldots,\psi_{m}$. But this means that $\rho_{\Delta}(\phi) = \vee_{i=1}^{m} [B_{i}]$ and this proves (\ref{eq:CondOfGenRV1}).

Next, we extend (\ref{eq:CondOfGenRV1}) to any  random variable $\Gamma = \bigvee\{\Delta:\Delta \in \mathcal{R}_{s},\Delta \leq \Gamma\}$. For this purpose we use the distributive law in the complete Boolean algebra $\mathcal{B}$:
\begin{eqnarray} \label{eq:CondOfGenRV2}
\lefteqn{\rho_{\Gamma}(\phi)}
\nonumber \\
&=& \bigvee\{\rho_{\Delta}(\phi):\Delta \in \mathcal{R}_{s},\Delta \leq \Gamma\}
\nonumber \\
&=&\bigvee\{\bigwedge \{\rho_{\Delta}(\psi):\psi \in \Phi_{f},\psi \leq \phi\}:\Delta \in \mathcal{R}_{s},\Delta \leq \Gamma\}
\nonumber \\
&=&\bigwedge\{\bigvee \{\rho_{\Delta}(\psi):\Delta \in \mathcal{R}_{s},\Delta \leq \Gamma\}:\psi \in \Phi_{f},\psi \leq \phi\}
\nonumber \\
&=& \bigwedge \{\rho_{\Gamma}(\psi):\psi \in \Phi_{f},\psi \leq \phi\}
\end{eqnarray}

To conclude the proof, let $D \subseteq \Phi$ be directed. Consider $\psi \in D$. Then $\psi \leq \bigvee D$, hence $\rho_{\Gamma}(\psi) \geq \rho_{\Gamma}(\bigvee D)$, and it follows that $\bigwedge_{\psi \in D} \rho_{\Gamma}(\psi) \geq \rho_{\Gamma}(\bigvee D)$. On the other hand, if $\eta$ is a finite element and $\eta \leq \bigvee D$, then there is a $\psi \in D$ such that $\eta \leq \psi$. This implies that $\rho_{\Gamma}(\eta) \geq \rho_{\Gamma}(\psi)$. From this we conclude, using (\ref{eq:CondOfGenRV2})
\begin{eqnarray}
\lefteqn{\rho_{\Gamma}(\bigvee D)}
\nonumber \\
&=& \bigwedge\{\rho_{\Gamma}(\eta):\eta \in \Phi_{f},\eta \leq \bigvee D\}
\nonumber \\
&\geq& \bigwedge_{\psi \in D} \rho_{\Gamma}(\psi).
\nonumber
\end{eqnarray}
This proves (\ref{eq:AoPCond}).
\end{proof}

Following \cite{shafer79} we call an allocation of probsbilitxy satisfying (\ref{eq:AoPCond}) \textit{condensable}. Thus, the a.o.p s associated with random variables are condensable.

Next we examine the case of \textit{proper random variables} and their allocations of probability. According to Section \ref{subsec:RandVar}, proper random variables $\Gamma$ are ideals in $I_{\mathcal{R}_{s}}$ and as random mappings $\Gamma(\omega) = \bigvee_{i=1}^{\infty} \Delta_{i}(\omega)$, where $\Delta_{i}$ are simple random variables, they map into $I_{\Phi}$, or more precisely into $\sigma(\Phi) \subseteq I_{\Phi}$. This is equivalent to looking at an \textit{compact} information algebra $\Phi$ and considering proper random variables on the finite elements $\Phi_{f}$. By the Representation Theorem \ref{th:IdCompFiniteEl} the information algebra $\Phi, $ is isomorphic to the ideal completion $I_{\Phi_{f}}$ of the subalgebra of the finite elements $\Phi_{f}$. In the sequel, we consider this case.

A proper random variable $\Gamma$ is then the join (or the limit) of a monotone nondecreasing sequence of simple random variables $\Delta_{i}$ with $\Delta_{1} \leq \Delta_{2} \leq \ldots$, $\Gamma = \bigvee_{i=1}^{\infty} \Delta_{i}$. The simple random variables take values in $\Phi_{f}$, and the proper random variable $\Gamma$ in $\Phi$. By Lemma \ref{le:RVasGenRV} a proper random variable $\Gamma$ is a also a random variable. Therefore Theorem \ref{th:HomoOfGenRanVar} applies also to random variables. So, the mapping $\Gamma \mapsto \rho_{\Gamma}$ is a \textit{homomorphism} of the information algebra $\mathcal{R}_{\sigma}$ of proper random variables into the information algebra $A_{\Phi}$ of a.o.ps.

We are going to show more, namely that the map $\Gamma \mapsto \rho_{\Gamma}$ is a $\sigma$-homomorphism from the $\sigma$-information algebra $\mathcal{R}_{\sigma}$ into the information algebra $A_{\Phi}$. 

\begin{theorem} \label{th:SigmaHomom}
Let $(\Phi,\cdot,0,1;E)$, with $E = \{\epsilon_x:x \in Q\}$, to be an information algebra, and $\Gamma_{i} \in \mathcal{R}_{\sigma}$ for $i=1,2,\ldots$. Then
\begin{eqnarray} \label{eq:SigmaHomom}
\rho_{\bigvee_{i=1}^{\infty} \Gamma_{i}} = \bigvee_{i=1}^{\infty} \rho_{\Gamma_{i}}.  
\end{eqnarray}
\end{theorem}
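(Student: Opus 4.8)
The plan is to prove the $\sigma$-homomorphism property (\ref{eq:SigmaHomom}) by reducing it to facts already established for simple and generalised random variables, together with the continuity-type identity (\ref{eq:IdComplSimplAoP}) from Theorem \ref{th:IdOfAoPs}. The essential point is that a countable supremum $\Gamma = \bigvee_{i=1}^{\infty} \Gamma_{i}$ of random variables is itself a random variable (Theorem \ref{th:SigmaInfAlgOfRV} guarantees $\mathcal{R}_{\sigma}$ is $\sigma$-closed), and that each $\Gamma_{i}$, being a random variable, is also a generalised random variable (Lemma \ref{le:RVasGenRV}). Thus both sides of (\ref{eq:SigmaHomom}) are allocations of probability in the algebra $A_{\Psi}$, and it suffices to show they agree on every $\psi \in \Psi$.

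First I would reduce everything to simple random variables. By the construction in Section \ref{sec:GenRV} and Lemma \ref{le:ContOfRV}, the supremum $\Gamma = \bigvee_{i=1}^{\infty} \Gamma_{i}$ satisfies $\Gamma(\omega) = \bigvee_{i=1}^{\infty} \Gamma_{i}(\omega)$ pointwise, and each $\Gamma_{i}$ is the supremum of the simple random variables it dominates. The key combinatorial step, exactly as in the proof of Theorem \ref{th:CharSigmaSet}, is to re-index the double family of simple random variables approximating the $\Gamma_{i}$ into a single countable directed family of simple random variables $\Delta$ with $\Delta \leq \Gamma$; using the blocks $K_{i} = \{(h,j): 1 \leq h,j \leq i\}$ one writes $\Gamma = \bigvee \{\Delta \in \mathcal{R}_{s}: \Delta \leq \Gamma\}$ as a directed join. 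Then Theorem \ref{th:IdOfAoPs} gives $\rho_{\Gamma} = \bigvee \{\rho_{\Delta}: \Delta \in \mathcal{R}_{s}, \Delta \leq \Gamma\}$, and the same theorem applied to each $\Gamma_{i}$ gives $\rho_{\Gamma_{i}} = \bigvee \{\rho_{\Delta}: \Delta \in \mathcal{R}_{s}, \Delta \leq \Gamma_{i}\}$.

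The decisive identity to verify is that the set of simple random variables below $\bigvee_{i} \Gamma_{i}$ has, up to the suprema that matter, the same collection of allocations as the union over $i$ of the simple random variables below the partial combinations. Concretely, for a fixed $\psi \in \Psi$ I would show
\begin{eqnarray*}
\rho_{\Gamma}(\psi) = \bigvee \{\rho_{\Delta}(\psi): \Delta \leq \Gamma\} = \bigvee_{i=1}^{\infty} \bigvee \{\rho_{\Delta}(\psi): \Delta \leq \vee_{k=1}^{i} \Gamma_{k}\} = \bigvee_{i=1}^{\infty} \rho_{\vee_{k=1}^{i} \Gamma_{k}}(\psi).
\end{eqnarray*}
The middle equality rests on the re-indexing: every simple $\Delta \leq \Gamma$ is dominated by some finite combination $\vee_{k=1}^{i}\Delta'_{k}$ with $\Delta'_{k} \leq \Gamma_{k}$, hence by an element below $\vee_{k=1}^{i}\Gamma_{k}$, and conversely. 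Then, since each finite join $\vee_{k=1}^{i} \Gamma_{k}$ is again a random variable and finite combination in $A_{\Psi}$ is already handled by Theorem \ref{th:HomoOfGenRanVar}, one has $\rho_{\vee_{k=1}^{i}\Gamma_{k}} = \bigvee_{k=1}^{i} \rho_{\Gamma_{k}}$; taking the monotone limit over $i$ yields $\rho_{\Gamma} = \bigvee_{i=1}^{\infty} \rho_{\Gamma_{i}}$, which is (\ref{eq:SigmaHomom}).

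The main obstacle I anticipate is the careful justification of interchanging the countable supremum over $i$ with the supremum over dominated simple variables, i.e.\ the middle equality above. This is where the associativity and distributivity of arbitrary suprema in the complete Boolean algebra $\mathcal{B} = \mathcal{A}/\mathcal{J}$ must be invoked, exactly as in the re-indexing argument of Theorem \ref{th:CharSigmaSet}; one must check that the directed family of simple random variables below $\Gamma$ is cofinal with the union of the directed families below the partial joins $\vee_{k=1}^{i}\Gamma_{k}$, so that their images under the continuous map $\Gamma \mapsto \rho_{\Gamma}$ produce the same supremum in $\mathcal{B}$. Everything else is a routine assembly of Theorems \ref{th:IdOfAoPs}, \ref{th:HomoOfGenRanVar} and \ref{th:SigmaInfAlgOfRV}.
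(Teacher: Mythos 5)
Your proposal is correct and follows essentially the same route as the paper's proof: both rest on Theorem \ref{th:IdOfAoPs}, a compactness/cofinality argument identifying the allocation of the countable supremum with the supremum of allocations over a monotone approximating family, the finite-combination homomorphism, and a monotone limit. The only cosmetic difference is that you work with the partial joins $\vee_{k=1}^{i} \Gamma_{k}$ and invoke Theorem \ref{th:HomoOfGenRanVar} for the finite case, whereas the paper diagonalises a double sequence of simple random variables and uses the simple-variable homomorphism (\ref{eq:RhoIsHomom}); the two devices are interchangeable.
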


\begin{proof}
Since the mapping $\Gamma \mapsto \rho_{\Gamma}$ is a homomorphism, it preserves order. As a proper random variable, $\Gamma$ equals $\bigvee_{i=1}^{\infty} \Delta_{i}$, where the $\Delta_{i}$ form a monotone sequence of simple random variables. Since $\Gamma$ is also a random variable, we have by (\ref{eq:IdComplSimplAoP}) $\rho_{\Gamma} = \bigvee\{\rho_{\Delta}:\rho_{\Delta} \in \mathcal{R}_{s},\Delta \leq \Gamma\}$. The monotone sequence $\Delta_{i}$ is directed in $\mathcal{R}$. By compactness there is for every $\Delta \leq \Gamma$ an index $j$ so that $\Delta \leq \Delta_{j}$. This implies $\rho_{\Delta} \leq \rho_{\Delta_{j}}$ from which it follows that $\rho_{\Gamma} \leq \bigvee_{i=1}^{\infty} \rho_{\Delta_{i}}$. The converse inequality is evident. So we conclude that 
\begin{eqnarray} \label{eq:GenAoP}
\rho_{\Gamma} = \bigvee_{i=1}^{\infty} \rho_{\Delta_{i}}
\end{eqnarray}
if $\Gamma = \bigvee_{i=1}^{\infty} \Delta_{i}$.

Consider now the proper random variables $\Gamma_{i}$ for $i=1,2,\ldots$ and define $\Gamma = \bigvee_{i=1}^{\infty} \Gamma_{i}$. Let $\Gamma_{i} = \bigvee_{j=1}^{\infty} \Delta_{i,j}$, where for every $i=1,2,\ldots$ the sequence $\Delta_{i,1},\Delta_{i,2},\ldots$ is a monotone sequence of simple random variables. Then
\begin{eqnarray}
\Gamma = \bigvee_{i=1}^{\infty} \bigvee_{j=1}^{\infty} \Delta_{i,j}.
\nonumber
\end{eqnarray}
In the standard way, we define $\Delta_{i} = \vee_{h=1}^{i} \vee_{j=1}^{h} \Delta_{h,j}$. The $\Delta_{i}$ form a monotone sequence of simple random variables and $\Gamma = \bigvee_{i=1}^{\infty} \Delta_{i}$. By (\ref{eq:GenAoP}), the associative law for joins and the homomorphism between simple random variables and their a.o.ps we obtain
\begin{eqnarray*}
\lefteqn{\rho_{\Gamma}
= \bigvee_{i=1}^{\infty} \rho_{\Delta_{i}} 
=\bigvee_{i=1}^{\infty} \left( \vee_{h=1}^{i} \vee_{j=1}^{h} \rho_{\Delta_{i,j}}  \right)} \\
&&= \bigvee_{i=1}^{\infty} \left( \bigvee_{j=1}^{\infty} \rho_{\Delta_{i,j}}  \right)
= \bigvee_{i=1}^{\infty} \rho_{\Gamma_{i}}.
\nonumber
\end{eqnarray*}
This proves (\ref{eq:SigmaHomom}).
\end{proof}

As a preparation for an interpretation of this result, we remark that for a $\sigma$-information algebra the following general result holds:

\begin{lemma} \label{le:SigmaSupport}
Suppose $\Phi$ to be a $\sigma$-information algebra and $\Gamma$ a random mapping. Then
\begin{eqnarray} \label{eq:SigmaSupport}
s_{\Gamma}(\bigvee_{i=1}^{\infty} \psi_{i}) = \bigcap_{i=1}^{\infty} s_{\Gamma}(\psi_{i}).
\end{eqnarray}
\end{lemma}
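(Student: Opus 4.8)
The statement to prove is Lemma \ref{le:SigmaSupport}: for a $\sigma$-information algebra $(\Psi,D;\leq,\bot,\cdot,\epsilon)$ and a random mapping $\Gamma$, we have $s_{\Gamma}(\bigvee_{i=1}^{\infty} \psi_{i}) = \bigcap_{i=1}^{\infty} s_{\Gamma}(\psi_{i})$. Recall that $s_{\Gamma}(\psi) = \{\omega \in \Omega : \psi \leq \Gamma(\omega)\}$, that in a $\sigma$-information algebra $\Psi$ is closed under countable joins so that $\bigvee_{i=1}^{\infty}\psi_i$ is a well-defined element of $\Psi$, and that joins are suprema in the information order.

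The plan is to prove the two inclusions separately, and both are purely order-theoretic, relying only on the fact that $\bigvee_{i=1}^\infty \psi_i$ is the least upper bound of the $\psi_i$ in $(\Psi;\leq)$. For the inclusion $s_{\Gamma}(\bigvee_{i=1}^{\infty}\psi_i) \subseteq \bigcap_{i=1}^\infty s_{\Gamma}(\psi_i)$, I would take an arbitrary $\omega$ with $\bigvee_{i=1}^\infty \psi_i \leq \Gamma(\omega)$. Since each $\psi_j \leq \bigvee_{i=1}^\infty \psi_i$ (the join dominates each of its components), transitivity of $\leq$ gives $\psi_j \leq \Gamma(\omega)$ for every $j$, so $\omega \in s_{\Gamma}(\psi_j)$ for all $j$, hence $\omega \in \bigcap_{i=1}^\infty s_{\Gamma}(\psi_i)$.

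For the reverse inclusion $\bigcap_{i=1}^\infty s_{\Gamma}(\psi_i) \subseteq s_{\Gamma}(\bigvee_{i=1}^\infty \psi_i)$, I would take $\omega \in \bigcap_{i=1}^\infty s_{\Gamma}(\psi_i)$, meaning $\psi_i \leq \Gamma(\omega)$ for every $i$. Thus $\Gamma(\omega)$ is an upper bound of the family $\{\psi_i\}$ in $(\Psi;\leq)$ — or more precisely in the lattice $I_\Psi$ where $\Gamma(\omega)$ takes its values, but since $\bigvee_{i=1}^\infty\psi_i$ is computed as a join in $\Psi$ and $\Psi$ embeds in $I_\Psi$ preserving the order and this countable join, the supremum property applies. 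Because $\bigvee_{i=1}^\infty \psi_i$ is the least upper bound of the $\psi_i$, it follows that $\bigvee_{i=1}^\infty \psi_i \leq \Gamma(\omega)$, hence $\omega \in s_{\Gamma}(\bigvee_{i=1}^\infty\psi_i)$.

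The only subtlety — and the one point I would be careful to state explicitly rather than gloss over — is the meaning of $\bigvee_{i=1}^\infty \psi_i$ and where it lives. In the intended applications $\Gamma$ maps into $I_\Psi$ (or $\sigma(\Psi)$), so $\Gamma(\omega)$ is an ideal, while $\bigvee_{i=1}^\infty\psi_i$ is a countable join in the $\sigma$-information algebra $\Psi$. The argument works provided the order between $\Psi$-elements is consistent with the order in which $\Gamma(\omega)$ is compared against them, which holds because $\Psi$ is embedded in $I_\Psi$ and the countable join in $\Psi$ agrees with the supremum taken in the completion (Theorem \ref{th:CharSigmaSet} identifies $\sigma(\Psi)$ with exactly such countable joins). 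I do not expect any genuine obstacle: this is a direct consequence of the supremum characterisation of joins, entirely analogous to the finite-combination property $s_{\Gamma}(\phi\cdot\psi) = s_{\Gamma}(\phi)\cap s_{\Gamma}(\psi)$ established in Theorem \ref{th:AllolcSupp}, and the proof reduces to unwinding the definition of $s_\Gamma$ and applying the defining universal property of the least upper bound twice.
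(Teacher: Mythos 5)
Your proof is correct and takes essentially the same route as the paper's: both inclusions follow from the least-upper-bound characterisation of the countable join, exactly as in the paper (monotonicity of $s_\Gamma$ gives $s_{\Gamma}(\bigvee_{i=1}^{\infty}\psi_i) \subseteq \bigcap_{i=1}^{\infty} s_{\Gamma}(\psi_i)$, and the supremum property gives the converse). Your extra paragraph on where $\Gamma(\omega)$ lives is harmless added care; in the lemma's intended reading the values $\Gamma(\omega)$ and the join $\bigvee_{i=1}^{\infty}\psi_i$ lie in one and the same $\sigma$-information algebra (e.g.\ $\sigma(\Psi)$, whose countable joins are by Theorem \ref{th:CharSigmaSet} genuine suprema in $I_{\Psi}$), so the supremum argument applies directly, which is all the paper's proof uses.
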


\begin{proof}
We have 
\begin{eqnarray}
s_{\Gamma}(\bigvee_{i=1}^{\infty} \psi_{i}) 
= \{\omega \in \Omega:\bigvee_{i=1}^{\infty} \psi_{i} \leq \Gamma(\omega)\}.
\nonumber
\end{eqnarray}
Let $\psi = \bigvee_{i=1}^{\infty} \psi_{i}$. Since $\psi_{i} \leq \psi$ we conclude that $s_{\Gamma}(\psi) \subseteq s_{\Gamma}(\psi_{i})$, hence $s_{\Gamma}(\psi) \subseteq \bigcap_{i=1}^{\infty} s_{\Gamma}(\psi_{i})$. On the other hand, consider $\omega \in \bigcap_{i=1}^{\infty} s_{\Gamma}(\psi_{i})$, that is $\psi_{i} \leq \Gamma(\omega)$ for all $i=1,2,\ldots$. Then we have $\bigvee_{i=1}^{\infty} \psi_{i} = \psi \leq \Gamma(\omega)$, hence $\omega \in s_{\Gamma}(\psi)$. This shows that $s_{\Gamma}(\psi) \supseteq \bigcap_{i=1}^{\infty} s_{\Gamma}(\psi_{i})$ and this proves (\ref{eq:SigmaSupport}).
\end{proof}

Since for any proper random variable $\Gamma$ and every $\psi \in \Psi$, we have $\rho_{\Gamma}(\psi) = \rho_{0}(s_{\Gamma}(\psi))$ and the mapping $\rho_{0}$ is a $\sigma$-homomorphism from the power set of $\Omega$ onto $\mathcal{B}$ (see Theorem \ref{th:ExtOfProj}) it follows also from (\ref{eq:SigmaSupport})
\begin{eqnarray}
\rho_{\Gamma}(\bigvee_{i=1}^{\infty} \psi_{i}) = \bigwedge_{i=1}^{\infty} \rho_{\Gamma}(\psi_{i}).
\nonumber
\end{eqnarray}
An allocation of probability, which satisfies this identity is called a $\sigma$-\textit{allocation of probabiilty}. Thus, a proper random variable induces a $\sigma$-a.o.p.  Let $A_{\sigma}$ denote the image of $\mathcal{R}_{\sigma}$ under the mapping $\Gamma \mapsto \rho_{\Gamma}$ in $A_{\Phi}$. 

Next we show that continuity of extraction is also satisfied in the algebra $(A_{\sigma},D;\leq,\bot,\cdot,\epsilon)$:

\begin{theorem} \label{th:ContOfExtrAoP}
Let $(\Phi,\Phi_f,\cdot,0,1;E)$ with $E = \{\epsilon_x:x \in Q\}$ be a compact information algebra, and $\Gamma_{i} \in \mathcal{R}_{\sigma}$ for $i=1,2,\ldots$ a monotone sequence of proper random variables, $\Gamma_{1} \leq \Gamma_{2} \leq \ldots$. Then for very $x \in Q$,
\begin{eqnarray} \label{eq:ContOfExtrAoP}
\epsilon_x(\bigsqcup_{i=1}^{\infty} \rho_{\Gamma_{i}}) = \bigsqcup_{i=1}^{\infty} \epsilon_x(\rho_{\Gamma_{i}}).
\end{eqnarray}
\end{theorem}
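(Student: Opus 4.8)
The plan is to avoid any direct manipulation of suprema over supporting sets in the probability algebra $\mathcal{B}$, and instead to reduce the identity entirely to the already established homomorphism properties of the map $\Gamma \mapsto \rho_{\Gamma}$ together with the continuity of extraction inside the $\sigma$-information algebra of random variables. Recall that, by Lemma \ref{le:RVasGenRV}, every random variable is also a generalised random variable, so Theorem \ref{th:HomoOfGenRanVar} applies to the $\Gamma_i$ and to their joins; and by Theorem \ref{th:SigmaInfAlgOfRV} the system $(\mathcal{R}_{\sigma},D;\leq,\bot,\cdot,\epsilon)$ is a $\sigma$-information algebra, hence closed under countable joins and under extraction and satisfying continuity of extraction for monotone sequences.

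First I would set $\Gamma = \bigvee_{i=1}^{\infty} \Gamma_i$, which lies in $\mathcal{R}_{\sigma}$ by closure under countable combination. Applying the $\sigma$-homomorphism property of Theorem \ref{th:SigmaHomom} to the left-hand side gives $\bigvee_{i=1}^{\infty} \rho_{\Gamma_i} = \rho_{\Gamma}$, so that $\epsilon_x(\bigvee_{i=1}^{\infty} \rho_{\Gamma_i}) = \epsilon_x(\rho_{\Gamma})$. Next I would push the extraction operator through the homomorphism using Theorem \ref{th:HomoOfGenRanVar}, obtaining $\epsilon_x(\rho_{\Gamma}) = \rho_{\epsilon_x(\Gamma)}$. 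Now I invoke continuity of extraction in $\mathcal{R}_{\sigma}$: since the $\Gamma_i$ form a monotone sequence, Theorem \ref{th:SigmaInfAlgOfRV} yields $\epsilon_x(\Gamma) = \epsilon_x(\bigvee_{i=1}^{\infty} \Gamma_i) = \bigvee_{i=1}^{\infty} \epsilon_x(\Gamma_i)$, whence $\rho_{\epsilon_x(\Gamma)} = \rho_{\bigvee_{i=1}^{\infty} \epsilon_x(\Gamma_i)}$.

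It remains to run the same homomorphism machinery in reverse. Each $\epsilon_x(\Gamma_i)$ belongs to $\mathcal{R}_{\sigma}$ by closure under extraction, so a second application of Theorem \ref{th:SigmaHomom} gives $\rho_{\bigvee_{i=1}^{\infty} \epsilon_x(\Gamma_i)} = \bigvee_{i=1}^{\infty} \rho_{\epsilon_x(\Gamma_i)}$, and a second application of Theorem \ref{th:HomoOfGenRanVar} turns each term into $\rho_{\epsilon_x(\Gamma_i)} = \epsilon_x(\rho_{\Gamma_i})$. Chaining these equalities produces $\epsilon_x(\bigvee_{i=1}^{\infty} \rho_{\Gamma_i}) = \bigvee_{i=1}^{\infty} \epsilon_x(\rho_{\Gamma_i})$, which is the claim.

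The argument is essentially formal once the homomorphism results are in hand, so I do not expect a serious analytic obstacle; the only point that requires care is the bookkeeping of hypotheses. Specifically, I must ensure at each step that the objects fed into Theorem \ref{th:SigmaHomom} and Theorem \ref{th:HomoOfGenRanVar} genuinely lie in $\mathcal{R}_{\sigma}$ — this is exactly what closure of $\mathcal{R}_{\sigma}$ under extraction and countable joins guarantees — and that the continuity-of-extraction step is legitimate, for which the monotonicity hypothesis on the $\Gamma_i$ (the one place it is actually used) is precisely what is needed. A minor subtlety worth recording is that Theorem \ref{th:SigmaHomom} simultaneously asserts the existence of the countable joins of allocations appearing on both sides, so no separate completeness argument for $A_{\Psi}$ is required.
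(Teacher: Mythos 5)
Your proof is correct and follows essentially the same route as the paper: the paper likewise combines continuity of extraction in $(\mathcal{R}_{\sigma},D;\leq,\bot,\cdot,\epsilon)$ (Theorem \ref{th:SigmaInfAlgOfRV}) with the homomorphism property of $\Gamma \mapsto \rho_{\Gamma}$ (Theorem \ref{th:HomoOfGenRanVar}) and the $\sigma$-homomorphism property (Theorem \ref{th:SigmaHomom}), merely presenting the argument as applying $\rho$ to both sides of $\epsilon_x(\bigvee_{i=1}^{\infty}\Gamma_i)=\bigvee_{i=1}^{\infty}\epsilon_x(\Gamma_i)$ rather than chaining equalities from one side of the claim to the other.
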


\begin{proof}
The proof is based on the continuity of extraction in the $\sigma$-information algebra $(\mathcal{R}_{\sigma},D;\leq,\bot,\cdot,\epsilon)$ of  proper random variables, see Theorem \ref{th:SigmaInfAlgOfRV},
\begin{eqnarray}
\epsilon_x(\bigsqcup_{i=1}^{\infty} \Gamma_{i}) = \bigsqcup_{i=1}^{\infty} \epsilon_x(\Gamma_{i}).
\nonumber
\end{eqnarray}

Take the a.o.p of both sides.  Using the fact that the mapping is a homomorphism of random variables, Theorem \ref{th:HomoOfGenRanVar}, and Theorem \ref{th:SigmaHomom}, this leads on the left hand to
\begin{eqnarray}
\rho_{\epsilon_x(\bigsqcup_{i=1}^{\infty} \Gamma_{i})} = \epsilon_x(\rho_{\bigsqcup_{i=1}^{\infty} \Gamma_{i}})
= \epsilon_x(\bigvee_{i=1}^{\infty} \rho_{\Gamma_{i}}).
\nonumber
\end{eqnarray}
On the right hand side we obtain by the same argument
\begin{eqnarray}
\rho_{\bigsqcup_{i=1}^{\infty} \epsilon_x(\Gamma_{i})} = \bigsqcup_{i=1}^{\infty} \rho_{\epsilon_x(\Gamma_{i})}
= \bigsqcup_{i=1}^{\infty} \epsilon_x(\rho_{\Gamma_{i}})
\nonumber
\end{eqnarray}
This proves the identity (\ref{eq:ContOfExtrAoP}).
\end{proof}

What can be said about the mapping $\Gamma \mapsto \rho_{\Gamma}$ for random mappings $\Gamma$ in general? Let $(\Phi,\cdot,0,1;E)$ be an information algebra, $(\Omega,\mathcal{A},P)$ a probability space and $\Gamma: \Omega \rightarrow \Phi$ a random mapping. The mapping $\Gamma \mapsto \rho_{\Gamma}$ is obviously \textit{order-preserving}: $\Gamma_{1} \leq \Gamma_{2}$ means that $\Gamma_{1}(\omega) \leq \Gamma_{2}(\omega)$ for all $\omega \in \Omega$. This implies that $s_{\Gamma_{1}}(\psi) \subseteq s_{\Gamma_{2}}(\psi)$ for all $\psi \in \Psi$, and from this it follows that $\rho_{\Gamma_{1}}(\psi) = \rho_{0}(s_{\Gamma_{1}}(\psi)) \leq \rho_{0}(s_{\Gamma_{2}}(\psi)) = \rho_{\Gamma_{1}}(\psi)$ for all $\psi \in \Psi$, hence $\rho_{\Gamma_{1}} \leq \rho_{\Gamma_{2}}$. 

But the mapping is no more a homomorphism. In fact, let $\Gamma_{1}$ and $\Gamma_{2}$ be two random mappings. Then the support of the combination of these random mappings is
\begin{eqnarray}
s_{\Gamma_{1} \cdot \Gamma_{2}}(\psi) &=& \{\omega \in \Omega:\psi \leq \Gamma_{1}(\omega) \cdot \Gamma_{2}(\omega)\}
\nonumber \\
&=& \bigcup \left\{ \omega: \psi_{1} \leq \Gamma_{1}(\omega),\psi_{2} \leq \Gamma_{2}(\omega), \psi \leq \psi_{1} \cdot \psi_{2}  \right\}
\nonumber \\
&=& \bigcup \{s_{\Gamma_{1}}(\psi_{1}) \cap s_{\Gamma_{2}}(\psi_{2}):\psi \leq \psi_{1} \cdot \psi_{2}\}.
\nonumber
\end{eqnarray}
Note that for any index set $I$ and $H_i \subseteq \Omega$, $H_i \subseteq \bigcup_{i \in I} H_i$, hence $\rho_0(H_i) \leq \rho_0(\bigcup_{i \in I} H_i)$ and therefore $\bigvee_{i \in I}\rho_0(H_i) \leq \rho_0(\bigcup_{i \in I} H_i)$. This implies then for all $\psi \in \Psi$
\begin{eqnarray}
\rho_{\Gamma_{1} \cdot \Gamma_{2}}(\psi)
&=& \rho_0(s_{\Gamma_1 \cdot \Gamma_2}(\psi)) \\
&=& \rho_{0}(\bigcup \{s_{\Gamma_{1}}(\psi_{1}) \cap s_{\Gamma_{2}}(\psi_{2}):\psi \leq \psi_{1} \cdot \psi_{2}\})
\nonumber \\
&\geq& \bigvee \{\rho_{0}(s_{\Gamma_{1}}(\psi_{1}) \cap s_{\Gamma_{2}}(\psi_{2})):\psi \leq \psi_{1} \cdot \psi_{2}\}
\nonumber \\
&=& \bigvee \{\rho_{0}(s_{\Gamma_{1}}(\psi_{1})) \wedge \rho_{0}(s_{\Gamma_{2}}(\psi_{2})):\psi \leq \psi_{1} \cdot \psi_{2}\}
\nonumber \\
&=& \bigvee \{\rho_{\Gamma_{1}}(\psi_{1})) \wedge \rho_{\Gamma_{2}}(\psi_{2})):\psi \leq \psi_{1} \cdot \psi_{2}\}
\nonumber \\
&=& (\rho_{\Gamma_{1}} \cdot \rho_{\Gamma_{2}})(\psi).
\end{eqnarray}
So, we have $\rho_{\Gamma_{1} \cdot \Gamma_{2}} \geq \rho_{\Gamma_{1}} \cdot \rho_{\Gamma_{2}}$.
Equality holds only in particular cases, like for instance for random variables. Since $\rho_{\Gamma_{1} \cdot \Gamma_{2}}$ allocates more probability to a hypothesis $\psi \in \Psi$ than $\rho_{\Gamma_{1}} \cdot \rho_{\Gamma_{2}}$ does, it seems that by the map to the allocation of probability some information is lost in general. 

Consider also extraction, that is a random mapping $\Gamma$ and $x \in Q$. Then, since $(\epsilon_x(\Gamma))(\omega) = \epsilon_x(\Gamma(\omega))$,
\begin{eqnarray}
\lefteqn{s_{\epsilon_x(\Gamma)}(\psi) = \{\omega \in \Omega:\psi \leq \epsilon_x(\Gamma(\omega))\}}
\nonumber \\
&&=\bigcup \{s_{\Gamma}(\phi):\phi = \epsilon_x(\phi),\psi \leq \phi\}.
\nonumber \\
\end{eqnarray}
Thus, we obtain for the a.o.p of $\epsilon_x(\Gamma)$,
\begin{eqnarray}
\lefteqn{\rho_{\epsilon_x(\Gamma)}(\psi)
= \rho_{0}(\bigcup \{s_{\Gamma}(\phi):\psi \leq \phi = \epsilon_x(\phi)\})}
\nonumber \\
&&\geq \bigvee \{\rho_{0}(s_{\Gamma}(\phi)):\psi \leq \phi = \epsilon_x(\phi)\}
\nonumber \\
&&= \bigvee \{\rho_{\Gamma}(\phi):\psi \leq \phi = \epsilon_x(\psi)\}
\nonumber \\
&&=(\epsilon_x(\rho_{\Gamma}))(\psi).
\nonumber
\end{eqnarray}
So, here we find that $\rho_{\epsilon_x(\Gamma)} \geq \epsilon_x(\rho_{\Gamma})$ and again equality holds only in particular cases. This is a second indication that the random mapping $\Gamma$ contains more information than its a.o.p $\rho_{\Gamma}$. It follows that random maps and a.o.p.s are \textit{not} equivalent models of uncertainty, except in special cases.

%

%%%%%%%%%%%%%%%%%%%%%%%%%%%%%%%%%%%%%%%%%%%%%%%%%%%%%%%%%%%%

\section{Characterization of support functions} \label{subsec:SuppFcts}

As we have noted in Section \ref{subsec:RanMaps}, we may consider a random mapping $\Gamma$ as information, that is, $\Gamma(\omega)$ is a``piece of information'', which can be asserted, provided $\omega$ is the sample element chosen by a chance process, or the ``correct'' assumption in a set of possible assumptions $\Omega$. Here, information $\Gamma(\omega)$ may either be an element of the set $\Phi$ of an information algebra $(\Phi,\cdot,\epsilon)$ or else an \textit{ideal} of $\Phi$, hence an element of the ideal completion $I_\Phi$ of $\Phi$. We have defined the \textit{allocation of support} $s_{\Gamma}(\psi)$ of a random mapping as the set of elements $\omega \in \Omega$, which imply $\psi$, i.e. such that $\psi$ belongs to the ideal $\Gamma(\omega)$, $\psi \in \Gamma(\omega)$ or $\psi \leq \Gamma(\omega)$, see Sections \ref{subsec:SimpleRanMaps}  and \ref{subsec:RanMaps}.  Any  $\omega \in s_{\Gamma}(\psi)$ is an assumption, i.e. an argument, which permits to infer the piece of information $\psi$ in the light of the random mapping $\Gamma$. So, the larger the set $s_{\Gamma}(\psi)$, the more arguments are available to support $\psi$. Or, more to the point, the more probable, the more likely it is that the correct, but unknown assumption $\omega$ belongs to $s_{\Gamma}(\psi)$, the stronger the hypothesis $\psi$ is supported. This probability was denoted by $sp_{\Gamma}(\psi)$ and called the \textit{degree of support} of a hypothesis allocated by a random mapping $\Gamma$. We refer to Section \ref{subsec:RanMaps} for this point of view. The degrees of support can be seen as a numerical map or function $sp_{\Gamma} : \Psi \rightarrow [0,1]$ of $\Psi$ into the unit interval. The goal of this section is to study this function. 

We do not exclude in this section that $\Gamma(\omega) = 0$ for some $\omega$. This represents improper information, which can be interpreted as contradictory information. Under semantic aspects such improper information could and should be excluded. We refer to Section \ref{subsec:SimpleRanMaps} for a discussion of this issue in the context of simple random functions. But for the present discussion this is not essential. If $\Gamma(\omega) \not= 0$ for all $\omega$, the random mapping is called \textit{normalised}.

Consider then a random mapping $\Gamma : \Omega \rightarrow \Phi$ from a probability space $(\Omega,\mathcal{A},P)$ into an idempotent generalised information algebra $\Phi$. The corresponding support is defined for any $\psi \in \Phi$ as
\begin{eqnarray}
s_{\Gamma}(\psi) = \{\omega \in \Omega:\psi \leq \Gamma(\omega)\}.
\nonumber
\end{eqnarray}
The set $s_{\Gamma}$ thus contains all assumptions $\omega$ for which $\Gamma(\omega)$ implies $\psi$. The following theorem collects a few elementary properties of the mapping $s_{\Gamma} : \Phi \rightarrow \mathcal{P}(\Omega)$ (see also Theorem \ref{th:AllolcSupp}):

\begin{theorem} \label{th:ElPropAllSp}
If $\Gamma : \Omega \rightarrow \Phi$, then
\begin{enumerate}
\item $s_{\Gamma}(1) = \Omega$,
\item If $\phi \leq \psi$, then $s_{\Gamma}(\psi) \subseteq s_{\Gamma}(\psi)$,
\item $s_{\Gamma}(\phi \cdot \psi) = s_{\Gamma}(\phi) \cap s_{\Gamma}(\psi)$ for all $\phi,\psi \in \Phi$,
\item if $\Gamma$ is normalised, then $s_{\Gamma}(0) = \emptyset$.
\end{enumerate}
\end{theorem}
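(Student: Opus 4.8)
The plan is to verify each of the four claims directly from the definition $s_{\Gamma}(\psi) = \{\omega \in \Omega : \psi \leq \Gamma(\omega)\}$, using only the order-theoretic structure of the idempotent generalised information algebra. The key facts I would invoke are those recorded in Section \ref{subsec:IdempotCase}: in an idempotent algebra the relation $\phi \leq \psi$ means $\phi \cdot \psi = \psi$, combination is the supremum in this order ($\phi \cdot \psi = \phi \vee \psi$), and $1 \leq \psi \leq 0$ holds for every $\psi$, with $1$ the unit and $0$ the null element. These hold equally whether the values $\Gamma(\omega)$ lie in $\Psi$ or in the ideal completion $I_{\Psi}$, so no case distinction is needed.

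For item 1, since $1$ is the least element of $(\Psi;\leq)$ we have $1 \leq \Gamma(\omega)$ for every $\omega \in \Omega$, whence $s_{\Gamma}(1) = \Omega$. For item 2 (whose statement should read $s_{\Gamma}(\psi) \subseteq s_{\Gamma}(\phi)$ when $\phi \leq \psi$, the current $s_{\Gamma}(\psi) \subseteq s_{\Gamma}(\psi)$ being a typo), I would take $\omega \in s_{\Gamma}(\psi)$, so $\psi \leq \Gamma(\omega)$, and use transitivity of $\leq$ with $\phi \leq \psi$ to conclude $\phi \leq \Gamma(\omega)$, i.e. $\omega \in s_{\Gamma}(\phi)$. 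For item 3, I would note that $\phi \cdot \psi$ is the supremum $\phi \vee \psi$, so $\phi \cdot \psi \leq \Gamma(\omega)$ holds if and only if both $\phi \leq \Gamma(\omega)$ and $\psi \leq \Gamma(\omega)$; reading this set-theoretically gives $s_{\Gamma}(\phi \cdot \psi) = s_{\Gamma}(\phi) \cap s_{\Gamma}(\psi)$.

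For item 4, I would observe that $0 \leq \Gamma(\omega)$ means $0 \cdot \Gamma(\omega) = \Gamma(\omega)$, but $0 \cdot \Gamma(\omega) = 0$ because $0$ is the null element; hence $0 \leq \Gamma(\omega)$ forces $\Gamma(\omega) = 0$. Thus $s_{\Gamma}(0) = \{\omega \in \Omega : \Gamma(\omega) = 0\}$, which is empty precisely when $\Gamma$ is normalised.

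I do not expect any genuine obstacle here: every item reduces to an elementary manipulation of the order $\leq$ together with the defining properties of $0$ and $1$. The only points requiring a moment's care are recognising that combination equals the join (so that item 3 splits the condition into a conjunction) and that the stated inclusion in item 2 is oriented by the fact that a logically stronger hypothesis is supported by fewer assumptions; both are immediate once the characterisation $\phi \leq \psi \Leftrightarrow \phi \cdot \psi = \psi$ is in hand.
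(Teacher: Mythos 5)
Your proof is correct and follows essentially the same route as the paper's (much terser) proof: item 1 from $1$ being the least element, item 3 from the equivalence $\phi,\psi \leq \Gamma(\omega) \Leftrightarrow \phi \cdot \psi \leq \Gamma(\omega)$, and items 2 and 4 from the order structure and the definition of a normalised mapping. You also correctly spot the typos in the statement (item 2 should read $s_{\Gamma}(\psi) \subseteq s_{\Gamma}(\phi)$, and item 3 should quantify over $\phi,\psi \in \Psi$), and your explicit argument that $0 \leq \Gamma(\omega)$ forces $\Gamma(\omega)=0$ just spells out what the paper leaves implicit.
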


\begin{proof}
(1) follows since $1$ is the least element in $\Phi$, hence $1 \leq \Gamma(\omega)$ for all $\omega \in \Omega$. (2) is obvious. (3) follows, since $\phi,\psi \leq \Gamma(\omega)$ if and only if $\phi \cdot \psi \leq \Gamma(\omega)$ and (4) follows from the definition of a normalised random mapping.
\end{proof}

Sometimes $(\Phi,\leq)$ may be a $\sigma$-semilattice or even a complete lattice under information order, for instance, if $(\Phi,\cdot,0,1;E)$ is a compact or continuous information algebra. Then something more can be said about the support of a random mapping.

\begin{theorem} \label{th:ElPropAllSp2}
Let $\Gamma : \Omega \rightarrow \Phi$ be a random mapping.
\begin{enumerate}
\item If $(\Phi,\leq)$ is a $\sigma$-semilattice, $\psi_{1},\psi_{2},\ldots \in \Phi$, then
\begin{eqnarray} \label{eq:AllOfSupSigma}
s_{\Gamma}(\bigvee_{i=1}^{\infty} \psi_{i}) = \bigcap_{i=1}^{\infty} s_{\Gamma}(\psi_{i}).
\end{eqnarray}
\item If $(\Phi,\leq)$ is a complete lattice, $X \subseteq \Phi$, then
\begin{eqnarray} \label{eq:AllOfSupCompl}
s_{\Gamma}(\bigvee X) = \bigcap_{\psi \in X} s_{\Gamma}(\psi).
\end{eqnarray}
\end{enumerate}
\end{theorem}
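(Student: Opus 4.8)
The plan is to prove both parts by direct double inclusion of sets, exactly as in the proof of the finite case (Theorem \ref{th:ElPropAllSp}, item 3) and as already done in Lemma \ref{le:SigmaSupport} for $\sigma$-information algebras. The only ingredient needed is that $\psi_i \leq \bigvee_i \psi_i$ (respectively $\psi \leq \bigvee X$) together with the characterisation of the supremum as a least upper bound; no special structure of the random mapping beyond its definition is required.

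For part (1), first I would fix $\psi = \bigvee_{i=1}^{\infty} \psi_i$. Since $\psi_i \leq \psi$ for every $i$, item 2 of Theorem \ref{th:ElPropAllSp} gives $s_{\Gamma}(\psi) \subseteq s_{\Gamma}(\psi_i)$ for every $i$, hence $s_{\Gamma}(\psi) \subseteq \bigcap_{i=1}^{\infty} s_{\Gamma}(\psi_i)$. For the reverse inclusion, take $\omega \in \bigcap_{i=1}^{\infty} s_{\Gamma}(\psi_i)$, which by definition means $\psi_i \leq \Gamma(\omega)$ for all $i$; thus $\Gamma(\omega)$ is an upper bound of the family $\{\psi_i\}$, and since $\bigvee_{i=1}^{\infty} \psi_i$ is the least such upper bound we obtain $\psi = \bigvee_{i=1}^{\infty} \psi_i \leq \Gamma(\omega)$, i.e. $\omega \in s_{\Gamma}(\psi)$. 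This establishes (\ref{eq:AllOfSupSigma}). This is essentially a verbatim repetition of Lemma \ref{le:SigmaSupport}, so I would simply cite that pattern.

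Part (2) is proved by the identical argument with the countable index set replaced by the arbitrary set $X$. Set $\psi = \bigvee X$. For $\psi' \in X$ we have $\psi' \leq \psi$, so $s_{\Gamma}(\psi) \subseteq s_{\Gamma}(\psi')$ by monotonicity, giving $s_{\Gamma}(\psi) \subseteq \bigcap_{\psi' \in X} s_{\Gamma}(\psi')$. Conversely, if $\omega \in \bigcap_{\psi' \in X} s_{\Gamma}(\psi')$, then $\psi' \leq \Gamma(\omega)$ for every $\psi' \in X$, so $\Gamma(\omega)$ is an upper bound of $X$ and therefore $\bigvee X \leq \Gamma(\omega)$, placing $\omega$ in $s_{\Gamma}(\bigvee X)$. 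Hence (\ref{eq:AllOfSupCompl}) holds.

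There is essentially no obstacle here: both claims are pure order-theoretic restatements of the defining property ``$\omega \in s_{\Gamma}(\psi)$ iff $\psi \leq \Gamma(\omega)$'' combined with the universal property of the supremum, and the completeness (resp. $\sigma$-completeness) hypothesis is used only to guarantee that the supremum $\bigvee X$ (resp. $\bigvee_{i=1}^{\infty}\psi_i$) exists as an element of $\Psi$ so that $s_{\Gamma}$ can be applied to it. The one point worth stating explicitly is that no appeal to measurability or to the probability structure is made — these are set-theoretic identities about the allocation of support, valid whether or not the individual sets $s_{\Gamma}(\psi_i)$ are measurable.
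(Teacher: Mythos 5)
Your proof is correct and is essentially the paper's own argument: the paper simply records the biconditional ``$\psi_i \leq \Gamma(\omega)$ for all $i$ iff $\bigvee_i \psi_i \leq \Gamma(\omega)$'' (and likewise for arbitrary $X$), which is exactly the universal property of the supremum that you spell out as a double inclusion. No difference in substance, only in the level of detail.
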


\begin{proof}
(1) We have $\psi_{1},\psi_{2},\ldots \leq \Gamma(\omega)$ if and only if $\bigvee_{i=1}^{\infty} \psi_{i} \leq \Gamma(\omega)$. This implies (\ref{eq:AllOfSupSigma}).

(2) Similarly, we have $\psi \leq \Gamma(\omega)$ for all $\psi \in X$ if and only if $\bigvee X \leq \Gamma(\omega)$ and this implies (\ref{eq:AllOfSupCompl}).
\end{proof}

We want to make use of the probability space $(\Omega,\mathcal{A},P)$ to judge the likelihood that a random mapping $\Gamma$ supports a hypothesis $\psi \in \Psi$. The degree of support $sp_{\Gamma}(\psi)$ of an element $\psi \in \Psi$ is measured by the probability of its support $s_{\Gamma}(\psi)$, provided this probability is defined. This is the case only if $s_{\Gamma}(\psi) \in \mathcal{A}$. Therefore, we define:

\begin{definition}
If $\Gamma : \Omega \rightarrow \Phi$ is a random mapping from a probability space $(\Omega,\mathcal{A},P)$ into an information algebra $(\Phi,\cdot,0,1;E)$, then $\psi \in \Phi$ is called $\Gamma$-measurable, if $s_{\Gamma}(\psi) \in \mathcal{A}$.
\end{definition}

The set of all $\Gamma$-measurable elements $\psi \in \Phi$ will be denoted by $\mathcal{E}_{\Gamma}$. 

\begin{theorem} \label{th:MeasInfEl}
For any random mapping $\Gamma$, $(\mathcal{E}_{\Gamma},\leq)$ is a subsemilattice of the join-semilattice $(\Phi,\leq)$, containing $1$; if $\Gamma$ is normalised, then $0$ belongs to $\mathcal{E}_{\Gamma}$ too. Further, if $\Phi$ is a $\sigma$-semilattice, then $\mathcal{E}_{\Gamma}$ is a $\sigma$-semilattice.
\end{theorem}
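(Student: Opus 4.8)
The plan is to verify each claim in Theorem \ref{th:MeasInfEl} by showing that the relevant set-operations on supports land back in the $\sigma$-algebra $\mathcal{A}$, exploiting the structural identities for $s_\Gamma$ already established in Theorem \ref{th:ElPropAllSp} and Theorem \ref{th:ElPropAllSp2}. The key observation throughout is that $\mathcal{A}$ is a $\sigma$-algebra, hence closed under finite (and countable) intersections, and that $s_\Gamma$ converts joins in $\Psi$ into intersections of supports.

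First I would treat the claim that $1 \in \mathcal{E}_\Gamma$: by Theorem \ref{th:ElPropAllSp}, item 1, we have $s_\Gamma(1) = \Omega$, and $\Omega \in \mathcal{A}$ since $\mathcal{A}$ is a $\sigma$-algebra on $\Omega$. Next, for the subsemilattice property, I would take two $\Gamma$-measurable elements $\phi,\psi \in \mathcal{E}_\Gamma$, so that $s_\Gamma(\phi), s_\Gamma(\psi) \in \mathcal{A}$. By Theorem \ref{th:ElPropAllSp}, item 3, we have $s_\Gamma(\phi \cdot \psi) = s_\Gamma(\phi) \cap s_\Gamma(\psi)$, and since $\mathcal{A}$ is closed under finite intersection, this set belongs to $\mathcal{A}$; hence $\phi \cdot \psi \in \mathcal{E}_\Gamma$. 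This shows $\mathcal{E}_\Gamma$ is closed under combination, i.e.\ it is a subsemilattice of $\Psi$. For the null element, if $\Gamma$ is normalised, then by Theorem \ref{th:ElPropAllSp}, item 4, $s_\Gamma(0) = \emptyset \in \mathcal{A}$, so $0 \in \mathcal{E}_\Gamma$.

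Finally, for the $\sigma$-semilattice claim, I would assume $\Psi$ is itself a $\sigma$-semilattice and take a countable family $\psi_1,\psi_2,\ldots \in \mathcal{E}_\Gamma$, so each $s_\Gamma(\psi_i) \in \mathcal{A}$. Since $\Psi$ is a $\sigma$-semilattice, the countable join $\bigvee_{i=1}^\infty \psi_i$ exists in $\Psi$, and by Theorem \ref{th:ElPropAllSp2}, item 1, we have the identity
\begin{eqnarray*}
s_\Gamma\left(\bigvee_{i=1}^\infty \psi_i\right) = \bigcap_{i=1}^\infty s_\Gamma(\psi_i).
\end{eqnarray*}
As $\mathcal{A}$ is a $\sigma$-algebra, it is closed under countable intersections, so the right-hand side belongs to $\mathcal{A}$; therefore $\bigvee_{i=1}^\infty \psi_i \in \mathcal{E}_\Gamma$. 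This establishes that $\mathcal{E}_\Gamma$ is closed under countable joins, hence is a $\sigma$-semilattice.

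I do not anticipate a genuine obstacle here: every step is a direct translation of a join-identity for $s_\Gamma$ into a closure property of $\mathcal{A}$, and all the needed support identities are already proved. The only point requiring a little care is making sure the countable join referenced in the last part actually exists in $\Psi$ before applying the support identity — this is exactly why the hypothesis that $\Psi$ is a $\sigma$-semilattice is invoked, guaranteeing that $\bigvee_{i=1}^\infty \psi_i$ is a well-defined element of $\Psi$ to which Theorem \ref{th:ElPropAllSp2} applies.
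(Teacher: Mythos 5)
Your proof is correct and follows exactly the route the paper takes: the paper's own proof simply cites the definition of $\mathcal{E}_{\Gamma}$ together with Theorems \ref{th:ElPropAllSp} and \ref{th:ElPropAllSp2}, and your write-up is precisely the spelled-out version of that argument, translating each join-identity for $s_{\Gamma}$ into the corresponding closure property of the $\sigma$-algebra $\mathcal{A}$. Nothing is missing, and your care about the existence of the countable join before applying Theorem \ref{th:ElPropAllSp2} is exactly the right point to flag.
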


\begin{proof}
The first part of the theorem follows from the definition of $\mathcal{E}_{\Gamma}$ and Theorem \ref{th:ElPropAllSp}. The second part follows from Theorem \ref{th:ElPropAllSp2} since $\mathcal{A}$ is a $\sigma$-field.
\end{proof}

On the semilattice $\mathcal{E}_{\Gamma}$ we define $sp_{\Gamma}(\psi) = P(s_{\Gamma}(\psi))$. Thus, $sp_{\Gamma}$ is a function with values in $[0,1]$, defined on $\mathcal{E}_{\Gamma}$. This function is called the \textit{support function} of the random mapping $\Gamma$. The next theorem collects the basic  properties of this function.

\begin{theorem} \label{th:SpFctProp}
Let $\Gamma$ be a random mapping from the probability space $(\Omega,\mathcal{A},P)$ into the information algebra $(\Phi,\cdot,0,1;E)$, and $sp_{\Gamma}$ the associated support function, defined on $\mathcal{E}_{\Gamma}$. Then $sp_{\Gamma}$ has the following properties:
\begin{enumerate}
\item $sp_{\Gamma}(1) = 1$.
\item If $\psi_{1},\ldots,\psi_{m} \geq \psi$, $\psi_{1},\ldots,\psi_{m},\psi \in \mathcal{E}_{\Gamma}$, $m = 1,2,\ldots$
\begin{eqnarray} \label{eq:MonOrderInf0}
sp_{\Gamma}(\psi) \geq \sum_{\emptyset \not= I \subseteq \{1,\ldots,m\}} (-1)^{\vert I \vert + 1} sp_{\Gamma}(\vee_{i \in I} \psi_{i}).
\end{eqnarray}
\item If $\mathcal{E}_{\Gamma}$ is a $\sigma$-semilattice, and if $\psi_{1} \leq \psi_{2} \leq \ldots \in \mathcal{E}_{\Gamma}$, then
\begin{eqnarray} \label{eq:ContSupFct1}
sp_{\Gamma}(\bigvee_{i=1}^{\infty} \psi_{i}) = \lim_{i \rightarrow \infty} sp_{\Gamma}(\psi_{i}).
\end{eqnarray}
\item If $\Gamma$ is normalised, then $sp_{\Gamma}(0) = 0$.
\end{enumerate}
\end{theorem}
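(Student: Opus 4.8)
The plan is to prove each of the four properties directly from the definition $sp_{\Gamma}(\psi) = P(s_{\Gamma}(\psi))$, exploiting the elementary properties of the allocation of support $s_{\Gamma}$ established in Theorems \ref{th:ElPropAllSp} and \ref{th:ElPropAllSp2}. The work essentially reduces the statements about the numerical function $sp_{\Gamma}$ to set-theoretic facts about the sets $s_{\Gamma}(\psi)$ together with standard properties of the probability measure $P$ (monotonicity, inclusion-exclusion, and continuity from below).

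Properties (1) and (4) are immediate. For (1), by Theorem \ref{th:ElPropAllSp} we have $s_{\Gamma}(1) = \Omega$, hence $sp_{\Gamma}(1) = P(\Omega) = 1$. For (4), if $\Gamma$ is normalised then $s_{\Gamma}(0) = \emptyset$ by Theorem \ref{th:ElPropAllSp}, so $sp_{\Gamma}(0) = P(\emptyset) = 0$. Property (3) is also routine: if $\psi_{1} \leq \psi_{2} \leq \ldots$, then by Theorem \ref{th:ElPropAllSp} the sets $s_{\Gamma}(\psi_{i})$ form a decreasing sequence, and by Theorem \ref{th:ElPropAllSp2} (applied in the $\sigma$-semilattice $\mathcal{E}_{\Gamma}$) we have $s_{\Gamma}(\bigvee_{i=1}^{\infty}\psi_{i}) = \bigcap_{i=1}^{\infty} s_{\Gamma}(\psi_{i})$. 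Continuity from above of $P$ (valid since $P(\Omega) = 1 < \infty$) then yields $P(\bigcap_{i} s_{\Gamma}(\psi_{i})) = \lim_{i} P(s_{\Gamma}(\psi_{i}))$, which is exactly (\ref{eq:ContSupFct1}).

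The main obstacle is property (2), the monotonicity-of-order inequality (\ref{eq:MonOrderInf0}). The key observation is that since each $\psi_{i} \geq \psi$, Theorem \ref{th:ElPropAllSp}(2) gives $s_{\Gamma}(\psi_{i}) \subseteq s_{\Gamma}(\psi)$ for every $i$, so the union $\bigcup_{i=1}^{m} s_{\Gamma}(\psi_{i})$ is contained in $s_{\Gamma}(\psi)$, whence by monotonicity of $P$
\begin{eqnarray}
sp_{\Gamma}(\psi) = P(s_{\Gamma}(\psi)) \geq P\left( \bigcup_{i=1}^{m} s_{\Gamma}(\psi_{i}) \right).
\nonumber
\end{eqnarray}
It then remains to expand the probability of the union by the inclusion-exclusion formula and to match the resulting terms with the right-hand side of (\ref{eq:MonOrderInf0}). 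Here I would use the crucial identity from Theorem \ref{th:ElPropAllSp}(3): for any nonempty index set $I \subseteq \{1,\ldots,m\}$,
\begin{eqnarray}
\bigcap_{i \in I} s_{\Gamma}(\psi_{i}) = s_{\Gamma}\left( \bigvee_{i \in I} \psi_{i} \right),
\nonumber
\end{eqnarray}
so that $P(\bigcap_{i \in I} s_{\Gamma}(\psi_{i})) = sp_{\Gamma}(\vee_{i \in I}\psi_{i})$, noting that $\vee_{i\in I}\psi_{i} \in \mathcal{E}_{\Gamma}$ since $\mathcal{E}_{\Gamma}$ is a subsemilattice. Substituting this into inclusion-exclusion,
\begin{eqnarray}
P\left( \bigcup_{i=1}^{m} s_{\Gamma}(\psi_{i}) \right) = \sum_{\emptyset \not= I \subseteq \{1,\ldots,m\}} (-1)^{\vert I \vert + 1} sp_{\Gamma}\left( \vee_{i \in I} \psi_{i} \right),
\nonumber
\end{eqnarray}
which combined with the displayed inequality above gives precisely (\ref{eq:MonOrderInf0}). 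The only genuinely delicate point to check is measurability: each term $sp_{\Gamma}(\vee_{i\in I}\psi_{i})$ must be well-defined, which follows because $\mathcal{E}_{\Gamma}$ is closed under finite joins, guaranteeing $\vee_{i\in I}\psi_{i}\in\mathcal{E}_{\Gamma}$ and hence $s_{\Gamma}(\vee_{i\in I}\psi_{i})\in\mathcal{A}$.
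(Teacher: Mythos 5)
Your proof is correct and follows essentially the same route as the paper's: items (1), (3), and (4) are reduced to Theorems \ref{th:ElPropAllSp} and \ref{th:ElPropAllSp2} plus standard continuity of probability, and item (2) combines the intersection identity $s_{\Gamma}(\vee_{i\in I}\psi_{i}) = \bigcap_{i\in I}s_{\Gamma}(\psi_{i})$ with inclusion-exclusion and the containment $\bigcup_{i=1}^{m}s_{\Gamma}(\psi_{i})\subseteq s_{\Gamma}(\psi)$, exactly as in the paper (only the order of the two steps is swapped, and your explicit measurability check via closure of $\mathcal{E}_{\Gamma}$ under finite joins is a welcome, if implicit in the paper, addition).
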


\begin{proof}
(1) and (4) follow from Theorem \ref{th:ElPropAllSp} items 1 and 4..

(2) Note that by Theorem  \ref{th:ElPropAllSp} item 3 we have $sp_{\Gamma}(\vee_{i \in I} \psi_{i}) = P(s_{\Gamma}(\vee_{i \in I} \psi_{i})) = P(\cap_{i \in I} s_{\Gamma}(\psi_{i}))$ for a finite index set $I$. On the right hand side of (\ref{eq:MonOrderInf0}) we have then by the inclusion-exclusion formula of probability theory,
\begin{eqnarray} \label{eq:ContOfSp}
\sum_{\emptyset \not= I \subseteq \{1,\ldots,m\}} (-1)^{\vert I \vert + 1} P(\cap_{i \in I} s_{\Gamma}(\psi_{i}))
= P(\cup_{i=1}^{m} s_{\Gamma}(\psi_{i})).
\nonumber
\end{eqnarray}
But $\psi \leq \psi_{1},\ldots,\psi_{m}$ implies $s_{\Gamma}(\psi) \supseteq s_{\Gamma}(\psi_{i})$, hence
\begin{eqnarray}
s_{\Gamma}(\psi) \supseteq \cup_{i=1}^{m} s_{\Gamma}(\psi_{i})
\nonumber
\end{eqnarray}
This implies (\ref{eq:MonOrderInf0})

(3) In this case $\bigvee_{i=1}^{\infty} \psi_{i} \in \mathcal{E}_{\Gamma}$. Further, by Theorem \ref{th:ElPropAllSp2}, $sp_{\Gamma}(\bigvee_{i=1}^{\infty} \psi_{i}) = P(s_{\Gamma}(\bigvee_{i=1}^{\infty} \psi_{i})) = P(\bigcap_{i=1}^{\infty} s_{\Gamma}(\psi_{i}))$. Now, $\psi_1 \leq \psi_2 \leq \ldots$ implies $s_{\Gamma}(\psi_{1}) \supseteq s_{\Gamma}(\psi_{2}) \supseteq \ldots$ (Theorem \ref{th:ElPropAllSp} (2)). By the continuity of probability it follows that $P(\bigcap_{i=1}^{\infty} s_{\Gamma}(\psi_{i})) = \lim_{i \rightarrow \infty} P(s_{\Gamma}(\psi_{i}))$. This proves (\ref{eq:ContSupFct1}).
\end{proof}

As a consequence we deduce from (2) of the theorem above that for $\phi \leq \psi$ we have $sp_{\Gamma}(\psi) \leq sp_{\Gamma}(\phi)$. Thus the function $sp_{\Gamma}$ is (inversely) monotone. In fact a function satisfying property (2) of the theorem above is called \textit{monotone of order} $\infty$ \cite{choquet53,choquet69}. 

In Section \ref{subsec:RanMaps} we proposed to extend the support function of a random mapping $\Gamma$ beyond the measurable elements by $sp_{\Gamma}(\psi) = \mu(\rho_{\Gamma}(\psi))$, where $\rho_{\Gamma}(\psi) = \rho_{0}(s_{\Gamma}(\psi))$ is the allocation of probability associated with the random mapping $\Gamma$ and $(\mu,\mathcal{B})$ is the probability algebra associated with the probability space $(\Omega,\mathcal{A},P)$. Now, any allocation of probability $\rho : \mathcal{B} \rightarrow \Psi$ generates a function $sp = \mu \circ \rho$ which satisfies properties (1) and (2) of Theorem \ref{th:SpFctProp} as stated in Theorem \ref{th:SpFctPropAoP} below. Therefore, in particular the function $sp_{\Gamma} = \mu \circ \rho_{\Gamma}$, which is defined on $\Phi$, and even $I_{\Phi}$ has the properties  stated in Theorem \ref{th:SpFctProp}. 

\begin{theorem} \label{th:SpFctPropAoP}
Let $(\mu,\mathcal{B})$ be a probability algebra, $\rho :  \Phi \rightarrow \mathcal{B}$ an allocation of probability, and $sp = \mu \circ \rho$. 
\begin{enumerate}
\item $sp$ satisfies properties (1) and (2) of Theorem \ref{th:SpFctProp}
\item If $\Phi$ is a $\sigma$-semilattice and if for all $\psi_1,\psi_2,\ldots$
\begin{eqnarray*}
\rho(\bigvee_{i=1}^{\infty} \psi_i) = \bigwedge_{i=1}^{\infty} \rho(\psi_i),
\end{eqnarray*}
then (3) of Theorem \ref{th:SpFctProp} holds.
\item If $\Phi$ is a complete lattice and if for any directed set $X \subseteq \Phi$
\begin{eqnarray}
\rho(\bigsqcup X) = \bigwedge_{\psi \in X} \rho(\psi),
\nonumber
\end{eqnarray}
then
\begin{eqnarray} \label{eq:CondSupFct1}
sp(\bigsqcup X) = \inf_{\psi \in X} sp(\psi).
\end{eqnarray}
\end{enumerate}
\end{theorem}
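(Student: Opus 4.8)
The plan is to prove the three items of Theorem \ref{th:SpFctPropAoP} by translating the two defining axioms (A1) and (A2) of an allocation of probability $\rho$, together with the stated countable or directed continuity hypotheses, into statements about the numerical function $sp = \mu \circ \rho$, using the monotonicity and continuity properties of the measure $\mu$ on the probability algebra $\mathcal{B}$.

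For item (1), I would first establish property (1) of Theorem \ref{th:SpFctProp} directly: by (A1) we have $\rho(1) = \top$, hence $sp(1) = \mu(\top) = 1$ since $\mu$ is normalised. For property (2), the key observation is that (A2) iterated gives $\rho(\vee_{i \in I} \psi_i) = \bigwedge_{i \in I} \rho(\psi_i)$ for any finite index set $I$. Then I would apply the inclusion-exclusion formula for the measure $\mu$ on the Boolean algebra $\mathcal{B}$ (which holds exactly as in classical probability since $\mu$ is additive) to obtain
\begin{eqnarray*}
\sum_{\emptyset \not= I \subseteq \{1,\ldots,m\}} (-1)^{\vert I \vert + 1} \mu\left( \bigwedge_{i \in I} \rho(\psi_i) \right) = \mu\left( \bigvee_{i=1}^{m} \rho(\psi_i) \right).
\end{eqnarray*}
Since each $\psi \leq \psi_i$ implies $\rho(\psi_i) \leq \rho(\psi)$ (a consequence of (A2) noted just after the definition of an allocation of probability), we get $\bigvee_{i=1}^m \rho(\psi_i) \leq \rho(\psi)$, and monotonicity of $\mu$ yields the desired inequality (\ref{eq:MonOrderInf0}). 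This parallels the proof of Theorem \ref{th:SpFctProp}(2), only with support sets replaced by elements of $\mathcal{B}$ and set operations replaced by lattice operations.

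For item (2), assuming $\rho(\bigvee_{i=1}^\infty \psi_i) = \bigwedge_{i=1}^\infty \rho(\psi_i)$, I would argue for a monotone increasing sequence $\psi_1 \leq \psi_2 \leq \ldots$ that the elements $\rho(\psi_i)$ form a decreasing sequence in $\mathcal{B}$ (again by the monotonicity consequence of (A2)). Then
\begin{eqnarray*}
sp(\bigvee_{i=1}^\infty \psi_i) = \mu\left( \bigwedge_{i=1}^\infty \rho(\psi_i) \right) = \lim_{i \rightarrow \infty} \mu(\rho(\psi_i)) = \lim_{i \rightarrow \infty} sp(\psi_i),
\end{eqnarray*}
where the middle equality is the continuity of the measure $\mu$ on $\mathcal{B}$ for decreasing sequences. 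For item (3), the argument is the directed-set analogue: the set $\{\rho(\psi) : \psi \in X\}$ is downward directed in $\mathcal{B}$ (if $\psi', \psi'' \in X$ there is $\psi \in X$ above both, so $\rho(\psi) \leq \rho(\psi') \wedge \rho(\psi'')$), so Lemma \ref{downward} applies and gives $\mu(\bigwedge_{\psi \in X} \rho(\psi)) = \inf_{\psi \in X} \mu(\rho(\psi))$; combined with the condensability hypothesis $\rho(\bigvee X) = \bigwedge_{\psi \in X} \rho(\psi)$ this yields (\ref{eq:CondSupFct1}).

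I do not expect any serious obstacle here; the theorem is essentially a routine transfer of the measure-theoretic continuity and additivity properties through the homomorphism-like structure of $\rho$. The only point requiring mild care is that the continuity used in item (2) is continuity of $\mu$ along a \emph{decreasing} chain in $\mathcal{B}$, which is legitimate in a probability algebra, and that in item (3) one must verify the downward-directedness of $\{\rho(\psi):\psi \in X\}$ before invoking Lemma \ref{downward}; both are immediate from the order-reversing property of $\rho$ established from (A2). The main conceptual step is simply recognising that all three parts reduce to the interplay between the lattice operations preserved (in reversed order) by $\rho$ and the corresponding continuity of $\mu$.
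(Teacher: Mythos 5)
Your proof is correct and follows essentially the same route as the paper: items (1) and (2) are exactly the transfer of the arguments of Theorem \ref{th:SpFctProp} (inclusion-exclusion plus the order-reversing property of $\rho$, and continuity of $\mu$ along monotone sequences) into the probability algebra $(\mathcal{B},\mu)$, which is all the paper's proof invokes by reference, and item (3) is word-for-word the paper's argument via downward-directedness of $\{\rho(\psi):\psi \in X\}$ and Lemma \ref{downward}. Your added care about verifying directedness and about where decreasing-chain continuity of $\mu$ is legitimate matches the paper's implicit reasoning; there is nothing to correct.
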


\begin{proof}
(1) and (2) are proved as in the proof of Theorem \ref{th:SpFctProp}. 

(3) The set $\{\rho(\psi):\psi \in X\}$ is downwards directed if $X$ is directed. Therefore, by Lemma \ref{downward}
\begin{eqnarray}
\mu(\rho(\bigsqcup X)) = \mu(\bigwedge_{\psi \in X}\rho(\psi)) = \inf_{\psi \in X} \mu(\rho(\psi)).
\nonumber
\end{eqnarray}
This proves (\ref{eq:CondSupFct1}).
\end {proof}

Next, we consider \textit{compact} information algebras $(\Phi,\cdot,0,1;E)$, with finite elements $\Phi_{f}$. By Theorem \ref{th:IdCompFiniteEl} the algebra $\Phi$ is isomorphic to the information algebra of the ideal completion $I_{\Phi_f}$ of its finite elements $\Phi_f$. In other words, the results to be derived below apply also to the ideal completion $I_{\Phi}$ of any information algebra $\Phi$. In this context we remind that a \textit{random variable} $\Gamma$ is the supremum of the simple random variables it dominates, $\Gamma = \bigvee\{\Delta:\Delta \in \mathcal{R}_{s},\Delta \leq \Gamma\}$, see Section \ref{subsec:RandVar}. Simple random variables are here and in the sequel always assumed to take finite elements as values, that is $\Delta(\omega) \in \Psi_{f}$ for all $\omega$. In such a case, the support function of a generalised random variable can be approximated by its values for finite elements.

\begin{theorem} \label{th:SpFctOfGenRV}
Let $(\Phi,\cdot,0,1;E)$ be a compact information algebra, with $\Phi_{f}$ as finite elements and $\Gamma$ a random variable with values in $\Phi$. Further let $sp_{\Gamma} = \mu \circ \rho_{\Gamma}$, where $\rho_{\Gamma} = \rho_{0} \circ s_{\Gamma}$ (see (\ref{eq:DefOfSupExt})). Then for all $\psi \in \Phi$,
\begin{eqnarray} \label{eq:CondSpFct0}
sp_{\Gamma}(\psi) = \inf\{sp_{\Gamma}(\phi):\phi \in \Phi_{f},\phi \leq \psi\}.
\end{eqnarray}
Furthermore, if $X \subseteq \Phi$ is directed, then
\begin{eqnarray} \label{eq:CondSpFct1}
sp_{\Gamma}(\bigvee X) = \inf_{\psi \in X}sp_{\Gamma}(\psi).
\end{eqnarray}
\end{theorem}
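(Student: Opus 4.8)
The plan is to establish the two formulas of Theorem \ref{th:SpFctOfGenRV} by exploiting the condensability of the allocation of probability $\rho_\Gamma$ associated with a generalised random variable, which was already proved in Theorem \ref{thCondOfGenRV}. That theorem states precisely that for any directed set $X \subseteq \Psi$,
\begin{eqnarray*}
\rho_{\Gamma}(\bigvee X) = \bigwedge_{\psi \in X} \rho_{\Gamma}(\psi),
\end{eqnarray*}
and it also established en route (equation (\ref{eq:CondOfGenRV2})) the finite-approximation identity
\begin{eqnarray*}
\rho_{\Gamma}(\phi) = \bigwedge \{\rho_{\Gamma}(\psi):\psi \in \Psi_{f},\psi \leq \phi\}.
\end{eqnarray*}
So the key observation is that the ``hard work'' about allocations is already done; what remains is to transport these meet-identities in the probability algebra $\mathcal{B}$ through the measure $\mu$ to obtain the corresponding infimum-identities for $sp_\Gamma = \mu \circ \rho_\Gamma$.

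First I would prove (\ref{eq:CondSpFct0}). Fix $\psi \in \Psi$ and consider the set $A_\psi = \{\phi \in \Psi_f : \phi \leq \psi\}$, which is directed since $\Psi_f$ is closed under combination. The image set $\{\rho_\Gamma(\phi) : \phi \in A_\psi\}$ is then \emph{downward directed} in $\mathcal{B}$: indeed for $\phi_1, \phi_2 \in A_\psi$ we have $\phi_1 \cdot \phi_2 \in A_\psi$ and, by property (A2) of an allocation, $\rho_\Gamma(\phi_1 \cdot \phi_2) = \rho_\Gamma(\phi_1) \wedge \rho_\Gamma(\phi_2) \leq \rho_\Gamma(\phi_1), \rho_\Gamma(\phi_2)$. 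This is exactly the hypothesis of Lemma \ref{downward}, so that
\begin{eqnarray*}
\mu(\bigwedge_{\phi \in A_\psi} \rho_\Gamma(\phi)) = \inf_{\phi \in A_\psi} \mu(\rho_\Gamma(\phi)).
\end{eqnarray*}
Combining this with the finite-approximation identity for $\rho_\Gamma$ quoted above yields
\begin{eqnarray*}
sp_\Gamma(\psi) = \mu(\rho_\Gamma(\psi)) = \mu(\bigwedge_{\phi \in A_\psi} \rho_\Gamma(\phi)) = \inf_{\phi \in A_\psi} sp_\Gamma(\phi),
\end{eqnarray*}
which is precisely (\ref{eq:CondSpFct0}).

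The formula (\ref{eq:CondSpFct1}) for a directed set $X$ follows by the identical mechanism, now applied directly to $X$ rather than to $A_\psi$. Since $X$ is directed, $\{\rho_\Gamma(\psi):\psi \in X\}$ is downward directed in $\mathcal{B}$ (again because $\psi_1, \psi_2 \in X$ admit a common upper bound $\psi \in X$, whence $\rho_\Gamma(\psi) \leq \rho_\Gamma(\psi_1) \wedge \rho_\Gamma(\psi_2)$ by monotonicity of $\rho_\Gamma$). Applying Lemma \ref{downward} together with the condensability identity of Theorem \ref{thCondOfGenRV} gives
\begin{eqnarray*}
sp_\Gamma(\bigvee X) = \mu(\rho_\Gamma(\bigvee X)) = \mu(\bigwedge_{\psi \in X} \rho_\Gamma(\psi)) = \inf_{\psi \in X} \mu(\rho_\Gamma(\psi)) = \inf_{\psi \in X} sp_\Gamma(\psi).
\end{eqnarray*}
I do not expect a genuine obstacle here, since the substantive content is already packaged in Theorem \ref{thCondOfGenRV}; the only point requiring care is the verification that the relevant image families are downward directed so that Lemma \ref{downward} applies — this is where the positivity of $\mu$ and the meet-homomorphism property (A2) of allocations are used, and it is worth stating explicitly rather than leaving implicit. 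One should also note that (\ref{eq:CondSpFct1}) is in fact a special instance of the general statement in Theorem \ref{th:SpFctPropAoP} (3), so an alternative, even shorter route is simply to invoke that theorem after checking its hypothesis via Theorem \ref{thCondOfGenRV}; I would mention this to keep the argument economical.
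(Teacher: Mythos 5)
Your proof is correct and follows essentially the paper's own route: the paper likewise derives both formulas from the condensability of $\rho_{\Gamma}$ established in Theorem \ref{thCondOfGenRV}, transported through $\mu$ via Theorem \ref{th:SpFctPropAoP} (3) (whose proof is exactly your Lemma \ref{downward} argument), after noting that (\ref{eq:CondSpFct0}) is the special case of (\ref{eq:CondSpFct1}) obtained from the directed set $\{\phi \in \Psi_{f}:\phi \leq \psi\}$. The only cosmetic differences are that you prove the first formula directly rather than as a special case of the second, and that you unfold Theorem \ref{th:SpFctPropAoP} (3) into its underlying downward-directedness argument --- the ``shorter route'' you mention at the end is precisely the paper's proof.
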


\begin{proof}
Note that (\ref{eq:CondSpFct0}) is a particular case of (\ref{eq:CondSpFct1}). By Theorem \ref{thCondOfGenRV} we have $\rho_{\Gamma}(\bigvee X) = \bigwedge_{\psi \in X} \rho_{\Gamma}(\psi)$. Then (\ref{eq:CondSpFct1}) follows from Theorem \ref{th:SpFctPropAoP} (\ref{eq:CondSupFct1}).
\end{proof}

In the same framework, if $\Gamma = \bigvee_{i=1}^{\infty} \Delta_{i}$ is a \textit{proper random variable} defined by a sequence of simple random variables $\Delta_{1},\Delta_{2},\ldots$, then the degree of support of any element in $\sigma(\Phi_{f})$ may be obtained as a limit of the degrees of support of finite elements. In fact, if $\psi \in \sigma(\Psi_{f})$, then $\psi = \bigvee_{i=1}^{\infty} \psi_{i}$, where $\psi_{i} \in \Phi_{f}$ (Theorem \ref{th:CharSigmaSet}). We may always assume that the sequence $\psi_{i}$ is monotone, $\psi_{1} \leq \psi_{2} \leq \ldots$. Then this sequence is a directed set in $\Phi$ and Theorem \ref{th:SpFctOfGenRV} applies. But due to the monotonicity of the sequence, we have $\inf\{sp_{\Gamma}(\psi_{i}):i=1,2,\ldots\} = \lim_{i \rightarrow \infty} sp_{\Gamma}(\psi_{i})$. So, if $\psi = \bigvee_{i=1}^{\infty} \psi_{i}$ and $\psi_{1} \leq \psi_{2} \leq \ldots \in \Psi_{f}$, then
\begin{eqnarray}
sp_{\Gamma}(\psi) = \lim_{i \rightarrow \infty} sp_{\Gamma}(\psi_{i}).
\end{eqnarray}

The degree of support of a proper random variable can in some cases also be approximated by the degrees of support of the simple random variables which approximate the random variable.

\begin{theorem} \label{th:AllElMeas}
Let $(\Phi,\cdot,0,1;E)$ be an information algebra and $\sigma(\Phi)$ its $\sigma$-extension in $I_{\Phi}$. If  $\Gamma = \bigvee_{i=1}^{\infty} \Delta_{i}$, where $\Delta_{i}$ are simple random variables with values in $\Phi$, is a proper random variable, defined on the probability space $(\Omega,\mathcal{A},P)$ with values in $\sigma(\Phi)$, then all elements $\psi \in \Phi$ are $\Gamma$-measurable, $\mathcal{E}_{\Gamma} = \Phi$. Furthermore, if the $\Delta_{i}$ form a monotone increasing sequence of simple random variables, then for all $\psi \in \Phi$,
\begin{eqnarray} \label{eq:ApproxBySimpleRV}
sp_{\Gamma}(\psi) = \lim_{i \rightarrow \infty} sp_{\Delta_{i}}(\psi).
\end{eqnarray}
\end{theorem}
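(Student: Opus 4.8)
The plan is to establish the two assertions of Theorem \ref{th:AllElMeas} separately, relying heavily on the machinery already developed. For the measurability claim $\mathcal{E}_{\Gamma} = \Psi$, I would fix an arbitrary $\psi \in \Psi$ and show directly that $s_{\Gamma}(\psi) \in \mathcal{A}$. The key observation is that $\Gamma(\omega) = \bigvee_{i=1}^{\infty} \Delta_{i}(\omega)$ is a countable join in $\sigma(\Psi)$, and that $\sigma(\Psi)$ is a $\sigma$-information algebra by the construction preceding Theorem \ref{th:CharSigmaSet}. Since $\psi \in \Psi$ is a ``finite-level'' element relative to the countable suprema defining $\Gamma(\omega)$, I expect $\psi \leq \Gamma(\omega) = \bigvee_{i=1}^{\infty} \Delta_{i}(\omega)$ to be equivalent to $\psi \leq \Delta_{i}(\omega)$ for some $i$. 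This would give
\begin{eqnarray*}
s_{\Gamma}(\psi) = \bigcup_{i=1}^{\infty} s_{\Delta_{i}}(\psi),
\end{eqnarray*}
a countable union of sets $s_{\Delta_{i}}(\psi)$, each of which is measurable because $\Delta_{i}$ is a simple random variable (so all elements are $\Delta_{i}$-measurable). A countable union of $\mathcal{A}$-sets lies in $\mathcal{A}$, proving $\psi \in \mathcal{E}_{\Gamma}$.

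The crux of the argument, and what I expect to be the main obstacle, is justifying the equivalence ``$\psi \leq \bigvee_{i=1}^{\infty} \Delta_{i}(\omega)$ iff $\psi \leq \Delta_{i}(\omega)$ for some $i$'' for a general $\psi \in \Psi$. This is a compactness-type property, and it does not hold for arbitrary elements of a complete lattice. The hypothesis that $\psi \in \Psi$ (rather than merely $\psi \in \sigma(\Psi)$ or $I_\Psi$) is essential here: working in the algebraic setting where $\Psi$ is identified with $\Psi_f$, or more precisely exploiting that the elements of $\Psi$ are finite relative to the ideal completion, the element $\psi$ is compact. When the $\Delta_{i}$ form a monotone increasing sequence (as in the second part, and as can always be arranged by Lemma \ref{le:MonApproxRV}), the values $\Delta_{i}(\omega)$ form a directed set, and compactness of $\psi$ yields exactly the required equivalence via Definition \ref{def:FiniteEl}. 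I would therefore handle the monotone case first and reduce the general measurability claim to it using the replacement of $\Delta_i$ by $\vee_{k=1}^{i} \Delta_k$, which preserves both simplicity and the join $\Gamma$.

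For the limit formula (\ref{eq:ApproxBySimpleRV}), I would assume the $\Delta_{i}$ monotone increasing, so that $\Delta_{1} \leq \Delta_{2} \leq \ldots$ and consequently, by Theorem \ref{th:ElPropAllSp} item 2, the supports satisfy $s_{\Delta_{1}}(\psi) \subseteq s_{\Delta_{2}}(\psi) \subseteq \ldots$. From the union formula established above,
\begin{eqnarray*}
s_{\Gamma}(\psi) = \bigcup_{i=1}^{\infty} s_{\Delta_{i}}(\psi),
\end{eqnarray*}
an increasing union of measurable sets. The continuity of the probability measure $P$ from below then gives
\begin{eqnarray*}
sp_{\Gamma}(\psi) = P(s_{\Gamma}(\psi)) = \lim_{i \rightarrow \infty} P(s_{\Delta_{i}}(\psi)) = \lim_{i \rightarrow \infty} sp_{\Delta_{i}}(\psi).
\end{eqnarray*}
Here I must make sure that $sp_{\Gamma}(\psi)$ as defined via the allocation of probability, $sp_{\Gamma}(\psi) = \mu(\rho_{\Gamma}(\psi)) = \mu(\rho_{0}(s_{\Gamma}(\psi)))$, agrees with $P(s_{\Gamma}(\psi))$; this follows from the identity (\ref{eq:InnerProbExt}) together with the measurability $s_{\Gamma}(\psi) \in \mathcal{A}$ just proved, since for a measurable set the inner probability equals the ordinary probability. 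Thus the two notions of degree of support coincide on all of $\Psi$, and the limit formula follows. The remaining routine point is to confirm that each $sp_{\Delta_{i}}(\psi) = P(s_{\Delta_{i}}(\psi))$, which is immediate because simple random variables have all elements measurable.
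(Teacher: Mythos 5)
Your proposal is correct and follows essentially the same route as the paper's own proof: reduce to a monotone sequence via the replacement $\Delta_i \mapsto \vee_{k=1}^{i}\Delta_k$, use the fact that $\Psi$ constitutes the finite elements of the algebraic ideal completion $I_{\Psi}$ so that compactness of $\psi$ applied to the directed set $\{\Delta_i(\omega)\}$ yields $s_{\Gamma}(\psi) = \bigcup_{i=1}^{\infty} s_{\Delta_i}(\psi)$, and then conclude measurability from the countable union and the limit formula from continuity of probability. Your additional verification that the a.o.p-based extension $\mu(\rho_0(s_{\Gamma}(\psi)))$ agrees with $P(s_{\Gamma}(\psi))$ once measurability is known is not needed (the paper simply uses the definition of $sp_{\Gamma}$ on $\mathcal{E}_{\Gamma}$), but it is correct and harmless.
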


\begin{proof}
If $\Gamma$ is a proper random variable defined by $\Gamma = \bigvee_{i=1}^{\infty} \Delta_{i}$, we may always assume that the $\Delta_{i}$ form a monotone sequence of simple random variables. Consider any $\psi \in \Phi$ and its support $s_{\Gamma}(\psi)$ relative to the random variable $\Gamma$. Then $\Delta_{i} \leq \Gamma$ implies $s_{\Delta_{i}}(\psi) \subseteq s_{\Gamma}(\psi)$, hence $\bigcup_{i=1}^{\infty} s_{\Delta_{i}}(\psi) \subseteq s_{\Gamma}(\psi)$. On the other hand we have
\begin{eqnarray}
s_{\Gamma}(\psi) = \{\omega \in \Omega:\psi \leq \bigvee_{i=1}^{\infty} \Delta_{i}(\omega)\}.
\nonumber
\end{eqnarray}
Consider an $\omega \in s_{\Gamma}(\psi)$. As a monotone sequence, the $\Delta_{i}(\omega)$ form a directed set. Its supremum $\Gamma(\omega)$ belongs to the compact information algebra $I_\Phi$, whose finite elements are given by $\Phi$. Therefore, by compactness, there must be an index $i$ such that $\psi \leq \Delta_{i}(\omega)$, hence $\omega \in s_{\Delta_{i}}(\psi)$. But this shows that $s_{\Gamma}(\psi) \subseteq \bigcup_{i=1}^{\infty} s_{\Delta_{i}}(\psi)$, hence
\begin{eqnarray} \label{eq:SupEqUnion}
s_{\Gamma}(\psi) = \bigcup_{i=1}^{\infty} s_{\Delta_{i}}(\psi).
\end{eqnarray}
Now, $ s_{\Delta_{i}}(\psi)$ is measurable for all $i$, hence $s_{\Gamma}(\psi)$ is so too. This proves the first part of the theorem.

If the sequence of the $\Delta_{i}$ is monotone increasing, then so is $s_{\Delta_{i}}(\psi)$ for any $\psi \in \Phi$. Then (\ref{eq:ApproxBySimpleRV}) follows from (\ref{eq:SupEqUnion}) and the continuity of probability.
\end{proof}

Another approximation of degrees of support by the degrees of support of simple random variables can be stated for random variables.

\begin{corollary} \label{cor:ApproxDegOfSupSimRV}
Let $(\Phi,\cdot,0,1;E)$ be an information algebra and $\Gamma$ a random variable in $\Phi$. Then, for all $\psi \in \Phi$,
\begin{eqnarray} \label{eq:ApproxBySimpRV}
sp_{\Gamma}(\psi) = \sup \{sp_{\Delta}(\psi):\Delta \in \mathcal{R}_{s},\Delta \leq \Gamma\}.
\end{eqnarray}
\end{corollary}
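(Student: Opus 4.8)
The plan is to derive the corollary directly from the already-established machinery rather than from scratch. The key identity I would exploit is Theorem \ref{th:IdOfAoPs}, which states that for a generalised random variable $\Gamma$ the associated allocation of probability satisfies $\rho_{\Gamma} = \bigvee \{\rho_{\Delta}: \Delta \leq \Gamma\}$, where the supremum runs over the simple random variables dominated by $\Gamma$ and is taken pointwise in the probability algebra $\mathcal{B}$. Evaluating at a fixed $\psi \in \Psi$ gives
\begin{eqnarray*}
\rho_{\Gamma}(\psi) = \bigvee \{\rho_{\Delta}(\psi): \Delta \in \mathcal{R}_{s}, \Delta \leq \Gamma\}.
\end{eqnarray*}
Since by definition $sp_{\Gamma}(\psi) = \mu(\rho_{\Gamma}(\psi))$ and likewise $sp_{\Delta}(\psi) = \mu(\rho_{\Delta}(\psi))$ for the simple random variables, the whole statement reduces to commuting the measure $\mu$ with this particular supremum.

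First I would observe that the family $\{\rho_{\Delta}(\psi): \Delta \leq \Gamma\}$ is \emph{upward directed} in $\mathcal{B}$. This is the crucial structural fact, and it follows from the homomorphism property of the map $\Delta \mapsto \rho_{\Delta}$ on simple random variables (Theorem \ref{thOpAllocProbSimRV}, equation (\ref{eqCombHomo})): given two simple random variables $\Delta_{1}, \Delta_{2} \leq \Gamma$, their combination $\Delta_{1} \cdot \Delta_{2}$ is again a simple random variable, it still satisfies $\Delta_{1} \cdot \Delta_{2} \leq \Gamma$ (because $\Gamma$ is idempotent, so $\Delta_{1} \cdot \Delta_{2} \leq \Gamma \cdot \Gamma = \Gamma$), and $\rho_{\Delta_{1} \cdot \Delta_{2}} = \rho_{\Delta_{1}} \cdot \rho_{\Delta_{2}}$ is the supremum of $\rho_{\Delta_{1}}$ and $\rho_{\Delta_{2}}$ in the semilattice $A_{\Psi}$, whence $\rho_{\Delta_{1} \cdot \Delta_{2}}(\psi) \geq \rho_{\Delta_{1}}(\psi) \vee \rho_{\Delta_{2}}(\psi)$. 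Thus any two members of the family are dominated by a third member, which is exactly upward directedness.

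Once directedness is in hand, I would invoke Lemma \ref{downward}, the upward-directed case, which asserts precisely that $\mu$ commutes with suprema of upward directed subsets of $\mathcal{B}$:
\begin{eqnarray*}
sp_{\Gamma}(\psi) = \mu(\rho_{\Gamma}(\psi)) = \mu\left( \bigvee \{\rho_{\Delta}(\psi): \Delta \leq \Gamma\} \right) = \sup \{\mu(\rho_{\Delta}(\psi)): \Delta \leq \Gamma\} = \sup \{sp_{\Delta}(\psi): \Delta \leq \Gamma\}.
\end{eqnarray*}
This is exactly (\ref{eq:ApproxBySimpRV}) and completes the argument.

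I expect the main obstacle to be the verification of upward directedness, since it is the only step requiring genuine input beyond quoting prior results; in particular one must be careful that the combination $\Delta_{1} \cdot \Delta_{2}$ really remains below $\Gamma$ and that the homomorphism translates combination into join in $\mathcal{B}$ at the level of the evaluated allocations. Everything else is bookkeeping: Theorem \ref{th:IdOfAoPs} supplies the allocation identity, and Lemma \ref{downward} supplies the measure-continuity on directed families. A minor point worth stating explicitly is that the corollary holds for all $\psi \in \Psi$ without any measurability hypothesis, precisely because the degrees of support here are defined via the allocation of probability $sp_{\Gamma} = \mu \circ \rho_{\Gamma}$ (the extension of Section \ref{secRandMaps}) rather than via $P(s_{\Gamma}(\psi))$, so no appeal to $\Gamma$-measurability of $\psi$ is needed.
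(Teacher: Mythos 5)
Your proof is correct and follows essentially the same route as the paper's own argument: Theorem \ref{th:IdOfAoPs} to write $\rho_{\Gamma}(\psi) = \bigvee\{\rho_{\Delta}(\psi):\Delta \leq \Gamma\}$, then Lemma \ref{downward} to commute $\mu$ with that supremum. Your explicit verification that the family $\{\rho_{\Delta}(\psi):\Delta \leq \Gamma\}$ is \emph{upward} directed (via closure of $\{\Delta \leq \Gamma\}$ under combination and the homomorphism of Theorem \ref{thOpAllocProbSimRV}) is a detail the paper merely asserts --- and in fact mislabels as ``downwards directed'' --- so your writeup is, if anything, slightly more careful on the one step that needs genuine justification.
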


\begin{proof}
We have by Theorem \ref{th:IdOfAoPs}  that
\begin{eqnarray}
\rho_{\Gamma}(\psi) = \bigvee \{\rho_{\Delta}(\psi):\Delta \leq \Gamma\}.
\nonumber
\end{eqnarray}
Here, as in the sequel, $\Delta$ always denote simple random variables. Let $(\mu,\mathcal{B})$ be the probability algebra associated with the probability space on which $\Gamma$ is defined. Then $sp_{\Gamma} = \mu \circ \rho_{\Gamma}$. The set $ \{\rho_{\Delta}(\psi):\Delta \leq \Gamma\}$ is downwards directed in $\mathcal{B}$. Therefore, by Lemma \ref{downward}, we conclude that $sp_{\Gamma}(\psi) = \mu(\rho_{\Gamma}(\psi)) = \sup \{\mu(\rho_{\Delta}(\psi)):\Delta \leq \Gamma\} = \sup \{sp_{\Delta}(\psi):\Delta \leq \Gamma\}$.
\end{proof}

We are in this chapter going to study functions monotone of order $\infty$, satisfying properties (1) and (2) from Theorem \ref{th:SpFctProp} above. As we have seen, such functions do arise from random mappings in different ways and also from allocations of probability. Therefore, we define a corresponding class of functions.

\begin{definition} \label{def:SpFct}
Let $\mathcal{E}$ be a join-semilattice with a least element $1$. Then a function $sp:\mathcal{E} \rightarrow $[0,1] satisfying (1) and (2) below is called a support function on $\mathcal{E}$:
\begin{enumerate}
\item $sp(1) = 1$.
\item If $\psi_{1},\ldots,\psi_{m} \geq \psi$, $\psi_{1},\ldots,\psi_{m},\psi \in \mathcal{E}$, 
\begin{eqnarray} \label{eq:MonOrderInf1}
sp(\psi) \geq \sum_{\emptyset \not= I \subseteq \{1,\ldots,m\}} (-1)^{\vert I \vert + 1} sp(\vee_{i \in I} \psi_{i}).
\end{eqnarray}

\item If in addition $\mathcal{E}$ is closed under countable joins, and  for any montone sequence $\psi_{1} \leq \psi_{2} \leq \cdots$ the condition
\begin{eqnarray} \label{eq:ContOfSpFct}
sp(\bigvee_{i=1}^{\infty} \psi_{i}) = \lim_{i \rightarrow \infty} sp(\psi_{i})
\end{eqnarray}
holds, then $sp$ is called a continuous support function of $\mathcal{E}$.
\item If further $\mathcal{E}$ is a complete semilattice and for any directed set $X \subseteq \mathcal{E}$,
\begin{eqnarray} \label{eqCondOfSpFct}
sp(\bigvee X) = \inf_{\psi \in X} sp(\psi)
\end{eqnarray}
holds, then $sp$ is called a condensable support function on $\mathcal{E}$.
\end{enumerate}
\end{definition}

So, for any random mapping $\Gamma$, the function $sp_{\Gamma}$ is a \textit{support function} on $\mathcal{E}_{\Gamma}$ and even on $\Phi$ (see Theorem \ref{th:SpFctPropAoP}). Proper random variables $\Gamma$ have \textit{continuous} support functions $sp_{\Gamma}$ and the support functions $sp_{\Gamma} = \mu \circ \rho_{\Gamma}$ of random variables $\Gamma$ are \textit{condensable} on $\Phi$, if $(\Phi,\cdot,0,1;E)$ is a compact information algebra (Theorems \ref{th:SpFctOfGenRV} and \ref{th:AllElMeas}). We are going to study such support functions. The first question we are going to examine, is whether any support function can be obtained as the support function of a random mapping. This question will be addressed in the next section. Further, if a support function is defined on some sub-semilattice $\mathcal{E}$ of an information algebra $(\Phi,\cdot,0,1:E)$, how can this function be extended to all of $\Psi$? This question will be studied in Sections  \ref{subsec:CanRandMap} and \ref{subsec:MinExt}.

%%%%%%%%%%%%%%%%%%%%%%%%%%%%%%%%%%%%%%%%%%%%%%%%%%%%%%%%%%
\section{Generating support functions} \label{sec:GenSpFct}

Any random mapping $\Gamma$ from some probability space $(\Omega,\mathcal{A},P)$ into an information algebra $(\Phi,\cdot,0,1;E)$ generates a support function $sp_{\Gamma}$ on the join-semilattice $\mathcal{E}_{\Gamma} \subseteq \Psi$ of its $\Gamma$-measurable elements. We remind that $\mathcal{E}_{\Gamma}$ contains at least the element $1$ of $\Psi$. Now, suppose that $\mathcal{E}$ is a join-semilattice containing a least element $1$ and that $sp : \mathcal{E} \rightarrow \mathbb{R}$ is a support function according to Definition \ref{def:SpFct} in the previous section. In fact, we shall always consider $\mathcal{E}$ as a sub-semilattice of some information algebra $(\Phi,\cdot,0,1;E)$. Is there a random mapping $\Gamma$ into $\Phi$ such that its support function $sp_{\Gamma}$ coincides with $sp$ on $\mathcal{E}$? We show in this section that the answer is affirmative, with the small amendment, that the mapping is into the ideal completion $I_{\Phi}$ of $\Phi$ rather than into $\Phi$ itself. It is an extension and generalization of \cite{kohlas93}.

%%%refer to my 'paper
This result is based on the \textit{Theorem of Krein-Milman} which states that in a locally convex topological space which is Hausdorff, any compact convex set $S$ is the closure of the convex hull of its extreme points \cite{phelps01}. The set $S$ consists in our case of the support functions as elements in the space of real-valued functions on $\mathcal{E}$. We shall use a result of Choquet on the extreme points of monotone functions of order $\infty$ \cite{choquet53}. In fact, the theory presented here can be seen as part of Choquet's theory of capacities, and illustrates in particular the connection of capacities to probability.

Let $\mathcal{E}$ be a join-semilattice, containing the least element $1$. Consider the vector space $V$ of functions $f:\mathcal{E} \rightarrow \mathbb{R}$ with pointwise addition and scalar multiplication. It becomes a topological space with pointwise convergence. Since $\mathbb{R}$ is Hausdorff, so is $V$ \cite{kelley55}. Define $p_{\psi}(f) = \vert f(\psi) \vert$ for $f \in V$ and $\psi \in \mathcal{E}$. Then $p_{\psi}$ is a \textit{semi-norm}, that is 
\begin{enumerate}
\item it is \textit{positive semidefinite:} $p_{\psi}(f) \geq 0$ for all $f \in V$,
\item it is \textit{positive homogeneous:} $p_{\psi}(\lambda \cdot f) = \lambda \cdot p_{\psi}(f)$, for all $\lambda \geq 0$,
\item and it satisfies the \textit{triangle inequality:} $p_{\psi}(f + g) \leq p_{\psi}(f) + p_{\psi}(g)$. 
\end{enumerate}
A vector space with a family of seminorms is called \textit{locally convex}. Therefore $V$ is a \textit{locally convex topological Hausdorff }. 

Now, let $S$ denote the set of all support functions on $\mathcal{E}$, which is a subset of $V$. The set $S$ is obviously \textit{convex} and \textit{closed} in $V$. Furthermore, $S$ is contained in the product space $\mathbb{R}^{\mathcal{E}} = \prod \{\mathbb{R}:\psi \in \mathcal{E}\}$. Define $S[\psi] = \{f(\psi):f \in S\}$. These sets are \textit{bounded} for all $\psi \in \mathcal{E}$ and their closures $\bar{S}[\psi]$ are therefore \textit{compact}. By Tychonov's theorem \cite{kelley55} the product $\prod \{\bar{S}[\psi]:\psi \in \mathcal{E}\}$ is compact and since $S \subseteq \prod \{\bar{S}[\psi]:\psi \in \mathcal{E}\}$, $S$ is \textit{compact} too.

Next we are going to apply the Krein-Milman theorem to the convex, compact set $S$. Here is the theorem:

\begin{theorem} \label{th:KreinMilm} \textbf{Theorem of Krein-Milman:} 
A non-empty convex, compact subset $S$ of a locally convex Hausdorff space is the closed convex hull of its extreme points.
\end{theorem}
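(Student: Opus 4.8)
The statement is the classical Krein--Milman theorem, so the most economical option is simply to invoke the cited reference \cite{phelps01}. For completeness I sketch the standard argument, which rests on three external tools: the Hahn--Banach separation theorem (available precisely because $V$ was shown above to be locally convex and Hausdorff), Zorn's lemma, and the compactness of $S$ just established. Throughout, I would call a nonempty closed subset $F \subseteq S$ an \emph{extreme set} (or face) if $\lambda x + (1-\lambda)y \in F$ with $x,y \in S$ and $\lambda \in (0,1)$ forces $x,y \in F$; an \emph{extreme point} is a point $p$ for which $\{p\}$ is an extreme set.

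First I would show that every nonempty compact convex set has extreme points. Order the collection of extreme sets of $S$ by reverse inclusion. The intersection of a chain of extreme sets is again an extreme set, and it is nonempty by compactness (the finite intersection property, since a finite subchain has a smallest, nonempty member). Zorn's lemma then yields a minimal extreme set $M$. If $M$ contained two distinct points $x \neq y$, Hahn--Banach would supply a continuous linear functional $\ell$ with $\ell(x) \neq \ell(y)$, and the subset of $M$ on which $\ell$ attains its maximum would be a proper closed extreme subset of $M$, contradicting minimality. Hence $M$ is a singleton, that is, an extreme point.

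Next, let $K = \overline{\mathrm{conv}}(\mathrm{ext}\,S)$ be the closed convex hull of the extreme points; plainly $K \subseteq S$. To prove equality I would argue by contradiction: suppose $x_0 \in S \setminus K$. Since $K$ is compact and convex and $V$ is locally convex Hausdorff, Hahn--Banach strict separation gives a continuous linear $\ell$ with $\ell(x_0) > \sup_{y \in K} \ell(y)$. The face $F = \{x \in S : \ell(x) = \max_{z \in S} \ell(z)\}$ is a nonempty closed extreme subset of $S$, and, being itself compact and convex, it contains an extreme point $p$ by the first step; since $F$ is a face of $S$, $p$ is in fact an extreme point of $S$, hence $p \in K$. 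This yields $\ell(p) \leq \sup_{y \in K}\ell(y) < \ell(x_0) \leq \max_{z \in S}\ell(z) = \ell(p)$, a contradiction. Therefore $S = K$.

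The main obstacle is infrastructural rather than computational: the whole argument hinges on having Hahn--Banach separation in the form valid for locally convex Hausdorff spaces, which is exactly why the preceding discussion took care to verify that $V$ carries a locally convex Hausdorff topology and that $S$ is convex and compact. Once these hypotheses are in place the reasoning above is entirely routine, and in the running text it suffices to cite \cite{phelps01}.
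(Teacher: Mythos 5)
Your proposal matches the paper's treatment exactly: the paper states the Krein--Milman theorem as a classical external result and relies on the citation \cite{phelps01} without giving any proof, which is precisely what your first sentence recommends. Your supplementary sketch (Zorn's lemma applied to faces ordered by reverse inclusion, Hahn--Banach to show a minimal face is a singleton, then strict separation to show the closed convex hull of the extreme points exhausts $S$) is the standard argument and is sound, correctly using the local convexity and Hausdorff property of $V$ and the compactness of $S$ established in the surrounding text.
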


Before we are going to apply this theorem to our problem of finding a random mapping inducing a given support function, we transform the theorem into an integral representation, following \cite{phelps01}. As a preparation we need a further notion. Let $P$ be a probability measure on a subset $C$ of $V$, that is, a nonnegative regular measure on the $\sigma$-algebra of Borel sets in $S$, such that $P(C) = 1$. A point $f \in V$ is said to be represented by $P$, if for every linear functional $h : V \rightarrow \mathbb{R}$, 
\begin{eqnarray}
h(f) = \int_{C} h(v) dP(v).
\nonumber
\end{eqnarray}
We cite the following lemma from \cite{phelps01}:

\begin{lemma} \label{le:Phelps}
Let $C$ be a compact subset of a locally convex topological space $V$. A point $f \in V$ belongs to the closed convex hull $H$ of $C$, if and only if there is a probability measure $P$ on $C$ which represents $f$.
\end{lemma}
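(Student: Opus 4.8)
The plan is to prove the two implications separately; this is a classical result of Choquet theory (the barycentric representation underlying the Krein--Milman theorem, cf.\ \cite{phelps01}), so I would follow the standard route rather than invent anything new. Throughout, ``$P$ represents $f$'' means $h(f) = \int_C h\,dP$ for every continuous linear functional $h$ on $V$, and ``probability measure on $C$'' means a regular Borel probability measure.

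\textbf{The easy direction} (a representing measure forces membership in $H$): Suppose there is a probability measure $P$ on $C$ representing $f$, and assume for contradiction that $f \notin H$. Since $H$ is closed and convex and $V$ is a locally convex Hausdorff space, the Hahn--Banach separation theorem furnishes a continuous linear functional $h$ and reals $\alpha < \beta$ with $h(f) \leq \alpha$ and $h(x) \geq \beta$ for all $x \in H$, in particular $h(c) \geq \beta$ for all $c \in C$. Integrating this last inequality against $P$ and using $P(C) = 1$ gives $h(f) = \int_C h\,dP \geq \beta > \alpha \geq h(f)$, a contradiction. Hence $f \in H$. This step is routine and is exactly where local convexity enters, to separate a point from a closed convex set.

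\textbf{The hard direction} (membership in $H$ yields a representing measure): First I would establish the existence of a \emph{barycenter} $b(P) \in V$ for any probability measure $P$ on the compact set $C$, i.e.\ a point with $h(b(P)) = \int_C h\,dP$ for all continuous linear $h$. I would obtain $b(P)$ as a point of $\overline{\mathrm{conv}}(C) \cap \bigcap_h H_h$, where $H_h = \{x : h(x) = \int_C h\,dP\}$, arguing by the finite intersection property: for finitely many functionals $h_1,\dots,h_n$ the map $T(x) = (h_1(x),\dots,h_n(x))$ sends $C$ to a compact subset of $\mathbb{R}^n$, and the vector $\big(\int h_1\,dP,\dots,\int h_n\,dP\big)$ is the barycenter of the pushforward $T_{*}P$, hence lies in $\mathrm{conv}(T(C))$ (finite-dimensional barycenters exist by Carath\'eodory). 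Writing it as $\sum \lambda_i T(c_i)$ produces $x_0 = \sum \lambda_i c_i \in \mathrm{conv}(C)$ meeting $H_{h_1},\dots,H_{h_n}$, so each finite subfamily of the closed sets $\overline{\mathrm{conv}}(C)\cap H_h$ is nonempty. Next, the map $P \mapsto b(P)$ from the set $M_1(C)$ of probability measures on $C$ (weak$^{*}$-compact and convex by Riesz representation and Banach--Alaoglu) into $V$ is affine and weak$^{*}$-to-$V$ continuous; its image is therefore compact and convex and contains every point mass $\delta_c$, whose barycenter is $c$. Thus the image contains $C$, hence $\mathrm{conv}(C)$, and being closed it contains the full closed convex hull $H$. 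Consequently every $f \in H$ equals $b(P)$ for some $P \in M_1(C)$, which then represents $f$.

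\textbf{The main obstacle} is the existence of the barycenter and, implicitly, the compactness of $\overline{\mathrm{conv}}(C)$: in a general locally convex space the closed convex hull of a compact set need not be compact, and the finite intersection argument above genuinely needs $\overline{\mathrm{conv}}(C)$ compact for the infinite intersection to be nonempty. In the intended application this causes no difficulty, since $C$ will be (the closure of) the extreme points of the already compact convex set $S$, whence $\overline{\mathrm{conv}}(C) \subseteq S$ is compact; for the lemma as a standalone statement one should either assume $\overline{\mathrm{conv}}(C)$ compact or work inside such an ambient compact convex set. Granting that, the finite intersection step closes the existence gap, and the affineness and continuity of $b$ are then a direct check against the definition of the weak$^{*}$ topology.
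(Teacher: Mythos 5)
The paper itself does not prove this lemma; it is quoted directly from \cite{phelps01} (it is Proposition 1.2 there), so there is no in-paper argument to compare against, and your proposal has to stand on its own. Your separation argument for the easy direction is exactly the standard one and is correct. The problem is the hard direction. Your barycenter construction genuinely requires $\overline{\mathrm{conv}}(C)$ to be compact: without that, the finite-intersection argument yields nothing, as you yourself observe. But your conclusion --- that the lemma as a standalone statement is deficient and one should add the hypothesis that $\overline{\mathrm{conv}}(C)$ is compact --- is not correct. The lemma is true exactly as stated, for an arbitrary compact subset $C$ of a locally convex Hausdorff space, and that is the form in which Phelps proves it and in which the paper uses it. So, as a proof of the statement in its stated generality, your argument has a real gap, and the gap lies in the method, not in the statement.

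The fix uses only ingredients you already invoke, deployed in the opposite order: instead of building a barycenter for an arbitrary $P \in M_1(C)$ (which is where hull-compactness enters), start from $f \in H$ and build the measure. Since $f$ lies in the closed convex hull of $C$, there is a net $x_\alpha = \sum_i \lambda_i^\alpha c_i^\alpha$ of finite convex combinations of points $c_i^\alpha \in C$ with $x_\alpha \to f$. Each $x_\alpha$ is represented by the discrete probability measure $\mu_\alpha = \sum_i \lambda_i^\alpha \delta_{c_i^\alpha} \in M_1(C)$. By the Riesz representation theorem and Banach--Alaoglu, $M_1(C)$ is weak$^*$ compact, so some subnet $\mu_\beta$ converges weak$^*$ to a $P \in M_1(C)$. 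For every continuous linear functional $h$ on $V$, the restriction of $h$ to $C$ is a continuous function on $C$, hence $\int_C h\, dP = \lim_\beta \int_C h\, d\mu_\beta = \lim_\beta h(x_\beta) = h(f)$. Thus $P$ represents $f$, with no compactness assumption on the hull. Your approach is not wasted: in the paper's application $C$ is the closure of $ext(S)$ inside the compact convex set $S$, so $\overline{\mathrm{conv}}(C) \subseteq S$ is compact and your barycenter argument does go through there; but to prove the lemma as stated you need the net-of-discrete-measures argument above.
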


Now, with the aid of this lemma, we reformulate the Krein-Milman Theorem \ref{th:KreinMilm}.

\begin{theorem} \label{th:KreinMilm2}
Every point $f$ of a convex, compact subset $S$ of a locally convex Hausdorff space $V$ is represented by a probability measure on $S$, which is supported by the closure of the extreme points $ext(S)$ of $S$, i.e. $P(\overline{ext}(S)) = 1$.
\end{theorem}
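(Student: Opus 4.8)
The statement is a repackaging of the Krein--Milman theorem (Theorem~\ref{th:KreinMilm}) into the integral (measure-theoretic) form, and the bridge between the two is precisely Lemma~\ref{le:Phelps}. So the plan is to combine these two assumed results, the only real work being to identify the correct compact set to feed into Phelps' lemma.

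First I would invoke Theorem~\ref{th:KreinMilm}: since $S$ is a convex, compact subset of the locally convex Hausdorff space $V$, it equals the closed convex hull of its extreme points $ext(S)$. Next I would set $C = \bar{ext}(S)$, the closure of the extreme points, and check that $C$ is compact: because $S$ is closed we have $\bar{ext}(S) \subseteq S$, and a closed subset of the compact set $S$ is compact. I then verify that the closed convex hull of $C$ is again all of $S$. Indeed, from $ext(S) \subseteq \bar{ext}(S) = C \subseteq S$ one gets, after taking closed convex hulls and using that the closed convex hull of $S$ is $S$ itself,
\begin{eqnarray*}
S = \overline{\mathrm{conv}}(ext(S)) \subseteq \overline{\mathrm{conv}}(C) \subseteq \overline{\mathrm{conv}}(S) = S,
\end{eqnarray*}
so $\overline{\mathrm{conv}}(C) = S$.

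Now fix any $f \in S$. By the previous step $f$ lies in the closed convex hull of the compact set $C = \bar{ext}(S)$, so Lemma~\ref{le:Phelps} applies and yields a probability measure $P$ on $C$ which represents $f$, that is $h(f) = \int_{C} h(v)\, dP(v)$ for every linear functional $h$ on $V$. Regarding $P$ as a Borel probability measure on $S$ (extending it by zero on $S \setminus C$, which is legitimate since $C$ is a Borel, indeed closed, subset of $S$), we obtain a probability measure on $S$ that represents $f$ and satisfies $P(\bar{ext}(S)) = P(C) = 1$. This is exactly the assertion of the theorem.

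I do not expect a genuine obstacle here, since both the Krein--Milman theorem and Phelps' representation lemma are taken as given; the only point requiring care is the passage from $ext(S)$ to its closure $\bar{ext}(S)$ — one must use the closure (rather than $ext(S)$ itself) to guarantee compactness, while simultaneously checking that enlarging to the closure does not change the closed convex hull, so that the hypothesis ``$f$ lies in the closed convex hull of a compact set'' of Lemma~\ref{le:Phelps} is met. The remaining step, reinterpreting a measure on the compact subset $C$ as a measure on $S$ concentrated on $C$, is routine.
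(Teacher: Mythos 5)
Your proposal is correct and follows essentially the same route as the paper: it invokes the Krein--Milman theorem (Theorem \ref{th:KreinMilm}) to place $f$ in the closed convex hull of $ext(S)$, and then applies Lemma \ref{le:Phelps} to the compact set $\bar{ext}(S)$ to obtain the representing probability measure. If anything, your justification of compactness of $\bar{ext}(S)$ (a closed subset of the compact Hausdorff set $S$) is more careful than the paper's appeal to boundedness, and your sandwich argument showing $\overline{\mathrm{conv}}(\bar{ext}(S)) = S$ makes explicit a detail the paper leaves implicit.
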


\begin{proof}
By the Krein-Milman Theorem  \ref{th:KreinMilm}, $f \in S$ means, that $f$ belongs to the closure of the convex hull of the extreme points $ext(S)$ of $S$. Clearly, the set of extreme point of $S$ is bounded, its closure is therefore compact. Hence, by Lemma \ref{le:Phelps}, $f$ is represented by a probability on the closure of the extreme points of $S$.
\end{proof}

What are the extreme points of the set $S$ of support functions? This question is answered by Theorem 43.4 in \cite{choquet53}. In this theorem Choquet considers functions \textit{alternating of order} $\infty$. This means that in (\ref{eq:MonOrderInf1}) of Definition \ref{def:SpFct} the inverse inequality holds. Now, if $f$ is \textit{monotone} of order $\infty$, then $g(\psi) = f(1) - f(\psi)$ is \textit{alternating} of order $\infty$. So there is a close relation between the two notions. Choquet further considers alternating functions on an ordered commutative semigroup with a zero-element with all elements greater than zero. This applies to our join-semigroup $\mathcal{E}$, which, in addition, is an \textit{idempotent} semigroup. If $\mathcal{C}$ is a convex cone in $V$ and $\mathcal{H}$ is an affine subspace of $V$, not containing the zero function, and which meets every ray of $\mathcal{C}$, then $\mathcal{C} \cap \mathcal{H}$ is a convex set and $f \in \mathcal{C} \cap \mathcal{H}$ is an extreme point of this convex set, if and only if $f$ is an extremal point of the convex cone $\mathcal{C}$. As a consequence of Theorem 43.4, Choquet states in Section 46 of \cite{choquet53} that the extremal points of the convex cone $\mathcal{M}$ of functions monotone to the order $\infty$ are the exponentials on $\mathcal{E}$, that is functions $e :  \mathcal{E} \rightarrow \mathbb{R}$ such that $0 \leq e(\psi) \leq 1$, for all $\psi \in \mathcal(E)$ and
\begin{eqnarray}
e(\phi \cdot \psi) = e(\psi) \times e(\psi).
\nonumber
\end{eqnarray}
for all $\phi,\psi \in \mathcal{E}$ (here $\cdot$ on the left denotes the semigroup operation, $\times$ on the right arithmetic multiplication).

Note now that item 1 of Definition \ref{def:SpFct} requires for a support function that $f(1) = 1$. This defines an affine hyperplane $\mathcal{H}$ in $V$ and $\mathcal{M} \cap \mathcal{H}$ is exactly the set of support functions on $\mathcal{E}$. So its extreme points are the exponentials $e$ on $\mathcal{E}$ with $e(1) = 1$. Since $\mathcal{E}$ is idempotent, we have for any exponential $e(\psi) = e(\psi \cdot \psi) = e(\psi) \times e(\psi)$. Hence $e(\psi)$ takes only the values $0$ or $1$. Let $e_{i}$ for $i = 1,2,\ldots$ be a convergent sequence of exponentials on $\mathcal{E}$, such that
\begin{eqnarray}
e(\psi) = \lim_{i \rightarrow \infty} e_{i}(\psi).
\nonumber
\end{eqnarray}
Then $e$ is a support function, since $S$ is closed, and it is also an exponential on $\mathcal{E}$. So the set of exponentials is both bounded and closed, hence compact. Define for an exponential $e$
\begin{eqnarray}
I_{e} = \{\psi \in \mathcal{E}:e(\psi) = 1\}.
\nonumber
\end{eqnarray}
This is obviously an \textit{ideal} in $\mathcal{E}$ and any ideal $I$ in $\mathcal{E}$ defines an exponential by $e(\psi) = 1$ if $\psi \in I$ and $e(\psi) = 0$ otherwise. So, there is a one-to-one relation between exponentials on $\mathcal{E}$ and ideals of $\mathcal{E}$. We may identify the set of exponentials on $\mathcal{E}$ by the set $I_{\mathcal{E}}$ of ideals in $\mathcal{E}$. 

Fix an element $\psi \in \mathcal{E}$. Define, for $f \in V$, $h_{\psi}(f) = f(\psi)$. This defines a continuous linear function $h_{\psi} : V \rightarrow \mathbb{R}$. Consider now any support function $sp \in  S$. By the reformulated version of the Krein-Milman Theorem, \ref{th:KreinMilm2}, $sp$ is represented by a probability measure on the closed set of its extreme points, that is, the set of exponentials on $\mathcal{E}$. Hence, we have
\begin{eqnarray}
sp(\psi) = h_{\psi}(sp) = \int_{ext(S)} h_{\psi}(e) dP(e) = \int_{ext(S)} e(\psi) dP(e),
\nonumber
\end{eqnarray}
for some probability measure $P$ supported by $ext(S)$ and for all $\psi \in \mathcal{E}$. But, because $e$ is a $0$-$1$-function, this gives
\begin{eqnarray}
sp(\psi) = P\{e:e(\psi) = 1\}.
\nonumber
\end{eqnarray}

Now, we are nearly done. We consider the probability space $(ext(S),\mathcal{B},P)$, where $\mathcal{B}$ denotes the Borel $\sigma$-algebra of subsets of $ext(S)$ and $P$ the probability introduced above. We now construct a mapping from $ext(S)$ into $(I_\Phi,\cdot,0,1;E($, the ideal extension of the information algebra $(\Phi,,\cdot,0,1;E)$. Since $\mathcal{E}$ is supposed to be a sub-semilattice of $\Phi$, the ideal $I_{e}$ associated with the exponential $e$ can be extended to an ideal in $\Phi$, generally in many ways, for example by
\begin{eqnarray}
J_{e} = \{\psi \in \Phi:\psi \leq \phi \textrm{ for some}\ \phi \in I_{e}\}.
\nonumber
\end{eqnarray}
Then we define the random mapping $\Gamma(e) = J_{e}$ from the probability space $(ext(S),\mathcal{B},P)$ into the information algebra $I_\Phi$. As usual, we consider $\Psi$ as a subset of $I_{\Phi}$ by the embedding $\psi \mapsto \downarrow\! \psi$. Let $\psi \in \mathcal{E}$. Then for the support of $\psi$ by $\Gamma$ we obtain
\begin{eqnarray}
s_{\Gamma}(\psi) &=& \{e \in ext(S):\psi \in J_{e}\} = \{e \in ext(S):\psi \in I_{e}\}
\nonumber \\
 &=& \{e \in ext(S):e(\psi) = 1\}.
\nonumber
\end{eqnarray}
As we have seen above, the last set is measurable, that is belongs to $\mathcal{B}$. Hence we see that all elements of $\mathcal{E}$ are $\Gamma$-measurable, $\mathcal{E} \subseteq \mathcal{E}_{\Gamma}$. Further, 
\begin{eqnarray}
sp_{\Gamma}(\psi) = P(s_{\Gamma}(\psi)) = P\{e \in ext(S):e(\psi) = 1\} = sp(\psi).
\nonumber
\end{eqnarray}
So $sp_{\Gamma}$ and $sp$ coincide on $\mathcal{E}$. In this sense $sp$ is induced by the random mapping $\Gamma$, hence $\Gamma$ generates $sp$. We should stress that the $\Gamma$ defined above is not the unique random mapping generating $sp$. This issue will be addressed in Section \ref{subsec:CanRandMap}.

Next we turn to \textit{continuous} support functions. This time let $\mathcal{E}$ be a $\sigma$-join-semilattice, a semilattice closed under \textit{countable} joins. Again, we assume $\mathcal{E}$ to be a sub-semilattice of some $\sigma$-information algebra $(\Phi,\cdot,0,1;E)$. Let $S_{c}$ denote the set of continuous support functions on $\mathcal{E}$. As above, we argue that  $S_{c}$ is still a convex, compact subset of the function space $V$. Therefore, the revised Theorem of Krein-Milman \ref{th:KreinMilm2} still applies. Because the elements of $S_{c}$ are still monotone of order $\infty$, Choquet's Theorem 43.4 \cite{choquet53} is also still applicable. The extreme elements of $S_{c}$ are therefore again \textit{exponentials} on $\mathcal{E}$. But since they belong to $S_{c}$, they must be \textit{continuous} exponentials. That is, if $\psi_{1} \leq \psi_{2} \leq \ldots$ is a monotone sequence in $\mathcal{E}$, then 
\begin{eqnarray}
e(\bigvee_{i=1}^{\infty} \psi_{i}) = \lim_{i \rightarrow \infty} e(\psi_{i}).
\nonumber
\end{eqnarray}
Since $e$ is a monotone $0$-$1$ function it follows that
\begin{eqnarray}
e(\bigvee_{i=1}^{\infty} \psi_{i}) = \prod_{i=1}^{\infty} e(\psi_{i}).
\nonumber
\end{eqnarray}
The set of extreme points $ext(S_{c})$ is again bounded and closed, hence compact. As above, define $I_{e} = \{\psi \in \mathcal{E}:e(\psi) = 1\}$. This time $I_{e}$ becomes a $\sigma$-ideal in $ \mathcal{E}$.

Consider a continuous support function $sp \in S_{c}$. Define, as above, $h_{\psi}(f) = f(\psi)$, a linear function from $V$ into $\mathbb{R}$. By Theorem \ref{th:KreinMilm2} there exists a probability measure $P$ on $ext(S_{c})$ such that  
\begin{eqnarray}
sp(\psi) = h_{\psi}(sp) = \int_{ext(S_{c})} h_{\psi}(e) dP(e) = \int_{ext(S_{c})} e(\psi) dP(e).
\nonumber
\end{eqnarray}
As above this gives
\begin{eqnarray}
sp(\psi) = P\{e \in ext(S_{c}):e(\psi) = 1\},
\nonumber
\end{eqnarray}
So, again as above, we may define a random mapping from the probability space $(ext(S_{c}),\mathcal{B}_{c},P)$ into the ideal completion $I_\Phi$ of the information algebra $\Phi$, by $\Gamma(e) = J_{e}$. Here $\mathcal{B}_{c}$ is the $\sigma$-field of Borel sets in $ext(S_{c})$. Note that in this case $J_{e}$ is a $\sigma$-ideal in $\Psi$. As above we verify that
\begin{eqnarray}
sp_{\Gamma}(\psi) = P(s_{\Gamma}(\psi)) = P\{e \in ext(S_{c}):e(\psi) = 1\} = sp(\psi)
\nonumber
\end{eqnarray}
for all $\psi \in \mathcal{E}$. So, $\Gamma$ is a random mapping generating the continuous support function $sp$ on $\mathcal{E}$. 

To conclude this part, we formulate the main result of this section in the following theorem.

\begin{theorem} \label{th:RanMapGenSpFct}
Let $(\Phi,\cdot,0,1;E)$ be an information algebra and $\mathcal{E} \subseteq \Phi$ a join-sub-semilattice of $(\Phi;\leq)$ under information order containing $1$. If $sp$ is a support function on $\mathcal{E}$, then there exists a probability space $(\Omega,\mathcal{A},P)$ and a random mapping $\Gamma$ from this space into the ideal completion of $I_\Phi$ of $\Phi,$, such that $\mathcal{E} \subseteq \mathcal{E}_\Gamma$ and its support function coincides on $\mathcal{E}$, with $sp$, that is $sp_{\Gamma}(\psi) = sp(\psi)$ for all $\psi \in \mathcal{E}$. 

If $(\Phi,\cdot,0,1;E)$ is a $\sigma$-information algebra, $\mathcal{E} \subseteq \Phi$ a $\sigma$-semilattice and $sp$ continuous, then there is a random mapping $\Gamma$ generating $sp$, as in the first part of the theorem, which maps to $\sigma$-ideals of $\Phi$.
\end{theorem}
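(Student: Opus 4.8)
The plan is to realize an abstract support function as the support function of a random mapping by viewing support functions as points of a compact convex set and applying the Krein--Milman theorem in its integral (Choquet) form. First I would set up the ambient space: let $V$ be the vector space of real functions on $\mathcal{E}$ with pointwise operations, equipped with the topology of pointwise convergence. Since $\mathbb{R}$ is Hausdorff, $V$ is a locally convex Hausdorff space, the seminorms $p_\psi(f)=\vert f(\psi)\vert$ generating the topology. Let $S\subseteq V$ be the set of all support functions on $\mathcal{E}$. I would check that $S$ is convex and closed (both properties follow directly from the defining inequalities (1) and (2) of Definition \ref{def:SpFct}, which are preserved under convex combinations and pointwise limits), and that $S$ is compact: each coordinate set $S[\psi]=\{f(\psi):f\in S\}$ lies in $[0,1]$, hence has compact closure, so by Tychonov's theorem the product $\prod_{\psi\in\mathcal{E}}\bar S[\psi]$ is compact and $S$ is a closed subset of it.

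Next I would identify the extreme points of $S$. Support functions are exactly the functions monotone of order $\infty$ satisfying $f(1)=1$, so I would invoke Choquet's Theorem 43.4 \cite{choquet53}: the extremal rays of the convex cone of functions monotone of order $\infty$ on the idempotent commutative semigroup $\mathcal{E}$ are the exponentials $e:\mathcal{E}\to[0,1]$ with $e(\phi\cdot\psi)=e(\phi)\times e(\psi)$. Intersecting the cone with the affine hyperplane $\{f:f(1)=1\}$ yields $S$, whose extreme points are therefore the exponentials $e$ with $e(1)=1$. Idempotency of $\mathcal{E}$ forces $e(\psi)=e(\psi)\times e(\psi)$, so each such $e$ is a $0$--$1$-valued function, and the set of these exponentials is bounded and closed, hence compact. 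Each exponential $e$ corresponds bijectively to the ideal $I_e=\{\psi\in\mathcal{E}:e(\psi)=1\}$ of $\mathcal{E}$.

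I would then convert the geometric Krein--Milman statement into the integral representation via Lemma \ref{le:Phelps}: every $sp\in S$ is represented by a probability measure $P$ carried by the closure of $\mathrm{ext}(S)$, so that for the continuous linear evaluation functionals $h_\psi(f)=f(\psi)$ one has
\begin{eqnarray*}
sp(\psi)=h_\psi(sp)=\int_{\mathrm{ext}(S)} e(\psi)\,dP(e)=P\{e:e(\psi)=1\},
\end{eqnarray*}
the last equality because $e$ is $0$--$1$-valued. Taking the probability space $(\mathrm{ext}(S),\mathcal{B},P)$ with $\mathcal{B}$ the Borel $\sigma$-algebra, I would extend each ideal $I_e$ of $\mathcal{E}$ to an ideal $J_e=\{\psi\in\Psi:\psi\le\phi\textrm{ for some }\phi\in I_e\}$ of $\Psi$ and define the random mapping $\Gamma(e)=J_e$ into the ideal completion $(I_\Psi,D;\leq,\bot,\cdot,\epsilon)$. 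For $\psi\in\mathcal{E}$ the support set $s_\Gamma(\psi)=\{e:\psi\in J_e\}=\{e:e(\psi)=1\}$ is Borel, so $\mathcal{E}\subseteq\mathcal{E}_\Gamma$, and $sp_\Gamma(\psi)=P(s_\Gamma(\psi))=sp(\psi)$, giving the first claim.

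For the continuous case I would repeat the argument with $\mathcal{E}$ a $\sigma$-semilattice inside a $\sigma$-information algebra and $S_c\subseteq S$ the continuous support functions. The key points to recheck are that $S_c$ remains convex, closed and compact, and that its extreme points are the \emph{continuous} exponentials, for which $e(\bigvee_{i=1}^\infty\psi_i)=\prod_{i=1}^\infty e(\psi_i)$; the associated $I_e$ is then a $\sigma$-ideal, and extending to a $\sigma$-ideal $J_e$ of $\Psi$ yields a random mapping into the $\sigma$-ideals, so that $\Gamma$ generates $sp$. The main obstacle I anticipate is verifying rigorously that the hypotheses of Choquet's theorem and of Krein--Milman are genuinely met—in particular that $S$ (and $S_c$) is closed under pointwise limits and that the extreme points are \emph{exactly} the (continuous) exponentials rather than a larger set, and that the exponentials form a closed, hence compact, set so that $P$ is carried by them as a measurable subset; getting these measure-theoretic and topological details right, rather than the information-algebraic bookkeeping, is where the real work lies.
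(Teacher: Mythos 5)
Your proposal follows essentially the same route as the paper's own proof: the same locally convex space of functions on $\mathcal{E}$ with pointwise topology, compactness of the set of support functions via Tychonov, identification of the extreme points as $0$--$1$ exponentials through Choquet's Theorem 43.4, the integral form of Krein--Milman to get $sp(\psi)=P\{e:e(\psi)=1\}$, and the construction $\Gamma(e)=J_e$ extending the ideals $I_e$ to (resp.\ $\sigma$-)ideals of $\Psi$. The continuous case is also handled exactly as in the paper, so the proposal is correct and matches the paper's argument.
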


We remark for completeness sake that for continuous support functions there is an alternative approach to generate them from a random mapping, due to \cite{norberg89}.

%%%%%%%%%%%%%%%%%%%%%%%%%%%%%%%%%%%%%%%%%%%%%%%%%%%%%%%%%%
\section{Canonical support functions} \label{subsec:CanRandMap}

According to the previous Section \ref{sec:GenSpFct} any support function can be generated by some random mapping. In this section we are going to examine the random mappings generating a given support function in more detail. In particular, we shall compare these random mappings and single out a particular one, which we shall call the \textit{canonical} mapping.

Let $(\Phi,\cdot,0,1;E)$ be an information algebra and $\mathcal{E} \subseteq \Phi$ a join-sub-semilattice of $(\Phi,\leq)$, under information order containing $1$. Consider a support function $sp$ on $\mathcal{E}$. According to the discussion in Section  \ref{sec:GenSpFct} there is a probability space $(ext(S),\mathcal{A},P)$ on the set of exponentials $ext(S)$ on $\mathcal{E}$ and a random mapping into the \textit{ideal completion} of $(\Phi,\cdot,,0,1;E)$  defined by
\begin{eqnarray}
\nu(e) = J_{e} =  \{\psi \in \Phi: \psi \leq \phi \textrm{ for some}\ \phi \in I_{e}\}
\nonumber
\end{eqnarray}
where $I_{e}$ is the ideal $\{\psi \in \mathcal{E}:e(\psi) = 1\}$ in $\mathcal{E}$ associated with the exponential $e$. Then we obtain for $\psi \in\mathcal{E}$
\begin{eqnarray}
sp_\nu(\psi) = P\{e \in ext(S):e(\psi) = 1\},
\nonumber
\end{eqnarray}
which shows that the random mapping $\nu$ from $ext(S)$ into the ideal completion $I_{\Phi}$ of $\Phi$ indeed generates the support function on $\mathcal{E}$.

We noted in Section \ref{sec:GenSpFct} that there is a one-to-one relation between exponentials $e \in ext(S)$ on $\mathcal{E}$ and the ideals $I_{\mathcal{E}}$ in $\mathcal{E}$. To each exponential $e$ corresponds the ideal $I_{e}$ in $\mathcal{E}$ and conversely, any ideal $I$ of $\mathcal{E}$ defines an exponential $e_{I}$ by $e_{I}(\psi) = 1$, if $\psi \in I$ and $e_{I}(\psi) = 0$ otherwise. We may therefore replace the probability space $(ext(S),\mathcal{A},P)$ on $ext(S)$ by an equivalent probability space $(I_{\mathcal{E}},\mathcal{A},P)$ on $I_{\mathcal{E}}$. By abuse of notation we denote here the $\sigma$ fields and the probability measures in both spaces by the same symbol. The random mapping $\nu$ is then changed in the obvious way to
\begin{eqnarray}
\nu(I) = \{\psi \in \Phi: \psi \leq \phi \textrm{ for some}\ \phi \in I\}.
\nonumber
\end{eqnarray}
for any $I \in I_\Phi$.

We remarked in Section \ref{sec:GenSpFct} that this random mapping $\nu$ is not the only one inducing the support function on $\mathcal{E}$. Let's examine this in more detail. The restriction of an ideal $I$ of $\Phi$ to $\mathcal{E}$ is clearly an ideal of $\mathcal{E}$. We define the mapping $p : I_{\Phi} \rightarrow I_{\mathcal{E}}$ by $p(I) = I \vert \mathcal{E} = I \cap \mathcal{E}$; to each ideal in $\Phi$, we associate its restriction to $\mathcal{E}$. Then the inverse mapping $p^{-1}(I) = \{J \in I_{\Phi}:p(J) = I\}$ induces a partition of $I_{\Phi}$. Consider any ideal $J \in p^{-1}(I)$. Obviously we have $\nu(I) \subseteq J$ if $p(J) = I$. Thus, $\nu(I)$ is the \textit{least} ideal in $p^{-1}(I)$. 

Consider any random mapping $\Gamma$ from $I_{\mathcal{E}}$ into the ideal completion $I_{\Phi}$ of $\Phi$, such that $\Gamma(I) \in p^{-1}(I)$. Its allocation of support is, for $\psi \in \mathcal{E}$,
\begin{eqnarray}
s_{\Gamma}(\psi) = \{I \in I_{\mathcal{E}}:\psi \in \Gamma(I)\} =  \{I \in I_{\mathcal{E}}:\psi \in I\}.
\nonumber
\end{eqnarray}
It follows that the random mapping $\Gamma$ induces also the support function $sp$ on $ \mathcal{E}$,
\begin{eqnarray}
sp_{\Gamma}(\psi) = P(s_{\Gamma}(\psi)) = P\{I \in I_{\mathcal{E}}:\psi \in I\} = sp(\psi).
\nonumber
\end{eqnarray}
Hence, $\nu$ is the \textit{minimal} random mapping on $I_{\mathcal{E}}$ generating $sp$. 

Let's pursue this observation. Consider the probability algebra $(\mathcal{B},\mu)$ associated with the probability space $(I_{\mathcal{E}},\mathcal{A},P)$ (see Section \ref{subsec:RanMaps}). We remind that the mapping $\rho_{\nu} = \rho_{0} \circ s_{\nu}$ from $\Phi$ into $\mathcal{B}$ is an allocation of probability (a.o.p) (see Section \ref{subsec:RanMaps}). This a.o.p, as every a.o.p on $\Phi$, induces a support function $sp_{\nu} = \mu \circ \rho_{0} \circ s_{\nu}$ on $\Phi$ (see Theorem  \ref{th:SpFctPropAoP}), and its restriction to $\mathcal{E}$ equals $sp$. So, $sp_{\nu}$ is an extension of $sp$ to $\Phi$. Now, for any random mapping $\Gamma$ from $I_{\mathcal{E}}$ into the ideal completion $I_{\Phi}$ of $\Phi$, such that $\Gamma(I) \in p^{-1}(I)$, we have $\nu(I) \subseteq \Gamma(I)$. This implies for the allocations of support that $s_{\nu}(\psi) \subseteq s_{\Gamma}(\psi)$, hence $\rho_{\nu}(\psi) = \rho_{0} (s_{\nu}(\psi)) \leq \rho_{0}(s_{\Gamma}(\psi)) = \rho_{\Gamma}(\psi)$ and for $\psi \in \mathcal{E}$, we have $\rho_{\nu}(\psi) = \rho_{\Gamma}(\psi)$. It follows that
\begin{eqnarray}
sp_{\nu}(\psi) = \mu(\rho_{0}(s_{\nu}(\psi))) \leq \mu(\rho_{0}(s_{\Gamma}(\psi))) = sp_{\Gamma}(\psi)
\nonumber 
\end{eqnarray}
We shall see later (Section \ref{subsec:MinExt}) that the random mapping $\nu$ generates indeed the \textit{least} extension of the support function $sp$ on $\mathcal{E}$ to $\Phi$ among all extensions. But before we turn to this question, we return to the random mappings generating $sp$ on $\mathcal{E}$. 

Consider the family of sets $\{I \in I_{\mathcal{E}}:\psi \in I\}$ for $\psi \in \mathcal{E}$. All these sets belong to the $\sigma$-field $\mathcal{A}$ in the probability space $(I_{\mathcal{E}},\mathcal{A},P)$ used to define the random mapping $\nu$ to generate the support function $sp$ on $\mathcal{E}$ and $sp(\psi)=P(I \in I_{\mathcal{E}}:\psi \in I)$. Let $\mathcal{A}_{\mathcal{E}} \subseteq \mathcal{A}$ be the $\sigma$-field of subsets generated by the family of these subsets. Note that this set depends \textit{only} on the semi lattice $\mathcal{E}$, but not on $sp$ itself. Denote the restriction of the probability measure $P$ to $\mathcal{A}_{\mathcal{E}}$ by $P_{sp}$. This probability depends on the support function $sp$, and thereby indirectly of course also on $\mathcal{E}$. Consider the probability space $(I_{\mathcal{E}},\mathcal{A}_{\mathcal{E}},P_{sp})$. We remark that the random mapping $\nu$, as well as the related mappings $\Gamma$ considered above, still generate $sp$ on $\mathcal{E}$. 

In order to facilitate comparisons between random mappings generating the support function $sp$ on $\mathcal{E}$, we transport probability from the set of ideals $I_{\mathcal{E}}$ in $\mathcal{E}$ to the set $I_{\Phi}$ of ideals in $\Phi$. The family of sets $p^{-1}(A)$ for $A \in \mathcal{A}_{\mathcal{E}}$ forms a $\sigma$-field of subsets of $I_{\Phi}$ and by $P(p^{-1}(A)) = P_{sp}(A)$ a probability measure is defined on this $\sigma$-field. By abuse of notation, we denote the new probability space by $(I_{\Phi}, \mathcal{A}_{\mathcal{E}},P_{sp})$ and call it the \textit{canonical probability space} associated with $sp$. The random mapping $\nu$ from $I_{\mathcal{E}}$ into the ideal completion of $\Phi$ is redefined as $\nu(p(I))$ for $I \in I_{\Phi}$. Again, we call this new mapping $\nu$, that is,
\begin{eqnarray} \label{eq:DefCanRandMap}
\nu(I) = \{\psi \in \Phi: \psi \leq \phi \textrm{ for some}\ \phi \in p(I)\}.
\end{eqnarray}
We call this random mapping $\nu$, together with the associated probability space $(I_{\Psi},\mathcal{A}_{\mathcal{E}},P_{sp})$, the \textit{canonical random mapping} generating the support function $sp$ on the semilattice $\mathcal{E}$. Any other random mapping $\Gamma$ defined above on $I_{\mathcal{E}}$ may similarly be redefined as $\Gamma(p(I))$. 

We can now compare different extensions of support functions from $\mathcal{E}$. Consider semilattices $\mathcal{E}_{1}$ and $\mathcal{E}_{2}$ such that $\mathcal{E}_{1} \subseteq \mathcal{E}_{2} \subseteq \Phi$ and support functions $sp_{1}$ and $sp_{2}$ on $\mathcal{E}_{1}$ and $\mathcal{E}_{2}$ respectively, such that $sp_{2}$ is an extension of $sp_{1}$. Then, these support functions have their canonical random mappings $\nu_{1}$ and $\nu_{2}$ defined on the probability spaces $(I_{\Phi},\mathcal{A}_{\mathcal{E}_{1}},P_{sp_{1}})$ and $(I_{\Phi},\mathcal{A}_{\mathcal{E}_{2}},P_{sp_{2}})$ respectively. The next theorem shows how these canonical random mappings are related.

\begin{theorem} \label{th:CompCanRandMap}
Let $(\Phi,\cdot,0,1;E)$ be an information algebra and let $\nu_{1}$ and $\nu_{2}$, defined on the probability spaces $(I_{\Phi},\mathcal{A}_{\mathcal{E}_{1}},P_{sp_{1}})$ and $(I_{\Phi},\mathcal{A}_{\mathcal{E}_{2}},P_{sp_{2}})$, be the canonical random mappings associated with the support functions $sp_{1}$ and $sp_{2}$ on the semilattices $\mathcal{E}_{1} \subseteq \mathcal{E}_{2} \subseteq \Phi$. If $sp_{2}$ is an extension of $sp_{1}$, that is $sp_{1} = sp_{2} \vert \mathcal{E}_{2}$, then
\begin{enumerate}
\item $\nu_{1} \leq \nu_{2}$, in the order of the information algebra of random mappings into $I_\Phi$,
\item $\mathcal{A}_{\mathcal{E}_{1}} \subseteq \mathcal{A}_{\mathcal{E}_{2}}$,
\item $P_{sp_{1}} = P_{sp_{2}} \vert \mathcal{A}_{\mathcal{E}_{1}}$, on $\mathcal{A}_{\mathcal{E}_{1}}$ the two probability measures are equal.
\item $sp_{\nu_{1}}(\psi) \leq sp_{\nu_{2}}(\psi)$ for all $\psi \in \Phi$.
\end{enumerate}
\end{theorem}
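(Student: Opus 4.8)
The plan is to analyse the two canonical random mappings directly from their defining formula \eqref{eq:DefCanRandMap}, exploiting the hypothesis $\mathcal{E}_1 \subseteq \mathcal{E}_2$ and the compatibility $sp_1 = sp_2 \vert \mathcal{E}_1$. The central object linking everything is the restriction map $p_j : I_\Psi \rightarrow I_{\mathcal{E}_j}$ sending $I \mapsto I \cap \mathcal{E}_j$, for $j=1,2$. Since $\mathcal{E}_1 \subseteq \mathcal{E}_2$, we have for any ideal $I$ of $\Psi$ that $p_1(I) = (I \cap \mathcal{E}_2) \cap \mathcal{E}_1 = p_2(I) \cap \mathcal{E}_1$; in other words, $p_1$ factors through $p_2$. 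This single observation will drive all four items.

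For item 1, I would unfold the definitions: $\nu_j(I) = \{\psi \in \Psi : \phi \leq \psi \textrm{ for some } \phi \in p_j(I)\}$ is the downward-and-upward closure (the ideal generated) by $p_j(I)$ in $\Psi$. Because $p_1(I) = p_2(I) \cap \mathcal{E}_1 \subseteq p_2(I)$, the ideal generated by the smaller set is contained in the ideal generated by the larger one, so $\nu_1(I) \subseteq \nu_2(I)$, i.e. $\nu_1(I) \leq \nu_2(I)$ for every $I$. Since order in the information algebra of random mappings into $(I_\Psi,D;\leq,\bot,\cdot,\epsilon)$ is pointwise (as recalled in Section \ref{secRandMaps}), this gives $\nu_1 \leq \nu_2$. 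For item 2, recall that $\mathcal{A}_{\mathcal{E}_j}$ is the $\sigma$-field generated by the sets $\{I : \psi \in I\}$ for $\psi \in \mathcal{E}_j$ (transported to $I_\Psi$ via $p_j$); since every generator of $\mathcal{A}_{\mathcal{E}_1}$ uses a $\psi \in \mathcal{E}_1 \subseteq \mathcal{E}_2$, each such generator is already a generator of $\mathcal{A}_{\mathcal{E}_2}$, whence $\mathcal{A}_{\mathcal{E}_1} \subseteq \mathcal{A}_{\mathcal{E}_2}$.

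For item 3, the equality of the two probability measures on the smaller $\sigma$-field should follow from the fact that both are, by construction in Section \ref{sec:GenSpFct} and the canonical transport, determined on generators by $P_{sp_j}\{I : \psi \in I\} = sp_j(\psi)$; for $\psi \in \mathcal{E}_1$ the hypothesis $sp_1 = sp_2 \vert \mathcal{E}_1$ forces these to agree on every generating set of $\mathcal{A}_{\mathcal{E}_1}$. I would then invoke a standard uniqueness-of-measure argument (two probability measures agreeing on a generating $\pi$-system, here the intersection-closed family $\{I : \psi \in I\}$, agree on the generated $\sigma$-field) to conclude $P_{sp_1} = P_{sp_2}\vert \mathcal{A}_{\mathcal{E}_1}$. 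Item 4 then combines the pieces: using $sp_{\nu_j} = \mu_j \circ \rho_{\nu_j}$ with $\rho_{\nu_j} = \rho_0 \circ s_{\nu_j}$, and the monotonicity $\nu_1 \leq \nu_2 \Rightarrow s_{\nu_1}(\psi) \subseteq s_{\nu_2}(\psi)$ shown in item 1, together with $\rho_0$ order-preserving, yields $\rho_{\nu_1}(\psi) \leq \rho_{\nu_2}(\psi)$ and hence $sp_{\nu_1}(\psi) \leq sp_{\nu_2}(\psi)$ for all $\psi \in \Psi$ — this last inequality being exactly the monotone behaviour already observed at the end of Section \ref{sec:CanRandMap}.

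\textbf{Main obstacle.} The genuinely delicate step will be item 3, the identification of the two probability measures on $\mathcal{A}_{\mathcal{E}_1}$. The subtlety is that $P_{sp_1}$ and $P_{sp_2}$ are constructed on \emph{a priori different} underlying spaces (exponentials, resp. ideals, of $\mathcal{E}_1$ versus $\mathcal{E}_2$) and only afterwards transported to $I_\Psi$; I must check that after transport the generating family $\{I : \psi \in I\}$ for $\psi \in \mathcal{E}_1$ is a $\pi$-system (it is, since $\{I:\psi_1\in I\}\cap\{I:\psi_2\in I\} = \{I:\psi_1\cdot\psi_2\in I\}$ and $\mathcal{E}_1$ is closed under join), and that the measures are determined on it purely by the values $sp_j(\psi)$. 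The rest is essentially bookkeeping with ideals and order, so the care lies entirely in making the measure-theoretic uniqueness argument clean.
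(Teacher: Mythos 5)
Your items 1--3 follow essentially the same route as the paper: the factorisation $p_1(I)=p_2(I)\cap\mathcal{E}_1$ gives $\nu_1(I)\subseteq\nu_2(I)$ pointwise; the identification of the generating sets $s_{\nu_1}(\psi)=s_{\nu_2}(\psi)=\{I\in I_\Psi:\psi\in I\}$ for $\psi\in\mathcal{E}_1$ gives $\mathcal{A}_{\mathcal{E}_1}\subseteq\mathcal{A}_{\mathcal{E}_2}$; and your ``standard uniqueness-of-measure argument'' is precisely the Dynkin $\pi$--$\lambda$ theorem the paper invokes, with the $\pi$-system property coming from $s(\phi\cdot\psi)=s(\phi)\cap s(\psi)$ (Theorem \ref{th:ElPropAllSp}). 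Up to this point your proposal is correct and matches the paper's proof.

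The gap is in item 4. You claim that $s_{\nu_1}(\psi)\subseteq s_{\nu_2}(\psi)$ together with ``$\rho_0$ order-preserving'' yields $\rho_{\nu_1}(\psi)\leq\rho_{\nu_2}(\psi)$. But there is no single map $\rho_0$ in play: the two canonical mappings are defined on different probability spaces, $(I_\Psi,\mathcal{A}_{\mathcal{E}_1},P_{sp_1})$ and $(I_\Psi,\mathcal{A}_{\mathcal{E}_2},P_{sp_2})$, so $\rho_{\nu_1}=\rho_0^{(1)}\circ s_{\nu_1}$ takes values in $\mathcal{B}_1=\mathcal{A}_{\mathcal{E}_1}/\mathcal{J}_1$ while $\rho_{\nu_2}=\rho_0^{(2)}\circ s_{\nu_2}$ takes values in $\mathcal{B}_2=\mathcal{A}_{\mathcal{E}_2}/\mathcal{J}_2$. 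The inequality $\rho_{\nu_1}(\psi)\leq\rho_{\nu_2}(\psi)$ compares elements of two distinct Boolean algebras and is not well defined. The monotonicity you cite from the end of Section \ref{sec:CanRandMap} concerns two random mappings over the \emph{same} $\sigma$-field and does not transfer to this situation.

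The repair --- which is what the paper does --- is to descend immediately to real numbers via the inner-probability representation (\ref{eq:InnerProbExt}): $sp_{\nu_j}(\psi)=P_{sp_j *}(s_{\nu_j}(\psi))$ for $j=1,2$. Then one needs items 2 and 3, which your sketch of item 4 never actually invokes: for any $H\subseteq I_\Psi$, every $A\in\mathcal{A}_{\mathcal{E}_1}$ with $A\subseteq H$ also lies in $\mathcal{A}_{\mathcal{E}_2}$ and carries the same measure, so $P_{sp_1 *}(H)\leq P_{sp_2 *}(H)$. Combined with $s_{\nu_1}(\psi)\subseteq s_{\nu_2}(\psi)$ from item 1, this gives $sp_{\nu_1}(\psi)=P_{sp_1 *}(s_{\nu_1}(\psi))\leq P_{sp_2 *}(s_{\nu_2}(\psi))=sp_{\nu_2}(\psi)$. (Alternatively, items 2 and 3 let you build a measure-preserving embedding $\mathcal{B}_1\hookrightarrow\mathcal{B}_2$, $[A]_1\mapsto[A]_2$, and run your comparison inside $\mathcal{B}_2$; but this needs a further check that the relevant suprema are preserved under the embedding, so the inner-measure route is both shorter and the one the paper takes.)
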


\begin{proof}
(1) By definition we have $p_{1}(I) = I \vert \mathcal{E}_{1}$ and  $p_{2}(I) = I \vert \mathcal{E}_{2}$, hence 
$p_{1}(I) \subseteq p_{2}(I)$. Therefore, from (\ref{eq:DefCanRandMap}), we conclude that $\nu_{1}(I) \subseteq \nu_{2}(I)$ for all $I \in I_{\Psi}$, hence $\nu_{1} \leq \nu_{2}$. 

(2) Consider an element $\psi \in  \mathcal{E}_{1} \subseteq  \mathcal{E}_{2}$. Then, the allocations of support relative to $\nu_{1}$ and $\nu_{2}$, respectively, are
\begin{eqnarray}
s_{\nu_{1}}(\psi) &=&\{I \in I_{\Phi}:\psi \in \nu_{1}(I)\} = \{I \in I_{\Phi}:\psi \in I \vert  \mathcal{E}_{1}\},
\nonumber \\
s_{\nu_{2}}(\psi) &=&\{I \in I_{\Phi}:\psi \in \nu_{2}(I)\} = \{I \in I_{\Phi}:\psi \in I \vert  \mathcal{E}_{2}\}.
\nonumber
\end{eqnarray}
But $\psi \in I \vert  \mathcal{E}_{1}$ implies $\psi \in I \vert  \mathcal{E}_{2}$. On the other hand, $\psi \in \mathcal{E}_{1}$ and $\psi \in I \vert  \mathcal{E}_{2}$ implies $\psi \in I \vert  \mathcal{E}_{2} \cap \mathcal{E}_{1} = I \vert  \mathcal{E}_{1}$. So, we conclude that $s_{\nu_{1}}(\psi) = s_{\nu_{2}}(\psi)$ for every $\psi \in \mathcal{E}_{1}$. Since $\mathcal{A}_{\mathcal{E}_{1}}$ is the $\sigma$-field generated by the allocations $s_{\nu_{1}}(\psi)$ for $\psi \in \mathcal{E}_{1}$, and $\mathcal{A}_{\mathcal{E}_{2}}$ the one generated by $s_{\nu_{2}}(\psi)$ for $\psi \in \mathcal{E}_{2} \supseteq \mathcal{E}_{1}$, this shows that $\mathcal{A}_{\mathcal{E}_{1}} \subseteq \mathcal{A}_{\mathcal{E}_{2}}$.

(3) To prove this claim, we use Dynkin's Theorem \cite{billingsley95}. Dynkin calls a family of sets, closed under finite intersections, a $\pi$-system. The family $P$ of sets $s_{\nu_{1}}(\psi)$ for $\psi \in \mathcal{E}_{1}$ is a $\pi$-system (see Theorem \ref{th:ElPropAllSp}). The family $L$ of sets $A \in \mathcal{A}_{\mathcal{E}_{1}}$ for which
\begin{eqnarray}
P_{sp_{1}}(A) = P_{sp_{2}}(A)
\nonumber
\end{eqnarray}
is closed under complementation, and contains $\bigcup_{i} A_{i}$, if $A_{i}$ is a countable family of disjoint sets in $L$. This is called a $\lambda$-system by Dynkin. From the considerations above, we conclude that $P \subseteq L$. The theorem of Dynkin states that if $P$ is a $\pi$-system and $L$ a $\lambda$-system, then $P \subseteq L$ implies that the $\sigma$-closure of $P$ is contained in $L$, that is $\sigma(P) \subseteq L$. In our case the $\sigma$-closure of $P$ is $\mathcal{A}_{\mathcal{E}_{1}}$, hence we have $\mathcal{A}_{\mathcal{E}_{1}} \subseteq L$, where $L$ contains all sets of $\mathcal{A}_{\mathcal{E}_{1}}$ on which the two probabilities coincide. So, indeed for all $A \in \mathcal{A}_{\mathcal{E}_{1}}$ we have $P_{sp_{1}}(A) = P_{sp_{2}}(A)$.

(4) We have for any $\psi \in \Phi$ (see (\ref{eq:InnerProbExt})) $sp_{\nu_{1}}(\psi) = \mu(\rho(s_{\nu_1}(\psi))) \leq \mu(\rho(s_{\nu_2}(\psi))) = sp_{\nu_{2}}(\psi)$, because $s_{\nu_{1}}(\psi) \subseteq  s_{\nu_{2}}(\psi)$. Therefore, $sp_{\nu_{1}}(\psi) \leq sp_{\nu_{2}}(\psi)$.
\end{proof}

This theorem shows in particular, that the canonical random mapping associated with a support function $sp$ on a semilattice $\mathcal{E} \subseteq \Phi$ is \textit{unique}. It permits also to conclude that $sp_{\nu}$ is the \textit{least} extension of the support function $sp$ from $\mathcal{E}$ to $\Phi$. Indeed, suppose that $sp'$ is any extension of $sp$ to $\Phi$. Then, $sp'$ is generated by a canonical random mapping $\nu'$. According to Theorem \ref{th:CompCanRandMap} (4) we have then
\begin{eqnarray}
sp_{\nu}(\psi) \leq sp_{\nu'}(\psi) = sp'(\psi).
\nonumber
\end{eqnarray}
The last equity holds because $sp'$ is defined on $\Phi$. So, we have

\begin{corollary} \label{cor:LeastExt}
If $sp$ is a support function defined on a join-semilattice $\mathcal{E} \subseteq \Phi$, then $sp_{\nu}$ is the least extension of $sp$ to $\Phi$, that is, $sp_{\nu} \leq sp'$ for any support function $sp'$ on $\Psi$ such that $sp = sp' \vert \mathcal{E}$.
\end{corollary}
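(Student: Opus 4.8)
The plan is to deduce the corollary directly from the comparison result for canonical random mappings, Theorem \ref{th:CompCanRandMap}, applied with the larger semilattice taken to be all of $\Psi$. The key observation is that any support function $sp'$ defined on $\Psi$ is itself of the form $sp_{\nu'}$ for its own canonical random mapping, so that extensions of $sp$ are captured precisely by enlarging the domain semilattice from $\mathcal{E}$ to $\Psi$.

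First I would fix an arbitrary support function $sp'$ on $\Psi$ with $sp = sp' \vert \mathcal{E}$. I would regard $sp'$ as a support function on the semilattice $\mathcal{E}_{2} = \Psi$, which is admissible since $\Psi$ is a join-sub-semilattice of itself containing the unit $1$. By the construction of Section \ref{sec:CanRandMap}, $sp'$ has a canonical random mapping $\nu_{2} = \nu'$ defined on the probability space $(I_{\Psi},\mathcal{A}_{\mathcal{E}_{2}},P_{sp'})$, and since its domain semilattice is all of $\Psi$, this mapping reproduces $sp'$ everywhere: $sp_{\nu'}(\psi) = sp'(\psi)$ for every $\psi \in \Psi$. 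Likewise, $sp$ on $\mathcal{E}_{1} = \mathcal{E}$ has its canonical mapping $\nu_{1} = \nu$, whose induced support function $sp_{\nu}$ is the extension of $sp$ to $\Psi$ under study.

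Next I would invoke Theorem \ref{th:CompCanRandMap} with $\mathcal{E}_{1} = \mathcal{E}$, $sp_{1} = sp$ and $\mathcal{E}_{2} = \Psi$, $sp_{2} = sp'$. The hypotheses $\mathcal{E}_{1} \subseteq \mathcal{E}_{2} \subseteq \Psi$ and $sp_{1} = sp_{2} \vert \mathcal{E}_{1}$ hold by assumption, so part (4) of that theorem yields $sp_{\nu}(\psi) \leq sp_{\nu'}(\psi)$ for all $\psi \in \Psi$. Combining this with the identity $sp_{\nu'}(\psi) = sp'(\psi)$ noted above gives
\begin{eqnarray*}
sp_{\nu}(\psi) \leq sp'(\psi) \quad \textrm{for all}\ \psi \in \Psi,
\end{eqnarray*}
that is $sp_{\nu} \leq sp'$. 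Since $sp'$ was an arbitrary extension of $sp$ to $\Psi$, the map $sp_{\nu}$ is the least such extension.

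The argument is short because the real work resides in Theorem \ref{th:CompCanRandMap}. The only point requiring care — and the one I would state explicitly — is the step $sp_{\nu'} = sp'$ on all of $\Psi$: this is legitimate precisely because a canonical random mapping reproduces its support function on the whole domain semilattice from which it is built, and here that semilattice is $\Psi$ itself. I do not expect a genuine obstacle; the single thing to double-check is that the inner-measure monotonicity underlying Theorem \ref{th:CompCanRandMap}(4) already delivers the inequality pointwise on all of $\Psi$, not merely on $\mathcal{E}$, which it does since $s_{\nu}(\psi) \subseteq s_{\nu'}(\psi)$ holds for every $\psi \in \Psi$.
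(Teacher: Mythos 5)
Your proposal is correct and follows exactly the paper's own argument: the paper likewise regards an arbitrary extension $sp'$ as a support function on the semilattice $\Psi$ itself, generated by its own canonical random mapping $\nu'$, and then applies Theorem \ref{th:CompCanRandMap}(4) with $\mathcal{E}_{1} = \mathcal{E}$ and $\mathcal{E}_{2} = \Psi$ to obtain $sp_{\nu}(\psi) \leq sp_{\nu'}(\psi) = sp'(\psi)$, the last equality holding because $sp'$ is defined on all of $\Psi$. Your explicit care about the step $sp_{\nu'} = sp'$ matches the paper's brief justification, so there is nothing to add.
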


We remark that a similar analysis can be made for $\sigma$-semilattices or complete lattices $\mathcal{E}$ and continuous or condensable support functions $sp$. However, more interesting is the case of \textit{compact} information algebras $(\Phi,\Phi_f,\cdot,0,1;E)$. We consider a support function $sp$ defined on $\Phi_{f}$, the finite elements of $\Phi$, hence $\mathcal{E} = \Phi_{f}$. Since its ideal completion $I_{\Phi_{f}}$ is isomorphic to $\Phi$ (see Theorem \ref{th:IdCompFiniteEl}) we identify ideals $I$ of $\Phi_{f}$ with their suprema $\bigvee I \in \Phi$. For the support function $sp$, we consider its canonical probability space $(I_{\Phi_{f}},\mathcal{A}_{\Phi_{f}},P_{sp})$. .  

Beside the canonical random mapping, 
\begin{eqnarray}
\nu(I) = \{\psi \in \Phi:\psi \leq \phi \textrm{ for some}\ \phi \in I\}
\nonumber
\end{eqnarray}
we consider also the random mappings
\begin{eqnarray} \label{eq:SigmaRandMap}
\sigma(I) &=& \{\psi \in \Phi:\psi \leq \bigvee_{i=1}^{\infty} \psi_{i}, \psi_{i} \in I\},
\\ \label{eq:GammaRandMap}
\gamma(I) &=& \downarrow\!\bigvee I.
\end{eqnarray}
Both map $I_{\Phi_{f}}$ into $I_{\Phi}$. However, given the isomorphism between $I_{\Phi_f}$ and $\Phi$, we may also consider $\gamma$ as a map into $\Phi$, $\gamma(I) = \bigvee I$. Note also that $\nu \leq \sigma \leq \gamma$. We are going to examine the support functions on $\Phi$ induced by these random mappings.

We start with the random mapping $\sigma$. Here are its basic properties:

\begin{lemma} \label{le:SigmaRandMap}
Let $(\Phi,\Phi_f,\cdot,0,1;E)$ be a compact information algebra with finite elements $\Phi_{f}$ and $\sigma$ the random map defined by (\ref{eq:SigmaRandMap}). Then for an ideal $I \in I_{\Phi_f}$,
\begin{enumerate}
\item the ideal $\sigma(I)$ is a $\sigma$-ideal in $\Phi$,
\item its restriction to $\Phi_{f}$ equals $I$, $\sigma(I) \cap \Phi_{f} = I$,
\item the $\sigma$-ideal $\sigma(I)$ is minimal among all $\sigma$-ideals in $\Phi$ extending $I$.
\end{enumerate}
\end{lemma}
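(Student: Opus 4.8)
The plan is to verify each of the three claims about $\sigma(I)$ directly from the definition \eqref{eq:SigmaRandMap} and the properties of an algebraic information algebra, treating the claims essentially in the order listed but arranging the argument so that the later claims reuse the earlier ones.

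First I would establish that $\sigma(I)$ is a $\sigma$-ideal. Being downward closed is immediate from the definition. For closure under countable join, take a countable family $\chi_1,\chi_2,\ldots \in \sigma(I)$. Each $\chi_k$ satisfies $\chi_k \leq \bigvee_{i=1}^\infty \psi_{k,i}$ for some $\psi_{k,i} \in I$. Then $\bigvee_{k=1}^\infty \chi_k \leq \bigvee_{k,i} \psi_{k,i}$, and since the index set is countable this is again a countable join of elements of $I$, witnessing $\bigvee_k \chi_k \in \sigma(I)$. This is the same reindexing trick used in the proof of Theorem~\ref{th:CharSigmaSet}, and I would lean on that result to organise the double-indexed join cleanly. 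I should also confirm $\sigma(I)$ is directed (so that it is genuinely an ideal), which follows because $I$ is an ideal and hence closed under finite join.

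Next I would prove $\sigma(I) \vert \Psi_f = I$. The inclusion $I \subseteq \sigma(I)$ is trivial, and any element of $I$ is finite, so $I \subseteq \sigma(I) \vert \Psi_f$. For the reverse inclusion, suppose $\phi \in \Psi_f$ and $\phi \in \sigma(I)$, so $\phi \leq \bigvee_{i=1}^\infty \psi_i$ with $\psi_i \in I$. Here is where compactness enters: the set of finite joins $\bigvee_{i=1}^n \psi_i$ forms a directed family in $\Psi_f \subseteq \Psi$ with supremum $\bigvee_{i=1}^\infty \psi_i$, so $\phi \ll \phi$ together with $\phi \leq \bigvee_i \psi_i$ yields an $n$ with $\phi \leq \bigvee_{i=1}^n \psi_i \in I$ (using that $I$ is closed under finite join). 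Hence $\phi \in I$. This is the key technical step, and the main thing to watch is that one really does have a directed family of \emph{finite} elements whose supremum is the relevant countable join, so that Definition~\ref{def:FiniteEl} applies.

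Finally, for minimality, suppose $J$ is any $\sigma$-ideal in $\Psi$ with $J \vert \Psi_f = I$. I would show $\sigma(I) \subseteq J$. Take $\chi \in \sigma(I)$, so $\chi \leq \bigvee_{i=1}^\infty \psi_i$ with $\psi_i \in I = J \vert \Psi_f \subseteq J$. Since $J$ is a $\sigma$-ideal it contains the countable join $\bigvee_{i=1}^\infty \psi_i$, and being downward closed it then contains $\chi$; thus $\sigma(I) \subseteq J$. Combined with claim~2, this shows $\sigma(I)$ is the least $\sigma$-ideal restricting to $I$ on $\Psi_f$. I expect claim~2 (the compactness argument) to be the only genuinely substantive point; claims~1 and~3 are direct consequences of the definitions of $\sigma$-ideal and of $\sigma(I)$, once the reindexing of countable joins is handled as in Theorem~\ref{th:CharSigmaSet}.
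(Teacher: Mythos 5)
Your proposal is correct and follows essentially the same route as the paper: the reindexing of the double countable join for claim 1, the passage to the monotone/directed family of finite partial joins plus compactness of $\phi$ for claim 2, and the direct downward-closure argument for claim 3 all coincide with the paper's proof. No gaps.
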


\begin{proof}
(1) Consider the elements $\psi_{1},\psi_{2},\ldots \in \sigma(I)$,. Then we have $\psi_{i} \leq \bigvee_{j=1}^{\infty} \psi_{i,j}$ with $\psi_{i,j} \in I$ for all $i = 1,2\ldots$ and $j = 1,2, \ldots$. But then we obtain
\begin{eqnarray}
\bigvee_{i=1}^{\infty} \psi_{i} \leq \bigvee_{i=1}^{\infty} \bigvee_{j=1}^{\infty} \psi_{i,j} = \bigvee_{h=1}^{\infty} \psi'_{h},
\nonumber
\end{eqnarray}
where $\psi'_{h} = \vee_{i=1}^{h} \vee_{j=1}^{i} \psi_{i,j} \in I$. This shows that $\bigvee_{i=1}^{\infty}\psi_{i}  \in \sigma(I)$, hence $\sigma(i)$ is indeed a $\sigma$-ideal in $\Phi$.

(2) Assume that $\psi \in \sigma(I)$ and $\psi \in \Phi_{f}$. Then $\psi \leq \bigvee_{i=1}^{\infty} \psi_{i}$, with $\psi_{i} \in I$ for $i=1,2,\ldots$. By the usual transformation, we may always assume that $\psi_{1} \leq \psi_{2} \leq \ldots$. This monotone sequence is a directed set in $\Psi$. By compactness there exists a $\psi_{i}$ such that $\psi \leq \psi_{i}$. This shows that $\psi \in I$. But $I \subseteq \sigma(I)$, therefore we see that indeed the restriction of the ideal $\sigma(I)$ to $\Phi_{f}$ equals $I$.

(3) Consider a $\sigma$-ideal $J$ whose restriction to $\Phi_{f}$ equals $I$. Assume $\psi \in \sigma(I)$. Then $\psi \leq \bigvee_{i=1}^{\infty} \psi_{i}$, with $\psi_{i}$ in $I$, hence in $J$. But then $\bigvee_{i=1}^{\infty} \psi_{i} \in J$ since $J$ is a $\sigma$-ideal, therefore $\psi \in J$. This shows that $\sigma(I) \subseteq J$. Hence $\sigma(I)$ is indeed minimal among the $\sigma$-ideals extending $I$.
\end{proof}

The random map $\sigma$ generates a support function $sp_{\sigma} = \mu \circ \rho_{\sigma}$ on $\Phi$, where $(\mu,\mathcal{B})$ is the probability algebra associated with the probability space $(I_{\Phi_{f}},\mathcal{A}_{\Phi_{f}},P_{sp})$, and $\rho_{\sigma} = \rho_{0} \circ s_{\sigma}$. We are going to show that $sp_{\sigma}$ is a \textit{continuous} extension of $sp$. The key is the following lemma:

\begin{lemma} \label{le:SigmaContSpFct}
Let $(\Phi,\Phi_f,\cdot,0,1;E)$ be a compact information algebra  with finite elements $\Phi_{f}$, $\sigma$ the random map defined by (\ref{eq:SigmaRandMap}), and $s_{\sigma}$ the allocation of support for the random map $\sigma$. Then, if $\psi_{i} \in \Phi$ for $i=1,2,\ldots$,
\begin{eqnarray}
s_{\sigma}(\bigvee_{i=1}^{\infty} \psi_{i}) = \bigcap_{i=1}^{\infty} s_{\sigma}(\psi_{i}).
\nonumber
\end{eqnarray}
\end{lemma}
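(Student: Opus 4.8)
The plan is to prove the identity by the standard double inclusion, exploiting the fact that $\sigma(I)$ is always a $\sigma$-ideal (Lemma \ref{le:SigmaRandMap}, item 1) so that the supremum $\bigvee_{i=1}^\infty \psi_i$ can be legitimately compared with $\sigma(I)$ as a whole. The two sets in question are
\begin{eqnarray*}
s_\sigma(\bigvee_{i=1}^\infty \psi_i) &=& \{I \in I_{\Psi_f}: \bigvee_{i=1}^\infty \psi_i \leq \sigma(I)\}, \\
\bigcap_{i=1}^\infty s_\sigma(\psi_i) &=& \{I \in I_{\Psi_f}: \psi_i \leq \sigma(I) \textrm{ for all}\ i\}.
\end{eqnarray*}

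First I would establish the easy inclusion ``$\subseteq$''. Since $\psi_i \leq \bigvee_{i=1}^\infty \psi_i$ for every $i$, monotonicity of the allocation of support (Theorem \ref{th:ElPropAllSp}, item 2) gives $s_\sigma(\bigvee_{i=1}^\infty \psi_i) \subseteq s_\sigma(\psi_i)$ for each $i$, and intersecting over $i$ yields the inclusion. This direction requires nothing beyond the elementary monotonicity already recorded for support allocations of arbitrary random mappings.

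The reverse inclusion ``$\supseteq$'' is where the $\sigma$-ideal property of $\sigma(I)$ is crucial, and I expect this to be the main point of the argument. Suppose $I$ lies in $\bigcap_{i=1}^\infty s_\sigma(\psi_i)$, so that $\psi_i \leq \sigma(I)$, meaning $\psi_i \in \sigma(I)$, for every $i=1,2,\ldots$. Because $\sigma(I)$ is a $\sigma$-ideal in $\Psi$ (Lemma \ref{le:SigmaRandMap}, item 1), it is closed under countable joins, hence $\bigvee_{i=1}^\infty \psi_i \in \sigma(I)$, that is $\bigvee_{i=1}^\infty \psi_i \leq \sigma(I)$. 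This means precisely that $I \in s_\sigma(\bigvee_{i=1}^\infty \psi_i)$, completing the inclusion. The only subtlety to check is that $\sigma(I)$ genuinely belongs to the class of ideals for which Lemma \ref{le:SigmaRandMap} guarantees $\sigma$-closure, which holds by construction for every $I \in I_{\Psi_f}$; once that is in place the closure step is immediate.

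Putting the two inclusions together gives the claimed equality. I would note that the structure here is entirely parallel to Theorem \ref{th:ElPropAllSp2}, item 1, with the ordinary values of the random mapping replaced by the $\sigma$-ideals $\sigma(I)$; the essential extra ingredient, compared to that earlier result, is that the mapping $\sigma$ was designed in (\ref{eq:SigmaRandMap}) precisely so that its values are $\sigma$-closed, which is what makes the countable join stay inside $\sigma(I)$. No appeal to compactness or to the probability measure is needed for this lemma, as it is a purely order-theoretic statement about the allocation of support.
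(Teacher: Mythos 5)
Your proof is correct and takes essentially the same route as the paper's: the easy inclusion follows from monotonicity of the allocation of support, and the reverse inclusion from the fact that $\sigma(I)$ is closed under countable joins. The only cosmetic difference is that you cite the conclusion of Lemma \ref{le:SigmaRandMap} (item 1) directly, whereas the paper repeats that lemma's diagonal argument inline; logically the two are identical.
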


\begin{proof}
Since $\Phi$ is a complete lattice, $\bigvee_{i=1}^{\infty} \psi_{i} \in \Phi$, and
\begin{eqnarray}
s_{\sigma}(\bigvee_{i=1}^{\infty} \psi_{i}) = \{I \in I_{\Phi_{f}}:\bigvee_{i=1}^{\infty} \psi_{i} \leq \bigvee_{i=1}^{\infty} \phi_{i},\phi_{i} \in I\}.
\nonumber
\end{eqnarray}
If $I \in s_{\sigma}(\bigvee_{i=1}^{\infty} \psi_{i})$, then clearly $I \in s_{\sigma}(\psi_{i})$ for all $i=1,2,\ldots$. Conversely, assume $I \in s_{\sigma}(\psi_{i})$ for all $i=1,2,\ldots$. Then we have $\psi_{i} \leq \bigvee_{j=1}^{\infty} \psi_{i,j}$ with $\psi_{i,j} \in I$. This implies in the same way as in the proof of Lemma \ref{le:SigmaRandMap} that $\bigvee_{i=1}^{\infty} \psi_{i} \in \sigma(I)$, hence $I \in s_{\sigma}(\bigvee_{i=1}^{\infty} \psi_{i})$ and this proves the lemma.
\end{proof}

As a consequence of this lemma, we find that
\begin{eqnarray}
\rho_{\sigma}(\bigvee_{i=1}^{\infty} \psi_{i}) &=&\rho_{0}(s_{\sigma}(\bigvee_{i=1}^{\infty} \psi_{i}))
= \rho_{0}(\bigcap_{i=1}^{\infty} s_{\sigma}(\psi_{i}))
\nonumber \\
&=&\bigwedge_{i=1}^{\infty} \rho_{0}(s_{\sigma}(\psi_{i})) = \bigwedge_{i=1}^{\infty} \rho_{\sigma}(\psi_{i}).
\end{eqnarray}
The allocation of probability $\rho_{\sigma}$ is a $\sigma$-a.o.p. By Theorem \ref{th:SpFctPropAoP} $sp_{\sigma}$ is a \textit{continuous} support function extending $sp$ on $\Phi_{f}$ to $\Phi$. Since $\sigma(I)$ is the least $\sigma$-ideal among all $\sigma$-ideals extending the ideal $I$ of $\Phi_{f}$ to $\Phi$, we conclude that $sp_{\sigma}$ is also the minimal continuous support function among all continuous support functions $sp$ extending $sp$ from $\Phi_{f}$ to $\Psi$,
\begin{eqnarray}
sp_{\sigma} \leq \tilde{sp}(\psi), \textrm{ if}\ \tilde{sp} \textrm{ continuous},\ \tilde{sp} \vert \Psi_{f} = sp
\nonumber
\end{eqnarray}
for all $\psi \in \Phi$. 

Let's fix this result in the following theorem:

\begin{theorem} \label{th:MinContExt}
Let $(\Phi,\Phi_f,\cdot,0,1;E)$ be a compact information algebra, with finite elements $\Phi_{f}$, $sp$ a support function defined on $\Phi_{f}$ and $\sigma$ the random map defined by (\ref{eq:SigmaRandMap}). Then, if $(\mu,\mathcal{B})$ is the probability algebra associated with the canonical probability space $(I_{\Phi_{f}},\mathcal{A}_{\Phi_{f}},P_{sp})$ and $\rho_{\sigma} = \rho_{0} \circ s_{\sigma}$, then $sp_{\sigma} = \mu \circ \rho_{\sigma}$ is the minimal continuous extension of $sp$ to $\Phi$ among all continuous extensions.
\end{theorem}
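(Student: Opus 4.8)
The plan is to verify three things about $sp_\sigma = \mu \circ \rho_\sigma$: that it is a support function on $\Psi$, that it is \emph{continuous}, that it \emph{extends} $sp$ from $\Psi_f$, and finally that it is the \emph{least} among all continuous extensions. The first two facts are essentially already packaged in the machinery developed just before this theorem, so the work concentrates on assembling them in the right order and then on the minimality claim.

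First I would record that $\rho_\sigma = \rho_0 \circ s_\sigma$ is an allocation of probability. This is immediate from the general construction in Section~\ref{secRandMaps}: for any random mapping the composition $\rho_0 \circ s$ satisfies (A1) and (A2), using that $s_\sigma(1) = I_{\Psi_f}$ (so $\rho_\sigma(1) = \top$) and $s_\sigma(\phi \cdot \psi) = s_\sigma(\phi) \cap s_\sigma(\psi)$ together with the fact that $\rho_0$ turns countable intersections into meets (Theorem~\ref{th:ExtOfProj}). Next I would invoke Lemma~\ref{le:SigmaContSpFct}, which gives $s_\sigma(\bigvee_{i=1}^\infty \psi_i) = \bigcap_{i=1}^\infty s_\sigma(\psi_i)$; applying $\rho_0$ and again using Theorem~\ref{th:ExtOfProj} yields $\rho_\sigma(\bigvee_{i=1}^\infty \psi_i) = \bigwedge_{i=1}^\infty \rho_\sigma(\psi_i)$, so $\rho_\sigma$ is a $\sigma$-allocation of probability. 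Then Theorem~\ref{th:SpFctPropAoP} (parts 1 and 2) applies directly and tells me that $sp_\sigma = \mu \circ \rho_\sigma$ is a \emph{continuous} support function on $\Psi$. This is the bulk of the verification and it is essentially a citation chain.

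For the extension property I would use Lemma~\ref{le:SigmaRandMap}(2), which says $\sigma(I) \vert \Psi_f = I$. Concretely, for $\psi \in \Psi_f$ one has $s_\sigma(\psi) = \{I \in I_{\Psi_f} : \psi \in \sigma(I)\} = \{I : \psi \in I\}$, which is exactly the allocation of support underlying the canonical probability space $(I_{\Psi_f}, \mathcal{A}_{\Psi_f}, P_{sp})$; hence $sp_\sigma(\psi) = \mu(\rho_\sigma(\psi)) = P_{sp}\{I : \psi \in I\} = sp(\psi)$. So $sp_\sigma \vert \Psi_f = sp$.

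The genuinely load-bearing step is minimality, and that is where I would focus attention. The idea is that $\sigma(I)$ is, by Lemma~\ref{le:SigmaRandMap}(3), the \emph{smallest} $\sigma$-ideal in $\Psi$ whose restriction to $\Psi_f$ equals $I$. Now let $\widetilde{sp}$ be any continuous support function on $\Psi$ with $\widetilde{sp}\vert\Psi_f = sp$. By Theorem~\ref{th:RanMapGenSpFct} (its second part, for $\sigma$-semilattices and continuous support functions) $\widetilde{sp}$ is generated by a random mapping taking $\sigma$-ideal values, and I would pass to its canonical version $\tilde\nu$ à la Section~\ref{sec:CanRandMap}, whose values $\tilde\nu(I)$ are $\sigma$-ideals restricting to $I$ on $\Psi_f$. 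Minimality of $\sigma(I)$ then gives $\sigma(I) \subseteq \tilde\nu(I)$ for all $I$, i.e.\ $\sigma \le \tilde\nu$ as random mappings, whence $s_\sigma(\psi) \subseteq s_{\tilde\nu}(\psi)$ and $\rho_\sigma(\psi) = \rho_0(s_\sigma(\psi)) \le \rho_0(s_{\tilde\nu}(\psi)) = \rho_{\tilde\nu}(\psi)$ for every $\psi \in \Psi$. Applying the monotone measure $\mu$ yields $sp_\sigma(\psi) \le sp_{\tilde\nu}(\psi) = \widetilde{sp}(\psi)$, as desired. The obstacle I anticipate is aligning the probability spaces: $\widetilde{sp}$ comes with \emph{its own} space, so to compare allocations I must reduce to a common canonical space, mirroring Theorem~\ref{th:CompCanRandMap}. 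The cleanest route is probably to adapt the argument of Corollary~\ref{cor:LeastExt} — which already establishes that the canonical $\nu$ gives the least extension in the unrestricted case — to the $\sigma$-setting, replacing the minimal ideal $\nu(I)$ by the minimal \emph{$\sigma$-ideal} $\sigma(I)$ and "support function" by "continuous support function" throughout. That parallel, together with Lemma~\ref{le:SigmaRandMap}(3), is what makes the inequality $sp_\sigma \le \widetilde{sp}$ hold while keeping $sp_\sigma$ itself continuous.
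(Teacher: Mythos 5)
Your proposal is correct and follows essentially the same route as the paper: the paper establishes that $\rho_\sigma$ is a $\sigma$-allocation of probability from Lemma \ref{le:SigmaContSpFct} and Theorem \ref{th:ExtOfProj}, invokes Theorem \ref{th:SpFctPropAoP} for continuity, obtains the extension property from Lemma \ref{le:SigmaRandMap}, and derives minimality from the fact that $\sigma(I)$ is the least $\sigma$-ideal extending $I$, exactly as you do. If anything, your treatment of the minimality step --- aligning the canonical probability spaces in the spirit of Theorem \ref{th:CompCanRandMap} and adapting Corollary \ref{cor:LeastExt} to the $\sigma$-setting --- is more explicit than the paper's, which compresses that step into a single sentence.
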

 
We turn to the random mapping $\gamma$, defined in (\ref{eq:GammaRandMap}). This mapping is characterised as follows:

\begin{lemma} \label{le:GammaRandMap}
Let $(\Phi,\Phi_f,\cdot,0,1;E)$ be a compact information algebra, with finite elements $\Phi_{f}$ and $\gamma$ the random mapping defined by (\ref{eq:GammaRandMap}). Then the ideal $\gamma(I)$ is the minimal complete ideal in $\Phi$ whose restriction to $\Phi_{f}$ equals $I$, $\gamma(I) \cap \Phi_{f} = I$.
\end{lemma}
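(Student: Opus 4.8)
The plan is to verify the three assertions packed into the statement in turn: that $\gamma(I)$ is a complete ideal, that its restriction to $\Psi_{f}$ is $I$, and that it is the smallest complete ideal with that restriction. Throughout I would use that, because $(\Psi,D;\leq,\bot,\cdot,\epsilon)$ is algebraic, $(\Psi;\leq)$ is a complete lattice, so that $\bigvee I$ exists in $\Psi$ and $\gamma(I) = \downarrow\!\bigvee I$ is the principal ideal generated by this element.

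First I would observe that $\gamma(I)$ is a complete ideal, that is, a downward-closed subset of $\Psi$ closed under arbitrary joins. Downward closure is immediate from the definition of a principal ideal, and if $X \subseteq \gamma(I)$ then every element of $X$ is $\leq \bigvee I$, whence $\bigvee X \leq \bigvee I$ and $\bigvee X \in \gamma(I)$. Conversely, every complete ideal in a complete lattice is principal, so here ``complete ideal'' and ``principal ideal'' coincide, which is exactly what makes the minimality claim meaningful.

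Next I would show $\gamma(I) \vert \Psi_{f} = \gamma(I) \cap \Psi_{f} = I$. The inclusion $I \subseteq \gamma(I) \cap \Psi_{f}$ is clear, since every $\psi \in I$ is finite and satisfies $\psi \leq \bigvee I$. For the reverse inclusion, let $\psi \in \Psi_{f}$ with $\psi \leq \bigvee I$. The ideal $I$ is a directed set and $\psi$ is compact, so by the compactness property there is a $\chi \in I$ with $\psi \leq \chi$; since $I$ is downward closed, $\psi \in I$. This is precisely the argument already used in the proof of Theorem \ref{th:IdealRep}.

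Finally, for minimality, let $J$ be any complete ideal in $\Psi$ with $J \cap \Psi_{f} = I$; the goal is $\gamma(I) \subseteq J$. Take $\psi \in \gamma(I)$, so $\psi \leq \bigvee I$, and consider a finite element $\phi \in \Psi_{f}$ with $\phi \leq \psi$. Then $\phi \leq \bigvee I$, and by compactness together with directedness of $I$ there is a $\chi \in I$ with $\phi \leq \chi$. Since $\chi \in I = J \cap \Psi_{f} \subseteq J$ and $J$ is downward closed, $\phi \in J$. Thus every finite element below $\psi$ lies in $J$; by weak density (\ref{eq:WeakDens}) we have $\psi = \bigvee\{\phi \in \Psi_{f} : \phi \leq \psi\}$, and since $J$ is closed under arbitrary joins, $\psi \in J$. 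Hence $\gamma(I) \subseteq J$, which finishes the argument. The only load-bearing ingredients are compactness of the finite elements and weak density; the remaining steps are formal, so I do not anticipate a genuine obstacle beyond fixing the intended reading of ``complete ideal'' as ``ideal closed under arbitrary joins.''
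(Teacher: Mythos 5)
Your proof is correct, and the second step (the identity $\gamma(I)\vert\Psi_{f}=I$) is exactly the paper's argument: compactness of $\psi\in\Psi_{f}$ together with directedness and downward closure of $I$. Where you diverge is the minimality step. The paper's argument is a two-liner exploiting that a complete ideal in a complete lattice is principal: from $I = J\cap\Psi_{f}\subseteq J$ one gets $\bigvee I\leq\bigvee J$, and since $J=\downarrow\!\bigvee J$ it follows at once that $\gamma(I)=\downarrow\!\bigvee I\subseteq J$. You instead argue element by element: for $\psi\leq\bigvee I$ every finite $\phi\leq\psi$ lands in $J$ by compactness, and then weak density plus closure of $J$ under joins gives $\psi\in J$. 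Both are sound; the paper's route is more economical and uses nothing beyond the order-theoretic triviality that $J$ contains its own supremum, whereas your route imports weak density as an extra ingredient. What your version buys in exchange is a slightly sharper statement: since the set $\{\phi\in\Psi_{f}:\phi\leq\psi\}$ is directed ($\Psi_{f}$ being a subalgebra, hence closed under combination), your argument only ever takes joins of directed subsets of $J$, so it makes visible that closure under directed joins already suffices for the competitor ideal $J$ — though, as an ideal is itself directed, the paper's argument survives under that weakening too, so the gain is expository rather than substantive.
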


\begin{proof}
We have $I \subseteq \downarrow\!\bigvee I \cap \Phi_{f}$. Consider then an element $\psi \in \downarrow\!\bigvee I \cap \Phi_{f}$. From $\psi \leq \bigvee I$ it follows, since $I$ is a directed set, by compactness that there is a $\chi \in I$ such that $\psi \leq \chi$. But then $\psi \in \Phi_{f}$ implies $\psi \in I$. This proves that $\gamma(I) \cap \Phi_{f} = I$.

As a principal ideal in a complete lattice, $\gamma(I)$ is a complete ideal. Consider any other complete ideal $J$, whose restriction to $\Phi_{f}$ equals $I$. But then $\bigvee I \leq \bigvee J$ and $J =\ \downarrow\!\bigvee J$, hence $\gamma(I) \subseteq J$. This proves the minimality of $\gamma(I)$. 
\end{proof}

Consider now \textit{simple} random variables $\Delta$ on the canonical probability space $(I_{\Phi_{f}},\mathcal{A}_{\Phi_{f}},P_{sp})$. Any such random variable is defined by a measurable partition $B_{i} \in \mathcal{A}_{\Phi_{f}}$, $i=1,\ldots,m$ of $I_{\Phi_{f}}$ and $\Delta(I) = \psi_{i} \in \Phi_{f}$ if $I \in B_{i}$. Note that $\Delta \leq \gamma$ if and only if $\psi_{i} \leq \vee I$ for $I \in B_{i}$ and $i=1,\ldots,m$. This leads to the following result in which we consider $\gamma$ to be a map into $\Phi$.

\begin{lemma} \label{le:GammaAsGenRandVar}
The random mapping $\gamma$ defined by (\ref{eq:GammaRandMap}) is a random variable,
\begin{eqnarray}
\gamma = \bigvee\{\Delta:\Delta \textrm{ simple random variable}, \Delta \leq \gamma\}.
\nonumber
\end{eqnarray}
\end{lemma}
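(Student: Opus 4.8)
The plan is to identify the generalised random variable concretely as the set $\widehat{\gamma} = \{\Delta \in \mathcal{R}_s : \Delta \leq \gamma\}$ of simple random variables dominated by $\gamma$, and to invoke the fact (Section \ref{sec:GenRV}) that generalised random variables are exactly the elements of the ideal completion $\mathcal{R} = I_{\mathcal{R}_s}$, each of which is the join of the simple random variables it contains. So the first step is to show that $\widehat{\gamma}$ is an ideal of the semilattice $\mathcal{R}_s$. Down-closure is immediate from transitivity of $\leq$. For closure under join I would use that combination is the supremum in the idempotent algebra of random mappings: if $\Delta_1, \Delta_2 \leq \gamma$ then $(\Delta_1 \cdot \Delta_2)(J) = \Delta_1(J) \vee \Delta_2(J)$, and since $\Delta_1 \leq \gamma$ and $\Delta_2 \leq \gamma$ give $\Delta_1(J), \Delta_2(J) \leq \bigvee J$ (by the characterisation recorded just before Lemma \ref{le:GammaRandMap}), their join stays $\leq \bigvee J$, so $\Delta_1 \cdot \Delta_2 \leq \gamma$; moreover $\Delta_1 \cdot \Delta_2$ is again a simple random variable because $\Psi_f$ is a subalgebra. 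Hence $\widehat{\gamma} \in \mathcal{R}$ is a generalised random variable, and the representation $\gamma = \bigvee\{\Delta : \Delta \leq \gamma\}$ is the standard statement that an element of the ideal completion equals the join of the simple random variables below it.

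The substantive point that remains is to confirm that $\widehat{\gamma}$ genuinely recovers $\gamma$, i.e. that the family of simple random variables below $\gamma$ is rich enough to fill out the principal ideal $\gamma(I) = \downarrow\!\bigvee I$ at each sample point $I$. For this I would exhibit explicit members of $\widehat{\gamma}$: given a finite element $\eta \leq \bigvee I$, consider the two-valued map $\Delta(J) = \eta$ when $\eta \leq \bigvee J$ and $\Delta(J) = 1$ otherwise. Domination $\Delta \leq \gamma$ is clear ($\eta \leq \bigvee J$ on the first block, $1 \leq \bigvee J$ always on the second), and $\Delta(I) = \eta$ since $\eta \leq \bigvee I$. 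Combined with weak density in the algebraic algebra $(\Psi,D;\leq,\bot,\cdot,\epsilon)$, namely $\bigvee I = \bigvee\{\eta \in \Psi_f : \eta \leq \bigvee I\}$, this shows that the finite elements dominated by $\gamma$ are precisely those generating $\gamma(I)$, so no simple random variable is missing from $\widehat{\gamma}$.

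The main obstacle will be verifying that the map $\Delta$ just constructed is a bona fide simple random variable on the canonical space $(I_{\Psi_f}, \mathcal{A}_{\Psi_f}, P_{sp})$, which comes down to measurability of its defining block $\{J : \eta \leq \bigvee J\}$. Here I would use the crucial identity $\{J : \eta \leq \bigvee J\} = \{J : \eta \in J\}$, valid for finite $\eta$: one direction is trivial, and the other follows because each ideal $J$ is a directed set with supremum $\bigvee J$, so compactness of $\eta$ yields some $\phi \in J$ with $\eta \leq \phi$, whence $\eta \in J$ by down-closure of $J$. Since $\{J : \eta \in J\}$ is exactly one of the generators of $\mathcal{A}_{\Psi_f}$ (Section \ref{sec:CanRandMap}), the block is measurable. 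This reconciliation of the principal-ideal values of $\gamma$ with the compactness-driven structure of the canonical $\sigma$-field is the delicate step; once it is in place, the identification $\gamma = \bigvee\{\Delta : \Delta \leq \gamma\}$ and hence the status of $\gamma$ as a generalised random variable follow.
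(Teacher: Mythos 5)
Your proof is correct and follows essentially the same route as the paper's: the heart of both arguments is the two-valued simple random variable $\Delta_\eta(J)=\eta$ if $\eta\in J$ (equivalently $\eta\leq\bigvee J$, by the compactness identity you prove) and $1$ otherwise, whose defining block $\{J:\eta\in J\}$ is measurable because it is a generator of $\mathcal{A}_{\Psi_f}$, together with density to fill out $\gamma(I)=\downarrow\!\bigvee I$. Your preliminary verification that $\{\Delta\in\mathcal{R}_s:\Delta\leq\gamma\}$ is an ideal is a harmless extra framing step that the paper leaves implicit.
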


\begin{proof}
We claim that for all $I \in I_{\Phi_{f}}$ we have $\gamma(I) = \bigvee \{\Delta(I):\Delta \leq \gamma\}$ where it is understood that $\Delta$ denotes a simple random variable. Clearly $\gamma(I) \geq \bigvee \{\Delta(I):\Delta \leq \gamma\}$. To prove the converse inequality, consider $I \in I_{\Phi_{f}}$. Then we have by density $\gamma(I) =\ \downarrow\!\bigvee\{\psi \in \Phi_{f}:\psi \leq \bigvee I\}$. By Lemma \ref{le:GammaRandMap} $\psi \in I$ if and only if $\psi \in \gamma(I)$ and $\psi \in \Phi_f$. Define, for a $\psi \in I$,
\begin{eqnarray}
\Delta_{\psi}(I) =  \left\{ \begin{array}{ll}
\psi & \textrm{if}\ \psi \in I, \\
1 & \textrm{otherwise}. \end{array} \right.
\nonumber
\end{eqnarray}
The set $\{I:\psi \in I\}$ is measurable (belongs to $\mathcal{A}_{\Phi_{f}}$), hence $\Delta_{\psi}$ is a simple random variable and $\Delta_{\psi}(I) \leq \bigvee I$, hence $\Delta_{\psi} \leq \gamma$. Thus, we obtain 
\begin{eqnarray}
\gamma(I) = \bigvee\{\Delta_{\psi}(I):\psi \in I\} \leq \bigvee\{\Delta(I):\Delta \leq \gamma\} \leq \gamma(I).
\nonumber
\end{eqnarray}
This proves the identity $\gamma(I) = \bigvee \{\Delta(I):\Delta \leq \gamma\}$, hence the lemma.
\end{proof}

From this lemma it follows according to Theorem \ref{thCondOfGenRV} that for a directed subset $D$ of $\Phi$
\begin{eqnarray}
\rho_{\gamma}(\bigsqcup D) = \bigwedge_{\psi \in D} \rho_{\gamma}(\psi).
\nonumber
\end{eqnarray}
Further, from Theorem \ref{th:SpFctPropAoP} it follows that 
\begin{eqnarray}
sp_{\gamma}(\bigsqcup D) = \inf_{\psi \in D} sp_\gamma(\psi).
\nonumber
\end{eqnarray}
This implies also that for any $\psi \in \Phi$,
\begin{eqnarray} \label{eq:GammaCondExt}
sp_{\gamma}(\psi) = \inf\{ sp(\phi):\phi \in \Phi_{f},\phi \leq \psi\}.
\end{eqnarray}
This means that $sp_{\gamma}$ is the unique \textit{condensable} extension of $sp$ from $\Phi_f$ to $ \Phi$. We note also that according to Theorem \ref{th:CompCanRandMap}, since $\nu \leq \sigma \leq \gamma$, we have $sp_{\nu}(\psi) \leq sp_{\sigma}(\psi) \leq sp_{\Gamma}(\psi)$. These results (Theorem \ref{th:MinContExt} and (\ref{eq:GammaCondExt})) partly answer an open question posed in \cite{shafer79}. In this work it was shown that continuous and condensable extensions always exist if $\mathcal{E}$ is a subset lattice. Here it is shown that they always exist if $\mathcal{E}$ corresponds to the finite elements of a compact information algebra, independently whether $\Psi_{f}$ is a lattice or not.

We summarise these results in the following theorem.

\begin{theorem}
Let $(\Phi,\Phi_f,\cdot,0,1;E)$ be a compact information algebra, with finite elements $\Phi_f$, $sp$ a support function defined on $\Phi_f$ and $\gamma$ the random map defined by (\ref{eq:GammaRandMap}). If $(\mu,\mathcal{B})$ is the probability algebra associated with the canonical probability space $(I_{\Phi_{f}},\mathcal{A}_{\Phi_{f}},P_{sp})$ and if $\rho_\gamma = \rho_0 \circ s_\gamma$, then $sp_\gamma = \mu \circ \rho_\gamma$ is the unique condensable extension of $sp$ to $\Phi$.
\end{theorem}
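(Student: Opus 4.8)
The statement asserts that $sp_\gamma = \mu \circ \rho_\gamma$, with $\gamma$ the random map from (\ref{eq:GammaRandMap}), is the \emph{unique condensable extension} of the support function $sp$ from $\Psi_f$ to $\Psi$. The assertion really splits into three tasks: first, that $sp_\gamma$ is a support function on all of $\Psi$; second, that it restricts to $sp$ on $\Psi_f$ (i.e.\ it is genuinely an extension) and that it is \emph{condensable}; and third, that it is the \emph{only} condensable extension. Most of the machinery is already in place. The first task follows immediately from Theorem \ref{th:SpFctPropAoP}(1), since $\rho_\gamma$ is an allocation of probability (being $\rho_0 \circ s_\gamma$, the composite that Section \ref{secRandMaps} shows always yields an a.o.p) and $sp_\gamma = \mu \circ \rho_\gamma$. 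For the condensability and the extension property, I would invoke the chain of results built up just before the statement, as sketched in the excerpt itself.

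\textbf{Key steps in order.} First I would record from Lemma \ref{le:GammaRandMap} that $\gamma(I) = \downarrow\!\bigvee I$ restricts to $I$ on $\Psi_f$, so that for finite $\psi \in \Psi_f$ the support set $s_\gamma(\psi) = \{I : \psi \in \gamma(I)\} = \{I : \psi \in I\}$ coincides with the support set used to define $sp$ on the canonical space $(I_{\Psi_f},\mathcal{A}_{\Psi_f},P_{sp})$; hence $sp_\gamma(\psi) = sp(\psi)$ for all $\psi \in \Psi_f$, establishing that $sp_\gamma$ extends $sp$. Next, Lemma \ref{le:GammaAsGenRandVar} identifies $\gamma$ as a \emph{generalised random variable}, which lets me apply Theorem \ref{thCondOfGenRV}: for any directed $X \subseteq \Psi$, $\rho_\gamma(\bigvee X) = \bigwedge_{\psi \in X} \rho_\gamma(\psi)$. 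Feeding this into Theorem \ref{th:SpFctPropAoP}(3) yields $sp_\gamma(\bigvee X) = \inf_{\psi \in X} sp_\gamma(\psi)$, which is exactly condensability. Specialising $X$ to the directed set of finite elements below a given $\psi$ gives the explicit formula $sp_\gamma(\psi) = \inf\{sp(\phi) : \phi \in \Psi_f,\ \phi \leq \psi\}$, equation (\ref{eq:GammaCondExt}).

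\textbf{Uniqueness, the crux.} The formula (\ref{eq:GammaCondExt}) is in fact the lever for uniqueness, and this is the step I expect to carry the real weight. Suppose $sp'$ is \emph{any} condensable support function on $\Psi$ with $sp' \vert \Psi_f = sp$. For an arbitrary $\psi \in \Psi$, the set $A_\psi = \{\phi \in \Psi_f : \phi \leq \psi\}$ is directed and, by weak density in the algebraic algebra, $\bigvee A_\psi = \psi$. Condensability of $sp'$ then forces $sp'(\psi) = sp'(\bigvee A_\psi) = \inf_{\phi \in A_\psi} sp'(\phi) = \inf_{\phi \in A_\psi} sp(\phi)$, the last equality because $sp'$ agrees with $sp$ on finite elements. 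But the right-hand side is precisely $sp_\gamma(\psi)$ by (\ref{eq:GammaCondExt}). Hence $sp' = sp_\gamma$ on all of $\Psi$, which is uniqueness. The only subtlety to check carefully is that $A_\psi$ is genuinely directed (it is, since $\Psi_f$ is closed under combination, so $\phi_1, \phi_2 \leq \psi$ implies $\phi_1 \cdot \phi_2 \in \Psi_f$ and $\phi_1 \cdot \phi_2 \leq \psi$) and that $A_\psi$ is nonempty (the unit $1$ is finite and $1 \leq \psi$); both are immediate from the standing assumption that $(\Psi_f,D;\leq,\bot,\cdot,\epsilon)$ is a subalgebra. Thus the bulk of the proof is assembling the already-proved lemmas in the right sequence, and the genuinely new content is the short condensability-plus-density argument that pins $sp'$ down uniquely.
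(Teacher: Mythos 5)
Your proof is correct and follows essentially the same route as the paper: Lemma \ref{le:GammaAsGenRandVar} identifies $\gamma$ as a generalised random variable, Theorem \ref{thCondOfGenRV} combined with Theorem \ref{th:SpFctPropAoP} yields condensability of $sp_\gamma$, and specialising to the directed set of finite elements below $\psi$ gives formula (\ref{eq:GammaCondExt}). The only difference is that you spell out the uniqueness step explicitly (directedness and density of $\{\phi \in \Psi_f : \phi \leq \psi\}$ forcing any condensable extension to agree with (\ref{eq:GammaCondExt})), whereas the paper compresses this into a one-line assertion that condensability determines the extension from its values on $\Psi_f$; this is a faithful filling-in of the paper's argument, not a divergence.
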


We conclude by proving the converse of Theorem \ref{th:SpFctPropAoP} and thus characterizing continuous and condensable support functions by their associated allocations of support. 

\begin{theorem} \label{th:SpFctPropAoP1}
\begin{enumerate}
\item If $(\Phi,\leq)$ is a $\sigma$-semilattice under information orderr, then $sp = \mu \circ \rho$ is continuous on $\Phi$ if and only if $\rho$ is a $\sigma$-allocation of probability, that is for $\psi_{i} \in \Phi$, $i= 1,2,\ldots$
\begin{eqnarray} \label{eq:CharSigmaAoP}
\rho(\bigvee_{i=1}^{\infty} \psi_{i}) = \bigwedge_{i=1}^{\infty} \rho(\psi_{i}).
\end{eqnarray}
\item If $(\Phi,\leq)$ is a complete lattice under information order, then $sp = \mu \circ \rho$ is condensable on $\Phi$ if and only if for any directed set $D \subseteq \Phi$,
\end{enumerate}
\begin{eqnarray} \label{eq:CharCondAoP}
\rho(\bigsqcup D) = \bigwedge_{\psi \in D} \rho(\psi).
\end{eqnarray}
\end{theorem}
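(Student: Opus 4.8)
The two equivalences share a common structure, and in each the implication from the stated condition on $\rho$ to the continuity property of $sp$ is exactly part (2), respectively part (3), of Theorem \ref{th:SpFctPropAoP}. So the only new content is the two converse implications, and my plan is to prove both by the same scheme. In each case one inequality is immediate from monotonicity of $\rho$: since an allocation of probability is order-reversing, $\psi_i \leq \bigvee_j \psi_j$ gives $\rho(\bigvee_j \psi_j) \leq \rho(\psi_i)$ for every $i$ (and likewise $\rho(\bigvee X) \leq \rho(\psi)$ for every $\psi \in X$), whence $\rho(\bigvee_i \psi_i) \leq \bigwedge_i \rho(\psi_i)$ and $\rho(\bigvee X) \leq \bigwedge_{\psi \in X} \rho(\psi)$. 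The whole task is to upgrade these to equalities, and the lever is the positivity of $\mu$: if $a \leq b$ in $\mathcal{B}$ with $\mu(a) = \mu(b)$, then $\mu(b \wedge a^{c}) = \mu(b) - \mu(a) = 0$, so $b \wedge a^{c} = \bot$ and $a = b$. Hence it suffices to show in each case that the two sides carry the same $\mu$-measure.

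For part 1 I would first reduce to a monotone sequence. Given $\psi_1, \psi_2, \ldots \in \Psi$, I set $\phi_n = \psi_1 \vee \cdots \vee \psi_n$, so that $\phi_1 \leq \phi_2 \leq \cdots$ and $\bigvee_n \phi_n = \bigvee_i \psi_i$. By the finite meet-property (A2), $\rho(\phi_n) = \bigwedge_{i=1}^{n} \rho(\psi_i)$, and consequently $\bigwedge_n \rho(\phi_n) = \bigwedge_i \rho(\psi_i)$; this identification is the one place where the finite additivity of $\rho$ is genuinely used, and it lets me work with the increasing sequence $\phi_n$ instead of the arbitrary family $\psi_i$. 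Continuity of $sp$, equation (\ref{eq:ContOfSpFct}), then gives $\mu(\rho(\bigvee_n \phi_n)) = \lim_n \mu(\rho(\phi_n))$. The elements $\rho(\phi_n)$ decrease, hence form a downward directed family in $\mathcal{B}$, so Lemma \ref{downward} gives $\mu(\bigwedge_n \rho(\phi_n)) = \inf_n \mu(\rho(\phi_n)) = \lim_n \mu(\rho(\phi_n))$. Comparing the two, the measures of $\rho(\bigvee_i \psi_i)$ and $\bigwedge_i \rho(\psi_i)$ coincide, and the positivity argument turns the earlier inequality into the identity (\ref{eq:CharSigmaAoP}).

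Part 2 runs parallel, with a directed set in place of a monotone sequence and no reduction step. For directed $X \subseteq \Psi$ the family $\{\rho(\psi) : \psi \in X\}$ is downward directed in $\mathcal{B}$, because $\psi_1, \psi_2 \in X$ admit $\psi_3 \in X$ above both, giving $\rho(\psi_3) \leq \rho(\psi_1) \wedge \rho(\psi_2)$. Condensability of $sp$, equation (\ref{eqCondOfSpFct}), gives $\mu(\rho(\bigvee X)) = \inf_{\psi \in X} \mu(\rho(\psi))$, while Lemma \ref{downward} gives $\mu(\bigwedge_{\psi \in X} \rho(\psi)) = \inf_{\psi \in X} \mu(\rho(\psi))$; equality of measures plus positivity of $\mu$ yields (\ref{eq:CharCondAoP}).

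I do not anticipate a serious obstacle: the argument is essentially a measure-theoretic continuity computation. The two steps that need care are the reduction to a monotone sequence in part 1, where the finite-meet property of $\rho$ is indispensable to identify $\bigwedge_n \rho(\phi_n)$ with $\bigwedge_i \rho(\psi_i)$, and the verification that the relevant families in $\mathcal{B}$ are downward directed so that Lemma \ref{downward} applies. The decisive structural ingredient is the positivity of $\mu$, which is precisely what converts equality of measures into equality in the probability algebra $\mathcal{B}$.
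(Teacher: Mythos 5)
Your proposal is correct and follows essentially the same route as the paper's own proof: both reduce the converse of part 1 to a monotone sequence via the finite meet-property (A2), both compare measures using continuity/condensability of $sp$ on one side and Lemma \ref{downward} applied to the downward directed families $\{\rho(\phi_n)\}$ respectively $\{\rho(\psi):\psi\in X\}$ on the other, and both finish with the positivity of $\mu$. Your explicit justification of the positivity step and of the identification $\bigwedge_n \rho(\phi_n) = \bigwedge_i \rho(\psi_i)$ merely spells out details the paper leaves implicit.
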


\begin{proof}
The if-part of both parts is already proved in Theorem \ref{th:SpFctPropAoP}, it remains thus only to prove the only-if-part

(1) Consider a countable set of elements $\psi_{1},\psi_{2},\ldots \in \Phi$. We may always replace this sequence by a monotone sequence $\psi'_{1} \leq \psi'_{2} \leq \ldots$ having the same supremum, $\bigvee_{i=1}^{\infty} \psi_{i} = \bigvee_{i=1}^{\infty} \psi'_{i}$, by defining $\psi'_{i} = \bigvee_{j=1}^{i} \psi_{i}$. Then $\rho(\psi'_{1}) \geq \rho(\psi'_{2}) \geq \ldots$ is downwards directed. Therefore, by the continuity of $sp$ and Lemma \ref{downward},
\begin{eqnarray}
sp(\bigvee_{i=1}^{\infty} \psi_{i}) &=&sp(\bigvee_{i=1}^{\infty} \psi'_{i}) =\lim_{i \rightarrow \infty} sp(\psi'_{i})
\nonumber \\
&=&\lim_{i \rightarrow \infty} \mu(\rho(\psi'_{i})) = \mu(\bigwedge_{i=1}^{\infty} \rho(\psi'_{i}))
= \mu(\bigwedge_{i=1}^{\infty} \rho(\psi_{i})).
\nonumber
\end{eqnarray}
From $sp(\bigvee_{i=1}^{\infty} \psi_{i}) = \mu(\rho(\bigvee_{i=1}^{\infty} \psi_{i}))$ it follows that $\mu(\rho(\bigvee_{i=1}^{\infty} \psi_{i})) = \mu(\bigwedge_{i=1}^{\infty} \rho(\psi_{i}))$. Since $\bigwedge_{i=1}^{\infty} \rho(\psi_{i}) \geq \rho(\bigvee_{i=1}^{\infty} \psi_{i})$ and $\mu$ is a positive measure, it follows that   $\bigwedge_{i=1}^{\infty} \rho(\psi_{i}) = \rho(\bigvee_{i=1}^{\infty} \psi_{i})$.

(2) Let $D \subseteq \Phi$ be directed. By the condensability of $sp$ we obtain
\begin{eqnarray}
\mu(\rho(\bigsqcup D)) = sp(\bigsqcup D) = \inf_{\psi \in D} sp(\psi) =  \inf_{\psi \in D} \mu(\rho(\psi)).
\nonumber
\end{eqnarray}
Since the set $\{\rho(\psi):\psi \in D\}$ is downwards directed, we get from Lemma  \ref{downward} that $\inf_{\psi \in D} \mu(\rho(\psi)) = \mu(\bigwedge_{\psi \in D} \rho(\psi))$, hence $\mu(\rho(\bigsqcup D)) = \mu(\bigwedge_{\psi \in D} \rho(\psi))$. Since $\bigwedge_{\psi \in D} \rho(\psi) \geq \rho(\bigsqcup D)$, we conclude that $\bigwedge_{\psi \in X} \rho(\psi) = \rho(\bigsqcup D)$.
\end{proof}

If $(\Phi,\Phi_f,\cdot,0,1;E)$ is a compact information algebra and $sp = \mu \circ \rho$ condensable on $\Phi$, then (\ref{eq:CharCondAoP}) implies also that for all $\psi \in \Phi$
\begin{eqnarray}
\rho(\psi) = \bigwedge\{\rho(\psi):\psi \in \Phi_{f},\phi \leq \psi\}.
\nonumber
\end{eqnarray}

We are going to study these different extensions of a support functions from a part of $\Phi$ to the whole of $\Phi$ in the next section from a different angle.

To conclude this section, consider an a.o.p $\rho$ defined on $\Phi$. It is generated by some random mapping $\Gamma$ into the ideal completion $I_{\Phi}$ of $\Phi$. However, this map is not unique as we have seen. This confirms a former remark, that a random map $\Gamma$ contains more information than its associated a.o.p $\rho_{\Gamma}$. This explains why the map $\Gamma \mapsto \rho_{\Gamma}$ is, in general, not a homomorphism (see the end of Section \ref{subsec:AoPAndRV}).

%%%%%%%%%%%%%%%%%%%%%%%%%%%%%%%%%%%%%%%%%%%%%%%%%%%%%%%%%%%%%%%%

%%%%%%%%%%%%%%%%%%%%%%%%%%%%%%%%%%%%%%%%%%%%%%%%%%%%%%%%%%
\section{Minimal extensions	} \label{subsec:MinExt}

In the previous section, we have found an extension $sp_{\nu}$ for any support function $sp$ on some join-sub-semilattice $\mathcal{E}$ of an information algebra $(\Phi,\cdot,0,1;E)$ to the whole of the algebra. This extension is defined in terms of the canonical random mapping associated with $sp$. In this section, we shall show how the extension $sp_{\nu}$ and other extensions can be defined explicitly in terms of the support function $sp$ on $\mathcal{E}$. The following theorem is an extension to information algebras of a result due to \cite{shafer73} for set algebras. 

\begin{theorem} \label{th:ExplLeastExt}
If $sp$ is a support function defined on a join-semilattice $\mathcal{E} \subseteq \Phi$, where $(\Phi,\cdot,0,1;E)$ is an information algebra, then
\begin{eqnarray} \label{eq:ExplLeastExt}
sp_{\nu}(\phi) = 
\sup \left\{ \sum_{\emptyset \not= I \subseteq \{1,\ldots,n\}} (-1)^{\vert I \vert + 1} 
sp(\vee_{i \in I} \psi_{i})  \right\}
\end{eqnarray}
where the supremum is to be taken over all finite sets $I$ of elements $\psi_{1},\ldots,\psi_{n} \geq \phi$, $n=1,2,\ldots $ with $\psi_{1},\ldots,\psi_{n} \in \mathcal{E}$.
\end{theorem}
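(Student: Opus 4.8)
The plan is to prove the two inequalities $sp_{\nu}(\phi)\ge\widetilde{sp}(\phi)$ and $sp_{\nu}(\phi)\le\widetilde{sp}(\phi)$, where $\widetilde{sp}(\phi)$ denotes the right-hand side of (\ref{eq:ExplLeastExt}), exploiting that $sp_{\nu}=\mu\circ\rho_{\nu}$ with $\rho_{\nu}=\rho_{0}\circ s_{\nu}$, so that by (\ref{eq:InnerProbExt}) one has $sp_{\nu}(\phi)=P_{sp,*}(s_{\nu}(\phi))$, the inner probability (for $P_{sp}$ on $\mathcal{A}_{\mathcal{E}}$) of the support set of $\phi$ under the canonical mapping $\nu$. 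The first structural fact I would record is the set identity $s_{\nu}(\phi)=\bigcup\{s_{\nu}(\psi):\psi\in\mathcal{E},\ \psi\ge\phi\}$: indeed $I\in s_{\nu}(\phi)$ means $\phi\le\chi$ for some $\chi\in p(I)=I\cap\mathcal{E}$, and for such $\chi$ one has $\chi\ge\phi$, $\chi\in\mathcal{E}$ and $I\in s_{\nu}(\chi)$, while the reverse inclusion is immediate. Note also that $s_{\nu}(\psi)=\{I:\psi\in I\}$ is $\mathcal{A}_{\mathcal{E}}$-measurable with $P_{sp}(s_{\nu}(\psi))=sp(\psi)$ for every $\psi\in\mathcal{E}$.

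For the easy inequality $sp_{\nu}\ge\widetilde{sp}$, I would fix a finite family $\psi_{1},\dots,\psi_{n}\in\mathcal{E}$ with $\psi_{i}\ge\phi$ and observe that $\bigcup_{i=1}^{n}s_{\nu}(\psi_{i})$ is a measurable subset of $s_{\nu}(\phi)$. Using $s_{\nu}(\vee_{i\in I}\psi_{i})=\bigcap_{i\in I}s_{\nu}(\psi_{i})$ (Theorem \ref{th:ElPropAllSp}(3), valid since $\vee_{i\in I}\psi_{i}\in\mathcal{E}$) together with $P_{sp}\circ s_{\nu}=sp$ on $\mathcal{E}$, the inclusion–exclusion formula gives $P_{sp}(\bigcup_{i=1}^{n}s_{\nu}(\psi_{i}))=\sum_{\emptyset\ne I\subseteq\{1,\dots,n\}}(-1)^{|I|+1}sp(\vee_{i\in I}\psi_{i})$. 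Hence each term in the supremum defining $\widetilde{sp}(\phi)$ is the $P_{sp}$-measure of a measurable subset of $s_{\nu}(\phi)$, so it is bounded by $P_{sp,*}(s_{\nu}(\phi))=sp_{\nu}(\phi)$; taking the supremum yields $\widetilde{sp}(\phi)\le sp_{\nu}(\phi)$. The same computation shows $\widetilde{sp}$ is $[0,1]$-valued, and applying Definition \ref{def:SpFct}(2) to $sp$ with $\phi$ itself as the dominated element shows $\widetilde{sp}|_{\mathcal{E}}=sp$.

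For the reverse inequality there are two equivalent routes, both bottlenecked at the same measure-theoretic point. The clean route uses leastness: if I can verify that $\widetilde{sp}$ is a support function on $\Psi$ (Definition \ref{def:SpFct}) extending $sp$, then Corollary \ref{cor:LeastExt} forces $sp_{\nu}\le\widetilde{sp}$, and combined with the first inequality this gives equality. The direct route instead computes $P_{sp,*}(s_{\nu}(\phi))$: writing $\mathcal{D}=\{s_{\nu}(\psi):\psi\in\mathcal{E},\psi\ge\phi\}$ (a family closed under finite intersection, since joins in $\mathcal{E}$ correspond to intersections of supports), I would use the countable chain condition of the probability algebra $\mathcal{B}$ to extract a countable cofinal subfamily and reduce $\bigvee_{D\in\mathcal{D}}[D]$ to $[U]$ with $U=\bigcup_{k}U_{k}$ an increasing union of finite unions, whence $\mu([U])=\lim_{k}P_{sp}(U_{k})\le\widetilde{sp}(\phi)$; it then remains to identify $\rho_{\nu}(\phi)=\rho_{0}(\bigcup\mathcal{D})$ with $[U]=\bigvee_{D\in\mathcal{D}}[D]$.

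I expect this last identification to be the main obstacle. The operator $\rho_{0}$ is only known to commute with countable intersections (Theorem \ref{th:ExtOfProj}), and dually $\xi_{0}$ with countable unions (Corollary \ref{cor;ExtOfProj2}); it does not distribute over arbitrary unions, and indeed $P_{sp,*}(\bigcup\mathcal{D})=\sup\{P_{sp}(\text{finite union from }\mathcal{D})\}$ can fail for a general $\pi$-system. The point to be exploited is that here $\mathcal{A}_{\mathcal{E}}=\sigma(\{s_{\nu}(\psi):\psi\in\mathcal{E}\})$ is generated precisely by these support sets, so that every measurable subset of $s_{\nu}(\phi)$ already lies in the sub-$\sigma$-field generated by countably many of them; combining this with the countable cofinal reduction above should show that the exceptional set $s_{\nu}(\phi)\setminus U$ has inner measure zero, closing the gap. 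This reduction to a countable generating subfamily, and the verification that it controls the otherwise-uncountable union of $P_{sp}$-null remainders, is the technical heart of the argument, and it is the generalised-information-algebra analogue of Shafer's set-algebra computation in \cite{shafer73}.
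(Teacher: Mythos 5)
Your outline gets the skeleton right --- the easy inequality $\widetilde{sp}\le sp_{\nu}$ via inclusion--exclusion on the canonical space, and the observation that everything would follow from Corollary \ref{cor:LeastExt} once $\widetilde{sp}$ is known to be a support function on $\Psi$ extending $sp$ --- but you never actually prove that $\widetilde{sp}$ is a support function, and that verification is precisely the content of the theorem. Your ``clean route'' is stated only conditionally (``if I can verify\ldots''), and your ``direct route'' is abandoned at what you yourself call the main obstacle: identifying $\rho_{0}\bigl(\bigcup\mathcal{D}\bigr)$ with $\bigvee_{D\in\mathcal{D}}[D]$, i.e.\ computing the inner probability of an uncountable union from its finite subfamilies. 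The suggestion that the countable chain condition, plus the fact that $\mathcal{A}_{\mathcal{E}}$ is generated by the support sets, ``should'' force the exceptional set to have inner measure zero is not an argument: in general the inner measure of the union of a $\pi$-system of measurable sets is not the supremum of the measures of finite unions (uncountably many null sets can cover a set of positive measure, even inside the $\sigma$-field they generate), so something specific to this situation must be proved, and your sketch does not supply it.

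The paper closes exactly this gap with one construction your proposal is missing: working in the probability algebra $(\mathcal{B},\mu)$ rather than on the sample space, define $\bar{\rho}(\phi)=\bigvee\{\rho(\psi):\psi\in\mathcal{E},\ \phi\le\psi\}$, where $\rho$ is the a.o.p.\ of the canonical mapping. One checks that $\bar{\rho}$ is an allocation of probability: $\bar{\rho}(1)=\top$ is immediate, and $\bar{\rho}(\phi_{1}\cdot\phi_{2})=\bar{\rho}(\phi_{1})\wedge\bar{\rho}(\phi_{2})$ follows from the distributive law for complete Boolean algebras, using that $\psi_{1}\cdot\psi_{2}\in\mathcal{E}$ whenever $\psi_{1},\psi_{2}\in\mathcal{E}$. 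Then inclusion--exclusion rewrites each term in the supremum defining $\widetilde{sp}(\phi)$ as $\mu\bigl(\vee_{i=1}^{n}\rho(\psi_{i})\bigr)$; these elements form an upward directed subset of $\mathcal{B}$, so Lemma \ref{downward} gives $\widetilde{sp}(\phi)=\mu(\bar{\rho}(\phi))$, whence $\widetilde{sp}$ is a support function on $\Psi$ by Theorem \ref{th:SpFctPropAoP}, and Corollary \ref{cor:LeastExt} finishes the proof. Note that this route never needs the identification $\rho_{\nu}(\phi)=\bigvee\{\rho(\psi):\psi\in\mathcal{E},\ \phi\le\psi\}$ that blocked you; on the contrary, that identity (Corollary \ref{cor:CanAoP}) is deduced \emph{afterwards} from the proved equality $sp_{\nu}=\widetilde{sp}$ together with positivity of $\mu$. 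So the repair for your proof is to replace the attempted inner-measure computation by the supremum $\bar{\rho}$ taken inside the probability algebra, where directedness and Lemma \ref{downward} do all the measure-theoretic work.
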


\begin{proof}
Let $f$ denote the function on the right hand side of (\ref{eq:ExplLeastExt}). We remark that $f$ is equal to $sp$ on $\mathcal{E}$ (compare Theorem \ref{th:SpFctProp}). Note also that $f$ is less or at most equal to $sp_{\nu}$, since the latter, as a support function on $\Phi$, is monotone of order $\infty$. Therefore, it is sufficient to show that $f$ is a support function on $\Phi$, because then, according to Corollary \ref{cor:LeastExt} it must be greater or equal to $sp_{\nu}$, so that $sp_{\nu} = f$ as claimed.

In order to prove $f$ to be a support function, we use, following \cite{shafer73} allocations of probability. Let $\rho$ be the allocation of support associated with the canonical random mapping generating $sp$, such that for $\psi \in \mathcal{E}$,
\begin{eqnarray}
sp(\psi) = \mu(\rho(\psi)),
\nonumber
\end{eqnarray}
where $\mu$ is the probability of the probability algebra $(\mathcal{B},\mu)$ associated with the probability space $(I_{\Phi},\mathcal{A}_{\mathcal{E}},P_{sp})$ of the canonical probability space associated with the support function $sp$ on $\mathcal{E}$ (Section \ref{subsec:CanRandMap}). Further, $\rho = \rho_{0} \circ s_{\nu}$ (see Section \ref{subsec:RanMaps}). Define for $\phi \in \Phi$,
\begin{eqnarray}
\bar{\rho}(\phi) = \bigvee \{\rho(\psi):\psi \in \mathcal{E},\phi \leq \psi\}.
\end{eqnarray}
We are going to show that $\bar{\rho}$ is an a.o.p on $\Phi$. Obviously, for $\psi \in \mathcal{E}$, we have $\bar{\rho}(\psi) = \rho(\psi)$, hence in particular $\bar{\rho}(1) = \top$. Consider $\phi_{1},\phi_{2} \in \Phi$. Then $\phi_{1},\phi_{2} \leq \phi_{1} \cdot \phi_{2}$, hence $\bar{\rho}(\phi_{1}),\bar{\rho}(\phi_{2}) \geq \bar{\rho}(\phi_{1} \cdot \phi_{2})$ or $\bar{\rho}(\phi_{1}) \wedge \bar{\rho}(\phi_{2}) \geq \bar{\rho}(\phi_{1} \cdot \phi_{2})$. On the other hand, let $\phi_{1} \leq \psi_{1} \in \mathcal{E}$ and $\phi_{2} \leq \psi_{2} \in \mathcal{E}$. Then, $\psi_{1} \cdot \psi_{2} \in \mathcal{E}$ and $\phi_{1} \cdot \phi_{2} \leq \psi_{1} \cdot \psi_{2}$ such that $\rho(\psi_{1}) \wedge \rho(\psi_{2}) = \rho(\psi_{1} \cdot \psi_{2}) \leq \bar{\rho}(\psi_{1} \cdot \psi_{2})$. It follows that
\begin{eqnarray}
\bar{\rho}(\phi_{1} \cdot \phi_{2}) &\geq&\bigvee \{\rho(\psi_{1}) \wedge \rho(\psi_{2}): \phi_{1} \leq \psi_{1},\phi_{2} \leq \psi_{2},\psi_{1},\psi_{2} \in \mathcal{E} \}
\nonumber \\
&=&\left( \bigvee \{\rho(\psi_{1}):\phi_{1} \leq \psi_{1} \in \mathcal{E}\}  \right) \wedge \left( \bigvee \{\rho(\phi_{2}):\psi_{2} \leq \psi_{2} \in \mathcal{E}\}  \right)
\nonumber \\
&=&\bar{\rho}(\phi_{1}) \wedge \bar{\rho}(\phi_{2}).
\nonumber 
\end{eqnarray}
So, we conclude that $\bar{\rho}(\phi_{1} \cdot \phi_{2}) = \bar{\rho}\phi_{1}) \wedge \bar{\rho}(\phi_{2})$ and that, therefore, $\bar{\rho}$ is an a.o.p.

In the formula (\ref{eq:ExplLeastExt}) for $f$, we may replace $sp$ by $\mu \circ \rho$,
\begin{eqnarray} \label{eq:SupRepr}
f(\phi) &=& \sup \left\{ \sum_{\emptyset \not= I \subseteq \{1,\ldots,n\}} (-1)^{\vert I \vert + 1} 
\mu(\rho(\vee_{i \in I} \psi_{i}))  \right\}
\nonumber \\
&=&\sup \left\{ \sum_{\emptyset \not= I \subseteq \{1,\ldots,n\}} (-1)^{\vert I \vert + 1} 
\mu(\wedge_{i \in I} \rho(\psi_{i}))  \right\}
\nonumber \\
&=&\sup \left\{ \mu(\vee_{i=1}^{n} \rho(\psi_{i})) \right\}
\end{eqnarray}
by the inclusion-exclusion-formula of probability theory. The supremum ranges over the same range as in (\ref{eq:ExplLeastExt}). The family of elements $\vee_{i=1}^{n} \rho(\psi_{i})$ in this supremum forms an upwards directed set in $\mathcal{B}$. By Lemma \ref{downward} we obtain therefore
\begin{eqnarray}
f(\psi)&=& \mu( \bigvee \{\vee_{i=1}^{n} \rho(\psi_{i}): \psi_{i} \in \mathcal{E},\psi_{i} \geq \phi,i=1,\ldots,n;n=1,2,\ldots\})
\nonumber \\
&=&\mu( \bigvee \{ \rho(\psi): \psi \in \mathcal{E},\psi \geq \phi\})
\nonumber \\
&=&\mu(\bar{\rho}(\psi))
\nonumber
\end{eqnarray}
Here, the associate law for joins in a complete lattice is used. Since $\bar{\rho}$ is an a.o.p, $f = \mu \circ \bar{\rho}$ is a support function on $\Phi$ (see Theorem \ref{th:SpFctPropAoP}). This concludes the proof.
\end{proof}

In the proof above we used the a.o.p $\rho$ associated with the support function $sp$ on $\mathcal{E}$. We remind that $sp_{\nu} = \mu \circ \rho = \mu \circ \rho_{0} \circ s_{\nu}$. On the other hand the a.o.p $\bar{\rho}$ generates $f$, that is $f = \mu \circ \bar{\rho}$. From $sp_{\nu} = f$, as stated in the theorem, we deduce as a complement that $\rho = \rho_{0} \circ s_{\nu} = \bar{\rho}$.  In fact, we have seen that for $\psi \in \mathcal{E}$ we have $\rho(\psi) = \bar{\rho}(\psi)$ and for any $\phi \in \Phi$, $\phi \leq \psi \in \mathcal{E}$ implies $\rho(\psi) \leq \rho(\phi)$, hence $\bar{\rho}(\phi) \leq \rho(\phi)$. Then we have $\rho(\phi) = \bar{\rho}(\phi) \vee (\rho(\phi) - \bar{\rho}(\phi))$. It follows that
\begin{eqnarray}
sp_{\nu}(\phi) = \mu(\rho(\phi)) = \mu(\bar{\rho}(\phi)) + \mu((\rho(\phi) - \bar{\rho}(\phi))
\nonumber
\end{eqnarray}
But from $sp_{\nu}(\phi) = f(\phi) = \mu(\bar{\rho}(\phi))$ we deduce that $\mu(\rho(\phi) - \bar{\rho}(\phi)) = 0$, hence $\rho(\phi) - \bar{\rho}(\phi) = \bot$. Since $\bar{\rho}(\phi) \leq \rho(\phi)$ this means that indeed $\bar{\rho}(\phi) = \rho(\phi)$. We may rephrase this result in the following Corollary.

\begin{corollary} \label{cor:CanAoP}
If $\rho = \rho_{0} \circ s_{\nu}$ is the allocation of probability associated with the support function $sp_{\nu} = \mu \circ \rho$, which is the least extension of the support function $sp$ on $\mathcal{E}$, then
\begin{eqnarray}
\rho(\phi) = \vee \{\rho(\psi):\psi \in \mathcal{E},\phi \leq \psi\}.
\nonumber
\end{eqnarray}
\end{corollary}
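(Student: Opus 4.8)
The plan is to identify the right-hand side of the claimed identity with the auxiliary allocation $\bar\rho$ built in the proof of Theorem~\ref{th:ExplLeastExt}, and then to upgrade the equality of the induced support functions to an equality of the underlying elements of the probability algebra $\mathcal{B}$. Write $\bar\rho(\phi) = \bigvee\{\rho(\psi):\psi\in\mathcal{E},\ \phi\leq\psi\}$. From the proof of Theorem~\ref{th:ExplLeastExt} it is already known that $\bar\rho$ is an allocation of probability on $\Psi$ and that $f=\mu\circ\bar\rho$ coincides with the least extension $sp_\nu=\mu\circ\rho$, where $\rho=\rho_0\circ s_\nu$. Thus $\mu(\rho(\phi))=\mu(\bar\rho(\phi))$ for every $\phi\in\Psi$, and the whole task reduces to promoting this equality in measure to $\rho(\phi)=\bar\rho(\phi)$ in $\mathcal{B}$.

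First I would record the easy inequality $\bar\rho(\phi)\leq\rho(\phi)$. Since $\rho$ satisfies (A2) it is order-reversing, so $\phi\leq\psi$ implies $\rho(\psi)\leq\rho(\phi)$; taking the join over all $\psi\in\mathcal{E}$ with $\phi\leq\psi$ gives $\bar\rho(\phi)\leq\rho(\phi)$ for all $\phi$. In passing, for $\psi\in\mathcal{E}$ the join is attained at $\psi$ itself, so $\bar\rho(\psi)=\rho(\psi)$, confirming that $\bar\rho$ extends $\rho$ restricted to $\mathcal{E}$.

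The main step, and the one place where something genuine happens, is the converse inequality. Because $\bar\rho(\phi)\leq\rho(\phi)$, the two elements admit a disjoint decomposition $\rho(\phi)=\bar\rho(\phi)\vee(\rho(\phi)-\bar\rho(\phi))$ in $\mathcal{B}$, and additivity of $\mu$ yields
\begin{eqnarray*}
\mu(\rho(\phi)) = \mu(\bar\rho(\phi)) + \mu(\rho(\phi)-\bar\rho(\phi)).
\end{eqnarray*}
Since $sp_\nu(\phi)=\mu(\rho(\phi))$ and $f(\phi)=\mu(\bar\rho(\phi))$ agree by Theorem~\ref{th:ExplLeastExt}, the last summand vanishes, so $\mu(\rho(\phi)-\bar\rho(\phi))=0$. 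Here the crucial input is the positivity of the measure in a probability algebra: $\mu(b)=0$ forces $b=\bot$. Hence $\rho(\phi)-\bar\rho(\phi)=\bot$, that is $\rho(\phi)\leq\bar\rho(\phi)$, and together with the first inequality $\rho(\phi)=\bar\rho(\phi)$, which is exactly the claimed formula.

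The only real obstacle is this last measure-to-algebra passage; everything else is bookkeeping about the order-reversing allocation of probability. It is worth stressing that the argument relies essentially on two facts established earlier: that $sp_\nu$ is the least extension (Corollary~\ref{cor:LeastExt}), realised by the canonical random mapping $\nu$ and its allocation $\rho=\rho_0\circ s_\nu$, and that the measure $\mu$ on $\mathcal{B}$ is positive, which is precisely what lets one recover a Boolean element from its measure.
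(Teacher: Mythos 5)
Your proof is correct and follows essentially the same route as the paper's own argument: identifying the right-hand side with the allocation $\bar\rho$ from the proof of Theorem~\ref{th:ExplLeastExt}, establishing $\bar\rho(\phi)\leq\rho(\phi)$ via anti-monotonicity of $\rho$, and then using the disjoint decomposition $\rho(\phi)=\bar\rho(\phi)\vee(\rho(\phi)-\bar\rho(\phi))$ together with $sp_\nu=f$ and positivity of $\mu$ to force $\rho(\phi)-\bar\rho(\phi)=\bot$. Nothing is missing; this matches the paper's proof step for step.
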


If the support function $sp$ is defined on a \textit{lattice} $\mathcal{E}$, then Theorem \ref{th:ExplLeastExt} may be sharpened \cite{shafer73}.

\begin{theorem} \label{th:SpFctOnLat}
If $sp$ is a support function defined on a lattice $\mathcal{E} \subseteq \Phi$, then
\begin{eqnarray} \label{eq:ExplLeastExt2}
sp_{\nu}(\phi) = \sup \{sp(\psi):\psi \in \mathcal{E},\phi \leq \psi\}.
\end{eqnarray}
\end{theorem}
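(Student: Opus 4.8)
The plan is to derive the identity directly from the explicit formula for the least extension established in Theorem \ref{th:ExplLeastExt}, exploiting that on a lattice $\mathcal{E}$ any finite family of elements above $\phi$ has a meet that again lies in $\mathcal{E}$ and above $\phi$. Throughout, write $M(\phi) = \sup\{sp(\psi):\psi \in \mathcal{E},\phi \leq \psi\}$ for the right-hand side of (\ref{eq:ExplLeastExt2}).

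First I would dispose of the inequality $sp_{\nu}(\phi) \geq M(\phi)$. By Theorem \ref{th:ExplLeastExt}, $sp_{\nu}(\phi)$ is the supremum, over all finite families $\psi_{1},\ldots,\psi_{n} \in \mathcal{E}$ with $\psi_{i} \geq \phi$, of the corresponding inclusion--exclusion sums. Specialising to singleton families ($n=1$), the associated sum is exactly $sp(\psi_{1})$, so every value $sp(\psi)$ with $\psi \in \mathcal{E}$, $\phi \leq \psi$ occurs among the terms whose supremum is $sp_{\nu}(\phi)$. Taking the supremum over such $\psi$ yields $sp_{\nu}(\phi) \geq M(\phi)$. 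This direction uses neither the lattice structure nor any deep property of $sp$.

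The substance is the reverse inequality $sp_{\nu}(\phi) \leq M(\phi)$, and here the lattice hypothesis enters. Fix an arbitrary finite family $\psi_{1},\ldots,\psi_{n} \in \mathcal{E}$ with $\psi_{i} \geq \phi$ and set $\psi^{*} = \psi_{1} \wedge \cdots \wedge \psi_{n}$. Since $\mathcal{E}$ is a lattice, $\psi^{*} \in \mathcal{E}$; since each $\psi_{i} \geq \phi$ we have $\psi^{*} \geq \phi$; and by construction $\psi^{*} \leq \psi_{i}$, i.e.\ $\psi_{1},\ldots,\psi_{n} \geq \psi^{*}$. The monotone-of-order-$\infty$ inequality of a support function (item 2 of Definition \ref{def:SpFct}), applied with base point $\psi^{*}$, then gives
\begin{eqnarray*}
sp(\psi^{*}) \geq \sum_{\emptyset \not= I \subseteq \{1,\ldots,n\}} (-1)^{\vert I \vert + 1} sp(\vee_{i \in I} \psi_{i}).
\end{eqnarray*}
Because $\psi^{*} \in \mathcal{E}$ and $\psi^{*} \geq \phi$, the left-hand side is bounded above by $M(\phi)$, so every inclusion--exclusion sum appearing in Theorem \ref{th:ExplLeastExt} is at most $M(\phi)$. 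Taking the supremum over all such finite families gives $sp_{\nu}(\phi) \leq M(\phi)$, and combined with the first step this establishes (\ref{eq:ExplLeastExt2}).

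The one genuine point to watch---and the step I expect to be the crux---is the correct choice of base point for the monotonicity inequality: it must be the meet $\psi^{*}$ of the family rather than $\phi$ itself, for only then are all the $\psi_{i}$ above the base point, which is precisely what item 2 of Definition \ref{def:SpFct} requires, while simultaneously $\psi^{*}$ must remain in $\mathcal{E}$ and dominate $\phi$ so that $sp(\psi^{*})$ is defined and contributes to $M(\phi)$. Each of these three facts relies on $\mathcal{E}$ being closed under meets and on every $\psi_{i}$ lying above $\phi$; without closure under meets the argument collapses back to the weaker formula of Theorem \ref{th:ExplLeastExt}. As an optional cross-check one could instead route through the allocation of probability $\rho$ associated with $sp$, using that $\rho$ reverses order, so that $\psi^{*} \leq \psi_{i}$ forces $\bigvee_{i} \rho(\psi_{i}) \leq \rho(\psi^{*})$ and hence $\mu(\bigvee_{i} \rho(\psi_{i})) \leq \mu(\rho(\psi^{*})) = sp(\psi^{*})$; this reproduces the same bound and confirms consistency with the representation used in the proof of Theorem \ref{th:ExplLeastExt}.
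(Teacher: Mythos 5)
Your proof is correct, and for the substantive inequality it follows a genuinely different route from the paper's. You treat Theorem \ref{th:ExplLeastExt} as a black box and bound each inclusion--exclusion sum by $sp(\psi^{*})$, $\psi^{*} = \psi_{1} \wedge \cdots \wedge \psi_{n}$, by invoking the monotone-of-order-$\infty$ inequality of Definition \ref{def:SpFct} for the given support function $sp$ at the base point $\psi^{*}$; since $\psi^{*} \in \mathcal{E}$ and $\psi^{*} \geq \phi$, each sum is at most the right-hand side of (\ref{eq:ExplLeastExt2}). The paper instead re-enters the proof of Theorem \ref{th:ExplLeastExt}: starting from the intermediate representation (\ref{eq:SupRepr}), $sp_{\nu}(\phi) = \sup \mu(\vee_{i=1}^{n} \rho(\psi_{i}))$ with $\rho = \rho_{0} \circ s_{\nu}$ the allocation of probability of the canonical random mapping, it proves $\vee_{i=1}^{n} \rho(\psi_{i}) \leq \rho(\wedge_{i=1}^{n} \psi_{i})$ from the inclusion $s_{\nu}(\wedge_{i=1}^{n}\psi_{i}) \supseteq \cup_{i=1}^{n} s_{\nu}(\psi_{i})$ and then uses $\mu(\rho(\wedge_{i}\psi_{i})) = sp(\wedge_{i}\psi_{i})$ --- which is exactly the ``optional cross-check'' you sketch at the end. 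So both proofs hinge on the same inequality: you obtain it abstractly from the axioms of a support function (legitimate, since $sp$ is assumed to be one on $\mathcal{E}$), while the paper derives it concretely from the canonical allocation, staying inside the probability-algebra machinery of Theorem \ref{th:ExplLeastExt} and Corollary \ref{cor:CanAoP}. Your version is the more elementary and self-contained; the paper's exhibits the inequality at the level of $\rho$ itself. The remaining differences are cosmetic: for the easy direction the paper quotes monotonicity of $sp_{\nu}$ where you specialise the formula to $n=1$, and both proofs rest on the same reading of the lattice hypothesis --- that $\psi^{*}$ lies in $\mathcal{E}$, is below each $\psi_{i}$, and dominates every common lower bound $\phi \in \Psi$ --- a dependence your write-up at least makes explicit.
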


\begin{proof}
Since $sp_{\nu}$ is monotone, the right hand side of (\ref{eq:ExplLeastExt2}) is less or equal to $sp_{\nu}$. It remains to show the converse inequality. Again, let $\rho = \rho_{0} \circ s_{\nu}$ be the a.o.p associated with the support function $sp$ and $\mu$ the probability in the corresponding probability algebra $(\mathcal{B},\mu)$. Consider $\psi_{1},\ldots,\psi_{n} \in \mathcal{E}$. Since $\mathcal{E}$ is a lattice, $\wedge_{i=1}^{n} \psi_{i}$ belongs to $\mathcal{E}$ too. Note that
\begin{eqnarray*}
\lefteqn{s_{\nu}(\wedge_{i=1}^{n} \psi_{i}) = \{I \in I_{\Phi}:\wedge_{i=1}^{n} \psi_{i} \in \nu(I)\} } \\
&&\supseteq \cup_{i=1}^{n} \{I \in I_{\Phi}:\psi_{i} \in \nu(I)\} = \cup_{i=1}^{n} s_{\nu}(\psi_{i}).
\end{eqnarray*}
Therefore,
\begin{eqnarray}
\rho(\wedge_{i=1}^{n} \psi_{i}) &=& [s_{\nu}(\wedge_{i=1}^{n} \psi_{i})] \geq [ \cup_{i=1}^{n} s_{\nu}(\psi_{i})]
\nonumber \\
&=& \vee_{i=1}^{n} [s_{\nu}(\psi_{i})] = \vee_{i=1}^{n} \rho(\psi_{i}).
\nonumber
\end{eqnarray}
Here $[A]$ denotes, as usual, the projection of $A \in \mathcal{A}_{\mathcal{E}}$ to the associated Boolean algebra $\mathcal{B}$ in the probability algebra $(\mathcal{B},\mu)$, see Section \ref{subsec:RanMaps}. Using (\ref{eq:SupRepr}) in the proof of Theorem \ref{th:ExplLeastExt} and $sp_{\nu}(\phi) = f(\phi)$, we obtain now  
\begin{eqnarray}
sp_{\nu}(\phi) = \sup \{ \mu(\vee_{i=1}^{n} \rho(\psi_{i}) \} \leq \sup \{\mu(\rho(\wedge_{i=1}^{n}\psi_{i})\},
\nonumber \\
\end{eqnarray}
where the supremum ranges over $\psi_{i} \in \mathcal{E}$, $\phi \leq \psi_{i}$, $i=1,\ldots,n$ and $n =1,2,\ldots$. Recall that $\wedge_{i=1}^n \in \mathcal{E}$, if $\psi_i \in \mathcal{E}$. Therefore it follows that 
\begin{eqnarray}
sp_{\nu}(\phi) \leq \sup \{ \mu(\rho(\psi)):\psi \in \mathcal{E},\phi \leq \psi\} = \sup \{sp(\psi):\psi \in \mathcal{E},\phi \leq \psi\} .
\nonumber
\end{eqnarray}
This concludes the proof.
\end{proof}

There are in particular several examples of compact information algebras where the finite elements form a lattice, hence where Theorem \ref{th:SpFctOnLat} applies if $\mathcal{E} = \Phi_{f}$.

We have seen in Section \ref{subsec:CanRandMap}, that support functions $sp$, defined on the finite elements $\Phi_{f}$ of a compact information algebra $\Phi$ may be extended either to a continuous support function $sp_{\sigma}$ or to a condensable one $sp_{\gamma}$. Further, by definition of condensability, $sp_{\gamma}$ is determined by the values of $sp$ on $\Phi_{f}$. This is like $sp_{\nu}$, which according to Theorem \ref{th:ExplLeastExt} is also determined by the values of $sp$ on $\Phi_{f}$, if $\mathcal{E} = \Psi_{f}$. Does a similar result also hold for the continuous extension $sp_{\sigma}$? Yes, but as far as we know, only for a very special case, namely if $\mathcal{E}$ is a \textit{distributive lattice}, see \cite{shafer79}, Theorem 4. The following theorem is a particular case of Shafer's result, a case of special interest for us, where we assume that the finite elements form a distributive lattice, like the cofinite elements in a subset algebra.

\begin{theorem} \label{th:ContExtFromLat}
Let $(\Phi,\Phi_f,\cdot,0,1;E)$ be a compact information algebra, with finite elements $\Phi_{f}$  and $(\Phi_{f},\leq)$ a distributive lattice. If $sp$ is a support function defined on $\Phi_{f}$, then for all $\phi \in \Phi$,
\begin{eqnarray} \label{eq:ContExtFromLat}
sp_{\sigma}(\phi) = \sup\{\lim_{i \rightarrow \infty} sp(\psi_{i}):\psi_{1} \leq \psi_{2} \leq \ldots \in \Phi_{f},
\bigvee_{i=1}^{\infty} \psi_{i} \geq \phi\}.
\end{eqnarray}
\end{theorem}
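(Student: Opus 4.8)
The plan is to realise the right-hand side of (\ref{eq:ContExtFromLat}) as $\mu\circ\bar\rho$ for a suitable $\sigma$-allocation of probability $\bar\rho$, to show that $g:=\mu\circ\bar\rho$ is a continuous support function extending $sp$, and then to pin $g$ down between two bounds, one coming from the minimality of $sp_\sigma$ and one from a direct computation. Throughout I identify $\Psi$ with the ideal completion $I_{\Psi_f}$ (Theorem \ref{th:IdealRep}), write $\rho=\rho_\sigma|_{\Psi_f}$ for the allocation of $sp$ in the probability algebra $(\mathcal B,\mu)$ of the canonical space $(I_{\Psi_f},\mathcal A_{\Psi_f},P_{sp})$, so that $sp(\psi)=\mu(\rho(\psi))$ for $\psi\in\Psi_f$, and denote by $g(\phi)$ the right-hand side of (\ref{eq:ContExtFromLat}). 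Since $sp$ is antitone, $\lim_i sp(\psi_i)=\inf_i sp(\psi_i)$ along every increasing sequence, so $g$ is well defined.

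Following the pattern of Theorem \ref{th:ExplLeastExt}, I set $\bar\rho(\phi)=\bigvee\{\bigwedge_i\rho(\psi_i): \psi_1\le\psi_2\le\cdots\in\Psi_f,\ \bigvee_i\psi_i\ge\phi\}$. The crucial use of distributivity is here: as $\Psi_f$ is a distributive lattice, its ideal completion $\Psi$ is a frame, so $(\bigvee_i\psi_i)\wedge(\bigvee_i\psi'_i)=\bigvee_i(\psi_i\wedge\psi'_i)$ for increasing sequences; consequently $\{\psi_i\wedge\psi'_i\}$ (lattice meet in $\Psi_f$) is again an admissible witness, and since $\rho$ is antitone its meet-value dominates those of $\{\psi_i\}$ and $\{\psi'_i\}$. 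Thus the family defining $\bar\rho(\phi)$ is upward directed in $\mathcal B$, and Lemma \ref{downward} gives $\mu(\bar\rho(\phi))=\sup\mu(\bigwedge_i\rho(\psi_i))=\sup\lim_i sp(\psi_i)=g(\phi)$. That $\bar\rho$ satisfies (A1) and (A2) is routine — (A1) from the constant sequence, (A2) by combining witnesses through $\{\psi_i\vee\psi'_i\}$ and the infinite distributive law of the complete Boolean algebra $\mathcal B$ — and $\bar\rho$ restricts to $\rho$ on $\Psi_f$ (the constant sequence gives one inequality, compactness of finite elements the other), so $g|_{\Psi_f}=sp$.

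The main step, and the step I expect to be the obstacle, is to verify that $\bar\rho$ is a $\sigma$-allocation, i.e. (\ref{eq:CharSigmaAoP}). The inequality $\bar\rho(\bigvee_k\eta_k)\le\bigwedge_k\bar\rho(\eta_k)$ is immediate from antitonicity. For the reverse, take $\eta_k$ increasing with join $\eta$; given witness sequences $\{\psi^k_i\}_i$ for the $\eta_k$, the diagonal join $\chi_i=\bigvee_{k\le i}\psi^k_i$ is increasing, is a witness for $\eta$, and satisfies $\bigwedge_i\rho(\chi_i)=\bigwedge_k\bigwedge_i\rho(\psi^k_i)$. The difficulty is that passing from this to $\bar\rho(\eta)\ge\bigwedge_k\bar\rho(\eta_k)$ amounts to interchanging an outer meet with inner directed joins, which fails in a Boolean algebra by formal distributivity alone. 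Here I would invoke the countable chain condition of $\mathcal B$: each directed supremum $\bar\rho(\eta_k)$ is realised as the join of an increasing countable chain of witness-values, and — using that the witness families are nested decreasing in $k$ because $\eta_k$ increases — a diagonal selection together with the measure-continuity of Lemma \ref{downward} yields $\mu(\bar\rho(\eta))\ge\inf_k\mu(\bar\rho(\eta_k))$, hence the desired inequality. This ccc-plus-diagonal argument is the technical heart of the proof.

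With $\bar\rho$ established as a $\sigma$-allocation, Theorem \ref{th:SpFctPropAoP} shows that $g=\mu\circ\bar\rho$ is a continuous support function on $\Psi$, and it extends $sp$ from $\Psi_f$. By the minimality of $sp_\sigma$ among continuous extensions (Theorem \ref{th:MinContExt}) we obtain $sp_\sigma\le g$. The opposite inequality is direct: for any admissible witness $\{\psi_i\}$ with $\psi=\bigvee_i\psi_i\ge\phi$, continuity and antitonicity of $sp_\sigma$ give $sp_\sigma(\phi)\ge sp_\sigma(\psi)=\lim_i sp_\sigma(\psi_i)=\lim_i sp(\psi_i)$, whence $sp_\sigma(\phi)\ge g(\phi)$. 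Therefore $sp_\sigma=g$, which is exactly (\ref{eq:ContExtFromLat}).
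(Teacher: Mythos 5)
Your proposal is correct and follows essentially the same route as the paper's own proof: the paper likewise defines $\tilde\rho(\phi)=\bigvee\mathcal{D}(\phi)$ with $\mathcal{D}(\phi)$ the set of countable-witness meets $\bigwedge_i\rho(\psi_i)$, uses distributivity of $(\Psi_f;\leq)$ to make $\mathcal{D}(\phi)$ upward directed, establishes the $\sigma$-allocation property by exactly your diagonal-plus-$\epsilon/2^k$ measure argument (Lemma \ref{downward} together with positivity of $\mu$), and then concludes by sandwiching between the direct inequality $sp_\sigma\geq f$ and the minimality of $sp_\sigma$ from Theorem \ref{th:MinContExt}. The remaining differences are cosmetic: the paper works with countable witness families rather than increasing sequences, and it does not verify (A2) separately since the $\sigma$-property subsumes it.
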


\begin{proof}
We denote the right hand side of (\ref{eq:ContExtFromLat}) by $f$. Note that $\lim_{i \rightarrow \infty} sp(\psi_{i}) = sp_{\sigma}(\bigvee_{i=1}^{\infty} \psi_{i}) \leq sp_{\sigma}(\phi)$ if $\bigvee_{i=1}^{\infty} \psi_{i} \geq \phi$, see Theorem \ref{th:SpFctProp}. This shows that $sp_{\sigma} \geq f$. We are going to show that $f$ is a continuous support function extending $sp$. Since $sp_{\sigma}$ is the minimal continuous support function extending $sp$ (Theorem \ref{th:MinContExt}), this proves then that $sp_{\sigma} = f$.

Let $(\mu,\mathcal{B})$ be the probability algebra associated with the canonical probability space (Section \ref{subsec:CanRandMap}) of the support function $sp$ and $\rho$ the corresponding allocation of probability, so that $sp = \mu \circ \rho$. For each $\phi \in \Psi$ define $\mathcal{D}(\phi) \subseteq \mathcal{B}$ by
\begin{eqnarray}
\mathcal{D}(\phi)= \{\bigwedge_{i=1}^{\infty} \rho(\psi_{i}):\psi_{i}\in \Phi_{f},i=1,2,\ldots,
\bigvee_{i=1}^{\infty} \psi_{i} \geq \phi\}
\nonumber
\end{eqnarray}
(here we follow the proof of Theorem 4 in \cite{shafer79}). The sets $\mathcal{D}(\phi)$ are upwards directed: In fact, consider two countable sets $\psi_{1,i},\psi_{2,j} \in \Psi_{f}$ such that $\bigvee_{i=1}^{\infty} \psi_{1,i},\bigvee_{j=1}^{\infty} \psi_{2,j}  \geq \phi$. Then, since $\Psi_{f}$ is a lattice, the set $\psi_{1,i} \wedge \psi_{2,j}$ is still a countable subset of $\Psi_{f}$. And, since the lattice $\Psi_{f}$ is assumed distributive,
\begin{eqnarray}
\bigvee_{i,j=1}^{\infty} (\psi_{1,i} \wedge \psi_{2,j}) = (\bigvee_{i=1}^{\infty} \psi_{1,i}) \wedge (\bigvee_{j=1}^{\infty} \psi_{2,j}) \geq \phi.
\nonumber
\end{eqnarray}
Finally, $\psi_{1,i} \wedge \psi_{2,j} \leq \psi_{1,i},\psi_{2,j}$ implies $\rho(\psi_{1,i} \wedge \psi_{2,j}) \geq \rho(\psi_{1,i}),\rho(\psi_{2,j})$, hence $\bigwedge_{i,j=1}^\infty \rho(\psi_{1,i} \wedge \psi_{2,j}) \geq \bigwedge_{i=1}^\infty \rho(\psi_{1,i}),  \bigwedge_{j=1}^\infty \rho(\psi_{2,j})$. So indeed, $\mathcal{D}(\phi)$ is upwards directed.

Define now $\tilde{\rho}(\phi) = \bigvee \mathcal{D}(\phi)$. We claim that $\tilde{\rho}$ is a $\sigma$-a.o.p and that $f = \mu \circ \tilde{\rho}$. This shows then that $f$ is a continuous support function. Since obviously $f \vert \Psi_{f} = sp$ this proves the theorem.

It is evident that $\tilde{\rho}(1) = \top$. So, it only remains to show that $\tilde{\rho}(\bigvee_{i=1}^{\infty} \psi_{i}) = \bigwedge_{i=1}^{\infty} \tilde{\rho}(\psi_{i})$ or $\bigvee \mathcal{D}(\bigvee_{i=1}^{\infty} \psi_{i}) = \bigwedge_{i=1}^{\infty} \bigvee \mathcal{D}(\psi_{i})$. Fix a sequence $\psi_{1},\psi_{2},\ldots \in \Phi$. To simplify notation let $\mathcal{D} = \mathcal{D}(\bigvee_{i=1}^{\infty} \psi_{i})$, $\mathcal{D}_{i} = \mathcal{D}(\psi_{i})$ and $M = \bigwedge_{i} \bigvee \mathcal{D}_{i}$. The task is then to show that 
\begin{eqnarray}
\bigvee \mathcal{D} = M.
\nonumber
\end{eqnarray}
Now, $\mathcal{D} \subseteq \mathcal{D}_{i}$ for all $i$, hence $\bigvee \mathcal{D} \leq \bigwedge_{i} \bigvee \mathcal{D}_{i}$. Further, since $\mathcal{D}_{i}$ are upwards directed sets, by Lemma \ref{downward}
 we have
\begin{eqnarray}
\mu(\bigvee \mathcal{D}_{i}) = \sup_{\psi \in \mathcal{D}_{i}} \mu(\psi).
\nonumber
\end{eqnarray}
Choose an $\epsilon > 0$. Then for all $i$, there is an $M_{i} \in \mathcal{D}_{i}$ such that
\begin{eqnarray}
\mu(\bigvee \mathcal{D}_{i} - M_{i}) \leq \frac{\epsilon}{2i}.
\nonumber
\end{eqnarray}
Since $M \leq \bigvee \mathcal{D}_{i}$, we obtain also
\begin{eqnarray}
\mu(M - M_{i}) = \mu(M \wedge M_{i}^{c}) \leq \mu(\bigvee\mathcal{D}_{i} \wedge M_{i}^{c}) =
\mu(\bigvee \mathcal{D}_{i} - M_{i}) \leq \frac{\epsilon}{2i}.
\nonumber
\end{eqnarray}
Let $B_{i}$ denote a set of elements $\phi_{1},\phi_{2},\ldots \in \Psi_{f}$ such that $\bigvee B_{i} \geq \psi_{i}$ and $M_{i} = \bigwedge_{\phi \in B_{i}} \rho(\phi)$. Let $B_{\epsilon} = \bigcup_{i=1}^{\infty} B_{i}$ and $M_{\epsilon} = \bigwedge_{i=1}^{\infty} M_{i}$. Then $\bigvee B_{\epsilon} \geq \bigvee_{i=1}^{\infty} \psi_{i}$ and
\begin{eqnarray}
M_{\epsilon} = \bigwedge_{i=1}^{\infty} \bigwedge_{\phi \in B_{i}} \rho(\phi) = \bigwedge_{\phi \in B_{\epsilon}} \rho(\phi).
\nonumber
\end{eqnarray}
Thus $M_{\epsilon}$ belongs to $\mathcal{D}$, hence $M_{\epsilon} \leq \bigvee \mathcal{D}$. We have
\begin{eqnarray}
M - M_{\epsilon} &=& M \wedge M_{\epsilon}^{c} = M \wedge (\bigwedge_{i=1}^{\infty} M_{i})^{c}
= M \wedge (\bigvee_{i=1}^{\infty} M_{i}^{c})
\nonumber \\
&=& \bigvee_{i=1}^{\infty} (M \wedge M_{i}^{c}) 
= \bigvee_{i=1}^{\infty} (M - M_{i}).
\nonumber
\end{eqnarray}
Thus we obtain
\begin{eqnarray}
\mu(M - M_{\epsilon}) = \mu(\bigvee_{i=1}^{\infty} (M - M_{i})) \leq  \sum_{i=1}^\infty \frac{\epsilon}{2^i} = \epsilon.
\nonumber
\end{eqnarray}
Now, $M_{\epsilon} \leq \bigvee \mathcal{D}$ implies $M_{\epsilon}^{c} \geq (\bigvee \mathcal{D})^{c}$ and therefore $M - \bigvee \mathcal{D} = M \wedge (\bigvee\mathcal{D})^{c} \leq M \wedge M_{\epsilon}^{c} = M - M_{\epsilon}$. This shows that
\begin{eqnarray}
\mu(M -  \mathcal{D}) \leq \mu(M - M_{\epsilon}) \leq \epsilon
\nonumber
\end{eqnarray}
Since $\epsilon$ is arbitrarily small, we conclude that $\mu(M -  \mathcal{D}) = 0$ and from this it follows that $\bigvee \mathcal{D} = M$, because $\vee \mathcal{D} \leq M$. This proves that $\tilde{\rho}$ is a $\sigma$-a.o.p.

Next, we are going to show that $f = \mu \circ \tilde{\rho}$, hence that $f$ is a continuous support function. Note that $sp_{\sigma}= \mu \circ \rho$. Then, since $sp_{\sigma} \vert \Psi_{f} = sp$ and since $\rho$ is a $\sigma$-a.o.p, because $sp_\sigma$ is continuous, we have
\begin{eqnarray}
f(\phi) &=& \sup\{\mu(\rho(\bigvee_{i=1}^{\infty} \psi_{i})):\psi_{i} \in \Phi_{f},\bigvee_{i=1}^{\infty} \psi_{i} \geq \phi\}
\nonumber \\
&=&\sup\{\mu(\bigwedge_{i=1}^\infty \rho(\psi_i)):\psi_i \in \Phi_f, \bigvee_{i=1}^\infty \psi_i \geq \psi\}
\nonumber \\
&=&\sup\{\mu(\rho(\chi)):\chi \in \mathcal{D}(\phi)\}.
\nonumber 
\end{eqnarray}
Since $\mathcal{D}(\phi)$ is upwards directed, we obtain (Lemma \ref{downward})
\begin{eqnarray}
f(\phi) = \mu(\bigvee \mathcal{D}(\phi)) = \mu(\tilde{\rho}(\phi))
\nonumber
\end{eqnarray}
This concludes the proof.
\end{proof}

Under the assumptions of Theorem \ref{th:ContExtFromLat} we may, according to the considerations in the proof, also write
\begin{eqnarray}
sp_{\sigma}(\phi) = \sup\{sp_{\sigma}(\bigvee_{i=1}^{\infty} \psi_{i}):\psi_{i} \in \psi_{f},\bigvee_{i=1}^{\infty} \psi_{i} \geq \phi\},
\nonumber
\end{eqnarray}
or, equivalently,
\begin{eqnarray}
sp_{\sigma}(\phi) = \sup\{sp_{\sigma}(\chi_{i}):\chi_{i} \in \sigma(\psi_{f}),\chi \geq \phi\},
\nonumber
\end{eqnarray}
If $\Phi_{f}$ is in addition \textit{countable}, then $\sigma(\Phi_{f}) = \Phi$ and 
\begin{eqnarray}
sp_{\sigma}(\psi) = \lim_{i \rightarrow \infty} sp(\psi_i)
\nonumber
\end{eqnarray}
if $\psi_{1} \leq \psi_{2} \leq \ldots \in \Phi_{f}$ and $\vee_{i=1}^{\infty} \psi_{i} = \phi$. We remark that this result holds in general if the set of finite elements is countable, without the additional assumption that $(\Phi_f,\leq)$ is a distributive lattice. This follows from the alternative approach to generate continuous support function, based on results of \cite{norberg89} mentioned at the end of Section \ref{sec:GenSpFct}.

Just as Corollary \ref{cor:CanAoP}, we may also derive the following result:

\begin{corollary}
Under the conditions of Theorem \ref{th:ContExtFromLat}, if $\rho_{\sigma}$ is the a.o.p associated with the support function $sp_{\sigma}$, then $\rho_{\sigma} = \tilde{\rho}$, where the latter a.o.p is defined in the proof of Theorem \ref{th:ContExtFromLat}.
\end{corollary}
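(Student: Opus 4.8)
The plan is to mirror the argument already used for Corollary~\ref{cor:CanAoP}: first record that $\rho_{\sigma}$ and $\tilde{\rho}$ induce the \emph{same} support function, then establish one inequality between them as elements of $\mathcal{B}$, and finally invoke the positivity of $\mu$ to upgrade the equality of measures into an equality of the allocations themselves. Concretely, by Theorem~\ref{th:MinContExt} the allocation associated with $sp_{\sigma}$ is $\rho_{\sigma} = \rho_{0} \circ s_{\sigma}$ and $sp_{\sigma} = \mu \circ \rho_{\sigma}$, while the proof of Theorem~\ref{th:ContExtFromLat} shows $sp_{\sigma} = f = \mu \circ \tilde{\rho}$. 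Hence $\mu(\rho_{\sigma}(\phi)) = \mu(\tilde{\rho}(\phi))$ for every $\phi \in \Psi$, where $\mu$ is the probability of the probability algebra $(\mathcal{B},\mu)$ attached to the canonical probability space $(I_{\Psi_{f}},\mathcal{A}_{\Psi_{f}},P_{sp})$ of $sp$, and $\rho = \rho_{0} \circ s_{\nu}$ is the allocation used in that proof, with $\nu$ the canonical random mapping.

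The core step is to show $\tilde{\rho}(\phi) \leq \rho_{\sigma}(\phi)$ in $\mathcal{B}$ for all $\phi \in \Psi$. First I would recall that for $\psi \in \Psi_{f}$ one has $s_{\nu}(\psi) = \{I \in I_{\Psi_{f}} : \psi \in I\}$, since $I$ is a down-set of $\Psi_{f}$. Now take a generator $\bigwedge_{i=1}^{\infty} \rho(\psi_{i})$ of $\mathcal{D}(\phi)$, with $\psi_{i} \in \Psi_{f}$ and $\bigvee_{i} \psi_{i} \geq \phi$. Since $\rho_{0}$ commutes with countable intersections (Theorem~\ref{th:ExtOfProj}),
\[
\bigwedge_{i=1}^{\infty} \rho(\psi_{i}) = \rho_{0}\Big(\bigcap_{i=1}^{\infty} s_{\nu}(\psi_{i})\Big) = \rho_{0}\big(\{I : \psi_{i} \in I \textrm{ for all } i\}\big).
\]
If $\psi_{i} \in I$ for all $i$, then $\bigvee_{i} \psi_{i} \in \sigma(I)$ and $\phi \leq \bigvee_{i} \psi_{i}$ give $\phi \in \sigma(I)$, i.e. $I \in s_{\sigma}(\phi)$. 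Thus $\bigcap_{i} s_{\nu}(\psi_{i}) \subseteq s_{\sigma}(\phi)$, and by monotonicity of $\rho_{0}$ the generator is $\leq \rho_{0}(s_{\sigma}(\phi)) = \rho_{\sigma}(\phi)$. Taking the supremum over all generators yields $\tilde{\rho}(\phi) = \bigvee \mathcal{D}(\phi) \leq \rho_{\sigma}(\phi)$.

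Finally I would close exactly as in Corollary~\ref{cor:CanAoP}: writing $\rho_{\sigma}(\phi) = \tilde{\rho}(\phi) \vee (\rho_{\sigma}(\phi) - \tilde{\rho}(\phi))$ as a disjoint join in the Boolean algebra $\mathcal{B}$, additivity of $\mu$ gives $\mu(\rho_{\sigma}(\phi)) = \mu(\tilde{\rho}(\phi)) + \mu(\rho_{\sigma}(\phi) - \tilde{\rho}(\phi))$; since the two measures coincide, $\mu(\rho_{\sigma}(\phi) - \tilde{\rho}(\phi)) = 0$, and positivity of $\mu$ forces $\rho_{\sigma}(\phi) - \tilde{\rho}(\phi) = \bot$. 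Hence $\rho_{\sigma}(\phi) = \tilde{\rho}(\phi)$ for all $\phi$, that is $\rho_{\sigma} = \tilde{\rho}$.

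The main obstacle is getting the inclusion right at the level of allocations of support, namely $\bigcap_{i} s_{\nu}(\psi_{i}) \subseteq s_{\sigma}(\phi)$; this rests on the identity $s_{\nu}(\psi) = \{I : \psi \in I\}$ for finite $\psi$ and on recognizing $\bigvee_{i} \psi_{i}$ as an explicit witness for $\phi \in \sigma(I)$. Everything else is the positivity trick already rehearsed for $sp_{\nu}$, and notably the argument does not even require the distributivity hypothesis of Theorem~\ref{th:ContExtFromLat} for this inequality, only the minimal-continuous identification $sp_{\sigma} = \mu \circ \tilde{\rho}$ supplied by that theorem.
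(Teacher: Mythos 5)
Your proof is correct and takes essentially the same route as the paper, which derives this corollary ``just as Corollary \ref{cor:CanAoP}'': equality of the induced support functions ($sp_{\sigma} = \mu \circ \rho_{\sigma} = \mu \circ \tilde{\rho}$), a one-sided inequality $\tilde{\rho}(\phi) \leq \rho_{\sigma}(\phi)$, and positivity of $\mu$ to force the difference to $\bot$. Your explicit derivation of the inequality—passing through $\bigwedge_{i} \rho(\psi_{i}) = \rho_{0}\bigl(\bigcap_{i} s_{\nu}(\psi_{i})\bigr) \leq \rho_{0}(s_{\sigma}(\phi))$—correctly fills in the step the paper leaves implicit.
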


We have shown that a support function defined on some join-semilattice $\mathcal{E} \subseteq \Psi$ of an information algebra $\Psi$ can have different kinds of exentsion, defined in terms of its values in $\mathcal{E}$. Similar and more results of this kind can be found in \cite{shafer79} in a more restricted context.

%%%%%%%%%what about continuous and condensable support functions???

%%%%%%%%%%%%%%%%%%%%%%%%%%%%%%%%%%%%%%%%%%%%%%%%%%%%%%%%%%
\section{The Boolean case} \label{subsec:BooleanCase}

In this section, the information algebra $(\Phi,\cdot,0,1;E)$ is assumed to be \textit{Boolean}, that is, the semilattice $(\Phi,\leq)$ is a Boolean lattice under information order. Everything said so far about random mappings, allocations of probability and support functions remains valid. However the Boolean nature of $\Phi$ allows to present an equivalent dual view to allocations of probability and support functions. This dual view comes from considering possibility sets and associated degrees of plausibility as introduced in Section \ref{subsec:RanMaps}. In a general information algebra these concepts are of no particular interest, they are far less interesting and important than allocations of support and support functions. In the Boolean case however their status changes to one of equal importance and interest. 

Consider a random mapping $\Gamma$ from a probability space $(\Omega,\mathcal{A},P)$ to a Boolean information algebra $\phi$. Generalising the discussion in Section  \ref{subsec:SimpleRanMaps} with respect to simple random variables, we define the set of assumptions $\omega$ under which a hypothesis $\psi \in \Psi$ its \textit{possible}, that is not excluded, by
\begin{eqnarray}
p_{\Gamma}(\psi) = \{\omega \in \Omega:\psi \cdot \Gamma(\omega)\not= 0\}.
\nonumber
\end{eqnarray}
Given, that the top element $0$ of the Boolean algebra $(\Psi,\leq)$ is considered to represent the contradiction, an assumption $\omega$ such that $\psi \cdot \Gamma(\omega) = 0$ must be considered as impossible, as excluded by the information contained in the random mapping $\Gamma$. Therefore, $p_{\Gamma}(\psi)$ is called the \textit{possibility set} of $\psi$, relative to the random mapping $\Gamma$. 

In a Boolean algebra we have $\psi \cdot \Gamma(\omega) = \psi \vee \Gamma(\omega) = 0$ if and only if $\psi^{c} \leq \Gamma(\omega)$, where $\psi^c$ denotes the complement of $\psi$ in $\Psi$. Therefore, we see that
\begin{eqnarray} \label{eq:PossSupDuality}
p_{\Gamma}(\psi) = \{\omega \in \Omega:\psi^{c} \leq \Gamma(\omega)\}^{c} = (s_{\Gamma}(\psi^{c}))^{c},
\end{eqnarray}
where $s_{\Gamma}$ is the allocation of support associated with the random mapping $\Gamma$ (see (\ref{eq:DefAllocOfSupp})). This is the first of the basic duality relations between support and plausibility or possibility considered in this section. It allows to translate results on allocations of support immediately to possibility sets.

\begin{theorem} 
If $\Gamma: \Omega \rightarrow \Phi$, where $(\Phi,\leq)$ is a Boolean lattice, then
\begin{enumerate}
\item $p_{\Gamma}(0) = \emptyset$,
\item If $\phi \leq \psi$, then $p_{\Gamma}(\psi) \subseteq p_{\Gamma}(\phi)$.
\item $p_{\Gamma}(\phi \wedge \psi) = p_{\Gamma}(\phi) \cup p_{\Gamma}(\psi)$.
\item If $\Gamma$ is normalised, then $p_{\Gamma}(1) = \Omega$.
\item If $\Phi$ is a Boolean $\sigma$-algebra, then 
\begin{eqnarray}
p_{\Gamma}(\bigwedge_{i=1}^{\infty} \psi_{i}) = \bigcup_{i=1}^{\infty} p_{\Gamma}(\psi_{i}).
\nonumber
\end{eqnarray}
\item If $\Phi$ is a complete Boolean algebra, then for any subset $X$ of $\Phi$,
\end{enumerate}
\begin{eqnarray}
p_{\Gamma}(\bigwedge X) = \bigcup_{\psi \in X} p_{\Gamma}(\psi).
\nonumber
\end{eqnarray}
\end{theorem}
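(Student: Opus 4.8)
The entire theorem is the De Morgan dual of the results already established for allocations of support, and the plan is simply to transport each item through the duality relation (\ref{eq:PossSupDuality}), namely $p_{\Gamma}(\psi) = (s_{\Gamma}(\psi^{c}))^{c}$. Two features of the Boolean algebra $(\Psi;\leq)$ do all the work: complementation reverses the information order, so that $\phi \leq \psi$ if and only if $\psi^{c} \leq \phi^{c}$; and De Morgan turns meets into combinations (joins), so that $(\phi \wedge \psi)^{c} = \phi^{c} \cdot \psi^{c}$, $(\bigwedge_{i=1}^{\infty} \psi_{i})^{c} = \bigvee_{i=1}^{\infty} \psi_{i}^{c}$, and $(\bigwedge X)^{c} = \bigvee_{\psi \in X} \psi^{c}$. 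On the side of $\Omega$, complementation of subsets interchanges $\cap$ and $\cup$. Thus each claim below will be obtained by replacing every argument $\psi$ by $\psi^{c}$, invoking the corresponding statement about $s_{\Gamma}$ from Theorem \ref{th:ElPropAllSp} or Theorem \ref{th:ElPropAllSp2}, and complementing once more.

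For the finite items I would argue as follows. Item 1 is immediate: $p_{\Gamma}(0) = (s_{\Gamma}(0^{c}))^{c} = (s_{\Gamma}(1))^{c} = \Omega^{c} = \emptyset$, using $s_{\Gamma}(1) = \Omega$. Item 4 is the mirror image: if $\Gamma$ is normalised then $p_{\Gamma}(1) = (s_{\Gamma}(0))^{c} = \emptyset^{c} = \Omega$, using $s_{\Gamma}(0) = \emptyset$. For item 2, $\phi \leq \psi$ gives $\psi^{c} \leq \phi^{c}$, whence $s_{\Gamma}(\phi^{c}) \subseteq s_{\Gamma}(\psi^{c})$ by the monotonicity of $s_{\Gamma}$, and complementing yields $p_{\Gamma}(\psi) \subseteq p_{\Gamma}(\phi)$. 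Item 3 is the central algebraic step:
\begin{eqnarray*}
p_{\Gamma}(\phi \wedge \psi) &=& (s_{\Gamma}((\phi \wedge \psi)^{c}))^{c} = (s_{\Gamma}(\phi^{c} \cdot \psi^{c}))^{c} \\
&=& (s_{\Gamma}(\phi^{c}) \cap s_{\Gamma}(\psi^{c}))^{c} = p_{\Gamma}(\phi) \cup p_{\Gamma}(\psi),
\end{eqnarray*}
where the third equality is the combination property $s_{\Gamma}(\phi^{c} \cdot \psi^{c}) = s_{\Gamma}(\phi^{c}) \cap s_{\Gamma}(\psi^{c})$ of Theorem \ref{th:ElPropAllSp}.

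The two infinitary items follow the same pattern. For item 5, assuming $\Psi$ is a Boolean $\sigma$-algebra, De Morgan gives $(\bigwedge_{i=1}^{\infty} \psi_{i})^{c} = \bigvee_{i=1}^{\infty} \psi_{i}^{c}$, so that $p_{\Gamma}(\bigwedge_{i=1}^{\infty} \psi_{i}) = (s_{\Gamma}(\bigvee_{i=1}^{\infty} \psi_{i}^{c}))^{c}$; applying part (1) of Theorem \ref{th:ElPropAllSp2} to rewrite the support of a countable join as $\bigcap_{i=1}^{\infty} s_{\Gamma}(\psi_{i}^{c})$ and complementing gives $\bigcup_{i=1}^{\infty} p_{\Gamma}(\psi_{i})$. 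Item 6 is identical, with the countable index replaced by an arbitrary set $X$ and part (2) of Theorem \ref{th:ElPropAllSp2} used in place of part (1). There is no genuine obstacle here; the only points requiring attention are bookkeeping ones, namely keeping straight that combination in $\Psi$ is the join dual to the meet $\wedge$ appearing in the statement, and checking that the De Morgan laws are available at the right cardinality, which is exactly why the $\sigma$-algebra hypothesis is needed for item 5 and completeness for item 6. Since all assertions concern subsets of $\Omega$ rather than their probabilities, no measurability considerations enter.
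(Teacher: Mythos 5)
Your proposal is correct and follows exactly the paper's own argument: the paper proves items (1)–(4) from Theorem \ref{th:ElPropAllSp} together with the duality relation (\ref{eq:PossSupDuality}), and items (5)–(6) from Theorem \ref{th:ElPropAllSp2}, the same duality, and de Morgan laws. You have merely written out in full the computations that the paper leaves implicit, and each step checks out.
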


\begin{proof}
Items (1) to (4) follow immediately from Theorem \ref{th:ElPropAllSp} and the duality relation  (\ref{eq:PossSupDuality}). Items (5) and (6) follow similarly from Theorem \ref{th:ElPropAllSp2}, (\ref{eq:PossSupDuality}) and de Morgan laws.
\end{proof}

If $p_{\Gamma}(\psi)$ is measurable, the probability, that $\psi$ is not excluded by $\Gamma$, $pl_{\Gamma}(\psi) = P(p_{\Gamma}(\psi))$ is defined. This is called the \textit{degree of possibility} or \textit{plausibility} of $\psi$ under the random mapping $\Gamma$. Let $\mathcal{Z}_{\Gamma} = \{\psi \in \Psi:p_{\Gamma}(\psi) \in \mathcal{A}\}$ be the set of $\psi$ for which $p_{\Gamma}(\psi)$ is measurable. Recall that $\mathcal{E}_{\Gamma}$ is the set of elements of $\Phi$ for which $s_{\Gamma}(\psi)$ is measurable. Clearly, $\psi \in \mathcal{Z}_{\Gamma}$ implies $\psi^{c} \in \mathcal{E}_{\Gamma}$. According to Theorem \ref{th:MeasInfEl}, $\mathcal{E}_{\Gamma}$ is a join-semilattice, containing $1$. Thus $\mathcal{Z}_{\Gamma}$ is a meet-semilattice, containing $0$. Let's fix this result in the following theorem.

\begin{theorem}
If $(\Phi,\leq)$ is a Boolean lattice, $\Gamma$ a random mapping into $\Phi$, then $\mathcal{Z}_{\Gamma}$ is a meet-subsemilattice of $\Phi$ containing $0$. If $\Gamma$ is normalised, then $1$ belongs to $\mathcal{Z}_{\Gamma}$ too. If $(\Phi,\leq)$ is a Boolean $\sigma$-algebra, then $\mathcal{Z}_{\Gamma}$ is a $\sigma$-semilattice.
\end{theorem}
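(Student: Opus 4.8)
The plan is to reduce the entire statement to the already-established structure of $\mathcal{E}_{\Gamma}$ in Theorem \ref{th:MeasInfEl}, transporting it along the complementation map $c : \psi \mapsto \psi^{c}$, which is an order-reversing involution of the Boolean algebra $(\Psi;\leq)$ interchanging joins with meets (de Morgan) and the bounds $0$ and $1$. The only genuine ingredient is the duality relation (\ref{eq:PossSupDuality}), $p_{\Gamma}(\psi) = (s_{\Gamma}(\psi^{c}))^{c}$, together with the fact that $\mathcal{A}$, being a $\sigma$-algebra, is closed under complementation.

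First I would establish the biconditional
\begin{eqnarray}
\psi \in \mathcal{Z}_{\Gamma} \iff \psi^{c} \in \mathcal{E}_{\Gamma}.
\nonumber
\end{eqnarray}
By definition $\psi \in \mathcal{Z}_{\Gamma}$ means $p_{\Gamma}(\psi) \in \mathcal{A}$. By (\ref{eq:PossSupDuality}) we have $p_{\Gamma}(\psi) = (s_{\Gamma}(\psi^{c}))^{c}$, and since $\mathcal{A}$ is closed under complementation, $p_{\Gamma}(\psi) \in \mathcal{A}$ holds if and only if $s_{\Gamma}(\psi^{c}) \in \mathcal{A}$, i.e. if and only if $\psi^{c} \in \mathcal{E}_{\Gamma}$. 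Thus $c$ restricts to a bijection between $\mathcal{E}_{\Gamma}$ and $\mathcal{Z}_{\Gamma}$.

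Next I would transfer each structural fact. By Theorem \ref{th:MeasInfEl}, $1 \in \mathcal{E}_{\Gamma}$, whence $0 = 1^{c} \in \mathcal{Z}_{\Gamma}$. For closure under meets, take $\phi, \psi \in \mathcal{Z}_{\Gamma}$; then $\phi^{c}, \psi^{c} \in \mathcal{E}_{\Gamma}$, and since $\mathcal{E}_{\Gamma}$ is a join-subsemilattice, $\phi^{c} \vee \psi^{c} \in \mathcal{E}_{\Gamma}$, so by the biconditional and de Morgan, $\phi \wedge \psi = (\phi^{c} \vee \psi^{c})^{c} \in \mathcal{Z}_{\Gamma}$. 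This proves $\mathcal{Z}_{\Gamma}$ is a meet-subsemilattice containing $0$. If $\Gamma$ is normalised, Theorem \ref{th:MeasInfEl} gives $0 \in \mathcal{E}_{\Gamma}$, hence $1 = 0^{c} \in \mathcal{Z}_{\Gamma}$. Finally, if $(\Psi;\leq)$ is a Boolean $\sigma$-algebra (so $\Psi$ is in particular a $\sigma$-semilattice), then $\mathcal{E}_{\Gamma}$ is a $\sigma$-subsemilattice; for $\psi_{1},\psi_{2},\ldots \in \mathcal{Z}_{\Gamma}$ we get $\psi_{i}^{c} \in \mathcal{E}_{\Gamma}$, hence $\bigvee_{i=1}^{\infty} \psi_{i}^{c} \in \mathcal{E}_{\Gamma}$, and by de Morgan in the complete/$\sigma$-complete setting $\bigwedge_{i=1}^{\infty} \psi_{i} = (\bigvee_{i=1}^{\infty} \psi_{i}^{c})^{c} \in \mathcal{Z}_{\Gamma}$, so $\mathcal{Z}_{\Gamma}$ is closed under countable meets.

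There is essentially no obstacle here: the content is entirely contained in the duality (\ref{eq:PossSupDuality}) and in Theorem \ref{th:MeasInfEl}, and everything else is the routine dualisation via $c$. The only point meriting a word of care is the bookkeeping of which semilattice operation is meant, namely that the ``$\sigma$-semilattice'' asserted for $\mathcal{Z}_{\Gamma}$ is the meet-dual (closure under countable meets), matching the join $\sigma$-semilattice of $\mathcal{E}_{\Gamma}$, and that de Morgan's law for countable families is available precisely because $\Psi$ is assumed to be a Boolean $\sigma$-algebra.
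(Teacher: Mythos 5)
Your proof is correct and follows essentially the same route as the paper, which establishes the theorem in the brief remark preceding its statement: the duality $p_{\Gamma}(\psi) = (s_{\Gamma}(\psi^{c}))^{c}$ from (\ref{eq:PossSupDuality}) together with closure of $\mathcal{A}$ under complementation reduces everything to Theorem \ref{th:MeasInfEl} via the involution $\psi \mapsto \psi^{c}$. Your write-up is merely more explicit (in particular in stating the biconditional and the countable de Morgan step), which is a welcome clarification of the paper's terse argument.
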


Note that
\begin{eqnarray} \label{eq:PlDual}
pl_{\Gamma}(\psi) = P(p_{\Gamma}(\psi)) = P((s_{\Gamma}(\psi^{c})^{c}) = 1 - sp_{\Gamma}(\psi^{c}).
\end{eqnarray}
This is a second duality relation between support and plausibility in a Boolean algebra.

The function $pl_{\Gamma} : \mathcal{Z}_{\Gamma} \rightarrow [0,1]$ is called the \textit{plausibility function} associated with the random mapping $\Gamma$. Just as the support function $sp_{\Gamma}$ can be extended from $\mathcal{E}_{\Gamma}$ to $\Psi$ by defining $sp_{\Gamma} = \mu \circ \rho_{\Gamma}$, where $(\mu,\mathcal{B})$ is the probability algebra associated to the probability space $(\Omega,\mathcal{A},P)$ and $\rho_{\Gamma} = \rho_{0} \circ s_{\Gamma}$ the allocation of probability associated with $\Gamma$, we may extend $pl_{\Gamma}$ in a similar way, see Section \ref{subsec:RanMaps}. This is done with the help of $\xi_{0}$ as defined by (see (\ref{eq:xi0}),
\begin{eqnarray}
\xi_{0}(H) = (\rho_{0}(H^{c}))^{c} = \bigwedge \{[A]: A \supseteq H,A \in \mathcal{A}\},
\nonumber
\end{eqnarray}
and $\xi_{\Gamma} = \xi_{0} \circ p_{\Gamma}$ and $pl_{\Gamma} = \mu \circ \xi_{\Gamma}$ (see Section \ref{subsec:RanMaps}). Then we obtain
\begin{eqnarray}
\lefteqn{pl_{\Gamma}(\psi)= \mu(\xi_{\Gamma}(\psi)) = \mu(\xi_{0}(p_{\Gamma}(\psi))) = \mu((\rho_{0}((p_{\Gamma}(\psi))^{c}))^{c})}
\nonumber\\
&& = \mu((\rho_{0}(s_{\Gamma}(\psi^{c})))^{c})
= \mu((\rho_{\Gamma}(\psi^{c}))^{c}) = 1 - sp_{\Gamma}(\psi^{c}).
\nonumber
\end{eqnarray}
So, the extension $pl_{\Gamma} = \mu \circ \xi_{\Gamma}$ of the plausibility function to $\Phi$ preserves the duality relation to the support function. Further, we have seen in Section \ref{subsec:RanMaps} that $p_{\Gamma}(\psi) = P^{*}(p_{\Gamma}(\psi))$, where $P^{*}$ is the outer probability measure of $P$.

In the present case of a Boolean algebra $\Psi$, we note that
\begin{eqnarray}
\xi_{\Gamma}(\psi) = \xi_{0}(p_{\Gamma}(\psi)) = \xi_{0}((s_{\Gamma}(\psi^{c}))^{c}) = (\rho_{0}(s_{\Gamma}(\psi^{c})))^{c}
= (\rho_{\Gamma}(\psi^{c}))^{c}.
\nonumber
\end{eqnarray}
Here we have a third duality relation, which implies immediately, that $\xi(0) = \bot$ and $\xi(\psi \wedge \psi) = \xi(\psi) \vee \xi(\psi)$. A function from $\Phi$ to $\mathcal{B}$ with these two properties is called an \textit{allowment of probability} \cite{shafer79}.

\begin{definition} \textbf{Allowment of probability.}
If $(\Phi;\leq)$ is a Boolean algebra, $(\mu,\mathcal{B})$ a probability algebra, then an allowment of probability is a mapping $\xi : \Phi \rightarrow \mathcal{B}$ such that
\begin{enumerate}
\item $\xi(0) = \bot$,
\item $\xi(\psi \wedge \psi) = \xi(\psi) \vee \xi(\psi)$.
\end{enumerate}
If furthermore, $\xi(1) = \top$ holds, then the allowment is called normalised.
\end{definition}

To any allocation of probability $\rho : \Phi \rightarrow \mathcal{B}$ we associate an allowment of probability $\xi : \Phi \rightarrow \mathcal{B}$ defined by 
\begin{eqnarray} \label{eq:AllowDual}
\xi(\psi) = (\rho(\psi^{c}))^{c}
\end{eqnarray}
and vice versa to any allowment of probability $\xi$, an allocation of probability $\rho$, defined by $\rho(\psi) = (\xi(\psi^{c}))^{c}$ is associated.

In order to exploit this duality we consider the \textit{dual Boolean algebra} $(\Phi^{op};\leq_{op})$ of $(\Phi,\leq)$, with inverse order $\leq_{op}$ and the corresponding dual meet $\wedge_{op}$ and join $\vee_{op}$, so that
\begin{eqnarray}
\psi \leq_{op} \phi &\textrm{ if and only if}& \phi \leq \psi,
\nonumber \\
\phi \vee_{op} \psi &=& \phi \wedge \psi = (\phi^{c} \vee \psi^{c})^{c},
\nonumber \\
\phi \wedge_{op} \psi &=& \phi \vee \psi = (\phi^{c} \wedge \psi^{c})^{c},
\nonumber \\
0_{op} &=& 1, 
\nonumber \\
1_{op} &=& 0.
\nonumber
\end{eqnarray}
To any extraction operator $\epsilon_x$ for $x \in D$ we associate a mapping $\epsilon_x^{op} : \Phi^{op} \rightarrow \Phi^{op}$ defined by
\begin{eqnarray} \label{eq:DualExtr}
\epsilon_x^{op}(\psi) = (\epsilon_x(\psi^{c}))^{c}.
\nonumber
\end{eqnarray}

If we interpret dual join $\vee_{op}$ as (dual) combination $\cdot_{op}$ and the maps $\epsilon_x^{op}$ as (dual) extraction, then it turns out, that $(\Phi^{op},\cdot_{op},0_{op},1_{op};E^{op})$ with $E^{op} = \{\epsilon_x^{op}:x \in Q\}$ is in fact still a Boolean information algebra. 

For later reference let's also consider the dual of an \textit{compact} Boolean information algebra with finite elements $\Phi_{f}$. Then $(\Phi,\leq)$ is a complete lattice and, therefore, $(\Phi,\leq_{op})$ is a complete lattice too. Define the set
\begin{eqnarray}
\Phi_{cf} = \{\psi:\psi^{c} \in \Phi_{f}\}
\end{eqnarray}
whose elements are called \textit{cofinite}. Density in $\Phi$ leads by de Morgan laws to
\begin{eqnarray}
\phi = \bigvee_{op} \{\psi \in \Phi_{cf}:\psi \leq_{op} \phi\}.
\nonumber
\end{eqnarray}
Similary, strong density implies
\begin{eqnarray}
\epsilon_x^{op}(\phi) = \bigvee_{op} \{\psi \in \Phi_{cf}:\psi = \epsilon_x^{op}(\psi) \leq_{op} \phi\}.
\nonumber
\end{eqnarray}
Thus, the dual information algebra $\Phi^{op}$ is also \textit{compact} and the cofinite elements of $(\Phi,\leq)$ are its finite elements..

As an example consider multivariate algebras.

\begin{example} \textbf{Dual Set Algebras}
Let $\Phi$ be a multivariate set algebra (see Section \ref{sec:SetAlg}) in a set $\Omega_{I}$, where $I$ is an index set and 
\begin{eqnarray}
\Omega_{I} = \prod_{i \in I} \Omega_{i}
\nonumber
\end{eqnarray}
and $\Omega_{i}$ are sets of possible values for variables $X_{i}$, $i \in I$. Elements of $\Phi$ are subsets of $\Omega_{I}$. This is a Boolean information algebra where join is intersection, meet is union. The (finite) subsets $s$ of $I$ form the lattice $Q$ and extraction relative to $s \in Q$ is defined as $s$-saturation, that is as saturation relative to the partition of $\Omega_I$ induced by the subset $s$ of the index set $I$. In the dual information algebra $\Phi$ join is union, meet intersection. Dual extraction is defined according to (\ref{eq:DualExtr}) by $\sigma_s^{op}(S) = (\sigma_s(S^{c}))^{c}$, for any subset $S$ of $\Omega_I$.

The algebra $\Phi,$ is \textit{compact}, its finite elements are the \textit{cofinite} sets of $\Omega_I$, that is the complements of finite subsets of $\Omega_I$. The cofinite elements of $\Psi$, that is the finite elements of $\Phi^{op}$, are the \textit{finite} subsets of $\Omega_I$. \end{example}

Now we have the means to exploit duality between allocations and allowments of probability (\ref{eq:AllowDual}) and between degrees of support and plausibility (\ref{eq:PlDual}). Let $\rho : \Phi \rightarrow \mathcal{B}$ be an allocation of probability to a Boolean information algebra $(\Phi,\cdot,0,1;E)$ relative to a probability algebra $(\mu,\mathcal{B})$. The corresponding allowment of probability $\xi$, defined by (\ref{eq:AllowDual}) can be seen as a mapping $\xi : \Phi^{op} \rightarrow \mathcal{B}^{op}$ between the dual Boolean algebras of $\Phi$ and $\mathcal{B}$. Then, in this view, $\xi$ is an \textit{allocation of probability} in $\Phi^{op}$, that is
\begin{enumerate}
\item $\xi(0_{op}) = \top_{op}$,
\item $\xi(\phi \vee_{op} \psi) = \xi(\phi) \wedge_{op} \xi(\psi)$.
\end{enumerate}
As a consequence, as allocations of probability, the $\xi$ form an information algebra $A_{\Phi^{op}}$ (see Section \ref{subsec:AllocProb}). Let's denote combination by $\vee_{op}$, such that according to (\ref{eq:CombOfAoP})
\begin{eqnarray}
\lefteqn{(\xi_{1} \vee_{op} \xi_{2})(\psi) 
= \bigvee_{op} \{\xi_{1}(\psi_{1} \wedge_{op} \xi_{2}(\psi_{2}):\psi \leq_{op} \psi_{1} \vee_{op} \psi_{2}\}}
\nonumber \\
&=& \bigwedge \{\xi_{1}(\psi_{1} \vee \xi_{2}(\psi_{2}):\psi \geq \psi_{1} \wedge \psi_{2}\}.
\nonumber
\end{eqnarray}
Similarily, for extraction, we obtain, using (\ref{eq:ExtractAoP}),
\begin{eqnarray}
\lefteqn{\epsilon_x^{op}(\xi)(\phi) = \bigvee_{op}\{\xi(\psi):\psi = \epsilon_x^{op}(\psi) \geq_{op}\phi\}}
\nonumber \\
&=&\bigwedge\{\xi(\psi):\psi = x(\psi) \leq \phi\}.
\nonumber
\end{eqnarray}
Clearly, by the map $\rho \mapsto \xi$, defined by $\xi(\psi) = \rho(\psi^{c})^{c}$, is an isomorphism between information algebras. 

We write $\xi_{1} \leq_{op} \xi_{2}$ if $\xi_{1} \vee_{op} \xi_{2} = \xi_{2}$. Then, $\xi_{1} \leq_{op} \xi_{2}$ if and only if $\xi_{1}(\psi) \leq_{op} \xi_{2}(\psi)$ for all $\psi \in \Phi^{op}$. If we look at this relative to the original algebra $(\Phi,D;\leq,\bot,\cdot,\epsilon)$, then it is convenient to write $\xi_{1} \wedge \xi_{2} = \xi_{1} \vee_{op} \xi_{2}$ and hence $\xi_{1} \geq \xi_{2}$ if $\xi_{1} \leq_{op} \xi_{2}$. Finally, we write simply $\epsilon_x(\xi)$ instead of $\epsilon_x^{op}(\xi)$ for $x \in D$. In the following we shall use this convention.

Next, we use the duality relation (\ref{eq:AllowDual}) to translate results relating random mappings to allocations of probability obtained in Section \ref{subsec:AllocProb} to allowments of probability. Here is a list of such results, which can be easily obtained by (\ref{eq:AllowDual}) and de Morgan laws:
\begin{enumerate}
\item If $\Delta_{1}$, $\Delta_{2}$ and $\Delta$ are simple random variables, then by (\ref{eq:RhoIsHomom})
\begin{eqnarray}
\xi_{\Delta_{1} \cdot \Delta_{2}} &=& \xi_{\Delta_{1}} \wedge \xi_{\Delta_{2}},
\nonumber \\
\xi_{\epsilon_x(\Delta)} &=& \epsilon_x(\xi_{\Delta}).
\nonumber
\end{eqnarray}
\item If $\Gamma$ is a  random variable, then (Theorem \ref{th:IdOfAoPs})
\begin{eqnarray}
\xi_{\Gamma} = \bigwedge \{\xi_{\Delta}:\Delta \leq \Gamma\}.
\nonumber
\end{eqnarray}
Here, $\Delta$ denote as usual simple random variables.
\item if $\Gamma_{1}$, $\Gamma_{2}$ and $\Gamma$ are random variables, then (Theorem \ref{th:HomoOfGenRanVar})
\begin{eqnarray}
\xi_{\Gamma_{1} \cdot \Gamma_{2}} &=& \xi_{\Gamma_{1}} \wedge \xi_{\Gamma_{2}},  \\
\nonumber
\xi_{\epsilon_x(\Gamma)} &=& \epsilon_x(\xi_\Gamma) .
\nonumber
\end{eqnarray}
\item If $\Gamma$ is a random variable, $\Phi$ a compact Boolean information algebra, $X \subseteq \Phi$ a downwards directed set, then (Theorem \ref{thCondOfGenRV})
\begin{eqnarray}
\xi_{\Gamma}(\bigwedge X) = \bigvee_{\psi \in X} \xi_{\Gamma}(\psi).
\nonumber
\end{eqnarray}
\item Suppose $\Phi$ is a compact information algebra and $\Gamma_i \in \mathcal{R}_\sigma$ for $i = 1,2,\ldots$, then (Theorem \ref{th:SigmaHomom})
\begin{eqnarray}
\xi_{\bigvee_{i=1}^{\infty} \Gamma_{i}} = \bigwedge_{i=1}^{\infty} \xi_{\Gamma_{i}}.
\nonumber
\end{eqnarray}
\item If $\Gamma_{i}$ form a montone sequence random variables $\Gamma_{1} \leq \Gamma_{2} \leq \ldots$, $(\Psi,D;\leq,\bot,\cdot,\epsilon)$ in an algebraic Boolean information algebra, then (Theorem \ref{th:ContOfExtrAoP})
\begin{eqnarray}
\epsilon_x(\bigwedge_{i=1}^{\infty} \xi_{\Gamma_{i}})= \bigwedge_{i=1}^{\infty} \epsilon_x(\xi_{\Gamma_{i}}).
\nonumber
\end{eqnarray}
\end{enumerate}

Now we turn to plausibility and exploit duality relation (\ref{eq:PlDual}) to derive results on degrees of plausibility from support functions.  If $\Gamma$ is a random map, mapping a probability space into an information algebra $(\Phi,\cdot,0,1;E)$ (or its ideal completion), then recall that its support function is defined by $sp_{\Gamma} = \mu \circ \rho_{\Gamma}$, where $\rho_{\Gamma} = \rho_{0} \circ s_{\Gamma}$ and $(\mathcal{B},\mu)$ is the probability algebra associated with the probability space (see Section \ref{subsec:RanMaps}). Similarly, the associated degrees of plausibility $pl_{\Gamma}$, related to $sp_{\Gamma}$ by the duality relation (\ref{eq:PlDual}), is given by $pl_{\Gamma}
= \mu \circ \xi_{\Gamma}$. where $\xi_{\Gamma} = \xi_{0} \circ p_{\Gamma}$. And $\rho_{\Gamma}$ and $\xi_{\Gamma}$ are related by the duality relation (\ref{eq:AllowDual}).

Here follows a list of results on plausibility, derived from corresponding results on support function via the duality relation (\ref{eq:PlDual}):
\begin{enumerate}
\item Let $\Gamma$ be a random mapping, then (Theorem  \ref{th:SpFctProp})
\begin{enumerate}
\item $pl_{\Gamma}(0) = 0$.
\item If $\psi_{1},\ldots,\psi_{m} \leq \psi$, $\psi_{1},\ldots,\psi_{m},\psi \in \mathcal{Z}_{\Gamma}$, 
\begin{eqnarray} \label{eq:MonOrderInf}
pl_{\Gamma}(\psi) \leq \sum_{\emptyset \not= I \subseteq \{1,\ldots,m\}} (-1)^{\vert I \vert + 1} pl_{\Gamma}(\wedge_{i \in I} \psi_{i}).
\end{eqnarray}
\item If $\mathcal{Z}_{\Gamma}$ is a $\sigma$-meet semilattice, and if $\psi_{1} \geq \psi_{2} \geq \ldots \in \mathcal{Z}_{\Gamma}$, then
\begin{eqnarray} \label{eq:ContSupFct}
pl_{\Gamma}(\bigwedge_{i=1}^{\infty} \psi_{i}) = \lim_{i \rightarrow \infty} pl_{\Gamma}(\psi_{i}).
\end{eqnarray}
\item If $\Gamma$ is normalised, then $pl_{\Gamma}(1) = 1$.
\end{enumerate}
\item If $(\mathcal{B},\mu)$ is a probability algebra and $\xi:\Phi \rightarrow \mathcal{B}$ is an allowment of probability and $pl = \mu \circ \xi$, then (Theorem \ref{th:SpFctPropAoP}) 
\begin{enumerate}
\item $pl$ satisfies properties (a) and (b) of item 1 above.
\item If $\Phi$ is a $\sigma$-meet-semilattice and if for all $\psi_1,\psi_2,\ldots$, we have $\xi(\bigwedge_{i=1}^\infty \psi_1) = \bigvee_{i=1}^\infty \xi(\psi_i)$,
then (c) of item 1 above holds.
\item If $\Phi$ is a complete lattice and if for any downwards directed set $X \subseteq \Psi$
\begin{eqnarray}
\xi(\bigwedge X) = \bigvee_{\psi \in X} \xi(\psi)
\nonumber
\end{eqnarray}
holds, then
\begin{eqnarray} \label{eq:CondSupFct0}
pl(\bigwedge X) = \sup_{\psi \in X} pl(\psi).
\end{eqnarray}
\end{enumerate}
\item If $\Gamma$ is a random variable, $\Phi$ a compact Boolean information algebra, $pl_{\Gamma} = \mu \circ \xi_{\Gamma}$, $\xi_{\Gamma} = \xi_{0} \circ p_{\Gamma}$, then (Theorem \ref{th:SpFctOfGenRV})
\begin{eqnarray} 
pl_{\Gamma}(\psi) = \sup\{pl_{\Gamma}(\phi):\psi \in \Psi_{cf},\phi \geq \psi\}.
\nonumber
\end{eqnarray}
Furthermore, if $X \subseteq \Phi$ is downwards directed, then
\begin{eqnarray} 
pl_{\Gamma}(\bigwedge X) = \sup_{\psi \in X}pl_{\Gamma}(\psi).
\nonumber
\end{eqnarray}
\item Let $\sigma(\Phi)$ be the $\sigma$-extension of the Boolean information algebra $\Phi$, $\Gamma$ a random variable, that is, $\Gamma = \bigvee_{i=1}^{\infty} \Delta_{i}$, where $\Delta_{i}$ is a monotone increasing sequences of simple random variables with values in $\Phi$, then for all $\psi \in \Phi$ (Theorem \ref{th:AllElMeas})
\begin{eqnarray}
pl_{\Gamma}(\psi) = \lim_{i \rightarrow \infty} pl_{\Delta_{i}}(\psi).
\nonumber
\end{eqnarray}
\item If $\Gamma$ is a random variable, then for all $\psi \in \Phi$ (Corollary \ref{cor:ApproxDegOfSupSimRV})
\begin{eqnarray}
pl_{\Gamma}(\psi) = \inf\{pl_{\Delta}(\psi):\Delta \leq \Gamma\},
\nonumber
\end{eqnarray}
where $\Delta$ as usual are simple random variables.
\end{enumerate}

These results allow to give a dual version of Definition \ref{def:SpFct}, now regarding plausibility functions:

\begin{definition} \label{def:PlFct}
Let $\mathcal{Z}$ be a meet-semilattice with a top element $0$. Then a function $pl:\mathcal{Z} \rightarrow $[0,1] satisfying (1) and (2) below is called a plausibility function on $\mathcal{Z}$:
\begin{enumerate}
\item $pl(0) = 0$.
\item If $\psi_{1},\ldots,\psi_{m} \leq \psi$, $\psi_{1},\ldots,\psi_{m},\psi \in \mathcal{Z}$ for $m = 1,2,\ldots$
\begin{eqnarray} \label{eq:AltOrderInf1}
pl(\psi) \leq \sum_{\emptyset \not= I \subseteq \{1,\ldots,m\}} (-1)^{\vert I \vert + 1} pl(\wedge_{i \in I} \psi_{i}).
\end{eqnarray}
\item If in addition $\mathcal{Z}$ is closed under countable meets, and  for any montone sequence $\psi_{1} \geq \psi_{2} \geq \cdots$ the condition
\begin{eqnarray} \label{eq:ContOfPlFct}
pl(\bigwedge_{i=1}^{\infty} \psi_{i}) = \lim_{i \rightarrow \infty} pl(\psi_{i})
\end{eqnarray}
holds, then $pl$ is called a continuous plausibility function of $\mathcal{Z}$.
\item If further $\mathcal{Z}$ is a complete meet-semilattice and for any downwards directed set $X \subseteq \mathcal{Z}$,
\begin{eqnarray} \label{eqCondOfPlFct}
pl(\bigwedge X) = \sup_{\psi \in X} pl(\psi)
\end{eqnarray}
holds, then $pl$ is called a condensable plausibility function on $\mathcal{Z}$.
\end{enumerate}
\end{definition}
A function satisfying (2) above is also called \textit{alternating of order} $\infty$ \cite{choquet53}. Thus, the degrees of plausibility of any random mapping $\Gamma$ form a plausibility function. If $\Gamma$ is a random variable in an algebraic Booolean information algebra, then $pl_{\Gamma}$ is condensable, and if $\Gamma$ is a proper random variable, then $pl_{\Gamma}$ is continuous.

Given a plausibility function $pl$ on a meet-semilattice $\mathcal{Z} \subseteq \Phi$, where $\Psi$ is a Boolean information algebra, the function $sp(\psi) = 1 - pl(\psi^{c})$ is a support function on a join-semilattice $\mathcal{E} \subseteq \Phi$. Based on this remark we conclude that there is a random mapping generating $sp$, hence $pl$. In fact, the canonical random mapping $\nu$ (see Section \ref{subsec:CanRandMap}) generates the plausibility function $pl_{\nu}(\psi) = 1 - sp_{\nu}(\psi^{c})$ on $\Phi$, which is the \textit{maximal} extension of $pl$ from $\mathcal{Z}$ to $\Phi$. If the Boolean information algebra $\Phi$ is compact, the random mapping $\sigma$ (\ref{eq:SigmaRandMap}) generates the maximal \textit{continuous} extension $pl_{\sigma}(\psi) = 1 - sp_{\sigma}(\psi^{c})$ (see Theorem \ref{th:MinContExt}). And the random mapping $\gamma$ (\ref{eq:GammaRandMap}) generates according to (\ref{eq:GammaCondExt}) a \textit{condensable} plausibility function (Theorem \ref{th:SpFctPropAoP1}). This concludes the duality discussion between support and plausibility in Boolean information algebras.

%%%%%%%%%%%%%%%%%%%%%%%%%%%%%%%%%%%%%%%%%%%%%%%%%%%%%%%%%%%%

%%%%%%%%%%%%%%%%%%%%%%%%%%%%%%%%%%%%%%%%%%%%%%%%%%%%%%%%%%%%

\chapter{Probabilistic Information} \label{sec:ProbabInf}

\section{Gambles} \label{subsec:Gambles}

A particular form of uncertain information is probabilistic information, defined by a probability measure over some set of possibilities. The most popular form of this kind of information is given by a Bayesian network, where a multivariate discrete probability distribution is factorized into a product of prior and conditional distributions \cite{henoyshafer90}. It is well-known that associated  with this concept are valuation algebras, a kind of non-idempotent information algebras \cite{shenoyshafer90,kohlas03}. A more general form of probabilistic information has been proposed in \cite{walley91}. This theory of imprecise rpobability is based on the concept of desirable gambles and the derived notion of lower and upper previsions. It has been shown, that there are again information algebras associated with this model of probabilistic information \cite{CasaKohECSQARU,CasaKohIIJAR,CasaKohl22}. This is the subject of the section.

Consider a set $\Theta$ of possible worlds. A gamble over this set is a bounded function
\begin{eqnarray*}
f : \Theta \rightarrow \mathbb{R}.
\end{eqnarray*}
Let $\mathcal{L}(\Theta)$ be the set of all gambles over $\Theta$ and $\mathcal{L}^+(\Theta)$ the subset of non-vanishing, non-negative functions $f(\theta) \geq 0$, $f \not= 0$. A coherent set of (desirable) gambles over $\Theta$ is a subset $D$ of $\mathcal{L}(\Theta)$ siuch that
\begin{enumerate}
\item $\mathcal{L^+}(\Theta) \subseteq D$,
\item $0 \not\in D$,
\item $f,g \in D$ implies $f + g \in D$,
\item $f \in D$, and $\lambda > 0$ implies $\lambda \cdot f \in D$.
\end{enumerate}
So, $D$ is a convex cone. The idea is that gambles in $\mathcal{L}^+$ which guarantee a sure gain are desirable and positive multiples of a desirable gambles as well as the sum of two (or more) desirable gambles are also desirable. And the null gamble is not desirable. This may be questionable and in fact there are a number of different concepts of coherence, see below and \cite{walley91}. 

If $D'$ is any subset of $\mathcal{L}(\Theta)$, then 
\begin{eqnarray*}
\mathcal{E}(D') = posi(\mathcal{L}^+(\Theta) \cup D'),
\end{eqnarray*}
is called the natural extension of a set of gambles, where $posi(D)$ denotes all finite positive linear combinations $\lambda_1 f_1 + \ldots + \lambda_n f_n$,  $\lambda_i > 0$ of elements $f_1,\ldots,f_n$ of $D$. The natural extension of a set of gambles $\mathcal{E}(D)$ is coherent if and only if it $0 \not\in \mathcal{E}(D)$. Coherent sets are closed under intersection, that is they form a topless $\cap$-structure, \cite{daveypriestley97}. By standard order theory, coherent sets of gambles are ordered by inclusion, intersection is meet in this order and coherent sets of gambles  $D_i$ have a supremum or join if they have an upper bound among coherent sets,
\begin{eqnarray*}
\bigvee_{i \in I} D_i = \bigcap \{D \textrm{ coherent}: D \subseteq \bigcup_{i \in I} D_i\}.
\end{eqnarray*}
Also, $\mathcal{E}(D')$ is the smallest coherent set containing $D'$, if $\mathcal{E}(D')$ is coherent,
\begin{eqnarray*}
\mathcal{E}(D') = \bigcap \{D \textrm{ coherent}: D' \subseteq D\},
\end{eqnarray*}
so that 
\begin{eqnarray*}
\bigvee_{i \in I} D_i = \mathcal{E}(\bigcup_{i \in I} D_i)
\end{eqnarray*}
if $\mathcal{E}(\bigcup_{i \in I} D_i)$ is coherent. Let $\mathcal{C}(\Theta)$ be the family of coherent sets of gambles on $\Theta$. 

In view of the following development, it is convenient to add $\mathcal{L}(\Theta)$ to $\mathcal{C}(\Theta)$ and let $\Phi = \mathcal{C}(\Theta) \cup \{\mathcal{L}(\Theta)\}$. The family of sets in $\Phi$ is still a $\cap$-structure, but now a topped one. So, again by standard results of order theory, $\Phi$ is a complete lattice under inclusion, meet is intersection and join is defined for any family of sets $D_i \in \Phi$ as
\begin{eqnarray*}
\bigvee_{i \in I} D_i = \bigcap \{D \in \Phi: \bigcup_{i \in I} D_i \subseteq D\}.
\end{eqnarray*}
Note that, if the family of coherent sets $D_i$ has no upper bound in $\mathcal{C}$, then its join is simply $\mathcal{L}(\Theta)$. In this topped $\cap$-structure, 
\begin{eqnarray*}
\mathcal{C}(D') = \bigcap \{D \in \Phi: D' \subseteq D\}
\end{eqnarray*}
is a closure (or consequence) operator on the subsets of gambles, that is, $\mathcal{C}$ satisfies the following properties:
\begin{enumerate}
\item $D \subseteq \mathcal{C}(D)$
\item $D \subseteq D'$ implies $\mathcal{C}(D) \subseteq \mathcal{C}(D')$
\item $\mathcal{C}(\mathcal{C}(D)) = \mathcal{C}(D)$.
\end{enumerate}
For further reference, we prove the following well-know result for closure operators.

\begin{lemma} \label{le:UnionClosSets}
For any set of gambles,
\begin{eqnarray*}
\mathcal{C}(\mathcal{C}(D_1) \cup D_2) = \mathcal{C}(D_1 \cup D_2).
\end{eqnarray*}
\end{lemma}

\begin{proof}
Since $D_1 \cup D_2 \subseteq \mathcal{C}(D_1) \cup D_2$ we have $\mathcal{C}(\mathcal{C}(D_1) \cup D_2) \supseteq \mathcal{C}(D_1 \cup D_2)$. On the other hand $D_1,D_2 \subseteq D_1 \cup D_2$ so that $\mathcal{C}(D_1) \cup D_2 \subseteq \mathcal{C}(D_1 \cup D_2)$, thus $\mathcal{C}(\mathcal{C}(D_1) \cup D_2) \subseteq \mathcal{C}(D_1 \cup D_2)$. This proves equality.
\end{proof}

Note that $\mathcal{C}(D) = \mathcal{E}(D)$ if $0 \not\in \mathcal{E}(D)$, that is if $\mathcal{E}(D)$ is coherent. Otherwise we may have $\mathcal{E}(D) \not= \mathcal{L}(\Theta)$.  These results prepare the way below to an information algebra of coherent sets of gambles.  

A further important class of coherent sets of gambles are \textit{strictly desirable gambles} $D^+$. In addition to the conditions 1.) to 4.) above for coherence the following condition is added:
\begin{enumerate}
\item[5] $f \in D^+$ implies either $f \geq 0, f \not = 0$ or $f - \delta \in D^+$ for some $\delta > 0$.
\end{enumerate}
So, strictly desirable gambles are coherent, they form a subfamily $\Phi^+$ of coherent sets of gambles.

Another concept is given by  \textit{almost desiriable gambles}, satisfying the following conditions \cite{walley91}\begin{enumerate}
\item $f \in \bar{D}$ implies $\sup f \geq 0$,
\item $\inf f > 0$ implies $f \in \bar{D}$,
\item $f,g \in \bar{D}$ implies $f + g \in \bar{D}$,
\item $f \in \bar{D}$ and $\lambda > 0$ imply $\lambda \cdot f \in \bar{D}$,
\item $f + \delta \in \bar{D}$ for all $\delta > 0$ implies $f \in \bar{D}$.
\end{enumerate}
Such a set is no more coherent since it contains $f = 0$. But we remark that almost desirable sets of gambles again form a $\cap$-system, still topped by $\mathcal{L}(\Theta)$. Therefore, they form a complete lattice under inclusion too. So, we may define the natural extension of a set $D'$ to an almost desirable set of gambles as before as the smallest such set, containing $D'$, provided $D'$ is contained in an almost desirable set of gambles
\begin{eqnarray*}
\bar{\mathcal{C}}(D') = \bigcap \{\bar{D}:D' \subseteq \bar{D}\}.
\end{eqnarray*}
This  is still a closure operator on subsets of gambles.

So far we have considered sets of gambles in $\mathcal{L}(\Theta)$ relative to a fixed set of possibilities $\Theta$. As in set algebras, Section \ref{sec:SetAlg}, we consider a set of question $56t454565454545675treeo$k, each question $x \in Q$ represented by an equivalence relation $\theta \equiv_x \theta'$ on the set of possibilities $\Theta$. Recall that such an equivalence relation induces a partition $P_x$ of equivalence classes, and these partitions are ordered by $P_x \leq P_y$ if any block (equivalence class) of $P_y$ is contained in a block of $P_x$. A gamble $f$ which is constant on every block of a partition $P_x$, that is $f(\theta) =  f(\theta')$ if $\theta \equiv_x \theta'$, is called $x$-measurable. The subset of $x$-measurable gambles in $\mathcal{L}(\Theta)$ is denoted by $\mathcal{L}_x$.
 
We define now the operations of combination, capturing aggregation of pieces of belief, and extraction, describing filtering the part of information relative to a question $x$, among the augmented sets of coherent gambles $\Phi = \mathcal{C}(\Theta) \cup \{\mathcal{L}(\Theta)\}$ on $\Theta$ and for $x \in Q$. Combination is essentially union of  the sets of desirable gambles defining the two pieces of information, followed by closure. Extraction filters out the part of desirable gambles which are $x$-measurable by intersection with $\mathcal{L}_x$, again followed by closure.

\begin{enumerate}
\item Combination: $D_1 \cdot D_2 = \mathcal{C}(D_1 \cup D_2)$,
\item Extraction: $\epsilon_x(D) = \mathcal{C}(D \cap \mathcal{L}_x)$.
\end{enumerate}
Define $\mathcal{C}_x(D) = \mathcal{C}(D) \cap \mathcal{L}_x$ so that $\epsilon_x(D) = \mathcal{C}(\mathcal{C}_x(D))$ if $D$ is coherent. Note that $\mathcal{L}(\Theta)$ is the null element of combination since $\mathcal{C}(D_1 \cup D_2) = \mathcal{L}(\Theta)$ if $D_1 \cup D_2$ is not coherent, and $\mathcal{L}(\Theta)^+$ is the unit element of combination.  As usual, the null element signals contradiction, it destroys any other piece of information. The unit or neutral element represents vacuous information. It changes no other piece of information. To simplify notation we denote the null and unit element in the sequel by $0$ and $1$. Then $(\Phi,\cdot)$ is a commutative, idempotent semigroup with null and unit elements. The information order is defined by $D_1 \leq D_2$ if $D_1 \cdot D_2 = D_2$. Then $D_1 \leq D_2$ if and only if $D_1 \subseteq D_2$. In this order, the combination $D_1 \cdot D_2$ is the supremum or join of $D_1$ and $D_2$, since $\Phi$ is a lattice,
\begin{eqnarray*}
D_1 \cdot D_2 = D_1 \vee D_2.
\end{eqnarray*}
Note also that $\epsilon_x(D) \leq D$ and also $D_1 \leq D_2$ implies $\epsilon_x(D_1) \leq \epsilon_x(D_2)$.

We state and prove now the fundamental theorems about the extraction operator.

\begin{theorem} \label{th:ExQuant}
For all $D,D_1,D_2 \in \Phi$  and $x \in Q$ we have
\begin{enumerate}
\item $\epsilon_x(0) = 0$,
\item $\epsilon_x(D) \leq D$,
\item $\epsilon_x(\epsilon_x(D_1) \vee D_2) = \epsilon_x(D_1) \vee \epsilon_x(D_2)$.
\end{enumerate}
\end{theorem}

\begin{proof}
The first two items are obvious.

For item 3 define, using Lemma \ref{le:UnionClosSets},
\begin{eqnarray*}
A &=& \mathcal{C}_x(\mathcal{C}_x(D_1) \cup D_2) \cap \mathcal{L}_x =  \mathcal{C}((D_1 \cap \mathcal{L}_x) \cup D_2) \cap \mathcal{L}_x, \\
B &=&  \mathcal{C}( \mathcal{C}_x(D_1)) \cup  \mathcal{C}_x(D_2)) =  \mathcal{C}((D_1 \cap \mathcal{L}_x) \cup (D_2 \cap  \mathcal{L}_x)).
\end{eqnarray*}
Then $ \mathcal{C}(A) = \epsilon_x(\epsilon_x(D_1) \vee D_2)$ and $B =  \epsilon_x(D_1) \vee \epsilon_x(D_2)$. Obviously we have $B \subseteq  \mathcal{C}(A)$. We claim first that $\epsilon_x(D_1) \vee D_2 = 0$ if and only if $\epsilon_x(D_1) \vee \epsilon_x(D_2) = 0$. Indeed, if the latter equals $0$, so does the former. 

Conversely, $\epsilon_x(D_1) \cdot D_2 = 0$ means that $\mathcal{C}(\mathcal{C}(D_1 \cap \mathcal{L}_x) \cup D_2) = \mathcal{C}((D_1 \cap \mathcal{L}_x) \cup D_2) = \mathcal{L}$. If $D_1 = \mathcal{L}$ or $D_2 = \mathcal{L}$, then trivially $\epsilon_x(D_1) \cdot \epsilon_x(D_2) = \mathcal{L}$. Therefore assume that both $D_1$ and $D_2$ are coherent. Then $\mathcal{C}((D_1 \cap \mathcal{L}_x) \cup D_2) = \mathcal{L}$ implies $0 \in \mathcal{E}((D_1 \cap \mathcal{L}_x) \cup D_2)$ by definition of $\mathcal{C}$. So there are gambles $f \in D_1 \cap \mathcal{L}_x$ and $g \in D_2$ so that $0 = f + g$. Therefore $g = -f$ is $x$-measurable, since $f$ is so, hence $g \in D_2 \cap \mathcal{L}_x$. From this it follows that $0 \in \epsilon_x(D_1) \cdot \epsilon_x(D_2)$, hence $\epsilon_x(D_1) \cdot \epsilon_x(D_2) = \mathcal{L}$.

Assume now that $D_1 \vee D_2$ is coherent and consider a gamble $f \in A$. Then $f \in  \mathcal{L}_x$ and
\begin{eqnarray*}
f \geq \lambda g + \mu h, \quad g \in D_1 \cap \mathcal{L}_x, h \in D_2, \quad \lambda,\mu \geq 0, f \not = 0.
\end{eqnarray*}
So, we have $f = \lambda g + \mu h + h'$, where $h' \geq 0$. Since both $f$ and $g$ are $x$-measurable, $\mu h + h'$ must be $x$-measurable either. This means that $\mu h + h' \in D_2 \cap \mathcal{L}_x$, and therefore $f \in B$, hence $\mathcal{C}(A) = B$. This concludes the proof.
\end{proof}

Thus $\epsilon_x$ is an \textit{existential quantifier}. Item 3 can also be written as
\begin{eqnarray*}
\epsilon_x(\epsilon_x(D_1) \cdot D_2) = \epsilon_x(D_1) \cdot \epsilon_x(D_2)
\end{eqnarray*}
This shows that $(\Phi,\cdot,0,1;E)$ with $E = \{\epsilon_x:x \in Q\}$ is a domain-free information algebra. 

In this algebra, extraction commutes with intersection.

\begin{theorem} \label{th:CommExtrInt}
Let $D_j$ for $j \in J$ be any family of sets of gambles fro $\Phi$ and $x \in Q$. Then
\begin{eqnarray} \label{eq:ExtrCommMeet}
\epsilon_x(\bigcap_{j \in J} D_j) = \bigcap_{j \in J} \epsilon_x(D_j).
\end{eqnarray}
\end{theorem}

\begin{proof}
If all $D_j = \mathcal{L}(\Theta)$, then (\ref{eq:ExtrCommMeet}) holds trivially. Otherwise, eliminate all $D_j = \mathcal{L}(\Theta)$ from the family, so that we may assume that all elements $D_j$ are coherent sets of gambles. We have
\begin{eqnarray*}
\epsilon_x(\bigcap_{j \in J} D_j) &=& \mathcal{C}((\bigcap_{j \in J} D_j) \cap \mathcal{L}_x), \\
\bigcap_{j \in J} \epsilon_x(D_j) &=& \bigcap_{j \in J} \mathcal{C}(D_j \cap \mathcal{L}_x).
\end{eqnarray*}
Consider first a gamble $f$ in $\epsilon_x(\bigcap_{j \in J} D_j)$, so that $f = \lambda g + \mu h$, where $\lambda,\mu$ are nonnegative and not both equal to zero, and $g \in (\bigcap_{j \in J} D_j) \cap \mathcal{L}_x = \bigcap_j (D_j \cap \mathcal{L}_x) \subseteq \bigcap_j \mathcal{C}(D_j \cap \mathcal{L}_x)$ and $h \in \mathcal{L}^+(\Theta)$. Therefore we have $f \in \bigcap_{j \in J}(\epsilon_x(D_j))$. 

Conversely, consider a gamble $f \in \bigcap_{j \in J}(\epsilon_x(D_j)) $. If $f \in \mathcal{L}^+(\Theta)$, then $f \in \epsilon_x(\bigcap_{j \in J} D_j)$. Otherwise we have $f \geq g_j$ for some $g_j \in D_j \cap \mathcal{L}_x$ and this for all $j \in J$. Define
\begin{eqnarray*}
g(\theta) = \sup_{j \in J} g_j(\theta).
\end{eqnarray*}
Then $f \geq g$ and $g \in D_j$ for all $j$ and $g$ is $x$-measurable. Therefore we have $g \in (\cap_j D_j )\cap \mathcal{L}^+(\Theta)$, hence $f \in \epsilon_x(\bigcap_{j \in J} D_j)$.
\end{proof}

An information algebra like $\Phi$, where $(\Phi,\leq)$ is a lattice under information order and satisfies the condition of this theorem is called a \textit{lattice information algebra}.

What is the role of strictly desirable gambles in the information algebra of coherent sets of gambles? Here is the answer:

\begin{proposition} \label{prop:SubalgStr.des}
The family of strictly desirable gambles $\Phi^+$ is a subalgebra of the information algebra $\Phi$
\end{proposition}

\begin{proof}
Obviously, $\mathcal{L}$ and $\mathcal{L}^+$ belong to $\Phi^+$.

Consider then two sets of strictly desirable gambles $D_1^+$ and $D_2^+$ from $\Phi^+$. If $D_1^+ \cdot D_2^+ = \mathcal{L}$, then the combination belongs trivially to $\Phi^+$. Therefore assume $D_1^+ \cdot D_2^+$ to be coherent. Then, if $f \in D_1^+ \cdot D_2^+$, we have $f \geq g_1 + g_2$ with $g_1 \in D_1^+$ and $g_2 \in D_2^+$. If neither $g_1 \in \mathcal{L}^+$ nor $g_2 \in \mathcal{L}^+$, there are $\delta_1 > 0$ and $\delta_2 > 0$ such that $g_1 - \delta_1 \in D_1^+$ and $g_2 - \delta_2 \in D_2^+$. It follows that $f - \delta = (g_1 - \delta_1) + (g_2 - \delta_2) \in D_1^+ \cdot D_2^+$, where $\delta = \delta_1 + \delta_2 > 0$. If either $g_1 \in \mathcal{L}^+$ or $g_1 \in \mathcal{L}^+$, then $f \geq g_2$ or $f \geq g_1$ and then $f - \delta_2$ or $f - \delta_1$ belong still to $D_1^+ \cdot D_2^+$. Finally if both $g_1$ and $g_2$ belong to $\mathcal{L}^+$ then so does $f$. This shows that $D_1^+ \cdot D_2^+$ is strictly desirable, and $\Phi^+$ is closed under combination.

Similarly, if $D^+ \not= \mathcal{L}$, $\epsilon_x(D^+) = posi((D + \mathcal{L}_x) \cup \mathcal{L}^+)$. So, if $f \in \epsilon_x(D^+)$ and $f \not\in \mathcal{L}^+$, then $f \geq g \in D^+ \cup \mathcal{L}_x$ and $g \notin \mathcal{L}^+$ and if $D^+$ is strictly desirable, then there is a $\delta > 0$ such that $g - \delta \in D^+ \cap \mathcal{L}_x$, hence $f - \delta \in D \cap \mathcal{L}^+$. This shows that $\epsilon_x(D^+)$ is strictly desirable, if $D^+$ is so, hence $\Phi^+$ is also closed under extraction for all $x \in Q$, therefor indeed a subalgebra of $\Phi$.
\end{proof}

By this proposition, $\Phi^+$ is itself an information algebra.

Associated with a set of desirable gambles is another concept, namely the one of lower (and upper) previsions. This will be discussed in the next section and we shall show that it gives rise to another information algebra.

%%%%%%%%%%%%%%%%%%%%%%%%%%%%%%%%%%%%%%%%%%%%%%%%%%%%%%%%%%%%%

\section{Lower Previsions} \label{subsec:LowPrev}

Associated with a set of gambles $D$ on $\mathcal{L}(\Theta)$ is the lower prevision
\begin{eqnarray} \label{eq:LowPrevision}
\underline{P}(f) = \sup\{\mu \in \mathbb{R}:f - \mu \in D\}.
\end{eqnarray}

We remark that $\underline{P}(f)$ is only defined if the set $\{\mu \in \mathbb{R}:f - \mu \in D\}$ is not empty and bounded from above. For coherent sets $D$, the lower prevision is defined on the whole set of gambles as the following lemma shows. We write $\sigma(D)$ for the lower prevision associated with $D$ by (\ref{eq:LowPrevision}) and $dom(\underline{P})$ for the set of gambles for which $\underline{P}$ is defined.

\begin{lemma} \label{DomOfLowePrev}
For a set $D$ of gamble $D \subseteq \mathcal{L}(\Theta)$ we have
\begin{enumerate}
\item if $0 \not\in \mathcal{E}(D)$, then $D \subseteq dom(\sigma(D))$,
\item if $D \in \mathcal{C}(\Theta)$, then $dom(\sigma(D)) = \mathcal{L}(\Theta)$.
\end{enumerate}
\end{lemma}

\begin{proof}
1.) Consider $f \in D$. Then the set $\{\mu:f - \mu \in D\}$ is not empty, since it contains at least $0$. Further, assume $f - \mu \in D$. Then $\mu \geq \sup f$ is not possible, since otherwise $f - \mu < 0$ and this would imply $0 \in \mathcal{E}(D)$. So the set $\{\mu:f - \mu \in D\}$ is bounded from above, hence $D \in dom(\sigma(D))$.

2.) If $D$ is a coherent set of gambles, then $0 \not\in D$ and $D = \mathcal{E}(D)$. So by item 1 we have $D \subseteq dom(\sigma(D))$. Consider then a gamble $f \in \mathcal{L}(\Theta) - D$. Then $\inf f \leq 0$ and if $\mu < \inf f$, then $f - \mu \geq 0$, hence $f - \mu \in D$. So the set $\{\mu:f - \mu \in D\}$ is not empty. And we must have $\mu < 0$ in the set $\{\mu:f - \mu \in D\}$, since $\mu \geq 0$ would imply $f - \mu \leq f$, hence $f \in D$ contrary to the assumption. So the set $\{\mu:f - \mu \in D\}$ has $0$ as an upper bound and $f \in dom(\sigma(D))$.
\end{proof}

In the case that non-empty the set $\{\mu:f - \mu \in D\}$ is not bounded from above, we set $\underline{P}(f) = \infty$. If $D$ is a coherent set of gambles, then the functional $\underline{P}(f)$ on $\mathcal{L}(\Theta)$ is called a coherent lower prevision. It is characterized by the following properties \cite{walley91}: For every $f,g \in \mathcal{L}(\Theta)$,
\begin{enumerate}
\item $\underline{P}(f) \geq \inf_{\theta \in \Theta} f(\theta)$,
\item $\underline{P}(\lambda f) = \lambda \underline{P}(f)$,
\item $\underline{P}(f + g) \geq \underline{P}(f) + \underline{P}(g)$.
\end{enumerate}
There is also the upper prevision, defined by
\begin{eqnarray*}
\bar{P}(f) = \inf\{\mu \in \mathbb{R}:\mu - f \in D\} = - \underline{P}(-f).
\end{eqnarray*}
It is called coherent, if the associated lower prevision is. 

Let as before $\Phi = \mathcal{C}(\Theta) \cup \{\mathcal{L\}}$ denote the elements of the domain-free information algebra of coherent sets of gambles (see Section \ref{subsec:Gambles}). Similarly, let $\underline{\Psi} =  \underline{\mathcal{P}}(\Theta) \cup \{\infty\}$ denote the family of coherent lower previsions, augmented by the infinite prevision $\underline{P}(f) = \infty$ for all $f \in \mathcal{L}$. There is a map $\sigma$ from any set $D$ of gambles to lower previsions defined by (\ref{eq:LowPrevision}), which assigns to any set of gambles the corresponding lower prevision. We shall be especially interested in this map restricted to the domain of coherent sets of gambles in $\mathcal{C}(\Theta)$. Then the images are coherent lower previsions. This map is not one-to-one as different coherent sets of gambles may induce the same lower prevision. 

Now, among lower previsions in $\underline{\mathcal{P}}(\Theta)$ we define $\underline{P} \leq \underline{Q}$ if $\underline{P}(f) \leq \underline{Q}(f)$ for all $f$ in $\mathcal{L}(\Theta)$. This is a partial order. Note that $\sigma$ applied to coherent sets of gambles preserves order. We recall that the map $\sigma$ restricted to almost desirable sets of gambles is one-to-one \cite{walley91}, and
\begin{eqnarray} \label{eq:AlmDesSetAndPrev}
\underline{P}(f) = \max\{\mu:f - \mu \in \bar{D}\}, \quad \bar{D} = \{f:\underline{P}(f) \geq 0\}.
\end{eqnarray}
The map $\sigma$ restricted to almost desirable sets of gambles maintains also order: $\bar{D}' \leq \bar{D}$ if and only if $\sigma(\bar{D}') \leq \sigma(\bar{D})$. There is also a one-to-one relation between coherent lower previsions $\underline{P}$ and strictly desirable sets of gamble $D^+$, so that, \cite{walley91} 
\begin{eqnarray*}
\underline{P}(f) = \sup\{\mu:f - \mu \in D^+\}, \quad D^+ = \{f: \underline{P}(f) > 0\} \cup \mathcal{L}^+(\Theta).
\end{eqnarray*}

Define  the maps $\tau$ and $\bar{\tau}$ from coherent lower previsions to strictly desirable sets of gambles and almost desirable sets of gambles accordingly by
\begin{eqnarray*}
\tau(\underline{P}) = \{f: \underline{P}(f) > 0\} \cup \mathcal{L}^+(\Theta), \quad \bar{\tau}(\underline{P}) =  \{f:\underline{P}(f) \geq 0\}.
\end{eqnarray*}
Then $\tau$ and $\bar{\tau}$ are the inverses of the map $\sigma$ restricted to strictly desirable and almost desirable sets of gambles respectively. The following lemma shows how coherent, strictly desirable and almost desirable sets are linked relative to the coherent lower previsions they induce \footnote{This result follows also from the fact that, in the sup-norm topology of the linear space $\mathcal{L}(\Theta)$, the strictly desirable gambles $D^+$ are the relative interior of $D$ plus the non-negative, non-zero gambles and $\bar{D}$ is the relative closure of $D$, \cite{walley91}.}

\begin{lemma} \label{le:StrictAlmGamb}
Let $D$ be a coherent set of gambles. Then
\begin{eqnarray*}
D^+ = \tau(\sigma(D)) \subseteq D \subseteq \bar{\tau}(\sigma(D))
\end{eqnarray*}
and $\sigma(D^+) = \sigma(D) = \sigma(\bar{D})$.
\end{lemma}

\begin{proof}
Let $\underline{P} = \sigma(D)$. Then $f \in D^+$ means that $0 < \underline{P}(f) = sup\{\mu:f - \mu \in D\}$ or $f \in \mathcal{L}^+$. In the second case $f \in D$. Otherwise there is a $\delta$ so that $0 < \delta < \underline{P}(f)$ and $f-\delta \in D$. Therefore $f \in D$ and $D^+ \subseteq D$. Further, consider $f \in D$. Then we must have $\underline{P}(f) = \sup\{\mu:f - \mu \in D\} \geq 0$, hence $f \in \bar{D}$. The second part follows since $\tau$ and $\bar{\tau}$ are the inverse maps of $\sigma$ on strictly desirable and almost desirable sets of gambles.
\end{proof}

Next, we claim that the map $\sigma$ restricted to coherent sets of gambles preserve infima. Here we define $\inf\{\underline{P}_j:j \in J\}$ by $\inf\{\underline{P}_j:j \in J\}(f) = \inf\{\underline{P}_j(f):j \in J\}$ for all $f \in \mathcal{L}(\Theta)$.

\begin{lemma} \label{le:MaintainInf}
Let $D_j$, $j \in J$ be any family of coherent sets in $\mathcal{C}(\Theta)$. Then we have
\begin{eqnarray*}
\sigma(\bigcap_{j \in J} D_j) = \inf\{\sigma(D_j)\}
\end{eqnarray*}
\end{lemma}

\begin{proof}
Recall that the intersection of the coherent sets $D_j$ is a coherent set $D$ and $D \subseteq \bar{D}$. Then let
\begin{eqnarray*}
\sigma(D) = \sigma(\bigcap_{j \in J} D_j) = \underline{P}.
\end{eqnarray*}
The coherent lower prevision $\underline{P}$ is a lower bound of the $\sigma(D_j)$. Consider any other coherent lower prevision $\underline{Q}$, which a lower bound of the coherent lower previsions $\sigma(D_j)$ so that $\tau(\underline{Q}) \subseteq \tau(\sigma(D_j)) = D_j^+ \subseteq D_j$. Then we have $\tau(\underline{Q}) \subseteq \bigcap_j D_j = D$ and this implies $\underline{Q} = \sigma(\tau(\underline{Q})) \leq \sigma(D) = \underline{P}$, hence $\underline{P}$ is the infima of the $\sigma(D_j)$.
\end{proof}

If $\underline{P}'$ is a lower prevision which is dominated by a coherent lower prevision, then its natural extension is defined as the infimum of the coherent lower prevision which dominate it, \cite{walley91},
\begin{eqnarray} \label{eq:NatExtLowPrev}
E(\underline{P}') = \inf\{\underline{P} \textrm{ coherent}\ : \underline{P}' \leq \underline{P}\}.
\end{eqnarray}
So, $E(\underline{P})$ is the minimal coherent lower prevision which dominates $\underline{P}'$. Now, we prove the key result, that the map $\sigma$ commutes with natural extension.

\begin{theorem} \label{th:CommSigmaExt}
Let $D'$ be a set of gambles which satisfies the following two consitions:
\begin{enumerate}
\item $0 \not\in \mathcal{E}(D')$,
\item for all $f \in D' - \mathcal{L}^+(\Theta)$ there exists a $\delta > 0$ such that $f - \delta \in D'$.
\end{enumerate}
Then we have
\begin{eqnarray*}
\sigma(\mathcal{C}(D')) = E(\sigma(D')).
\end{eqnarray*}
\end{theorem}

\begin{proof}
If $D' = \mathcal{L}^+(\Theta)$, then $D' = \mathcal{C}(D')$ and $\sigma(\mathcal{C}(D')) = E(\sigma(D')) = \sigma(D')$ since $\sigma(D')$ is already coherent. So, assume $D' \not= \mathcal{L}^+(\Theta)$. Then by the first assumption, $\mathcal{E}(D') = \mathcal{C}(D')$ so that (Lemma \ref{le:MaintainInf}),
\begin{eqnarray*}
\sigma(\mathcal{C}(D')) = \sigma\{\bigcap\{D \textrm{ coherent}:D' \subseteq D\} = \inf\{\sigma(D):D \textrm{ coherent}:D' \subseteq D\}.
\end{eqnarray*}
It follows that $\sigma(\mathcal{C}(D')) \geq E(\sigma(D'))$. Consider now any coherent lower prevision $\underline{P}$ so that $\underline{P}' = \sigma(D') \leq \underline{P}$. We claim that $D' \subseteq \tau(\underline{P})$. Indeed, if $f \in D'$ then $\underline{P}'(f) \geq 0$. If  $f \in \mathcal{L}^+(\Theta)$, then $f \in \tau(\underline{P})$. Otherwise, if $f \in D' - \mathcal{L}^+(\Theta)$, then there is by assumption a $\delta > 0$ so that $f - \delta \in D'$, hence we have $0 < \underline{P}'(f) \leq \underline{P}(f)$. But this means that $f \in \tau(\underline{P})$. Since a strictly desirable set of gambles is coherent, it follows, using Lemma \ref{le:MaintainInf} and $\sigma(\tau(\underline{P})) = \underline{P}$, that
\begin{eqnarray*}
\sigma(\mathcal{C}(D'))  \leq \sigma(\bigcap \{\tau(\underline{P}):D' \subseteq \tau(\underline{P})\}) = \inf \{\underline{P}:\underline{P}' \leq \underline{P}\} = E(\underline{P}')
\end{eqnarray*}
so that $\sigma(\mathcal{C}(D')) = E(\sigma(D'))$.
\end{proof}

We can now introduce into $\underline{\Psi}$ like in $\Phi$ operations of combination and extraction. As before consider the family of questions $Q$ together with associated equivalence relations $\equiv_x$ on $\Theta$ and partitions $P_x$ for all $x \in Q$. Consider then for two coherent lower previsions $\underline{P}_1$ and $\underline{P}_2$
\begin{eqnarray*}
\underline{P}'(f) = \max\{\underline{P}_1(f),\underline{P}_2(f)\}
\end{eqnarray*}
or $\underline{P}' = \max\{\underline{P}_1,\underline{P}_2\}$. We may take the natural extension of $E(\underline{P}')$ to define combination of two coherent lower previsions $\underline{P}_1$ and $\underline{P}_2$. For extraction, we may take the natural extension of the marginal $\underline{P}_x$ of $\underline{P}$, defined as the restriction of $\underline{P}$ to $\mathcal{L}_x$. Thus, in summary, we define $\underline{P}_1 \cdot \underline{P}_2$ and $\b{e}_x(\underline{P})$ by
\begin{enumerate}
\item Combination: $\underline{P}_1 \cdot \underline{P}_2(f) = E(\max\{\underline{P}_1,\underline{P}_2\})(f)$, if $\max\{\underline{P}_1,\underline{P}_2\}$ is dominated by a coherent lower prevision, $\underline{P}_1 \cdot \underline{P}_2(f) = \infty$ otherwise.
\item Extraction: $\b{e}_x(\underline{P})(f) = E(\underline{P}_x)(f)$.
\end{enumerate}

Using Theorem \ref{th:CommSigmaExt} linking natural extensions in the two formalisms of coherent sets of gambles and coherent lower previsions, the following theorem permits to conclude that the set $\underline{\Psi}$ of coherent lower previsions $\underline{\mathcal{P}}(\Theta)$ augmented by $\underline{P}(f) = \infty$ forms a domain-free information algebra under these operations.

\begin{theorem} \label{th:HomomLowPrev}
Consider the the map $\sigma$ restricted to the algebra of strictly desirable gambles $\Phi^+$. Then, for any $D_1^+,D_2^+$ and $D^+$ in $\Phi^+$ and any $x \in Q$,
\begin{enumerate}
\item $\sigma(D^+_1 \cdot D^+_2) = \sigma(D^+_1) \cdot \sigma(D^+_2)$,
\item $\sigma(\mathcal{L}(\Theta))(f) = \infty$, $\sigma(\mathcal{L}^+(\Theta))(f) = \inf f$ for all $f \in \mathcal{L}(\Theta)$.
\item $\sigma(\epsilon_x(D^+)) = \b{e}_x(\sigma(D^+))$.
\end{enumerate}
\end{theorem}

\begin{proof}
Assume first that $D^+_1 \cdot D^+_2 = \mathcal{L}(\Theta)$ and let $\underline{P}_1 = \sigma(D^+_1), \underline{P}_2 = \sigma(D^+_2)$. Then there can be no coherent prevision $\underline{P}$ dominating both $\underline{P}_1$ and $\underline{P}_2$. Because otherwise we would have $D^+_1 = \tau(\underline{P}_1)$ and  $D^+_2 = \tau(\underline{P}_2)$ both contained in the coherent set $\tau(\underline{P})$, But this contradicts $D^+_1 \cdot D^+_2 = \mathcal{L}(\Theta)$. Therefore, $\sigma(D^+_1 \cdot D^+_2)(f) = \infty$ for all gambles $f$ in $\mathcal{L}(\Theta)$. 

Assume then $D^+_1 \cdot D^+_2 \not= \mathcal{L}(\Theta)$. Then $D^+_1 \cdot D^+_2$ as well as $D^+_1 \cup D^+_2$ satisfy the condition of Theorem \ref{th:CommSigmaExt}. Therefore we have
\begin{eqnarray*}
\lefteqn{\sigma(D^+_1 \cdot D^+_2) = \sigma(\mathcal{C}(D^+_1 \cup D^+_2)) = E(\sigma(D^+_1 \cup D^+_")) } \\
&&= E(\max\{\sigma(D^+_+),\sigma(D^+_2)\}) = \sigma(D^+_1) \cdot \sigma(D^+_2).
\end{eqnarray*}
This proves item 1. 

Item 2 is obvoous.

For 3. remark that $D^+ \cap \mathcal{L}_x$ satisfy the condiktions of Theorem \ref{th:CommSigmaExt}. Thus we obtain
\begin{eqnarray*}
\sigma(\epsilon_x(D^+)) = \sigma(\mathcal{C}(D^+ \cap \mathcal{L}_x)) = E(\sigma(D^+ \cap \mathcal{L}_x)).
\end{eqnarray*}
Now,
\begin{eqnarray*}
\sigma(D^+ \cap \mathcal{L}_x) = \sup\{\mu:f - \mu \in D^+ \cap \mathcal{L}_x\}.
\end{eqnarray*}
But $f - \mu \in D^+ \cap \mathcal{L}_x$ implies that $f$ is $x$-measurable and $f - \mu \in D^+$. Therefore, we conclude that $\sigma(D^+ \cup \mathcal{L}_x) = \sigma(D^+)_x$. But we have $E(\sigma(D^+)_x) = e_x(\sigma(D^+))$. This concludes the proof.
\end{proof}

Note that the map $\sigma$ restricted to $\Phi^+$ is bijective. This theorem shows then that $\Psi = \underline{\mathcal{P}}(\Theta) \cup \{\sigma(\mathcal{L}(\Theta)\}$ is, under the operations of combination and extraction defined above, a domain-free information algebra, isomorphic to the information algebra $\Phi^+$, the algebra of strictly desirable sets of gambles under the maps $D^+ \mapsto \sigma(D^+)$ and $\epsilon_x \mapsto \underline{e}_x$. Inversely, under the inverse maps, $\underline{\Psi}$ is embedded in the information algebra $\Phi$ of coherent sets of gambles. There is obviously the connected (isomorphic) information algebra of upper previsions. We shall see below that there are other homomorphisms and isomorphisms between lower previsions and sets of gambles.

It follows further from Theorem \ref{th:HomomLowPrev}  and Lemma \ref{le:MaintainInf} that for any family of strictly desirable sets 
\begin{eqnarray*}
\sigma(\epsilon_x(\bigcap_j D^+_j)) =  \underline{e}_x(\sigma(\bigcap_j D^+_j)) =  \underline{e}_xs(\inf\{\sigma(D^+_j))\}
\end{eqnarray*}
and
\begin{eqnarray*}
\sigma(\bigcap_j \epsilon_x(D^+_j)) = \inf\{\sigma(\epsilon_x(D^+_j))\} = \inf\{\underline{e}_x(\sigma(D^+_j))\}.
\end{eqnarray*}
so that for any family of coherent lower previsions $\underline{P}_j$ we have also by Theorem \ref{th:CommExtrInt},
\begin{eqnarray*}
\underline{e}_x(\inf\{\underline{P}_j\}) = \inf\{\underline{e}_x(\underline{P}_j)\}.
\end{eqnarray*}
In the information algebra of lower prevision extraction distributes over meet (infimum) as in the algebra of coherent sets of gambles. 

We come back to the relations between the information algebra of coherent lower previsions and different algebras of sets of gambles in the next  Section \ref{sec:DiffAlgs}.

%%%%%%%%%%%%%%%%%%%%%%%%%%%%%%%%%%%%%%%%%%%%%%%%%%%%%%%%%%%%%%%%%%%%%%%%%%%%%%%%%

\section{The Algebras of Coherent and Almost Desirable Sets of Gambles} \label{sec:DiffAlgs}

We show  in this section that there is also an information algebra of almost desirable gambles, isomorphic to the algebra of lower previsions. Then we examine the question how  the algebra of coherent sets of gambles $\Phi$ is related to the information algebra of coherent lower previsions.

Consider first sets of almost desirable gambles on $\mathcal{L}(\Theta)$, see Section \ref{subsec:Gambles}. It is no surprise that the sets of almost desirable gambles form also an information algebra. We use the algebra of lower previsions together with the bijective map $\bar{\tau}$ to define combination and extraction among almost desirable sets of gambles. Afterwards, we show how these operations may also be defined in terms of almost desirable gambles themselves. Denote by $\bar{\Phi}$ the family of almost desirable sets of gambles, including $\mathcal{L}(\Theta)$. We denote generic almost desirable sets by $\bar{D}$. Define combination and extraction as follows:
\begin{enumerate}
\item \textit{Combination:} $\bar{D}_1 \cdot \bar{D}_2 = \bar{\tau}(\sigma(\bar{D}_1) \cdot \sigma(\bar{D}_2))$,
\item \textit{Extraction:} $\bar{\epsilon}_x(\bar{D}) = \bar{\tau}(e_x(\sigma(\bar{D}))$.
\end{enumerate}
Note that we denote combination by dot, in $\bar{\Phi}$ as in $\Phi$ or $\Phi^+$ or also in $\underline{\Psi}$. It will always be clear from the context, which operation is meant. For instance in the definition above, on the left $\cdot$ denotes combination in $\bar{\Phi}$, whereas on the right it denotes combination in $\underline{\Psi}$. By this definition, it is immediately clear that by the map $\bar{\tau}$ the axioms of an information algebra are induced into $\bar{\Phi}$ from $\underline{\Psi}$ and thereby $\bar{\tau}$ becomes a homomorphism, even an isomorphism, since $\bar{\tau}$ is bijective, between the information algebra of lower previsions and the one of almost desirable sets of gambles. Furthermore, the map $D^+ \mapsto \bar{D}$ defined by $\bar{D} = \bar{\tau}(\sigma(D^+))$ is an isomorphism between the information algebras $\Phi^+$ of strictly desirable gambles and the algebra $\bar{\Phi}$ of almost desirable gambles. Note that unit element in $\bar{\Phi}$ is $\mathcal{L}^+$ completed with the null function. The null element is again $\mathcal{L}$. 

As we have seen in Lemma \ref{le:StrictAlmGamb} we have $D^+ = \tau(\sigma(D)) \subseteq D \subseteq \bar{\tau}(\sigma(D)) = \bar{D}$. We mentioned that $\bar{D} = \bar{\tau}(\sigma(D))$ is the topological closure of the coherent set $D$ in the sup-norm topology on $\mathcal{L}(\Theta)$, \cite{walley91}. Consider then two coherent sets $D_1$ and $D_2$ and let $D = D_1 \cdot D_2$, $D^+ = D^+_1 \cdot D^+_2$. Then, using Theorem \ref{th:HomomLowPrev},
\begin{eqnarray*}
\bar{D} = \bar{D}_1 \cdot \bar{D}_2 = \bar{\tau}(\sigma(D^+_1) \cdot \sigma(D^+_2)) = \bar{\tau}(\sigma(D_1) \cdot \sigma(D_2)) = \bar{\tau}(\sigma(D^+_1 \cdot D^+_2)).
\end{eqnarray*}
We denote the topological closure operator in the sup-norm in $\mathcal{L}(\Theta)$ by $cl$. Let $cl(D) = cl(D_1 \cdot D_2) = cl(posi(D_1 \cup D_2)) = \bar{D}$, hence $\bar{D} \subseteq cl(posi(\bar{D}_1 \cup \bar{D}_2))$. But we also have $\bar{D}_1 \cup \bar{D}_2 \subseteq \bar{D}$ and since $\bar{D}$ is a closed convex cone, we must have $\bar{D} = cl(posi(\bar{D}_1 \cup \bar{D}_2))$. So, we conclude that
\begin{eqnarray*}
\bar{D}_1 \cdot \bar{D}_2 = cl(posi(\bar{D}_1 \cup \bar{D}_2)).
\end{eqnarray*}
Remark that this holds even if $\bar{D}_1 \cdot \bar{D}_2 = \mathcal{L}(\Theta)$. 

Similarly, for an almost desirable set $\bar{D}$  we have $D^+ \subseteq D \subseteq \bar{D}$ and then
\begin{eqnarray*}
\bar{D}' = \bar{\epsilon}_x(\bar{D}) = \bar{\tau}(e_x(\sigma(\bar{D}))) = \bar{\tau}(e_x(\sigma(D))) \supseteq \tau(e_x(\sigma(D)) = \epsilon_x(D^+) = D'^+.
\end{eqnarray*}
Now, $\bar{D}' = cl(D'^+) = cl(posi((D^+ \cap \mathcal{L}_x) \cup \mathcal{L}(\Theta)))$ so that $\bar{D}' \subseteq cl(posi((\bar{D} \cap \mathcal{L}_x) \cup \mathcal{L}(\Theta)$. On the other hand, $(\bar{D} \cap \mathcal{L}_x) \cup \mathcal{L}(\Theta) \subseteq \bar{D}'$. Since $\bar{D}'$ is a closed convex cone we must therefore have
\begin{eqnarray*}
\bar{\epsilon}_x(\bar{D}) = cl(posi((\bar{D} \cap \mathcal{L}_x) \cup \mathcal{L}(\Theta)))
\end{eqnarray*}
Again, this obviously holds also if $\bar{D} = \mathcal{L}(\Theta)$, the null element of the information algebra $\bar{\Phi}$.

We turn next to the information algebra $\Phi$ of coherent sets gambles. We shall prove that the information algebra $\Phi$ is in a weak form homomorphic to its subalgebra $\Phi^+$. As a preparation, we need the following lemma.

\begin{lemma} \label{le:StrictDes}
If $D$ is a coherent set of gambles and $D^+ = \tau(\sigma(D))$, then $f \not\in \mathcal{L}^+(\Theta)$ implies that $f \in D^+$ if and only if there is a $\delta > 0$ so that $f - \delta \in D$.
\end{lemma}

\begin{proof}
Since $D^+$ is a strictly desirable set of gambles contained in $D$, by the definition of strictlly desirable set we have that $f \in D^+$ and $f \not\in  \mathcal{L}^+(\Theta)$. implies that there is a $\delta > 0$ so that $f - \delta \in D^+ \subseteq D$. Conversely, consider a gamble $f$ with $\delta > 0$ such that $f - \delta \in D$ and note that $D^+ = \{f:\sigma(D)(f) > 0\} \cup  \mathcal{L}^+(\Theta)$ where $\sigma(D)(f) = \sup \{\mu:f - \mu \in D\}$. From $f - \delta \in D$ it follows that $\sigma(D)(f) > 0$, hence $f \in D^+$.
\end{proof}

Consider the map $D \mapsto D^+$ defined by $D^+ = \tau(\sigma(D))$. The next theorem establishes that this map preserves extraction and combination, if the combination is coherent.. 

\begin{theorem} \label{th:HomomCohDesSets}
Let $D_1,D_2$ and $D$ be coherent sets and $x \in Q$. Then,
\begin{itemize}
\item if $D_1 \cdot D_2 \not= 0$, then $D_1 \cdot D_2 \mapsto (D_1 \cdot D_2)^+ = D^+_1 \cdot D^+_2$,
\item $\epsilon_x(D) \mapsto (\epsilon_x(D))^+ = \epsilon_x(D^+)$.
\end{itemize}
\end{theorem}

\begin{proof}
For 1.) note first that $D^+_1 \subseteq D_1$ and $D^+_" \subseteq D_2$ so that 
\begin{eqnarray*}
D^+_1 \cdot D^+_2 = \tau(\sigma(D^+_1+ \cdot D^+_2)) \subseteq \tau(\sigma(D_1 \cdot D_2)) = (D_1 \cdot D_2)^+.
\end{eqnarray*}
Further,
\begin{eqnarray*}
(D_1 \cdot D_2)^+ = \{f:\sigma(D_1 \cdot D_2)(f) > 0 \cup \mathcal{L}^+(\Theta).
\end{eqnarray*}
So, if $f \in (D_1 \cdot D_2)^+$, then either $f \in \mathcal{L}^+(\Theta)$ or 
\begin{eqnarray*}
\sigma(D_1 \cdot D_2)(f) = \sup\{\mu:f - \mu \in \mathcal{C}(D_1 \cup D_2)\} > 0.
\end{eqnarray*}
In the first case clearly $f \in D^+_1 \cdot D^+_2$. In the second case there is a $\delta > 0$ so that $f - \delta \in \mathcal{C}(D_1 \cup D_2)$. This means that $f - \delta = h + \lambda_1f_1 + \lambda_2f_2$, where $h \in \mathcal{L}^+(\Theta)$, $f_1 \in D_1$, $f_2 \in D_2$ and $\lambda_1,\lambda_2 \geq 0$ and not both equal $0$. If both $\lambda_1$ and $\lambda_2$ are different from $0$, it follows
\begin{eqnarray*}
f = h + (\lambda_1f_1 + \delta/2) + (\lambda_2f_2 + \delta/2).
\end{eqnarray*}
Then $f'_1 = \lambda_1f_1 + \delta/2 \in D_1$ and $f'_2 = \lambda_2f_2 + \delta/2 \in D_2$. We have then $\lambda_1f_1 = f'_1 - \delta/2 \in D_1$ and $\lambda_2f_2 = f'_2 - \delta/2 \in D_2$ so that according to Lemma \ref{le:StrictDes} $f'_1 \in D^+_1$ and $f'_2 \in D^+_2$ which implies $f \in \mathcal{C}(D^+_+ \cup D^+_2) = D^+_1 \cdot D^+_2$. If one of the two coefficients $\lambda_1$ or $\lambda_2$ are null, a similar argument shows also that $f \in D^+_+ \cdot D^+_2$. This proves that $(D_1 \cdot D_2)^+ = D^+_1 \cdot D^+_2$.

To prove 2.) note that $D^+ \subseteq D$, hence
\begin{eqnarray*}
(\epsilon_x(D))^+ = \tau(\sigma(e_x(D))) \supseteq \tau(\sigma(e_x(D^+))) = \epsilon_x(D^+).
\end{eqnarray*}
Now, we have
\begin{eqnarray*}
(\epsilon_x(D))^+ = \{f:\sigma(\epsilon_x(D))(f) > 0\} \cup \mathcal{L}^+(\Theta)
\end{eqnarray*}
where
\begin{eqnarray*}
\sigma(\epsilon_x(D))(f) = \sup\{f:f -\mu \in \mathcal{C}(D \cap \mathcal{L}_x\}.
\end{eqnarray*}
So, if $f \in (\epsilon_x(D))^+$, then either $f \in \mathcal{L}^+(\Theta)$, in which case $f \in \epsilon_x(D^+)$, or there is a $\delta > 0$ so that $f - \delta \in \mathcal{C}(D \cap \mathcal{L}_x) = posi(\mathcal{L}^+(\Theta) \cup (D \cap \mathcal{L}_x))$. In this case we have $f - \delta = h + g$, where $h \in \mathcal{L}^+(\Theta)$ and $g \in D \cap \mathcal{L}_x$. It follows that $f = h + (g + \delta)$ and $g' = g + \delta$ is still $x$-measurable and $g' \in D$. Using Lemma \ref{le:StrictDes} we deduce from $g = g' - \delta \in D \cap \mathcal{L}_x$,  that $g' \in D^+ \cap \mathcal{L}_x$ hence $f \in \epsilon_x(D^+)$. So we have $\epsilon_x(D) \mapsto (\epsilon_x(D))^+ = \epsilon_x(D^+)$.
\end{proof}

We call such a map a weak homomorphism. As a corollary of this theorem, we deduce that the map $\sigma : \Phi \rightarrow \underline{\Psi}$ is also a weak homomorphism. 

\begin{corollary} \label{cor:WeakHomom}
Let $D_1,D_2$ and $D$ be coherent sets and $x \in Q$. Then
\begin{itemize}
\item if $D_1 \cdot D_2 \not= 0$, then $D_1 \cdot D_2 \mapsto \sigma(D_1 \cdot D_2) = \sigma(D_1) \cdot \sigma(D_2$,
\item $\epsilon_x(D) \mapsto \sigma(\epsilon_x(D)) = \epsilon_x(\sigma((D))$.
\end{itemize}
\end{corollary}

\begin{proof}
The result follows since $\sigma = (\sigma^+ \circ\ \tau) \circ \sigma$, where $\tau \circ \sigma$ is the weak homomorphism between $\Phi$ and $\Phi^+$ and $\sigma^+$ is the isomorphism between $\Phi^+$ and $\underline{\Psi}$.
\end{proof}

Note that in general $D^+$ is a proper subset of $D$, so it is understandable, that $D_1 \cdot D_2$ may be contradictory, whereas $D_1^+ \cdot D_2^+$ is not. For an example for this, we refer to \cite{CasaKohIIJAR}. This shows that the homomorphism can be only weak.

In summary, we have the following relations between the different information algebras $\Phi$, $\Phi^+$, $\bar{\Phi}$ and $\underline{\Psi}$, if $\sigma^+$ and $\bar{\sigma}$ denote the restrictions of $\sigma$ to $\Phi^+$ and $\bar{\Phi}$.
\begin{itemize}
\item $\tau,\sigma^+$ inverse isomorphisms between $\Phi^+$ and $\underline{\Psi}$,
\item $\bar{\tau},\bar{\sigma}$ inverse isomorphisms between $\bar{\Phi}$ and $\underline{\Psi}$,
\item $\sigma$ weak homomorphism between $\Phi$ and $\underline{\Psi}$,
\item $\tau \circ \sigma$ weak homomorphism between $\Phi$ and $\Phi^+$,
\item $\bar{\tau} \circ \sigma$ weak homomorphism between $\Phi$ and $\bar{\Phi}$,
\item $\sigma^+ \circ \bar{\tau},\bar{\sigma} \circ \tau$ inverse isomorphisms between $\Phi^+$ and $\bar{\Phi}$,
\item $id : \Phi^+ \rightarrow \Phi$ embedding of $\Phi^+$ as a subalgebra in $\Phi$.
\end{itemize}
For all these maps, we have to add the associations  between the extraction operators in the different algebras to be complete.

%%%%%%%%%%%%%%%%%%%%%%%%%%%%%%%%%%%%%%%%%%%%%%%%%%%%%%%%%%%%%%%%%%%%%%%%%%%%%%%%%

\section{Set algebras of possibilities and of atoms} \label{subsec:ImpProbRepres}

In this section, we show first that set algebras of possibilities can be embedded into algebras of imprecise probabilities. We argue that therefore, in a certain sense classical propositional and predicate calculi are part of imprecise probability. So consider the set of possibilities $\Omega$ as in Section \ref{subsec:Gambles} together with the family $P_Q$ of partitions $P_x$ for $x \in Q$ which determine the set $E$ of extraction operators $\epsilon_x$ of the information algebra of coherent sets of gambles (and as well those of the algebras of almost and strictly desirable sets of gambles and of lower previsions). Let further $\Sigma_Q$ be the corresponding set of saturation operators associated with the partitions $P_x$. We assume that $P_Q$ is upwards directed under our order between partitions (see Section \ref{sec:SetAlg}). Consider the associated set algebra $(\mathcal{S}_Q,\cap,\emptyset,\Omega;\Sigma_Q)$, where $\mathcal{S}_Q$ is the set of all subsets of $\Omega$, which are saturated with respect to some $x \in Q$.

We now associate with any set $S \in \mathcal{S}_Q$ a strictly desirable set of gambles $D^+_S$ defined as 
\begin{eqnarray*}
D^+_S = \{f \in \mathcal{L}(\Omega):\inf_{\omega \in S} f(\omega) > 0\} \cup \mathcal{L}^+(\Omega).
\end{eqnarray*}
If $S$ is not empty this is clearly a strictly desirable set of gambles., otherwise, if $S$ is the empty set, then define $D^+_\emptyset = \mathcal{L}(\Theta)$. The next theorem shows that the map $f : \mathcal{S}_Q \mapsto D^+_S$ is a homomorphism between the the set algebra $\mathcal{S}_Q$ and $\Phi^+$, the algebra od strictly desirable  set of gambles.

\begin{theorem} \label{th:EmbedSXetPsoos}
Let $S,T \in \mathcal{S}_Q$ and $x \in Q$. Then
\begin{enumerate}
\item $D^+_S \cdot D^+_T = D^+_{S \cap T}$,
\item $D^+_\emptyset = \mathcal{L}(\Theta)$, $D^+_\Theta = \mathcal{L}^+(\Theta)$,
\item $\epsilon_x(D^+_S) = D^+_{\sigma_x(S)}$.
\end{enumerate}
\end{theorem}

\begin{proof}
Item 2 is obvious. Now, if $S$ or $T$ equal $\Theta$ then $D_S^+ = \mathcal{L}^+$ or $D_T^+ = \mathcal{L}^+$. In this case 1.) trivially holds. Similarly, if $S = \emptyset$ or $T = \emptyset$, then $D^+_S = \mathcal{L}$ or $D^+_T = \mathcal{L}$ and again 1.) holds trivially. 

So, to prove 1.) in the other cases, suppose that both $S$ and $T$ are neither empty nor equal to $\Theta$. Assume first that $S \cap T = \emptyset$. Then $D^+_{S \cap T} = \mathcal{L}$. Consider gambles $f \in D^+_S$ and $g \in D^+_T$. and define
\begin{eqnarray*}
\tilde{f}(\theta) = \left\{ \begin{array}{ll} f(\theta) & \textrm{for}\ \theta \in S, \\ -g(\theta) & \textrm{for}\ \theta \in T, \\ 0 & \textrm{for}\ \theta \in (S \cup T)^c, \end{array} \right.
\tilde{g}(\theta) = \left\{ \begin{array}{ll} -f(\theta) & \textrm{for}\ \theta \in S, \\ g(\theta) & \textrm{for}\ \theta \in T, \\ 0 & \textrm{for}\ \theta \in (S \cup T)^c. \end{array} \right.
\end{eqnarray*}
Since $S$ and $T$ are disjoint, we have $\tilde{f} \in D^+_S$ and $\tilde{g} \in D^+_T$. However we also have $\tilde{f} + \tilde{g} = 0 \in \mathcal{E}(D^+_S \cup D^+_T)$, hence $D^+_S \cdot D^+_T = \mathcal{L}$ and 1.) is verified in this case too.

So, assume finally that $S \cap T \not= \emptyset$. We have $D^+_S \cup D^+_T \subseteq D^+_{S \cap T}$, so that $\mathcal{E}(D^+_S \cup D^+_T)$ is also coherent and hence $D^+_S \cdot D^+_T = \mathcal{E}(D^+_S \cup D^+_T) \subseteq D^+_{S \cap T}$. Consider a gamble $f$ in $D^+_{S \cap T}$. Select a number $\delta > 0$ and define two gambles 
\begin{eqnarray}
f_1(\theta) = \left\{ \begin{array}{ll} 1/2f(\theta) & \textrm{for}\ \theta \in S \cap T, \\ \delta & \textrm{for}\ \theta \in S \setminus T, \\
f(\theta) - \delta & \textrm{for}\ \theta \in T \setminus S, \\ 1/2f(\theta) & \textrm{for}\ \theta \in (S \cup T)^c,  \end{array} \right.
f_2(\theta) = \left\{ \begin{array}{ll} 1/2f(\theta) & \textrm{for}\ \theta \in S \cap T, \\ f(\theta) - \delta & \textrm{for}\ \theta \in S \setminus T, \\
\delta & \textrm{for}\ \theta \in T \setminus S, \\ 1/2f(\theta) & \textrm{for}\ \theta \in (S \cup T)^c.  \end{array} \right.
\end{eqnarray}
Then we have $f = f_1 + f_2$ and $f_1 \in D^+_S$, $f_2 \in D^+_T$. Therefore $f \in \mathcal{E}(D^+_S \cup D^+_T) = D^+_S \cdot D^+_T$, hence we conclude that $D^+_S \cdot D^+_T = D^+_{S \cap T}$.

For 3.) if $S$ is empty, then $\sigma_x(\emptyset) = \emptyset$ and $\epsilon_x(D^+_\emptyset) = \mathcal{L}(\Theta)$, so that item 3.) is valid in this case. Assume then that $S \not= \emptyset$. Then $D^+_S$ is coherent and therefore
\begin{eqnarray*}
\epsilon_x(D^+_S) = \mathcal{C}(D^+_S \cup \mathcal{L}_x) = posi(\mathcal{L}^+ \cup (D^+_S \cap \mathcal{L}_x)).
\end{eqnarray*}
Consider a gamble $f \in D^+_S \cap \mathcal{L}_x$. If $f \in \mathcal{L}^+(\Theta)$, then $f \in D^+_{\sigma_x(S)}$. Otherwise, $\inf_S f(\theta) > 0$ and $f$ is $x$-measurable. If $\theta \equiv_x \theta'$ for some $\theta' \in S$ and $\theta \in \Theta$, then $f(\theta) = f(\theta')$. Therefore $\inf_{\sigma_x(S)} f(\theta) = \inf_S f(\theta) > 0$, hence $f \in D^+_{\sigma_x(S)}$. So we have $D^+_S \cap \mathcal{L}_x \subseteq \mathcal{C}(D^+_S \cap \mathcal{L}_x) \subseteq D_{\sigma_x(S)}$.

Conversely, consider a gamble $f \in D^+_{\sigma_x(S)}$, which is a strictly desirable set of gambles. If $f \in \mathcal{L}^+(\Theta)$, then $f \in \epsilon_x(D^+_S)$. Otherwise, there is a $\delta > 0$ such that $f - \delta \in D^+_{\sigma_x(S)}$.  Define for every $\theta \in \Theta$
\begin{eqnarray*}
g(\theta) = \inf_{\theta' \equiv_x \theta} f(\theta') - \delta.
\end{eqnarray*}
If $\theta \in S$, then $g(\theta) > 0 $ since $\inf_{\sigma_x(S)} f(\theta) - \delta > 0$. So, we have $\inf_S g(\theta) \geq 0$ and $g$ is $x$-measurable. However, then $\inf_S (g(\theta) + \delta) = \inf_S g(\theta) + \delta > 0$, hence $(g + \delta) \in D^+_S \cap \mathcal{L}_x$ and $f \geq g + \delta$. Therefore we conclude that $f \in \mathcal{C}(D^+_S \cap \mathcal{L}_x) = \epsilon_x(D^+_S)$ and this shows that $\epsilon_x(D^+_S) = D^+_{\sigma_x(S)}$.
\end{proof}

The map $S \mapsto D^+_S$ is clearly injective, hence an embedding of the set algebra of subsets of $\Theta$ in the information algebra of striclly desirable set sof gambles $\Phi^+$, hence also into $\Phi$ and by extension in $\underline{\Phi}$. In this sense, imprecise probability is an extension of propositional logic, see for instance \cite{kohlas03} for information and set algebras associated. with propositional logic.

Next, we discuss another relation of imprecise probabilities to set algebras. These will be related to atoms in the algebras $\Phi$, $\Phi^+$ and $\underline{\Phi}$. It turns out that these algebras are all atomistic closed (see Section \ref{subsec:AtomicAlg}), hence  embedded into the set algebras of their respective atoms (Section \ref{subsec:SetAlgAtoms}).

A coherent set of gambles is $M$ called \textit{maximal}, if it is no proper subset of a coherent set of gambles. Such sets exist and they play an important role because of the following facts proved in \cite{CooQua12}:
\begin{enumerate}
\item Any coherent set of gambles is a subset of a maximal one,
\item Any coherent set of gambles is the intersection of all maximal coherent sets it is contained in.
\end{enumerate}
In addition, maximal coherent sets of gambles are characterized by the following condition, \cite{CooQua12}
\begin{eqnarray*}
\forall f \in \mathcal{L} - \{0\}:f \not\in M \Rightarrow -f \in M.
\end{eqnarray*}
Such subsets of a linear space are called semispaces, see \cite{hammer55,klee56}. Obviously, maximal coherent sets are atoms in the information algebra $\Phi$ and this algebra is atomistic closed according to the two properties of maximal sets cited above, since meet in the lattice of coherent sets is set intersection. As usual, we denote by $At(\Phi)$ the set of all atoms or maximal sets, and by $At(D)$ the set of maximal sets or atoms $M$, such that $D \leq M$. According to Corollary \ref{cor:AtomisticEmbed} in Section \ref{subsec:SetAlgAtoms}, the map $D \mapsto At(D)$ (and $\mathcal{L}(\Theta) \mapsto \emptyset$) is an embedding of the information algebra of coherent sets of gambles into the set algebra of maximal sets $At(\Phi)$.

Let's turn to lower and upper previsions. If $\underline{P}(f) = - \underline{P}(-f)$ for all $f$ in $\mathcal{L}(\Theta)$, that is, if lower and upper prevision coincide, $\underline{P}$ is called a linear prevision. Then its usual to write $\underline{P} = \bar{P} = P$. Linear previsions have an important role in the theory of imprecise probabilities, and also in information algebras. First of all a linear prevision is a lower (and upper) prevision. So, if $\mathcal{P}(\Theta)$ denote the set of linear previsions on $\mathcal{L}(\Theta)$, we have $\mathcal{P}(\Theta) \subseteq \underline{\mathcal{P}}(\Theta)$. Note that from the third coherence property of lower previsions it follows that $P(f + g) = P(f) + P(g)$. 

First, we show that linear previsions are atoms in the information algebra of lower (and upper) previsions.

\begin{lemma} \label{le:LinPrevAsAtoms}
Let $\underline{P} \in \underline{\Psi}$ and $P$ a linear prevision. Then $P \leq \underline{P}$ implies either $\underline{P} = P$ or $\underline{P}(f) = \infty$ for all $f \in \mathcal{L}(\Theta)$.
\end{lemma}

\begin{proof}
If $\underline{P}$ is coherent then $P \leq \underline{P}$ implies $\bar{P}(f) \leq P(f) \leq \underline{P}(f)$, which in turn implies $\bar{P}(f) = P(f) = \underline{P}(f)$.
\end{proof}

So linear previsions are atoms of the information algebras of lower and upper previsions. It follows that if $M$ is an atom in $\Phi$, then $\sigma(M)$ is a linear prevision, that is an atom in $\underline{\Psi}$. In fact, since $M$ is an atom of $\Phi$, either $f - \mu \in M$ or else $-f + \mu = -f - \mu'  \in M$ with $\mu = -\mu'$. It follows that $\underline{P}(f) = \sup\{\mu:f - \mu \in M\} = - \sup\{\mu':-f - \mu' \in M\} = - \underline{P}(f)$, and so $\sigma(M) = \underline{P}$ is a linear prevision.

The next thing to note is that the strictly desirable set of gambles associated with a linear prevision is given by
\begin{eqnarray*}
\tau(P) = \{f:P(f) > 0\} \cup \mathcal{L}^+(\Theta) = \{f:-P(-f) > 0\} \cup \mathcal{L}^+(\Theta).
\end{eqnarray*}

Now, if $f \not\in \mathcal{L}^+(\Theta)$, then $\underline{P}(f) < 0$ implies $P(-f) > 0$ and therefore either $f$ or $-f$ belongs to $\tau(P)$. These sets of strictly desirable gambles $\tau(P)$ associated with linear previsions are the atoms of the information algebra of strictly desirable sets of gambles. If $M$ is an atom of $\Phi$, then $M^+ = \tau(\sigma(M))$ and $\sigma(M^+) = \sigma(M) = P$ is a linear prevision. It follows that $M^+ = \tau(P)$ is an atom in $\Phi^+$. Any atom of $\Phi^+$ is of the form $M^+$ for some atom $M$ of $\Phi$. This is so, since if $M'$ is an atom of $\Phi^+$, then there is an atom $M$ of $\Phi$ such that $M' \leq M$ (since $\Phi$ is atomic), hence $\sigma(M') \leq \sigma(M) = P$, which implies $M' \leq \tau(\sigma(M)) = M^+$ und therefore $M' = M^+$.

Note in passing that subalgebras generally have different atoms, if any, than the embedding algebra. Now, consider two linear previsions $P_1 $ and $P_2 $. Then, if $P_1 \not= P_2$, we have, by general properties of atoms, $P_1 \cdot P_2 = 0$ and $P \cdot P = P$. As a consequence we have also $P_1 \leq P_2$ if and only if $P_1 = P_2$.

Since $\Phi$ is atomistic, we may conjecture that this holds also for the homomorphic algebra of lower previsions. This is confirmed by the next theorem. As with coherent sets of gambles, we denote by $At(\underline{P})$ the set of atoms, that is, linear previsions so that $\underline{P} \leq P$ and $At(\underline{\Psi})$ is the set of all linear previsions on $\Theta$.

\begin{theorem} \label{th:AtomLowPrev}
In the information algebra of lower previsions $\underline{\Psi}$ the following holds:
\begin{enumerate}
\item $\underline{\Psi}$ is atomic.
\item If $\underline{P}$ is a coherent lower prevision, then 
\begin{eqnarray*}
\underline{P} = \inf At(\underline{P}),
\end{eqnarray*}
\item if $A$ is any non-empty subset of linear previsions in $At(\underline{\Psi})$, then
\begin{eqnarray*}
\underline{P} = \inf A
\end{eqnarray*}
is a coherent lower prevision in $\underline{\Psi}$.
\end{enumerate}
\end{theorem}

\begin{proof}
If $\underline{P}$ is a coherent lower prevision, then $\tau(\underline{P})$ is a (strictly) coherent set of gambles. Since $\Phi$ is atomic, there is an atom $M$ such that $\tau(\underline{P}) \leq M$, hence $\underline{P} = \sigma(\tau(\underline{P})) \leq \sigma(M)$ and $\sigma(M)$ is atom in $\underline {P}$. So $\underline{\Psi}$ is atomistic.

We have further by the atomisticity of $\Phi$
\begin{eqnarray*}
\tau(\underline{P}) = \bigcap At(\tau(\underline{P}).
\end{eqnarray*}
By Lemma \ref{le:MaintainInf} we obtain
\begin{eqnarray*}
\underline{P} = \sigma(\bigcap At(\underline{P})) = \inf \sigma(At(\tau(\underline{P})).
\end{eqnarray*}
But $\sigma(At(\tau(\underline{P}))$ equals $At(\underline{P})$ since $\tau(\underline{P}) \subseteq M$ if and only if $\underline{P} \leq \sigma(M) = P$.

Finally let $D = \bigcap \tau(A)$ where $\tau(A) = \{\tau(P):P \in A\}$. This is a coherent set of gambles, since $\Phi$ is a complete lattice under inclusion. Thus $\sigma(D)$ is a coherent lower prevision $\underline{P}$ and (Lemma \ref{le:MaintainInf})
\begin{eqnarray*}
\sigma(D) = \underline{P} = \sigma(\bigcap \tau(A)) = \inf \sigma(\tau(A)) = \inf A
\end{eqnarray*}
and this concludes the proof.
\end{proof}

Note that these are well-known results for lower previsions \cite{walley91}. Since if $\underline{P} = \inf A$ implies that $A \subseteq At(\underline{P})$ this theorem says simply that the coherent lower prevision $\underline{P}$ is the lower envelope of $A$, and in particular of $At(\underline{P})$, that is of the linear previsions which dominate it. According to this theorem, if $A$ is any non-empty set of linear previsions on $\mathcal{L}(\Theta)$, then $\inf A$ exists and is a coherent lower prevision $\underline{P}$. Then we have $A \subseteq At(\underline{P})$ and 
\begin{eqnarray*}
\underline{P} = \inf A = \inf At(\underline{P}).
\end{eqnarray*}

As any atomistic information algebra, the algebra of lower previsions is embedded in the set algebra $At(\underline{\Psi})$ by the maps $\underline{P} \mapsto At(\underline{P})$, see Section \ref{subsec:SetAlgAtoms}. This rises the question how to characterize the images of $\underline{\Psi}$ in $At(\underline{\Psi})$. The answer is given by the weak* compactness theorem \cite{walley91}: The sets $At(\underline{P})$ for any coherent lower prevision are exactly the weak* compact convex subsets of $At(\underline{\Psi})$ in the weak* topology on $At(\underline{\Psi})$. Such sets are called credal sets. So, associated to the algebra of lower previsions $\underline{\Psi}$ there is an isomorphic information algebra of credal sets $At(\underline{P})$. There are many other sets $A$ of linear previsions with $\inf A = \underline{P}$. If $\underline{P} = \inf A$ and $A \subseteq B \subseteq At(\underline{P})$, then $\inf B = \underline{P}$. In fact, there is a minimal set $E \subseteq At(\underline{P})$ so that $\inf E = \underline{P}$ and this is the set of the extremal points of the convex set $At(\underline{P})$. This follows from the extreme point theorem \cite{walley91}. Finally, since the set algebra of $\Theta$ is embedded into the algebra $\Phi^+$, by the isomorphism to $\mathcal{P}$, it is also embedded into the latter algebra by the map of a subset $S$ of $\Theta$ to $\underline{P}(f) = \inf_{\theta \in S} f(\theta)$. 

By isomorphism, $\bar{\tau}(P)$ is an atom in $\bar{\Phi}$, the information algebra of almost desirable sets of gambles. For a linear prevision $\bar{\tau}(P) = \{f:P(f) \geq 0\} = \{f:-P(-f) \leq 0\}$. So, from $P(f) > 0$ we obtain $P(-f) < 0$. Therefore, together with the null function either $f$ or else $-f$ belong to $\bar{\tau}(P)$. This characterizes atoms in $\bar{\Phi}$. As before we conclude that $\bar{M}$ is an atom in $\bar{\Phi}$ if and only if $M$ is an atom in $\Phi$.

In conclusion, we have an embedding of $\Phi$, $\Phi^+$, $\bar{\Phi}$ and $\underline{\Psi}$ into the different set algebras of atoms $At(\Phi)$, $At(\Phi^+)$, $At(\bar{\Phi})$ and $At(\underline{\Psi})$. according to Corollary \ref{cor:AtomisticEmbed} by the maps $f : D,D^+,\bar{D},\underline{P} \mapsto At(D),At(D^+),At(\bar{D}),At(\underline{\Psi})$, where At denotes the corresponding sets of atoms.

Let's examine the embedding of $\Phi^+$ in $At(\Phi^+)$ a bit more in detail. In this case the extraction operators $\epsilon_x$ in $\Phi^+$, which are restrictions of the operator $\epsilon_x$ in $\Phi$ to $\Phi^+$, are associated with the saturation operators $\sigma_x$ corresponding to partitions $At_x$ defined by the relation $M^+_1 \equiv_x M^+_2$ iff $\epsilon_x(M^+_1) = \epsilon_x(M^+_2)$. Recall that the strictly desirables sets $\epsilon_x(M^+)$ are local atoms relative to $x$ in the information algebra $\Phi^+$. The following proposition shows how such local atoms are related to blocks $P_x$ in the set algebra of possibilities $\mathcal{S}_Q$.

\begin{proposition} \label{prop:LocAtom}
Let $B_x$ be any block of partition $P_x$ in $\Omega$, then there is an atom $M^+$ in $At(\Phi^+)$ such that
\begin{eqnarray*}
D^+_{B_x} = \epsilon_x(M^+). 
\end{eqnarray*}
\end{proposition}

\begin{proof}
First, note that $D^+_{B_x} \in \epsilon_x(\Phi^+)$. Indeed, we have $\epsilon_x(D^+_{B_x}) = D^+_{\sigma_x(B_x)} = D^+_{B_x}$ by Theorem \ref{th:EmbedSXetPsoos}. To show that $D^+_{B_x}$ is a local atom relative to $x$ in $\Phi^+$, we must prove that for every $D^+ \in \Phi^+$ such that $\epsilon_x(D^+) \geq D^+_{B_x}$ we have either $\epsilon_x(D^+) = D^+_{B_x}$ or $\epsilon_x(D^+) = \mathcal{L}(\Theta)$.

Assume on the contrary that there is $D^+ \in \Phi^+$ such that $\epsilon_x(D^+) >  D^+_{B_x}$ and $\epsilon_x(D^+) \not= \mathcal{L}(\Theta)$. There exists then a gamble $f \in \epsilon_x(D^+)$ such that $f \not\in D^+_{B_x}$, that is $\inf_{B_x} f \leq 0$. From the definition of $\epsilon_x(D^+) = posi((D^+ \cap \mathcal{L}_x) \cup \mathcal{L}^+)$ we conclude that either $f \in D^+ \cap \mathcal{L}_x$ or $f = g + h$ for some gamble $g \in D^+ \cap \mathcal{L}_x$ and $h \in \mathcal{L}^+(\Theta)$. In both cases we conclude that there is a gamble $g \in (D^+ \cap \mathcal{L}_x) \setminus D^+_{B_x}$ such that $f \geq g$. Since $g \in D^+$ there is a $\delta > 0$ such that $g - \delta \in D^+$ and since $g$ is $x$-measurable, hence constant on a block $B_x$, $g -\delta$ is so too, hence $g - \delta \in D^+ \cap \mathcal{L}_x$. From $g \leq f$ it follows that $g(\theta) - \delta < 0$ for all $\theta \in B_x$ and this implies that $-(g - \delta) \in D^+_{B_x} \subset \epsilon_x(D^+)$. But this is a contradiction since it implies $(g - \delta) - (g - \delta) =0 \in \epsilon_x(D^+)$. So, $D^+_{B_x}$ must indeed be a local atom in $x$, that is, there is an atom $M^+ \in At(\Phi^+)$ such that $D^+_{B_x} = \epsilon_x(M^+)$ and this concludes the proof.
\end{proof}

As a corollary we conclude that $\sigma(D^+_{B_x})$ is also a local atom in the algebra $\underline{\Psi}$ of lower previsions. 

As a complement we show in the next proposition, that the order between questions in $Q$ corresponds exactly to the order between partition $P_x$ induced by the equivalence relation $\equiv_x$, for $x \in Q$ in the set of possibilties.

\begin{proposition} \label{prop:PartOrder}
The identities $\epsilon_x = \epsilon_x\epsilon_y = \epsilon_y\epsilon_x$ hold if and only if $P_x \leq P_y$.
\end{proposition}

\begin{proof}
Assume first that $P_x \leq P_y$. Then $\mathcal{L}_x \subseteq \mathcal{L}_y$ and $\epsilon_x(D) = C(D \cap \mathcal{L}_x) \subseteq C(D \cap \mathcal{L}_y) = \epsilon_y(D)$ for any coherent set of gambles $D$, and so $\epsilon_x(D) \leq \epsilon_y(D)$ in information order. Recall that extraction operators $\epsilon_x$ preserve order, see Proposition \ref{prop:MonOfExtr}, $\epsilon_x(D)$ has support $D$ and $\epsilon_x(D),\epsilon_y(D) \leq D$. Therefore $\epsilon_x(D) = \epsilon_x(\epsilon_x(D)) \leq \epsilon_x(\epsilon_y(D)) \leq \epsilon_x(D)$ and so $\epsilon_x = \epsilon_x\epsilon_y$. Further, $\epsilon_x(D) \geq \epsilon_y(\epsilon_x(D)) \geq \epsilon_x(\epsilon_x(D)) = \epsilon_x(D)$, hence we conclude that $\epsilon_x(D) = \epsilon_y(\epsilon_x(D))$, hence $\epsilon_x = \epsilon_y\epsilon_x$.

Conversely, assume $\epsilon_x = \epsilon_x\epsilon_y = \epsilon_y\epsilon_x$ in $\Phi$. This identity holds also for the restrictions of $\epsilon_,,\epsilon_y$ to the image of the set algebra $\mathcal{S}_Q$ in $\Phi^+$ by the embedding. But then $\epsilon_x$ and $\epsilon_y$ correspond one-to.one to the saturation operators $\sigma_x$ and $\sigma_y$ of partitions $P_x$ and $P_y$. By inverting of the embedding, from $\epsilon_x = \epsilon_x\epsilon_y = \epsilon_y\epsilon_x$ we obtain therefore $\sigma_x = \sigma_x\sigma_y = \sigma_y\sigma_x$. But this implies $P_x \leq P_y$.  
\end{proof}

This shows that the order $x \leq y$ induced by $\Phi$ in $Q$ corresponds precisely to our information order between partitions of possibilities. The same holds also relative to the algebra $\underline{\Psi}$ of lower previsions. 

Furthermore, if $(P_Q,\leq)$ is a join-semilattice, then so is the order in $Q$ induced by $\epsilon_x = \epsilon_x\epsilon_y = \epsilon_y\epsilon_x$ and vice versa. This discussion can be extended also to the order between partitions $At_x$ of $At(\Phi)$ (or $At(\Phi^+)$, $At(\bar{\Phi})$, defined by $\epsilon_x(M) = \epsilon_x(M')$ between atoms $M$ and $M'$ of $\Phi$ (or $\Phi^+$, $\bar{\Phi}$)). We renounce to develop this subject here.

%%%%%%%%%%%%%%%%%%%%%%%%%%%%%%%%%%%%%%%%%%%%%%%%%%%%%%%%%%%

\section{Finite gambles}

In this section we show that all the information algebras related to imprecise probabilities are \textit{compact}, see Section \ref {subsec:CompInfAlg}. We start with an domain-free information algebra of coherent sets $(\Phi,\cdot,0,1;E)$ on a set of possibilities $\Theta$ and where $E = \{\epsilon_x:x \in Q\}$. We show first that \textit{finitely generated} generated coherent sets $\mathcal{C}(F)$, where $F$ is a finite subset of $\mathcal{L}(\Theta)$ are the finite elements in the algebra $\Phi$\footnote{Our notion of finitely generated coherent sets is not exactly the same as the one of finitely generated models of \cite{walley91}}. This fact is based on the constatation that the consequence operator $\mathcal{C}$ is \textit{algebraic}, \cite{daveypriestley97}. This means that for any subset $D$ of $\mathcal{L}(\Theta)$ we have that
\begin{eqnarray*}
\mathcal{C}(D) = \bigcup \{\mathcal{C}(F):F \subseteq D,\textrm{ finite}\}.
\end{eqnarray*}
if $D$ is coherent.

\begin{proposition}
The consequence operator $\mathcal{C}$ related to desirable sets of gambles on a set of possibilities $\Theta$ is algebraic.
\end{proposition}

\begin{proof}
Obviously we have $\mathcal{C}(D) \supseteq \bigcup \{\mathcal{C}(F):F \subseteq D,\textrm{ finite}\}$. Assume first that $D$ is coherent. Any gamble $f$ in $\mathcal{C}(F)$ is then either in $\mathcal{L}^+(\Theta)$ or greater than a finite linear combination $f \geq \lambda_1f_1 + \ldots + \lambda_nf_n$, $f_i \in F$, $\lambda_i \geq 0$ and not all $\lambda_i = 0$. In both cases $f$ belongs to $\mathcal{C}(\{f_1,\ldots,f_n\}$ so that indeed $\mathcal{C}(D) = \bigcup \{\mathcal{C}(F):F \subseteq D,\textrm{ finite}\}$. If $\mathcal{C}(D) = \mathcal{L}(\Theta)$, then there must be a combination  $\lambda_1f_1 + \ldots + \lambda_nf_n= 0$, of elements of $D$. But then $\mathcal{C}(\{f_1,\ldots,f_n\} = \mathcal{L}(\Theta)$ and again $\mathcal{C}(D) = \bigcup \{\mathcal{C}(F):F \subseteq D,\textrm{ finite}\}$. This concludes the proof.
\end{proof}

Now, in \cite{kohlas03} it has been shown that the information algebra induced by an algebraic consequence operator $\mathcal{C}$ is compact with $\mathcal{C}(F)$, $F$ finite, as finite elements. Although in $\cite{kohlas03}$ only the multivariate case is considered this result carries over to the present more general case and in particular to the information algebra of coherent sets of gambles.

\begin{theorem}
The information algebra $(\Phi,\cdot,0,1;E)$ of coherent sets of gambles is compact with finite elements $\Phi_f = \{\mathcal{C}(F):F \subseteq \mathcal{L}, \textrm{ finite}\}$.
\end{theorem}

\begin{proof}
We verify the defining conditions of a compact information algebra according to Section \ref{subsec:CompInfAlg}. Obviously the combination of two finitely generated coherent sets $\mathcal{C}(F_1) \cdot \mathcal{C}(F_2) = \mathcal{C}(F_1 \cup F_2)$ is still finitely generated. Note that the unit and null element $\mathcal{C}(\emptyset)$ and $\mathcal{L}(\Theta)$ are finitely generated too. So the \textit{Combination} property holds.

Let next $X$ be a directed set of finitely generated coherent sets of gambles in $\Phi_f$. We claim that the supremum $\bigsqcup X$ of this directed set equals $\bigcup X$. To prove this, we must show that $\mathcal{C}(\bigcup X) \subseteq \bigcup X$, because this implies that $\bigcup X$ is closed. So consider a gamble $f$ in $\mathcal{C}(\bigcup X)$. Since the the consequence operator $\mathcal{C}$ is algebraic, there is a finite set $F \subseteq \bigcup X$ such that $f \in \mathcal{C}(F)$. Note then  that every element of $F$ is in some of the closed sets of $X$. Since $X$ is directed, there must be a set $E \in X$ such that $F \subseteq E$. But then we conclude that $f \in \mathcal{C}(F) \subseteq \mathcal{C}(E) = E \subseteq \bigcup D$. This proves the inclusion $\mathcal{C}(\bigcup X) \subseteq \bigcup X$ and therefore $\bigsqcup X = \bigcup X$. This is the \textit{Convergence} property.

Consider an extraction $\epsilon_x(D)$ of a coherent set of gambles. Since $\mathcal{C}$ is algebric, we have
\begin{eqnarray*}
\epsilon_x(D) = \mathcal{C}(D \cap \mathcal{L}_x) = \bigsqcup \{\mathcal{C}(F):F \subseteq D \cap \mathcal{L}_x, F \textrm{ finite}\}.
\end{eqnarray*}
We claim that if $F$ is a set $x$-measurable gambles, then $\epsilon_x(\mathcal{C}(F)) = \mathcal{C}(\mathcal{C}(F) \cap \mathcal{L}_x) = \mathcal{C}(F)$. Indeed, $F \subseteq \mathcal{C}(F) \cap \mathcal{L}_x$, hence $\mathcal{C}(F) \subseteq \mathcal{C}(\mathcal{C(F)} \cap \mathcal{L}_x)$. On the other hand, $\mathcal{C}(F) \cap \mathcal{L}_x \subseteq \mathcal{C}(F)$, so $\mathcal{C}(\mathcal{C}(F) \cap \mathcal{L}_x) \subseteq \mathcal{C}(F)$ which establishes the identity. Using this result we obtain
\begin{eqnarray*}
\epsilon_x(D) \leq \bigsqcup \{\mathcal{C}(F):\mathcal{C}(F) \subseteq D,\epsilon_x(\mathcal{C}(F)) =\mathcal{C}(F)\} \leq \epsilon_x(\mathcal{C}(D)) =\epsilon_x(D).
\end{eqnarray*}
This shows that \textit{Local Density} holds.

Finally, if $X$ is a directed subset of coherent ses $\mathcal{C}(F)$, $F$ finite, and $D = \mathcal{C}(E) \subseteq \bigsqcup X$, $E$ finite, then $E \leq \mathcal{C}(E)$ and for all $f_i \in E$ there must be a finite set $F_i$ such that $\mathcal{C}(F_i) \in X$, hence, since $X$ is directed, there is a finite set $F$ such that $\mathcal{C}(F) \in X$ and $\mathcal{C}(F_i) \subseteq X$. But then $\mathcal{C}(E) \subseteq \cup_i\ \mathcal{C}(F_i) \subseteq \mathcal{C}(F)$ and this is compactness.
\end{proof}

Note that this is a general result concerning algebraic consequence operators and information algebras derived from them and not limited to the present case of an algebra of coherent sets of gambles \cite{kohlas03}. By standard methods from order theory \cite{daveypriestley97} we may derive some additional results. First, $\Phi$ is closed under the union of any directed sets of elements of $\Phi$. By the Theorem above, it is closed under union of directed sets of $\Phi_f$. Let $X$ be any directed set of coherent sets of gambles. If $F$ is a finite set and $F \subseteq \bigcup X$, then as in the proof above we infer that $F \subseteq D$ for some $D \in X$. It follows that
\begin{eqnarray*}
\lefteqn{\mathcal{C}(\bigcup X) = \bigcup \{\mathcal{C}(F):F \subseteq \bigcup X,F \textrm{ finite}\} } \\
&&= \bigcup \{\mathcal{C}(F):F \subseteq D \textrm{ for some}\ D \in X,F \textrm{ finite}\} \\
&&\subseteq \bigcup_{D \in X} \mathcal{C}(D).
\end{eqnarray*}
The reverse inclusion is always valid, so $\mathcal{C}(\bigcup X) = \bigcup X$. This implies that $\Phi$ is an \textit{algebraic} $\cap$-system, see Section \ref{subsec:Gambles} and \cite{daveypriestley97}. We recall also that the finitely generated coherent sets of gambles, the finite elements of $\Phi$, determine the algebra $\Phi$ of coherent sets fully, since the algebra is isomorphic to the algebra of ideals of finite sets, by general results about compact information algebras, see Theorem \ref{th:IdCompFiniteEl}.

It may be expected that finite elements $\mathcal{C}(F)$ in the information algebra of coherent sets of gambles map to finite elements $\sigma(\mathcal{C}(F))$ in the algebra of lower previsions. Further $\tau(\sigma(\mathcal{C}(F)))$ and $\bar{\tau}(\sigma(\mathcal{C}(F)))$ may be expected to be finite elements in the algebras of strictly and almost desirable sets of gambles. These ideas will be examined in the next section, using credal sets.

%%%%%%%%%%%%%%%%%%%%%%%%%%%%%%%%%%%%%%%%%%%%%%%%%%%%%%%%%%%

\section{Credal sets}

In this section we look at credal sets more closely. In Section \ref{subsec:ImpProbRepres} we referred to sets of atoms $At(\underline{P})$ of a coherent lower prevision on $\mathcal{L}(\Theta)$ as \textit{credal sets}. We recall that these are closed convex sets of linear previsions on $\mathcal{L}(\Theta)$ and $At(\underline{P}) = \{P \in At(\underline{P}):\underline{P} \leq P\}$. And the information algebra of these sets is isomorphic to the algebra of $\underline{\Psi}$ of lower previsions such that
\begin{enumerate}
\item $At(\underline{P}_1 \cdot \underline{P}_2) = At(\underline{P}_1) \cap At(\underline{P}_2)$,
\item $At(\epsilon_x(\underline{P})) = \sigma_x(At(\underline{P}))$.
\end{enumerate}
Here $\sigma_x$ denotes the saturation operator relative to the partition induced by the equivalence relation $P_1 \equiv_x P_2$ on $At(\underline{\Psi})$ if and only if $\epsilon_x(P_1) = \epsilon_x(P_2)$ for two atoms $P_1$ $P_2$. 

There are several characterizations of linear previsions, see \cite{walley81}. For our purpose the following one is most important.

\begin{theorem} \label{th:CharLinPrev}
The functional $P : \mathcal{L} \rightarrow \mathbb{R}$ is a linear prevision if and only if it satisfies
\begin{enumerate}
\item \textit{Linearity:} $P(f_1 + f_2) = P(f_1) + P(f_2)$,
\item \textit{Homogeneity:} $P(\lambda f) = \lambda P(f)$,
\item \textit{Positivity:} $f \geq 0$ implies $P(f) \geq 0$,
\item \textit{Unit norm:} $P(1) = 1$.
\end{enumerate}
\end{theorem}

\begin{proof}
Let $P$ be a linear prevision. Linearity and Homogeenity follow from the the properties $\underline{P}(f_1 + f_2) \geq \underline{P}(f_1) + \underline{P}(f_2)$ and $\underline{P}(\lambda f) = \lambda \underline{P}(f)$ of lower previsions (see Section \ref{subsec:LowPrev}) and the definition of a linear prevision $\underline{P}(f) = - \underline{P}(-f) = P(f)$ (Section \ref{subsec:ImpProbRepres}). Positivity follows from the property $P(f) \geq \underline{P}(f) \geq \inf_{\theta \in \Theta} f(\theta)$. Finally Unit norm is a consequence of 
\begin{eqnarray*}
P(1) = \underline{P}(1) =  \sup \{\mu:1 - \mu \in D\} = 1
\end{eqnarray*}
if $D$ is a coherent set of gambles.

Conversely, suppose that $P$ is a functional satisfying the properties of the theorem. Then we claim that $P(\mu) = \mu$ for any $\mu \in \mathbb{R}$. Assume first that $\mu > 0$. Then by Homogeneity and Unit norm we have $(1/\mu) P(\mu) = P((1/\mu) \cdot \mu) = P(1) =1$. If $\mu \leq 0$, then we have by Linearity $P(-\mu) + P(\mu) = P(0) = 0$, hence $P(\mu) = P(-\mu) = \mu$. Further, for $f \in \mathcal{L}$ let $\mu = \inf_{\theta \in \Theta} f(\theta)$. Consider then $P(f - \mu)$ where $f - \mu \geq 0$. Then it follows by positivity $P(f) = P(f - \mu) + P(\mu) \geq P(\mu) = \mu$, hence $P(f) \geq \inf_{\theta \in \Theta} f(\theta)$. If we add Linearity and Homogeneity, then $P$ satisfies all defining properties of lower prevision and is thus a lower prevision and we have also $P(f) + P(-f) = P(0) = 0$, hence $P(f) = - P(-f)$. This shows that $P$ is a linear prevision.
\end{proof}

Now a coherent lower prevision $\underline{P}$ is induced by some coherent set of gambles $D \in \mathcal{C}(\Theta)$, $\underline{P} = \sigma(D)$. If the linear prevision $P$ belongs to the credal set $At(\underline{P})$, that is $\underline{P}(f) \leq P(f)$ for all $f \in \mathcal{L}(\Theta)$, then in particular $P(f) \geq \underline{P}(f) \geq 0$ for all $f \in D$. Define the set
\begin{eqnarray*}
\mathcal{P}_D = \{P \in At(\underline{\Psi}):P(f) \geq 0 \textrm{ for all}\ f \in D\}.
\end{eqnarray*}
Then $At(\underline{P})$ equals the closed convex set $\mathcal{P}_D$, see Section \ref{subsec:ImpProbRepres} and \cite{walley91}.
If $D = \mathcal{C}(X)$ is a coherent set of gambles, then we have also that $\mathcal{P}_{\mathcal{C}(X)} = \mathcal{P}_X = \{P \in At(\underline{P}):P(f) \geq 0 \textrm{ for all}\ f \in X\}$. This follows since the gambles $f$ in $\mathcal{C}(X)$ dominate finite positive linear combinations of gambles from $X$ and $P(f)$ is a linear functional. 

Recall that the information algebra $\Phi$ is (weakly) homomorphic to the algebra $\underline{\Psi}$, which in turn is isomorphic to the algebra of closed convex sets in $\mathcal{L}(\Theta)$, that is a subset algebra of $At(\underline{\Psi})$, see Sections \ref{subsec:LowPrev} and \ref{subsec:SetAlgAtoms}. This implies that the map $D \mapsto At(\sigma{D}$ is also a (weak) homomorphism, so that
\begin{enumerate}
\item $At(\sigma(D_1 \cdot D_2)) = At(\sigma(D_1)) \cap At(\sigma(D_2)$ if $D_1 \cdot D_2 \not= 0$,
\item $At(\sigma(\epsilon_x(D))) = \sigma_x(At(\sigma(D))$.
\end{enumerate}

After these preparations, we are going to look for finite elements in the subset algebra of closed convex sets in $At(\underline{\Psi})$, that is the image $Im(\underline{\Psi})$ of $\underline{\Psi}$ under the map $\underline{P} \mapsto At(\underline{P})$. In view of the the homomorphism $D \mapsto At(\sigma{D})$ and the fact that $At(\sigma(D))$ is generated by $\mathcal{P}_D = \mathcal{P}_X$, if $D = \mathcal{C}(X)$, it seems plausible to define finite elements in  $At(\underline{\Psi})$ as those which are the closed convex hull of sets $\mathcal{P}_F$, where $F$ is a finite set of gambles. So, define
\begin{eqnarray*}
At_f(\underline{\Psi}) = \{\mathcal{P}_F \in At(\underline{\Psi}):F \subseteq \mathcal{L}, \textrm{ finite set}\}.
\end{eqnarray*}
Then we have the following theorem.

\begin{theorem} \label{th:FiniteAtoms}
The subset $At_f(\underline{\Psi})$ of $At(\underline{\Psi})$ is the set of finite elements of the information algebra $At(\underline{\Psi})$ and this algebra is compact.
\end{theorem}

\begin{proof}
We verify thar $At_f(\underline{\Psi})$ satisfies the four defining properties Combination, Convergence, Local Density and Compactness of finite elements, see Section \ref{subsec:CompInfAlg}.

a) Combination. Consider two elements $\mathcal{P}_{F_1}$ and $\mathcal{P}_{F_2}$ in $At_f(\underline{\Psi})$. Assume first that $F_1$ and $F_2$ are contradictory, that is $C(F_1) \cdot C(F_2) = C(F_1 \cup F_2) = \mathcal{L}$. Then we have
\begin{eqnarray*}
\mathcal{P}_{F_1 \cup F_2} = \mathcal{P}_{C(F_1 \cup F_2)} = \emptyset, \quad \mathcal{P}_{F_1 \cup F_2} = \mathcal{P}_{F_1} \cap \mathcal{P}_{F_2} = \emptyset. 
\end{eqnarray*}
If, on the other hand $C(F_1 \cup F_2) = C(F_1) \cdot C(F_2)$ is coherent, then by weak homomorphism
\begin{eqnarray*}
\mathcal{P}_{F_1 \cup F_2} = \mathcal{P}_{C(F_1) \cup F_2)}  = \mathcal{P}_{C(F_1)} \cap \mathcal{P}_{C(F_2)} 
= \mathcal{P}_{F_1} \cap \mathcal{P}_{F_2}
\end{eqnarray*}
Since the empty set is the null element of the set algebra $At(\underline{\Psi})$ which belongs to $At_f(\underline{\Psi})$ and the set $F_1 \cup F_2$ is finite, hence $\mathcal{P}_{F_1 \cup F_2} \in At_f(\underline{\Psi})$, this proves the \textit{Combination} property.

b) Convergence. Let $\mathcal{D}$ be a directed set of elements in $At_f(\underline{\Psi})$. Define the set
\begin{eqnarray*}
G = \bigcup \{F:\mathcal{P}_F \in \mathcal{D}\}.
\end{eqnarray*}
Consider now
\begin{eqnarray*}
\mathcal{P}_G = \{P \in \mathcal{P}:P(f) \geq 0 \textrm{ for all} f \in G\}.
\end{eqnarray*}
If $P \in \mathcal{P}_G$, then we have in particular $P(f) \geq 0$ for all $f \in F$ for any $F$ such that $\mathcal{P}_F \in \mathcal{D}$. This means that $\mathcal{P}_G$ is an upper bound of $\mathcal{D}$. Recall that in a set algebra, information order is the inverse of inclusion so that $\mathcal{P}_G \subseteq \mathcal{P}_F$ for all $\mathcal{P}_F \in \mathcal{D}$. Now, if $\mathcal{P}_G = \emptyset$, then there must be contradictory elements in $\mathcal{D}$ and therefore $\bigsqcup \mathcal{D} = \bigcap \mathcal{D} = \emptyset$. Otherwise consider any upper bound $\mathcal{P}_D \not= \emptyset$ of $\mathcal{D}$, where $D$ is a coherent set of gambles. Then $P \in \mathcal{P}_D$ implies $P(f) \geq 0$ for all $f \in F$ for any $F$ such that $\mathcal{P}_F \in \mathcal{D}$. But this implies $P \in \mathcal{P}_G$, hence $\mathcal{P}_D \geq \mathcal{P}_G$. This shows that $\mathcal{P}_G$ is the supremum of $\mathcal{D}$, and this proves that the \textit{Convergence} property holds.

c) Local density. Consider any credal set $\mathcal{P}_D$, where $D$ is a coherent set of gambles, and such that $\sigma_x(\mathcal{P}_D) = \mathcal{P}_D$. In addition consider the set
\begin{eqnarray*}
A = \{\mathcal{P}_F \in At_f\underline{\Psi}):\sigma_x(\mathcal{P}_F) = \mathcal{P}_F \leq \mathcal{P}_D\}.
\end{eqnarray*}
Clearly we have $\mathcal{P}_D \geq \bigsqcup A$. Recall that the map $\Phi^+ \rightarrow At(\underline{\Psi})$ defined by $D^+ \mapsto \mathcal{P}_{D^+}$ is an isomorphism and so is also the inverse map $\mathcal{P}_{D^+} = \mathcal{P}_D \mapsto D^{+}$. Consider now the image $D^+$ of $\mathcal{P}_D$ under this map. Since $D^+ \in \Phi$ and $\Phi$ is a compact information algebra, we have by local density in $\Phi$
\begin{eqnarray*}
D^+ = \bigsqcup B' \textrm{ with}\ B' = \{C(F) \in \Phi_f:\epsilon_x(C(F)) = C(F) \leq D^+\}.
\end{eqnarray*}
Note that by isomorphism from $\mathcal{P}_{D^+} = \mathcal{P}_D = \sigma_x(\mathcal{P}_D) = \sigma_x(\mathcal{P}_{D^+})$ it follows that $D^+ = \epsilon_x(D^+)$. Let's map the set $B'$ to $At(\underline{\Psi})$, which gives the set
\begin{eqnarray*}
B = \{\mathcal{P}_F \in At_f(\underline{\Psi}): \sigma_x(\mathcal{P}_F) = \mathcal{P}_F \leq \mathcal{P}_D\}
\end{eqnarray*}
since this map is a weak homomorphism. We claim that that $\bigsqcup B'$ maps to $\bigsqcup B$. Obviously, $\mathcal{P}_D = \mathcal{P}_{D^+}$ is an upper bound of $B$. Consider any upper bound $\mathcal{P}_{D'}$ of  $\mathcal{P}_D$. We have again $\mathcal{P}_{D'} = \mathcal{P}_{D'^+}$. Then by isomorphism, $D' \geq D'^+ \geq D^+$ and therefore $\mathcal{P}_{D'} \geq \mathcal{P}_D^+$. This shows that $\mathcal{P}_D = \mathcal{P}_{D^+}$ is indeed the supremum of $B$. Now $B \subseteq A$, hence $\mathcal{P}_D \leq \bigsqcup A$. Since the inverse inequality is valid too, we have finally $\mathcal{P}_D = \bigsqcup A$ and this in \textit{Local Density} for $At_f(\underline{\Psi})$ in $At(\underline{\Psi})$.

d) Compactness. Consider an element $\mathcal{P}_F$ of $At_f(\underline{\Psi})$ such that $\mathcal{P}_F \leq \bigsqcup \mathcal{D}$, where $\mathcal{D}$ is a directed set in $At_f(\underline{\Psi})$. By Convergence $\bigsqcup \mathcal{D}$ exists and $\mathcal{P}_G = \bigsqcup \mathcal{D}$, where $G$ is as in a) above the union of all finite sets $F'$ such that $\mathcal{P}_F' \in \mathcal{D}$. Now, since $F = \{f_1,\ldots,f_m\}$ for some integer $m$, there is a subset $F'_i$ which contains $f_i$ for $i = 1,\ldots,m$ and since $\mathcal{D}$ is directed, there is a set $F' \in \mathcal{D}$ which contains all $F_i$. But then $\mathcal{P}_F \leq \mathcal{P}_{F'}$. This shows that the \textit{Compactness} property holds for $At_f(\underline{\Psi})$.

In summary a) to d) show that $At_f(\underline{\Psi})$ represents indeed the finite elements in $At(\underline{\Psi})$ and this information algebra is therefore compact.
\end{proof}

This theorem allows us to determine the finite elements of the isomorphic algebras $\underline{\Psi}$ of lower previsions, of strictly desirable gambles $\Phi^+$ and of almost desirable gambles $\bar{\Phi}$. The  finite elements in these information algebras are simply the image of $At_f(\underline{\Psi})$ by the corresponding isomorphisms, see proposition \ref{prop:IsomCompactAlg}.

Consider first lower previsions. The inverse map to the isomorphism $\underline{P} \mapsto At(\underline{P})$ is given by $\underline{P} = \inf At(\underline{P})$. In particular, the finite elements $\underline{\Psi}_f$ of the compact  information algebra of lower previsions are determined by
\begin{eqnarray*}
\underline{P}(f) = \inf\{P(f):P(f) \geq 0 \textrm{ for all}\ f \in F\}
\end{eqnarray*}
where $F$ is a finite set of gambles. It is well-known that in the case of a finite set of possibilities $\Theta = \{\theta_1,\ldots,\theta_n\}$ this reduces to a problem of linear programming, \cite{walley91}. Consider $F = \{f_1,\ldots,f_m\}$ a finite set of gambles on $\Theta$. In this case $\mathcal{L}$, the linear space of gambles is simply a vector space $\mathbb{R}^n$, where a gamble $f$ is represented by the $n$-vector $(f(\theta_1),\ldots,f(\theta_n))$. The dual space of linear functionals containing linear previsions is equally a vector space $\mathbb{R}^n$ and a linear prevision $P$ is given by the vector $(p(\theta_1),\ldots,p(\theta_n)$ and $P(f)$ is simply the scalar product 
\begin{eqnarray*}
P(f) = \sum_{j = 1}^n p(\theta_j)f(\theta_j).
\end{eqnarray*}
So, if $f_i(\theta_j) = f_{i,j}$ and $p(\theta_j) = p_j$, then $P(f) \geq 0$ gives the following system of linear inequalities
\begin{eqnarray*}
\sum_{j=1}^n f_{i,j}p_j, \quad i = 1,\ldots,m.
\end{eqnarray*}
In addition we have the Unit norm $P(1) = 1$ and Positivity $P(f) \geq 0$ if $f \geq 0$. The former condition is
\begin{eqnarray*}
\sum_{j=1}^n p_j = 1,
\end{eqnarray*}
and the later condition translates for the gambles $g_{i,j} = \delta_{i,j}$ into
\begin{eqnarray*}
p_j \geq 0 \textrm{ for all}\ j = 1,\ldots,n.
\end{eqnarray*}
These two last conditions define a simplex in $\mathbb{R}^n$ and the whole system of linear inequalities a convex polyhedron, contained in the simplex. So these polyhedron represent the finite elements in the algebra of credal sets relative to finite sets of possibilities $\Theta$. A lower prevision $\underline{P}$ defines a credal sets $At(\underline{P})$ which is a closed convex subset of the simplex. The finite credal sets approximating this convex set are the polyhedron in the simplex containing $At(\underline{P})$. Finally $P(f)$ can be obtained as 
\begin{eqnarray*}
min \sum_{j=1}^n f_jp_j
\end{eqnarray*}
under the system of linear inequalities defined above. This is a classical problem of linear programming.

Let us now consider the compact information algebras of strictly desirable gambles $\Phi^+$ and of almost desirable gambles $\bar{\Phi}$. What are the finite elements in these two algebras? Note that the inverse map of the isomorphism $D^+ \mapsto At(\sigma(D^+))$ is defined by $\tau(\sigma(D^+))$ since the credal set $At(\sigma(D^+))$ maps to the associated lower prevision $\sigma(D^+)$. In the case of strictly desirable gambles, the inverse of the map $D^+ \mapsto At(\sigma(D^+))$ is also given by
\begin{eqnarray*}
\tau(\sigma(D^+)) = \{f:P(f) > 0 \textrm{ for all}\ P \in At(\sigma(D^+))\} \cup \mathcal{L}^+
\end{eqnarray*}
if $P$ is a linear prevision, as noted in Section \ref{sec:DiffAlgs}.

Consider now a finite element $\mathcal{P}_F$ where $F$ is a set of gambles in the algebra of credal sets. This set maps then to
\begin{eqnarray*}
\{f:P(f) > 0 \textrm{ for all}\ P \in \mathcal{P}_F\} \cup \mathcal{L}^+
\end{eqnarray*}
These sets are finite elements in the information algebra $\Phi^+$. Assume that $\mathcal{P}_F = \mathcal{P}_D$ for a coherent set of gambles. Then we have $At(\sigma(D)) = At(\sigma(D^+))$, where $D^+ = \tau(\sigma(D))$ is the set of strictly desirable gambles associate with $D$. So we have that the finite elements $D^+ = \{f:P(f) > 0 \textrm{ for all}\ P \in \mathcal{P}_F\} \cup \mathcal{L}^+$ in $\Phi_f^+$ are the strictly desirable  gambles associated with finite elements $D \in \Phi_f$. .

A similar result holds for almost desirable gambles. For any credal set $\mathcal{P}$, the set 
\begin{eqnarray*}
\{f:P(f) \geq 0 \textrm{ for all}\ P \in \mathcal{P}_F\}
\end{eqnarray*}
is an almost desirable set and belongs to the finite elements $\bar{\Phi}_f$ of the compact algebra of almost desirable gambles as the image of a finite credal set.  As before, if  $\mathcal{P}_F = \mathcal{P}_D$ for a coherent set of gambles, then $\bar{D} = \{f:P(f) \geq 0 \textrm{ for all}\ P \in \mathcal{P}_F\}$ is the almost desirable set $\bar{\tau}(\sigma(D))$ corresponding to the coherent set $D$. So the almost desirable sets $\bar{D} \in \bar{\Phi}_f$ are the almost desirable sets corresponding to finite sets $D \in \Phi_f$. This completes the picture or the different information algebras related to imprecise probability.

%%%%%%%%%%%%%%%%%%%%%%%%%%%%%%%%%%%%%%%%%%%%%%%%%%%%%%%%%%%%%

\chapter{Non-idempotent information algebras} \label{sec:nonidempotent}

%%%%%%%%%%%%%%%%%%%%%%%%%%%%%%%%%%%%%%%%%%%%%%%%%%%%%%%%%%%%%%

\section{Valuation algebras}

In many cases it does make sense to drop the idempotency requirement $\epsilon_x(\phi) \cdot \phi = \phi$ and thus in particular also $\phi \cdot \phi = \phi$. Whereas it seems generally reasonable to assume that repeating the same piece of information gives nothing new, one may take a more liberal view on information: Assume that the information transmitted by a sensor is signal of alarm $\phi$, then assuming that such a sensor may also fail and give false alarms, obtaining the same alarm signal $\phi$ from a second, independent sensor, then combining the two signals, $\phi \cdot \phi$ may be different from $\phi$, in fact, more informative than a single signal of alarm. 

As before let $\Phi$ denote a set of elements, which can be (in some sense) considered as pieces of information and $Q$ set of elements representing different questions. Again as before we assume two operations in $\Phi$, combination and extraction,
\begin{enumerate}
\item \textit{Combination:} $\cdot : \Phi \times \Phi \rightarrow \Phi$, $(\phi,\psi) \mapsto \phi \cdot \psi$,
\item \textit{Extraction:} $\epsilon : \Phi \times Q \rightarrow \Phi$, $(\phi,x) \mapsto \epsilon_x(\phi)$.
\end{enumerate}
On these elements, we impose the following requirements,
\begin{enumerate}
\item \textit{Semigroup:} $(\Phi,\cdot)$ is a commutative semigroup with a unit $1$,
\item \textit{Extraction:} for all $\phi,\psi \in \Phi$ and $x \in Q$, we have
\begin{eqnarray*}
\epsilon_x(\epsilon_x(\phi) \cdot \psi) = \epsilon_x(\phi) \cdot \epsilon_x(\psi),
\end{eqnarray*}
\item \textit{Unit:} for all $x \in Q$, $\epsilon_x(1) = 1$,
\item \textit{Support:} For all $\phi \in \Phi$ there is a $x \in Q$ such that $\epsilon_x(\phi) = \phi$.
\end{enumerate}
This is a reduct of the axioms imposed on information algebras in the previous part, see Section \ref{sec:Basics}. A system $(\Phi,\cdot,1;E)$ where $E = \{\epsilon_x : x \in Q\}$ is called a  (domain-free) \textit{valuation algebra}, since it depends on conditions similar to those studied in \cite{kohlas03} and in particular to the axiomatic system proposed by \cite{shenoyshafer90}. Note that we do not necessarily require a null in the semigroup $(\Phi,\cdot)$. The existence of a unit together with the extraction axiom implies that any $x \in Q$ is a support of $\epsilon_x$ and also that the combination of two elements with support $x$ still have support $x$ (Lemma \ref{le:Supp1} in Section \ref{subsec:Basics})

Since idempotency is no more required, we can not define an information order as in Section \ref{sec:InfOrder}. Neverheless we shall see that an order can be defined even in this case, see Section \ref{subsec:NOnIdOrder}. On the other hand, order between questions and conditional independence between questions can still be defined as before, Section \ref{sec:StructOfQuest}, that is, $x \leq y$ if and only if $\epsilon_x = \epsilon_x\epsilon_y = \epsilon_y\epsilon_x$. Correspondingly, Lemma \ref{le:Supp2} is still valid. Similarly, the conditional independence relation $x \bot y \vert z$ between questions can still be defined by $\epsilon_{y \vee z}\epsilon_{x \vee z} = \epsilon_z$ and $\epsilon_{x \vee z}\epsilon_{y \vee z} = \epsilon_z$ assuming that the order in $Q$ defines a join-semilattice. This relation is a q-separoid and Theorem \ref{th:CombExtrProp} is still valid. All these items do not depend on idempotency.  Also a valuation algebra is called \textit{commutative} if $\epsilon_x\epsilon_y = \epsilon_y\epsilon_x$ for all pairs of questions $x$ and $y$. 

If the order $(Q,\leq)$ defined a join-semilattice, we may also derive the labeled version of a valuation algebra, as in Section \ref{subsec:LabInfAlg}. That is, we consider pairs $(\phi,x)$ with $\phi \in \Phi$ and $x \in Q$ such that $x$ is a support of $\phi$. Combination and transport are defined as in Section \ref{subsec:LabInfAlg}. This leads to exactly the same axioms as in Section \ref{subsec:LabInfAlg}, but without the idempotency axiom and not necessarily with a null element. So we have \textit{labeled} valuation algebras $(\Psi,\cdot,T)$ with $T = \{t_x:x\in Q\}$, the family of transport operators, satisfying the following axioms:
\begin{enumerate}
\item \textit{Semigroup:} $(\Psi,\cdot)$ is a commutative semigroup.
\item \textit{Transport:} 
\begin{enumerate}
\item For all pairs $x,y \in Q$ exists a $z = x \vee y \in Q$ such that $t_x = t_xt_z$ and $t_y = t_yt_z$,
\item for all $u \in Q$, $t_x = t_xt_u$ and $t_y = t_yt_u$ imply $t_z = t_zt_u$,
\item for all pairs $x,y \in Q$, $t_x = t_xt_y$ and $t_y = t_yt_x$ jointly imply $x = y$.
\end{enumerate}
\item \textit{Labeling:} $d(\phi \cdot \psi) = d(\phi) \vee d(\psi)$, $d(t_x(\psi)) = x$.
\item \textit{Unit:} For all $x \in Q$ the semigroups $(\Psi_x,\cdot)$ have a unit element $1_x$ and $t_y(1_x) = 1_y$ for all $x,y \in Q$.
\item \textit{Combination:} For all $\phi,\psi \in \Psi$ and $x \in Q$, if $d(\phi) = x$, then $t_x(\phi \cdot \psi) = \phi \cdot t_x(\psi)$.
\item \textit{Identity:} For all $x \in Q$ if $d(\psi) = x$, then $t_x(\psi) = \psi$.
\end{enumerate}

Note that we need to add condition $t_y(1_x) = 1_y$ as an axiom, in the idempotent case, this follows from idempotency. Then Lemma \ref{le:ElPropLabInfAlg} is still valid and we have also $1_{x \vee y} = 1_x \vee 1_y$ (Lemma \ref{le:UnitNull}). Sometimes a null element is present, then it will satisfy the same conditions as in the idempotent case, see also \cite{kohlas03}. In addition, conditional independence among questions in $Q$ can be defined based on transport operations as in the idempotent case, Section \ref{subsec:LabInfAlg}, and Theorem \ref{the:CombExtrProplab} holds again. Finally, we may characterize non-idempoten labeled valuation algebras in a second way as idempotent ones, see Section \ref{subsec:LabInfAlg}. From such a labeld valuation algebra we may reconstruct a domain-free one as in the idempotent case, see Section \ref{subsec:Dual}. It follows that local computation still works, see Section \ref{sec:LocComp}, except for the method described in Section \ref{subsec:CompHypTree}.

There is also a commutative version of a labeled valuation algebra. The transport operstions $t_x$ can be replaced by projection operators $\pi_x$ defined for $x \leq d(\phi)$ only. Its axioms are like in the idempotent case, without idempotency. If as usual $\Psi_x$ is the set of all element of $\Psi$ with label $x$,
\begin{enumerate}
\item \textit{Semigroup:} $( \Psi,\cdot)$ is a commutative semigroup.
\item \textit{Lattice:} $(Q,\leq)$ is a lattice.
\item \textit{Labeling:} $d(\phi \cdot \psi) = d(\phi) \vee d(\psi)$, $d(\pi_y(\psi)) = y$ if $y \leq d(\psi)$.
\item \textit{Unit:} For all $x \in Q$, the semigroups $(\Psi_x,\cdot)$ have a unit element $1_x$ and for all $y \leq x \in Q$, $\pi_y(1_x) = 1_y$ and $1_x \cdot 1_y = 1_{x \vee y}$.
\item \textit{Projection:} If $x \leq y \leq z = d(\psi)$, then $\pi_x(\psi) = \pi_x(\pi_y(\psi))$.
\item \textit{Combination:} If $d(\phi) = x$ and $d(\psi) = y$, then $\pi_x(\phi \cdot \psi) = \phi \cdot \pi_{x \wedge y}(\psi)$.
\item \textit{Identity:} If $x = d(\psi)$, then $\pi_x(\psi) = \psi$.
\end{enumerate}
The condition $\pi_y(1_x) = 1_y$ if $y \leq x$ is called \textit{stabilioty}. There are important instances where stability does not hold, see the example below. Then, however, the labeled valuation algebra has no associated dual domain-free valuation algebra. This is then essentially the axiomatic system proposed in \cite{shenoyshafer90}. The prototype of such a valuation algebra is presented in the following example, it is an abstraction of Bayesian networks, \cite{lauritzenspiegelhalter88}.

\begin{example} \textbf{Probability potentials:} \label{expl:ProbPot}
Consider a multivariate model with a (finite) set $X_j$ of variables, $i \in J$. If $U_s = \prod_{j \in s} U_i$ is the domain of the set $X_i$, $i \in s \subseteq J$ and the $U_j$ are finite sets, then a non-negative, non-null function $p : U_s \rightarrow \mathbb{R}$ is called a probability potential on domain $U_s$. We label it with $s$, $d(p) = s$. Combination and projection are defined as
\begin{enumerate}
\item \textit{Combination:} If $d(p_1) ) = s$ and $d(p_2) = t$, then for a tuple $x \in U_{s \cup t}$,
\begin{eqnarray*}
p_1 \cdot p_2(x) = p_1(x \vert s)p_2(x \vert t),
\end{eqnarray*}
where $x \vert s$ and $x \vert t$ are the restriction of tuple $x$ to subsets $s$ and $t$ of components.
\item \textit{Projection:} If $d(p) = s$ and $t \subseteq s$, then if $x$ and $y$ are tuples in $U_t$ and $U_s$ repectively
\begin{eqnarray*}
\pi_t(x) = \sum_{y:y \vert t = x} p(y).
\end{eqnarray*}
\end{enumerate}
Probability potentials are called so, since they may be normalized to probability distribution on the domains $U_s$. Then, projection is seen to be essentially marginalization. We refer to \cite{shafer96,kohlas03} for a discussion how this system relates to probabilistic reasoning and also for a proof that probability potentials with these operations form a valuation algebra. Obviously it is not idempotent. The unit of combination on domain $U_s$ is the function $p(x) = 1$ for all $x \in U$. Stability clearly does not hold, so there is no associated domain-free version. 
\end{example}

In conclusion, so far, the theory of non-idempotent valuation algebras differs not much from idempotent information algebras. The big difference comes with the definition and exploitation of information order, Section \ref{sec:InfOrder}, which depends on idempotency. This concerns especially extensions, Section \ref{sec:ExtInfAlg}, and atoms, Section \ref{sec:AtomicAlg} and also the whole question of finiteness, Section \ref{sec:FiniteInf}. In Section \ref{subsec:NOnIdOrder}, we shall see how we can introduce an order also in (some cases of) non-idempotent valuation algebras. This order needs some additional structures which are presented in Sections \ref{subsec:RegularAlg} and \ref{subsec:SeparativeAlg}. These refined structures allow then for an interesting concept, \textit{continuation}, a concept which is uninteresting in the idempotent case, Section \ref{subsec:Continuation}:

%%%%%%%%%%%%%%%%%%%%%%%%%%%%%%%%%%%%%%%%%%%%%%%%%%%%%%%%%%%%%%

\section{Regular algebras} \label{subsec:RegularAlg}

Order in semigroup theory has been studied in several papers, we cite only two of them, \cite{nambooripad80,mitsch86}. These papers study natural order, that is an order, which can be defined in terms of the operations of the semigroup. This is surely desirable. Of particular interest for these theories are \textit{regular} semigroups. In the context of valuation algebras, such regular semigroups or rather the generalisation of them to valuation algebras, turns out to be of interest in two respects: They allow to introduce partial division into the algebra, which allows to adapt local computation architectures known for Bayesian networks to valuation algebras \cite{lauritzenjensen97,kohlas03}. Secondly, this division permits also to generalise conditioning, as known in probability, to valuation algebras \cite{kohlas03}, see Section \ref{subsec:Continuation}.. Further, as we shall see in Section \ref{subsec:NOnIdOrder}, this is relevant for information order too.

We summarise here the theory of regular semigroups and adapt it to \textit{valuation algebras}, generalizing the theory of regular valuation algebras in \cite{kohlas03}. We start with the definition of regularity in valuation algebras. We do this in the domain-free case, although it could also be done in the labeled one.

\begin{definition} \textit{Regular Valuation Algebras:} \label{def:RegInfAlg}
Let $(\Phi,\cdot,1,E)$ be a domain-free valuation algebra. An element $\phi \in \Phi$ is called \textit{regular}, if for all $x \in D$ there is an element $\chi \in \Phi$ with support $x$ such that
\begin{eqnarray} 
\phi = \epsilon_{x}(\phi) \cdot \chi \cdot \phi.
\end{eqnarray}
The information algebra $(\Phi,\cdot,1:E)$ is called regular, if all its elements are regular.
\end{definition}

Note that the unit element $1$ is regular. Of course, the element $\chi$ above in the definition of regularity depends both on $x$ and $\psi$, although we do not express this dependence explicitly. If $y$ is a support of $\psi$, then regularity implies also
\begin{eqnarray} 
\psi = \psi \cdot \chi \cdot \psi.
\end{eqnarray}
This is exactly the definition of regularity in a semigroup $(\Phi;\cdot)$ and establishes the link to semigroup theory, see for example \cite{cliffordpreston67} and the work cited above. Note that in these references semigroups are not assumed to be commutative, as is the case here.  

In this section we assume that $(\Phi,\cdot,1,E)$ is regular. Two elements $\phi$ and $\psi$ from $\Phi$ are called inverses, if
\begin{eqnarray}
\phi = \phi \cdot \psi \cdot \phi \textrm{ and}\ \psi = \psi \cdot \phi \cdot \psi
\end{eqnarray}
We keep with the notation in the literature, although in our commutative case we could also have written $\phi = \phi \cdot \phi \cdot \psi,\ldots$.  

The following results are well-known from semigroup theory (see for instance \cite{kohlas03}): If $\phi = \phi \cdot \psi \cdot \phi$, then $\phi$ and $\psi \cdot \phi \cdot \psi$ are inverses. Each element of a regular semigroup has thus an inverse, and this inverse is unique. If $\phi$ and $\psi$ are inverses, then $f = \phi \cdot \psi$ is an idempotent element, $f \cdot f = f$. Further we have $f \cdot \phi = \phi$ and $f \cdot \psi = \psi$. If $S$ is a subset of $\Phi$, define $\psi \cdot S$ to be the set $\{\psi \cdot \phi:\phi \in S\}$. There exists for any $\psi \in \Phi$ a unique idempotent $f_{\psi}$ such that  $\psi \cdot \Phi = f_{\psi} \cdot \Phi$, since if $\phi$ and $\psi$ are inverses, $\phi \cdot \psi = f_\psi$ implies $\psi = f_\psi \cdot \psi$. The \textit{Green relation} is defined as
\begin{eqnarray}
\phi \equiv_{\gamma} \psi \textrm{ if}\ \phi \cdot \Phi = \psi \cdot \Phi.
\end{eqnarray}
It is an equivalence relation in $\Phi$. Its equivalence classes $[\psi]_{\gamma}$ are obviously \textit{commutative groups} for all $\psi \in \Phi$ \cite{kohlas03}. So $\Phi$ is a union of disjoint groups. The unit element of the group $[\phi]_{\gamma}$ is the idempotent $f_{\phi}$ and for any $\psi \in [\phi]_\gamma$ its inverse in the semigroup is the inverse in $[\phi]_\gamma$.

Consider now the idempotents $F = \{f_{\psi}:\psi \in \Phi\}$. They form an idempotent sub-semigroup of $(\Psi;\cdot)$. According to Section \ref{sec:InfOrder} they are partially ordered by $f_{\phi} \leq f_{\psi} $ if $f_{\phi} \cdot f_{\psi} = f_{\psi}$, just as in information order. The unit $1$ and (the null element $0$ if present) are idempotents. So, the idempotents $F$ form a semilattice where $f_{\phi} \cdot f_{\psi} = f_{\phi} \vee f_{\psi}$. Further, we have also
\begin{eqnarray}
f_{\phi} \cdot f_{\psi} = f_{\phi \cdot \psi}.
\end{eqnarray}
Since the idempotents $f_{\phi}$ uniquely represent their class $[\phi]_{\gamma}$, we may also define a partial order among classes by $[\phi]_{\gamma} \leq [\psi]_{\gamma}$ if $f_{\phi} \leq f_{\psi}$. Then we obtain
\begin{eqnarray}
[\phi \cdot \psi]_{\gamma} = [\phi]_{\gamma} \vee [\phi]_{\gamma}.
\end{eqnarray}

Note that in an \textit{idempotent} semigroup, as for instance in information algebras, any element is its own inverse and the groups $[\phi]_\gamma$ degenerate to trivial single-element groups. So, the theory of regular semigroups is not of interest  for information algebras.

So far, this is essentially semigroup theory. We now consider extraction and extend thus this theory to valuation algebras. Here is a first important result:

\begin{theorem} \label{th:GreenCongr}
Let $(\Phi,\cdot,1;E)$ be a regular valuation algebra. The Green relation $\equiv_{\gamma}$ is a congruence relative to combination and extraction in the algebra $\Phi,$
\end{theorem}

\begin{proof}
The relation $\equiv_{\gamma}$ is an equivalence relation. If $\phi \equiv_{\gamma} \psi$, then $[\phi]_{\gamma} = [\psi]_{\gamma}$. Consider any element $\eta$ of $\Phi$. Then $[\phi]_{\gamma} \vee [\eta]_{\gamma} = [\psi]_{\gamma} \vee [\eta]_{\gamma}$, hence $[\phi \cdot \eta]_{\gamma} = [\psi \cdot \eta]_{\gamma}$ and thus $\phi \cdot \eta \equiv_{\gamma} \psi \cdot \eta$.

Assume again $\phi \equiv_{\gamma} \psi$ such that $\phi \cdot \Phi = \psi \cdot \Phi$,  and consider the operator $\epsilon_{x}$. From $\phi \in \phi \cdot \Psi$ we conclude that $\phi = \psi \cdot \chi$ for some $\chi \in \Phi$ and therefore $\epsilon_{x}(\phi) = \epsilon_{x}(\psi \cdot \chi)$. By regularity we have $\psi = \epsilon_{x}(\psi) \cdot \chi' \cdot \psi$ and thus
$\epsilon_{x}(\phi) = \epsilon_{x}(\epsilon_{x}(\psi) \cdot \chi \cdot \chi' \cdot \psi) = \epsilon_{x}(\psi) \cdot \epsilon_{x}(\chi \cdot \chi' \cdot \psi)$. This means that $\epsilon_x(\phi) \in \epsilon_x(\psi) \cdot \Phi$. By symmetry we have also $\epsilon_x(\psi) \in \epsilon_x(\phi) \cdot \Phi$, and therefore $\epsilon_{x}(\phi) \equiv_{\gamma} \epsilon_{x}(\psi)$. This proves that $\equiv_{\gamma}$ is a congruence.
\end{proof}

Based on Theorem \ref{th:GreenCongr}, we may consider the quotient algebra $(\Phi/\gamma,\cdot,[1];E)$, which by general results of universal algebra must still be a valaution algebra. In fact, we define the following operations between classes
\begin{enumerate}
\item \textit{Combination:} $[\phi]_{\gamma} \cdot [\psi]_{\gamma} = [\phi \cdot \psi]_{\gamma}$,
\item \textit{Extraction:} $\epsilon_{x}([\psi]_{\gamma}) = [\epsilon_{x}(\psi)]_{\gamma}$.
\end{enumerate}
We denote the operations of combination and extraction in $\Phi/\gamma$ by the same symbols as in $\Phi$; there is no risk of confusion. The projection pair of maps $(f,g)$, where $f(\psi) = [\psi]_{\gamma}$ and $g(\epsilon_{x}) = \epsilon_{x}$ (meaning at the right hand side, the operator in $\Psi/\gamma$) is clearly a homomorphism. In addition, it turns out that the information algebra $(\Phi/\gamma,\cdot,1;E)$ is idempotent. 

\begin{theorem} \label{th:QuotAlkgera}
Let $(\Phi,\cdot,1;E)$ be a regular valuation algebra and $\equiv_{\gamma}$ the Green relation. Then the quotient algebra $(\Phi/\gamma,\cdot,[1]_\gamma;E)$ is an idempotent information algebra, homomorphic to $(\Phi,\cdot,1;E)$.
\end{theorem}

\begin{proof}
That $(\Phi/\gamma,\cdot,[1]_\gamma;E)$ is a valuation follows since the pair of maps defined above form a homomorphism. We claim that $\epsilon_x(\psi) \cdot \psi \equiv_\gamma \psi$. This implies then idempotency in $\Phi/\gamma$. In fact, if $\eta \in \epsilon_x(\psi) \cdot \psi \cdot \Phi$, then $\eta \in \psi \cdot \Phi$. Conversely, by regularity $\psi = \epsilon_x(\psi) \cdot \chi \cdot \psi$ for some element $\chi$, therefore, if $\eta \in \psi \cdot \Phi$, then $\eta \in \epsilon_x(\psi) \cdot \psi \cdot \Phi$.
\end{proof}

Instead of the quotient algebra $\Phi/\gamma$ we can also consider the idempotents in the equivalence classes, because there is a one-to-one association between idempotents and their classes. In the signature $(F,\cdot,f_1;\bar{E})$, where $F = \{f_{\psi}:\psi \in \Phi)$, $\bar{E} = \{\bar{\epsilon}_x:x \in Q\}$, again the two operations of combination and extraction are defined:
\begin{enumerate}
\item \textit{Combination:} $f_{\phi} \cdot f_{\psi} = f_{\phi \cdot \psi}$,
\item \textit{Extraction:} $\bar{\epsilon}_{x}(f_{\psi}) = f_{\epsilon_{x}(\psi)}$.
\end{enumerate}
This algebra is still an (idempotent) information algebra, homomorphic to $\Phi$. Because of the idempotency, it can be considered as the \textit{deterministic part} of $\Phi$ (although it is not a subalgebra of $\Phi$ since $\bar{\epsilon}$ and $\epsilon$ are different). By the pair of maps $[\psi]_{\gamma} \mapsto f_{\psi}$ and $\epsilon \mapsto \bar{\epsilon}$, the algebras $\Psi/\gamma$ and $F$ are isomorphic. We refer to the example of probability potentials below for an illustration.

We remark that parallel to the domain-free case the theory of regular in a labeled valuation algebras may be developed, even in the case of commutative algebras and even if stability does not hold. In fact in this last case, regularity of a labeled element $\psi$ is defined as follows:
\begin{enumerate}
\item An element $\psi$ of a commutative labeled valuation algebra $\Psi$ is called regular, if there exists for all $x \leq d(\psi)$ an element $\chi \in \Psi$ with $d(\chi) = x$ such that
\begin{eqnarray*}
\psi = \pi_x(\psi) \cdot \chi \cdot \psi.
\end{eqnarray*}
\item The valuation algebra is called regular, if all its elements are regular.
\end{enumerate}
This permits to derive a theory fully parallel to the domain-free case. Rather than to develop this, we prefer to illustrate it with the example of probability potentials. For the full labeled theory in the commutative case we refer to \cite{kohlas03}.

\begin{example} \textbf{Probability Potentials:}  \label{expl:ProbPotInv} Probability potentials were introduced as mappings $p : \Theta \mapsto \mathbb{R}^+ \cup \{0\})$ from the domains of a multivariate model to nonnegative real numbers. This labeled valuation algebra is regular, in the sense that for any probability potential $p$ with label $s$ and $t \leq s$ there is a potential $q$ with label $t$ such that $p = p \cdot \pi_t(p) \cdot q$. In fact, the potential $q$ is determined as follows, for a tuple $x \in U_t$,
\begin{eqnarray*}
q(x) = \left\{ \begin{array}{ll} \frac{1}{\pi_t(p)(x)} & \textrm{if }\ \pi_t(p)(x) \not= 0, \\ 0 & \textrm{otherwise.} \end{array} \right.
\end{eqnarray*}
The idempotents of the group $[p]_\gamma$ of a potential $p$ is the potential $f_p(x) = 1$ for all $x \in U_s$ for which $p(x) > 0$ and $f_p(x) = 0$ if $p(x) = 0$. So, the idempotents are the indicator functions of the \textit{support sets} $\{x: p(x) > 0\}$ of the probability potentials. Note that the projection of an idempotent is not itself an idempotent. The idempotent labeled valuation algebra $F$, defined similarly as in the domain-free case,  corresponds to the labeled set algebra of subsets of the frames $U_s$, but is not exactly a subset algebra. 
\end{example}

%%%%%%%%%%%%%%%%%%%%%%%%%%%%%%%%%%%%%%%%%%%%%%%%%%%%%%%%%%%%%%

\section{Separative algebras} \label{subsec:SeparativeAlg}

Here we go one step beyond regular algebras. Consider again a domain-free valuation algebra $(\Phi,\cdot,1;E)$, $E = \{\epsilon_x:x \in Q)$. Instead of assuming it to be regular, and then use the Green relation, we start with a congruence, similar to the Green relation and base the theory on this relation. Thus, assume that there is a congruence $\equiv_{\gamma}$ relative to combination and extraction in $\Phi$ such that
\begin{eqnarray} \label{eq:SepCongr}
\epsilon_{x}(\psi) \cdot \psi \equiv_{\gamma} \psi
\end{eqnarray}
for all $\psi \in \Phi$ and $x \in Q$. Since any element $\psi$ has a support, we have also
\begin{eqnarray}
\psi \cdot \psi \equiv_{\gamma} \psi
\nonumber
\end{eqnarray}
The equivalence classes $[\psi]_{\gamma}$ are semigroups. Indeed, if $\phi,\chi \in [\psi]_{\gamma}$, then $\phi \equiv_\gamma \chi$ and $\chi \equiv_\gamma \psi$, hence $\phi \cdot \chi \equiv_{\gamma} \psi \cdot \psi$ since $\equiv_{\gamma}$ is a congruence. But $\psi \cdot \psi \equiv_{\gamma} \psi$, thus $\phi \cdot \chi \equiv_{\gamma} \psi$ hence $\phi \cdot \chi \in [\psi]_{\gamma}$. 

As in the previous section the quotient algebra $\Phi/\gamma$ is an idempotent information algebra, homomorphic to $\Phi$, if the operations are defined as
\begin{enumerate}
\item \textit{Combination:} $[\phi]_{\gamma} \cdot [\psi]_{\gamma} = [\phi \cdot \psi]_{\gamma}$.
\item \textit{Extraction:} $\epsilon_{x}([\psi]_{\gamma}) = [\epsilon_{x}(\psi)]_{\gamma}$.
\end{enumerate}
Idempotency of $\Phi/\gamma$ follows from condition (\ref{eq:SepCongr}).

Again, since the classes form an idempotent algebra, they are partially ordered by $[\phi]_{\gamma} \leq [\psi]_{\gamma}$ if $[\phi]_{\gamma} \cdot [\psi]_{\gamma} = [\phi]_{\gamma}$. Under this order we have
\begin{eqnarray}
[\phi]_{\gamma} \cdot [\psi]_{\gamma} = [\phi]_{\gamma} \vee [\psi]_{\gamma}.
\nonumber
\end{eqnarray}

Contrary to regular algebras, this is not sufficient for the classes $[\psi]_\gamma$ to be groups. In semigroup theory embeddings of semigroups into a disjoint union of groups is studied, see \cite{cliffordpreston67}. A sufficient condition for this to be possible is \textit{cancellativity}, that is
\begin{eqnarray}
\phi \cdot \psi = \phi \cdot \psi'
\end{eqnarray}
implies $\psi = \psi'$. We assume therefore that all semigroups $[\phi]_{\gamma}$ are cancellative. This leads to the following definition.

\begin{definition} \textit{Separative Information Algebras:} \label{def:SepInfAlg}
Let $(\Phi,\cdot,1;E)$ be a domain-free ivaluation algebra. It is called separative, if there exists a congruence $\equiv_{\gamma}$ relative to combination and extraction in $\Psi$ such that
\begin{enumerate}
\item $\epsilon_{x}(\psi) \cdot \psi \equiv_{\gamma} \psi$ for all $\psi \in \Psi$ and for all $x \in D$.
\item The semigroups $[\psi]_{\gamma}$ are cancellative for all $\psi \in \Psi$.
\end{enumerate}
\end{definition}

We remark that separative valuation algebras have been studied in \cite{kohlas03} with respect to local computation with division and to generalisation of conditionals from probability to valuations or information, see also Section \ref{subsec:Continuation}. Obviously, regular valuation algebras are separative.  For further examples of separative valuation algebras, we refer to \cite{kohlas03,poulykohlas11}. We mention also, that as far as local computation with division and conditioning is concerned, it is sufficient that $\equiv_\gamma$ is a congruence with respect to combination only. But for our present concern, congruence with respect to extraction is also desirable and many separative instances satisfy this condition.

A cancellative semigroup such as $[\psi]_{\gamma}$ can be embedded into a group. The classical procedure is like for extending integers to rational numbers as follows: Consider ordered pairs $(\phi,\psi)$ for $\phi,\psi \in [\eta]_{\gamma}$ and define a relation among pairs by
\begin{eqnarray}
(\phi,\psi) \equiv (\phi',\psi') \textrm{ iff}\ \phi \cdot \psi' = \phi' \cdot \psi.
\nonumber
\end{eqnarray}
This is an equivalence relation thanks to cancellativity. Let $[\phi,\psi]$ denote the equivalence classes of this equivalence and let $\gamma(\eta)$ denote the set of these equivalence classes $[\eta]_{\gamma}$. Then we define the operation
\begin{eqnarray}
[\phi,\psi] \cdot [\phi',\psi'] = [\phi \cdot \phi',\psi \cdot \psi']
\nonumber
\end{eqnarray}
in $\gamma(\eta)$. This is well defined, since the equivalence is a congruence relative to the operation $(\phi,\psi) \cdot (\phi',\psi') = (\phi \cdot \phi',\psi \cdot \psi')$ between pairs. With this operation every $\gamma(\eta)$ becomes a group. Its unit is $[\psi,\psi]$ and the inverse of $[\phi,\psi]$ is $[\psi,\phi]$. The class $[\psi]_{\gamma}$ is embedded into $\gamma(\psi)$ as a a semigroup by the map
\begin{eqnarray}
\psi \mapsto [\psi \cdot \psi,\psi].
\nonumber
\end{eqnarray}
Define
\begin{eqnarray}
\Phi^* = \bigcup_{\psi \in \Phi} \gamma(\psi).
\nonumber
\end{eqnarray}
In order to distinguish elements of $\Phi^*$ from those of $\Phi$, we denote elements of $\Phi^*$ by lower case letters like $a,b,\ldots$. The union of groups $\Phi^*$ becomes a semigroup, if we define for $a = [\phi_{a},\psi_{b}]$ and $b = [\phi_{b},\psi_{b}]$,
\begin{eqnarray}
a \cdot b = [\phi_{a} \cdot \phi_{b},\psi_{a} \cdot \psi_{b}].
\nonumber
\end{eqnarray}
This operation is well-defined, associative and commutative. Thus $(\Phi^*;\cdot)$ is a commutative semigroup and $(\Phi;\cdot)$ is embedded into it as a semigroup by the map $\psi \mapsto [\psi \cdot \psi,\psi]$ as can easily be verified. In the sequel, in order to simplify notation, we denote the elements $[\psi \cdot \psi,\psi]$ of the image of $(\Phi;\cdot)$ under this map simply by $\psi$.

If $(\Phi,\cdot,1;E)$ is a separative valuation algebra, then the quotient algebra $\Phi/\gamma$, is an idempotent information algebra, homomorphic to $\Phi$ as noted above. Any group $\gamma(\psi)$ has a unique unit and idempotent element, denoted by $f_{\psi}$. The idempotent information algebra $F$ of idempotents or the units of the groups $\gamma(\psi)$, with the operations defined as follows
\begin{enumerate}

\item \textit{Combination:} $f_{\phi} \cdot f_{\psi} = f_{\phi \cdot \psi}$,
\item \textit{Extraction:} $\bar{\epsilon}_{x}(f_{\psi}) = f_{\epsilon_{x}(\psi)}$,
\end{enumerate}
is isomorphic to the quotient algebra $\Phi/\gamma$. Note however, that the elements of $F$ do not, in general, belong to $\Phi$ as in the regular case. Nevertheless, we may still consider the elements of $F$ as the deterministic parts of $\Phi^*$. 

To conclude this section, we introduce as an illustration the valuation algebra of probability densities, which turns out to be separative \cite{kohlas03}.

\begin{example} \textbf{Probability densities:} \label{expl:ProbDens}
As in the case of probability potentials, we consider a labeled valuation algebra of probability densities. The domains $\Omega_s$ for a group of variables $s \subseteq I = \{1,\ldots,n\}$ is the set $\mathbb{R}^s$ of real-valued $s$-tuples. These tuples will be denoted by boldface letters like $\mathbf{x},\mathbf{y},\ldots$. A density $f$ on domain $\mathbb{R}^s$ is then a non-negative, continuous function $f : \mathbb{R}^2 \rightarrow \mathbb{R}$, $f(\mathbf{x} \geq 0$, whose integral
\begin{eqnarray*}
\int_{\infty}^\infty f(\mathbf{x}) d\mathbf{x}
\end{eqnarray*}
exists and is finite. If $f$ is a non-null density on $\mathbb{R}^s$, then we may consider the corresponding normalized (proper) density function
\begin{eqnarray*}
f^\rightarrow(\mathbf{x}) = \frac{f(\mathbf{x})}{\int f(\mathbf{x}) d\mathbf{x}},
\end{eqnarray*}
Then we have $\int f^\rightarrow(\mathbf{x}) d\mathbf{x} = 1$. So it is a probability density.

Let $\Psi_s$ denote the set of all densities on domain $\mathbb{R}^s$ (including the null density) and
\begin{eqnarray*}
\Psi = \bigcup_{s \subseteq I} \Phi_s.
\end{eqnarray*}
Among these densities, we define the operations of labeling, combination and projection. Let $\mathbf{x}[t]$ denote the restriction of the $s$-tuple $\mathbf{x}$ to the subset $t \subseteq s$ of indexes.
\begin{enumerate}
\item \textit{Labeling:} $d(f) = s$, if $f$ is a density on $\mathbb{R}^s$.
\item \textit{Combination:} $f \cdot g(\mathbf{x}) = f(\mathbf{x}[s])g(\mathbf{x}[t])$ where $\mathbf{x}$ is a $(s \cup t)$-tuple if $d(f) = s$, $d(g) =t$.
\item \textit{Projection:} $\pi_t(f)(\mathbf{x}) = \int_{- \infty}^\infty f(\mathbf{x},\mathbf{y})$, where $\mathbf{x}$ and $\mathbf{y}$ are $t$ and $(s - t)$-tuples respectively if $d(f) = s$.
\end{enumerate}
Of course, projection corresponds to marginalization for proper density functions. It can be shown that this system is indeed a (labeled) valuation algebra, where the null-density on $\mathbb{R}^s$ is the null element $0_s$ and $f\mathbf{x}) = 1$ the unit $1_x$. 

The relation
\begin{eqnarray*}
f \equiv_\gamma g \textrm{ if}\ d(f) = d(g) \textrm{ and}\ f(\mathbf{x}) = 0 \Leftrightarrow g(\mathbf{x}) = 0
\end{eqnarray*}
is clearly an equivalence. If $supp(f)$ denotes the subset of tuples where $f(\mathbf{x}) > 0$, then two densities $f$ and $g$ are equivalent if they have the same support sets, $supp(f) = supp(g)$. It follows $f \cdot \pi_t(f) \equiv f$, if $t \subseteq d(f)$ since $\pi_t(f)(\mathbf{x}[t]) = 0$ implies $f(\mathbf{x}) = 0$. The semigroup of densities with the same support sets is obviously cancellative. It is thus embedded into the group of quotients of densities with the same support sets. Therefore the valuation algebra of densities is separative. It is embedded into the semigroup $\Psi^*$ which is the union of the groups of densities with the same support. The functions $e_f(\mathbf{x}) = 1$ for $\mathbf{x} \in supp(f)$ and $= 0$ otherwise are the units of these groups. Note that these unit are not necessarily densities since their integral may be infinite. In particular the function $f(\mathbf{x}) = 1$ for all $\mathbf{x} \in \mathbb{R}^s$ is a unit, but not a density. The inverse of a density $f$ is $f^{-1}(\mathbf{x}) = 1/f(\mathbf{x})$ if $\mathbf{x} \in supp(f)$ and $= 0$ otherwise.  

We remark that we could also have considered measurable functions and Lebesgue integrals, see \cite{kohlas03}
\end{example}

%%%%%%%%%%%%%%%%%%%%%%%%%%%%%%%%%%%%%%%%%%%%%%%%%%%%%%%%%%%%%%

\section{Information order} \label{subsec:NOnIdOrder}

We now use the theories developed in the previous two sections for studying information order in a valuation algebra. Information as represented by valuations may be, in informal terms, more or less precise, more or less informative. This should be reflected by some order between pieces of information. This has been modelled for information algebras in Section \ref{sec:InfOrder} by defining $\phi \leq \psi$ if $\phi \cdot \psi = \psi$. As already stated, this information order in information algebras depends essentially on idempotency and can not be carried over to valuation algebras. A different approach is needed.

Let then $(\Phi,\cdot,1;E)$ be a domain-free valuation algebra. The basic idea is that a piece of information is more informative than an other one, if one needs to add a further piece of information to the second one to get the first one. So, we define, for $\phi,\psi \in \Phi$,
\begin{eqnarray} \label{eq:Preorder}
\phi \leq \psi, \textrm{ iff there exists}\ \chi \in \Phi \textrm{ such that}\ \psi = \phi \cdot \chi.
\end{eqnarray}
This relation satisfies
\begin{enumerate}
\item \textit{Reflexivity:} $\psi \leq \psi$, since $\psi = \psi \cdot 1$,
\item \textit{Transitivity:} $\phi \leq \psi$ and $\psi \leq \eta$ imply $\phi \leq \eta$, since $\psi = \phi \cdot \chi_{1}$, $\eta = \psi \cdot \chi_{2}$ imply $\eta = \phi \cdot \chi_{1} \cdot \chi_{2}$.
\end{enumerate}
Antiysymmetry however does not hold in general. Therefore, the relation $\leq$ defined in (\ref{eq:Preorder}) is a \textit{preorder} in $\Phi$. 

This order is the subject of the present section. Information order can be studied both in labeled or domain-free valuation algebras. We propose to base our discussion on domain-free algebras.

If the valuation algebra $\Phi$ is \textit{idempotent}, that is, if it is an information algebra, then $\psi = \phi \cdot \chi$, gives by idempotency, if both sides are combined by $\psi$, $\psi = (\phi \cdot \chi) \cdot \psi = \phi \cdot (\phi \cdot \chi) \cdot \psi = \phi \cdot \psi \cdot \psi = \phi \cdot \psi$. So the information order in information algebra is the same order as the one proposed here. In idempotent information algebras, the relation $\leq$ is a \textit{partial order}, since $\phi \leq \psi$ and $\psi \leq \phi$ imply $\phi = \psi \cdot \phi =  \psi$. Here $\phi \leq \psi$ means that nothing is gained if the piece of information $\phi$ is added to $\psi$, the information in $\phi$ is already covered by $\psi$. Recall that in this idempotent case 
\begin{enumerate}
\item $1 \leq \psi \leq 0$ for all $\psi \in \Psi$,
\item $\phi,\psi \leq \phi \cdot \psi$,
\item $\phi \leq \psi$ implies $\phi \cdot \eta \leq \psi \cdot \eta$ for all $\eta \in \Psi$,
\item $\epsilon_{x}(\psi) \leq \psi$ for all $x \in D$ and $\psi \in \Psi$,
\item $\phi \leq \psi$ implies $\epsilon_{x}(\phi) \leq \epsilon_{x}(\psi)$ for all $x \in D$,
\item $x \leq y$ implies $\epsilon_{x}(\psi) \leq \epsilon_{y}(\psi)$ for all $\psi \in \Psi$.
\end{enumerate}
These are clearly properties one would expect from an information order in general: Vacuous information is least informative, contradiction  (which properly speaking is not an information) is the greatest element in the information order; combined information is more informative than each of its parts, the order is compatible with combination and extraction of information does not increase information.

Note that the preorder defined in (\ref{eq:Preorder}), satisfies the first three of these requirements (if possibly null elment is present). The remaining ones are not guaranteed in general and need special consideration. In particular we show in this section, that in regular and separative valuation algebras, the information order indeed satisfies also the remaining three properties.
This will also illuminate the relation of the preorder to the partial order of idempotent information and exhibits the limits of the preorder.

We summarize now some results about the preorder in $\Phi$ and partial order among idempotents in $F$ and among the classes $[\phi]_{\gamma}$ in regular valuation algebras.

\begin{lemma} \label{le:PreorderClasses}
Let $(\Phi,\cdot,1;E)$ be a regular valuation algebra. Then
\begin{enumerate}
\item $\phi \leq \psi$ iff $[\phi]_{\gamma} \leq [\psi]_{\gamma}$,
\item $\phi \leq \psi$ iff $\psi \cdot \Phi = \phi \cdot \psi \cdot \Phi$,
\item $\phi \leq \psi$ iff $\psi \cdot \Phi \subseteq \phi \cdot \Phi$,
\item $\phi \leq \psi$ and $\psi \leq \phi$ iff $\phi \equiv_{\gamma} \psi$,
\end{enumerate}
\end{lemma}

\begin{proof}
1.) Assume $\phi \leq \psi$, that is $\phi \cdot \chi = \psi$. Then $[\phi \cdot \chi]_{\gamma} = [\phi]_{\gamma} \vee [\chi]_{\gamma} = [\psi]_{\gamma}$. This shows that $[\phi]_{\gamma} \leq [\psi]_{\gamma}$. 

Conversely, assume $[\phi]_{\gamma} \leq [\psi]_{\gamma}$ such that $[\phi \cdot \psi]_{\gamma} = [\phi]_{\gamma} \vee [\psi]_{\gamma} = [\psi]_{\gamma}$. This means that $\psi \cdot \Psi = \phi \cdot \psi \cdot \Psi$, hence $\psi \in \phi \cdot \psi \cdot \Psi$, therefore $\psi = \phi \cdot \psi \cdot \chi$ for some $\chi$. But this means that $\phi \leq \psi$.

2.) We have just proved that $\psi \cdot \Psi = \phi \cdot \psi \cdot \Phi$ implies $\phi \leq \psi$. Assume then that $\phi \leq \psi$. By item 1 we have also $f_{\phi} \leq f_{\psi}$ or $f_{\phi} \cdot f_{\psi} = f_{\phi \cdot \psi} = f_{\psi}$. But then $\psi \cdot \Phi = f_{\psi} \cdot \Phi = f_{\phi \cdot \psi} \cdot \Phi = \phi \cdot \psi \cdot \Phi$.

3.) If $\phi \leq \psi$, then $\psi = \phi \cdot \chi$. Consider $\eta \in \psi \cdot \Phi$, then $\eta = \psi \cdot \chi' = \phi \cdot \chi \cdot \chi'$. So $\eta \in \phi \cdot \Phi$. Conversely, if $\psi \cdot \Phi \subseteq \phi \cdot \Phi$, then $\psi \in \phi \cdot \Phi$, hence there is a $\chi$ such that $\psi = \phi \cdot \chi$, and thus $\phi \leq \psi$. 

4.) We have by item 2 $\phi \leq \psi$ iff $\psi \cdot \Phi = \phi \cdot \psi \cdot \Phi$ and $\psi \leq \phi$ iff $\phi \cdot \Phi = \phi \cdot \psi \cdot \Phi$. Therefore, $\phi \cdot \Phi = \psi \cdot \Phi$, hence $\phi \equiv_{\gamma} \psi$.
\end{proof}

Here follow a few results on order and extraction, which show the validity in a regular valuation algebra of the expected properties 4.) to 6.) of an information order formulated above.

\begin{theorem} \label{th:OrderAndExtr}
Let $(\Phi,\cdot,1,E)$ be a regular valuation algebra. Then
\begin{enumerate}
\item $\epsilon_{x}(\psi) \leq \psi$ for all $x \in D$ and $\psi \in \Phi$.
\item $\phi \leq \psi$ implies $\epsilon_{x}(\phi) \leq \epsilon_{x}(\psi)$ for all $x \in D$.
\item $x \leq y$ implies $\epsilon_{x}(\psi) \leq \epsilon_{y}(\psi)$ for all $\psi \in \Phi$.
\end{enumerate}
\end{theorem}

\begin{proof}
1.) By regularity $\psi = \psi \cdot \chi \cdot \epsilon_{x}(\psi)$ where $\epsilon_{x}(\chi) = \chi$. Applying the extraction operator on both sides gives $\epsilon_{x}(\psi) = \epsilon_{x}(\psi) \cdot \epsilon_{x}(\psi) \cdot \chi$, hence $\epsilon_{x}(\psi) \geq \chi$ and therefore $[\epsilon_{x}(\psi)]_{\gamma} \geq [\chi]_{\gamma}$ (Lemma \ref{le:PreorderClasses}). From the regularity formula we obtain also $[\psi]_{\gamma} = [\psi]_{\gamma} \vee [\chi]_{\gamma} \vee[\epsilon_{x}(\psi)]_{\gamma} = [\psi]_{\gamma} \vee [\epsilon_{x}(\psi)]_{\gamma}$, hence $[\epsilon_{x}(\psi)]_{\gamma} \leq [\psi]_{\gamma}$. This implies $\epsilon_{x}(\psi) \leq \psi$ (Lemma \ref{le:PreorderClasses}).

2.) If $\phi \leq \psi$, then $\psi \cdot \Phi = \phi \cdot \psi \cdot \Phi$ (Lemma \ref{le:PreorderClasses}). This implies $\psi = \psi \cdot \phi \cdot \chi$ for some $\chi \in \Phi$. By regularity we have $\phi = \phi \cdot \epsilon_{x}(\phi) \cdot \mu$ and $\psi = \psi \cdot \epsilon_{x}(\psi) \cdot \mu'$, where $x$ is a support of both $\mu$ and $\mu'$. From this we deduce
\begin{eqnarray}
\epsilon_{x}(\psi) &=& \epsilon_{x}(\psi \cdot \phi \cdot \chi)
\nonumber \\
&=&\epsilon_{x}(\epsilon_{x}(\psi) \cdot \epsilon_{x}(\phi) \cdot \mu \cdot \mu' \cdot \psi \cdot \phi \cdot \chi) 
\nonumber \\
&=&\epsilon_{x}(\psi) \cdot \epsilon_{x}(\phi) \cdot \epsilon_{x}(\cdot \mu \cdot \mu' \cdot \psi \cdot \phi \cdot \chi)
\end{eqnarray}
This proves that $\epsilon_{x}(\phi) \leq \epsilon_{x}(\psi)$. 

3.) By definition $x \leq y$ means that $\epsilon_x(\psi) = \epsilon_x(\epsilon_y(\psi))$. Then item 1 above shows that $\epsilon_{x}(\psi) \leq \epsilon_{y}(\psi)$.
\end{proof}

Further, we remark that the relation $\phi \leq_2 \psi$ if there is an idempotent $f$ such that $\psi = f \cdot \phi$ is a \textit{partial order}. Of course $\phi \leq_2 \psi$ implies $\phi \leq \psi$. This is the partial order studied in semigroup theory \cite{nambooripad80,mitsch86}, the goal there being to study the structure of semigroups. The condition $\psi = f \cdot \phi$ means in our context that $\psi$ is obtained by combination of $\phi$ with a deterministic information $f$. So $\psi$ results from a kind of conditioning of $\phi$ on $f$. We refer to \cite{kohlas03} for an illustration in the context of probability potentials. So, $\psi$ is, according to this order, more informative than $\phi$, if it is obtained by conditioning of $\phi$. Although this makes sense, this order does not seem very interesting from the point of view of valuation algebras. For example it does not follow that $\epsilon_{x}(\psi) \leq \psi$ or $\phi,\psi \leq \phi \cdot \psi$.  

Next let's turn to separative algebras $(\Phi,\cdot,1:E)$. Note first that we may carry over the order between the equivalence classes $[\psi]_{\gamma}$ to the groups $\gamma(\psi)$, since there is a one-to-one relation between classes and groups. Hence $\gamma(\phi) \leq \gamma(\psi)$ iff $[\phi]_{\gamma} \leq [\psi]_{\gamma}$. Then we deduce that
\begin{eqnarray}
\gamma(\phi \cdot \psi) = \gamma(\phi) \vee \gamma(\psi).
\nonumber
\end{eqnarray}
We define next the natural order (\ref{eq:Preorder}) in the semigroup $(\Phi^*;\cdot)$,
\begin{eqnarray}
a \leq b, \textrm{ iff there exists a}\ c \in \Phi^* \textrm{ such that}\ b = a \cdot c.
\end{eqnarray}
Note then that for elements of $\Phi$, this preorder $\phi \leq \psi$ admits that in $\psi = \phi \cdot c$, the factor which completes $\phi$ to $\psi$ does no more need to be an element of $\Phi$, but only of $\Phi^*$.

\begin{lemma} \label{le:OrderGroup}
In $\Phi^*$ we have $a \leq b$ iff $\gamma(a) \leq \gamma(b)$.
\end{lemma}

\begin{proof}
Assume first $a \leq b$, hence $a \cdot c = b$ for some $c \in \Phi^*$. Then $\gamma(b) = \gamma(a \cdot c) = \gamma(a) \vee \gamma(c)$, hence $\gamma(a) \leq \gamma(c)$. Conversely, assume $\gamma(a) \leq \gamma(b)$. Then $\gamma(b) = \gamma(a) \vee \gamma(b) = \gamma(a \cdot b)$. Therefore we see that $a \cdot b$ and $b$ belong both to the group $\gamma(b)$ and therefore $b = a \cdot b \cdot (a \cdot b)^{-1} \cdot b$, thus $a \leq b$.
\end{proof}

We remark that for any element $a$ of $\Phi^*$ we have $a = a \cdot a^{-1} \cdot a$. This means that the semigroup $(\Phi^*,\cdot)$ is \textit{regular}. And further $a \equiv_{\gamma} b$ implies $a \cdot \Phi^* = b \cdot \Phi^*$. In fact, if $d \in a \cdot \Phi^*$, then $d = a \cdot c$ for some $c \in\Phi^*$. It follows then $d = b \cdot b^{-1} \cdot a \cdot c$, hence $d \in b \cdot \Phi^*$. In the same way it follows that $d \in b \cdot \Phi^*$ implies $d \in a \cdot \Phi^*$, hence $a \cdot \Phi^* = b \cdot \Phi^*$. Conversely, if $a \cdot \Phi^* = b \cdot \Phi^*$, then $a = b \cdot c$ and $b = a \cdot c'$ for some $c,c' \in \Phi^*$. This means that $a \leq b$ and $b \leq a$, hence $\gamma(a) = \gamma(b)$, or $a \equiv_{\gamma} b$. This shows that the congruence $\equiv_{\gamma}$ is the Green relation in the regular semigroup $(\Phi^*,\cdot)$.

As a consequence of this remark and of Lemma \ref{le:OrderGroup} we have, as in the previous section (Lemma \ref{le:PreorderClasses}), the following result:

\begin{lemma} \label{le:PreorderClasses2}
Let $(\Phi,\cdot,1;E)$ be a separative valuation algebra embedded int $\Phi^*$. Then, for $a,b \in \Phi^*$,
\begin{enumerate}
\item $a \leq b$ iff $\gamma(a) \leq \gamma(b)$
\item $a \leq b$ and $b \leq a$ iff $\gamma(a) = \gamma(b)$.
\end{enumerate}
\end{lemma}

As in the case of regular valuation algebras, we have for separative information algebras the same results regarding order and extraction (see Theorem \ref{th:OrderAndExtr}).

\begin{theorem} \label{th:SepOrderExt}
Let $(\Phi,\cdot,1;E)$ be a separative valuation algebra. Then
\begin{enumerate}
\item $\epsilon_{x}(\psi) \leq \psi$ for all $x \in Q$ and $\psi \in \Phi$.
\item $\phi \leq \psi$ implies $\epsilon_{x}(\phi) \leq \epsilon_{x}(\psi)$ for all $x \in Q$.
\item $x \leq y$ implies $\epsilon_{x}(\psi) \leq \epsilon_{y}(\psi)$ for all $\psi \in \Phi$.
\end{enumerate}
\end{theorem}

\begin{proof}
1.) From (\ref{eq:SepCongr}) we obtain $\gamma(\epsilon_{x}(\psi) \cdot \psi) = \gamma(\epsilon_{x}(\psi)) \vee \gamma(\psi) = \gamma(\psi)$. This shows that $\gamma(\epsilon_{x}(\psi)) \leq \gamma(\psi)$, which implies $\epsilon_{x}(\psi) \leq \psi$ (Lemma \ref{le:PreorderClasses2}).

2.) From $\phi \leq \psi$ we obtain $\gamma(\phi) \leq \gamma(\psi)$ and from item 1 just proved $\gamma(\epsilon_{x}(\phi)) \leq \gamma(\phi)$. Thus we have $\gamma(\epsilon_{x}(\phi) \cdot \psi) = \gamma(\epsilon_{x}(\phi)) \vee \gamma(\psi) = \gamma(\psi)$. Further, we have $\epsilon_{x}(\epsilon_{x}(\phi) \cdot \psi) = \epsilon_{x}(\phi) \cdot \epsilon_{x}(\psi)$. Therefore, from the congruence of $\equiv_{\gamma}$, we conclude that $\gamma(\epsilon_{x}(\phi) \cdot \epsilon_{x}(\psi)) = \gamma(\epsilon_{x}(\psi))$, and this shows that $\epsilon_{x}(\phi) \leq \epsilon_{x}(\psi)$.

3.) This is proved exactly as item 3 of Theorem \ref{th:OrderAndExtr}.
\end{proof}
%Note, the proof of item 1 and 2  should also apply to Theorem 24 for regular algebras. May be it would be better to have the rte the same proof. We need a) \epsilon_{x}(\psi) \cdot \psi \equiv_{\gamma} \psi and b) that \equiv_{\gamma} is a congruence.

As in the regular case, we may define an order $\phi \leq_2 \psi$ if there is an idempotent $f$ such that $\psi = f \cdot \phi$ and again $\phi \leq_2 \psi$ implies $\phi \leq \psi$. This is as before a partial order, since $\phi \leq \psi$ and $\psi \leq \phi$ imply $\gamma(\phi) = \gamma(\psi)$ and $\psi = f_{\psi} \cdot \phi$. But $f_{\psi} = f_{\phi}$, hence $\psi = f_{\phi} \cdot \phi = \phi$. The expression $f_{\psi} \cdot \phi$ is again a kind of conditioning, namely the combination of a deterministic element $f_{\psi}$ with an information element $\phi$. We refer to \cite{kohlas03} for a discussion of the separative valuation algebra of probability densities, which illustrates these statements. Again, it makes sense that an information $\psi$ obtained from another one by condition $\psi = f_{\psi} \cdot \phi$, where $f_{\phi} \leq f_{\psi}$ is considered to be more informative. At least in probability theory this seems evident.

%%%%%%%%%%%%%%%%%%%%%%%%%%%%%%%%%%%%%%%%%%%%%%%%%%%%%%%%%%%%%%

\section{Regular conditionals} \label{subsec:Continuation}

In this section, we introduce a concept, \textit{conditionals}, which is motivated by the concept of (discrete) conditional distributions in probability theory. It turns out that this concept, both in regular and separative algebras, share many properties with conditional probability distributions. So, this sheds some light on this concept from an information theoretic point of view. In probability theory, if $p(x,y)$ is a (discrete) probability distribution, then 
\begin{eqnarray*}
p(x \vert y) = \frac{p(x,y)}{\sum_x p(x,y)}
\end{eqnarray*}
is called the conditional probability distribution of $x$ given $y$. This involves, from an algebraic point of view, the division of the probability distribution $p$ with a marginal distribution of it, or the multiplication of $p$ with the inverse of its marginal. 

Now, multiplication corresponds to combination in the valuation algebra of probability potentials, and marginalization to extraction, see Example \ref{expl:ProbPot}. This consideration motivates the following definition.

\begin{definition} \textbf{Conditional in a regular valuation algebra.} \label{def:CondIndepRelVal}
Let $(\Phi,\cdot,1;E)$ with $E = \{\epsilon_x:x \in Q\}$ be a regular valuation algebra, $\phi \in \Phi$, $x,y \in Q$. Then 
\begin{eqnarray*}
\phi_{x \vert y} = \epsilon_{x \vee y}(\phi) \cdot (\epsilon_y(\phi))^{-1}
\end{eqnarray*}
is called the conditional of $\phi$ for $x$ given $y$..
\end{definition}
The conditional $\phi_{x \vert y}$ is well defined and $(\epsilon_y(\phi))^{-1}$ is the inverse of $\epsilon_y(\phi)$ in the the group of the equivalence class $[\epsilon_y(\phi)]_\gamma$ of the Green relation. 

For the study of this concept, we need some preparatory results.

\begin{lemma} \label{le:OrderOfClasses} \
\begin{enumerate}
\item $[\phi]_\gamma \leq [\psi]_\gamma$ implies $[\epsilon_x(\phi)]_\gamma \leq [\epsilon_x(\psi)]_\gamma$,
\item $[\epsilon_x(\phi)]_\gamma \leq [\phi]_\gamma$.
\end{enumerate}
\end{lemma}

\begin{proof}
1.) By Lemma \ref{le:PreorderClasses} we have $\phi \leq \psi$ if and only if $[\phi]_\gamma \leq [\psi]_\gamma$ and $\phi \leq \psi$ implies $\epsilon_x(\phi) \leq \epsilon_x(\psi)$, Theorem \ref{th:OrderAndExtr}. This implies item 1 as well as item 2. 
\end{proof}

Note that these order results among equivalence classes of the Green relation induce the same order results for the units of the groups. A further result is needed.

\begin{lemma} \label{le:SuppOfClasses} \
\begin{enumerate}
\item If $x$ is a support of $\phi$, then $x$ is a support for all elements $\psi \in [\phi]_\gamma$,
\item $[\psi]_\gamma \leq [\phi]_\gamma$ implies $\phi \cdot f_\psi = \phi$.
\end{enumerate}
\end{lemma}

\begin{proof}
1.) We have, if $x$ is a support of $\phi$, $\epsilon_x(\phi) = \phi \equiv_\gamma \psi \in [\phi]_\gamma$. The Green relation is a congruence also relative to extraction, so $\phi \equiv_\gamma \psi$ implies $\epsilon_x(\phi) \equiv_\gamma \epsilon_x(\psi)$, hence, by transitivity $\epsilon_x(\psi) \equiv_\gamma \psi$. Thus, since $\epsilon_x(\psi) \leq \psi$, we conclude that $\epsilon_x(\psi) = \psi$, that is, $x$ is a support of $\psi$.

2.) The assumption that $[\psi]_\gamma \leq [\phi]_\gamma$ implies $f_\psi \leq f_\phi$ and therefore $\phi \cdot f_\psi = \phi \cdot f_\phi \cdot f_\psi = \phi \cdot f_\phi = \phi$.
\end{proof}

We remark now that the element $\epsilon_{x \vee y}(\phi)$ can be reconstructed if the conditional $\phi_{x \vert y}$ and the extraction $\epsilon_y(\phi)$ is known. This can be deduced, if both sides of the defining equation of a conditional is combined with $\epsilon_y(\phi)$, using Lemma \ref{le:OrderOfClasses} and \ref{le:SuppOfClasses} and noting that $[\epsilon_{x \vee y}(\phi)]_\gamma \geq [\epsilon_y(\phi)]_\gamma$,
\begin{eqnarray*}
\epsilon_{x \vee y}(\phi) \cdot f_{\epsilon_y(\phi)} = \epsilon_{x \vee y}(\phi) = \phi_{x \vert y} \cdot \epsilon_y(\phi).
\end{eqnarray*}
An element $\chi \in \Phi$ such that $\epsilon_{x \vee y}(\phi) = \chi \cdot \epsilon_y(\phi)$ is called in \cite{shafer96} a \textit{continuation} for $\phi$ from $y$ to $x \vee y$. So, the conditional $\phi_{x \vert y}$ is such a continuation. A continuation is in general not unique. However, consider elements $\phi \in [1]_\gamma$, such that $\phi \cdot \phi^{-1} = 1$. Such elements are called \textit{positive}. Then, from $\epsilon_{x \vee y} = \chi \cdot \epsilon_y(\phi)$ we obtain $\phi_{x \vert y} = \epsilon_{x \vee y} \cdot (\epsilon_y(\phi))^{-1} = \chi \cdot f_{\epsilon_y(\phi)} = \chi$ since $\epsilon_y(\phi) \leq \phi$ implies $[\epsilon_y(\phi)]_\gamma \leq [\phi]_\gamma$ and so $f_{\epsilon_y(\phi)} \leq f_\phi = 1$, see Lemma \ref{le:SuppOfClasses} 

We need one further result.

\begin{lemma} \label{le:CondOrder}
For all $\phi \in \Phi$ and $x,y \in Q$, 
\begin{eqnarray*}
[\phi_{x \vert y}]_\gamma \geq [\epsilon_y(\phi)]_\gamma.
\end{eqnarray*}
\end{lemma}

\begin{proof}
By definition $[\phi_{x \vert y}]_\gamma = [\epsilon_{x \vee y}(\phi)  \cdot (\epsilon_y(\phi))^{-1}]_\gamma = [\epsilon_{x \vee y}(\phi)]_\gamma \vee   [(\epsilon_y(\phi))^{-1}]_\gamma$ and so $[\phi_{x \vert y}]_\gamma \geq  [(\epsilon_y(\phi))^{-1}]_\gamma = [\epsilon_y(\phi)]_\gamma$.
\end{proof}

So far we have not exploited the important concept of conditional independence among questions (Section \ref{sec:StructOfQuest}). Now, we extend this concept, motivated by stochastic conditional independence of random variables in probability theory, to a similar concept related to information.

\begin{definition} \label{def.CondIndepInf} \textbf{Conditional independence relative to a valuation.}
We call $x,y \in Q$ conditionally independent given $z \in Q$ relative to $\phi \in \Phi$, if
\begin{enumerate}
\item $x \bot y \vert z$.
\item $\epsilon_{x \vee y \vee z}(\phi) = \psi_1 \cdot \psi_2$, where $\psi_1$ and $\psi_2$ have supports $x \vee z$ and $y \vee z$ respectively.
\end{enumerate}
We then write $x \bot_\phi y \vert z$.
\end{definition}

We shall see below that this corresponds in the example of probability potentials to stochastic conditional independence, see also \cite{kohlas03}. As in this case the concept is closely related to factorizations of information over conditionally independent domains or questions. This is fundamental for local computation procedures, not only for idempotent information algebra, as discussed in Section \ref{sec:LocComp}, but also for valuation algebras, \cite{kohlas03}. The next proposition clarifies this.

\begin{proposition} \label{prop:CondIndepInf}
Assume $x \bot_\phi y \vert z$. Then, if $\epsilon_{x \vee y \vee z}(\phi) = \psi_1 \cdot \psi_2$, where $x \vee z$ and $y \vee z$ are supports of $\psi_1$ and $\psi_2 $ respectively, 
\begin{enumerate}
\item $\epsilon_{x \vee z}(\phi) = \psi_1 \cdot \epsilon_z(\psi_2)$ and $\epsilon_{y \vee z}(\phi) = \psi_2 \cdot \epsilon_z(\psi_1)$.
\item $\epsilon_z(\phi) = \epsilon_z(\psi_1) \cdot \epsilon_z(\psi_2)$.
\end{enumerate}
\end{proposition} 

\begin{proof}
1.) From $x \bot_\phi y \vert z$ we have
\begin{eqnarray*}
\epsilon_{x \vee z}(\phi) = \psi_1 \cdot \epsilon_{x \vee z}(\psi_2).  
\end{eqnarray*}
And from $x \vee z \bot y \vee z \vert z$ and that $y \vee z$ is a support of $\psi_2$ we obtain $\epsilon_{x \vee z}(\psi_2) = \epsilon_{x \vee z}(\epsilon_z(\psi_2))$ and since $z \leq x \vee z$ we have $\epsilon_{x \vee z}(\epsilon_z(\psi_2)) = \epsilon_z(\psi_2)$. This proves the first identity in 1.), the second follows similarly.

2.) From 1.) we have $\epsilon_z(\phi) = \epsilon_z(\epsilon_{x \vee z}(\phi)) = \epsilon_z(\psi_1 \cdot \epsilon_z(\psi_2)) = \epsilon_z(\psi_1) \cdot \epsilon_z(\psi_2)$.
\end{proof}

In this section, we shall discuss conditional independence in relation to conditionals and show that results as in stochastic conditional independence and conditional probability distributions hold. More on conditional independence can be found in Section \ref{sec:CondIndep}. 

Here follow a few preliminary results on conditionals and conditional independence.

\begin{proposition} \label{prop:PreliminarRes} \
\begin{enumerate}
\item $\epsilon_y(\phi_{x \vert y}) = f_{\epsilon_y(\phi)}$,
\item $\phi_{x \vee y \vert z} = \phi_{x \vert y \vee z} \cdot \phi_{y \vert z}$,
\item if $z \leq x$, then $\epsilon_{y \vee z}(\phi_{x \vert y}) = \phi_{z \vert y}$,
\item $\epsilon_{y \vee z}(\phi_{z \vert x \vee y} \cdot \phi_{x \vert y}) = \phi_{z \vert y}$,
\item if $y$ is a support of $\psi$, then $(\epsilon_{x \vee y}(\phi) \cdot \psi)_{x \vert y} = \phi_{x \vert y} \cdot f_\psi$.
\end{enumerate}
\end{proposition}

\begin{proof}
1.) By definition we have $\epsilon_y(\phi_{x \vert y}) = \epsilon_y(\epsilon_{ x \vee y}(\phi) \cdot (\epsilon_y(\phi))^{-1}) = \epsilon_y(\epsilon_{x \vee y}(\phi)) \cdot (\epsilon_y(\phi))^{-1} = \epsilon_y(\phi) \cdot (\epsilon_y(\phi))^{-1} = f_{\epsilon_y(\phi)}$.

2.) Again, by definition, $\phi_{x \vee y \vert z} = \epsilon_{x \vee y \vee z}(\phi) \cdot (\epsilon_z(\phi))^{-1}$ and $\phi_{x \vert y \vee z} \cdot \phi_{z \vert y} = \epsilon_{x \vee y \vee z}(\phi) \cdot (\epsilon_{y \vee z}(\phi))^{-1} \cdot  \epsilon_{y \vee z}(\phi) \cdot (\epsilon_z(\phi))^{-1} =  \epsilon_{x \vee y \vee z}(\phi) \cdot  (\epsilon_z(\phi))^{-1} \cdot f_{\epsilon_{y \vee z}(\phi))} =  \epsilon_{x \vee y \vee z}(\phi) \cdot  (\epsilon_z(\phi))^{-1}$. This proves the identity claimed.

3.) We have $\epsilon_{y \vee z}(\phi_{ x \vert y}) = \epsilon_{y \vee z}(\epsilon_{x \vee y}(\phi) \cdot (\epsilon_y(\phi))^{-1}$. We introduce now the following lemma, which will also be used later.

\begin{lemma} \label{le:ExtCombRs}
$y \leq z \leq x$ implies $\epsilon_z(\epsilon_x(\phi) \cdot \epsilon_y(\psi)) = \epsilon_z(\phi) \cdot \epsilon_y(\psi)$. 
\end{lemma}

\begin{proof}
If $y \leq z$ then $x \bot z \vert z$ implies $x \bot y \vert z$. Then (Theorem \ref{th:CombExtrProp}) it follows that $\epsilon_z(\epsilon_x(\phi) \cdot \epsilon_y(\psi)) = \epsilon_z(\epsilon_x(\phi)) \cdot \epsilon_z(\epsilon_y(\psi))$. Now, since $y \leq z$, $z$ is a support of $\epsilon_y(\psi)$, so that $\epsilon_z(\epsilon_y(\psi)) = \epsilon_y(\psi)$ and since $z \leq x$ we have also $\epsilon_z(\epsilon_x(\phi)) = \epsilon_z(\phi)$. This proves the identity. 
\end{proof}

If we apply the identity of the lemma, we get $\epsilon_{y \vee z}(\phi_{ x \vert y}) = \epsilon_{y \vee z}(\phi) \cdot (\epsilon_y(\phi))^{-1} = \phi_{z \vert y}$. Here we use the fact that the inverse of $\epsilon_y(\phi)$ has also support $y$.

4.) By item 2 above, $\phi_{z \vert x \vee y} \cdot \phi_{x \vert y} = \phi_{x \vee z \vert y}$. Then, by item 3, $\epsilon_{y \vee z}(\phi_{z \vert x \vee y} \cdot \phi_{x \vert y}) = \epsilon_{y \vee z}(\phi_{x \vee z \vert y}) = \phi_{z \vert y}$ since $z \leq x \vee z$.

5.) We have $\epsilon_{x \vee y}(\phi) \cdot \psi = \phi_{x \vert y} \cdot \epsilon_y(\phi) \cdot \psi$. On the other hand we have also $\epsilon_{x \vee y}(\phi) \cdot \psi = (\epsilon_{x \vee y}(\phi) \cdot \psi)_{x \vert y} \cdot \epsilon_y(\epsilon_{x \vee y}(\phi) \cdot \psi) = (\epsilon_{x \vee y}(\phi) \cdot \psi)_{x \vert y}  \cdot \epsilon_y(\phi) \cdot \psi$. From this we conclude that $\phi_{x \vert y} \cdot f_{\epsilon_y(\phi) \cdot \psi} = (\epsilon_{x \vee y}(\phi) \cdot \psi)_{x \vert y} \cdot f_{\epsilon_y(\phi) \cdot \psi}$. But $[(\epsilon_{x \vee y}(\phi) \cdot \psi)_{x \vert y}]_\gamma = [\epsilon_{x \vee y}(\phi) \cdot \psi]_\gamma \vee [\epsilon_y \cdot \psi]_\gamma \geq [\epsilon_y \cdot \psi]_\gamma$. Thus $(\epsilon_{x \vee y}(\phi) \cdot \psi)_{x \vert y} = \phi_{x \vert y} \cdot f_{\epsilon_y(\phi)} \cdot f_\psi = \phi_{x \vert y} \cdot f_\psi$, by Lemma \ref{le:CondOrder}.
\end{proof}

Here follows the main theorem about regular conditionals, establishing a parallelism to stochastic conditional independence.

\begin{theorem} \label{th:EquivCondIndepRel}
Assume $x \bot y \vert z$. The the following statements are all equivalent.
\begin{enumerate}
\item $x \bot_\phi y \vert z$, that is $x$ and $y$ are conditionally independent given $z$ relativee to $\phi \in \Phi$.
\item $\epsilon_{x \vee y \vee z}(\phi) = \phi_{x \vert z} \cdot \phi_{y \vert z} \cdot \epsilon_z(\phi)$.
\item $\phi_{x \vee y \vert z} = \phi_{x \vert z} \cdot \phi_{y \vert z}$.
\item $\phi_{x \vee y \vert z} = \chi_1 \cdot \chi_2$, where $\chi_1$ and $\chi_2$ have supports $x \vee z$ and $y \vee z$ respectively.
\item $\epsilon_{x \vee y \vee z}(\phi) \cdot \epsilon_z(\phi) = \epsilon_{x \vee z}(\phi) \cdot \epsilon_{y \vee z}(\phi)$.
\item $\epsilon_{x \vee y \vee z}(\phi) = \phi_{x \vert z} \cdot \epsilon_{y \vee z}(\phi)$.
\item $\phi_{x \vert y \vee z} = \phi_{x \vert z} \cdot f_{\epsilon_{y \vee z}(\phi)}$.
\item $\phi_{x \vert y \vee z} = \chi \cdot f_{\epsilon_{y \vee z}(\phi)}$, where $\chi$ has support $x \vee z$.
\end{enumerate}
\end{theorem}

\begin{proof}
(1) $\Rightarrow$ (2): By (1) and Proposition \ref{prop:CondIndepInf}, we have $\epsilon_z(\phi) = \epsilon_z(\psi_1) \cdot \epsilon_z(\psi_2)$. Further, $\psi_1 = \psi_{1,x \vert z} \cdot \epsilon_z(\psi_1)$ and $\psi_2 = \psi_{2,y \vert z} \cdot \epsilon_z(\psi_2)$. It follows that $\epsilon_{x \vee y \vee z} = \psi_1 \cdot \psi_2 = \psi_{1,x \vert z} \cdot \psi_{2,y \vert z} \cdot \epsilon_z(\psi_1) \cdot \epsilon_z(\psi_2) = \psi_{1,x \vert z} \cdot \psi_{2,y \vert z} \cdot \epsilon_z(\phi)$. Again by Proposition \ref{prop:CondIndepInf}, $\epsilon_{x \vee z}(\phi) = \psi_1 \cdot \epsilon_z(\psi_2) = \psi_{1,x \vert z} \cdot \epsilon_z(\psi_1) \cdot \epsilon_z(\psi_2) = \psi_{1,x \vert z} \cdot \epsilon_z(\phi)$ and $\epsilon_{y \vee z}(\phi) = \psi_2 \cdot \epsilon_z(\psi_1) = \psi_{2,y \vert z} \cdot \epsilon_z(\psi_1) \cdot \epsilon_z(\psi_2) = \psi_{2,y \vert z} \cdot \epsilon_z(\phi)$. This leads to the equations $\epsilon_{x \vee z}(\phi) = \phi_{x \vert z} \cdot \epsilon_z(\phi) = \psi_{1,x \vert z} \cdot \epsilon_z(\phi)$ and $\epsilon_{y \vee z}(\phi) = \phi_{y \vert z} \cdot \epsilon_z(\phi) = \psi_{2,y \vert z} \cdot \epsilon_z(\phi)$, thus $\phi_{x \vert z} = \psi_{1,x \vert z} \cdot f_{\epsilon_z(\phi)}$ and $\phi_{y \vert z} = \psi_{2,y \vert z} \cdot f_{\epsilon_z(\phi)}$, and then finally $\epsilon_{x \vee y \vee z} = (\psi_{1,x \vert z} \cdot f_{\epsilon_z(\phi)}) \cdot (\psi_{2,y \vert z} \cdot f_{\epsilon_z(\phi)})\cdot \epsilon_z(\phi) =  \phi_{x \vert z} \cdot \phi_{y \vert z} \cdot \epsilon_z(\phi)$.

(2) $\Rightarrow$ (3): We have $\epsilon_{x \vee y \vee z}(\phi) = \phi_{x \vee y \vert z} \cdot \epsilon_z(\phi) = \phi_{x \vert z} \cdot \phi_{y \vert z} \cdot \epsilon_z(\phi)$. This implies $ \phi_{x \vee y \vert z} = \phi_{x \vert z} \cdot \phi_{y \vert z}$, since $f_{\epsilon_z} \leq f_{\phi_{x \vert z}}, f_{\phi_{x \vee y \vert z}}$, Lemma \ref{le:CondOrder}.

(3) $\Rightarrow$ (4): Take $\chi_1 = \phi_{x \vert z}$ and $\chi_2 = \phi_{y \vert z}$.

(4) $\Rightarrow$ (5): From (4), $\epsilon_{x \vee y \vee z}(\phi) \cdot \epsilon_z(\phi) = \phi_{x \vee y \vert z} \cdot \epsilon_z(\phi) = (\chi_1 \cdot \epsilon_z(\phi)) \cdot (\chi_2 \cdot \epsilon_z(\phi))$. Further, using $x \vee y \bot y \vee z \vert z$,
\begin{eqnarray*}
\epsilon_{x \vee z}(\phi) &=& \epsilon_{x \vee z}(\epsilon_{x \vee y \vee z}(\phi)) = \epsilon_{x \vee z}(\phi_{x \vee y \vert z} \cdot \epsilon_z(\phi))  \\
&=& \epsilon_{x \vee z}(\chi_1 \cdot \chi_2 \cdot \epsilon_z(\phi)) = \chi_1 \cdot \epsilon_{x \vee z}(\chi_2 \cdot \epsilon_z(\phi)) \\
&=& \chi_1 \cdot \epsilon_{x \vee z}(\epsilon_z(\chi_2 \cdot \epsilon_z(\phi)) = \chi_1 \cdot \epsilon_{x \vee z}(\epsilon_z(\chi_2) \cdot \epsilon_z(\phi)) \\
&=& \chi_1 \cdot \epsilon_z(\chi_2) \cdot \epsilon_z(\phi),
\end{eqnarray*}
since $z$, hence $x \vee z$ is a support of $\epsilon_z(\chi_2) \cdot \epsilon_z(\phi)$. In the same way we obtain $\epsilon_{y \vee z} = \chi_2 \cdot \epsilon_z(\chi_1) \cdot \epsilon_z(\phi)$. By Propositions \ref{prop:CondIndepInf} and \ref{prop:PreliminarRes}, $\epsilon_z(\phi_{x \vee y \vert z}) = \epsilon_z(\chi_1 \cdot \chi_2) = \epsilon_z(\chi_1) \cdot \epsilon_z(\chi_2) = f_{\epsilon_z(\phi)}$. This gives us finally
\begin{eqnarray*}
\epsilon_{x \vee z}(\phi) \cdot \epsilon_{y \vee z}(\phi) &=&  (\chi_1 \cdot \epsilon_z(\phi)) \cdot (\chi_2 \cdot \epsilon_z(\phi)) \cdot \epsilon_z(\chi_1) \cdot \epsilon_z(\chi_2) \\
&=& (\chi_1 \cdot \epsilon_z(\phi)) \cdot (\chi_2 \cdot \epsilon_z(\phi)) \cdot f_{\epsilon_z(\phi)} \\
&=& (\chi_1 \cdot \epsilon_z(\phi)) \cdot (\chi_2 \cdot \epsilon_z(\phi)) = \epsilon_{x \vee y \vee z}(\phi) \cdot \epsilon_z(\phi).
\end{eqnarray*}

(5) $\Rightarrow$ (6): By (5) $\epsilon_{x \vee y \vee z}(\phi) \cdot \epsilon_z(\phi) = \epsilon_{x \vee z}(\phi) \cdot \epsilon_{y \vee z}(\phi) = \phi_{x \vert z} \cdot \epsilon_z(\phi) \cdot \epsilon_{y \vee z}(\phi)$. Combining both sides with the inverse of $\epsilon_z(\phi)$ we obtain $\epsilon_{x \vee y \vee z}(\phi) = \phi_{x \vert z} \cdot \epsilon_{y \vee z}(\phi)$ since $f_{\epsilon_z(\phi)}$ is absorbed on both sides.

(6) $\Rightarrow$ (7): On the one hand we have $\epsilon_{x \vee y \vee z}(\phi) = \phi_{x \vert y \vee z} \cdot \epsilon_{y \vee z}(\phi)$ and on the other hand, by (6), $\epsilon_{x \vee y \vee z}(\phi) =  \phi_{x \vert y} \cdot \epsilon_{y \vee z}$. From this we obtain $\phi_{x \vert y \vee z} = \phi_{x \vert y} \cdot f_{\epsilon_{y \vee z}(\phi)}$. 

(7) $\Rightarrow$ (8): Take $\chi = \phi_{x \vert y}$. 

(8) $\Rightarrow$ (1): Here we have $\epsilon_{x \vee y \vee z}(\phi) = \phi_{x \vert y \vee z} \cdot \epsilon_{y \vee z}(\phi) = \chi \cdot f_{\epsilon_{y \vee z}(\phi)} \cdot \epsilon_{y \vee z}(\phi)$. Take then $\psi_1 = \chi$ and $\psi_2 = \epsilon_{y \vee z}(\phi)$.

This concludes the proof.
\end{proof}

In the trivial case of an idempotent information algebra, algebra most items of this theorem collapse to the unique statement that $x \bot_\phi y \vert z$ is equivalent to $\epsilon_{x \vee y \vee z}(\phi) = \epsilon_{x \vee z}(\phi) \cdot \epsilon_{y \vee z}(\phi)$. In fact, items 2,3,5,6 and 7 reduce to this formula, since a conditional $\phi_{x \vert y}$ equals simply $\epsilon_{x \vee y}(\phi)$.

Note that item 4 of this theorem states that $x \bot_\phi y \vert z$ if and only if $x \bot_{\phi_{x \vee y \vert z}}  y \vert z$. In the following example we compare this result with the classical case of stochastic conditional independence in the valuation algebra of probability potentials.

\begin{example} \textbf{Conditional independence among probability potentials.}  \label{expl:ProbCondIndep}
We refer to the example of probability potentials. They form a labeled valuation algebra on multivariate system. The definition of conditionals carries in an obvious way over to labeled algebras. So, let $s$ and $t$ be disjoint sets of variables and and $x \in U_s$ and $y \in U_t$. For a probability potential $(x,y)$ on domain $U_{s \cup t}$ the conditional $p_{s \vert t}$ is defined as follows
\begin{eqnarray*}
p_{s \vert t}(x,y) = \left\{ \begin{array}{ll} \frac{p(x,y)}{\pi_t(p)(y)} & \textrm{if}\ \pi_t(y) > 0, \\ 0 & \textrm{otherwise}. \end{array} \right.
\end{eqnarray*}
If the potential $p$ is a probability distribution, then clearly this is the usual definition of a conditional probability distribution. Of course, it is an arbitrary definition to put $p_{s \vert t}(x,y) =$ if $\pi_t(p)(y) = 0$. In the conditional distribution $p_{s \vert t}$ is simply not defined in this case. This illustrates the fact, that there are many continuations, if $p$ is not strictly positive on every tuple $(x,y)$.

We write $p_{s \vert t}(x,y) = p(x \vert y)$ in favour of a notation which is more usual in probability theory. Let now $s,t,u$ be three disjoint families of variables such that $s \cup u \bot t \cup u \vert u$. Then, if $p$ is a probability potential $p(x,y,z)$, on the domain of the set of variables $s \cup t \cup u$, we have that $s$ and $t$ are conditionally independent given $u$, if there are probability potentials $q_1$ and $q_2$ on the sets $s \cup u$ and $t \cup u$ of variables such that $p(x,y,z) = q_1(x,z)q_2(y,z)$. We write then $s \bot_p t \vert u$. Theorem \ref{th:EquivCondIndepRel} given then the following equivalent conditions,
\begin{enumerate}
\item $s \bot_p t \vert u$,
\item $p(x,y,z) = p(x \vert z)p(y \vert z)\pi_u(p)(z)$,
\item $p(x,y \vert z) = p(x \vert z)p(y \vert z)$,
\item $p(x,y \vert z) = p_1(x,z)p_2(y,z)$,
\item $p(x,y,z) \cdot \pi_u(p)(z) = \pi_{s \cup u}(p)(x,z)\pi_{t \cup u}(p)(y,z)$,
\item $p(x,y,z) = p(x \vert y)\pi_{t \cup u}(p)(y,z)$,
\item $p(x \vert y,z) = p(x \vert z)f_{\pi_{t \cup u}(p)}(y,z)$,
\item $p(x \vert y,z) = q(x,z)f_{\pi_{t \cup u}(p)}(y,z)$.
\end{enumerate}
Here $f_{\pi_{t \cup u}(p)}$ is the indicator function of the support of the marginal $\pi_{t \cup u}(p)$, that is $f_{\pi_{t \cup u}(p)}(y,z) = 1$ if $\pi_{t \cup u}(p)(y,z) > 0$ and $f_{\pi_{t \cup u}(p)}(y,z) = 0$ otherwise. These are all well-known properties of conditional probability distributions over discrete domains. And this illustrates how conditionals in regular valuation algebras generalize this concept. 
\end{example}

We refer to another interesting view of regular conditionals in the domain of dynamic programming, see \cite{kohlas03}.

%%%%%%%%%%%%%%%%%%%%%%%%%%%%%%%%%%%%%%%%%%%%%%%%%%%%%%%%%%%%%%

\section{Separative conditionals} \label{subsec:SepContinuation}

In a separative valuation algebra, we have still a notion of inverse or division which allows the definition of conditionals similar to regular valuation algebras. But how far do these conditionals share the same properties as those in regular algebras? This is the question addressed in this section. So, let $(\Phi,\cdot,1;E)$ with $E = \{\epsilon_x:x \in Q\}$ be a separative valuation algebra. We recall from Section \ref{subsec:SeparativeAlg} that $\Phi$ is embedded as a semigroup into a semigroup $\Phi^*$, which is a union of disjoint commutative groups $\gamma(\phi)$, where $\gamma(\phi)$ are equivalence classes $[\phi,\psi]$ of pairs $(\phi,\psi)$ of elements of $\Phi$. The semigroup $(\Phi,\cdot)$ is embedded into $\Phi^*$ by the map $\phi \mapsto [\phi \cdot \phi,\phi]$. As in   Section \ref{subsec:SeparativeAlg} we identify $\Phi$ with its image in $\Phi^*$, and consider $\Phi$ a subset of $\Phi^*$. That is we write $\phi$ for $[\phi \cdot \phi,\phi]$ and $\phi^{-1}$ for the inverse element $[\phi,\phi \cdot \phi]$. The unit element in the group $\gamma(\phi)$ is denoted by $f_\phi$.

The concept of a conditional in a separative valuation algebra $\Phi$ can be defined exactly as in the case of a regular one.

\begin{definition} \textbf{Conditional in a separative valuation algebra.}
Let $(\Phi,\cdot,1;E)$ with $E = \{\epsilon_x:x \in Q\}$ be a separative valuation algebra, $\phi \in \Phi$, $x,y \in Q$. Then 
\begin{eqnarray*}
\phi_{x \vert y} = \epsilon_{x \vee y}(\phi) \cdot (\epsilon_y(\phi))^{-1}
\end{eqnarray*}
is called the conditional of $\phi$ for $x$ given $y$..
\end{definition}
In contrast to the case of regular valuation algebras, in a separative algebra, a conditional is not necessarily an element of $\Phi$, but only of $\Phi^*$, for an illustration we refer to probability densities (see example in Section \ref{subsec:SeparativeAlg}), another example is given by set potentials in \cite{kohlas03}. As a consequence, in a separative valuation algebra, extraction is in principle no more defined. We shall however see below that we may still introduce this operation at least partially.

On the other hand, a conditional is still a continuation. In fact, from the definition of a conditional, we have as in the regular case, $\phi_{x \vert y} \cdot \epsilon_y(\phi) = \epsilon_{x \vee y \vee z}(\phi) \cdot f_{\epsilon_z(\phi)}$ But Lemma \ref{le:OrderOfClasses}, \ref{le:SuppOfClasses} and \ref{le:CondOrder} hold obviously also in a separative algebra. Therefore $\epsilon_{x \vee y \vee z}(\phi) \cdot f_{\epsilon_z(\phi)} = \epsilon_{x \vee y \vee z}(\phi)$. And again, in general the conditional is not the only possible continuation. As in the case of regular algebra, we call a valuation $\phi \in \Phi$ \textit{positive}, if $\phi \in [1]_\gamma$. And as in regular algebras, we verify that for a positive element $\phi$ the conditional $\phi_{x \vert y}$ is the unique continuation of $\phi$ from $y$ to $x \vee y$.

This can be illustrated by the example of probability densities, see the example below.

The question is, whether Theorem \ref{th:EquivCondIndepRel} carries over to separative conditionals. It turns out that this is not the case in general. This is a consequence of the fact that conditionals do not belong to $\Phi$. There is however a weaker form of conditional independence relative to a valuation $\phi$.

\begin{definition} \textbf{Weak independence relative to a valuation.} \label{def:WeakCondIndep}
We call $x,y \in Q$ weakly conditionally independent given $z \in Q$ relative to $\phi \in \Phi$, if
\begin{enumerate}
\item $x \bot y \vert z$,
\item $\epsilon_{x \vee y \vee z}(\phi) = \phi_{x \vert z} \cdot \phi_{y \vert z} \cdot \epsilon_z(\phi)$.
\end{enumerate}
We then write $x \amalg_\phi y \vert z$.
\end{definition}

Of course, the relation $x \bot_\phi y \vert z$ is still defined as before and conditional independence implies weak conditional independence.

\begin{proposition} \label{prop:StrongWeak}
If $\Phi$ is a separative valuation algebra, then $x \bot_\phi y \vert z$ implies $x \amalg_\phi y \vert z$.
\end{proposition}

\begin{proof}
This is proved just as (1) $\Rightarrow$ (2) in the proof of Theorem \ref{th:EquivCondIndepRel}.
\end{proof}

That the converse does not hold in general is shown in the example of set potentials, see \cite{kohlas03}. As stated in Theorem \ref{th:EquivCondIndepRel} it is valid in regular algebras, but also for instance in the example of densities. Below we give a sufficient condition for the equivalence of these two concepts.

But first we state the equivalent to Theorem \ref{th:EquivCondIndepRel} for separative valuation algebras

\begin{theorem} \label{th:WeakEquivCindIndep}
Assume $x \bot y \vert z$. The following statements are all equivalent.
\begin{enumerate}
\item $x \amalg_\phi y \vert z$.
\item $\phi_{x \vee y \vert z} = \phi_{x \vert z} \cdot \phi_{y \vert z}$.
\item $\epsilon_{x \vee y \vee z}(\phi) \cdot \epsilon_z(\phi) = \epsilon_{x \vee z}(\phi) \cdot \epsilon_{y \vee z}(\phi)$.
\item $\epsilon_{x \vee y \vee z}(\phi) = \phi_{x \vert z} \cdot \epsilon_{y \vee z}(\phi)$.
\item $\phi_{x \vert y \vee z} = \phi_{x \vert z} \cdot f_{\epsilon_{y \vee z}(\phi)}$.
\end{enumerate}
\end{theorem}

\begin{proof}
(1) $\Rightarrow$ (2) is proved just as in Theorem \ref{th:EquivCondIndepRel}.

(2) $\Rightarrow$ (3) Since $\phi_{x \vee y \vert z}$ is a continuation, we have
\begin{eqnarray*}
\epsilon_{x \vee y \vee z}(\phi) = \phi_{x \vee y \vert z} \cdot \epsilon_z(\phi).
\end{eqnarray*}
Therefore, using (2),
\begin{eqnarray*}
\epsilon_{x \vee y \vee z}(\phi) \cdot \epsilon_z(\phi) &=& \phi_{x \vert z} \cdot \phi_{y \vert z} \cdot \epsilon_z(\phi) \cdot \epsilon_z(\phi) \\
&=& (\phi_{x \vert z} \cdot \epsilon_z(\phi)) \cdot (\phi_{y \vert z} \cdot \epsilon_z(\phi)) \\
&=& \epsilon_{x \vee z}(\phi) \cdot \epsilon_{y \vee z}(\phi)
\end{eqnarray*}

(3) $\Rightarrow$ (4) is proved like (5) $\Rightarrow$ (6) in Theorem \ref{th:EquivCondIndepRel}

(4) $\Rightarrow$ (5) Again, since $\phi_{x \vert y \vee z}$ is a continuation, 
\begin{eqnarray}
\epsilon_{x \vee y \vee z}(\phi) = \phi_{x \vert y \vee z} \cdot \epsilon_{y \vee z}(\phi).
\end{eqnarray}
Therefore, by (4) we have the equation
\begin{eqnarray*}
\phi_{x \vert y \vee z} \cdot \epsilon_{y \vee z}(\phi) = \phi_{x \vert z} \cdot \epsilon_{y \vee z}(\phi).
\end{eqnarray*}
Multiplying both sides with the inverse of $\epsilon_{y \vee z}(\phi)$ we obtain (5).

(5) $\Rightarrow$ (1) Using (5), we have
\begin{eqnarray*}
\epsilon_{x \vee y \vee z}(\phi) &=& \phi_{x \vert y \vee z} \cdot \epsilon_{y \vee z}(\phi) = \phi_{x \vert z} \cdot f_{\epsilon_{y \vee z}(\phi)} \cdot \epsilon_{y \vee z}(\phi) \\
&=& \phi_{x \vert z} \cdot \epsilon_{y \vee z}(\phi) = \phi_{x \vert z} \cdot \phi_{y \vert z} \cdot \epsilon_z(\phi)
\end{eqnarray*}
and this means $x \amalg_\phi y \vert z$.

This concludes the proof.
\end{proof}

\begin{example} \textbf{Exztaczion among conditional probability desities.} \label{expl:SepCondIndep}
Consider a density $f$ on a domain $s$, see the example in Section \ref{subsec:SeparativeAlg}. Then, if $t \subseteq s$, we have the conditional $f_{s \vert t} = f \cdot (\pi_t(f))^{-1}$ or more explicitly, using the inverse as defined in Example \ref{expl:ProbDens}
\begin{eqnarray*}
f_{s \vert t}(\mathbf{x} \vert \mathbf{y}) = \frac{f(\mathbf{x},\mathbf{y})}{\pi_t(f)(\mathbf{y})} 
= \frac{f(\mathbf{x},\mathbf{y})}{\int f(\mathbf{x},\mathbf{y}) d\mathbf{x}}
\end{eqnarray*}
if $\mathbf{x}$  and $\mathbf{y}$ are $s \setminus t$ and $t$-tuples respectively and $\pi_t(f)(\mathbf{y}) > 0$. Otherwise the conditional is zero. Such a conditional density $f_{s \vert t}$ is no more a density, since it is no more integrable. However, for any fixed tuple $\mathbf{y}$, the function $f(\mathbf{x},\mathbf{y})$ as a function $\mathbf{x}$ is a density on $\mathbb{R}^{s \setminus t}$. As such it may be marginalized. This can be used to extend projection to conditionals,
\begin{eqnarray*}
\pi_r(f_{s \vert t})(\mathbf{z},\mathbf{u} \vert \mathbf{y}) = \frac{\int_{- \infty}^\infty f(\mathbf{z},\mathbf{u},\mathbf{y}) d\mathbf{z}}{\int_{- \infty}^\infty f(\mathbf{x},\mathbf{y}) d\mathbf{x}},
\end{eqnarray*}
where $\mathbf{x} = (\mathbf{z},\mathbf{u})$ and $\mathbf{z}$ and $\mathbf{u}$ are $s \setminus r$- and $r \setminus t$-tuples for $t \subseteq r \subseteq s$. We shall see below that in this way, extraction can also partially be extended to conditionals in a separative valuation algebra.
\end{example}

As remarked above, conditionals are in general no more elements of $\Phi$. As a consequence extraction does not extend to conditionals. Nevertheless, it is possible to define an extraction operation for conditionals as a partial operation. In fact, if $y \leq z \leq x \vee y$, define
\begin{eqnarray*}
\epsilon_z(\phi_{x \vert z}) = \epsilon_z(\phi) \cdot (\epsilon_y(\phi))^{-1}.
\end{eqnarray*}
This is obviously again a conditional. It turns out that Proposition \ref{prop:PreliminarRes} with this definition still holds in the case of separative valuation algebras, however with weak conditional independence.

\begin{proposition} \label{prop:SpepValPrepRes} 
If $\phi$ is a separative valuation algebra, then
\begin{enumerate}
\item $\epsilon_y(\phi_{x \vert y}) = f_{\epsilon_y(\phi)}$,
\item if $x \amalg_\phi y \vert z$, then $\phi_{x \vee y \vert z} = \phi_{x \vert y \vee z} \cdot \phi_{y \vert z}$,
\item if $z \leq x$, then $\epsilon_{y \vee z}(\phi_{x \vert y}) = \phi_{z \vert y}$,
\item if $x \amalg_\phi y \vert z$, then $\epsilon_{y \vee z}(\phi_{z \vert x \vee y} \cdot \phi_{x \vert y}) = \phi_{z \vert y}$,
\item if $y$ is a support of $\psi$, then $(\epsilon_{x \vee y}(\phi) \cdot \psi)_{x \vert y} = \phi_{x \vert y} \cdot f_\psi$.
\end{enumerate}
\end{proposition}

\begin{proof}
The proof is exactly as in Proposition \ref{prop:PreliminarRes}.
\end{proof}

%for the follwing we the concept of support of conditional, hence we need to extend extraction to them!!!!
Conditional independence implies weak conditional independence, Proposition \ref{prop:StrongWeak}. But the two concepts are not equivalent in general. The following two conditions are sufficient for the equivalence of the two concepts:
\begin{enumerate}
\item $\epsilon_z(\phi) = \chi_1 \cdot \chi_2$ where $\chi_1,\chi_2 \in \Phi$ both with supports $z$,
\item $\psi_1 = \phi_{x \vert z} \cdot \chi_1$ and $\psi_2 = \phi_{y \vert z} \cdot \chi_2$ belong both to $\Phi$.
\end{enumerate}
In fact under these conditions we have, if $x \amalg_\phi y \vert z$, 
\begin{eqnarray*}
\epsilon_{x \vee y \vee z}(\phi) = \phi_{x \vert z} \cdot \phi_{y \vert z} \cdot \epsilon_z(\phi) = ( \phi_{x \vert z} \cdot \chi_1) \cdot ( \phi_{y \vert z} \cdot \chi_2) = \psi_1 \cdot \psi_2,
\end{eqnarray*}
and $\psi_1$ and $\psi_2$ have supports $x \vee z$ and $y \vee z$ respectively. So in this case $x \amalg_\phi y \vert z$ implies $x \bot_\phi y \vert z$.

%%%%%%%%%%%%%%%%%%%%%%%%%%%%%%%%%%%%%%%%%%%%%%%%%%%%%%%%%%%%%%

\chapter{Conditional independence} \label{sec:CondIndep}

\section{Related separoids}

In this section, we examine the relations of conditional independence of domains or questions relative to a piece of information or a valuation as introduced in Section \ref{subsec:Continuation} and Section \ref{subsec:SepContinuation}, but not only for regular or separative valuation algebras, but for valuation and information algebras in general. We ask whether these relations form a q-separoid or even a separoid, and we address the so-called marginal problem.

Let $(\phi,\cdot,1;E)$ with $E = \{\epsilon_x:x \in Q\}$, be a valuation algebra or an iinformation algebra. Definition \ref{def.CondIndepInf} of conditional independence $x \bot_\phi y \vert z$ relative to a valuation or a piece of information $\phi$ is general and does not depend on regularity or separativity. In a first step, we study whether or under what conditions separoid properties are valid for the relation $x \bot_\phi y \vert z$, see Section \ref{sec:StructOfQuest}. Obviously, Symmetry, C2 is valid, for any $\phi \in \Phi$ and $x,y,z \in Q$,
\begin{description}
\item[C2] $x \bot_\phi y \vert z$ implies $y \bot_\phi x \vert z$
\end{description}
Further, since, trivially, $\epsilon_{x \vee y}(\phi) = \epsilon_{x \vee y}(\phi) \cdot 1$ and the unit element has support $y$ we have
\begin{description}
\item[C1] $x \bot_\phi y \vert y$.
\end{description}
Further, if $x \bot_\phi y \vert z$ we have by the definition of this relation also $x \bot_\phi y \vee z \vert z$, hence
\begin{description}
\item[C4] $x \bot_\phi y \vert y$ implies $x \bot_\phi y \vee z \vert z$..
\end{description}
All this is trivial. The separoid condition $\mathbf{C3}$ however is less trivial: Assume $x \bot_\phi y \vert z$ and $u \leq y$. Does this imply $x \bot_\phi u \vert z$? By Lemma \ref{le:ExtCombRs} from $\epsilon_{x \vee y \vee z}(\phi) = \psi_1 \cdot \psi_2$, where $\psi_1$ and $\psi_2$ have supports $x \vee z$ and $y \vee z$ respectively, we obtain $\epsilon_{x \vee u \vee z}(\phi) = \psi_1 \cdot \epsilon_{x \vee u \vee z}(\psi_2)$, since $x \vee z \leq x \vee u \vee z \leq x \vee y \vee z$. But does the second factor have support $u \vee z$? This does not seem the case in general. Since $\psi_2$ has support $y \vee z$ we would have
\begin{eqnarray*}
\epsilon_{x \vee u \vee z}(\phi) = \epsilon_{x \vee u \vee z}(\epsilon_{u \vee z}(\phi)) = \epsilon_{u \vee z}(\phi)
\end{eqnarray*}
if $x \vee u \vee z \bot y \vee z \vert u \vee z$. But this is not the case in general. However, this holds if $(Q,\leq)$ is a distributive lattice. In this case, the relation $x \bot y \vert z$ is commutative, $x \bot y \vert z = x \bot_L y \vert z$ (Proposition \ref{prop:LCond_1}) and thus we have $x \bot y \vert x \wedge y$. Now, in a distributive lattice $(x \vee u \vee z) \wedge (y \vee z) = u \vee z$ and therefore in this case $x \vee u \vee z \bot y \vee z \vert u \vee z$. This proves the following theorem.

\begin{theorem} \label{th:CondIndepQSep}
If $(\Phi,\cdot,1;E)$ with $E = \{\epsilon_x:x \in Q\}$ is a valuation algebra, where $(Q,\leq)$ is a distributive lattice, then for all $\phi \in \Phi$, the relation $x \bot_\phi y \vert z$ forms a q-separoid.
\end{theorem}

If the lattice $(Q,\leq)$ is \textit{modular}, then the relation $x \bot y \vert z$ is a separoid (proposition \ref{prop:ModLatt}). In particular we have property C5, namely $x \bot y \vert z$ and $u \leq y$ imply $x \bot y \vert z \vee u$. Assume now $x \bot_\phi y \vert z$ so that $\epsilon_{x \vee y \vee z}(\phi) = \psi_1 \cdot \psi$ with supports $x \vee z$ and $y \vee z$ for $\psi_1$ and $\psi_2$ respectively. But then $\psi_1$ has also support $x \vee z \vee u$ since this domain is greater than $x \vee z$ and, if $u \leq y$, then $y \vee z = y \vee z \vee u$. So, \textbf{C5} holds also for the relation $x \bot_\phi y \vert z$.

\begin{proposition} \label{propC5CondIndepRel}
If $(\Phi,\cdot,1;E)$ with $E = \{\epsilon_x:x \in Q\}$ is a valuation algebra, where $(Q,\leq)$ is a modular lattice, then
\begin{description}
\item[C5] $x \bot_\phi y \vert z$ and $u \leq y$ imply $x \bot_\phi y \vert z \vee u$.
\end{description}
\end{proposition}

The condition C6 of a separoid is another question. It holds in a regular valuation algebra. In fact, from $x \bot_\phi u\vert y \vee z$ we have $\epsilon_{x \vee y \vee z \vee u}(\phi) = \phi_{x \vert y \vee z} \cdot \phi_{u \vert y \vee z} \cdot \epsilon_{y \vee z}(\phi)$ (Theorem \ref{th:EquivCondIndepRel}, item 2).  Further, $\phi_{u \vert y \vee z}  \cdot \epsilon_{y \vee z}(\phi) = \epsilon_{y \vee z \vee u}(\phi) = \phi_{y \vee u \vert z} \cdot \epsilon_z(\phi)$. By the same theorem (item 7) we have also that $x \bot_\phi y \vert z$ implies $\phi_{x \vert y \vee z} = \phi_{x \vert z} \cdot f_{\epsilon_{y \vee z}(\phi)}$ It follows 
\begin{eqnarray*}
\epsilon_{x \vee y \vee z \vee u}(\phi) &=& \phi_{x \vert z} \cdot f_{\epsilon_{y \vee z}(\phi)} \cdot \phi_{u \vert y \vee z} \cdot \epsilon_{y \vee z}(\phi) \\
&=&  \phi_{x \vert z} \cdot \phi_{u \vert y \vee z} \cdot \epsilon_{y \vee z}(\phi).
\end{eqnarray*}
But this means that $x \bot_\phi y \vee u \vert z$. This is \textbf{C6} for the relation $x \bot_\phi y \vert z$. Thus we have proved the following result.

\begin{theorem} \label{th:RegSep}
If $(\Phi,\cdot,1;E)$ with $E = \{\epsilon_x:x \in Q\}$ is a regular valuation algebra, where $(Q,\leq)$ is a modular lattice, then the relation $x \bot_\phi y \vert z$ is a separoid. That is in addition to C1 to C4 we have further
\begin{description}
\item[C5] $x \bot_\phi y \vert z$ and $u \leq y$ imply $x \bot_\phi y \vert z \vee u$.
\item[C6] $x \bot_\phi y \vert z$ and $x \bot_\phi u \vert y \vee z$ imply $x \bot_\phi y \vee u \vert z$
\end{description}
\end{theorem}

As an illustration, we consider the multivariate case. Let $I$ be the index set of variables. Then the set of questions can be identified with the subsets of $I$, see Section \ref{sec:SetAlg}. This is a distributive lattice and the conditional independence relation for subsets $x$, $y$ and $z$ of $I$ is defined by $x \bot y \vert z = x \bot_L y \vert z = x \bot_d y \vert z$, that is $(x \cup z) \cap (y \cup z) = z$ or $x \cap y \leq z$, see Section \ref{sec:StructOfQuest}. Since here $(Q,\leq)$ is even a Boolean lattice, there is still another definition of the conditional independence relation. Suppose $x \bot y \vert z$ and define $r = x \setminus z$, $s = y \setminus z$ and $t = z$. Then $r$,$s$ and $t$ are disjoint subsets of $I$. Then, obviously $r \bot s \vert t$ since $(r \cup t) \cap (s \cup t) = t$. So for disjoint subsets $r$, $s$ and $t$ of $I$ we always (trivially) have $r \bot s \vert t$. This allows us to reformulate the separoid properties in a multivariate case in an alternative form, familiar from conditional independence between random variables in probability theory.

Now, if $r$, $s$ and $t$ are disjoint subsets of $I$, then for an element $\phi \in \Phi$ we have $r \bot_\phi s \vert t$ if \begin{eqnarray*}
\epsilon_{r \cup s \cup t}(\phi) = \psi_1 \cdot \psi_2,
\end{eqnarray*}
where $\psi_1$ and $\psi_2$ have supports $r \cup t$ and $s \cup t$ respectively, according to Definition \ref{def.CondIndepInf}. This relation has for a valuation algebra the following properties:

\begin{theorem} \label{th:Graphoid}
Let $(\phi,\cdot,1;E)$ be a valuation algebra, with $E = \{\epsilon_x:x \in Q\}$, where $Q$ is the Boolean subset lattice of an index set $I$ and $s,t,u,v \in Q$ disjoint sets. Then
\begin{description}
\item[G1] Symmetry: $s \bot_\phi t \vert u$ implies $t \bot_\phi s \vert u$,
\item[G2] Decomposition: $s \bot_\phi t \cup v \vert u$ implies $s \bot_\phi t \vert u$,
\item[G3] Weak Union: $s \bot_\phi t \cup v \vert u$ implies $s \bot_\phi t \vert u \cup v$,
\end{description}
\end{theorem}

In the proof of this theorem we need the following simple result for commutative algebras.

\begin{lemma} \label{le:ComCombAx}
Let $(\phi,\cdot,1;E)$ be a commutative valuation algebra. Then if $x \leq z \leq x \vee y$ in $Q$, $\psi_1$ has support $x$ and $\psi_2$ support $y$,
\begin{eqnarray*}
\epsilon_z(\psi_1 \cdot \psi_2) = \psi_1 \cdot \epsilon_{y \wedge z}(\psi_2).
\end{eqnarray*}
\end{lemma}

\begin{proof}
Note that $\psi_1$ has also support $z$, since $x \leq z$. So $\epsilon_z(\psi_1 \cdot \psi_2) = \psi_1 \cdot \epsilon_z(\psi_2)$. But in a commutative algebra $y \bot z \vert y \wedge z$, so that $\epsilon_z(\psi_2) = \epsilon_z(\epsilon_{y \wedge z}(\psi_2))$ and since the extraction operators commute, $\epsilon_z(\psi_2) = \epsilon_{y \wedge z}(\psi_2)$.
\end{proof}

\begin{proof}
Now, we turn to the proof of the theorem. G1, symmetry is obvious from the definition. For G2, $s \bot_\phi t \cup v \vert u$ means that $\epsilon_{s \cup t \cup u \cup v}(\phi) = \psi_1 \cdot \psi_2$, where $\psi_1$ has support $s \cup u$ and $\psi_2$ support $t \cup u \cup v$. Applying Lemma \ref{le:ComCombAx}, we obtain $\epsilon_{s \cup t \cup u}(\phi) = \psi_1 \cdot \epsilon_{t \cup u}(\psi_2)$ since $(s \cup t \cup u) \cap (t \cup u \cup v) = t \cup u$ by distributivity. But this shows that $s \bot_\phi t \vert u$.

For G3 we have from $s \bot_\phi t \cup v \vert u$ that $\epsilon_{s \cup t \cup u \cup v}(\phi) = \psi_1 \cdot \psi_2$, where $s \cup u$ is a support for $\psi_1$ and $t \cup u \cup v$ a support for $\psi_2$. But then $s \cup u \cup v$ is also a support for $\psi_1$ and we have indeed $s \bot_\phi t \vert u \cup v$.
\end{proof}

If the valuation algebra is regular, then in addition the following holds.

\begin{theorem} \label{th:Graphoid1}
Let $(\Phi,\cdot,1;E)$ be a regular valuation algebra, with $E = \{\epsilon_x:x \in Q\}$, where $Q$ is the Boolean subset lattice of an index set $I$ and $s,t,u,v \in Q$ disjoint sets. Then
\begin{description}
\item[G4] Contraction: $s \bot_\phi t \vert u$ and $s \bot_\phi v \vert t \cup u$ imply $s \bot_\phi t \cup v \vert u$,
\end{description}
\end{theorem}

\begin{proof}
The assumption $s \bot_\phi t \vert u$ means that $\epsilon_{s \cup t \cup u}(\phi) = \psi_1 \cdot \psi_2$ where $\psi_1$ and $\psi_2$ have supports $s \cup u$ and $t \cup u$ respectively. Further $s \bot_\phi v \vert t \cup u$ on the other hand means that
$\epsilon_{s \cup t \cup u \cup v}(\phi) = \eta_1 \cdot \eta_2$ where $\eta_1$ and $\eta_2$ have supports $s \cup t \cup u$ and $t \cup u \cup v$ respectively. Using Lemma \ref{le:ComCombAx}  we obtain from this $\epsilon_{s \cup t \cup u}(\phi) = \eta_1 \cdot \epsilon_{t \cup u}(\eta_2)$ since $(s \cup t \cup u) \cap (t \cup u \cup v) = t \cup u$. Then we conclude that
\begin{eqnarray*}
\epsilon_{s \cup t \cup u \cup v}(\phi) = \eta_1 \cdot \eta_{2 v \vert t \cup u} \cdot \epsilon_{t \cup u}(\eta_2) = \psi_1 \cdot (\psi_2 \cdot \eta_{2 v \vert t \cup u}).
\end{eqnarray*}
Here the first factor has support $s \cup u$ whereas the second has support $t \cup u \cup v$ and this means that $s \bot_\phi t \cup v \vert u$. 
\end{proof}

Properties G1 to G4 define a structure termed a \textit{semi-graphoid} in \cite{PearlPaz1989}. 

Still, for regular valuation algebras and positive valuations, we have yet another result.

\begin{theorem} \label{th:Graphoid1}
Let $(\Phi,\cdot,1;E)$ be a regular valuation algebra, with $E = \{\epsilon_x:x \in Q\}$, where $Q$ is the Boolean subset lattice of an index set $I$ and $s,t,u,v \in Q$ disjoint sets. If $\phi \in \Phi$ is positive, then
\begin{description}
\item[G5] Intersection: $s \bot t \vert u \cup v$ and $s \bot v \vert t \cup u$ imply $s \bot t \cup v \vert u$.
\end{description}
\end{theorem}

\begin{proof}
Recall that $\phi \in [1]_\gamma$ since $\phi$ is assumed positive. Then, since $\equiv_\gamma$ is a congruence relative to extraction, $\phi \equiv_\gamma 1$ implies for any $x \in Q$ that $\epsilon_x(\phi) \equiv_\gamma \epsilon_x(1) = 1$. That is, if $\phi$ is positive, so is $\epsilon_x(\phi)$. Now according to the assumptions $s \bot t \vert u \cup v$ and $s \bot v \vert t \cup u$ and by Theorem \ref{th:EquivCondIndepRel}
\begin{eqnarray*}
\phi_{s \vert t \cup u \cup v} &=& \phi_{s \vert u \cup v}, \\
\phi_{s \vert t \cup u \cup v} &=& \phi_{s \vert t \cup u}
\end{eqnarray*}
since the element $\phi$ is positive, that is $f_{\epsilon_{t \cup u \cup v}}(\phi) = 1$. So we have $\phi_{s \vert u \cup v} = \phi_{s \vert t \cup u}$. Combine now both sides of this identity with $\epsilon_{t \cup u}(\phi) \cdot \epsilon_{u \cup v}(\phi)$ to obtain 
\begin{eqnarray*}
\epsilon_{s \cup u \cup v}(\phi) \cdot \epsilon_{t \cup u}(\phi) = \epsilon_{s \cup u \cup v}(\phi) \cdot \epsilon_{u \cup v}(\phi).
\end{eqnarray*}
Next, apply the operator $\epsilon_{s \cup t \cup u}$ to both sides and use Lemma \ref{le:ComCombAx}. The we get 
\begin{eqnarray*}
\epsilon_{s \cup u}(\phi) \cdot \epsilon_{t \cup u}(\phi) = \epsilon_{s \cup t \cup u}(\phi) \cdot \epsilon_u(\phi).
\end{eqnarray*}
By Theorem \ref{th:EquivCondIndepRel} this means that $s \bot_\phi t \vert u$. But then, by the same theorem and positivity of $\phi$, we have $\phi_{s \vert t \cup u} = \phi_{s \vert u}$ and from this it follows that $\phi_{s \vert t \cup u \cup v} = \phi_{s \vert u}$ and this means that $s \bot t \cup v \vert u$.
\end{proof}

All these results have already been stated and proved for the labeled version of valuation algebras in \cite{kohlas03}. For separative valuation algebras and weak conditional independence, similar results hold.

\begin{theorem} \label{th:SpeValGraph}
Let $(\Phi,\cdot,1;E)$ be a separative valuation algebra, with $E = \{\epsilon_x:x \in Q\}$, where $(Q,\leq)$ is a distributive lattice. Then, the weak conditional independence relation $x \amalg_\phi y \vert z$ forms a q-separoid.
\end{theorem}

\begin{proof}
C1 follows since $\epsilon_{x \vee y}(\phi) \cdot \epsilon_y(\phi) = \epsilon_{x \vee y \vee y}(\phi) \cdot \epsilon_{y \vee y}(\phi)$. C2 is obvious from the definition of the relation. For C3 note that $x \amalg_\phi y \vert z$ implies 
\begin{eqnarray*}
\epsilon_{x \vee y \vee z}(\phi) \cdot \epsilon_z(\phi) = \epsilon_{x \vee z}(\phi) \cdot \epsilon_{y \vee z}(\phi).
\end{eqnarray*}
If we extract both sides for $x \vee z \vee u$, we obtain, using Lemma \ref{le:ComCombAx},
\begin{eqnarray*}
\epsilon_{x \vee z \vee u}(\phi) \cdot  \epsilon_z(\phi) = \epsilon_{x \vee z}(\phi) \cdot \epsilon_{z \vee u}(\phi)
\end{eqnarray*}
since $x \vee z \leq x \vee z \vee u \leq x \vee y \vee z$ and $(x \vee z \vee u) \cap (y \vee z) = u \vee z$. This is then $x \amalg_\phi u \vert z$. Finally from $x \amalg_\phi y \vert z$ we have $\epsilon_{x \vee y \vee z}(\phi) \cdot \epsilon_z(\phi)  = \epsilon_{x \vee z}(\phi) \cdot \epsilon_{y \vee z}(\phi)$. Since $y \vee z = (y \vee z) \vee z$ we have also $x \amalg_\phi y \vee z \vert z$, hence C4.
\end{proof}

Theorem \ref{th:RegSep} carries over to separative algebras too.

\begin{theorem} \label{th:SepaSep}
If $(\Phi,\cdot,1;E)$ with $E = \{\epsilon_x:x \in Q\}$ is a separative valuation algebra, where $(Q,\leq)$ is a modular lattice, then the relation $x \amalg_\phi y \vert z$ is a separoid. That is in addition to C1 to C4 we have further
\begin{description}
\item[C5] $x \amalg_\phi y \vert z$ and $u \leq y$ imply $x \amalg_\phi y \vert z \vee u$.
\item[C6] $x \amalg_\phi y \vert z$ and $x \amalg_\phi u \vert y \vee z$ imply $x \amalg_\phi y \vee u \vert z$
\end{description}
\end{theorem}

\begin{proof}
For C5 according to $x \amalg_\phi y \vert z$ we have, see Theorem \ref{th:WeakEquivCindIndep}, 
\begin{eqnarray*}
\epsilon_{x \vee y \vee z}(\phi) \cdot \epsilon_z(\phi) = \epsilon_{x \vee z}(\phi) \cdot \epsilon_{y \vee z}(\phi).
\end{eqnarray*}
Combine both sides with $\phi_{z \vert u}$. This gives
\begin{eqnarray*}
\epsilon_{x \vee y \vee z}(\phi) \cdot \epsilon_{z \vee u}(\phi) = \epsilon_{x \vee z \vee u}(\phi) \cdot \epsilon_{y \vee z}(\phi)
\end{eqnarray*}
and this means $x \amalg_\phi y \vert z \vee u$ by Theorem \ref{th:WeakEquivCindIndep} since $u \leq y$.

To show C6 we use the assumption $x \amalg_\phi u \vert y \vee z$ which tells us that 
\begin{eqnarray*}
\epsilon_{x \vee y \vee z \vee u}(\phi) \cdot \epsilon_{y \vee z}(\phi) = \epsilon_{x \vee y \vee z}(\phi) \cdot \epsilon_{y \vee z \vee u}(\phi).
\end{eqnarray*}
Then we further have $\epsilon_{ y \vee z}(\phi) = \phi_{y \vert z} \cdot \epsilon_z(\phi)$ and $\epsilon_{x \vee y \vee z}(\phi) = \phi_{y \vert x \vee z} \cdot \epsilon_{x \vee z}(\phi)$. From $x \amalg_\phi y \vert z$ we deduce that $\phi_{y \vert x \vee z} = \phi_{y \vert z} \cdot f_{\epsilon_{x \vee z}(\phi)}$ (Theorem \ref{th:WeakEquivCindIndep}) so that $\epsilon_{x \vee y \vee z}(\phi) = \phi_{y \vert z} \cdot \epsilon_{x \vee z}(\phi)$. Introducing this above, we obtain
\begin{eqnarray*}
\epsilon_{x \vee y \vee z \vee u}(\phi) \cdot \phi_{y \vert z} \cdot \epsilon_z(\phi)  = \phi_{y \vert z} \cdot \epsilon_{x \vee z}(\phi) \cdot \epsilon_{y \vee z \vee u}(\phi).
\end{eqnarray*}
Elimination $\phi_{y \vert z}$ on both sides and noting that $f_{\epsilon_{y \vee z}(\phi)}$ is absorbed on both sides, it follows
\begin{eqnarray*}
\epsilon_{x \vee y \vee z \vee u}(\phi) \cdot \epsilon_z(\phi)  = \epsilon_{x \vee z}(\phi) \cdot \epsilon_{y \vee z \vee u}(\phi).
\end{eqnarray*}
This means that $x \amalg_\phi y \vee u \vert z$.
\end{proof}
 So, weak conditional independence exhibts the same structure as conditional independence.

%%%%%%%%%%%%%%%%%%%%%%%%%%%%%%%%%%%%%%%%%%%%%%%%%%%%%%%%%%%%%%%%
\section{The marginal problem} \label{subsec:MarpProblem}

The marginal problem consists in finding relative to a set of domains $x_1,\ldots,x_n$ a set of valuations $\phi_1,\ldots,\phi_n$ with domains $x_1,\ldots,x_n$ respectively, which are compatible among themselves in the sense that there is a valuation $\phi$ such that $\epsilon _{x_i} = \phi_i$ for $i = 1,\ldots,n$. This is called the marginal problem. The solution of this general marginal problem, that is to decide whether given valuations $\phi_1,\ldots,\phi_n$ are compatible in this sense and to find $\phi$ is difficult. But there are important, more specific instances of the problem where simple necessary and even sufficient conditions for compatibility can be found. 

The first case we examine is the one of a (domain-free) information algebra, that is of an idempotent valuation algebra. This case has been treated in \cite{CasaKohIIJAR,CasaKohl22} in the context of imprecise probabilities, especially coherent sets of gambles. It was already noted there, that the results do not depend on the specific example, but are general for information algebras. So, here we discuss the general case of an idempotent information algebra. First, we formulate the problem formally.

\begin{definition} \textbf{Compatibility}
A set $\phi_1,\ldots,\phi_n$ of elements of a valuation or information algebra $\Phi$ with supports $x_1,\ldots,x_n$ respectively is called compatible if there is an element $\phi \in \Phi$ such that
\begin{eqnarray*}
\epsilon_{x_i}(\phi) = \phi_i \textrm{ for}\ i = 1,\ldots,n.
\end{eqnarray*}
\end{definition}

As noted above, the elements are compatible, if the they are the marginals or extractions of a common element. In the case of an information algebra, there is a very simple necessary and sufficient condition for compatibility.

\begin{proposition} \label{prop:IdempotComp}
Let $(\Phi,\cdot,0,1;E)$ with $E = \{\epsilon_x:x \in Q\}$ be an information algebra. Then $\phi_1,\ldots,\phi_n \in \Phi$ is compatible if an only if
\begin{eqnarray} \label{eq:IdempotComp}
\phi_i = \epsilon_{x_i}(\phi_1 \cdot \ldots \cdot \phi_n).
\end{eqnarray}
\end{proposition}

\begin{proof}
If (\ref{eq:IdempotComp}) holds, the elements $\phi_1,\ldots,\phi_n$ are compatible with $\phi = \phi_1 \cdot \ldots \cdot \phi_n$. On the other, hand, if $\phi_1,\ldots,\phi_n$ are compatible, then there is an element $\phi \in \Phi$ such that $\phi_i = \epsilon_{x_i}(\phi)$. Now $\phi \geq \epsilon_{x_i}(\phi) = \phi_i$ so that $\phi \geq \phi_1 \cdot \ldots \cdot \phi_n$. It follows that
\begin{eqnarray} \label{eq:CompInequ}
\phi_i = \epsilon_{x_i}(\phi) \geq \epsilon_{x_i}(\phi_1 \cdot \ldots \cdot \phi_n) \geq \epsilon_{x_i}(\phi_i) = \phi_i,
\end{eqnarray}
since $\phi_i$ has support $x_i$. So we have indeed $\phi_i = \epsilon_{x_i}(\phi_1 \cdot \ldots \cdot \phi_n)$.
\end{proof}

Of particular interest is the case of the compatibility of two elements $\phi_i$ and $\phi_j$ with support $x_i $ and $x_j$. If such two elements are compatible, we call them pairwise compatible. It is obvious that compatibility of $\phi_1,\ldots,\phi_n$ implies pairwise compatibility of all pairs $\phi_i$ and $\phi_j$, since 
\begin{eqnarray*}
\phi_i = \epsilon_{x_i}(\phi) \geq \epsilon_{x_i}(\phi_1 \cdot \ldots \cdot \phi_n) \geq \epsilon_{x_i}(\phi_i \cdot \phi_j) \geq \epsilon_{x_i}(\phi_i) = \phi_i,
\end{eqnarray*}
so that $\phi_i = \epsilon_{x_i}(\phi_i \cdot \phi_j)$ and $\phi_j = \epsilon_{x_j}(\phi_i \cdot \phi_j)$. However, pairwise compatibility of all pairs in the set $\phi_1,\ldots,\phi_n$ does not imply in general compatibility of this set. We return to this question below.

Pairwise compatibility, as well as compatibility in general, are closely related to conditional independence. It provides a sufficient condition for pairwise compatibility.

\begin{proposition} \label{prop:COMPCondIndep}
Let $\Phi$ be an information algebra. If $x \bot y \vert z$ and $\phi_1,\phi_2 \in \Phi$ are two elements with support $x \vee z$ and $y \vee z$, such that $\epsilon_z(\phi_1) = \epsilon_z(\phi_2)$, then $\phi_1$ and $\phi_2$ are pairwise compatible. 
\end{proposition}

\begin{proof}
Using $x \bot y \vert z$ we have
\begin{eqnarray*}
\epsilon_{x \vee z}(\phi_1 \cdot \phi_2) &=& \phi_1 \cdot \epsilon_{x \vee z}(\phi_2) = \phi_1 \cdot \epsilon_{x \vee z}(\epsilon_z(\phi_2))
= \phi_1 \cdot \epsilon_{x \vee z}(\epsilon_z(\phi_1)) ) \\
&=& \phi_1 \cdot \epsilon_z(\phi_1) = \phi_1.
\end{eqnarray*}
For $\phi_2$ we obtain in the same way $\epsilon_{y \vee z}(\phi_1 \cdot \phi_2) = \phi_2$. So, $\phi_1$ and $\phi_2$ are indeed pairwise compatible.
\end{proof}

This sufficiency result extends to a family $\phi_1,\ldots \phi_n$ with $n \geq 2$ if the domains $x_1,\ldots,x_n$ form a hypertree, see Section \ref{subsec:CondIndepStruct}.

\begin{theorem} \label{th:InfAlgHyptree}
Let $\Phi$ be an information algebra. Consider a set of elements $\phi_1,\ldots,\phi_n \in \Phi$ with supports $x_1,\ldots,x_n$. If the set $S = \{x_1,\ldots,x_n\}$ forms a hypertree and the elements of $\phi_1,\ldots,\phi_n $ are pairwise compatible, then they are compatible
\end{theorem}

This theorem has been proved in \cite{CasaKohIIJAR,CasaKohl22} in the context of imprecise probability. It has been noted there, that the proof does not depend on the particularities of the example of imprecise probability. Therefore, we do not repeat the proof here. Also, below we extend this theorem to regular valuation algebras. The proof of this theorem covers then also Theorem \ref{th:InfAlgHyptree} since information algebras are regular valuation algebras.

We turn now to the case of  regular valuation algebras. Note that in this case (\ref{eq:CompInequ}) does not imply $\phi_i = \epsilon_{x_i}(\phi_1 \cdot \ldots \cdot \phi_n)$, since the information order is only a preorder in regular algebras, so that Proposition \ref{prop:IdempotComp} is no more valid. But we have a sufficient condition similar to Proposition \ref{prop:COMPCondIndep}

\begin{proposition} \label{prop:RegpairWiseComp}
Let $(\Phi,\cdot,1;E)$ with $E = \{\epsilon_x:x \in Q\}$ be a regular valuation algebra. If $x \bot y \vert z$ and $\phi_1,\phi_2 \in \Phi$ are two elements with support $x \vee z$ and $y \vee z$, such that $\epsilon_z(\phi_1) = \epsilon_z(\phi_2)$, then $\phi_1$ and $\phi_2$ are pairwise compatible. 
\end{proposition}

\begin{proof}
Define $\eta = \epsilon_z(\phi_1) = \epsilon_z(\phi_2)$ and
\begin{eqnarray*}
\phi = \phi_1 \cdot \phi_2 \cdot \eta^{-1}.
\end{eqnarray*}
Then, using $x \vee z \bot y \vee z\vert z$, we have
\begin{eqnarray*}
\epsilon_{x \vee z}(\phi) &=& \phi_1 \cdot \epsilon_{x \vee z}(\phi_2 \cdot \eta^{-1}) = \phi_1 \cdot \epsilon_{x \vee z}(\epsilon_z(\phi_2 \cdot \eta^{-1})) \\
&=& \phi_1 \cdot \epsilon_{x \vee z}(\epsilon_z(\phi_2) \cdot \eta^{-1})  \\
&=& \phi_1 \cdot \epsilon_{x \vee z}(f_\eta)) = \epsilon_{x \vee z}(\phi_1 \cdot f_\eta) = \epsilon_{x \vee z}(\phi_1) = \phi_1.
\end{eqnarray*}
In the same way, we obtain $\epsilon_{y \vee z}(\phi) = \phi_2$.
\end{proof}

This proposition is a generalization of Proposition \ref{prop:COMPCondIndep}. Of course the same results holds also, if $\phi_1$ and $\phi_2$ have support $x$ and $y$ since then they have also support $x \vee z$ and $y \vee z$. This is so, because $\epsilon_x(\phi) = \epsilon_x(\epsilon_{x \vee z}(\phi)) = \epsilon_x(\phi_1) = \phi_1$.

Next, we want to extend Theorem \ref{th:InfAlgHyptree}. Consider a hypertree $\{x_1,\ldots,x_n\}$ with the numbering selected such  that $x_i \bot \vee_{j=i+1}^n x_j \vert x_{b(i)}$ for $i =1,\ldots,n-1$, see Section \ref{subsec:CondIndepStruct} and $\phi_i$ for $i = 1,\ldots,n$ with supports $x_i$, and so that its pairs $\phi_i$ and $\phi_{b (i)}$ are pairwise compatible in the sense of Proposition \ref{prop:RegpairWiseComp}. That is there a domains $z_i \in Q$ such that $x_i \bot x_{b(i)} \vert z_i$ and $\epsilon_{z_i}(\phi_i) = \epsilon_{z_i}(\phi_{b(i})$ for $i =1,\ldots,n-1$. The elements $z_i$ are called separators in the hypertree. Then, we have the following extension of Theorem \ref{th:InfAlgHyptree}.

\begin{theorem} \label{th:ValAlgHyptree}
Let $\Phi$ be a regular valuation algebra. Consider a set of elements $\phi_1,\ldots,\phi_n \in \Phi$ with supports $x_1,\ldots,x_n$. If the set $S = \{x_1,\ldots,x_n\}$ forms a hypertree and the elements $\phi_i$ and $\phi_{b(i)}$ of $\phi_1,\ldots,\phi_n $ are pairwise compatible in the sense that $x_i \bot x_{b(i)} \vert z_i$, where $z_i \leq x_i$, and $\epsilon_{z_i}(\phi_i) = \epsilon_{z_i}(\phi_{b(i)}) = \eta_i$ for $i =1,\ldots,n-1$, then the elements $\phi_1,\ldots,\phi_n$ are compatible and 
\begin{eqnarray} \label{eq:ProdComp}
\phi_i = \epsilon_{x_i}(\phi_1 \cdot \ldots \cdot \phi_n \cdot \eta_1^{-1} \cdot \ldots \cdot \eta_{n-1}^{-1}),\ i = 1,\ldots,n.
\end{eqnarray}
\end{theorem}

\begin{proof}
Define
\begin{eqnarray*}
\phi = \phi_1 \cdot \ldots \cdot \phi_n \cdot \eta_1^{-1} \cdot \ldots \cdot \eta_{n-1}^{-1}
\end{eqnarray*}
and further $y_i = x_{i+1} \vee \ldots \vee x_n \vee z_{i+1} \vee \ldots \vee z_n $ for $i =1,\ldots,n-1$. In a first step, we eliminate $x_1$ from the hypertree by extracting $\phi$ to the domain $y_{1}$. We obtain 
\begin{eqnarray*}
\epsilon_{y_1}(\phi) &= &\epsilon_{y_1}(\phi_1 \cdot \ldots \cdot \phi_n \cdot \eta_1^{-1} \cdot \ldots \cdot \eta_{n-1}^{-1}) \\
&=& \epsilon_{y_1}(\phi_1 \cdot \eta_1^{-1}) \cdot \phi_2 \cdot \ldots \cdot \phi_n \cdot \eta_2^{-1} \cdot \ldots \cdot \eta_{n-1}^{-1},
\end{eqnarray*}
since the part $\phi_2 \cdot \ldots \cdot \phi_n \cdot \eta_2^{-1} \cdot \ldots \cdot \eta_{n-1}^{-1}$ of $\phi$ has support $y_1$. Now, we use the hypertree condition $x_1 \bot y_1 \vert x_{b(1)}$  which gives us
\begin{eqnarray*}
\epsilon_{y_1}(\phi) &=& \epsilon_{y_1}(\epsilon_{x_{b(1)}}(\phi_1 \cdot \eta_1^{-1})) \cdot \phi_2 \cdot \ldots \cdot \phi_n \cdot \eta_2^{-1} \cdot \ldots \cdot \eta_{n-1}^{-1} \\
&=&  \epsilon_{x_{b(1)}}(\phi_1 \cdot \eta_1^{-1}) \cdot \phi_2 \cdot \ldots \cdot \phi_n \cdot \eta_2^{-1} \cdot \ldots \cdot \eta_{n-1}^{-1} \end{eqnarray*}
since  $x_{b(1)} \leq y_1$. Now, we use $x_1 \bot x_{b(1)} \vert z_1$, recalling that $\phi_{x_{b(1)}}$ is a factor in the second part of the combination above. Then, we have
\begin{eqnarray*}
\epsilon_{y_1}(\phi) 
= \epsilon_{x_{b(1)}}(\phi_1  \cdot \phi_{x_{b(1)}} \cdot \eta_1^{-1}) \cdot \phi_2 \cdot \ldots \cdot \phi_n \cdot \eta_2^{-1} \cdot \ldots \cdot \eta_{n-1}^{-1} 
\end{eqnarray*}
Note that in this combination it is understood that $\phi_{x_{b(1)}}$ is no more contained as a factor in the combination $\phi_2 \cdot \ldots \cdot \phi_n \ldots$. Now pairwise compatibility (Proposition \ref{prop:RegpairWiseComp}) implies 
\begin{eqnarray*}
\epsilon_{x_{b(1}}(\phi_1  \cdot \phi_{x_{b(1)}} \cdot \eta_1^{-1}) = \phi_{x_{b(1)}}
\end{eqnarray*}
It follows then that 
\begin{eqnarray*}
\epsilon_{y_1}(\phi) =  \epsilon_{x_{b(1)}}(\phi_1 \cdot \phi_{b(1)} \cdot \eta_1^{-1}) \cdot \phi_2 \cdot \ldots \cdot \phi_n \cdot \eta_2^{-1} \cdot \ldots \cdot \eta_n^{-1}
= \phi_2 \cdot \ldots \cdot \phi_n \cdot \eta_2^{-1} \cdot \ldots \cdot \eta_n^{-1}.
\end{eqnarray*}

Now by induction over $n$, we obtain in exactly the same way for $i = n-1,\ldots,1$.
\begin{eqnarray*}
\epsilon_{y_i}(\phi) &=&\epsilon_{y_i}(\phi_i \cdot \phi_{i+1} \cdot \ldots \cdot \phi_n \cdot \eta_i^{-1} \cdot \eta_{i+1}^{-1} \cdot \ldots \cdot \eta_n^{-1})
= \phi_{i+1} \cdot \ldots \cdot \phi_n \cdot \eta_{i+1}^{-1} \cdot \ldots \cdot \eta_n^{-1}
\end{eqnarray*}
for $i = 1,\ldots,n-1$. In particular for $i = n-1$ we obtain 
\begin{eqnarray*}
\epsilon_{x_n}(\phi_1 \cdot \ldots \cdot \phi_n \cdot \eta_1^{-1} \cdot \ldots \cdot \eta_{n-1}^{-1}) = \phi_n.
\end{eqnarray*}
since $y_n = x_n$.

Now, we claim that $\epsilon_{x_i}(\phi) = \epsilon_{x_i}(\epsilon_{x_{b(i)}}(\phi)) \cdot \phi_i \cdot \eta_i^{-1}$. In fact,
\begin{eqnarray*}
\epsilon_{x_i }(\epsilon_{x_{b(i)}}(\phi)) \cdot \phi_i \cdot \eta_i^{-1} &=& \epsilon_{x_i}(\epsilon_{x_{b(i)}}(\epsilon_{y_i}(\phi)))\cdot \phi_i \cdot \eta_i^{-1} \\
&=& \epsilon_{x_i}(\epsilon_{y_i}(\phi)) \cdot \phi_i \cdot \eta_i^{-1} \\
&=& \epsilon_{x_i}(\phi_{i+1} \cdot \ldots \cdot \phi_n \cdot \eta_{i+1}^{-1} \cdot \ldots \cdot \eta_n^{-1}) \cdot \phi_i \cdot \eta_i^{-1} \\
&=& \epsilon_{x_i}(\phi_i \cdot \phi_{i+1} \cdot \ldots \cdot \phi_n \cdot \eta
_i^{-1} \cdot \eta_{i+1}^{-1} \cdot \ldots \cdot \eta_n^{-1}).
\end{eqnarray*}
This follows since $x_{b(i)} \leq y_i$, $z_i \leq x_i$ and $y_i \bot x_i \vert x_{b(i)}$. This verifies the claim for $i = 1$. For $i \geq 2$ we have, given that $x_i \leq y_{i-1}$, 
\begin{eqnarray*}
\epsilon_{x_i}(\phi_i \cdot \phi_{i+1} \cdot \ldots \cdot \phi_n \cdot \eta_i^{-1} \cdot \eta_{i+1}^{-1} \cdot \ldots \cdot \eta_n^{-1}) \
= \epsilon_{x_i}(\epsilon_{y_i}(\phi)) = \epsilon_{x_i}(\phi).
\end{eqnarray*}
Now we make the induction assumption that $\epsilon_{x_i}(\phi)= \phi_j$ for $j \geq i+1$ which is based on the case $i = n$. Then it follows using pairwise compatibility
\begin{eqnarray*}
\epsilon_{x_i}(\phi) = \epsilon_{x_i}(\epsilon_{x_{b(i)}}(\phi)) \cdot \phi \cdot \eta_i^{-1} = \epsilon_{x_i}(\phi_{x_{b(i)}} \cdot \phi \cdot \eta_i^{-1})  = \phi_i.
\end{eqnarray*}
This concludes the proof.

\end{proof}

As remarked above, this Theorem and its proof covers also the case of an idempotent information algebra. We recall that in this case $\eta_i ^{-1} = \epsilon_{z_i}(\phi_i) = \epsilon_{z_i}(\phi_{b(i)})$ and these terms are absorbed in (\ref{eq:ProdComp}), so that if $z_i \leq x_i$,
\begin{eqnarray*}
\phi_i = \epsilon_{x_i}(\phi_1 \cdot \ldots \cdot \phi_n),\ i = 1,\ldots,n.
\end{eqnarray*}

A particular case are commutative valuation algebras. Then $(Q,\leq)$ is a lattice and $x \bot y \vert y \wedge y$ fort all $x,y \in Q$. According to Proposition \ref{prop:RegpairWiseComp}, $\phi_1$ and $\phi_2$ are then pairwise compatible if $\phi_1$ has support $x$ and $\phi_2$ support $y$ and $\epsilon_{x \wedge y}(\phi_1) = \epsilon_{x \wedge y}(\phi_2)$. Theorem \ref{th:ValAlgHyptree} applies to this case, with $z_i = x_i \wedge x_{b(i)}$ and hypertrees are join trees (see Section \ref{subsec:CondIndepStruct}) satisfying the running intersection property, see Section \ref{subsec:CondIndepStruct}.

%%%%%%%%%%%%%%%%%%%%%%%%%%%%%%%%%%%%%%%%%%%%%%%%%%%%%%%%%%%%%%%%%%%
\section{Facorization and conditional independence structures} \label{subsec:FactCondIndep}

In Section \ref{subsec:Continuation} we have defined conditional independence relative to a valuation $\phi$, see Definition \ref{def.CondIndepInf}. Besides conditional independence of domains or questions, this definition exhibits the factorization of $\phi$ as a defining element. In probability theory, fatcorizations of a distribution into marginals or prior and conditional distributions are basic concepts to stochastic  conditional independence. The different equivalent forms this concept of conditional independence can take in the more general structure of a regular valuation algebra is shown in Theorem \ref{th:EquivCondIndepRel}. In this section the concept of conditional independence relative to a valuation $\phi$ will be generalized to factorizations with more than two factors. 

First we extend Definition \ref{def.CondIndepInf} to a set of questions. Let $(\Phi,\cdot,1;E)$ with $E = \{\epsilon_x:x \in Q\}$ be a valuation algebra.

\begin{definition} \label{def:CondIndepSetVal} \textbf{Conditional independence of a set of questions relative to a valuation:}
We call a set of questions $\{x_1,\ldots,x_n\}$, $x_i \in Q$ conditional independent given $z \in Q$ relative to $\phi \in \Phi$, if

\begin{enumerate}
\item $\bot \{x_1,\ldots,x_n\} \vert  z$,
\item $\epsilon_{x_1 \vee \cdots \vee x_n \vee z}(\phi) = \psi_1 \cdot \ldots \cdot \psi_n$,
\end{enumerate}
where $\psi_i \in \Phi$ have support $x_i \vee z$ for $i = 1,\ldots,n$. We then write $\bot_\phi \{x_1,\ldots,x_n\} \vert  z$.
\end{definition}

Proposition \ref{prop:CondIndepInf} extends in the following way to this more general case.

\begin{proposition} \label{prop:GenCondInedepInf}
Assume $\bot_\phi \{x_1,\ldots,x_n\} \vert z$. Then, if $\epsilon_{x_1 \vee \cdots \vee x_n \vee z}(\phi) = \psi_1 \cdot \ldots \cdot \psi_n$, where $\psi_i \in \Phi$ have support $x_i \vee z$ for $i = 1,\ldots,n$,
\begin{enumerate}
\item $\epsilon_{x_i \vee z}(\phi) = \psi_i \cdot \epsilon_z(\psi_1) \cdot \ldots \cdot \epsilon_z(\psi_{i-1}) \cdot \epsilon_z(\psi_{i+1}) \cdot \ldots \cdot \epsilon_z(\psi_n)$, $i = 1,\ldots,n$
\item $\epsilon_{z}(\phi) = \epsilon_z(\psi_1) \cdot \ldots \cdot \epsilon_z(\psi_n)$.
\end{enumerate}
\end{proposition}

\begin{proof}
Let $y_i = x_1 \vee \ldots \vee y_{i-1} \vee y_{i+1} \vee \ldots \vee y_n \vee z$. Then we have $x_i \vee z \bot y_i \vert z$, see Proposition \ref{prop:ElResMultCondIndep}. This implies
\begin{eqnarray} \label{eq:Fact}
\epsilon_{x_i \vee z}(\phi) &=& \psi_i \cdot \epsilon_{x_i \vee z}(\prod_{j \in y_i} \psi_j) \nonumber \\
&=& \psi_i \cdot \epsilon_{x_i \vee z}(\epsilon_z(\prod_{j \in y_i} \psi_j)) \nonumber \\
&=& \psi_i \cdot \epsilon_z(\prod_{j \in y_i} \psi_j).
\end{eqnarray}
From this we derive
\begin{eqnarray*}
\epsilon_z(\phi) &=& \epsilon_z(\psi_i) \cdot \epsilon_z(\prod_{j \in y_i} \psi_j)
\end{eqnarray*}
Now, we have also $\bot \{x_1,\ldots,x_{i-1},x_{i+},\ldots,x_n\} \vert z$.
By induction over $n = 2,3,,\ldots$ we get from this
\begin{eqnarray*}
\epsilon_z(\prod_{j \in y_i} \psi_j) = \prod_{j \in y_i} \epsilon_z(\psi_j).
\end{eqnarray*}
But this implies
\begin{eqnarray*}
\epsilon_z(\phi) = \epsilon_z(\psi_1) \cdot \epsilon_z(\psi_2) \cdot \ldots \cdot \epsilon_z(\psi_n).
\end{eqnarray*}
This is item 2 of the proposition. It implies also by (\ref{eq:Fact})
\begin{eqnarray*}
\epsilon_{x_i \vee z}(\phi) = \psi_i \cdot \prod_{j \in y_i} \epsilon_z(\psi_j),
\end{eqnarray*}
that is item 1.
\end{proof}
 
 As a variant, we consider the factorization
 \begin{eqnarray*}
\phi = \psi_1 \cdot \ldots \cdot \psi_n \cdot \psi_{n+1}
\end{eqnarray*}
where $\psi_i$ has support $x_i$ for $i = 1$ to $n$ and $\psi_{n+1}$ has support $z$. Proposition \ref{prop:GenCondInedepInf} applies to this facorization, since the elements $\psi_i$ have also support $x_i \vee z \geq x_i,z$. From 
\begin{eqnarray*}
\phi = \psi_1 \cdot \ldots \cdot (\psi_n \cdot \psi_{n+1})
\end{eqnarray*}
we obtain, 
\begin{eqnarray*}
\epsilon_z(\phi) &=& \epsilon_z(\psi_1) \cdot \ldots \cdot \epsilon_z(\psi_{n-1}) \cdot \epsilon_z(\psi_n \cdot \psi_{n+1}) \\
&=& \epsilon_z(\psi_1) \cdot \ldots \cdot \epsilon_z(\psi_{n-1}) \cdot \epsilon_z(\psi_n) \cdot \psi_{n+1}.
\end{eqnarray*}
and similarly
\begin{eqnarray*}
\epsilon_{x_i \vee z}(\phi) &=& (\psi_i \cdot \psi_{n+1}) \cdot \prod_{j \in y_i} \epsilon_z(\psi_j) \\
&=& \psi_i \cdot \prod_{j \in y_i} \epsilon_z(\psi_j) \cdot \psi_{n+1}.
\end{eqnarray*}

We shall see that this last result is a special case of the following more general situation. Let $(T;\lambda)$ be a Markov tree with $T =(V,E)$, see Section \ref{subsec:CondIndepStruct}. Recall that if $v$ is any node of the tree $T$, then $T_{v,u}$ with node set $V_{v,u}$ for $u \in ne(v)$ are  the partial Markov trees obtained if node $v$ and the edges $\{v.u\}$ are removed from $T$ (see Section \ref{subsec:CondIndepStruct}). Consider now a Markov tree factorization
\begin{eqnarray*}
\phi = \prod_{v \in V} \psi_v
\end{eqnarray*}
where $\psi_v$ has support $\lambda(v)$. Then, we conclude that
\begin{eqnarray*}
\bot_\phi \{\lambda(V_{v,u}):u \in ne(v)\} \vert \lambda(v)
\end{eqnarray*}
for all $v \in V$. In fact, we have
\begin{eqnarray*}
\phi = \prod_{u \in ne(v)} \phi_{v,u} \cdot \phi_v,
\end{eqnarray*}
where 
\begin{eqnarray*}
\phi_{v,u} = \prod_{w \in V_{v,u}} \psi_w.
\end{eqnarray*}
This is a factorization of the kind considered above after Proposition \ref{prop:GenCondInedepInf}. Accordingly, we see that
\begin{eqnarray*}
\epsilon_{\lambda(v)}(\phi) = \psi_{\lambda(v)} \cdot \prod_{u \in ne(v)} \epsilon_{\lambda(v)}(\phi_{v,u}).
\end{eqnarray*}
This leads then to the recursive procedure in Markov trees to compute $\epsilon_{\lambda(v)}(\phi)$ as in the case of an information algebra, see Section \ref{subsec:MarkovPropag}, and especially the proof of Theorem \ref{th:MarkovTreeProj}. This procedure applies therefore also to valuation algebras, and if the algebra allows for division, then this allows to improve the process.

In order to show this, we describe the Markov recursion in terms of a message passing scheme. This scheme has been proposed in  \cite{shenoyshafer90} for multivariate valuation algebras, it has also been described in \cite{kohlas03}. Since we have also $\lambda(v) \bot \lambda(V_{v,u} \bot \lambda(u)$ for all neighbours $u$ of node $v$ (see Theorem \ref{th:NeighCondIndep}), we have $\epsilon_{\lambda(v)}(\phi_{v,u}) = \epsilon_{\lambda(v)}(\epsilon_{\lambda(u)}(\phi_{v,u}))$ and therefore (see Theorem \ref{th:MarkovTreeProj})
\begin{eqnarray} \label{eq:EctrComp}
\epsilon_{\lambda(v)}(\phi) = \psi_{\lambda(v)} \cdot \prod_{u \in ne(v)} \epsilon_{\lambda(v)}(\epsilon_{\lambda(u)}(\phi_{v,u})).
\end{eqnarray}
Define then
\begin{eqnarray*}
\mu_{u \rightarrow v} = \epsilon_{\lambda(v)}(\epsilon_{\lambda(u)}(\phi_{v,u}))
\end{eqnarray*}
This can be considered as a message from node $u$ to node $v$. In order to describe how with the passing of such messages an extraction  $\epsilon_{\lambda(v)}(\phi)$ can be computed in a Markov tree, number the nodes in $V$ so that $j > i$ if node $v_j$ is on the (unique) path form node $v_i$ to node $v_n$, if $\vert V \vert = n$, see Section \ref{subsec:CondIndepStruct}. Further, direct all edges $\{v_i,v_j\}$ towards the root node $v_n$, such that $(v_i,v_j)$ is a directed arc associated with the edge $\{v_i,v_j\}$ such that $i < j$. We now denote the nodes simply be their number to simplify notation. For any node $i$ let $ch(i)$ denote the (unique) neighbourg on the outgoing arc $(i,ch(i))$, the child of $i$. All nodes except node $n$ have a child. On the other hand let $pa(i)$ denote the neighbours of node $i$ on the incoming arcs of node $i$, the parents of $i$. The set $pa(i)$ may be empty, then node $i$ is called a leaf. Note that node $1$ must be a leaf.

According to Section \ref{subsec:CondIndepStruct} this makes the node set of a Markov tree to a hypertree. We may now compute the messages in the tree in the sequence of the numbering. In fact node $1$ is a leaf and we have for any leaf $\phi_{ch(i),i} = \psi_i$ and so we may compute the message $\mu_{1 \rightarrow ch(1)}$ to its child. Then node $2$ is either a leaf or $pa(2) = \{1\}$. Then we may compute $\epsilon_{\lambda(2)}(\phi_{2,1}) = \psi_2 \cdot \mu_{1 \rightarrow 2}$. In general, if we proceed for $i = 3,4,...$ and arrive at the node $i$, then it is either a leaf or the messages $\mu_{j \rightarrow i}$ from all its parents $j \in pa(i)$ have been computed. So, again we may compute
\begin{eqnarray*}
\epsilon_{\lambda(i)}(\phi_{ch(i),i}) = \psi_i \cdot \prod_{j \in pa(i)} \mu_{j \rightarrow i}.
\end{eqnarray*}
This in turn allows to compute the message to its child $\mu_{i \rightarrow ch(i)} = \epsilon_{\lambda(ch(i))}(\epsilon_{\lambda(i)}(\phi_{ch(i),i})$. In this way we arrive finally at the root node $n$ and can then compute the extraction $\epsilon_{\lambda(n)}(\phi)$. This way to compute is called \textit{collect} algorithm. Compare this with the algorithm for information algebras described in Section \ref{subsec:CompHypTree}.

If the messages $\mu_{i \rightarrow ch(i)}$ computed in collect algorithm, are stored, they may be used to compute the extractions $\epsilon_{\lambda(i)}(\phi)$ for all nodes of the Markov tree by going back in the numbering. In fact, the root node $n$ may send messages to all its parents
\begin{eqnarray*}
\mu_{n \rightarrow j} = \psi_n \cdot \prod_{k \in ne(n),k \not= j} \mu_{k \rightarrow n},\quad  j \in pa(n).
\end{eqnarray*}{
Then all these parents can compute $\epsilon_{\lambda(j)}(\phi)$ by formulat \ref{eq:EctrComp}. And then these nodes may send their messages to their parents, etc. until all nodes are reached. This second procedure is called \textit{distribute} algorithm. The whole system is known as the \textit{Senoy-Shafer} architecture.

In this form there are a number of inefficiencies hidden. For example, many subcombinations of messages are recomputed. To avoid this, we may use division, that is we assume a regular or separative valuation algebra. Assume that we store at node $i$ at the beginning $\eta_i = \psi_i$. In the collect phase, any time a message $\mu_{j \rightarrow i}$ arrives at node $i$ we update $\eta_i := \eta_i \cdot \mu_{j \rightarrow i}$. Once the node $i$ sends its message $\mu_{i \rightarrow ch(i)}$ to its child $ch(i)$ we divide this message out of $\eta_i$, that is $\eta_i := \eta_i \cdot \mu^{-1}_{i \rightarrow ch(i)}$. In the distribute phase, starting with node $n$, the messages of a node $j$ to a parent node $i$ are as in the collect phase, namely $\epsilon_{\lambda(i)}(\epsilon_{\lambda(j)}(\eta_j)) = \epsilon_{\lambda(i)}(\epsilon_{\lambda(j)}(\phi))$. The receiving node combines the incoming message as in the collect phase with its store content.  This computational scheme is associated with the name of \textit{Lauritzen-Spiegelhalter (LS) architecture}. We claim that at the end each node $i$ contains its extraction $\epsilon_{\lambda(i)}(\phi)$.

\begin{theorem} \label{th:LSCorrectness}
Assume $(\Phi,\cdot,1;E)$ to be a regular or separative valuation algebra. Then, at the end of the computations according to the LS architecture, each node $i$ stores the extraction $\epsilon_{\lambda(i)}(\phi)$.
\end{theorem}

\begin{proof}
At end of the collect phase, the claim holds for the node $n$, $\eta_n = \epsilon_{\lambda(i)}(\phi)$. We proceed by induction. Assume that the claim holds for all nodes $j > i$, for some index $i = n-1,\ldots,1$. Then it holds for the child $ch(i)$ of node $i$, since $ch(i) \geq i$,that is
\begin{eqnarray*}
\epsilon_{\lambda(ch(i))}(\phi) = \psi_{ch(i)} \cdot \prod_{j \in ne(ch(i))} \mu_{j \rightarrow ch(i)}.
\end{eqnarray*}
The message of $ch(i)$ sent to node $i$ in the distribute phase is then
\begin{eqnarray*}
&&\epsilon_{\lambda(i)}(\epsilon_{\lambda(ch(i))}(\phi)) \\
&=& \epsilon_{\lambda(i)}(\psi_{ch(i)} \cdot \prod_{j \in ne(ch(i))} \mu_{j \rightarrow ch(i)}) \\
&=& \epsilon_{\lambda(i)}(\psi_{ch(i)} \cdot \prod_{j \in ne(ch(i)),j \not= i} \mu_{j \rightarrow ch(i)}) \cdot \mu_{i \rightarrow ch(i)}
\end{eqnarray*}
since $\mu_{i \rightarrow ch(i)}$ has support $\lambda(ch(i))$. But then by the definition of messages, it follows
\begin{eqnarray*}
\epsilon_{\lambda(i)}(\epsilon_{\lambda(ch(i))}(\phi)) = \mu_{ch(i) \rightarrow i} \cdot \mu_{i \rightarrow ch(i)}.
\end{eqnarray*}
If this meassage is combined with the value $\eta_i$ stored in node $i$, this gives
\begin{eqnarray*}
&&\psi_i \cdot \prod_{j \in pa(i)} \mu_{j \rightarrow i} \cdot \mu^{-1}_{i \rightarrow ch(i)} \cdot \mu_{ch(i) \rightarrow i} \cdot \mu_{i \rightarrow ch(i)} \\
&=& \epsilon_{\lambda(i)}(\phi) \cdot f_{\mu_{i \rightarrow ch(i)}} = \epsilon_{\lambda(i)}(\phi).
\end{eqnarray*}
The last equation follows from the definition of $\mu_{i \rightarrow ch(i)}$ as $\epsilon_{\lambda(ch(i))}(\epsilon_{\lambda(i)}(\phi_{ch(i),i})$ and $\phi_{ch(i),i} \leq \phi$ and Lemma \ref{le:PreorderClasses} and \ref{le:OrderGroup}.
\end{proof}

Note then in the LS architecture at the beginning we have $\phi = \prod_{i=1}^n \eta_i$ with $\eta_i = \psi_i$. In the collect phase at step $i$, node $i$ sends the message $\mu_{i \rightarrow ch(i)}$ to its child and this message is combined with $\eta_{ch(i)}$. On the other hand $\eta_i$ is combined with the inverse $\mu^{-1}_{i \rightarrow ch(i)}$ of this message. Therefore the contents of the nodes continue to combine to $\phi$. So, at the end of the collect phase we have
\begin{eqnarray*}
\phi = \epsilon_{\lambda(n)}(\phi) \cdot \prod_{i_1}^{n-1} \eta_i.
\end{eqnarray*}
In the distribute phase, any store $\eta_i$ for $i = n-1,\ldots,1$ is in turn updated with $\epsilon_{\lambda(i)}(\epsilon_{\lambda(j)}(\phi))$ if node $j$ is a child of node $i$. At the end of the distribute phase we have $\eta_i = \epsilon_{\lambda(i)}(\phi)$ for all $ i = 1,\ldots,n$. So we must have the identity
\begin{eqnarray*}
\phi \cdot \prod_{i = n-1}^1 \epsilon_{\lambda(i)}(\epsilon_{\lambda(ch(i))}(\phi))    = \prod_{i=1}^n \epsilon_{\lambda(i)}(\phi).
\end{eqnarray*}

We claim that $\epsilon_{\lambda(i)}(\epsilon_{\lambda(ch(i))}(\phi)) = \epsilon_{\lambda(ch(i))}(\epsilon_{\lambda(i)}(\phi))$. In fact, if we change the root node $n$ to one of its neigbhours $j \in pa(n)$, then the arc $(j,n)$ changes direction, but all the other arcs in the directed tree remain the same. So, we have
\begin{eqnarray*}
\phi \cdot \prod_{i = n-1,i \not= j}^1 \epsilon_{\lambda(i)}(\epsilon_{\lambda(ch(i))}(\phi)) \cdot \epsilon_{\lambda(j)}(\epsilon_{\lambda(n)}(\phi)) &=& \prod_{i=1}^n \epsilon_{\lambda(i)}(\phi), \\
\phi \cdot \prod_{i = n-1,i \not= j}^1 \epsilon_{\lambda(i)}(\epsilon_{\lambda(ch(i))}(\phi)) \cdot \epsilon_{\lambda(n)}(\epsilon_{\lambda(j)}(\phi)) &=& \prod_{i=1}^n \epsilon_{\lambda(i)}(\phi).
\end{eqnarray*}
This implies $\epsilon_{\lambda(i)}(\epsilon_{\lambda(ch(i))}(\phi)) = \epsilon_{\lambda(ch(i))}(\epsilon_{\lambda(i)}(\phi))$, since we obtain from the equations above
\begin{eqnarray*}
\epsilon_{\lambda(i)}(\epsilon_{\lambda(ch(i))}(\phi)) &=& \epsilon_{\lambda(ch(i))}(\epsilon_{\lambda(i)}(\phi)) \cdot f_A, \\
\epsilon_{\lambda(ch(i))}(\epsilon_{\lambda(i)}(\phi)) &=& \epsilon_{\lambda(i)}(\epsilon_{\lambda(ch(i)i)}(\phi)) \cdot f_A
\end{eqnarray*}
where $f_A$ is the unit in the group of of the expression $\phi \cdot \prod_{i = n-1,i \not= j}^1 \epsilon_{\lambda(i)}(\epsilon_{\lambda(ch(i))}(\phi))$. From this have
\begin{eqnarray*}
\epsilon_{\lambda(i)}(\epsilon_{\lambda(ch(i))}(\phi)) &=& \epsilon_{\lambda(i)}(\epsilon_{\lambda(ch(i))}(\phi)) \cdot f_A
\end{eqnarray*}
so that the idempotent $f_A$ is absorbed by $\epsilon_{\lambda(i)}(\epsilon_{\lambda(ch(i))}(\phi))$ and this implies the identity. Therefore we may finally state that
\begin{eqnarray*}
\phi \cdot \prod_{\{u,v\} \in E} \epsilon_{\lambda(u)}(\epsilon_{\lambda(v)}(\phi)) = \prod_{v \in V} \epsilon_{\lambda(v)}(\phi)
\end{eqnarray*}
since we may take an node $v$ as root.

If the regular or separative valuation algebra is commutative, then $\epsilon_{\lambda(i)}(\epsilon_{\lambda(ch(i))}(\phi)) = \epsilon_{\lambda(u) \wedge \lambda(v)}(\phi)$, hence
\begin{eqnarray*}
\phi \cdot \prod_{\{u,v\} \in E} \epsilon_{\lambda(u) \wedge \lambda(v)}(\phi) = \prod_{v \in V} \epsilon_{\lambda(v)}(\phi)
\end{eqnarray*}
or also
\begin{eqnarray*}
\phi = \prod_{v \in V} \epsilon_{\lambda(v)}(\phi) \cdot \prod_{\{u,v\} \in E} \epsilon^{-1}_{\lambda(u) \wedge \lambda(v)}(\phi)
\end{eqnarray*}
This is a well-known result in a multivariate regular valuation algebra, see \cite{kohlas03}.

%\input{chapter13.tex}

%%%%%%%%%%%%%%%%%%%%%%%%%%%%%%%%%%%%%%%%%%%%%%%%%%%%%%%%%%%%%%%%%%%%%%%%%%%

%\bibliographystyle{authordate1p}
%\bibliography{library}
\bibliographystyle{authordate1}
\bibliography{tcslit}

\end{document}